\documentclass[11pt]{article}
\usepackage{graphicx}
\usepackage[utf8]{inputenc}
\usepackage[margin=1in]{geometry}
\usepackage{amsmath,amssymb,amsfonts,amsthm,stmaryrd}
\usepackage{thmtools,thm-restate}
\usepackage[colorlinks=true,linkcolor=blue,allcolors=blue]{hyperref}
\usepackage{todonotes}
\usepackage{physics}
\usepackage[ruled,vlined,linesnumbered]{algorithm2e}
\usepackage[capitalize,nameinlink]{cleveref}
\usepackage{dsfont}
\usepackage{quantikz}

\newcommand{\opn}[1]{\operatorname{#1}}
\newcommand{\im}{\opn{im}}
\newcommand{\spn}{\opn{span}}

\newcommand{\Enc}{\opn{Enc}}

\newcommand{\supp}{\opn{supp}}

\newcommand{\poly}{\opn{poly}}

\newcommand{\1}{\mathds{1}}
\newcommand{\Cl}[1]{{#1}^{\mathsf{C}}}
\newcommand{\Qu}[1]{{#1}^{\mathsf{Q}}}
\newcommand{\Rf}[1]{{#1}^{\mathsf{R}}}
\newcommand{\Ad}[1]{{#1}^{\mathsf{A}}}
\newcommand{\gI}{\mathsf{I}}
\newcommand{\gX}{\mathsf{X}}
\newcommand{\gY}{\mathsf{Y}}
\newcommand{\gZ}{\mathsf{Z}}
\newcommand{\gH}{\mathsf{H}}
\newcommand{\gCNOT}{\mathsf{CNOT}}
\newcommand{\gInitX}{\mathsf{Init}_{\gX}}
\newcommand{\gInitZ}{\mathsf{Init}_{\gZ}}
\newcommand{\gTerm}{\mathsf{Term}}
\newcommand{\gMX}{\mathsf{M}_{\gX}}
\newcommand{\gMZ}{\mathsf{M}_{\gZ}}
\newcommand{\gCt}[1]{\mathsf{C-}{#1}}
\newcommand{\gCF}{\mathsf{CF}}
\newcommand{\Rd}{\opn{Rd}}
\newcommand{\Rdn}{\opn{\overline{Rd}}}
\newcommand{\Lev}{\opn{Lev}}
\newcommand{\Shad}{\opn{Shad}}
\newcommand{\id}{\opn{id}}
\newcommand{\comp}[1]{{#1}^{\mathsf{c}}}
\newcommand{\ps}{\mathsf{PS}}

\newcommand{\cA}{\mathcal{A}}\newcommand{\cB}{\mathcal{B}}
\newcommand{\cC}{\mathcal{C}}\newcommand{\cD}{\mathcal{D}}
\newcommand{\cE}{\mathcal{E}}\newcommand{\cF}{\mathcal{F}}
\newcommand{\cG}{\mathcal{G}}\newcommand{\cH}{\mathcal{H}}

\newcommand{\cM}{\mathcal{M}}
\newcommand{\cO}{\mathcal{O}}
\newcommand{\cR}{\mathcal{R}}

\newcommand{\bC}{\mathbb{C}}
\newcommand{\bE}{\mathbb{E}}\newcommand{\bF}{\mathbb{F}}

\newcommand{\bN}{\mathbb{N}}

\newcommand{\bR}{\mathbb{R}}

\newcommand{\bZ}{\mathbb{Z}}

\newcommand{\bfG}{\mathbf{G}}
\newcommand{\bfN}{\mathbf{N}}
\newcommand{\bfn}{\mathbf{n}}

\newtheorem{theorem}{Theorem}[section]
\newtheorem{lemma}[theorem]{Lemma}
\newtheorem{claim}[theorem]{Claim}
\newtheorem{definition}[theorem]{Definition}
\newtheorem{corollary}[theorem]{Corollary}
\newtheorem{proposition}[theorem]{Proposition}

\theoremstyle{remark} % Define remark style
\newtheorem{remark}[theorem]{Remark}

\title{Constant-Overhead Injection into Quantum Codes}
\author{
  Louis Golowich\thanks{Department of EECS, UC Berkeley. Email: 
  \href{mailto:lgolowich@berkeley.edu}{\texttt{lgolowich@berkeley.edu}}. Supported by a Google PhD Fellowship, a ONR grant N00014-24-1-2491, and a National Science Foundation Graduate Research Fellowship under Grant No.~DGE 2146752.}
  \and
  Venkatesan Guruswami\thanks{Simons Institute for the Theory of Computing, and Departments of EECS \& Mathematics, UC Berkeley. Email: 
  \href{mailto:venkatg@berkeley.edu}{\texttt{venkatg@berkeley.edu}}. Research supported in part by a Simons Investigator award, ONR grant N00014-24-1-2491, and a UC Noyce initiative award.}
}
\begin{document}

\maketitle
\thispagestyle{empty}
\begin{abstract}
  We construct the first known fault-tolerant scheme for injecting states into quantum error-correcting codes with constant space and time overhead. That is, we construct a family of constant-rate quantum error-correcting codes for which a set of bare physical qubits can be injected, i.e.~fault-tolerantly encoded, into a code block. Similarly, a code state can be ejected, i.e.~fault-tolerantly decoded, back into bare physical qubits. We show that these injection and ejection procedures succeed under circuit-level locally stochastic noise, while incurring just a small constant probability of corrupting each qubit, which is unavoidable for bare physical qubits. We also show how to perform fault-tolerant error correction and code-state preparation under locally stochastic noise. All of our gadgets can be implemented with constant quantum circuit depth (i.e.~are single-shot), and with a number of physical qubits growing linearly with the number of logical qubits, assuming the ability to run polynomial-sized noiseless classical circuits on the side.

  We construct our quantum codes by taking a high-dimensional hypergraph product of classical LDPC codes, which in turn are a simplified version of Spielman's linear-time encodable codes (STOC'95). As our resulting product codes are only resilient to physical errors occurring with non-uniform probabilities across qubits, we then show how to concatenate with inner codes of various sizes to obtain fault-tolerance against uniform noise.
\end{abstract}

\newpage

\tableofcontents
\newpage
\pagenumbering{arabic}

\section{Introduction}
\label{sec:intro}
Reducing the space-time overhead of fault-tolerant quantum computation is a fundamental problem in both theory and practice. Schemes for quantum fault-tolerance typically perform computations on quantum data that is encoded in quantum error-correcting codes, in order to continuously correct any
errors that arise. A particularly fundamental and important task therefore is to interface between data encoded in quantum codes and data stored as bare physical qubits. This task can be accomplished with two related primitives:
\begin{enumerate}
\item \emph{Injection}, in which a bare physical state is encoded into a quantum code, and
\item \emph{Ejection}, in which a quantum code state is unencoded into bare physical form.
\end{enumerate}
As we typically make the physically natural assumption that bare physical qubits are continuously exposed to some constant noise rate, injection and ejection cannot be entirely error-free. We therefore instead define injection and ejection to be fault-tolerant if each data qubit is corrupted with only some small probability, which decays with the physical noise rate. Our main result is a construction of the first known scheme for such fault-tolerant injection and ejection that uses both constant space overhead (i.e.~maintains a constant encoding rate) and constant time overhead (i.e.~is single-shot).

\subsection{Motivation}
\label{sec:motivation}
Fault-tolerant injection and ejection are fundamental primitives, with numerous applications to various aspects of quantum computation.
At a basic level, injection and ejection facilitate quantum processing of real-world data obtained from physical quantum processes. For instance, various quantum learning and quantum sensing algorithms have been developed to efficiently process coherent quantum data from physical processes, while avoiding inefficiencies that would arise if the data were first converted to classical measurement outcomes (see e.g.~\cite{huang_quantum_2022,aharonov_quantum_2022,allen_quantum_2025,chen_exponential_2022}). To be effective, many of these quantum algorithms must be executed fault-tolerantly. Hence the physical data often must be injected into a quantum code. Indeed, \cite{kannan_fault-tolerant_2026} provides a comprehensive study of this injection-based approach, and shows how it can lead to large improvements over alternative approaches that avoid injection.

Injection and ejection also underly many techniques used in fault-tolerant quantum computation, even for applications with entirely classical inputs and outputs, so that quantum data is never stored in bare physical qubits. For example, to perform certain (e.g.~non-Clifford) gates on encoded data, many quantum fault-tolerance schemes inject appropriate (e.g.~``magic'') resource states into quantum codes. These resource states are then distilled to reduce the noise introduced from injection, and subsequently consumed to fault-tolerantly implement the respective gates (see e.g.~\cite{bravyi_universal_2005,bravyi_magic-state_2012,wills_constant-overhead_2024}).

Alternatively, ejection can be used to fault-tolerantly prepare resource states. For instance, \cite{aharonov_fault-tolerant_1997,gottesman_fault-tolerant_2014} show how to prepare a resource state encoded in a desired quantum code $Q_1$ by first preparing the state in the concatenation of $Q_1$ with a second code $Q_2$. Assuming $Q_2$ supports fault-tolerant computation and ejection, we can prepare this concatenated state and then eject out of $Q_2$ leaving the desired state encoded in just $Q_1$.

More generally, injection and ejection provide powerful primitives for switching between different quantum codes, as described in more detail below in \Cref{fig:codeswitch}. In short, given a state encoded in $Q_1$, we can inject its physical qubits in $Q_2$ to obtain a concatenated code state. We can then modify the underlying $Q_1$ state, such as by unencoding out of $Q_1$ and reencoding into some different $Q_1'$. Here we simply require that $Q_2$ supports a fault-tolerant implementation of the (un)encoding circuits of $Q_1,Q_1'$. Finally, we can eject out of $Q_2$ to obtain our state encoded in just~$Q_1'$.

\subsection{Main Result}
\label{sec:maininf}
The fundamental nature of injection and ejection in various applications raises a natural question:
\begin{center}
  \emph{How efficiently can we perform fault-tolerant injection and ejection?}
\end{center}
More precisely, for a given number $k$ of data qubits, what is the minimal quantum space-time overhead needed to inject our bare physical qubits into a quantum code, or to eject the code state back into bare physical form? While we may design the quantum code to facilitate injection and ejection, we require it to also support fault-tolerant error-correction, so that it can maintain an injected state in memory without further logical errors. We also ideally want this code to support a sufficient set of fault-tolerant Clifford gates to implement the resource state preparation and code switching applications described above.

For concreteness, we ask these fault-tolerant protocols (apart from injection/ejection) to succeed except with exponentially small $2^{-\poly(k)}$ logical error probability, assuming locally stochastic noise occurs at every timestep. Because bare physical qubits are always exposed to noise, we must allow the logical error probability for injection and ejection to instead be bounded by a small constant, which decays with the physical noise rate.

Our main result in this paper is the first known scheme for fault-tolerant injection and ejection satisfying the above properties with \emph{constant quantum space-time overhead}, given the ability to run polynomial-sized noiseless classical side-computations:

\begin{theorem}[Main result; informal statement of results in \Cref{sec:ftnonu} and Appendix~\ref{sec:concat}]\footnote{In particular, we instantiate our gadgets in \Cref{sec:ftnonu} with $r_X=r_Z=2$ and with $\bar{\ell}$ growing arbitrarily large. Here $r_X=r_Z=2$ means that we place our code qubits at level $2$ of a $4$-dimensional cochain complex, as defined in \Cref{def:prodcode}. The code's block length grows with $\bar{\ell}$.}
  \label{thm:maininf}
  There exists a family of $[[n=\Theta(k),\; k,\; d=\poly(k)]]$ quantum codes that support the following operations under locally stochastic noise with constant error rate $p$. Specifically, each operation below can be implemented by a quantum circuit using space (i.e.~width) $O(k)$ and time (i.e.~depth) $O(1)$, which is allowed to run an arbitrary $\poly(k)$-sized noiseless classical circuit in each timestep.
  \begin{enumerate}
  \item Injection of $k$ bare physical qubits into a code state. Each input qubit is corrupted with a small constant probability that decays with $p$.
  \item Ejection of a logical code state into $k$ bare physical qubits. Each output qubit is corrupted with a small constant probability that decays with $p$.
  \item Preparation of logical $\ket{0}^{\otimes k}$ and $\ket{+}^{\otimes k}$ code states.
  \item Logical $\gCNOT$ gates on all pairs of respective qubits across two code blocks.
  \item\label{it:miec} Error correction on a code state.
  \end{enumerate}
  All operations above except injection and ejection fail (i.e.~have uncorrectable output errors) with exponentially small probability $2^{-\poly(k)}$.
\end{theorem}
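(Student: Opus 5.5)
The plan is to build the code family in two layers and verify each gadget layer by layer, as the abstract suggests. \emph{Layer one} is the classical component. I would take a simplified Spielman-style code: a recursive cascade of expander-based LDPC codes whose levels shrink geometrically. This gives linear-time encodable codes. Their key feature is that encoding, and unencoding, can be done level by level, each level by a constant-depth parity computation. Their parity-check complexes also have linear-time decoders that are robust to syndrome noise (single-shot).

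\emph{Layer two} is the quantum code. I would take the $4$-dimensional hypergraph product of these cochain complexes and place the qubits at level $r_X=r_Z=2$, as in \Cref{def:prodcode}. The distance $d=\poly(k)$ and the constant rate should follow from standard Künneth-type homology computations together with product distance bounds. Small-set (co)boundary expansion of each factor, transferred to the product, should give single-shot decoding of both $X$ and $Z$ errors from noisy syndromes. From this I would obtain error correction (item~\ref{it:miec}) with failure probability $2^{-\poly(k)}$. Transversal $\gCNOT$ comes for free from the CSS structure.

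Preparation of $\ket{0}^{\otimes k}$ and $\ket{+}^{\otimes k}$ I would do by initializing physical qubits in the $\gZ$ or $\gX$ basis, measuring stabilizers once, and correcting using the single-shot decoder plus classical side computation.

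For \emph{injection}, I would realize the encoding circuit of the product code as a constant-depth sequence of $\gCNOT$ layers. I would exploit the Spielman recursion: bare qubits sit at the smallest level and are spread outward through the cascade. After each constant number of layers I would interleave noisy stabilizer measurements of the partially encoded code. The correctness argument would be a fault-path analysis under locally stochastic noise. Errors on the original bare qubits are unavoidable and form an $O(p)$-probability corruption of each data qubit. Every other fault should be correctable by the single-shot decoder. Ejection I would treat as the time-reverse of injection: measure the outer levels in the appropriate basis, and classically Pauli-frame-correct the remaining bare qubits.

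The main obstacle, and the reason for the second step, is that the product code tolerates only \emph{non-uniform} noise. Qubits in the small levels of the Spielman cascade are few but carry heavy logical weight, so uniform rate-$p$ noise there is not suppressed. I would first prove all gadgets under a noise model whose rate decays geometrically with the level (\Cref{sec:ftnonu}). Then, in Appendix~\ref{sec:concat}, I would simulate that model by concatenating each physical qubit at level $\ell$ with an inner code of size growing in $\ell$. The sizes would be chosen so that the effective logical error rate matches the required decay, while total overhead stays $O(k)$ because the level sizes shrink geometrically. The inner codes need constant-depth gadgets, for instance by recursion or via small codes decoded by the noiseless classical side circuit. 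The delicate point is keeping the depth constant rather than growing with the inner-code size. Finally, I would let $\bar\ell\to\infty$ to obtain the family, with a union bound over the $O(1)$ timesteps.
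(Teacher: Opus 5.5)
There is a genuine gap in your injection gadget. You propose to encode the bare qubits by CNOT layers that spread them through the Spielman cascade, with interleaved stabilizer measurements used only to catch errors. Such a cascade cannot have depth $O(1)$.

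\begin{itemize}
\item The cascade you describe already has $\bar\ell=\Theta(\log k)$ sequential levels, and no unitary encoder can do better. Suppose a depth-$D$ circuit $U$ of two-qubit gates satisfies $U(\ket{\phi}\otimes\ket{0\cdots 0})=\Enc\ket{\phi}$. Then $UX_jU^\dagger$ is supported on the forward light cone of input qubit $j$, which has at most $2^D$ qubits. Yet it preserves the code space and acts there as the nontrivial logical $\bar X_j$. Hence $2^D\geq d=\poly(k)$.
\item The cascade also breaks your own noise model. Each layer copies (say) $\gX$-faults from one level onto the next-higher one. Over $\Theta(\bar\ell)$ layers, a rate-$p$ fault on a low-level qubit is therefore deposited on qubits of high level $\ell$. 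Their budget is $p^{2^{\Omega(\ell)}}$, which is doubly exponential in the level; this is what the $2^{-2^{\Omega(\bar\ell)}}$ bound of \Cref{lem:lsfault} needs, and it is more than geometric decay.
\item The intermediate, partially encoded states are not code states of any complex your single-shot decoder handles. So the asserted fault-path analysis has nothing to act on.
\end{itemize}

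The missing idea is to reverse the dependency: build injection out of ejection, rather than treating ejection as the time-reverse of injection. Your ejection is measurement-based: single-qubit $\gX/\gZ$ measurements on everything outside $V_{\bar\ell}^{r}$, plus a classically computed Pauli frame. It is essentially \Cref{lem:eject}, and measurement with feed-forward is exactly what evades the light-cone bound.

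\Cref{lem:inject} works as follows. It prepares $\ket{+}^{\otimes k}$ and $\ket{0}^{\otimes k}$ blocks with \Cref{lem:stateprep} and applies transversal $\gCNOT$ to form logical Bell pairs. It then ejects one block into $k$ bare qubits. Finally it teleports the bare inputs through them, using one physical $\gCNOT$ layer, $\gX/\gZ$ measurements, and logical classically-controlled Paulis. Every step has $O(1)$ depth. Data corruption comes only from the input, the ejected half, and the constantly many timesteps in which bare qubits exist. You already have all these ingredients.

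Relatedly, the paper never decodes noisy syndromes. Measurement errors appear only in state preparation, where the decoder runs on the meta-syndrome $\delta(s)$. Error correction (\Cref{lem:errcorr}) is Knill-type teleportation into fresh blocks.

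If you keep direct noisy-syndrome decoding, note that small-set expansion of the factors ``transferred to the product'' is not available as such. The factors are only one-sided expanders and contain constant-size graphs $G^{(\ell)}$. This is why the paper first pushes errors into $V_{\bar\ell}^{r_X}\times C(\epsilon,\bar\ell,0,r_Z)$ with the reducer map (\Cref{def:reducer}). Only then does it control a $\beta$-weighted norm via the modified complex $\cC'$ (\Cref{lem:ssflip,lem:ssfimp}).
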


Our statement of \Cref{thm:maininf} is intentionally informal, as formalizing the relevant fault-tolerance notions is itself a difficult task that has been the study of recent works including \cite{nguyen_quantum_2025,he_composable_2025,christandl_fault-tolerant_2026-1}. Indeed, the formal fault-tolerance framework we use, which is heavily inspired by \cite{nguyen_quantum_2025,he_composable_2025,breuckmann_fault-tolerant_2026}, is developed throughout \Cref{sec:prelim,sec:codeconstruct,sec:ftnonu}. Nevertheless, we now expand upon the claims in \Cref{thm:maininf} to provide more context for understanding our results.

While we assume corruptions from faults are locally stochastic, we characterize correctability of a given error on a code state by a deterministic condition (see \Cref{def:badfams}). We show that this condition is violated with exponentially small probability $2^{-\poly(k)}$ under locally stochastic noise, assuming we repeatedly run our (single-shot) error-correction gadget in \Cref{it:miec} in \Cref{thm:maininf}. Intuitively, this error-correction gadget (which is formalized in \Cref{lem:errcorr}) is able to reduce a large constant noise rate to a smaller constant noise rate, so that a low error rate is maintained under repeated rounds of noise.

Due to the structure of our construction as described in \Cref{sec:tech}, we first analyze our gadgets under non-uniform locally stocastic noise. That is, we assign each physical qubit a ``level'' $\ell\in\bZ_{\geq 0}$, and assume that each level-$\ell$ qubit is corrupted with probability $p^{2^{\Omega(\ell)}}$ for a small constant $p>0$. In Appendix~\ref{sec:concat}, we show how to reduce uniform locally stochastic noise, in which all qubits have the same constant error probability, to this notion of non-uniform noise. The main idea in this reduction is to concatenate our main gadgets (which expect non-uniform noise) with non-uniformly-sized inner codes. That is, we encode each level-$\ell$ qubit into a $2^{\Theta(\ell)}$-sized inner code block. We specifically concatente with the construction of \cite{golowich_constant-overhead_2025}, to ensure that all necessary operations on the inner code blocks can be performed fault-tolerantly with constant space-time overhead. Because the fraction of level-$\ell$ qubits drops exponentially in $\ell$, the concatenated scheme maintains a constant encoding rate. Hence the concatenation allows us to simulate the desired non-uniform noise under uniform physical noise rates, with just constant space-time overhead.

One drawback of our main result stated in \Cref{thm:maininf} is that the errors on data qubits from injection and ejection may not be independent. Rather, we show that except with a small constant failure probability, a large constant fraction of data qubits remain uncorrupted during injection and ejection. Therefore once we fix a probability distribution from which the fault is sampled, we obtain a well-defined probability distribution on the set of corrupted data qubits in our injection and ejection gadgets. In particular, the fault-tolerance of our gadgets ensures that there exists a large constant fraction of the data qubits that are ``good,'' meaning they have low marginal probabilities of lying within the set of corrupted qubits. Hence we may always encode information into these ``good'' positions in our codes. As we describe in \Cref{sec:apps} below, our low marginal error probability guarantee (for the ``good'' qubits) is sufficient for many applications of injection and ejection described in \Cref{sec:motivation}.

% One drawback of our main result stated in \Cref{thm:maininf} is that the errors on data qubits from injection and ejection may not be independent. Rather, we show that except with a small constant failure probability, a large constant fraction of data qubits remain uncorrupted during injection and ejection. Therefore once the fault probability distribution is fixed, we have a well-defined data qubit error distribution, so we may identify a large constant fraction of data qubits that all have low marginal error probabilities. Nevertheless, as we describe in \Cref{sec:apps} below, this low marginal error probability guarantee is sufficient for many applications of injection and ejection described in \Cref{sec:motivation}.

\subsection{Relation to Prior Constructions}
To the best of our knowledge, all prior schemes providing fault-tolerant injection and ejection as defined in \Cref{sec:maininf} required either the space or time overhead to grow as the target logical error probability of a round of error correction decayed towards $0$. Indeed, various schemes based on concatenated codes (e.g.~\cite{aharonov_fault-tolerant_1997}), surface codes (e.g.~\cite{dennis_topological_2002,mazurek_long-distance_2014,li_magic_2015}), and more general LDPC codes (e.g.~\cite{lodyga_simple_2015,zhang_constant-overhead_2025,bhardwaj_high-rate_2026}) have been proposed, which obtained logical error probability $2^{-\poly(n)}$ for length-$n$ code blocks. However, to perform injection/ejection, such schemes required a $\poly(n)$ multiplicative space-time overhead. That is, letting $k$ denote the number of logical qubits, then such schemes required either at least $k\cdot\poly(n)$ physical qubits, or $\poly(n)$ timesteps. In contrast, our scheme in \Cref{thm:maininf} uses just constant space-time overhead, meaning that we use $O(k)$ physical qubits and $O(1)$ timesteps for injection/ejection, as well as for gadgets such as $\ket{0},\ket{+}$ state preparation and error correction.

Our codes underlying \Cref{thm:maininf} are tensor (i.e.~hypergraph) products of carefully constructed classical codes, which in turn are based in the linear-time-encodable codes of \cite{spielman_linear-time_1996}. As we describe in more detail in \Cref{sec:tech}, these codes have a sort of hierarchical structure that \cite{spielman_linear-time_1996} uses to ensure linear-time encodability. We instead leverage this structure to ensure that the resulting tensor product codes support ejection via Pauli measurements on appropriate physical code qubits. To perform injection, we show how to prepare logical Bell pairs across two code blocks, and then eject out of one code block. The resulting resource state can be used to teleport bare physical qubits into the remaining code block.

Our approach to ejection via appropriate physical Pauli measurements shares some properties with prior works such as \cite{mazurek_long-distance_2014,li_magic_2015} mentioned above. However, our construction is distinct in its crucial use of ideas from \cite{spielman_linear-time_1996} to obtain constant-rate codes for which such measurements yield fault-tolerant ejection.

In this respect, our construction is perhaps more related to the quantum codes of \cite{wills_linear-time_2026}, which also leverage ideas from \cite{spielman_linear-time_1996}. However, \cite{wills_linear-time_2026} obtain linear-time encodability and decodability against a linear number of adversarial errors \emph{without} fault-tolerance properties. That is, \cite{wills_linear-time_2026} allows an adversary to corrupt a linear number of qubits, but only after encoding and prior to decoding. In contrast, we focus on fault-tolerant injection/ejection and error correction against locally stochastic noise. That is, we require corruptions to be stochastic, which are generally easier to correct than adversarial errors, but we allow the errors to occur on any qubit at any time. Hence our results are incomparable to, and complement, those of \cite{wills_linear-time_2026}.

Our single-shot error-correction and state-preparation protocols in \Cref{thm:maininf} also build upon previous works on high-dimensional product codes. Indeed, related procotols were constructed in \cite{bombin_dimensional_2016,golowich_constant-overhead_2025,xu_batched_2025,tan_single-shot_2025}. However, our decoder construction needed to obtain fault-tolerance for these operations is move involved than such prior works due to the hierarchical structure of our codes based on \cite{spielman_linear-time_1996}.

\subsection{Applications to Fault-Tolerant Computation}
\label{sec:apps}

\begin{figure}
  \centering
  
  \begin{tikzpicture}[
    x=1cm,y=1cm, font=\small,
    cs input/.style={draw=blue!65!black, rounded corners=3pt, line width=.7pt},
    cs output/.style={draw=teal!75!black, rounded corners=3pt, line width=.7pt},
    cs inner/.style={draw=orange!70!black, fill=orange!8,
      dashed, rounded corners=3pt, line width=.7pt},
    cs physical/.style={circle, fill=black!75, inner sep=1.7pt},
    cs logical/.style={circle, draw=black!75, fill=white,
      line width=.65pt, inner sep=1.7pt},
    cs arrow/.style={->, >=stealth, line width=.85pt},
    cs label/.style={font=\footnotesize, align=center}
  ]
    % Rows are outer-code blocks; columns are distinct inner-code blocks.
    \node at (1.025,2.05) {Input};
    \node at (4.575,2.05) {After injection};
    \node at (10.525,2.05) {Before ejection};
    \node at (14.625,2.05) {Output};

    % Q_2 holds one qubit from each outer-code block in each column.
    \foreach \base/\last in {3.55/3,9.25/4} {
      \begin{scope}[shift={(\base,0)}]
        \foreach \j in {1,...,\last} {
          \pgfmathsetmacro{\qx}{.24+.48*\j}
          \draw[cs inner] (\qx-.19,-1.22) rectangle (\qx+.19,1.22);
          \node[cs label, text=orange!70!black] at (\qx,1.48) {$Q_2$};
        }
      \end{scope}
    }

    % The input and its concatenated encoding have identical row structure.
    \foreach \base in {0,3.55} {
      \begin{scope}[shift={(\base,0)}]
        \foreach \qy in {-.85,0,.85} {
          \draw[cs input] (0,\qy-.29) rectangle (2.05,\qy+.29);
          \node[cs label, text=blue!65!black] at (.27,\qy) {$Q_1$};
          \foreach \qx in {.72,1.20,1.68} {
            \ifdim\base pt=0pt
              \node[cs physical] at (\qx,\qy) {};
            \else
              \node[cs logical] at (\qx,\qy) {};
            \fi
          }
        }
      \end{scope}
    }

    % A different row length emphasizes that Q_1' may be a different code.
    \foreach \base in {9.25,13.35} {
      \begin{scope}[shift={(\base,0)}]
        \foreach \qy in {-.85,0,.85} {
          \draw[cs output] (0,\qy-.29) rectangle (2.55,\qy+.29);
          \node[cs label, text=teal!75!black] at (.27,\qy) {$Q_1'$};
          \foreach \qx in {.72,1.20,1.68,2.16} {
            \ifdim\base pt=13.35pt
              \node[cs physical] at (\qx,\qy) {};
            \else
              \node[cs logical] at (\qx,\qy) {};
            \fi
          }
        }
      \end{scope}
    }

    \draw[cs arrow] (2.19,0) -- (3.37,0)
      node[midway,above=5pt,cs label] {Inject}
      node[midway,below=5pt,cs label] {into $Q_2$};
    \draw[cs arrow] (5.80,0) -- (9.05,0)
      node[midway,above=6pt,cs label] {Decode $Q_1$\\then encode $Q_1'$}
      node[midway,below=6pt,cs label] {Fault-tolerant operations\\on $Q_2$ blocks};
    \draw[cs arrow] (11.99,0) -- (13.17,0)
      node[midway,above=5pt,cs label] {Eject}
      node[midway,below=5pt,cs label] {from $Q_2$};

    \node[cs label] at (1.025,-1.52) {Physical qubits};
    \node[cs label] at (4.575,-1.52) {Logical qubits of $Q_2$};
    \node[cs label] at (10.525,-1.52) {Logical qubits of $Q_2$};
    \node[cs label] at (14.625,-1.52) {Physical qubits};
    \node[cs label] at (7.95,-2.15)
      {Each row is one $Q_1$ or $Q_1'$ block; each dashed column is one $Q_2$ block.};
  \end{tikzpicture}
  
  \caption{\label{fig:codeswitch} Use of injection and ejection to switch from many blocks of a code $Q_1$ to blocks of a different code $Q_1'$. We assume that $Q_2$ supports fault-tolerant injection and ejection, along with $\ket{0}$ and $\ket{+}$ state preparation and transversal $\gCNOT$ gates.}
\end{figure}

We now provide more details on various applications of low-overhead injection/ejection, as given by \Cref{thm:maininf}, to fault-tolerant quantum computation. As indicated in \Cref{sec:motivation}, one core example application is given by code switching. Specifically, given polynomially many disjoint blocks of a polynomialy-sized code $Q_1$, we can switch to blocks of a different code $Q_1'$ as illustrated in \Cref{fig:codeswitch}. The idea is to inject the physical qubits of our $Q_1$-blocks into many $Q_2$-blocks, and then run decoding for $Q_1$ and encoding for $Q_1'$ using fault-tolerant transversal gates on our $Q_2$-blocks. Finally, we eject out of our $Q_2$-blocks, so that our final output state just consists of $Q_1$-blocks.

As illustrated in \Cref{fig:codeswitch}, no two qubits in the same $Q_1$- or $Q_1'$-block are encoded into the same $Q_2$-block. Therefore even if the qubits encoded in a single $Q_2$-block experience correlated errors, we only need error independence across distinct $Q_2$-blocks in order to ensure error independence among the physical qubits within a given $Q_1$- or $Q_1'$-block. Hence we can instantiate $Q_2$ using our constant-overhead construction in \Cref{thm:maininf}. Specifically, we apply the guarantee described in \Cref{sec:maininf} that the qubits encoded in each $Q_2$-block have low marginal error probabilities under injection/ejection. The space-time overhead of the resulting code switching procedure is then dominated by the size of the decoding and encoding circuits of $Q_1$ and $Q_1'$ respectively.

Crucially, these decoding/encoding circuits for $Q_1$ and $Q_1'$ do \emph{not} need to be fault-tolerant, as fault-tolerance is provided by the gadgets described in \Cref{thm:maininf} acting on $Q_2$. Hence we may for instance choose $Q_1$ and $Q_1'$ to be arbitrary LDPC codes of polynomial distance, which are well-known to support (non-fault-tolerant) encoding and decoding with constant (quantum) space-time overhead via appropriate stabilizer measurements, followed by classical processing and Pauli corrections (see e.g.~\cite[Section~6]{gottesman_fault-tolerant_2014}). Thus our injection/ejection-based code switching procedure in \Cref{fig:codeswitch} provides an alternative to the \emph{batched} code switching procedure of \cite{xu_batched_2025}, which provided an alternative fault-tolerant protocol for fault-tolerantly switching between arbitrary quantum LDPC codes.

However, we emphasize certain advantages of our procedure in \Cref{fig:codeswitch}. For instance, as long as $Q_1$ has a polynomial-time classical decoding algorithm, then the $Q_1$-stabilizer measurement decoding, and hence all side classical computation in our scheme, can be done in polynomial time. More generally, our code switching procedure in \Cref{fig:codeswitch} can also be applied to non-LDPC codes, and will maintain low overhead for such codes with small encoding/decoding circuits.

In \Cref{thm:maininf}, we stated $\ket{0}$ and $\ket{+}$ state preparation and transversal $\gCNOT$ gates as examples of fault-tolerant Clifford gates that are sufficient for applications such as code switching. However, many more fault-tolerant gates can also be implemented using our codes. For instance, because our codes in \Cref{thm:maininf} are tensor (i.e.~hypergraph) product codes supporting single-shot state preparation, they also support single-shot dimensional jump protocols similarly as described in \cite{bombin_dimensional_2016,golowich_constant-overhead_2025,xu_batched_2025,tan_single-shot_2025}. In particular, \cite{golowich_constant-overhead_2025} show how to use such dimensional-jump techniques to perform targeted Hadamard or $\gCNOT$ gates on any constant number of logical qubits, or parallel Hadamard gates on all logical qubits, with just constant space-time overhead. The same techniques also apply to our codes in \Cref{thm:maininf}.

More generally, our injection/ejection-based code switching procedure described above (see \Cref{fig:codeswitch}) provides a key primitive for efficient fault-tolerant computation. Indeed, we may use this protocol to repeatedly switch between different codes that are efficiently encodable/decodable as described above, and support efficient implementations of different fault-tolerant gates. Such an approach may be particularly useful for obtaining fault-tolerant non-Clifford gates, which are noticably absent from our operations listed in \Cref{thm:maininf}. % We leave the development of a full universal fault-tolerance scheme based on this approach for future work.

Another approach to performing fault-tolerant non-Clifford gates is to simply inject noisy magic states using \Cref{thm:maininf}, and then perform magic state distillation. We can again deal with potential intra-block error correlations from injection by simultaneously performing injection across many independent code blocks. Indeed, we can then run the distillation protocol using transversal gates across code blocks, so that no two (potentially correlated) magic states injected into the same code block ever interact.

\subsection{Fault-Tolerance Formalism}
Formally defining proving fault-tolerance of a protocol, especially in the quantum setting, is a notoriously delicate task. As described in \Cref{sec:prelim} (see in particular \Cref{sec:faulttol}), we use a framework of quantum fault-tolerance based on the definitions developed in \cite{nguyen_quantum_2025,he_composable_2025}, and subsequently in \cite{breuckmann_fault-tolerant_2026}. This framework characterizes correctability of errors using deterministic conditions, which we show hold with high probability under locally stochastic noise.

As described in \Cref{sec:faulttol}, we add some notions to the earlier frameworks of \cite{nguyen_quantum_2025,he_composable_2025}, including \emph{reweightings} and \emph{postselections}. In brief, reweightings\footnote{\cite{breuckmann_fault-tolerant_2026} also defines a notion closely related to our reweightings, which they unfortunately call ``postselections.''} are a form of error in which the probabilities of different measurement outcomes can be ``reweighted.'' Such reweightings can appear when considering non-physical noise given by general non-channel superoperators. We must consider such non-physical noise because we decompose general error superoperators into linear combinations of Pauli superoperators. These individual Pauli terms may in turn be non-physical superoperators, e.g.~of the form $\rho\mapsto\rho\gX$, even if the original undecomposed error superoperator was a well-defined channel.

Meanwhile, we consider postselections on measurement outcomes for a separate purpose, namely, to obtain a well-defined classical probability distribution over the support of the error on the output state of a gadget. Indeed, by conditioning on certain measurement outcomes along with the choice of input error and fault, we are able to precisely bound the support of the output error. See \Cref{sec:faulttol} for more details.

\subsection{Roadmap}
The remainder of this paper is organized as follows. \Cref{sec:tech} provides a more detailed technical overview of the main ideas behind our construction used to prove \Cref{thm:maininf}. We then provide a detailed review of relevant definitions and basic notions in \Cref{sec:prelim}. We describe our code construction in \Cref{sec:codeconstruct}, and our decoder in \Cref{sec:decoder}. In \Cref{sec:ftnonu}, we present our fault-tolerant gadgets listed in \Cref{thm:maininf}. We provide concluding remarks in \Cref{sec:conclusion}. Appendix~\ref{sec:concat} describes how we reduce fault-tolerance under non-uniform locally stochastic noise to fault-tolerance under uniform noise, as mentioned in \Cref{sec:maininf} above.

\section{Technical Overview}
\label{sec:tech}
In this section, we provide an overview of our construction that proves \Cref{thm:maininf}.

\subsection{Underlying Classical LDPC Codes}
We begin by describing the classical LDPC codes based on those of \cite{spielman_linear-time_1996} that underlie our construction. We will subsequently take tensor (i.e.~hypergraph) products of chain complexes associated to these classical codes in order to obtain our desired quantum codes.

Our classical LDPC codes are based on \emph{lossless expanders}, as defined below.

\begin{definition}
  \label{def:lossless}
  Let $G=(V_0\sqcup V_1,E)$ be a bipartite graph of left-degree $\Delta_0$. For $\mu,\epsilon>0$, we say $G$ is a \emph{(onesided) $(\mu,\epsilon)$-lossless expander} if it holds for every $S\subseteq V_0$ that $|N_G(S)|\geq(1-\epsilon)\Delta_0|S|$, where $N_G(S)\subseteq V_1$ denotes the neighborhood of $S$.
\end{definition}

Throughout the remainder of this exposition in \Cref{sec:tech}, we let the lossless expansion parameters $\mu,\epsilon>0$ be sufficiently small fixed constants. We then refer to a $(\mu,\epsilon)$-lossless expander as simply a \emph{lossless expander}.

It is well-known that random biregular constant-degree graphs provide lossless expanders with high probability (see e.g.~\cite{hoory_expander_2006}). Explicit constructions have also been developed \cite{capalbo_randomness_2002,golowich_new_2024,cohen_hdx_2023,hsieh_explicit_2025}.

Bipartite adjacency matrices of appropriate lossless expanders provide parity-check matrices for asymptotically good classical LDPC codes (see e.g.~\cite{sipser_expander_1996}). However, we instead construct our classical LDPC codes by stringing together a sequence of lossless expanders of exponentially increasing sizes, as defined below. Our construction is not asymptotically good, i.e.~the distance is sublinear in the block length. However, it does support linear-time encoding, which intuitively will help facilitate our injection/ejection procedures on the resulting quantum codes.

\begin{definition}
  \label{def:classinf}
  Fix an infinite sequence of vertex sets $V_0,V_1,\dots$ with each $|V_\ell|=2|V_{\ell-1}|$. Then for $\ell\in\bN$, let $G^{(\ell)}=(V_\ell\sqcup V_{\ell-1},E^{(\ell)})$ be a biregular lossless expander of some fixed left-degree $\Delta$. Let $\delta^{(\ell)}\in\bF_2^{V_{\ell-1}\times V_\ell}$ denote the bipartite adjacency matrix of $G^{(\ell)}$.

  Then for $\bar{\ell}\in\bN$, we define a classical LDPC code $\cC=\cC(\bar{\ell})$ of dimension $k=|V_{\bar{\ell}}|$ and length $n=\sum_{\ell=0}^{\bar{\ell}}|V_\ell|$ to have parity-check matrix $\delta^{\cC}$ given by
  \begin{equation*}
    \delta^{\cC}(c) = (c_0+\delta^{(1)}(c_1),\; c_1+\delta^{(2)}(c_2),\;\dots\;,c_{\bar{\ell}-1}+\delta^{(\bar{\ell})}(c_{\bar{\ell}})).
  \end{equation*}
  That is, $c=(c_0,\dots,c_{\bar{\ell}})\in\bigoplus_{\ell=0}^{\bar{\ell}}\bF_2^{V_\ell}$ is a codeword of $\cC$ iff $\delta^{\cC}(c)=0$.
\end{definition}

Hence a bit string $c=(c_0,\dots,c_{\bar{\ell}})$, where each $c_\ell\in\bF_2^{V_\ell}$, is a codeword of $\cC=\cC(\bar{\ell})$ iff we have $c_{\ell-1}=\delta^{(\ell)}(c_\ell)$ for each $\ell\in\{1,\dots,\bar{\ell}\}$. Thus we can view $c_{\bar{\ell}}=c|_{V_{\bar{\ell}}}\in\bF_2^{V_{\bar{\ell}}}$ as our message. To encode this message, we inductively compute each $c_{\ell-1}=\delta^{(\ell)}(c_\ell)$ for $\ell=\bar{\ell},\dots,1$. As each constant-degree graph $G^{(\ell)}$ has $O(2^\ell)$ vertices, we can apply $\delta^{(\ell)}$ using a $O(2^\ell)$-sized circuit. Thus our encoding circuit for $\cC$ has size $O(2^{\bar{\ell}})$, which grows linearly with the block length. Indeed, this property was cruicial for the construction of linear-time encodable and decodable codes in \cite{spielman_linear-time_1996}.

However, our construction in \Cref{def:classinf} can be viewed as just ``half'' of the construction of \cite{spielman_linear-time_1996}. Specifically, we can view each $c_{\ell-1}=c|_{V_{\ell-1}}$ as containing ``checks'' on the longer string $c_\ell$. In particular, assuming we know the true value of $c_{\ell-1}$, we can decode errors on a bounded fraction of bits in $c_\ell$. This observation suggests that there is a benefit to having a lower noise rate on $c_\ell$ for smaller values of $\ell$, in order to correct errors on
$c_\ell$ for larger $\ell$, and ultimately correct errors on the message $c_{\bar{\ell}}$. Intuitively, \cite{spielman_linear-time_1996} achieves low error rates on $c_\ell$ for small $\ell$ by using a more involved encoding than given in \Cref{def:classinf}, which provides additional redundant information on $c_\ell$ for small $\ell$.

However, in our quantum code construction described below, we instead directly assume a lower physical noise rate for qubits associated to smaller $\ell$. We then show how to realize this non-uniform noise model via non-uniform concatenation. That is, we show how to encode each qubit associated to a given value of $\ell$ into an inner code that is larger, and hence provides more robust error protection, for smaller $\ell$.

Note that our resulting codes still only have polynomially large distance, instead of the linear distance achieved by the classical codes of \cite{spielman_linear-time_1996}, and by their quantum analogues in \cite{wills_linear-time_2026}. However, as our goal is to construct fault-tolerant protocols with logical error probability $2^{-\poly(n)}$, such polynomial distance is sufficient for our purposes.

\subsection{2-Dimensional Quantum Product Codes}
\label{sec:2diminf}
We now describe our quantum codes given by 2-dimensional tensor (i.e.~hypergraph) products of chain complexes associated that classical codes in \Cref{def:classinf}. As described in \Cref{sec:highdiminf} below, we will ultimately use higher (i.e.~4) dimensional tensor products to obtain all gadgets in \Cref{thm:maininf}. However, for simplicity we will illustrate ejection and error correction in the 2-dimensional case. Hence we first provide more details on the 2-dimensional construction.

Below, we assume familiarity with chain complexes and their tensor products; the relevant definitions can be found in \Cref{sec:prelim}.

\begin{figure}
  \centering
  
  \begin{tikzpicture}[
    x=1cm,y=1cm,font=\small,
    tp grid/.style={draw=black!45,line width=.35pt},
    tp qubits/.style={draw=blue!65!black,fill=blue!7,line width=.8pt},
    tp checks/.style={draw=teal!70!black,fill=teal!7,line width=.8pt},
    tp map/.style={->,>=stealth,line width=.8pt},
    tp index/.style={font=\scriptsize,inner sep=1.5pt},
    tp title/.style={align=center,font=\small}
  ]
    % Exact scale for bar{ell}=4: |V_ell| is represented by .18*2^ell cm.
    % C^0 has total width .18*(16+8+4+2+1)=5.58 cm;
    % C^1 has total width .18*(8+4+2+1)=2.70 cm.
    % The largest blocks are at the lower left of each sector.
    \def\tpunit{.18}
    \def\tpgap{2}
    \pgfmathsetmacro{\tpbits}{\tpunit*(2^5-1)}
    \pgfmathsetmacro{\tpchecks}{\tpunit*(2^4-1)}
    \pgfmathsetmacro{\tpoffset}{\tpbits+\tpgap}

    % Horizontal maps act on C; downward maps act on C^vee.
    \foreach \xx/\yy/\imax/\jmax/\kind in {
      0/\tpoffset/4/3/tp checks,
      \tpoffset/\tpoffset/3/3/tp qubits,
      0/0/4/4/tp qubits,
      \tpoffset/0/3/4/tp checks} {
      \begin{scope}[shift={(\xx,\yy)}]
        \pgfmathsetmacro{\ww}{\tpunit*(2^(\imax+1)-1)}
        \pgfmathsetmacro{\hh}{\tpunit*(2^(\jmax+1)-1)}
        \path[\kind] (0,0) rectangle (\ww,\hh);
        \foreach \i in {0,...,\imax} {
          \pgfmathsetmacro{\edge}{\tpunit*(2^(\imax+1)-2^\i)}
          \pgfmathsetmacro{\mid}{\edge-\tpunit*2^\i/2}
          \ifnum\i>0
            \draw[tp grid] (\edge,0) -- (\edge,\hh);
          \fi
          \node[tp index,below] at (\mid,-.05) {$\i$};
        }
        \foreach \j in {0,...,\jmax} {
          \pgfmathsetmacro{\edge}{\tpunit*(2^(\jmax+1)-2^\j)}
          \pgfmathsetmacro{\mid}{\edge-\tpunit*2^\j/2}
          \ifnum\j>0
            \draw[tp grid] (0,\edge) -- (\ww,\edge);
          \fi
          \node[tp index,left] at (-.05,\mid) {$\j$};
        }
      \end{scope}
    }

    \node[tp title] at ({\tpbits/2},{\tpoffset+\tpchecks+.8})
      {$C^1\times C^0=A^0$\\[2pt]
       \textcolor{teal!70!black}{$X$ stabilizers}};
    \node[tp title] at ({\tpoffset+\tpchecks/2},{\tpoffset+\tpchecks+.8})
      {$C^1\times C^1$\\[2pt]
       \textcolor{blue!65!black}{Qubits ($A^1$)}};
    \node[tp title] at ({\tpbits/2},-.9)
      {$C^0\times C^0$\\[2pt]
       \textcolor{blue!65!black}{Qubits ($A^1$)}};
    \node[tp title] at ({\tpoffset+\tpchecks/2},-.9)
      {$C^0\times C^1=A^2$\\[2pt]
       \textcolor{teal!70!black}{$Z$ stabilizers}};

    \foreach \y in {\tpbits/2,\tpoffset+\tpchecks/2}
      \draw[tp map] ({\tpbits+.2},\y) -- ({\tpoffset-.45},\y)
        node[midway,above=4pt] {$I\otimes\delta^{\cC}$};
    \foreach \x in {\tpbits/2,\tpoffset+\tpchecks/2}
      \draw[tp map] (\x,{\tpoffset-.2}) -- (\x,{\tpbits+.4})
        node[midway,right=5pt] {$(\delta^{\cC})^{\top}\otimes I$};
  \end{tikzpicture}

  \caption{\label{fig:2dim} 2-dimensional tensor (i.e.~hypergraph) product $\cA$ of (co)chain complexes $\cC$ associated to classical codes in \Cref{def:classinf}. We have drawn the $\bar{\ell}=4$ case. A rectangle with row number $\ell_1$ and column number $\ell_2$ represents a grid of qubits labeled by the set $V_{\ell_1}\times V_{\ell_2}$, which lie at level $(\bar{\ell}-\ell_1)+(\bar{\ell}-\ell_2)$.}
\end{figure}

Our 2-dimensional tensor (i.e.~hypergraph) product codes are illustrated in \Cref{fig:2dim}. Specifically, we have illustrated the tensor product $\cA^*=\cC^*\otimes{\cC^\vee}^*$ of a 1-dimensional cochain complex $\cC^*=\cC^*(\ell)$ with its dualized (i.e.~transposed) complex ${\cC^\vee}^*\cong\cC_*$, where
\begin{equation*}
  \cC^* = (\cC^0\xrightarrow{\delta^{\cC}}\cC^1)
\end{equation*}
with $\cC^i=\bF_2^{C^i}$ for basis sets
\begin{align*}
  C^0 &= \bigsqcup_{\ell=0}^{\bar{\ell}}V_\ell, \hspace{3em} C^1 = \bigsqcup_{\ell=0}^{\bar{\ell}-1}V_\ell,
\end{align*}
and with $\delta^{\cC}$ defined as in \Cref{def:classinf}.

Write
\begin{equation*}
  \cA^* = (\cA^0 \rightarrow \cA^1 \rightarrow \cA^2),
\end{equation*}
and let $A^i$ be our chosen basis for each $\cA^i=\bF_2^{A^i}$, so that each $A^i$ is the disjoint union of sets of the form $V_{\ell_1}\times V_{\ell_2}$. We define the \emph{level} of a basis element $a\in V_{\ell_1}\times V_{\ell_2}$ to be
\begin{equation*}
  \Lev(a) = (\bar{\ell}-\ell_1) + (\bar{\ell}-\ell_2).
\end{equation*}
In particular, our physical code qubits are labeled by $A^1$, and hence have levels ranging from~$0$ to~$2\bar{\ell}$. Meanwhile, our $X$ (resp.~$Z$) stabilizers are labeled by $A^0$ (resp.~$A^2$).

We first prove fault-tolerance of our gadgets under a \emph{non-uniform locally stochastic noise model}, in which each qubit at level $\ell$ is corrupted with probability $p^{2^{\Theta(\ell)}}$ for a small constant $p>0$. More formally, the probability that a given set $E\subseteq A^1$ lies inside the support of the corruption at a given timestep is bounded by
\begin{equation*}
  \Pr[E\text{ lies in support of corruption}] \leq \prod_{a\in E}p^{2^{\Theta(\Lev(a))}}.
\end{equation*}
In Appendix~\ref{sec:concat}, we show how to simulate this non-uniform noise model under uniform locally stochastic noise, by encoding each level-$\ell$ qubit into an inner code of length $2^{\Theta(\ell)}$. We specifically concatenate with the construction of \cite{golowich_constant-overhead_2025}, which supports all necessary fault-tolerant operations to implement our gadgets.

\subsection{Fault-Tolerant Ejection}
We now describe how to perform fault-tolerant ejection on the 2-dimensional tensor product codes described in \Cref{sec:2diminf} above. This technique will generalize directly to higher-dimensional tensor product codes as well. Our ejection procedure will crucially leverage the hierarchy of lossless expanders in \Cref{def:classinf}.

First, we observe that each logical $\gX$ operator of our quantum code has the form $\gX^c$ for some codeword $c\in\cC^0$ of our classical code in \Cref{def:classinf} that is placed inside a single row in the rectangle $V_{\bar{\ell}}\times C^0$ at the bottom left of \Cref{fig:2dim}. Similarly, each logical $\gZ$ operator has the form $\gZ^c$ for a codeword $c\in\cC^0$ of our classical code in \Cref{def:classinf} that is placed inside a single column in the rectangle $C^0\times V_{\bar{\ell}}$. In particular, given $(a_1,a_2)\in V_{\bar{\ell}}\times V_{\bar{\ell}}$, we have an associated logical operator $\gX^{c_X(a_1,a_2)}$ supported inside the row $\{a_1\}\times C^0$, where $c_X(a_1,a_2)\in\cC^0$ is given by the encoding of $\1_{a_2}\in\bF_2^{V_{\bar{\ell}}}$ into our classical code from \Cref{def:classinf}. Similarly, we have an associated logical operator $\gZ^{c_Z(a_1,a_2)}$ supported inside the column $C^0\times\{a_2\}$, where $c_Z(a_1,a_2)$ is given by the encoding of $\1_{a_1}\in\bF_2^{V_{\bar{\ell}}}$ into our classical code.

Hence to perform ejection, we perform Pauli $\gX$ measurements on all qubits in the rectangle $V_{\bar{\ell}}\times(C^0\setminus V_{\bar{\ell}})$, and Pauli $\gZ$ measurements on all qubits in the rectangle $(C^0\setminus V_{\bar{\ell}})\times V_{\bar{\ell}}$. We then apply a Pauli correction to the rectangle $V_{\bar{\ell}}\times V_{\bar{\ell}}$, which will contain our ejected state, and we discard all other physical qubits.

The Pauli correction is computed as follows. For each qubit $(a_1,a_2)\in V_{\bar{\ell}}\times V_{\bar{\ell}}$, we compute the parity (i.e.~sum$\mod 2$) of the $\gX$ (resp.~$\gZ$) measurement outcomes in the support of $c_X(a_1,a_2)$ (resp.~$c_Z(a_1,a_2)$). We then apply a Pauli $\gZ$ (resp.~$\gX$) correction to qubit $(a_1,a_2)$ if this parity is odd. Note that we measure all qubits in the support of the logical operator $\gX^{c_X(a_1,a_2)}$ (resp.~$\gZ^{c_Z(a_1,a_2)}$) except for $(a_1,a_2)$. Hence after the Pauli correction, the resulting qubit $(a_1,a_2)$ precisely contains the ejection of the logical qubit specified by the logical operators $\gX^{c_X(a_1,a_2)}$ and $\gZ^{c_Z(a_1,a_2)}$. Indeed, measuring $\gX$ (resp.~$\gZ$) on this qubit after the Pauli correction has the same effect as measuring $\gX^{c_X(a_1,a_2)}$ (resp.~$\gZ^{c_Z(a_1,a_2)}$) prior to the ejection procedure.

The above analysis shows that our ejection procedure works in the absence of errors. To see that it is fault-tolerant, we observe that for each $\ell\in\bN$, the Pauli correction on a given ejected qubit only depends on the measurement outcomes of at most $2^{O(\ell)}$ distinct level-$\ell$ qubits. Hence assuming each level-$\ell$ qubit is corrupted with probability $p^{2^{\Theta(\ell)}}\ll 1/2^{O(\ell)}$ (see our discussion of the noise model above), the probability that a given ejected qubit is corrupted remains a small constant that decays with $p$.

While this heuristic argument suggests our ejection scheme is fault-tolerant, the full proof is significantly more involved. For instance, we must account for residual errors of our error-correction procedure, which may not simply be independently drawn with non-uniform probabilities. See \Cref{sec:ftnonu} for details.

\subsection{Fault-Tolerant Error Correction}
\label{sec:ecinf}
We now describe how to perform fault-tolerant error correction on the 2-dimensional tensor product codes described in \Cref{sec:2diminf} above. Like ejection, in \Cref{sec:ftnonu} we generalize this approach to error correction to higher-dimensional product codes. In fact, we ultimately use the error-correction ideas described here for logical $\ket{0}$ and $\ket{+}$ state preparation, which we then use to correct errors by repeatedly teleporting our state into fresh code blocks. However, for simplicity in this overview, we restrict attention to the problem of performing error correction directly on 2-dimensional product codes. Here for simplicity we also assume that syndrome measurements are noiseless. We describe the additional ideas needed for code state preparation, which also accounts for syndrome measurement errors, in \Cref{sec:highdiminf} below.

We specifically describe correction of $\gX$ errors via $\gZ$ stabilizer measurements followed by $\gX$ Pauli corrections. The key challenge is to construct a classical decoding algorithm that determines a valid Pauli correction given the stabilizer measurement outcomes. We again assume that a qubit at level $\ell$ experiences an $\gX$ error with probability $p^{2^{\Theta(\ell)}}$. The procedure to correct $\gZ$ errors via $\gX$ stabilizer measurements is analogous.

At a high level, our approach to decoding is to ``push'' all $\gX$ errors down into the bottom rectangle of qubits in \Cref{fig:2dim} labeled $V_{\bar{\ell}}\times C^0$. Within this rectangle, we then apply an ordinary ``flip'' decoder for classical LDPC codes based on lossless expanders (see e.g.~\cite{sipser_expander_1996}).

For this purpose, we can assume that the only qubits in $C^0\times C^0\subseteq A^1$ that experience $\gX$ errors in fact lie in the ``bottom half'' $V_{\bar{\ell}}\times C^0$ in \Cref{fig:2dim}. Indeed, any $\gX$ error in the ``top half'' $C^1\times C^0\subseteq C^0\times C^0$ is, up to $\gX$-stabilizers, equivalent to $\gX$ errors in $(V_{\bar{\ell}}\times C^0)\sqcup(C^1\times C^1)$.

Therefore, we begin by first greedily flipping (i.e.~applying Pauli $\gX$ corrections to) qubits in the top-right rectangle of qubits $C^1\times C^1\subseteq A^1$ in \Cref{fig:2dim} to ensure that the ``top half'' $C^1\times C^1\subseteq C^0\times C^1=A^2$ of the $\gZ$-stabilizers are satisfied. That is, we can apply Pauli $\gX$ corrections to qubits in $C^1\times C^1\subseteq A^1$ to ensure that the only remaining unsatisfied $\gZ$-stabilizers lie in the bottom right rectangle labeled $V_{\bar{\ell}}\times C^1$.

Now our only remaining $\gX$ errors lie in the rectangle $V_{\bar{\ell}}\times C^0\subseteq C^0\times C^0$. For each row of qubits labeled $\{a_1\}\times C^0$ in this rectangle with corruption $\gX^{e(a_1)}$ for $e(a_1)\in\bF_2^{C^0}$, the associated $\gZ$-stabilizer row labeled $\{a_1\}\times C^1$ by definition equals $\delta^{\cC}(e(a_1))$. That is, our $\gZ$-stabilizer rows in $V_{\bar{\ell}}\times C^1$ now simply give parity-check syndromes of our $\gX$-error rows, for the parity-check matrix $\delta^{\cC}$ given in \Cref{def:classinf}.

Because our classical codes in \Cref{def:classinf} are based on lossless expanders, we may therefore simply run a classical ``flip'' decoder within each row to correct the remaining errors. That is, we greedily apply $\gX$ corrections to qubits one at a time to reduce the syndrome weight. Here we crucially rely on the assumption that level-$\ell$ qubits for large $\ell$ are only corrupted with small probability $p^{2^{\Theta(\ell)}}$, so that few flips are needed at high levels where we only have a small number of qubits and parity-checks.

To decode higher-dimensional tensor product codes as described in \Cref{sec:highdiminf} below, a classical flip decoder no longer suffices. Indeed, after ``pushing'' down all errors in this case, we still must decode a high-dimensional quantum product code. We therefore instead use a \emph{small-set flip} decoder, building on ideas from \cite{leverrier_quantum_2015,fawzi_efficient_2018,fawzi_constant_2020,golowich_constant-overhead_2025} for decoding quantum codes given by tensor products of lossless expander codes. See \Cref{sec:decoder} for details.

We briefly remark on a few interesting properties of the decoding algorithm described above. Our algorithm only relies on \emph{onesided} lossless expansion (see \Cref{def:lossless}), as each tensor product factor code is only used for decoding either $\gX$ or $\gZ$ errors, but not both. In contrast, prior small-set-flip decoders for quantum product codes against locally stochastic noise (e.g.~\cite{leverrier_quantum_2015,golowich_constant-overhead_2025}) leveraged lossless expansion of all factor codes for decoding both $\gX$ and $\gZ$ errors. Hence these prior works required \emph{twosided} lossless expanders, which provide expansion out of both vertex sets in a bipartite graph.

Furthermore, our decoding algorithm is entirely immune to arbitrary corruptions on the top-right rectangle of qubits $C^1\times C^1\subseteq A^1$ in \Cref{fig:2dim}. Indeed, as the first step of our decoder greedily applies Pauli corrections throughout $C^1\times C^1$, we will immediately revert the effect of any errors inside this rectangle. This approach is again in contrast to prior small-set-flip decoders, which were sensitive to errors in all sectors of the tensor product complex.

We additionally note that our algorithm ``pushes'' $\gX$ errors into the rectangle of qubits labeled $V_{\bar{\ell}}\times C^0\subseteq A^1$, whereas it ``pushes'' $\gZ$ errors into the rectangle $C^0\times V_{\bar{\ell}}\subseteq A^1$. Hence $\gX$ and $\gZ$ errors are pushed into different rectangles, so we maintain separates bounds for the supports of the residual $\gX$ and $\gZ$ errors following error correction.

\subsection{State Preparation and Injection on Higher-Dimensional Product Codes}
\label{sec:highdiminf}
While above we described the 2-dimensional case of our product codes for simplicity, we ultimately prove \Cref{thm:maininf} using higher (i.e.~4 or greater) dimensional product codes. We use such higher-dimensional products in order to obtain constant-overhead fault-tolerant preparation of $\ket{0}$ and $\ket{+}$ logical states.

Such state preparation along with ejection immediately yields a protocol for fault-tolerant injection. Specifically, we can prepare two code blocks, one with logical $\ket{0}$ states and one with logical $\ket{+}$ states, and perform transversal $\gCNOT$ gates between the two to prepare logical Bell pairs. We then eject out of one code block. Thus we are left with Bell pairs between bare physical qubits and the remaining code block, which provides a resource state to teleport bare qubits into the code block.

Our procedure for fault-tolerant state preparation with constant space-time overhead builds on \cite{golowich_constant-overhead_2025}. Specifically, \cite{golowich_constant-overhead_2025} provided such a state preparation protocol for $\geq 3$-dimensional tensor products of lossless expander codes. The main idea is to prepare all physical qubits in $\ket{+}$ or $\ket{0}$, then measure all code stabilizers, which are used to determine a Pauli correction that moves the qubits' state into the code space. The key challenge is to construct a classical decoding algorithm for the measurement outcomes that guarantees correctness of the resulting Pauli correction. Such decoding typically requires higher ($\geq 3$) dimensional chain complexes in order to obtain ``meta-checks,'' or parity-checks on the stabilizer measurements.

\cite{golowich_constant-overhead_2025} specifically showed how to generalize the small-set-flip decoder of \cite{leverrier_quantum_2015,fawzi_efficient_2018,fawzi_constant_2020} for 2-dimensional products to higher-dimensional product codes constructed from lossless expanders. Hence by placing qubits at the middle level of a 4-dimensional product chain complex, so that both the $\gX$ and $\gZ$ stabilizers have meta-checks, \cite{golowich_constant-overhead_2025} obtained quantum codes for which both logical $\ket{0}$ and $\ket{+}$ states can be prepared fault-tolerantly with constant overhead under locally stochastic noise. Similar results were also obtained by \cite{xu_batched_2025,tan_single-shot_2025}, though using a minimum-weight decoder, which may not have a polynomial-time classical implementation, unlike a small-set-flip decoder.

We therefore obtain state preparation gadgets for our codes by extending the ideas of \cite{golowich_constant-overhead_2025} to obtain a decoder for high-dimensional tensor products of our codes described in \Cref{def:classinf}. We apply this decoder in the same way as \cite{golowich_constant-overhead_2025}, i.e.~by preparing physical qubits in $\ket{0}$ or $\ket{+}$ states, measuring stabilizers, and decoding to compute a Pauli correction.

However, whereas \cite{golowich_constant-overhead_2025} used underlying classical codes given by a single large twosided lossless expander, our classical codes in \Cref{def:classinf} consist of many onesided lossless expanders of varying sizes strung together. To account for this differing structure of our codes, we develop a decoder that combines the high-dimensional small-set-flip ideas of \cite{golowich_constant-overhead_2025} with the techniques described in \Cref{sec:ecinf}. Specifically, as described in \Cref{sec:ecinf}, our decoder first ``pushes'' errors into a subset of the physical qubits by greedily applying corrections in certain sectors of the product code. We then apply a high-dimensional small-set-flip decoder similar to that of \cite{golowich_constant-overhead_2025} to correct the remaining errors.

% TODO: describe 1-dimensional classical (simultaneously weighted and unweighted flip) decoding algorithm, then describe more informally how to extend to 2-dimensional hypergraph product case, with picture for 2-dim case (specifically maybe illustrate logical operators for ejection section?

\section{Preliminaries}
\label{sec:prelim}
In this section, we present definitions and basic results.

\subsection{Notation}
For $n\in\bN$, we let $[n]=\{1,2,\dots,n\}$. We let $\bF_2$ denote the field of order $2$. For $x\in\bF_2^n$, we let $|x|=|\{i\in[n]:x_i\neq 0\}$ denote the Hamming weight. For $i\in[n]$, when $n$ is clear from context, we let $\1_i\in\bF_2^n$ be the indicator vector for index $i$, meaning that $(\1_i)_j=1$ iff $i=j$.

An $n$-qubit pure quantum state is specified by a vector $\ket{\psi}\in(\bC^2)^{\otimes n}=\bC^{2^n}$. We use the standard notation $\bC^2=\spn\{\ket{0},\ket{1}\}$, $\ket{+}=(\ket{0}+\ket{1})/\sqrt{2}$, and $\ket{-}=(\ket{0}-\ket{1})/\sqrt{2}$. For a set $S\subseteq\bF_2^n$, we let $\ket{S}=(1/\sqrt{|S|})\sum_{x\in S}\ket{x}$ denote the uniform superposition of elements in $S$. A $n$-qubit density operator is a self-adjoint positive semi-definite operator $\rho\in\bC^{2^n\times 2^n}$ of trace $1$. We refer to linear maps $A:\bC^{2^n\times 2^n}\rightarrow\bC^{2^n\times 2^n}$ as superoperators. A superoperator that is completely positive and trace-preserving (CPTP) is called a quantum channel. We let $I_n:\bC^{2^n\times 2^n}\rightarrow\bC^{2^n\times 2^n}$ denote the identity channel $I_n(\rho)=\rho$.

We let $I,X,Y,Z\in\bC^{2\times 2}$ denote the standard single-qubit Pauli matrices. An $n$-qubit Pauli matrix is a tensor product of $n$ single-qubit Pauli matrices. For $P\in\{I,X,Y,Z\}$ and $x\in\bF_2^n$, we write $P^x=\bigotimes_{i=1}^nP^{x_i}$. An $n$-qubit Pauli superoperator is a map of the form $\rho\mapsto A\rho B$ for $n$-qubit Paulis $A,B$.

For an $n$-qubit (super)operator $A$, the support $S=\supp(A)$ is the minimal subset $S\subseteq[n]$ for which there exists a (super)operator $A_S$ acting on qubits $S$ such that $A=A_S\otimes I_{[n]\setminus S}$. We say the weight of a (super)operator $A$ is the size $|A|=|\supp(A)|$ of its support. It is well-known that every (super)operator has a unique decomposition into a linear combination of Pauli (super)operators. Then the support is simply the set of qubits on which ever nonzero term in this decomposition acts as the identity.

% We may write every $n$-qubit Pauli superoperator in the form $P(\rho)\propto X^{p_X}Z^{p_Z}\rho Z^{p_Z'}X^{p_X'}$ for some $p_X,p_X',p_Z,p_Z'\in\bF_2^n$. Then we define the $X$-support $\supp_X(P)=\supp(p_X)\cup\supp(p_X')$ and the $Z$-support $\supp_Z(P)=\supp(p_Z)\cup\supp(p_Z')$. To extend this definition to a general $n$-qubit superoperator $A$, letting $A=\sum_P\alpha_PP$ be the unique decomposition into Pauli superoperators, then we define the $X$-support $\supp_X(A)=\bigcup_{P:\alpha_P\neq 0}\supp_X(P)$ and the $Z$-support $\supp_Z(A)=\bigcup_{P:\alpha_P\neq 0}\supp_Z(P)$.

For vectors $v,v'\in\bF^n$ over an arbitrary field $\bF$, we write $v\propto v'$ if there exists $\alpha\in\bF$ such that $v=\alpha v'$.

\subsection{Error-Correcting Codes}
In this section, we define classical and quantum error correcting codes. In this paper, we restrict attention to codes over the binary alphabet $\bF_2$.

\begin{definition}
  A \emph{classical (linear) code} of length $n$ and dimension $k$ is a $k$-dimensional linear subspace $C\subseteq\bF_2^n$. The distance is $d=\min_{x\in C\setminus\{0\}}|x|$. We summarize these parameters by saying $C$ is an $[n,k,d]$ code.

  The \emph{dual} $C^\perp$ of $C$ is defined as $C^\perp=\{y\in\bF_2^n : x\cdot y=0 \ \forall x\in C\}$.

  A \emph{parity-check matrix} for $C$ is a matrix $H\in\bF_2^{m\times n}$ such that $C=\ker(H)$. The \emph{locality} of $H$ is the maximum Hamming weight of any row or column.
\end{definition}

\begin{definition}
  A \emph{quantum (CSS) code} of length $n$ is a pair of length-$n$ classical codes $Q=(Q_X,Q_Z)$. The dimension is $k=\dim(Q_Z)-\dim(Q_X^\perp)$, and the distance is
  \begin{equation*}
    d = \min_{x\in(Q_X\setminus Q_Z^\perp)\cup(Q_Z\setminus Q_X^\perp)}|x|.
  \end{equation*}
  We summarize these parameters by saying $Q$ is an $[[n,k,d]]$ code.
\end{definition}

For an $[[n,k,d]]$ code, if the $n$ physical qubits are labeled by elements of a set $N$ of size $|N|=n$, and the $k$ logical qubits are labeled by elements of a set $K$ of size $|K|=k$, we say the code is a $[[N,K,d]]$ code.

A family of classical codes is said to be ``low-density parity-check'' (LDPC) if it permits a family of constant-locality parity-check matrices. A family of quantum CSS codes is then said to be LDPC if the $X$ and $Z$ codes $Q_X$ and $Q_Z$ are classical LDPC codes.

We will need to specify encoding maps for CSS codes as defined below.

\begin{definition}
  A \emph{CSS encoding map} for a $[[n,k,d]]$ quantum CSS code $Q$ is a linear isomorphism $\Enc:\bF_2^k\xrightarrow{\sim} Q_Z/Q_X^\perp$. This map induces the \emph{encoding isometry} $\Enc:\bC^{2^k}\rightarrow\bC^{2^n}$ given by $\Enc\ket{x}=\ket{\Enc(x)}$, which in turn induces the \emph{encoding channel} $\Enc:\bC^{2^k\times 2^k}\rightarrow\bC^{2^n\times 2^n}$ given by $\Enc(\rho)=\Enc\rho\Enc^\dagger$. The meaning of `$\Enc$' (CSS encoding map, isometry, or channel) will be made clear from context.
\end{definition}

Below we present the notion of dual encoding maps for a CSS code, which are related to each other by applying Hadamard gates (see \Cref{def:gates}) to all qubits.

\begin{definition}
  \label{def:encdual}
  Let $\Enc_Z:\bF_2^k\rightarrow Q_Z/Q_X^\perp$ and $\Enc_X:\bF_2^k\rightarrow Q_X/Q_Z^\perp$ be CSS encoding maps for an $[[n,k]]$ CSS code $Q=(Q_X,Q_Z)$ and its $[[n,k]]$ \emph{dual code} $Q=(Q_Z,Q_X)$, respectively. We say $\Enc_X,\Enc_Z$ are \emph{dual CSS encoding maps} for $Q$ if it holds for every $x,x'\in\bF_2^k$ that
  \begin{equation*}
    x\cdot x' = \Enc_X(x)\cdot\Enc_Z(x'),
  \end{equation*}
  where the RHS above applies the natural bilinear form $(Q_X/Q_Z^\perp)\times(Q_Z/Q_X^\perp)\rightarrow\bF_2$.
\end{definition}

\begin{lemma}
  \label{lem:encdual}
  Let $\Enc_X,\Enc_Z$ be dual CSS encoding maps for an $[[n,k]]$ CSS code $Q$. Then the associated encoding isometries satisfy
  \begin{equation*}
    \Enc_X\circ\gH^{\otimes k} = \gH^{\otimes n}\circ\Enc_Z.
  \end{equation*}
\end{lemma}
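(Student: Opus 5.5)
We verify the identity on the computational basis $\{\ket{x'}:x'\in\bF_2^k\}$, since both sides are linear maps $\bC^{2^k}\to\bC^{2^n}$. The plan is to expand both sides explicitly as signed sums over computational basis states of $\bC^{2^n}$ and compare coefficients, using the CSS structure $Q_X^\perp\subseteq Q_Z$ and $Q_Z^\perp\subseteq Q_X$, together with the duality condition of \Cref{def:encdual}.

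\textbf{Right-hand side.} Interpret $\ket{\Enc_Z(x')}$ as the uniform superposition over the coset $y+Q_X^\perp$, where $y$ is any representative of $\Enc_Z(x')$. This matches the notation $\ket{S}$ for the coset $S=\Enc_Z(x')\subseteq\bF_2^n$. Applying $\gH^{\otimes n}$ gives
\begin{equation*}
  \gH^{\otimes n}\ket{y+Q_X^\perp}=\frac{1}{\sqrt{2^n|Q_X^\perp|}}\sum_{z\in\bF_2^n}(-1)^{z\cdot y}\sum_{w\in Q_X^\perp}(-1)^{z\cdot w}\ket{z}.
\end{equation*}
The inner sum vanishes unless $z\in(Q_X^\perp)^\perp=Q_X$, in which case it equals $|Q_X^\perp|$. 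Hence the right-hand side is proportional to $\sum_{z\in Q_X}(-1)^{z\cdot y}\ket{z}$. Since $y\in Q_Z$ and $Q_Z^\perp\subseteq Q_X$, the phase $(-1)^{z\cdot y}$ is constant on cosets $z+Q_Z^\perp$. It is exactly $(-1)^{[z]\cdot[y]}$ under the bilinear form $(Q_X/Q_Z^\perp)\times(Q_Z/Q_X^\perp)\to\bF_2$, which is therefore well defined.

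\textbf{Left-hand side.} We have $\gH^{\otimes k}\ket{x'}=2^{-k/2}\sum_{x}(-1)^{x\cdot x'}\ket{x}$. Applying $\Enc_X$ gives $2^{-k/2}\sum_x(-1)^{x\cdot x'}\ket{\Enc_X(x)}$, where each $\ket{\Enc_X(x)}$ is the uniform superposition over the coset $\Enc_X(x)\in Q_X/Q_Z^\perp$. By the duality condition, $x\cdot x'=\Enc_X(x)\cdot\Enc_Z(x')$. Since $\Enc_X$ is a bijection onto $Q_X/Q_Z^\perp$, the cosets $\Enc_X(x)$ partition $Q_X$. So the left-hand side is proportional to $\sum_{z\in Q_X}(-1)^{[z]\cdot\Enc_Z(x')}\ket{z}$, the same vector as the right-hand side.

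\textbf{Normalization.} It remains to check that the scalar constants agree. Both maps are isometries, so both vectors have unit norm. Both are also real combinations with identical positive weight pattern up to an overall positive constant, so the constants must coincide. Alternatively, one can compute directly using $|Q_X|\cdot|Q_X^\perp|=2^n$ and $|Q_X|/|Q_Z^\perp|=2^k$.

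\textbf{Main obstacle.} The only delicate point is the well-definedness of the pairing on quotients, i.e.\ independence of coset representatives. This follows from $Q_Z^\perp\subseteq Q_X$ and $Q_X^\perp\subseteq Q_Z$, which is the CSS condition implicit in the definition of the dimension. Everything else is routine character-sum bookkeeping.
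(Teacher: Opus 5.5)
Your proposal is correct and takes essentially the same route as the paper: expand $\gH^{\otimes n}\Enc_Z\ket{x}$ as a character sum, note it vanishes outside $Q_X$, group the surviving terms into cosets of $Q_Z^\perp$, and apply the duality condition $x\cdot x'=\Enc_X(x)\cdot\Enc_Z(x')$. The only difference is that you fix the overall constant by a unit-norm and positivity argument (or the cardinality identities), whereas the paper computes it explicitly via $|Q_X^\perp||Q_Z^\perp|=2^{n-k}$; both are valid.
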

\begin{proof}
  For every $x\in\bF_2^k$, by definition
  \begin{align*}
    \gH^{\otimes n}\Enc_Z\ket{x}
    &= \sqrt{\frac{1}{|Q_X^\perp|}}\sum_{y\in\Enc_Z(x)}\gH^{\otimes n}\ket{y} \\
    &= \sqrt{\frac{1}{2^n\cdot|Q_X^\perp|}}\sum_{z\in\bF_2^n}\sum_{y\in\Enc_Z(x)}(-1)^{y\cdot z}\ket{z}.
  \end{align*}
  As $\Enc_Z(x)$ is a coset of $Q_Z/Q_X^\perp$, the inner sum on the RHS above vanishes for every $z\notin Q_X$, so
  \begin{align*}
    \gH^{\otimes n}\Enc_Z\ket{x}
    &= \sqrt{\frac{1}{2^n\cdot|Q_X^\perp|}}\sum_{z\in Q_X}\sum_{y\in\Enc_Z(x)}(-1)^{y\cdot z}\ket{z} \\
    &= \sqrt{\frac{|Q_X^\perp|}{2^n}}\sum_{z\in Q_X}(-1)^{\Enc_Z(x)\cdot z}\ket{z},
  \end{align*}
  where $\Enc_Z(x)\cdot z$ is a well-defined element of $\bF_2$ because $\Enc_Z(x)\in Q_Z/Q_X^\perp$ and $z\in Q_X$. Now $\Enc_Z(x)\cdot z$ has the same value for all $z$ within a given coset in $Q_X/Q_Z^\perp$, so
  \begin{align*}
    \gH^{\otimes n}\Enc_Z\ket{x}
    &= \sqrt{\frac{|Q_X^\perp||Q_Z^\perp|}{2^n}}\sum_{x'\in\bF_2^k}(-1)^{\Enc_Z(x)\cdot\Enc_X(x')}\ket{\Enc_X(x')} \\
    &= \sqrt{\frac{1}{2^k}}\sum_{x'\in\bF_2^k}(-1)^{x\cdot x'}\ket{\Enc_X(x')} \\
    &= H^{\otimes k}\Enc_X\ket{x},
  \end{align*}
  as desired, where the second equality above applies the duality of $\Enc_X,\Enc_Z$ as defined in \Cref{def:encdual}.
\end{proof}

\subsection{Chain Complexes}
\label{sec:chaincomplex}
In this section, we provide standard definitions and facts regarding chain complexes, which are useful for constructing quantum CSS codes. The specific definitions we use below almost exactly follow those in \cite[Section~3.5]{golowich_constant-overhead_2025}, though we repeat the definitions here for reference.

\begin{definition}
  \label{def:chaincom}
  An \emph{$r$-dimensional chain complex $\cC_*$ over $\bF_2$} is a graded $\bF_2$-vector space $\cC=\bigoplus_{i=0}^r\cC_i$ together with a \emph{boundary map} $\partial^{\cC}:\cC\rightarrow\cC$ satisfying $(\partial^{\cC})^2=0$ and $\partial^{\cC}(\cC_i)\subseteq\cC_{i-1}$. When clear from context, we write $\partial=\partial^{\cC}$. We also write $\partial_i=\partial|_{\cC_i}$, and we summarize this data by writing
  \begin{equation*}
    \cC_* = (\cC_r\xrightarrow{\partial_r}\cC_{r-1}\xrightarrow{\partial_{r-1}}\cdots\xrightarrow{\partial_1}\cC_0).
  \end{equation*}
  We call $\cC_i$ the space of \emph{$i$-chains} of $\cC_*$, and we define the
  \begin{align*}
    i\text{-cycles} \; Z_i(\cC) &= \{z\in\cC_i:\partial(z)=0\} \\
    i\text{-boundaries} \; B_i(\cC) &= \{\partial(c):c\in\cC_{i+1}\} \\
    i\text{-homology} \; H_i(\cC) &= Z_i(\cC)/B_i(\cC).
  \end{align*}
  In a slight abuse of notation, for $i\in\bZ\setminus\{0,\dots,r\}$ we write $\cC_i=\{0\}$.

  The \emph{cochain complex}
  \begin{equation*}
    \cC^* = (\cC^0\xrightarrow{\delta_0}\cC^1\xrightarrow{\delta_1}\cdots\xrightarrow{\delta_{r-1}}\cC^r)
  \end{equation*}
  associated to $\cC_*$ is the chain complex with the same vector space $\cC=\bigoplus_{i=0}^r\cC^i$ with each $\cC^i=\cC_i$, but whose boundary map is the \emph{coboundary map} $\delta=\delta^{\cC}:\cC\rightarrow\cC$ given by $\delta=\partial^\top$. We then write $\delta_i=\delta|_{\cC^i}=\partial_{i+1}^\top$. We call $\cC^i$ the space of \emph{$i$-cochains}, and we similarly define the
  \begin{align*}
    i\text{-cocycles} \; Z^i(\cC) &= \{z\in\cC^i:\delta(z)=0\} \\
    i\text{-coboundaries} \; B^i(\cC) &= \{\delta(c):c\in\cC_{i-1}\} \\
    i\text{-cohomology} \; H^i(\cC) &= Z^i(\cC)/B^i(\cC).
  \end{align*}

  We will assume our chain complexes are $\Upsilon$-\emph{based} for some positive integer (or finite set) $\Upsilon$, meaning that $\cC_i=(\bF_2^\Upsilon)^{C_i}$ for a specified set $C_i$. We then define supports and Hamming weights of elements of $\cC_i=\cC^i$ with respect to this basis. That is, for $a\in\cC_i$, we let $\supp(a)\subseteq C_i$ denote the set of all $c\in C_i$ for which $a_c\in\bF_2^\Upsilon$ is nonzero, and we let $|a|=|\supp(a)|$.

  For $c_{i-1}\in C_{i-1}$ and $c_i\in C_i$, we write $c_{i-1}\triangleleft c_i$ if there exists $a_i\in\cC_i$ with $\supp(a_i)=\{c_i\}$ such that $c_{i-1}\in\supp(\partial(a_i))$. For $c_j\in C_j$ and $c_i\in C_i$, with $i<j$, we write $c_i\prec c_j$ if there exists a sequence $c_{i+1},\dots,c_{j-1}$ such that $c_i\triangleleft c_{i+1}\triangleleft\cdots\triangleleft c_{j-1}\triangleleft c_j$. This partial order provides the set $C=\bigsqcup_iC_i$ with the structure of a graded poset. For $i<j$, we extend the partial order notation to apply for a basis element $c_i\in C_i$ and a chain $c_j\in\cC_j=\bF_2^{C_j}$, so that $c_i\prec c_j$ if every $c'\in\supp(c_j)$ satisfies $c_i\prec c'$.

  % \comment{in this case, shouldn't  $c'\in\supp(\partial(\1_{c_j}))$ instead of $c'\in\supp(c_j)$? also doesn't that mean that $c'\in C_{j-1}$? I don't see how this generalizes for $i<j$}

  The \emph{$i$-systolic distance $d_i(\cC)$} and \emph{$i$-cosystolic distance $d^i(\cC)$} are defined as
  \begin{align*}
    d_i(\cC) &= \min_{z\in Z_i(\cC)\setminus B_i(\cC)}|z| \\
    d^i(\cC) &= \min_{z\in Z^i(\cC)\setminus B^i(\cC)}|z|.
  \end{align*}

  We say that $\cC_*$ has \emph{locality $w$} if for every $c\in C$ there are $\leq w$ basis elements $c'\in C$ with $c'\preceq c$ or $c'\succeq c$.
\end{definition}

The partial order in \Cref{def:chaincom} intuitively describes incidences in a chain complex $\cC$. That is, we first say that $c_{i-1}\triangleleft c_i$ for a pair of basis elements $c_{i-1},c_i$ at consecutive levels $i-1,i$ if the boundary matrix $\partial$ has a nonzero entry at position $(c_{i-1},c_i)$. We then extend this definition to a partial order on pairs of basis elements $c_i,c_j$ at arbitrary levels in the natural way, i.e.~by considering sequences of basis elements at consecutive levels that are ordered by $\triangleleft$.

We will typically define our chain complexes to be 1-based, unless explicitly specified otherwise. Our main use of larger $\Upsilon$ is to consider direct sums of 1-based complexes, as defined below.

\begin{definition}
  \label{def:directsum}
  For a 1-based chain complex $\cC_*$ and a positive integer (or finite set) $\Upsilon$, we define the \emph{$\Upsilon$-direct sum complex} $\cC^{\oplus\Upsilon}_*$ to be $\Upsilon$-based chain complex consisting of $\Upsilon$ disjoint copies of $\cC_*$, but where $\cC^{\oplus\Upsilon}_*$ has the same basis set $C=\bigsqcup_iC_i$ and partial order $\prec$ as $\cC$. That is, $\cC^{\oplus\Upsilon}_*$ is obtained by replacing each $\bF_2$-basis element of $\cC$ with a $\bF_2^\Upsilon$-basis element, so that $(\cC^{\oplus\Upsilon})_i=(\cC_i)^{\oplus\Upsilon}$ and $\partial^{\cC^{\oplus\Upsilon}}=(\partial^{\cC})^{\oplus\Upsilon}$.
\end{definition}

We will use direct sum complexes in \Cref{sec:decoder} in order to prove that when our decoder is given different inputs with similar supports, the resulting outputs also have similar supports. This property allows us to prove a notion of fault-tolerance in which the support of the output error of a gadget is determined by the support of the input error and fault, rather than on the specific error value (see \Cref{sec:faulttol} below).

We will specify encoding maps for our codes using chain maps:

\begin{definition}
  \label{def:chainmap}
  For $r$-dimensional chain complexes $\cA_*,\cB_*$, a \emph{chain map $\phi:\cA_*\rightarrow\cB_*$} is a sequence of maps $(\phi_i:\cA_i\rightarrow\cB_i)_{i\in[r]}$ satisfying $\partial_i^{\cB}\circ\phi_i=\phi_{i-1}\circ\partial_i^{\cA}$. Every chain map naturally induces a map on homology, which in a slight abuse of notation we also denote $\phi:H_*(\cA)\rightarrow H_*(\cB)$.
\end{definition}

We will construct our codes from tensor products (also sometimes called hypergraph products \cite{tillich_quantum_2014})) of 1-dimensional chain complexes:

\begin{definition}
  \label{def:cctensor}
  For $\Upsilon$-based chain complexes $\cA_*,\cB_*$ of respective dimensions $r_{\cA},r_{\cB}$, the \emph{tensor product $\cC_*=\cA_*\otimes\cB_*$} is the $\Upsilon^2$-based chain complex of dimension $r_{\cC}=r_{\cA}+r_{\cB}$ for which the $i$-chain basis is $C_i=\bigsqcup_{j\in\bZ}A_j\times B_{i-j}$, so that $\cC_i=\bigoplus_{j\in\bZ}\cA_j\otimes\cB_{i-j}$, and the boundary map is given by $\partial^{\cC}=\partial^{\cA}\otimes I+I\otimes\partial^{\cB}$. The tensor product of cochain complexes is defined analogously, so that $\cA^*\otimes\cB^*=(\cA_*\otimes\cB_*)^*$.
\end{definition}

The following formula describing how homology behaves under tensor products is well known.

\begin{proposition}[K\"{u}nneth formula]
  \label{prop:kunneth}
  For chain complexes $\cA_*,\cB_*$ with tensor product $\cC_*=\cA_*\otimes\cB_*$, then there is an isomorphism
  \begin{align*}
    H_i(\cC) &\cong \bigoplus_{j\in\bZ}H_j(\cA)\otimes H_{i-j}(\cB),
  \end{align*}
  which for $a\in Z_i(\cA)$, $b\in Z_i(\cB)$ is given by
  \begin{align*}
    a\otimes b+B_i(\cC) &\mapsfrom (a+B_i(\cA))\otimes(b+B_i(\cB)).
  \end{align*}
\end{proposition}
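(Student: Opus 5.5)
The plan is to prove the Künneth formula over the field $\bF_2$, where there are no Tor terms, by reducing to the case in which both complexes have zero boundary maps. The main tool is a choice of splittings that are compatible with $\partial$.

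\textbf{Step 1: split each complex.} For each $j$, we choose complements inside $\cA_j$:
\begin{itemize}
\item a subspace $\mathcal{H}^{\cA}_j\subseteq Z_j(\cA)$ with $Z_j(\cA)=B_j(\cA)\oplus\mathcal{H}^{\cA}_j$;
\item a subspace $\mathcal{K}^{\cA}_j$ with $\cA_j=Z_j(\cA)\oplus\mathcal{K}^{\cA}_j$.
\end{itemize}
This is possible because every subspace of a vector space has a complement. Then $\partial$ restricts to an isomorphism $\mathcal{K}^{\cA}_j\to B_{j-1}(\cA)$. As a result, $\cA_*$ decomposes as a direct sum of subcomplexes:
\begin{itemize}
\item the complex $\mathcal{H}^{\cA}_*$ with zero differential;
\item for each $j$, a two-term complex $\mathcal{K}^{\cA}_j\xrightarrow{\sim}B_{j-1}(\cA)$, which is acyclic.
\end{itemize}
We do the same for $\cB_*$.

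\textbf{Step 2: distribute the tensor product.} The tensor product of complexes distributes over direct sums, since $\partial^{\cC}=\partial^{\cA}\otimes I+I\otimes\partial^{\cB}$ acts blockwise. So $\cC_*$ is the direct sum of two kinds of pieces:
\begin{itemize}
\item $\mathcal{H}^{\cA}_*\otimes\mathcal{H}^{\cB}_*$, which has zero differential, so its homology equals the complex itself;
\item terms in which at least one factor is an acyclic two-term complex $E_*$.
\end{itemize}

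\textbf{Step 3: the mixed terms are acyclic.} This is the step needing care. Let $E_*$ be acyclic of the form $E_1\xrightarrow{\sim}E_0$ and let $D_*$ be any complex. Define a contracting homotopy $h=s\otimes I$, where $s:E_0\to E_1$ is the inverse isomorphism (and $s$ is zero elsewhere). Over $\bF_2$ no signs are needed, and one checks that $\partial h+h\partial=\id$ on $E_*\otimes D_*$. Hence $E_*\otimes D_*$ has zero homology. The symmetric case, with the acyclic factor on the right, is handled the same way.

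\textbf{Step 4: assemble.} Homology commutes with direct sums of complexes. Therefore
$$H_i(\cC)\cong\bigoplus_j\mathcal{H}^{\cA}_j\otimes\mathcal{H}^{\cB}_{i-j}\cong\bigoplus_jH_j(\cA)\otimes H_{i-j}(\cB),$$
using the isomorphisms $\mathcal{H}^{\cA}_j\cong H_j(\cA)$ and $\mathcal{H}^{\cB}_{i-j}\cong H_{i-j}(\cB)$ induced by the quotient maps.

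\textbf{Step 5: identify the explicit map.} It remains to show that the isomorphism is the stated map $(a+B)\otimes(b+B)\mapsto a\otimes b+B_i(\cC)$.
\begin{itemize}
\item \emph{Well-defined.} For cycles $a,b$, the element $a\otimes b$ is a cycle. If $a=\partial a'$, then $a\otimes b=\partial(a'\otimes b)$ because $\partial b=0$, and similarly when $b$ is a boundary.
\item \emph{Agrees with Steps 1–4.} The map is defined independently of the splittings. On representatives in $\mathcal{H}^{\cA}\otimes\mathcal{H}^{\cB}$ it coincides with the isomorphism built above. Every class has such a representative, since $a$ differs from an element of $\mathcal{H}^{\cA}_j$ by a boundary, and the same holds for $b$.
\end{itemize}
So the canonical map is the constructed isomorphism.
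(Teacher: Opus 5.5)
Your proof is correct. The paper gives no proof of this proposition and simply quotes it as well known, so your argument supplies a self-contained verification rather than a variant of an argument in the text. The usual textbook route runs differently. It takes the short exact sequence $0\to Z_*(\cA)\to\cA_*\to B_{*-1}(\cA)\to 0$, tensors with $\cB_*$, and reads off $H_*(\cC)$ from the long exact sequence. Exactness of $-\otimes\cB_*$ over a field removes the $\mathrm{Tor}$ term. Your splitting argument is more elementary and leans harder on working over a field. Over $\bF_2$ every complex is the direct sum of a zero-differential copy of its homology and contractible two-term pieces. The sign-free homotopy $s\otimes I$ then disposes of all mixed terms at once. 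The textbook route is the one that generalizes to PIDs, where it produces the $\mathrm{Tor}$ correction.

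Keep your Step~5 explicit. Because $[a]\otimes[b]\mapsto[a\otimes b]$ does not depend on the chosen splittings, it is natural in chain maps: $(f\otimes g)_*\circ\Phi=\Phi\circ(f_*\otimes g_*)$. This naturality is what the paper actually uses in \Cref{lem:prodenc}. There it concludes that $\Enc_{\cC}=(\Enc_*)^{\otimes r_X}\otimes(\Enc^*)^{\otimes r_Z}$ induces an isomorphism on cohomology, applied to cochain complexes, which is the same statement after regrading. Also note that the displayed formula in the statement has an indexing slip. It should read $a\in Z_j(\cA)$ and $b\in Z_{i-j}(\cB)$, with $B_j(\cA)$ and $B_{i-j}(\cB)$. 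Your proof implicitly uses this corrected reading, which is the only sensible one.
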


For a product complex $\cC_*=\cC^{(1)}_*\otimes\cdots\otimes\cC^{(r)}_*$ with basis elements $c=(c_1,\dots,c_r),\;c'=(c_1',\dots,c_r')\in C^{(1)}\times\cdots\times C^{(r)}=C$, the graded poset structure of $C$ by definition has $c\preceq c'$ iff it holds for every $i\in[r]$ that $c_i\preceq c'_i$.

\begin{definition}
  \label{def:cctoCSS}
  The \emph{quantum CSS code associated to level $i$ of a cochain complex $\cC^*$} is given by $Q=(Q_X=\ker(\partial_i),\;Q_Z=\ker(\delta_i))$.
\end{definition}

The quantum code $Q$ in Definition~\ref{def:cctoCSS} by definition has length $n=\dim(\cC^i)$, dimension $k=\dim(H^i(\cC))$, and distance $d=\min\{d^i(\cC),d_i(\cC)\}$. Furthermore, $Q$ has naturally associated $X,Z$ parity-check matrices $\partial_i,\delta_i$ respectively, whose locality must be at most the locality of  $\cC^*$.

\subsection{Circuits}
In this section, we describe the model of a quantum circuit that we use in this paper. We will ultimately consider quantum circuits with noisy qubits alongside classical circuits with noisless bits. The entire state of such a system is a quantum state, where the classical bits are simply qubits that lie in a classical state, as defined below.

\begin{definition}
  Let $N=\Cl{N}\sqcup\Qu{N}$ be a set. We say an operator $\rho\in\bC^{2^N\times 2^N}$ is \emph{classical on bits $\Cl{N}$} if $\rho$ can be expressed in the form $\rho=\sum_{x\in\bF_2^{\Cl{N}}}\ket{x}\bra{x}\otimes\rho_x$ for some operators $\rho_x\in\bC^{2^{\Qu{N}}\times 2^{\Qu{N}}}$.
\end{definition}

We are now ready to define our circuit model.

\begin{definition}
  \label{def:circuit}
  For a set $N$ and for $T\in\bN$, a \emph{quantum circuit} $\cR=(R_1,\dots,R_T)$ using space $N$ and time $T$ consists of superoperators $R_t:\bC^{2^{N_{t-1}}\times 2^{N_{t-1}}}\rightarrow\bC^{2^{N_t}\times 2^{N_t}}$ for some
  % \vnote{disjoint?} \lnote{no they can overlap; we use the same set $N$ of qubits at every timestep, though $N_t$ only contains the subset of the $N$ qubits that are ``active'' at timestep $t$ (meaning they have been initialized, or were input qubits that have not been terminated)
  subsets $N_0,\dots,N_T\subseteq N$. We let $\cR(\cdot):\bC^{2^{N_0}\times 2^{N_0}}\rightarrow\bC^{2^{N_T}\times 2^{N_T}}$ denote the superoperator
  \begin{equation*}
    \cR(\rho) = R_T\circ\cdots\circ R_1(\rho).
  \end{equation*}

  We assume the total space $N=\Cl{N}\sqcup\Qu{N}$ is partitioned into \emph{bits} (i.e.~\emph{classical space}) $\Cl{N}$ and \emph{qubits} (i.e.~\emph{quantum space}) $\Qu{N}$, and for $t\in[T]$ we similarly let $\Cl{N_t}=N_t\cap\Cl{N}$ and $\Qu{N_t}=N_t\cap\Qu{N}$. We require that $R_t$ map every $\rho\in\bC^{2^{N_{t-1}}\times 2^{N_{t-1}}}$ that is classical on bits $\Cl{N_{t-1}}$ to some $R_t(\rho)$ that is classical on bits $\Cl{N_t}$. We call an operator $\rho\in\bC^{2^{N_0}\times 2^{N_0}}$ a \emph{valid input} to the circuit $\cR$ if $\rho$ is classical on bits $\Cl{N_0}$. We will only ever consider valid inputs to circuits.

  We call $\Cl{N_t}$ and $\Qu{N_t}$ the \emph{active bits and qubits at time $t$}, respectively, and we call $\Cl{N_0},\Cl{N_T}$ the \emph{input, output bits} and $\Qu{N_0},\Qu{N_T}$ the \emph{input, output qubits} of $\cR$, respectively.
\end{definition}

Naturally, we want to construct circuits out of gates, which are simply circuits consisting of a single timestep:

\begin{definition}
  A \emph{gate} is a circuit using time $T=1$.
\end{definition}

\begin{definition}
  \label{eq:def:gateset}
  We say a circuit $\cR$ (\Cref{def:circuit}) uses a \emph{gate set $\cG$} if each $R_t$ can be decomposed into a tensor product of gates $G_{t,i}\in\cG$ acting on disjoint sets of (qu)bits. That is, for each $t\in[T]$, there exists $b_t\in\bN$ and gates $G_{t,1},\dots,G_{t,b_t}\in\cG$ such that $R_t=\bigotimes_{i\in[b_t]}G_{t,i}$ for some assignment of the (qu)bits acted upon by the gates $G_{t,i}$ to disjoint sets of (qu)bits in $N$, such that the disjoint union of the input (resp.~output) (qu)bits of the $G_{t,i}$ equals the set of (qu)bits in $N_{t-1}$ (resp.~$N_t$).
\end{definition}

Below we define the standard gates that we will use to construct our circuits.

\begin{definition}
  \label{def:gates}
  We define the following \emph{unitary gates}, which apply $\rho\mapsto U\rho U^\dagger$ for some unitary $U\in\bC^{2^n\times 2^n}$ acting on some number $n$ of qubits:
  \begin{itemize}
  \item The single-qubit Pauli gates
    \begin{equation*}
      \gI=\begin{pmatrix}1&0\\0&1\end{pmatrix}, \hspace{1em} \gX=\begin{pmatrix}0&1\\1&0\end{pmatrix}, \hspace{1em} \gY=\begin{pmatrix}0&-i\\i&0\end{pmatrix}, \hspace{1em} \gZ=\begin{pmatrix}1&0\\0&-1\end{pmatrix}.
    \end{equation*}
  \item For $\alpha\in\{\gI,\gX,\gY,\gZ\}$ and $x\in\bF_2^n$, the $n$-qubit Pauli gate $\alpha^x=\bigotimes_{i\in[n]}\alpha^i$.
  \item The single-qubit Hadamard gate $\gH=\frac{1}{\sqrt{2}}\begin{pmatrix}1&1\\1&-1\end{pmatrix}$.
  \item The two-qubit gate $\gCNOT:\ket{x_1,x_2}\mapsto\ket{x_1,x_1+x_2}$.
  \end{itemize}

   We also define the following non-unitary gates:
  \begin{itemize}
  \item The $\gX$-initialization gate $\gInitX:1\mapsto\ket{+}\bra{+}$ with zero input qubits, and one output qubit initialized to $\ket{+}$.
  \item The $\gZ$-initialization gate $\gInitZ:1\mapsto\ket{0}\bra{0}$ with zero input qubits, and one output qubit (or bit) initialized to $\ket{0}$.
  \item The termination gate $\gTerm:\rho\mapsto\tr(\rho)$ with one input qubit (or bit) and zero output qubits.
  \item The $\gX$-measurement gate $\gMX$ that takes as input one qubit (and no bits), and outputs one bit (and no qubits) containing the outcome of measuring the input qubit in the $\gX$ basis. That is,
    \begin{equation*}
      \gMX(\rho) = \bra{+}\rho\ket{+}\cdot\ket{0}\bra{0} + \bra{-}\rho\ket{-}\cdot\ket{1}\bra{1}.
    \end{equation*}
    % \vnote{Should be $\ket{+}\bra{+}$ above?} \lnote{In this case we want the output bit to be classical (added a note below)}
    For $\xi\in\bF_2$, we also define the \emph{$\xi$-postselected} $\gX$-measurement gate $\gMX[\xi]$ to also take as input one qubit and output one bit, but where
    \begin{equation*}
      \gMX[\xi](\rho) = \bra{\xi}\gH\rho\gH\ket{\xi}\cdot\ket{\xi}\bra{\xi}.
    \end{equation*}
  \item The $\gZ$-measurement gate $\gMZ$ that takes as input one qubit (and no bits), and outputs one bit (and no qubits) containing the outcome of measuring the input qubit in the $\gZ$ basis. That is,
    \begin{equation*}
      \gMZ(\rho) = \bra{0}\rho\ket{0}\cdot\ket{0}\bra{0} + \bra{1}\rho\ket{1}\cdot\ket{1}\bra{1}.
    \end{equation*}
    For $\xi\in\bF_2$, we also define the \emph{$\xi$-postselected} $\gX$-measurement gate $\gMX[\xi]$ to also take as input one qubit and output one bit, but where
    \begin{equation*}
      \gMX[\xi](\rho) = \bra{\xi}\rho\ket{\xi}\cdot\ket{\xi}\bra{\xi}.
    \end{equation*}
  \item For an arbitrary gate $G$ with some number $n$ of input and output (qu)bits, the $(n+1)$-(qu)bit classically-controlled gate $\gCt{G}$ that acts on the $n$ (qu)bits of $G$ along with one additional bit, and applies $G$ iff the additional bit is $1$. That is, for $x\in\bF_2$ and $\rho\in\bC^{2^n\times 2^n}$,
    \begin{equation*}
      \gCt{G}(\ket{x}\bra{x}\otimes\rho) = \begin{cases}
        \ket{x}\bra{x}\otimes\rho,&x=0 \\
        \ket{x}\bra{x}\otimes G(\rho),&x=1.
      \end{cases}
    \end{equation*}
    For $\zeta\in\bF_2$, we also define the \emph{$\zeta$-reweighted} classically-controlled gate $\gCt{G}[\zeta]$ to also act on the $n$ (qu)bits of $G$ along with one additional bit, but by
    \begin{equation*}
      \gCt{G}[\zeta](\ket{x}\bra{x}\otimes\rho) = (-1)^{\zeta\cdot x}\cdot\ket{x}\bra{x}\otimes G^x(\rho) = \begin{cases}
        \ket{x}\bra{x}\otimes\rho,&x=0 \\
        (-1)^\zeta\cdot\ket{x}\bra{x}\otimes G(\rho),&x=1.
      \end{cases}
    \end{equation*}
  \item For an arbitrary (and in particular, not necessarily linear) function $f:\bF_2^{n_0}\rightarrow\bF_2^{n_1}$, the classical function gate $\gCF_f$ that takes as input $n_0$ bits, and outputs $n_1$ bits containing the output of $f$ applied to the input. That is, for $x\in\bF_2^{n_0}$,
    \begin{equation*}
\gCF_f(\ket{x}\bra{x}) = \ket{f(x)}\bra{f(x)}.
    \end{equation*}
    We will only use classical function gates for which the function $f$ can be computed in $\poly(n_0+n_1)$ time. We write $\gCF_*$ to denote the set of all classical function gates, across all classical functions $f$.
  \end{itemize}
\end{definition}

Note that the $\gX$-measurement gate returns the measurement outcome in the $\gZ$-basis, as it outputs a classical bit, which we require to be a mixture of $\gZ$-basis states $\ket{0}\bra{0}$ and $\ket{1}\bra{1}$.

When describing circuits, we will often specify the output (qu)bits as the ``returned'' state; it is implicitly assumed that all other qubits that are not returned are terminated.

As described in \Cref{sec:faulttol}, we consider reweightings and postselections for technical reasons when defining and proving fault-tolerance. Postselections, as well as linear combinations of different reweightings, can yield non-CPTP superoperators, which are not physically implementable. We emphasize that our fault-tolerant circuits only apply CPTP (and hence physically implementable) gates. However, we prove fault-tolerance under more general noise that is allowed to be non-CPTP. More generally, even under CPTP noise, non-CPTP superoperators still arise in various intermediate decompositions throughout our analysis. As a basic example, we decompose some measurement gates $\gMX=\gMX[0]+\gMX[1]$ (and likewise $\gMZ=\gMZ[0]+\gMZ[1]$), and consider each postselected branch separately.

\subsection{Fault-Tolerance}
\label{sec:faulttol}
In this section, we describe the model of fault-tolerance that we use in this paper.
We generally follow the type of formalism used in \cite{nguyen_quantum_2025,he_composable_2025}, though we present definitions that are suitably tailored for our purposes.

We begin by defining faults, which cause errors during the execution of a circuit.

\begin{definition}
  \label{def:fault}
  Let $\cR=(R_1,\dots,R_T)$ be a circuit with each $R_t:\bC^{2^{N_{t-1}}\times 2^{N_{t-1}}}\rightarrow\bC^{2^{N_t}\times 2^{N_t}}$. A \emph{fault} on the circuit $\cR$ is a sequence $\cF=(F_1,\dots,F_T)$ of superoperators $F_t:\bC^{2^{N_t}\times 2^{N_t}}\rightarrow\bC^{2^{N_t}\times 2^{N_t}}$. We define $\supp(\cF)\subseteq N_1\sqcup\cdots\sqcup N_T\subseteq N^{\sqcup T}=N\times[T]$ by $\supp(\cF)=\supp(F_1)\sqcup\cdots\sqcup\supp(F_T)$.
  A \emph{Pauli fault} $\cF$ is a fault in which each $F_t$ is a Pauli superoperator. % meaning that there exist $f_{t,X},f_{t,X}',f_{t,Z},f_{t,Z}'\in\bF_2^{N_t}$ such that $F_t(\tau)=\gX^{f_{t,X}}\gZ^{f_{t,Z}}\tau\gZ^{f_{t,Z}'}\gX^{f_{t,X}'}$.
  % For a Pauli fault, we define the \emph{$X$-support} $\supp_X(\cF)=\supp(f_{t,X})\cup\supp(f_{t,X}')$ and the \emph{$Z$-support} $\supp_Z(\cF)=\supp(f_{t,Z})\cup\supp(f_{t,Z}')$.

  The \emph{$\cF$-corrupted circuit} $\cR[\cF]$ is defined by
  \begin{equation*}
    \cR[\cF] = (R_1,F_1,R_2,F_2,\dots,R_T,F_T).
  \end{equation*}

  In this paper, we require that faults act trivially on the classical bits, meaning that each $F_t$ acts as the identity on bits $\Cl{N_t}$. Hence we can view $F_t$ as a superoperator $F_t:\bC^{2^{\Qu{N_t}}\times 2^{\Qu{N_t}}}\rightarrow\bC^{2^{\Qu{N_t}}\times 2^{\Qu{N_t}}}$.
\end{definition}

We now also define reweightings, which are another type of corruption to a circuit.

\begin{definition}
  Let $\cR=(R_1,\dots,R_T)$ be a circuit using a gate set $\cG$. Let $n_{\mathsf{C}}$ denote the number of classically-controlled gates in $\cR$. A \emph{reweighting} for $\cR$ is a vector $\zeta=(\zeta_1,\dots,\zeta_{n_{\mathsf{C}}})\in\bF_2^{n_{\mathsf{C}}}$. The \emph{$\zeta$-reweighted circuit} $\cR[\zeta]$ is defined by replacing the $i$th classically-controlled gate $\gCt{G}$ in $\cR$ with its $\zeta_i$-reweighted version $\gCt{G}[\zeta_i]$ for every $i\in[n_{\mathsf{C}}]$, and otherwise leaving $\cR$ unchanged.

  It will always be clear from context whether an argument to $\cR[\cdot]$ is a fault or reweighting. When both are present, we apply the fault last, so that $\cR[\cF,\zeta]=(\cR[\zeta])[\cF]$. We also let $\cR[*]=\{\cR[\zeta]:\zeta\in\bF_2^{n_{\mathsf{C}}}\}$ denote the set of all reweighted versions of $\cR$.
\end{definition}

Reweighting in logical circuits that include classically-controlled gates can be induced when non-physical (i.e.~non-channel superoperator) noise is applied to an associated fault-tolerant physical circuit. For example, consider a state $\sigma\in\bC^{4\times 4}$ consisting of one classical bit and one qubit, so that
\begin{equation*}
  \sigma = \ket{0}\bra{0}\otimes\sigma_0 + \ket{1}\bra{1}\otimes\sigma_1
\end{equation*}
for some $\sigma_0,\sigma_1\in\bC^{2\times 2}$.
Assume that a non-physical Pauli $\gZ$ error occurs on the qubit, yielding the state $\sigma(I\otimes\gZ)$. If we apply the classically-controlled-$\gX$ unitary $\gCt{X}$, we obtain the state
\begin{equation*}
  \gCt{\gX}\sigma(I\otimes\gZ)\gCt{\gX} = \gCt{\gX}\sigma\gCt{\gX}(\gZ\otimes\gZ).
\end{equation*}
Thus applying $\gCt{X}$ caused the $\gZ$ error to propagate to the classical control bit! As defined in \Cref{def:gates}, we call this $\gZ$ error on the classical-control bit a \emph{reweighting}. Reweightings are problematic because we would like to assume that our circuit's classical bits are never corrupted.

Note that we cannot avoid reweightings by requiring our faults to be physical (i.e.~CPTP) superoperators. Indeed, we will decompose all faults into linear combinations of Pauli faults. Even a CPTP superoperator can have a Pauli decomposition that includes non-CPTP Pauli terms.

Furthermore, at first glance it seems we only need to consider reweightings on the underlying logical circuit bits, but not on the physical classically-controlled gates we apply. However, in Appendix~\ref{sec:concat} we will concatenate fault-tolerance schemes, specifically by simulating a physical qubit in an outer scheme using a logical qubit of an inner scheme. Hence logical reweightings of the inner scheme become physical reweightings of the outer scheme.

To address these issues, our notion of fault-tolerance in \Cref{def:faulttol} below requires the correct output under arbitrary reweightings on the physical circuit. However, we will typically prove fault-tolerance with respect to reweightings of the logical circuit, meaning that a ``correct output'' may include the effect of an arbitrary reweighting (or linear combination of reweightings) of our logical circuit. We emphasize that logical circuits that are entirely quantum, i.e.~perform all computation on qubits without using logical classically-controlled gates, are not susceptible to such logical reweighting, even if our fault-tolerant physical circuit uses physical classically-controlled gates that experience reweighting. Hence if we restrict attention to entirely quantum logical circuits (which can still simulate classical control by replacing bits with qubits), then our scheme is robust to physical reweightings without inducing logical reweightings.

Our notion of reweightings is adapted from \cite{breuckmann_fault-tolerant_2026}, who observed a related issue arising from classical control with non-physical faults. In an unfortunate clash of notation, \cite{breuckmann_fault-tolerant_2026} called their notion of a reweighting, which is similar to ours, a ``postselection.''

% We address this issue by allowing for reweightings, which by \Cref{def:gates} can be viewed as $\gZ$ errors on classical bits that occur when a classically-controlled gate is performed.

% Therefore when we compose fault-tolerance schemes, specifically by simulating a physical qubit in an outer scheme using a logical qubit of an inner scheme, we must consider reweighting on physical qubits of the outer scheme, as they correspond to logical qubits of the inner scheme.
% Note that logical circuits that are entirely quantum, i.e.~perform all computation on qubits without using logical classically-controlled gates, are not susceptible to logical reweighting, even if our fault-tolerant physical circuit uses physical classically-controlled gates that experience reweighting.

In contrast, we define postselections to be the following notion, which simply postselect on the outcome of a measurement gate.

\begin{definition}
  Let $\cR=(R_1,\dots,R_T)$ be a circuit using a gate set $\cG$. Fix some subset $\ps$ of the set of measurement gates (i.e.~gates $\gMX$ or $\gMZ$) in $\cR$. We call $\ps$ the set of \emph{postselectable gates} in $\cR$. Then a \emph{postselection} for $\cR$ is a vector $\xi\in\bF_2^{\ps}$. The \emph{$\xi$-postselected circuit} $\cR[\xi]$ is defined by replacing the measurement gate labeled $i$ in $\cR$ with its $\xi_i$-postselected version $\gMX[\xi_i]$ or $\gMZ[\xi_i]$ for every $i\in\ps$, and otherwise leaving $\cR$ unchanged.

  As with faults and reweightings, it will always be clear from context when an argument to $\cR[\cdot]$ is a postselection. When multiple are present, we apply the fault last, so $\cR[\cF,\zeta,\xi]=(\cR[\zeta,\xi])[\cF]$.
\end{definition}

We will typically specify postselections for syndrome measurement outcomes in our error-correction gadget. As syndrome measurements do not affect the code's logical qubits, the error-correction properties are preserved under such postselections. Rather, the postselections here allow us to determine the precise set inside which the residual error on the gadget's output state is contained. Without postselections, the residual error could instead be a linear combination of terms supported on different sets.

We now define the notion of \emph{bad sets}, which correspond to sets of qubits on which an error (on the entire bad set) could cause an uncorrectable corruption.

\begin{definition}
  For a set $N$ and a family of subsets $\cE\subseteq 2^N$ called \emph{bad sets}, we say that a set $S\subseteq N$ is \emph{$\cE$-avoiding} if no element of $\cE$ is a subset of $S$, that is, $E\not\subseteq S$ for every $E\in\cE$.

  % We extend this definition to Pauli faults in the natural way, so that a Pauli fault $\cF$ on a circuit using qubits $\Qu{N}\subseteq N$ and time $\leq T$ is \emph{$\cE$-avoiding} for $\cE\subseteq 2^{N\times[T]}$ if $\supp(\cF)\subseteq\Qu{N}\times[T]\subseteq N\times[T]$ is $\cE$-avoiding.

  % We then say that a fault $\cF=(F_1,\dots,F_T)$ is $\cE$-avoiding if each $F_t$ can be expressed as a linear combination $F_t=\sum_{i=1}^{j_t}\alpha_{t,i}F_{t,i}$ of Pauli superoperators $F_{t,1},\dots,F_{t,j_t}$ for coefficients $\alpha_{t,i}\in\bC$ such that the Pauli fault $(F_{1,i_1},\dots,F_{T,i_T})$ is $\cE$-avoiding for every tuple $(i_1,\dots,i_T)\in[j_1]\times\cdots\times[j_T]$.

  For families $\cE_1\subseteq 2^{N_1}$, $\cE_2\subseteq 2^{N_2}$, we let $\cE_1\sqcup\cE_2\subseteq 2^{N_1\sqcup N_2}$ denote the disjoint union of~$\cE_1$ and~$\cE_2$ on separate sets of underlying elements. Therefore for $\cE\subseteq 2^N$ and $T\in\bN$, we write $\cE^{\sqcup T}\subseteq 2^{N\times[T]}$ to denote $\cE\sqcup\cdots\sqcup\cE$ (with $T$ copies of $\cE$).
\end{definition}

% \begin{remark}
%   Because every superoperator can be decomposed into a linear combination of Pauli errors, an equivalent characterization of a $\cE$-avoiding fault is a fault $\cF$ in which each $F_t$ can be expressed as a linear combination $F_t=\sum_{i=1}^{j_t}\alpha_{t,i}F_{t,i}$ for some (possibly non-Pauli) superoperators $F_{t,i}$, such that $\supp(F_{1,i})\sqcup\cdots\sqcup\supp(F_{T,i_T})$ is $\cE$-avoiding for every tuple $(i_1,\dots,i_T)\in[j_1]\times\cdots\times[j_T]$.
% \end{remark}

We study fault-tolerance schemes with quantum inputs and quantum outputs. As the goal of injection is to fault-tolerantly output a logical state encoded in a code, our fault-tolerant schemes must specify quantum codes with associated encoding maps and bad sets of errors for the input and output states. Below, we define the notion of a \emph{decorated code} to capture this information.

\begin{definition}
  \label{def:deccode}
  Let $Q=(Q_X,Q_Z)$ be a $[[n,k]]$ quantum code with encoding isometry $\Enc:\bC^{2^k}\rightarrow\bC^{2^n}$. Let $\cE_X,\cE_Z\subseteq 2^{[n]}$ be families of bad sets, and let $\cE=(\cE_X,\cE_Z)$. We say the data $D=(Q,\Enc,\cE)$ forms a \emph{decorated quantum CSS code} (or simply \emph{decorated code}). We refer to $\cE$ as a family of bad sets for $Q$. If $\cE_X=\cE_Z$, as a shorthand we may write $D=(Q,\Enc,\cE_Z)$.

  When $Q=\emptyset$ is the trivial code with no physical or logical qubits, we let $D=\emptyset$ denote the associated trivial decorated code.

  For $[[n_i,k_i]]$ decorated codes $D_i=(Q_i,\Enc_i,\cE_i)$ for $i\in\{1,2\}$, we let
  \begin{equation*}
    D_1\sqcup D_2 = (Q_1\sqcup Q_2=(Q_{1,X}\oplus Q_{2,X},Q_{1,Z}\oplus Q_{2,Z}),\; \Enc_1\oplus\Enc_2,\; \cE_1\sqcup\cE_2)
  \end{equation*}
  denote the $[[n_1+n_2,k_1+k_2]]$ decorated code given by the combining (i.e.~appending) the disjoint codes $Q_1$ and $Q_2$.

  For an operator $\rho\in\bC^{2^k\times 2^k}$, we say an operator $\sigma\in\bC^{2^n\times 2^n}$ is a \emph{Pauli $\cE$-deviation of $\Enc(\rho)$} if there exists a $n$-qubit Pauli superoperator $F$ given for some $f_X,f_X',f_Z,f_Z'\in\bF_2^n$ by $F(\sigma')=\gX^{f_X}\gZ^{f_Z}\sigma'\gZ^{f_Z'}\gX^{f_X'}$, such that $\sigma\propto F\circ\Enc(\rho)$, and such that $\supp(f_X)\cup\supp(f_X')$ is $\cE_Z$-avoiding and $\supp(f_Z)\cup\supp(f_Z')$ is $\cE_X$-avoiding.

  We extend this definition to sets $S\subseteq\bC^{2^k\times 2^k}$ of operators by letting $\sigma$ be a \emph{Pauli $\cE$-deviation of $\Enc(S)$} if there exists some $\rho\in S$ such that $\sigma$ is a Pauli $\cE$-deviation of $\Enc(\rho)$. In particular, we will often fix a set $\bar{\cO}$ of superoperators, and then take $S=\bar{\cO}(\rho_0):=\{\bar{O}(\rho_0):\bar{O}\in\bar{\cO}\}$.

  We say $\sigma$ is a \emph{$\cE$-deviation of $\Enc(\rho)$ (resp.~$\Enc(S)$)} if $\sigma$ can be expressed as a linear combination $\sigma=\sum_{i=1}^j\sigma_i$ of Pauli $\cE$-deviations $\sigma_1,\dots,\sigma_j$ of $\Enc(\rho)$ (resp.~$\Enc(S)$).

  Similarly, we say a pair of sets $E=(E_X\subseteq[n],\; E_Z\subseteq[n])$ is \emph{$\cE$-avoiding} if $E_X$ is $\cE_X$-avoiding and $E_Z$ is $\cE_Z$-avoiding, and we let $\comp{E}=(\comp{E_X},\comp{E_Z})$ denote the complement sets of $(E_X,E_Z)$. Viewing $\comp{E_X}$ (resp.~$\comp{E}_Z$) as a family of singleton sets $\comp{E_X}=\{\{c\}:c\in\comp{E_X}\}$ (resp.~$\comp{E_Z}=\{\{c\}:c\in\comp{E_Z}\}$), then the notions of deviations above extend to $\comp{E}$-deviations. For instance, $\sigma$ is a Pauli $\comp{E}$-deviation of $\Enc(\rho)$ if there exists a $n$-qubit Pauli superoperator $F$ given for some $f_X,f_X',f_Z,f_Z'\in\bF_2^n$ by $F(\sigma')=\gX^{f_X}\gZ^{f_Z}\sigma'\gZ^{f_Z'}\gX^{f_X'}$, such that $\sigma\propto F\circ\Enc(\rho)$, and such that $\supp(f_X)\cup\supp(f_X')\subseteq E_Z$ and $\supp(f_Z)\cup\supp(f_Z')\subseteq E_X$.
\end{definition}

When specifying families $\cE_\alpha$ of bad sets for $\alpha\in\{X,Z\}$, such as for a length-$n$ code, it will sometimes be convenient to define $\cE_\alpha\subseteq 2^{N'}$ for some superset $N'$ of $[n]$. That is, some bad sets $E\in\cE_\alpha$ can contain elements not corresponding to any code component. Such a bad set $E$ by definition cannot lie inside $[n]$, and hence $E$ has no effect on the definition of a $\cE$-deviation.

We are now ready to present our main definition of a fault-tolerant gadget. Such a gadget implements some superoperator $\bar{O}$ on a logical quantum input state encoded in the decorated code $D_{\mathrm{in}}$, and returns the resulting output state encoded in the decorated code $D_{\mathrm{out}}$. We also allow the gadget to have classical inputs and outputs, which we leave unencoded as we do not allow errors on classical bits in this paper.

\begin{definition}
  \label{def:faulttol}
  For $\alpha\in\{\mathrm{in},\mathrm{out}\}$, fix sets $K_\alpha=\Cl{K_\alpha}\sqcup\Qu{K_\alpha}$ and $N_\alpha=\Cl{N_\alpha}\sqcup\Qu{N_\alpha}$ with $\Cl{K_\alpha}=\Cl{N_\alpha}$. Let $D_\alpha=(C_\alpha,\Enc_\alpha,\cE_\alpha)$ be a decorated $[[\Qu{N_\alpha},\Qu{K_\alpha}]]$ code. Let $\bar{\cO}$ be a set of superoperators $\bar{O}:\bC^{2^{K_{\mathrm{in}}}\times 2^{K_{\mathrm{in}}}}\rightarrow\bC^{2^{K_{\mathrm{out}}}\times 2^{K_{\mathrm{out}}}}$ such that for every input $\rho$ that is classical on bits $\Cl{K_{\mathrm{in}}}$, then $\bar{\cO}(\rho)$ is classical on bits $\Cl{K_{\mathrm{out}}}$. Let $\cR$ be a quantum circuit using space $N$ and time $T$ with input (qu)bits $N_{\mathrm{in}}$ and output (qu)bits $N_{\mathrm{out}}$. Let $\cE_{\mathrm{run}}\subseteq 2^{N\times[T]}$ be a family of bad sets for $\cR$. Let $\ps$ be a set of postselectable gates for $\cR$.

  \begin{itemize}
  \item We say that data $\bfG=(\cR,\cE_{\mathrm{run}},D_{\mathrm{in}},D_{\mathrm{out}},\ps)$ provides a \emph{fault-tolerant gadget for $\bar{\cO}$} if for every $\cE_{\mathrm{in}}$-avoiding pair of sets $E_{\mathrm{in}}=(E_{\mathrm{in},X}\subseteq\Qu{N_{\mathrm{in}}},\; E_{\mathrm{in},Z}\subseteq\Qu{N_{\mathrm{in}}})$, every $\cE_{\mathrm{run}}$-avoiding set $E_{\mathrm{run}}\subseteq N\times[T]$, and every postselection $\xi\in\bF_2^{\ps}$,
    there exists a $\cE_{\mathrm{out}}$-avoiding pair of sets $E_{\mathrm{out}}=(E_{\mathrm{out},X}\subseteq\Qu{N_{\mathrm{out}}},\; E_{\mathrm{out},Z}\subseteq\Qu{N_{\mathrm{out}}})$
    such that for every finite set $\Rf{K}$, every operator $\rho\in\bC^{2^{K_{\mathrm{in}}\sqcup\Rf{K}}\times 2^{K_{\mathrm{in}}\sqcup\Rf{K}}}$ that is classical on bits $\Cl{K_{\mathrm{in}}}$, every $\comp{E_{\mathrm{in}}}$-deviation $\sigma\in\bC^{2^{N_{\mathrm{in}}\sqcup\Rf{K}}\times 2^{N_{\mathrm{in}}\sqcup\Rf{K}}}$ of $\Enc_{\mathrm{in}}(\rho)$, every $\comp{E_{\mathrm{run}}}$-avoiding fault $\cF$ and reweighting $\zeta$ for $\cR$, then $\cR[\cF,\zeta,\xi](\sigma)\in\bC^{2^{N_{\mathrm{out}}\sqcup\Rf{K}}\times 2^{N_{\mathrm{out}}\sqcup\Rf{K}}}$ is a $\comp{E_{\mathrm{out}}}$-deviation of $\Enc_{\mathrm{out}}\circ\bar{\cO}(\rho)$.
  \item We say that the fault-tolerant gadget $\bfG$ is \emph{mending} if for every $\cE_{\mathrm{run}}$-avoiding set $E_{\mathrm{run}}\subseteq N\times[T]$, and every postselection $\xi\in\bF_2^{\ps}$,
    there exists a $\cE_{\mathrm{out}}$-avoiding pair of sets $E_{\mathrm{out}}=(E_{\mathrm{out},X}\subseteq\Qu{N_{\mathrm{out}}},\; E_{\mathrm{out},Z}\subseteq\Qu{N_{\mathrm{out}}})$ such that for every finite set $\Rf{K}$, every operator $\rho\in\bC^{2^{K_{\mathrm{in}}\sqcup\Rf{K}}\times 2^{K_{\mathrm{in}}\sqcup\Rf{K}}}$ that is classical on bits $\Cl{K_{\mathrm{in}}}$, every $\Cl{K}\sqcup\Rf{K}$-deviation $\sigma\in\bC^{2^{N_{\mathrm{in}}\sqcup\Rf{K}}\times 2^{N_{\mathrm{in}}\sqcup\Rf{K}}}$ of $\Enc_{\mathrm{in}}(\rho)$ and every $\comp{E_{\mathrm{run}}}$-avoiding fault $\cF$ and reweighting $\zeta$ for $\cR$, then $\cR[\cF,\zeta,\xi](\sigma)\in\bC^{2^{N_{\mathrm{out}}\sqcup\Rf{K}}\times 2^{N_{\mathrm{out}}\sqcup\Rf{K}}}$ is of the form $\cR[\cF,\zeta,\xi](\sigma)=\sum_{j=1}^j\sigma_i$, where each $\sigma_i$ is a $\comp{E_{\mathrm{out}}}$-deviation of $\Enc_{\mathrm{out}}\circ\bar{\cO}\circ L_i(\rho)$ for some superoperator $L_i:\bC^{2^{\Qu{K_{\mathrm{in}}}}\times 2^{\Qu{K_{\mathrm{in}}}}}\rightarrow\bC^{2^{\Qu{K_{\mathrm{in}}}}\times 2^{\Qu{K_{\mathrm{in}}}}}$.
  \end{itemize}

  We say that the fault-tolerant (resp.~mending) gadget $\bfG$ is \emph{refreshing} if the choice of $E_{\mathrm{out}}$ in the first (resp.~both) item(s) above does not depend on $E_{\mathrm{in}}$, but rather only depends on $E_{\mathrm{run}}$ and $\xi$.

\end{definition}

At a high level, \Cref{def:faulttol} captures the the following intuitive notion: a gadget is fault-tolerant if for every input with a sufficiently small corruption, the execution under a sufficiently small fault produces an output state with a sufficiently small corruption. Most of the subtlety then lies in formalizing appropriate notions of ``sufficiently small.'' We provide some more context around this definition below.

We begin with a couple remarks on the notation in \Cref{def:faulttol}. Every application of $\Enc_\alpha$ to a state on (qu)bits $K_\alpha\sqcup\Rf{K}$ acts nontrivially only on qubits in $\Qu{K_\alpha}$, and implicitly applies the identity to bits $\Cl{K_\alpha}$ and qubits $\Rf{K}$. Similarly, the noisy reweighted circuit superoperator $\cR[\cF,\zeta](\cdot)$ implicitly applies the identity to qubits $\Rf{K}$. Furthermore, recall that we define faults $\cF$ to have support $\supp(\cF)\subseteq\Qu{N}\times[T]$, and hence have no support inside $\Cl{N}\times[T]$, even if we do not explicitly include elements of $\Cl{N}\times[T]$ in $\cE_{\mathrm{run}}$.

The additional $\Rf{K}$-qubit system in \Cref{def:faulttol} is called a \emph{reference system} in \cite{nguyen_quantum_2025}, and is needed to ensure that fault-tolerance is preserved under parallel composition of gadgets (see \Cref{lem:parcomp}).

Meanwhile, the \emph{mending} property in \Cref{def:faulttol} is called the \emph{friendly} property in \cite{nguyen_quantum_2025,he_composable_2025}, and is used to ensure that states corrupted by bad errors (i.e.~errors whose support contains bad sets) can be returned to the code space, while possibly incurring a logical error. Such bad errors need to be considered when composing fault-tolerance schemes in a simulative manner, meaning that an instance of a fault-tolerance scheme is used to simulate physical qubits in another fault-tolerance scheme.

We emphasize that in \Cref{def:faulttol}, we require the support of the output error of a gadget to lie inside a set $E_{\mathrm{out}}$ that depends only on the \emph{supports} $E_{\mathrm{in}}$ and $E_{\mathrm{run}}$ of the input error and fault respectively (along with the postselection $\xi$), rather than on the precise choice of input error and fault. This requirement helps ensure that under locally stochastic faults, our ejection gadget has a low \emph{probability} of corrupting each (ejected) output qubit. We will thereby rule out superpositions of low-weight Pauli errors on the output state that have no interpretation as a classical probability distribution. The refreshing property also helps in this regard, by allowing us to ``refresh'' the probability distribution of the support of the error.

% \begin{definition}
%   Let $\bfG=(\cR,\cE_{\mathrm{run}},D_{\mathrm{in}},D_{\mathrm{out}},M)$ be a fault-tolerant gadget for $\bar{\cO}$, where we define all variables as in \Cref{def:faulttol}. For a subset $E_{\mathrm{ref}}\subseteq N\times[T]$, we say a fault $\cF$ is \emph{$E_{\mathrm{ref}}$-independent} if each $F_t$ decomposes as a tensor product 

%   $\bfG$ is \emph{$E_{\mathrm{ref}}$-refreshing} if whenever the fault $\cF$ 
% \end{definition}

The following basic lemma shows that for proving fault-tolerance, it suffices to consider Pauli deviations and faults. Similar results are for instance shown in \cite{nguyen_quantum_2025,he_composable_2025}.

\begin{lemma}
  \label{lem:paulift}
  The data $\bfG=(\cR,\cE_{\mathrm{run}},D_{\mathrm{in}},D_{\mathrm{out}},\ps)$ provides a fault-tolerant gadget for a set $\bar{\cO}$ of superoperators (see \Cref{def:faulttol}) if and only if for every $\cE_{\mathrm{in}}$-avoiding pair of sets $E_{\mathrm{in}}=(E_{\mathrm{in},X}\subseteq\Qu{N_{\mathrm{in}}},\; E_{\mathrm{in},Z}\subseteq\Qu{N_{\mathrm{in}}})$, every $\cE_{\mathrm{run}}$-avoiding set $E_{\mathrm{run}}\subseteq N\times[T]$, and every postselection $\xi\in\bF_2^{\ps}$,
  there exists a $\cE_{\mathrm{out}}$-avoiding pair of sets $E_{\mathrm{out}}=(E_{\mathrm{out},X}\subseteq\Qu{N_{\mathrm{out}}},\; E_{\mathrm{out},Z}\subseteq\Qu{N_{\mathrm{out}}})$
  such that for every finite set $\Rf{K}$, every operator $\rho\in\bC^{2^{K_{\mathrm{in}}\sqcup\Rf{K}}\times 2^{K_{\mathrm{in}}\sqcup\Rf{K}}}$ that is classical on bits $\Cl{K_{\mathrm{in}}}$, every Pauli $\comp{E_{\mathrm{in}}}$-deviation $\sigma\in\bC^{2^{N_{\mathrm{in}}\sqcup\Rf{K}}\times 2^{N_{\mathrm{in}}\sqcup\Rf{K}}}$ of $\Enc_{\mathrm{in}}(\rho)$, every $\comp{E_{\mathrm{run}}}$-avoiding Pauli fault $\cF$ and reweighting $\zeta$ for $\cR$, then $\cR[\cF,\zeta,\xi](\sigma)\in\bC^{2^{N_{\mathrm{out}}\sqcup\Rf{K}}\times 2^{N_{\mathrm{out}}\sqcup\Rf{K}}}$ is a $\comp{E_{\mathrm{out}}}$-deviation of $\Enc_{\mathrm{out}}\circ\bar{\cO}(\rho)$.

  Similarly, $\bfG$ is mending if and only if for every $\cE_{\mathrm{run}}$-avoiding set $E_{\mathrm{run}}\subseteq N\times[T]$, and every postselection $\xi\in\bF_2^{\ps}$,
  there exists a $\cE_{\mathrm{out}}$-avoiding pair of sets $E_{\mathrm{out}}=(E_{\mathrm{out},X}\subseteq\Qu{N_{\mathrm{out}}},\; E_{\mathrm{out},Z}\subseteq\Qu{N_{\mathrm{out}}})$ such that for every finite set $\Rf{K}$, every operator $\rho\in\bC^{2^{K_{\mathrm{in}}\sqcup\Rf{K}}\times 2^{K_{\mathrm{in}}\sqcup\Rf{K}}}$ that is classical on bits $\Cl{K_{\mathrm{in}}}$, every Pauli $\Cl{K}\sqcup\Rf{K}$-deviation $\sigma\in\bC^{2^{N_{\mathrm{in}}\sqcup\Rf{K}}\times 2^{N_{\mathrm{in}}\sqcup\Rf{K}}}$ of $\Enc_{\mathrm{in}}(\rho)$ and every $\comp{E_{\mathrm{run}}}$-avoiding Pauli fault $\cF$ and reweighting $\zeta$ for $\cR$, then $\cR[\cF,\zeta,\xi](\sigma)\in\bC^{2^{N_{\mathrm{out}}\sqcup\Rf{K}}\times 2^{N_{\mathrm{out}}\sqcup\Rf{K}}}$ is of the form $\cR[\cF,\zeta,\xi](\sigma)=\sum_{j=1}^j\sigma_i$, where each $\sigma_i$ is a $\comp{E_{\mathrm{out}}}$-deviation of $\Enc_{\mathrm{out}}\circ\bar{\cO}\circ L_i(\rho)$ for some superoperator $L_i:\bC^{2^{\Qu{K_{\mathrm{in}}}}\times 2^{\Qu{K_{\mathrm{in}}}}}\rightarrow\bC^{2^{\Qu{K_{\mathrm{in}}}}\times 2^{\Qu{K_{\mathrm{in}}}}}$.

  Furthermore, $\bfG$ is refreshing if the choice of $E_{\mathrm{out}}$ does not depend on $E_{\mathrm{in}}$, but rather only depends on $E_{\mathrm{run}}$.

\end{lemma}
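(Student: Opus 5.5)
The "only if" direction is immediate. Pauli deviations are special cases of deviations, and Pauli faults are special cases of faults. So the condition in \Cref{def:faulttol} specializes directly to the stated condition, with the same choice of $E_{\mathrm{out}}$. The same holds for the mending and refreshing properties.

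For the "if" direction, I will use linearity. Fix $E_{\mathrm{in}}$, $E_{\mathrm{run}}$, $\xi$, and take $E_{\mathrm{out}}$ to be the set supplied by the Pauli hypothesis. This set depends only on $(E_{\mathrm{in}},E_{\mathrm{run}},\xi)$, and only on $(E_{\mathrm{run}},\xi)$ in the refreshing case. Now let $\sigma$ be a general $\comp{E_{\mathrm{in}}}$-deviation of $\Enc_{\mathrm{in}}(\rho)$. By definition, $\sigma=\sum_i\sigma_i$ with each $\sigma_i$ a Pauli $\comp{E_{\mathrm{in}}}$-deviation.

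Next let $\cF=(F_1,\dots,F_T)$ be a $\comp{E_{\mathrm{run}}}$-avoiding fault. Each $F_t$ decomposes uniquely into Pauli superoperators, $F_t=\sum_j\alpha_{t,j}P_{t,j}$. Every term with $\alpha_{t,j}\neq 0$ has $\supp(P_{t,j})\subseteq\supp(F_t)$, since the support of a superoperator is the union of the supports of its nonzero Pauli terms. Moreover, $F_t$ acts as the identity on classical bits, so each $P_{t,j}$ is supported on qubits only; otherwise the unique decomposition would not be supported inside $\Qu{N_t}$. Hence each Pauli fault $(P_{1,j_1},\dots,P_{T,j_T})$ has support contained in $\supp(\cF)\subseteq E_{\mathrm{run}}$, and so it is also $\comp{E_{\mathrm{run}}}$-avoiding.

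The map $(\cF,\sigma)\mapsto\cR[\cF,\zeta,\xi](\sigma)$ is multilinear in $(F_1,\dots,F_T,\sigma)$, because it is a composition of linear superoperators. It therefore expands as
\begin{equation*}
  \cR[\cF,\zeta,\xi](\sigma)=\sum_{i,j_1,\dots,j_T}\Big(\prod_t\alpha_{t,j_t}\Big)\,\cR[(P_{1,j_1},\dots,P_{T,j_T}),\zeta,\xi](\sigma_i).
\end{equation*}
By hypothesis, each summand is a $\comp{E_{\mathrm{out}}}$-deviation of $\Enc_{\mathrm{out}}\circ\bar{\cO}(\rho)$, all with the same $E_{\mathrm{out}}$. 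A scalar multiple of a Pauli deviation is again a Pauli deviation, because the definition only requires $\sigma\propto F\circ\Enc(\cdot)$. Deviations are also closed under finite sums, since a sum of linear combinations of Pauli deviations is again such a combination. Note that for the set version, each Pauli term may use a different $\bar O\in\bar{\cO}$; this is allowed because a $\cE$-deviation of $\Enc(S)$ permits each Pauli summand its own element of $S$. So the total output is a $\comp{E_{\mathrm{out}}}$-deviation, as required.

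The mending case is identical. Each summand has the form $\sum_i\sigma_i$ where $\sigma_i$ is a deviation of $\Enc_{\mathrm{out}}\circ\bar{\cO}\circ L_i(\rho)$. After rescaling by the coefficients, which can be absorbed into the $L_i$, we concatenate the resulting finite lists of terms.

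The only delicate point is the support claim in the second paragraph: that the Pauli terms of $F_t$ stay inside $\supp(F_t)$ and avoid the classical bits. I would justify it directly from the stated uniqueness of the Pauli decomposition together with the tensor factorization $F_t=A_S\otimes I$. Decomposing $A_S$ on $S$ yields a valid Pauli decomposition of $F_t$ supported in $S$, and by uniqueness this is the decomposition.
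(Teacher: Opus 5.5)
Your proposal is correct and follows the paper's proof essentially verbatim. The ``only if'' direction goes by specialization. The ``if'' direction expands $\sigma$ and each $F_t$ into Pauli terms, then uses multilinearity of $\cR[\cF,\zeta,\xi](\sigma)$ together with closure of $\comp{E_{\mathrm{out}}}$-deviations under linear combinations; the mending and refreshing cases are handled the same way. Your uniqueness-of-decomposition argument, showing that each Pauli term of $F_t$ lies inside $\supp(F_t)$ and off the classical bits, fills in a step the paper simply asserts.
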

\begin{proof}
  The ``only if'' direction is immediate, as Pauli deviations and faults are special cases of general deviations and faults. To prove the ``if'' direction, we simply decompose the relevant deviation and fault into a linear combination of Pauli deviations and faults.

  Specifically, to show (non-mending) fault-tolerance, consider an $\cE_{\mathrm{in}}$-avoiding pair of sets $E_{\mathrm{in}}=(E_{\mathrm{in},X},E_{\mathrm{in},Z})$, an $\cE_{\mathrm{run}}$-avoiding set $E_{\mathrm{run}}$, and a postselection $\xi$, and let $E_{\mathrm{out}}=(E_{\mathrm{out},X},E_{\mathrm{out},Z})$ be the $\cE_{\mathrm{out}}$-avoiding pair of sets given in the lemma statement.
  Now consider an arbitrary $\comp{E_{\mathrm{in}}}$-deviation $\sigma\in\bC^{2^{N_{\mathrm{in}}\sqcup\Rf{K}}\times 2^{N_{\mathrm{in}}\sqcup\Rf{K}}}$ of $\Enc_{\mathrm{in}}(\rho)$, and an arbitrary $\comp{E_{\mathrm{run}}}$-avoiding fault $\cF$ and reweighting $\zeta$ for $\cR$. Then by definition we can write $\sigma=\sum_{i=1}^{j_0}\sigma_i$ for Pauli $\comp{E_{\mathrm{in}}}$-deviations $\sigma_i$ of $\Enc_{\mathrm{in}}(\rho)$, and we can write each $F_t=\sum_{i=1}^{j_t}\alpha_{t,i}F_{t,i}$ for Pauli superoperators $F_{t,i}$ with coefficients $\alpha_{t,i}\in\bC$, such that the Pauli fault $\cF_{i_1,\dots,i_T}=(F_{1,i_1},\dots,F_{T,i_T})$ is $\comp{E_{\mathrm{run}}}$-avoiding for every tuple $(i_1,\dots,i_T)\in[j_1]\times\cdots\times[j_T]$. Then
  \begin{align*}
    \cR[\cF,\zeta,\xi](\sigma)
    &= \sum_{(i_0,\dots,i_T)\in[j_0]\times\cdots\times[j_T]} \alpha_{1,i_1}\cdots\alpha_{T,i_T} \cdot \cR[\cF_{i_1,\dots,i_T},\zeta,\xi](\sigma_{i_0}).
  \end{align*}
  Assuming the hypothesis in the lemma statement, each term in the sum on the RHS above is a $\comp{E_{\mathrm{out}}}$-deviation of $\Enc_{\mathrm{out}}\circ\bar{\cO}(\rho)$, so the sum is also an $\comp{E_{\mathrm{out}}}$-deviation of $\Enc_{\mathrm{out}}\circ\bar{\cO}(\rho)$. Thus $\bfG$ is a fault-tolerant gadget for $\bar{\cO}$, as desired.
  
  % Specifically, to show (non-mending) fault-tolerance, consider an arbitrary $\cE_{\mathrm{in}}\sqcup\Cl{K}\sqcup\Rf{K}$-deviation $\sigma\in\bC^{2^{N_{\mathrm{in}}\sqcup\Rf{K}}\times 2^{N_{\mathrm{in}}\sqcup\Rf{K}}}$ of $\Enc_{\mathrm{in}}(\rho)$, and an arbitrary $\cE_{\mathrm{run}}$-avoiding Pauli fault $\cF$ and reweighting $\zeta$ for $\cR$. Then by definition we can write $\sigma=\sum_{i=1}^{j_0}\sigma_i$ for Pauli $\cE_{\mathrm{in}}\sqcup\Cl{K}\sqcup\Rf{K}$-deviations $\sigma_i$ of $\Enc_{\mathrm{in}}(\rho)$, and we can write each $F_t=\sum_{i=1}^{j_t}\alpha_{t,i}F_{t,i}$ for Pauli superoperators $F_{t,i}$ with coefficients $\alpha_{t,i}\in\bC$, such that the Pauli fault $\cF_{i_1,\dots,i_T}=(F_{1,i_1},\dots,F_{T,i_T})$ is $\cE_{\mathrm{run}}$-avoiding for every tuple $(i_1,\dots,i_T)\in[j_1]\times\cdots\times[j_T]$. Then
  % \begin{align*}
  %   \cR[\cF,\zeta](\sigma)
  %   &= \sum_{(i_0,\dots,i_T)\in[j_0]\times\cdots\times[j_T]} \alpha_{1,i_1}\cdots\alpha_{T,i_T} \cdot \cR[\cF_{i_1,\dots,i_T},\zeta](\sigma_{i_0}).
  % \end{align*}
  % Assuming the hypothesis in the lemma statement, each term in the sum on the RHS above is a $\cE_{\mathrm{out}}\sqcup\Cl{K}\sqcup\Rf{K}$-deviation of $\Enc_{\mathrm{out}}\circ\bar{\cO}(\rho)$, so the sum is also $\cE_{\mathrm{out}}\sqcup\Cl{K}\sqcup\Rf{K}$-deviation of $\Enc_{\mathrm{out}}\circ\bar{\cO}(\rho)$. Thus $\bfG$ is a fault-tolerant gadget for $\bar{\cO}$, as desired.

  The refreshing property is then simply as stated in \Cref{def:faulttol}. Meanwhile, the proof of the mending property is analogous to the proof above of non-mending fault-tolerance, so we omit the details to avoid redundancy.
\end{proof}

\subsection{Gadget Composition}
In this section, we present basic lemmas on gadget composition, which simply adapt lemmas found in \cite{nguyen_quantum_2025,he_composable_2025} to our notation. We present proofs for completeness.

We begin with sequential composition, which applies two gadgets in sequence.

\begin{lemma}[Sequential composition]
  \label{lem:seqcomp}
  For $i\in\{1,2\}$, let
  \begin{equation*}
    \bfG^{(i)} = (\cR^{(i)}=(R^{(i)}_1,\dots,R^{(i)}_{T^{(i)}}),\; \cE_{\mathrm{run}}^{(i)},\; D_{\mathrm{in}}^{(i)},\; D_{\mathrm{out}}^{(i)},\; \ps^{(i)})
  \end{equation*}
  be a fault-tolerant gadget for a set $\bar{\cO}^{(i)}$ of superoperators $\bar{O}:\bC^{2^{K^{(i)}_{\mathrm{in}}}\times 2^{K^{(i)}_{\mathrm{in}}}}\rightarrow\bC^{2^{K^{(i)}_{\mathrm{out}}}\times 2^{K^{(i)}_{\mathrm{out}}}}$, such that $K^{(1)}_{\mathrm{out}}=K^{(2)}_{\mathrm{in}}$ and $D_{\mathrm{out}}^{(1)}=D_{\mathrm{in}}^{(2)}$. Letting
  \begin{equation*}
    \cR^{(2)}\circ\cR^{(1)} = (R^{(1)}_1,\dots,R^{(1)}_{T^{(1)}},R^{(2)}_1,\dots,R^{(2)}_{T^{(2)}})
  \end{equation*}
  denote the \emph{sequential composition} of the circuits $\cR^{(1)},\cR^{(2)}$, then the \emph{sequential composition}
  \begin{equation*}
    \bfG^{(2)}\circ\bfG^{(1)} := (\cR^{(2)}\circ\cR^{(1)},\; \cE_{\mathrm{run}}^{(1)}\sqcup\cE_{\mathrm{run}}^{(2)},\; D_{\mathrm{in}}^{(1)},\; D_{\mathrm{out}}^{(2)},\; \ps^{(1)}\sqcup\ps^{(2)})
  \end{equation*}
  of the gadgets $\bfG^{(1)},\bfG^{(2)}$ forms a fault-tolerant gadget for the set of superoperators
  \begin{equation*}
    \bar{\cO}^{(2)}\circ\bar{\cO}^{(1)}:=\{\bar{O}^{(2)}\circ \bar{O}^{(1)}:\bar{O}^{(1)}\in\bar{\cO}^{(1)},\bar{O}^{(2)}\in\bar{\cO}^{(2)}\}.
  \end{equation*}
  Furthermore, if $\bfG^{(1)}$ is mending (resp.~refreshing), then $\bfG^{(2)}\circ\bfG^{(1)}$ is mending (resp.~refreshing).
\end{lemma}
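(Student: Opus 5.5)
By \Cref{lem:paulift}, it suffices to handle Pauli input deviations and $\comp{E_{\mathrm{run}}}$-avoiding Pauli faults; the general case then follows by linearity. The plan is to chain the two guarantees: run $\bfG^{(1)}$, read off its output support set, and feed that set to $\bfG^{(2)}$ as an input support set.

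Fix an $\cE^{(1)}_{\mathrm{in}}$-avoiding pair $E_{\mathrm{in}}$. Fix an $(\cE^{(1)}_{\mathrm{run}}\sqcup\cE^{(2)}_{\mathrm{run}})$-avoiding set $E_{\mathrm{run}}\subseteq N\times[T^{(1)}+T^{(2)}]$, and a postselection $\xi=(\xi^{(1)},\xi^{(2)})\in\bF_2^{\ps^{(1)}\sqcup\ps^{(2)}}$. Split the run set as $E_{\mathrm{run}}=E^{(1)}_{\mathrm{run}}\sqcup E^{(2)}_{\mathrm{run}}$ according to timesteps. Because the bad-set family is a disjoint union, each $E^{(i)}_{\mathrm{run}}$ is $\cE^{(i)}_{\mathrm{run}}$-avoiding: any $E\in\cE^{(i)}_{\mathrm{run}}$ contained in $E^{(i)}_{\mathrm{run}}$ would also be contained in $E_{\mathrm{run}}$.

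Apply fault-tolerance of $\bfG^{(1)}$ to $(E_{\mathrm{in}},E^{(1)}_{\mathrm{run}},\xi^{(1)})$. This gives an $\cE^{(1)}_{\mathrm{out}}$-avoiding pair $E_{\mathrm{mid}}$. Since $D^{(1)}_{\mathrm{out}}=D^{(2)}_{\mathrm{in}}$, this pair is $\cE^{(2)}_{\mathrm{in}}$-avoiding. Now apply $\bfG^{(2)}$ to $(E_{\mathrm{mid}},E^{(2)}_{\mathrm{run}},\xi^{(2)})$ to obtain an $\cE^{(2)}_{\mathrm{out}}$-avoiding pair $E_{\mathrm{out}}$; this is our choice for the composite gadget.

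For the state-level argument, take a Pauli fault $\cF=(\cF^{(1)},\cF^{(2)})$ and a reweighting $\zeta=(\zeta^{(1)},\zeta^{(2)})$, split the same way. The corrupted composite circuit factors as
\begin{equation*}
  (\cR^{(2)}\circ\cR^{(1)})[\cF,\zeta,\xi]=\cR^{(2)}[\cF^{(2)},\zeta^{(2)},\xi^{(2)}]\circ\cR^{(1)}[\cF^{(1)},\zeta^{(1)},\xi^{(1)}].
\end{equation*}
By the guarantee for $\bfG^{(1)}$, the intermediate state $\tau=\cR^{(1)}[\cdots](\sigma)$ is a $\comp{E_{\mathrm{mid}}}$-deviation of $\Enc\circ\bar{\cO}^{(1)}(\rho)$, with the reference system $\Rf{K}$ carried along. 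So $\tau=\sum_i\tau_i$, where each $\tau_i$ is a Pauli $\comp{E_{\mathrm{mid}}}$-deviation of $\Enc(\rho_i)$ for some $\rho_i=\bar{O}^{(1)}_i(\rho)$.

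The main obstacle is that $\rho_i$ must be classical on $\Cl{K^{(2)}_{\mathrm{in}}}$, and that different terms may use different $\bar O^{(1)}_i$. The hypothesis on $\bar{\cO}^{(1)}$ in \Cref{def:faulttol} gives the classicality. Then apply $\bfG^{(2)}$ to each $\tau_i$ separately, with input operator $\rho_i$, and sum the results by linearity. Each summand is a $\comp{E_{\mathrm{out}}}$-deviation of $\Enc_{\mathrm{out}}\circ\bar O^{(2)}_{i'}\circ\bar O^{(1)}_i(\rho)$, which is $\Enc_{\mathrm{out}}$ applied to an element of $\bar{\cO}^{(2)}\circ\bar{\cO}^{(1)}(\rho)$. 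The definition of $\cE$-deviation of $\Enc(S)$ allows a different element of $S$ for each Pauli term, so the sum qualifies.

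\textbf{Refreshing.} If $\bfG^{(1)}$ is refreshing, then $E_{\mathrm{mid}}$ depends only on $(E^{(1)}_{\mathrm{run}},\xi^{(1)})$. Hence $E_{\mathrm{out}}$ depends only on $(E_{\mathrm{run}},\xi)$ and not on $E_{\mathrm{in}}$.

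\textbf{Mending.} If $\bfG^{(1)}$ is mending, run the same argument with $\sigma$ an arbitrary $\Cl{K}\sqcup\Rf{K}$-deviation. The mending guarantee for $\bfG^{(1)}$ gives $\tau=\sum_i\tau_i$, where each $\tau_i$ is a $\comp{E_{\mathrm{mid}}}$-deviation of $\Enc\circ\bar{\cO}^{(1)}\circ L_i(\rho)$. Here $E_{\mathrm{mid}}$ is independent of the input. Applying ordinary fault-tolerance of $\bfG^{(2)}$ termwise, with input operators $\bar O^{(1)}\circ L_i(\rho)$, yields the mending form with the same $L_i$.
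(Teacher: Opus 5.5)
Your proposal is correct and follows essentially the same route as the paper. Both arguments split $E_{\mathrm{run}}$, $\xi$, $\cF$ and $\zeta$ by timestep, feed the output set of $\bfG^{(1)}$ into $\bfG^{(2)}$ as its input set, and read off the mending and refreshing properties from the corresponding property of $\bfG^{(1)}$. Your termwise handling of the intermediate linear combination, which allows different $\bar{O}^{(1)}_i$ per term and checks that each $\bar{O}^{(1)}_i(\rho)$ is classical, is slightly more careful than the paper, which applies $\bfG^{(2)}$ directly to a deviation of the set $\Enc\circ\bar{\cO}^{(1)}(\rho)$.
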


In \Cref{lem:seqcomp}, the requirement that $K^{(1)}_{\mathrm{out}}=K^{(2)}_{\mathrm{in}}$ and $D_{\mathrm{out}}^{(1)}=D_{\mathrm{in}}^{(2)}$ implies that the sets of output bits and qubits of $\cR^{(1)}$ equal the sets of output bits and qubits respectively of $\cR^{(2)}$. Hence the sequential composition circuit $\cR^{(2)}\circ\cR^{(1)}$ is well-defined.

\begin{proof}[Proof of \Cref{lem:seqcomp}]
  For $\alpha\in\{\mathrm{in},\mathrm{out}\}$ and $i\in\{1,2\}$, let
  \begin{equation*}
    D_\alpha^{(i)} = (Q_\alpha^{(i)},\; \Enc_\alpha^{(i)},\; \cE_\alpha^{(i)}).
  \end{equation*}
  Fix a $\cE_{\mathrm{in}}^{(1)}$-avoiding pair of sets $E_{\mathrm{in}}^{(1)}$, a $\cE_{\mathrm{run}}^{(1)}\sqcup\cE_{\mathrm{run}}^{(2)}$-avoiding set $E_{\mathrm{run}}=E_{\mathrm{run}}^{(1)}\sqcup E_{\mathrm{run}}^{(2)}$, and a postselection $\xi=(\xi^{(1)},\xi^{(2)})\in\bF_2^{\ps^{(1)}\sqcup\ps^{(2)}}$, so that for $i\in\{1,2\}$ then $E_{\mathrm{run}}^{(i)}$ is $\cE_{\mathrm{run}}^{(i)}$-avoiding, and $\xi^{(i)}\in\bF_2^{\ps^{(i)}}$. Let $E_{\mathrm{out}}^{(1)}=E_{\mathrm{in}}^{(2)}$ be the $\cE_{\mathrm{out}}^{(1)}=\cE_{\mathrm{in}}^{(2)}$-avoiding output error set guaranteed to exist by the fault-tolerance of $\bfG^{(1)}$ (see \Cref{def:faulttol}) for our chosen input error set $E_{\mathrm{in}}^{(1)}$, fault error set $E_{\mathrm{run}}^{(1)}$, and postselection $\xi^{(1)}$. Then let $E_{\mathrm{out}}^{(2)}$ be the $\cE_{\mathrm{out}}^{(2)}$-avoiding output error set guaranteed to exist by the fault-tolerance of $\bfG^{(2)}$ for input error set $E_{\mathrm{in}}^{(2)}$, fault error set $E_{\mathrm{run}}^{(2)}$, and postselection $\xi^{(2)}$.

  By \Cref{def:faulttol}, to show that $\bfG^{(2)}\circ\bfG^{(1)}$ is fault-tolerant, it suffices to show that for every finite set $\Rf{K}$, every operator $\rho\in\bC^{2^{K_{\mathrm{in}}^{(1)}\sqcup\Rf{K}}\times 2^{K_{\mathrm{in}}^{(1)}\sqcup\Rf{K}}}$ that is classical on bits $\Cl{K_{\mathrm{in}}^{(1)}}$, every $\comp{E_{\mathrm{in}}^{(1)}}$-deviation $\sigma$ of $\Enc_{\mathrm{in}}^{(1)}(\rho)$, every $\comp{E_{\mathrm{run}}}$-avoiding fault $\cF$ and reweighting $\zeta$ for $\cR^{(2)}\circ\cR^{(1)}$, then $\cR^{(2)}\circ\cR^{(1)}[\cF,\zeta,\xi](\sigma)\in\bC^{2^{N_{\mathrm{out}}\sqcup\Rf{K}}\times 2^{N_{\mathrm{out}}\sqcup\Rf{K}}}$ is a $\comp{E_{\mathrm{out}}^{(2)}}$-deviation of $\Enc_{\mathrm{out}}^{(2)}\circ\bar{\cO}^{(2)}\circ\bar{\cO}^{(1)}(\rho)$.

  Here by definition $\cF=(F_1^{(1)},\dots,F_{T^{(1)}}^{(1)},F_1^{(2)},\dots,F_{T^{(2)}}^{(2)})$, where for $i\in\{1,2\}$, then $\cF^{(i)}=(F_1^{(i)},\dots,F_{T^{(i)}}^{(i)})$ is a $\comp{E_{\mathrm{run}}^{(i)}}$-avoiding fault for $\cR^{(i)}$. Similarly, $\zeta=(\zeta^{(1)},\zeta^{(2)})$, where each $\zeta^{(i)}$ is a reweighting for $\cR^{(i)}$. Hence the fault-tolerance of $\bfG^{(1)}$ implies that $\cR^{(1)}[\cF^{(1)},\zeta^{(1)},\xi^{(1)}](\sigma)$ is a $\comp{E_{\mathrm{out}}^{(1)}}=\comp{E_{\mathrm{in}}^{(2)}}$-deviation of $\Enc_{\mathrm{out}}^{(1)}\circ\bar{\cO}^{(1)}(\rho)=\Enc_{\mathrm{in}}^{(2)}\circ\bar{\cO}^{(1)}(\rho)$. Then the fault-tolerance of $\bfG^{(2)}$ implies that
  \begin{equation}
    \label{eq:scboth}
    \cR^{(2)}[\cF^{(2)},\zeta^{(2)},\xi^{(1)}](\cR^{(1)}[\cF^{(1)},\zeta^{(1)},\xi^{(1)}](\sigma)) = \cR^{(2)}\circ\cR^{(1)}[\cF,\zeta,\xi](\sigma)
  \end{equation}
  is a $\comp{E_{\mathrm{out}}^{(2)}}$-deviation of $\Enc_{\mathrm{out}}^{(2)}\circ\bar{\cO}^{(2)}\circ\bar{\cO}^{(1)}(\rho)$, as desired, so $\bfG^{(2)}\circ\bfG^{(1)}$ is indeed a fault-tolerant gadget for $\bar{\cO}^{(2)}\circ\bar{\cO}^{(1)}$.

  An analogous argument implies that if $\bfG^{(1)}$ is mending, then $\bfG^{(2)}\circ\bfG^{(1)}$ is also mending. The difference is that now we instead assume that $\sigma$ is a $\Cl{K_{\mathrm{in}}^{(1)}}\sqcup\Rf{K}$-deviation of $\Enc_{\mathrm{in}}^{(1)}(\rho)$, so that the mending property of $\bfG^{(1)}$ implies that $\cR^{(1)}[\cF^{(1)},\zeta^{(1)},\xi^{(1)}](\sigma)$ is a linear combination of $\comp{E_{\mathrm{out}}^{(1)}}=\comp{E_{\mathrm{in}}^{(2)}}$-deviations of $\Enc_{\mathrm{out}}^{(1)}\circ\bar{\cO}^{(1)}\circ L(\rho)=\Enc_{\mathrm{in}}^{(2)}\circ\bar{\cO}^{(1)}\circ L(\rho)$ for superoperators $L$ acting on qubits $\Qu{K_{\mathrm{in}}^{(1)}}$. Hence by \Cref{eq:scboth}, the fault-tolerance of $\bfG^{(2)}$ implies that $\cR^{(2)}\circ\cR^{(1)}[\cF,\zeta,\xi](\sigma)$ is a linear combination of $\comp{E_{\mathrm{out}}^{(2)}}$-deviations of $\Enc_{\mathrm{out}}^{(2)}\circ\bar{\cO}^{(2)}\circ\bar{\cO}^{(1)}\circ L(\rho)$ for superoperators $L$ acting on qubits $\Qu{K_{\mathrm{in}}^{(1)}}$. Thus $\bfG^{(2)}\circ\bfG^{(1)}$ indeed satisfies the mending definition in \Cref{def:faulttol}.

  Meanwhile, if $\bfG^{(1)}$ is refreshing, then the choice of $E_{\mathrm{out}}^{(1)}=E_{\mathrm{in}}^{(2)}$ in the argument above depends only on $E_{\mathrm{run}}^{(1)}$ and on $\xi^{(1)}$, and hence the choice of $\cE_{\mathrm{out}}^{(2)}$ depends only on $E_{\mathrm{run}}=E_{\mathrm{run}}^{(1)}\sqcup E_{\mathrm{run}}^{(2)}$ and on $\xi=(\xi^{(1)},\xi^{(2)})$. Thus $\bfG^{(2)}\circ\bfG^{(1)}$ is refreshing.
\end{proof}

We now present parallel composition, which applies two gadgets in parallel. Note in sequential composition above, we used notation such as $\cE^{(1)}_{\mathrm{run}}\sqcup\cE_{\mathrm{run}}^{(2)}$ to denote the disjoint union of the families of bad sets $\cE^{(1)}_{\mathrm{run}},\cE_{\mathrm{run}}^{(2)}$ composed across time, so that $\cE^{(1)}_{\mathrm{run}}$ is supported on timesteps $1,\dots,T^{(1)}$ and $\cE^{(2)}_{\mathrm{run}}$ is supported on timesteps $T^{(1)}+1,\dots,T^{(1)}+T^{(2)}$. In contrast, in parallel composition below we let $\cE^{(1)}_{\mathrm{run}}\sqcup\cE_{\mathrm{run}}^{(2)}$ denote the disjoint union of $\cE^{(1)}_{\mathrm{run}},\cE_{\mathrm{run}}^{(2)}$ composed across space, so that $\cE^{(1)}_{\mathrm{run}}$ and $\cE^{(2)}_{\mathrm{run}}$ are supported on the same timesteps $1,\dots,T$, but on disjoint sets of qubits associated to the respective circuits $\cR^{(1)}$ and $\cR^{(2)}$.

\begin{lemma}[Parallel composition]
  \label{lem:parcomp}
  For $i\in\{1,2\}$, let
  \begin{equation*}
    \bfG^{(i)} = (\cR^{(i)}=(R^{(i)}_1,\dots,R^{(i)}_T),\; \cE_{\mathrm{run}}^{(i)},\; D_{\mathrm{in}}^{(i)},\; D_{\mathrm{out}}^{(i)},\; \ps^{(i)})
  \end{equation*}
  be a fault-tolerant gadget for a set $\bar{\cO}^{(i)}$ of superoperators $\bar{O}:\bC^{2^{K^{(i)}_{\mathrm{in}}}\times 2^{K^{(i)}_{\mathrm{in}}}}\rightarrow\bC^{2^{K^{(i)}_{\mathrm{out}}}\times 2^{K^{(i)}_{\mathrm{out}}}}$, such that $\cR^{(1)}$ and $\cR^{(2)}$ have the same time usage $T$. Letting
  \begin{equation*}
    \cR^{(1)}\otimes\cR^{(2)} := (R_1^{(1)}\otimes R_1^{(2)},\dots,R_T^{(1)}\otimes R_T^{(2)})
  \end{equation*}
  denote the \emph{parallel composition} of the circuits $\cR^{(1)},\cR^{(2)}$, then the \emph{parallel composition}
  \begin{equation*}
    \bfG^{(1)}\otimes\bfG^{(2)} := (\cR^{(1)}\otimes\cR^{(2)},\; \cE_{\mathrm{run}}^{(1)}\sqcup\cE_{\mathrm{run}}^{(2)},\; D_{\mathrm{in}}^{(1)}\sqcup D_{\mathrm{in}}^{(2)},\; D_{\mathrm{out}}^{(1)}\sqcup D_{\mathrm{out}}^{(2)},\; \ps^{(1)}\sqcup\ps^{(2)})
  \end{equation*}
  of the gadgets $\bfG^{(1},\bfG^{(2)}$ forms a fault-tolerant gadget for the set of superoperators
  \begin{equation*}
    \bar{\cO}^{(1)}\otimes\bar{\cO}^{(2)}:=\{\bar{O}^{(1)}\otimes \bar{O}^{(2)}:\bar{O}^{(1)}\in\bar{\cO}^{(1)},\bar{O}^{(2)}\in\bar{\cO}^{(2)}\}.
  \end{equation*}
  Furthermore, if both $\bfG^{(1)}$ and $\bfG^{(2)}$ are mending (resp.~refreshing), then $\bfG^{(1)}\otimes\bfG^{(2)}$ is mending (resp.~refreshing).
\end{lemma}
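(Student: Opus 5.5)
The plan is to mirror the proof of \Cref{lem:seqcomp}, working with Pauli deviations and Pauli faults via \Cref{lem:paulift}, so that everything factorizes across the two gadgets. Fix an $\cE_{\mathrm{in}}^{(1)}\sqcup\cE_{\mathrm{in}}^{(2)}$-avoiding pair $E_{\mathrm{in}}=E_{\mathrm{in}}^{(1)}\sqcup E_{\mathrm{in}}^{(2)}$, an $\cE_{\mathrm{run}}^{(1)}\sqcup\cE_{\mathrm{run}}^{(2)}$-avoiding set $E_{\mathrm{run}}=E_{\mathrm{run}}^{(1)}\sqcup E_{\mathrm{run}}^{(2)}$, and a postselection $\xi=(\xi^{(1)},\xi^{(2)})$. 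Because bad sets of a disjoint union family lie entirely in one side, each $E^{(i)}_\cdot$ is avoiding for the corresponding family. For each $i$, let $E_{\mathrm{out}}^{(i)}$ be the output set given by fault-tolerance of $\bfG^{(i)}$, and set $E_{\mathrm{out}}=E_{\mathrm{out}}^{(1)}\sqcup E_{\mathrm{out}}^{(2)}$. This pair is $\cE_{\mathrm{out}}^{(1)}\sqcup\cE_{\mathrm{out}}^{(2)}$-avoiding for the same reason.

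Now take a reference system $\Rf{K}$, an input $\rho$ on $K_{\mathrm{in}}^{(1)}\sqcup K_{\mathrm{in}}^{(2)}\sqcup\Rf{K}$, and a Pauli $\comp{E_{\mathrm{in}}}$-deviation $\sigma\propto F\circ(\Enc_{\mathrm{in}}^{(1)}\otimes\Enc_{\mathrm{in}}^{(2)})(\rho)$. Take also a Pauli $\comp{E_{\mathrm{run}}}$-avoiding fault $\cF$ and a reweighting $\zeta=(\zeta^{(1)},\zeta^{(2)})$. Since $F$ and each $F_t$ are Pauli, they factor as tensor products $F=F^{(1)}\otimes F^{(2)}$ and $F_t=F_t^{(1)}\otimes F_t^{(2)}$, and each factor has support in the respective allowed set. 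Since each gate of the parallel circuit acts on only one side, we get
\begin{equation*}
\cR^{(1)}\otimes\cR^{(2)}[\cF,\zeta,\xi](\sigma)\propto \bigl(\cR^{(1)}[\cF^{(1)},\zeta^{(1)},\xi^{(1)}]\circ F^{(1)}\circ\Enc^{(1)}_{\mathrm{in}}\bigr)\otimes\bigl(\cR^{(2)}[\cF^{(2)},\zeta^{(2)},\xi^{(2)}]\circ F^{(2)}\circ\Enc^{(2)}_{\mathrm{in}}\bigr)(\rho).
\end{equation*}

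The main step, and the only real subtlety, is to apply fault-tolerance one factor at a time using the reference system. Apply $\bfG^{(1)}$ with reference system $K_{\mathrm{in}}^{(2)}\sqcup\Rf{K}$ to the operator $\rho'=(I\otimes F^{(2)}\circ\Enc^{(2)}_{\mathrm{in}})(\rho)$. Strictly, this should be applied to the unencoded $\rho$ with reference $K^{(2)}_{\mathrm{in}}\sqcup\Rf{K}$, and the $\gCt{}$ part of side 2 (i.e. $F^{(2)}\circ\Enc^{(2)}_{\mathrm{in}}$) is then applied afterwards, since it commutes with side-1 maps. This gives a sum of Pauli $\comp{E^{(1)}_{\mathrm{out}}}$-deviations of $\Enc^{(1)}_{\mathrm{out}}\circ\bar O^{(1)}(\rho)$ with side 2 untouched. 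Then, for each term, apply $\bfG^{(2)}$ with reference system $N^{(1)}_{\mathrm{out}}\sqcup\Rf{K}$ to the operator $(F'\circ\Enc^{(1)}_{\mathrm{out}}\circ\bar O^{(1)}\otimes I)(\rho)$, which is legitimate because side-1 Paulis commute with side-2 maps. This yields $\comp{E_{\mathrm{out}}}$-deviations of $\Enc_{\mathrm{out}}\circ(\bar O^{(1)}\otimes\bar O^{(2)})(\rho)$, where $\bar O^{(1)}$ may depend on the term. I expect the main obstacle to be this dependence: each application of fault-tolerance only asserts deviation of some element of $\bar\cO^{(i)}(\rho)$. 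I would handle it by noting that a deviation of $\Enc(S)$ is defined as a sum of Pauli deviations of possibly different elements, so each term independently chooses $\bar O^{(1)}\otimes\bar O^{(2)}\in\bar\cO^{(1)}\otimes\bar\cO^{(2)}$.

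For the remaining properties:
\begin{itemize}
\item \emph{Refreshing.} This is immediate, since $E_{\mathrm{out}}$ is built from the $E^{(i)}_{\mathrm{out}}$, which depend only on $E^{(i)}_{\mathrm{run}}$ and $\xi^{(i)}$.
\item \emph{Mending.} The same two-step argument applies with $\Cl{K}\sqcup\Rf{K}$-deviations as inputs. The mending output of $\bfG^{(1)}$ introduces superoperators $L^{(1)}_j$, and applying mending of $\bfG^{(2)}$ with the enlarged reference system introduces $L^{(2)}_{j'}$. The resulting terms are deviations of $\Enc_{\mathrm{out}}\circ(\bar O^{(1)}\otimes\bar O^{(2)})\circ(L^{(1)}_j\otimes L^{(2)}_{j'})(\rho)$, as required.
\end{itemize}
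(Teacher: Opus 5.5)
Your proposal is correct and follows essentially the same route as the paper: reduce to Pauli deviations and Pauli faults via \Cref{lem:paulift}, factor everything across the two sides, and apply fault-tolerance (resp.\ mending) of $\bfG^{(1)}$ and then of $\bfG^{(2)}$, each time absorbing the other side into the reference system, with refreshing inherited from the $E^{(i)}_{\mathrm{out}}$. The differences are cosmetic: you put the unencoded $K^{(2)}_{\mathrm{in}}$ rather than the encoded, corrupted block $N^{(2)}$ into the first reference system, and you make explicit that each term may carry its own $\bar{O}^{(1)}\otimes\bar{O}^{(2)}$. Also, your phrase about a ``classically-controlled part'' of side 2 is garbled and should simply refer to the map $F^{(2)}\circ\Enc^{(2)}_{\mathrm{in}}$.
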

\begin{proof}
  For $\alpha\in\{\mathrm{in},\mathrm{out}\}$ and $i\in\{1,2\}$, let
  \begin{equation*}
    D_\alpha^{(i)} = (Q_\alpha^{(i)},\; \Enc_\alpha^{(i)},\; \cE_\alpha^{(i)}).
  \end{equation*}
  Fix a $\cE_{\mathrm{in}}^{(1)}\sqcup\cE_{\mathrm{in}}^{(2)}$-avoiding pair of sets $E_{\mathrm{in}}=E_{\mathrm{in}}^{(1)}\sqcup E_{\mathrm{in}}^{(2)}$, a $\cE_{\mathrm{run}}^{(1)}\sqcup\cE_{\mathrm{run}}^{(2)}$-avoiding set $E_{\mathrm{run}}=E_{\mathrm{run}}^{(1)}\sqcup E_{\mathrm{run}}^{(2)}$, and a postselection $\xi=(\xi^{(1)},\xi^{(2)})\in\bF_2^{\ps^{(1)}\sqcup\ps^{(2)}}$, so that for $i\in\{1,2\}$ then $E_{\mathrm{in}}^{(i)}$ is $\cE_{\mathrm{in}}^{(i)}$-avoiding, $E_{\mathrm{run}}^{(i)}$ is $\cE_{\mathrm{run}}^{(i)}$-avoiding, and $\xi^{(i)}\in\bF_2^{\ps^{(i)}}$. For $i\in\{1,2\}$, let $E_{\mathrm{out}}^{(i)}$ be the $\cE_{\mathrm{out}}^{(i)}$-avoiding output error set guaranteed to exist by the fault-tolerance of $\bfG^{(i)}$ (see \Cref{def:faulttol}) for our chosen input error set $E_{\mathrm{in}}^{(i)}$, fault error set $E_{\mathrm{run}}^{(i)}$, and postselection $\xi^{(i)}$. Then let $E_{\mathrm{out}}=E_{\mathrm{out}}^{(1)}\sqcup E_{\mathrm{out}}^{(2)}$.

  By \Cref{lem:paulift}, to show that $\bfG^{(1)}\otimes\bfG^{(2)}$ is fault-tolerant, it suffices to show that for every finite set $\Rf{K}$, every operator $\rho\in\bC^{2^{K_{\mathrm{in}}^{(1)}\sqcup K_{\mathrm{in}}^{(2)}\sqcup\Rf{K}}\times 2^{K_{\mathrm{in}}^{(1)}\sqcup K_{\mathrm{in}}^{(2)}\sqcup\Rf{K}}}$ that is classical on bits in $\Cl{K_{\mathrm{in}}^{(1)}}\sqcup\Cl{K_{\mathrm{in}}^{(2)}}$, every Pauli $\comp{E_{\mathrm{in}}}$-deviation $\sigma$ of $\Enc_{\mathrm{in}}^{(1)}\otimes\Enc_{\mathrm{in}}^{(2)}(\rho)$, every $\comp{E}_{\mathrm{run}}$-avoiding Pauli fault $\cF$ and reweighting $\zeta$ for $\cR^{(1)}\otimes\cR^{(2)}$, then $\cR^{(1)}\otimes\cR^{(2)}[\cF,\zeta,\xi](\sigma)$ is a $\comp{E}_{\mathrm{out}}$-deviation of $(\Enc_{\mathrm{out}}^{(1)}\otimes\Enc_{\mathrm{out}}^{(2)})\circ(\bar{\cO}^{(1)}\otimes\bar{\cO}^{(2)})(\rho)$.

  Here by definition we can write $\cF=(F_1^{(1)}\otimes F_1^{(2)},\dots,F_T^{(1)}\otimes F_T^{(2)})$, where for $i\in\{1,2\}$, then $\cF^{(i)}=(F_1^{(i)},\dots,F_T^{(i)})$ is a $\comp{E_{\mathrm{run}}^{(i)}}$-avoiding fault for $\cR^{(i)}$. Similarly, $\zeta=(\zeta^{(1)},\zeta^{(2)})$, where each $\zeta^{(i)}$ is a reweighting for $\cR^{(i)}$. Furthermore, we can write
  \begin{equation*}
    \sigma = (F_{\mathrm{in}}^{(1)}\otimes F_{\mathrm{in}}^{(2)})(\Enc_{\mathrm{in}}^{(1)}\otimes\Enc_{\mathrm{in}}^{(2)})(\rho),
  \end{equation*}
  where each $F_{\mathrm{in}}^{(i)}$ is a $\comp{E_{\mathrm{in}}^{(i)}}$-avoiding Pauli error. Also let $N^{(i)}$ denote the total space used by $\cR^{(i)}$. Then applying the fault-tolerance of $\bfG^{(1)}$ with reference system $N^{(2)}\sqcup\Rf{K}$ implies that $(\cR^{(1)}[\cF^{(1)},\zeta^{(1)},\xi^{(1)}]\otimes I_{N^{(2)}})(\sigma)$ is a linear combination of states in
  \begin{equation}
    \label{eq:pchalf}
    (F_{\mathrm{out}}^{(1)}\otimes F_{\mathrm{in}}^{(2)})(\Enc_{\mathrm{in}}^{(1)}\otimes\Enc_{\mathrm{in}}^{(2)})(\bar{\cO}^{(1)}\otimes I_{K_{\mathrm{in}}^{(2)}})(\rho)
  \end{equation}
  for $\comp{E_{\mathrm{out}}^{(1)}}$-avoiding Pauli errors $F_{\mathrm{out}}^{(1)}$. Now applying the fault-tolerance of $\bfG^{(2)}$ with reference system $N^{(1)}\sqcup\Rf{K}$ implies that
  \begin{equation*}
    (I_{N^{(1)}}\otimes\cR^{(2)}[\cF^{(2)},\zeta^{(2)},\xi^{(2)}]) (\cR^{(1)}[\cF^{(1)},\zeta^{(1)},\xi^{(1)}]\otimes I_{N^{(2)}}) (\sigma) = \cR^{(1)}\otimes\cR^{(2)}[\cF,\zeta,\xi](\sigma)
  \end{equation*}
  is a linear combination of states in
  \begin{equation}
    \label{eq:pcfull}
    (F_{\mathrm{out}}^{(1)}\otimes F_{\mathrm{out}}^{(2)})(\Enc_{\mathrm{in}}^{(1)}\otimes\Enc_{\mathrm{in}}^{(2)})(\bar{\cO}^{(1)}\otimes \bar{\cO}^{(2)})(\rho)
  \end{equation}
  for $\comp{E_{\mathrm{out}}}=\comp{(E_{\mathrm{out}}^{(1)}\sqcup E_{\mathrm{out}}^{(2)})}$-avoiding Pauli errors $F_{\mathrm{out}}^{(1)}\otimes F_{\mathrm{out}}^{(2)}$. Thus $\cR^{(1)}\otimes\cR^{(2)}[\cF,\zeta,\xi](\sigma)$ is indeed a $\comp{E_{\mathrm{out}}}$-deviation of $(\Enc_{\mathrm{in}}^{(1)}\otimes\Enc_{\mathrm{in}}^{(2)})(\bar{\cO}^{(1)}\otimes \bar{\cO}^{(2)})(\rho)$, as desired, so $\bfG^{(1)}\otimes\bfG^{(2)}$ is a fault-tolerant gadget for $\bar{\cO}^{(1)}\otimes\bar{\cO}^{(2)}$.

  An analogous argument implies that if both $\bfG^{(1)}$ and $\bfG^{(2)}$ are mending, then $\bfG^{(1)}\otimes\bfG^{(2)}$ is also mending. The difference is that we now assume that $\sigma$ is a $\Cl{K_{\mathrm{in}}^{(1)}}\sqcup\Cl{K_{\mathrm{in}}^{(2)}}\sqcup\Rf{K}$-deviation of $\Enc_{\mathrm{in}}^{(1)}\otimes\Enc_{\mathrm{in}}^{(2)}(\rho)$, so that the mending property of $\bfG^{(1)}$ implies that \Cref{eq:pchalf} in the argument above is replaced by a linear combination of states
  \begin{equation*}
    (F_{\mathrm{out}}^{(1)}\otimes F_{\mathrm{in}}^{(2)})(\Enc_{\mathrm{in}}^{(1)}\otimes\Enc_{\mathrm{in}}^{(2)})(\bar{\cO}^{(1)}\otimes I_{K_{\mathrm{in}}^{(2)}})(L^{(1)}\otimes I_{K_{\mathrm{in}}^{(2)}})(\rho)
  \end{equation*}
  for superoperators $L^{(1)}$ acting on qubits $\Qu{K_{\mathrm{in}}^{(1)}}$. Then by the mending property of $\bfG^{(2)}$, \Cref{eq:pcfull} in the argument above is replaced by a linear combination of states in
  \begin{equation*}
    (F_{\mathrm{out}}^{(1)}\otimes F_{\mathrm{out}}^{(2)})(\Enc_{\mathrm{in}}^{(1)}\otimes\Enc_{\mathrm{in}}^{(2)})(\bar{\cO}^{(1)}\otimes \bar{\cO}^{(2)})(L^{(1)}\otimes L^{(2)})(\rho)
  \end{equation*}
  for superoperators $L^{(1)}\otimes L^{(2)}$ acting on qubits $\Qu{K_{\mathrm{in}}^{(1)}}\sqcup\Qu{K_{\mathrm{in}}^{(2)}}$. Thus $\bfG^{(1)}\otimes\bfG^{(2)}$ indeed satisfies the mending definition in \Cref{def:faulttol}.

  Meanwhile, if $\bfG^{(1)}$ and $\bfG^{(2)}$ are both refreshing, then the choice of $E_{\mathrm{out}}^{(1)},E_{\mathrm{out}}^{(2)}$ in the argument above depends only on $E_{\mathrm{run}}=E_{\mathrm{run}}^{(1)}\sqcup E_{\mathrm{run}}^{(2)}$ and on $\xi=(\xi^{(1)},\xi^{(2)})$. Thus $\bfG^{(1)}\otimes\bfG^{(2)}$ is refreshing.
\end{proof}

\section{Code Construction}
\label{sec:codeconstruct}
In this section we describe the codes we consider. Our codes are constructed as tensor (i.e.~hypergraph products) of chain complexes associated to a variant of the asymptotically good linear-time encodable/decodable codes of \cite{spielman_linear-time_1996}. Specifically, we omit some code bits in the construction of \cite{spielman_linear-time_1996} to obtain a simplified version with lower distance that is sufficient for our purposes. We compensate for this simplification by assuming that a small fraction of the code (qu)bits have sub-constant error probabilities. We will later show how to concatenate these qubits with appropriate inner codes to return to a uniform error model across qubits.

All (co)chain complexes in this section are assumed to be 1-based.

\subsection{Classical LDPC Codes}
\label{sec:classcodes}
We begin by defining the classical LDPC codes, or equivalently 1-dimensional chain complexes, that we will use to construct quantum codes via tensor products. For this purpose, we first need to define lossless expanders, which for convenience we present in the language of 1-dimensional cochain complexes. As defined below, we will only need the onesided notion, which expands from left to right.

\begin{definition}
  Let $G=(V=V_0\sqcup V_1,\; E\subseteq V_0\times V_1)$ be a bipartite graph of maximum left and right degrees $\Delta_0$ and $\Delta_1$ respectively. For a set of vertices $S\subseteq V$, we let $N_G(S)\subseteq V$ denote the set of vertices connected that share an edge with some vertex in $S$. Then for $\mu,\epsilon>0$, we say $G$ is a \emph{(onesided) $(\mu,\epsilon)$-lossless expander} if it holds for every set $S\subseteq V_0$ that $|N_G(S)|\geq(1-\epsilon)\Delta_0|S|$.
\end{definition}

% \begin{definition}
%   For $\mu,\epsilon>0$ and $\Delta\in\bN$, a \emph{$(\mu,\epsilon)$-lossless expander} of left-degee $\Delta$ is a 1-dimensional cochain complex $\cC=(\cC^0\xrightarrow{\delta}\cC^1)$ satisfying the following. For every $c^0\in C^0$, there are $\Delta$ distinct $c^1\triangleright c^0$. Furthermore, for every set $S\subseteq C^0$ of size $|S|\leq\mu|C^0|$, there are at least $(1-\epsilon)\Delta|S|$ distinct $c^1\in C^1$ such that $c^1\triangleright c^0$ for some $c^0\in S$.
% \end{definition}

% In words, if we view construct a bipartite graph with vertex sets $C^0,C^1$ and bipartite adjacency matrix $\delta$, then a $(\mu,\epsilon,\Delta)$-lossless expander is a $\Delta$-left-regular graph such that every set $S$ of left vertices of size $<\mu|C^0|$ has a neighborhood of size at least $1-\epsilon$ times the maximum possible size $\Delta|S|$.

\begin{lemma}[Well known; see e.g.~Theorem 4.16 in \cite{hoory_expander_2006}]
  \label{lem:lossless}
  For every $\epsilon>0$, there exist $\mu=\mu(\epsilon)>0$ and $\Delta=\Delta(\epsilon)\geq 1/\epsilon$ such that for every $n\geq\Delta$, then there exists a $(\Delta,2\Delta)$-biregular $(\mu,\epsilon)$-lossless expander $G=(V_0\sqcup V_1,E)$ with $|V_0|=2n$ left vertices and $|V_1|=n$ right vertices.
\end{lemma}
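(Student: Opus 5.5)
We plan to use the probabilistic method with the configuration model for random $(\Delta,2\Delta)$-biregular bipartite graphs, as in the proof of Theorem~4.16 of \cite{hoory_expander_2006}; the lemma is essentially that theorem, and we only indicate how its parameters line up with ours. The expansion requirement is meant for sets $S\subseteq V_0$ with $|S|\le\mu|V_0|$, as in \Cref{def:lossless}; for larger $S$ it can fail trivially by counting, since then $(1-\epsilon)\Delta|S|$ can exceed $|V_1|=n$.

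\textbf{Random model.} Give each of the $2n$ left vertices $\Delta$ half-edges and each of the $n$ right vertices $2\Delta$ half-edges, so both sides have $2n\Delta$ half-edges. Then choose a uniformly random perfect matching between the two sides. Reveal the $\Delta|S|$ edges leaving a fixed set $S$ one at a time. Call an edge a \emph{collision} if its right endpoint was already hit by an earlier edge of $S$. We have $|N_G(S)|<(1-\epsilon)\Delta|S|$ exactly when there are more than $\epsilon\Delta|S|$ collisions. At each step at most $2\Delta\cdot\Delta s$ of the remaining (at least $2n\Delta-\Delta s$) right half-edges lie in already-hit vertices, where $s=|S|$. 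So each step is a collision with conditional probability at most $c\Delta s/n$ for a constant $c$, provided $s\le\mu\cdot 2n$ with $\mu\le 1/4$. Hence
\[
\Pr\bigl[|N_G(S)|<(1-\epsilon)\Delta s\bigr]\le\binom{\Delta s}{\epsilon\Delta s}\Bigl(\frac{c\Delta s}{n}\Bigr)^{\epsilon\Delta s}\le\Bigl(\frac{ec\Delta s}{\epsilon n}\Bigr)^{\epsilon\Delta s}.
\]

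\textbf{Union bound.} Summing over the at most $\binom{2n}{s}\le(2en/s)^s$ sets of each size $s\le 2\mu n$ gives a total failure probability of at most
\[
\sum_{s}\Bigl[\frac{2en}{s}\Bigl(\frac{ec\Delta s}{\epsilon n}\Bigr)^{\epsilon\Delta}\Bigr]^s .
\]
Choose $\Delta\ge 2/\epsilon$, which is compatible with the required $\Delta\ge 1/\epsilon$. Then the power of $s/n$ inside the bracket is $\epsilon\Delta-1\ge1$, so the bracket is at most a constant (depending on $\epsilon,\Delta$) times $(s/n)^{\epsilon\Delta-1}$. Choosing $\mu$ small enough makes every bracket at most $1/4$, so the sum is less than $1$. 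Some graph then satisfies the expansion property.

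\textbf{Remaining technicalities and main obstacle.} The configuration model may produce parallel edges. We avoid any conditioning argument by noting that a parallel edge at a single vertex is already a collision for $|S|=1$. For $\epsilon\Delta\ge1$, the good event forces at least $(1-\epsilon)\Delta$ distinct neighbours of each vertex. Parallel edges can then be merged or rerouted while preserving biregularity, or we can count degrees with multiplicity; either way the expansion bound is unchanged. The real care is needed at the small-$s$ terms for moderate $n$. There the bound behaves like $n\cdot n^{-\epsilon\Delta}$, which is small once $n$ is large; the finitely many $n\ge\Delta$ below that threshold are handled either by enlarging $\Delta$ or by appealing directly to \cite{hoory_expander_2006}. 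We expect this quantifier bookkeeping — fixing $\Delta$ and $\mu$ so that the bound holds uniformly for all $n\ge\Delta$ — to be the main obstacle. The probabilistic estimate itself is standard.
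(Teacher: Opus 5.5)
Your collision estimate and union bound are the standard probabilistic argument behind the result the paper cites; the paper itself gives no proof beyond the citation to \cite{hoory_expander_2006} and a footnote on small $n$. You are also right that expansion is only required for $|S|\le\mu|V_0|$. The genuine gap is the passage from the configuration-model multigraph to the \emph{simple} $(\Delta,2\Delta)$-biregular graph the lemma asks for. Simplicity is used later in the paper: $\delta^\ell$ is an $\bF_2$ adjacency matrix in which a double edge would cancel, and the proof of \Cref{lem:ssflip} uses that each left vertex has exactly $\Delta$ distinct neighbours.

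None of your three remedies works as stated.
\begin{itemize}
\item Merging a double edge drops a left degree to $\Delta-1$ and a right degree to $2\Delta-1$, so biregularity is lost.
\item Counting degrees with multiplicity just keeps the multigraph.
\item Rerouting, i.e.\ replacing a duplicate copy of $(u,v)$ and an edge $(u',v')$ by $(u,v'),(u',v)$, removes $v'$ from $N_G(S)$ for every $S\ni u'$ that reached $v'$ only through $u'$. So $|N_G(S)|$ can drop by one, and a set meeting $(1-\epsilon)\Delta|S|$ with equality then violates it; the bound is not ``unchanged''.
\end{itemize}
Your observation that a parallel edge is a collision for $|S|=1$ does not exclude parallel edges, because the good event tolerates up to $\epsilon\Delta$ collisions per vertex.

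The missing step is the conditioning you tried to avoid, or a switching argument with slack. For fixed $\Delta$, the bipartite configuration model is simple with probability at least some $c(\Delta)>0$ for all $n\ge n_0(\Delta)$, since the number of double edges is asymptotically Poisson with bounded mean. So it suffices to choose $\mu$ so that every bracket in your sum is at most $c(\Delta)/4$. The failure probability is then below $c(\Delta)$, and a simple expanding graph exists. Alternatively, prove the multigraph bound with $\epsilon/4$ and remove parallel edges by switchings that create no new parallel edges and in which each vertex plays the role of $u'$ at most $\epsilon\Delta/2$ times; then each $|N_G(S)|$ loses at most $(\epsilon/2)\Delta|S|$.

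The obstacle you flag at small $n$ is not one. Every term of your sum has $1\le s\le 2\mu n$, so $s/n\le 2\mu$, and each bracket is at most $C(2\mu)^{\epsilon\Delta-1}$ uniformly in $n$, including the $s=1$ term. Small $n$ matters only after conditioning on simplicity, and then it is handled as in the paper's footnote. Take $\mu<1/(2n_0)$, so that for $n<n_0$ the expansion condition is vacuous. Then use any simple $(\Delta,2\Delta)$-biregular graph, for instance joining left vertex $i\in\{0,\dots,2n-1\}$ to the right vertices $(\lfloor i/2\rfloor+j)\bmod n$ for $0\le j<\Delta$.
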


It is well-known that random biregular graphs yield the lossless expanders in \Cref{lem:lossless} with probability $\rightarrow 1$ as $n\rightarrow\infty$.\footnote{We can choose $\mu(\epsilon)$ sufficiently small such that for values of $n$ too small for a nontrivial bound on random graphs to hold, then $\mu n<1$, and hence all graphs with $n$ right vertices are $(\mu,\epsilon)$-lossless expanders.} Lossless expanders have also be constructed explicitly \cite{capalbo_randomness_2002,golowich_new_2024,cohen_hdx_2023,hsieh_explicit_2025}.

\begin{definition}
  \label{def:classcode}
  For $\epsilon>0$, define $\mu=\mu(\epsilon)$ and $\Delta=\Delta(\epsilon)$ as in \Cref{lem:lossless}. Then for $\bar{\ell}\in\bN$, we define a 1-dimensional cochain complex $\cC^*=\cC^*(\epsilon,\bar{\ell})$ as follows. Let $\ell_0=2^{\lceil\log_2\Delta\rceil}$ be the least power of 2 above $\Delta$. For $\ell\geq 0$, we let $V_\ell$ be a set of $2^{\ell+\ell_0}$ vertices, and we let $G^{(\ell)}=(V_\ell\sqcup V_{\ell-1},E^{(\ell)})$ be a $(\mu,\epsilon)$-lossless expander as given by \Cref{lem:lossless}. We then let
  \begin{align*}
    C^0 &= V_0 \sqcup \cdots \sqcup V_{\bar{\ell}-1} \sqcup V_{\bar{\ell}} \\
    C^1 &= V_0 \sqcup \cdots \sqcup V_{\bar{\ell}-1},
  \end{align*}
  and we define $\delta^{\cC}\in\bF_2^{C^1\times C^0}$ such that for $c=(c_0,c_1,\dots,c_{\bar{\ell}})\in\cC^0$, letting $\delta^\ell\in\bF_2^{V_{\ell-1}\times V_\ell}$ denote the bipartite adjacency matrix for $G^{(\ell)}$, then
\begin{equation}
  \label{eq:czero}
  \delta^{\cC}(c) = (c_0+\delta^1(c_1),\; c_1+\delta^2(c_2), \dots, c_{\bar{\ell}-1}+\delta^{\bar{\ell}}(c_{\bar{\ell}})) \in \cC^1.
\end{equation}
  % For $i\in\{0,1\}$, we let $V^i_\ell$ denote the copy of $V_\ell$ in $C^i$.
\end{definition}

Equivalently, for $c^0\in C^0$ and $c^1\in C^1$, we have
\begin{equation}
  \label{eq:classcode}
  \delta^{\cC}_{c^1,c^0} = \begin{cases}
    1,& c^0=c^1 \text{ or } (c^0,c^1)\in E^{(\ell)} \text{ for some } \ell\in[\bar{\ell}] \\
    0,&\text{otherwise}.
  \end{cases}
\end{equation}

It follows from \Cref{eq:czero} that cocycles $c\in Z^0(\cC)=H^0(\cC)$ (so that $\delta^{\cC}(c)=0$) are obtained by fixing some $c_{\bar{\ell}}\in\bF_2^{V_{\bar{\ell}}}$, and then letting $c_{\ell-1}=\delta^\ell(c_\ell)$ for each $\ell=\bar{\ell},\dots,1$. It also follows from \Cref{eq:czero} that $\delta^{\cC}$ is surjective, so that $H^1(\cC)=0$. Indeed, given a target $c'=(c_0',\dots,c_{\bar{\ell}-1}')\in\cC^1$, we can construct a preimage $c=(c_0,\dots,c_{\bar{\ell}})\in\cC^0$ with $\delta^{\cC}(c)=c'$ by setting $c_{\bar{\ell}}=0$, and then inductively setting $c_\ell=c_\ell'+\delta^{\ell+1}(c_{\ell+1})$ for each $\ell=\bar{\ell}-1,\dots,0$.

These observations motivate the following natural encoding maps for $\cC^*$ as well as for the dual complex $\cC_*$. We specifically express our encoding maps as chain maps (see \Cref{def:chainmap}).

\begin{definition}
  \label{def:classenc}
  Define $\cC^*=\cC^*(\epsilon,\bar{\ell})$ as in \Cref{def:classcode}. We define a 1-dimensional cochain complex $\cM^*=\cM^*(\epsilon,\bar{\ell})$ by $M^0=V_{\bar{\ell}}\subseteq C^0$, $M^1=0$, and $\delta^{\cM}=0$. We then define chain maps
  \begin{align*}
    \Enc^*:\cM^*&\rightarrow\cC^* \\
    \Enc_*:\cM_*&\rightarrow\cC_*
  \end{align*}
  such that for $c_{\bar{\ell}}\in\bF_2^{V_{\bar{\ell}}}$, then
  \begin{align*}
\Enc^0(c_{\bar{\ell}}) &= (c_0,\dots,c_{\bar{\ell}-1},c_{\bar{\ell}}) \in \cC^0  \text{ where each } c_{\ell-1}=\delta^\ell(c_\ell) \\
    \Enc_0(c_{\bar{\ell}}) &= (0,\dots,0,c_{\bar{\ell}}) \in \cC_0=\cC^0. 
  \end{align*}
\end{definition}

\begin{lemma}
  \label{lem:classenc}
  The map $\Enc^*$ (resp.~$\Enc_*$) in \Cref{def:classenc} induces an isomorphism on cohomology (resp.~homology).
\end{lemma}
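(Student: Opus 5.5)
First, I will check that $\Enc^*$ and $\Enc_*$ really are chain maps in the sense of \Cref{def:chainmap}. Because $\cM$ has zero differential and $M^1=0$, the only nontrivial condition for $\Enc^*$ is that $\delta^{\cC}\circ\Enc^0=0$. By \Cref{eq:czero}, the $(\ell-1)$st component of $\delta^{\cC}(\Enc^0(c_{\bar\ell}))$ equals $c_{\ell-1}+\delta^\ell(c_\ell)$. This vanishes because $\Enc^0$ sets $c_{\ell-1}=\delta^\ell(c_\ell)$.

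For $\Enc_*$, the condition is $\partial^{\cC}\circ\Enc_0=0$, where $\partial^{\cC}=(\delta^{\cC})^\top:\cC_1\to\cC_0$. Note that $\partial^{\cC}$ maps $\cC_1$ down to $\cC_0$, so this composition would land in degree $-1$ and is vacuous. The only genuine requirement is therefore that $\Enc_0$ lands in $\cC_0$ and induces a map $H_0(\cM)=\bF_2^{V_{\bar\ell}}\to H_0(\cC)=\cC_0/\im(\partial^{\cC})$.

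For cohomology, I use the observation made right after \Cref{def:classcode}. Every cocycle in $Z^0(\cC)=H^0(\cC)$ is determined by its $V_{\bar\ell}$ component through the recursion $c_{\ell-1}=\delta^\ell(c_\ell)$, and that recursion is exactly $\Enc^0$. So $\Enc^0$ is a bijection from $\bF_2^{V_{\bar\ell}}$ onto $Z^0(\cC)$. Injectivity is clear since $\Enc^0$ preserves the last coordinate, and there are no coboundaries in degree $0$. In degree $1$, $H^1(\cM)=0$, and $H^1(\cC)=0$ because $\delta^{\cC}$ is surjective, as already shown.

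For homology, I note $H_1(\cC)=\ker\partial_1=\ker((\delta^{\cC})^\top)=0$, again by surjectivity of $\delta^{\cC}$. In degree $0$, $\dim H_0(\cC)=|C^0|-\operatorname{rank}\delta^{\cC}=|C^0|-|C^1|=|V_{\bar\ell}|$. So it suffices to show that the induced map $c_{\bar\ell}\mapsto(0,\dots,0,c_{\bar\ell})+\im\partial^{\cC}$ is injective; equivalently, that no nonzero vector supported on $V_{\bar\ell}$ lies in $\im\partial^{\cC}$.

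This injectivity is the only step needing care, and I will handle it with the natural pairing. Since $\im\partial^{\cC}=\im(\delta^{\cC})^\top=(\ker\delta^{\cC})^\perp=Z^0(\cC)^\perp$, I must show that for nonzero $c_{\bar\ell}$ there is a cocycle $z$ with $z\cdot(0,\dots,0,c_{\bar\ell})\neq0$. Taking $z=\Enc^0(\1_v)$ for some $v\in\supp(c_{\bar\ell})$ gives $z\cdot(0,\dots,0,c_{\bar\ell})=(c_{\bar\ell})_v=1$. This proves injectivity, and the dimension count then gives the isomorphism on homology.
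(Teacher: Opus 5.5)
Your proof is correct. It shares the paper's key observation, that \Cref{eq:czero} forces $\im(\Enc^0)=Z^0(\cC)$. Since $\Enc^0$ is injective, $B^0(\cC)=0$ and $H^1(\cC)=0$, this already gives the cohomology statement.

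The difference is in what comes next. The paper stops at that observation and invokes \cite[Lemma~3.33]{golowich_constant-overhead_2025} to conclude both statements. You instead prove the homology half directly:
\begin{itemize}
\item $H_1(\cC)=\ker(\delta^{\cC})^\top=0$ by surjectivity of $\delta^{\cC}$;
\item $\dim H_0(\cC)=|C^0|-|C^1|=|V_{\bar\ell}|$;
\item injectivity on $H_0$ follows from $\im\partial_1=Z^0(\cC)^\perp$ together with the pairing $\Enc^0(\1_v)\cdot\Enc_0(c_{\bar\ell})=(c_{\bar\ell})_v$, which is exactly the duality recorded in \Cref{remark:encdual}.
\end{itemize}
The paper's route is shorter and reuses a general external statement. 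Yours is self-contained and shows explicitly that the homology isomorphism is forced by the cohomology one through this pairing.

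One small slip: when you check that $\Enc_*$ is a chain map, the map $(\delta^{\cC})^\top=\partial_1$ has domain $\cC_1$, so it cannot be composed with $\Enc_0$. The actual condition $\partial_1^{\cC}\circ\Enc_1=\Enc_0\circ\partial_1^{\cM}$ holds trivially because $\cM_1=0$, which is what you intended.
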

\begin{proof}
  \Cref{eq:czero} implies that $\im(\Enc^0)=Z^0(\cC)$, so the result follows by \cite[Lemma~3.33]{golowich_constant-overhead_2025}.
\end{proof}

As a side remark, because it takes $O(\Delta\cdot|V_\ell|)$ time to apply $\delta^\ell$ to $c_\ell$, then if we view $Z^0(\cC)$ as a classical code with message $c_{\bar{\ell}}\in\bF_2^{V_{\bar{\ell}}}$, this code has linear encoding time $O(\Delta\cdot(|V_{\bar{\ell}}|+\cdots+|V_1|))=O(|C^0|)$ assuming $\Delta=O(1)$.

\subsection{Decorated Quantum Product Codes}
\label{sec:quantumprod}
We will consider quantum codes associated to cochain complexes given by tensor products of the complex $\cC^*(\epsilon,\bar{\ell})$ from \Cref{def:classcode} along with the dual complex $\cC_*(\epsilon,\bar{\ell})$. In this section, we formally present these codes, and decorate them with encoding maps and families of bad sets.

\subsubsection{Product Codes and Encoding Map}
\label{sec:qpcode}
We begin by defining the product codes we use along with their associated encoding map. While quantum codes can be constructed from tensor products of just two 1-dimensional cochain complexes, we use 4-dimensional products to prove \Cref{thm:maininf}, in order to obtain single-shot (i.e.~constant-overhead) gadgets for preparing logical $\ket{0}$ and $\ket{+}$ states. Intuitively, the extra (co)chain complex levels provide redundancies in stabilizer measurements, which allow us to correct measurement errors in the state preparation gadget without needing to repeat measurements. Related ideas were for instance used in \cite{bombin_dimensional_2016,tan_single-shot_2025,golowich_constant-overhead_2025}.

\begin{definition}
  \label{def:prodcode}
  For $\epsilon>$, $\bar{\ell}\in\bN$, and for $r_X,r_Z\in\bZ_{\geq 0}$, we define the $r=r_X+r_Z$ dimensional cochain complex
  \begin{equation*}
    \cC^*(\epsilon,\bar{\ell},r_X,r_Z) := \cC_*(\epsilon,\bar{\ell})^{\otimes r_X}\otimes\cC^*(\epsilon,\bar{\ell})^{\otimes r_Z}.
  \end{equation*}
  Similarly defining
  \begin{equation*}
    \cM^*(\epsilon,\bar{\ell},r_X,r_Z) := \cM_*(\epsilon,\bar{\ell})^{\otimes r_X}\otimes\cM^*(\epsilon,\bar{\ell})^{\otimes r_Z},
  \end{equation*}
  we then define the associated product encoding chain map $\Enc_{\cC(\epsilon,\bar{\ell},r_X,r_Z)}:\cM^*(\epsilon,\bar{\ell},r_X,r_Z)\rightarrow\cC^*(\epsilon,\bar{\ell},r_X,r_Z)$ by
  \begin{equation*}
    \Enc_{\cC(\epsilon,\bar{\ell},r_X,r_Z)} = (\Enc_*)^{\otimes r_X} \otimes (\Enc^*)^{\otimes r_Z}.
  \end{equation*}

  Note that when $r_X=r_Z=0$, then $\cC^*(\epsilon,\bar{\ell},0,0)\cong\cM^*(\epsilon,\bar{\ell},0,0)$ is the $0$-dimensional cochain complex with a 1-dimensional space of cochains.
\end{definition}

We will specifically consider the quantum code at level $r_X$ of $\cC^*(\epsilon,\bar{\ell},r_X,r_Z)$, with encoding map $\Enc_{\cC(\epsilon,\bar{\ell},r_X,r_Z)}^{r_X}$. The following basic lemma shows that level $r_X$ is indeed the only level yielding a code of nonzero dimension:

\begin{lemma}
  \label{lem:prodenc}
  Let $\cC^*=\cC^*(\epsilon,\bar{\ell},r_X,r_Z)$ and $\cM^*=\cM^*(\epsilon,\bar{\ell},r_X,r_Z)$. Then $\Enc_{\cC}:\cM^*\rightarrow\cC^*$ induces an isomorphism on cohomology $\Enc_{\cC}:H^*(\cM)\xrightarrow{\sim}H^*(\cC)$, so that
  \begin{equation}
    \label{eq:prodenc}
    H^i(\cC) \cong H^i(\cM) = \begin{cases}
      \bF_2^{(V_{\bar{\ell}})^r},&i=r_X\\
      0,&i\neq r_X.
    \end{cases}
  \end{equation}
\end{lemma}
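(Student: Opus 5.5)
The plan is to reduce to the one-dimensional case via the Künneth formula (\Cref{prop:kunneth}) together with naturality of the Künneth isomorphism under tensor products of chain maps. By \Cref{lem:classenc}, the cochain map $\Enc^*:\cM^*\to\cC^*$ induces an isomorphism $H^*(\cM)\to H^*(\cC)$. Likewise $\Enc_*:\cM_*\to\cC_*$ induces an isomorphism on homology. Viewing $\cC_*$ as a cochain complex with reversed grading, so that it becomes a cochain complex concentrated in degrees $0$ and $1$ with coboundary $\partial$, this is again an isomorphism on the relevant (co)homology. So each of the $r=r_X+r_Z$ tensor factors of $\Enc_{\cC}$ is a quasi-isomorphism.

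\emph{Step 1.} I would show that a tensor product of chain maps inducing isomorphisms on homology again induces an isomorphism on homology. Over the field $\bF_2$, the Künneth isomorphism in \Cref{prop:kunneth} is given explicitly by $[a]\otimes[b]\mapsto[a\otimes b]$. Hence for chain maps $\phi,\psi$ we have
\begin{equation*}
(\phi\otimes\psi)_*\big([a\otimes b]\big)=[\phi a\otimes\psi b].
\end{equation*}
This means the induced map corresponds, under Künneth, to $\bigoplus_j \phi_*\otimes\psi_*$, which is an isomorphism when $\phi_*$ and $\psi_*$ are. Induction on the number of factors then gives the claim for $(\Enc_*)^{\otimes r_X}\otimes(\Enc^*)^{\otimes r_Z}$. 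Cohomology of the tensor cochain complex is handled identically, since $\cA^*\otimes\cB^*=(\cA_*\otimes\cB_*)^*$ and the same formula applies.

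\emph{Step 2.} I would compute $H^*(\cM)$ directly. Each factor $\cM$ has zero differential and is concentrated in a single degree with space $\bF_2^{V_{\bar\ell}}$. For a factor $\cM_*$ taken as a chain complex inside a cochain product, I need to fix the grading convention so that these factors sit in degree $1$ in the product. The point is that the paper places the $\cC_*$ factors so that the code lives at level $r_X$. With that convention, $\cC_*$ contributes $H$ in degree $1$ (namely $H_0(\cC)=\cC_0/\partial\cC_1$, suitably regraded), while $\cC^*$ contributes $H^0$ in degree $0$. The product $\cM$ then has zero differential, so its cohomology is just its cochains. These are $\bigotimes \bF_2^{V_{\bar\ell}}=\bF_2^{(V_{\bar\ell})^r}$, concentrated in degree $r_X\cdot 1+r_Z\cdot 0=r_X$.

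The main obstacle is this grading bookkeeping: making precise how a chain complex $\cC_*$ is regarded as a cochain factor in \Cref{def:prodcode}, i.e.\ that $\cC_*$ sits in cochain degrees $0\to1$ via $\cC_1\to\cC_0$ placed in degrees $0,1$ respectively. Once that is fixed, the remaining work is to check that $\Enc_*$ lands correctly; its image $(0,\dots,0,c_{\bar\ell})$ represents a nonzero class in $\cC_0/\im\partial$. After these conventions are settled, the statement follows from \Cref{prop:kunneth} and \Cref{lem:classenc} without further computation.
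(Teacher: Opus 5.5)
Your proposal is correct and follows essentially the same route as the paper. Both combine \Cref{lem:classenc} with the K\"{u}nneth formula to show $\Enc_{\cC}$ induces an isomorphism, then read off $H^*(\cM)$ from the vanishing differential. Where the paper leaves things implicit, you spell out the naturality of the K\"{u}nneth isomorphism and the regrading of the $\cC_*$ and $\cM_*$ factors, which places $\bF_2^{V_{\bar\ell}}$ in degree $1$ for those factors and hence in total degree $r_X$.
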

\begin{proof}
  Because $\Enc^*,\Enc_*$ induce isomorphisms on cohomology by \Cref{lem:classenc}, the product map $\Enc_{\cC}$ must also induce an isomorphism on cohomology by the K\"{u}nneth formula (\Cref{prop:kunneth}).
  Then \Cref{eq:prodenc} follows because $\delta^{\cM}=0$, so that $H^0(\cM(\epsilon,\bar{\ell}))\cong\cM^0(\epsilon,\bar{\ell})=\bF_2^{V_{\bar{\ell}}}$ and $H^1(\cM(\epsilon,\bar{\ell}))\cong\cM^1(\epsilon,\bar{\ell})=0$.
\end{proof}

\begin{remark}
  \label{remark:prodenccoh}
  Throughout this paper, we will consider the quantum code at level $r_X$ of $\cC^*$ with associated encoding map $\Enc_{\cC}^{r_X}:H^{r_X}(\cM)\xrightarrow{\sim}H^{r_X}(\cC)$. When encoding a message, it is implicitly assumed that $\Enc_{\cC}$ refers to this induced map acting on cohomology arising from the chain map $\Enc_{\cC}:\cM^*\rightarrow\cC^*$.
\end{remark}

\begin{remark}
  \label{remark:encdual}
  By \Cref{def:classenc}, it holds for every $c_{\bar{\ell}},c_{\bar{\ell}}'\in\bF_2^{V_{\bar{\ell}}}$ that
  \begin{equation*}
    c_{\bar{\ell}}\cdot c_{\bar{\ell}}' = \Enc^0(c_{\bar{\ell}})\cdot\Enc_0(c_{\bar{\ell}}').
  \end{equation*}
  Hence letting
  \begin{align*}
    \cC^* &= \cC_*(\epsilon,\bar{\ell})^{\otimes r_X}\otimes\cC^*(\epsilon,\bar{\ell})^{\otimes r_Z} = \cC^*(\epsilon,\bar{\ell},r_X,r_Z) \\
    \cM^* &= \cM_*(\epsilon,\bar{\ell})^{\otimes r_X}\otimes\cM^*(\epsilon,\bar{\ell})^{\otimes r_Z} = \cM^*(\epsilon,\bar{\ell},r_X,r_Z) \\
    \Enc_{\cC} &= (\Enc_*)^{\otimes r_X}\otimes(\Enc^*)^{\otimes r_Z}
  \end{align*}
  and
  \begin{align*}
    {\cC^\vee}^* &= \cC^*(\epsilon,\bar{\ell})^{\otimes r_X}\otimes\cC_*(\epsilon,\bar{\ell})^{\otimes r_Z} \cong \cC^*(\epsilon,\bar{\ell},r_Z,r_X) \\
    {\cM^\vee}^* &= \cM^*(\epsilon,\bar{\ell})^{\otimes r_X}\otimes\cM_*(\epsilon,\bar{\ell})^{\otimes r_Z} \cong \cM^*(\epsilon,\bar{\ell},r_Z,r_X) \\
    \Enc_{\cC^\vee} &= (\Enc^*)^{\otimes r_X}\otimes(\Enc_*)^{\otimes r_Z},
  \end{align*}
  where the isomorphisms above simply swap the first $r_X$ tensor factors with the latter $r_Z$ factors,
  then it follows that for every $x,x'\in\bF_2^{(V_{\bar{\ell}})^r}$, we similarly have
  \begin{equation*}
    x\cdot x' = \Enc_{\cC}(x)\cdot\Enc_{\cC^\vee}(x').
  \end{equation*}
  Thus $\Enc_{\cC}$ and $\Enc_{\cC^\vee}$ are dual encoding maps in the sense of \Cref{def:encdual}. Hence by \Cref{lem:encdual}, when we apply Hadamard gates to all qubits to switch between the $X$ and $Z$ bases, we go from the code and encoding map associated to $\cC^*(\epsilon,\bar{\ell},r_X,r_Z)$ to those associated to $\cC^*(\epsilon,\bar{\ell},r_Z,r_X)$, up to a reordering of the tensor factors. Throughout the paper we are therefore able to analyze gadgets in one of the two bases, and then the same analysis applies to an analogous gadget in the oppose basis.
\end{remark}

\subsubsection{Families of Bad Sets for Codes}
\label{sec:codebadsets}
To complete a decoration of $\cC^*(\epsilon,\bar{\ell},r_X,r_Z)$, we need to specify families of bad sets $\cE_X,\cE_Z\subseteq 2^{C^{r_X}(\epsilon,\bar{\ell},r_X,r_Z)}$. For this purpose, it will suffice to define $\cE_Z$; we will then obtain $\cE_X$ by applying our definition of $\cE_Z$ to the dual complex $\cC_*(\epsilon,\bar{\ell},r_X,r_Z)\cong\cC^*(\epsilon,\bar{\ell},r_X,r_Z)$.
We will first define $\cE_Z$ for $\cC^*(\epsilon,\bar{\ell},0,r_Z)$ with $r_X=0$. We will then inductively define $\cE_Z$ for $\cC^*(\epsilon,\bar{\ell},r_X,r_Z)$ from $\cC^*(\epsilon,\bar{\ell},r_X-1,r_Z)$.

Our definition of $\cE_Z$ will be somewhat non-standard, as we will use a non-uniform error model. That is,
% Assuming $1/\epsilon,\;r=O(1)$ are constant, this code can be shown to have $\poly(2^{\bar{\ell}})$ distance.
to facilitate our decoding and ejection procedures, we will assume that a small fraction of the code's physical qubits have subconstant error rates, meaning that they are very unlikely to experience an error. We will present gadgets for state preparation, error correction, and in/ejection under this unequal error model.

This error model is realizable because the fraction of qubits needing low error rates is sufficiently small that we can afford to concatenate with an inner code to simulate such low-error qubits, while only increasing the overall block length by a constant factor. We will formalize this intuition by showing how to concatenate with the construction of \cite{golowich_constant-overhead_2025}, which supports all of the operations we need for the inner code with constant space-time overhead.

We begin by defining the following ``reducer linear'' map, which allows us to ``reduce'' elements of $\bF_2^{V_0\sqcup\cdots\sqcup V_{\bar{\ell}}}$ down into elements of the subspace $\bF_2^{V_{\bar{\ell}}}$ by adding in appropriate coboundaries.

\begin{definition}
  \label{def:pushdown}
  Let $\cA^*=\cC^*(\epsilon,\bar{\ell},1,0)\cong\cC_*(\epsilon,\bar{\ell})$, so that $A^0=V_0\sqcup\cdots\sqcup V_{\bar{\ell}-1}$ and $A^1=V_0\sqcup\cdots\sqcup V_{\bar{\ell}}$. We inductively define a linear \emph{reducer map} map $\Rd:\cA^1\rightarrow\cA^0$ as follows. For the base case, for $v^1\in\bF_2^{V_{\bar{\ell}}}\subseteq\cA^1$ we let $\Rd(v^1)=0$. For the inductive step, for $v^1\in\bF_2^{V_\ell}\subseteq\cA^1$ where $\ell<\bar{\ell}$, we let
  \begin{equation*}
    \Rd(v^1) = \Rd(\delta^{\cA}(v^0)+v^1)+v^0,
  \end{equation*}
  where $v^0\in\bF_2^{V_\ell}\subseteq\cA^0$ denotes the vector $v^1$ moved to components $V_\ell$ in $A^0$, instead of in $A^1$.
\end{definition}

Note that by the definition of $\delta^{\cA}$ (as the transpose of the map $\delta^{\cC}$ defined in \Cref{eq:classcode}), in \Cref{def:pushdown} we have $\delta^{\cA}(v^0)\in\bF_2^{V_\ell\sqcup V_{\ell+1}}$ with $\delta^{\cA}(v^0)|_{V_\ell}=v^1$, and hence $\delta^{\cA}(v^0)+v^1\in\bF_2^{V_{\ell+1}}$, so our induction on $\ell$ is well-defined.

The following lemma presents the key property of the reducer map.

\begin{lemma}
  \label{lem:pdkey}
  Defining all variables as in \Cref{def:pushdown}, we have $\delta^{\cA}(\Rd(v^1))+v^1\in\bF_2^{V_{\bar{\ell}}}\subseteq\cA^1$.
\end{lemma}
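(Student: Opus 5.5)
The plan is to argue by downward induction on $\ell$, following the inductive structure of \Cref{def:pushdown}, and then to extend to general $v^1\in\cA^1$ by linearity. Since $\Rd$ is defined on each summand $\bF_2^{V_\ell}\subseteq\cA^1$ and extended linearly, and the map $v^1\mapsto\delta^{\cA}(\Rd(v^1))+v^1$ is linear, it suffices to prove the claim for $v^1$ supported in a single block $V_\ell$. Because $\bF_2^{V_{\bar{\ell}}}$ is a subspace, a sum of vectors each lying in it also lies in it.

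For the base case $\ell=\bar{\ell}$ we have $\Rd(v^1)=0$. Hence $\delta^{\cA}(\Rd(v^1))+v^1=v^1\in\bF_2^{V_{\bar{\ell}}}$, as required.

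For the inductive step, fix $\ell<\bar{\ell}$ and assume the claim for all vectors supported in $V_{\ell'}$ with $\ell'>\ell$. Let $v^0$ be as in \Cref{def:pushdown}, and set $w=\delta^{\cA}(v^0)+v^1$. The note following \Cref{def:pushdown} shows $w\in\bF_2^{V_{\ell+1}}$, so the inductive hypothesis applies to $w$ and gives $\delta^{\cA}(\Rd(w))+w\in\bF_2^{V_{\bar{\ell}}}$. By linearity of $\delta^{\cA}$,
\begin{equation*}
\delta^{\cA}(\Rd(v^1))+v^1 = \delta^{\cA}(\Rd(w))+\delta^{\cA}(v^0)+v^1 = \delta^{\cA}(\Rd(w))+w,
\end{equation*}
which lies in $\bF_2^{V_{\bar{\ell}}}$. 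This closes the induction.

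The only substantive point is the fact $w\in\bF_2^{V_{\ell+1}}$. This follows from $\delta^{\cA}=(\delta^{\cC})^\top$ together with \Cref{eq:classcode}. Column $c\in V_\ell\subseteq A^0=C^1$ of $\delta^{\cA}$ has nonzero entries exactly at $c\in V_\ell\subseteq A^1$ (the identity part) and at the $G^{(\ell+1)}$-neighbors of $c$ in $V_{\ell+1}$. Therefore $\delta^{\cA}(v^0)$ agrees with $v^1$ on $V_\ell$ and is otherwise supported in $V_{\ell+1}$. Given that fact, the rest is bookkeeping, so I do not expect any real obstacle.
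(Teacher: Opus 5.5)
Your proof is correct and is the paper's argument: the paper uses the same identity $\delta^{\cA}(\Rd(v^1))+v^1=\delta^{\cA}(\Rd(w))+w$ with $w=\delta^{\cA}(v^0)+v^1\in\bF_2^{V_{\ell+1}}$, and repeats the substitution until it reaches $V_{\bar{\ell}}$, which is exactly your downward induction. Your explicit reduction to single-block inputs by linearity, and your check of $w\in\bF_2^{V_{\ell+1}}$ from the column structure of $\delta^{\cA}$, spell out points the paper leaves implicit.
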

\begin{proof}
  By definition
  \begin{equation}
    \label{eq:pdend}
    \delta^{\cA}(\Rd(v^1))+v^1 = \delta^{\cA}(\Rd(\delta^{\cA}(v^0)+v^1))+(\delta^{\cA}(v^0)+v^1),
  \end{equation}
  where $v^1\in\bF_2^{V_\ell}$ and $\delta^{\cA}(v^0)+v^1\in\bF_2^{V_{\ell+1}}$. Therefore while the LHS of \Cref{eq:pdend} lies in $\bF_2^{V_\ell\sqcup\cdots\sqcup V_{\bar{\ell}}}$, the RHS lies in $\bF_2^{V_{\ell+1}\sqcup\cdots\sqcup V_{\bar{\ell}}}$ Therefore we may repeatedly substitute $\delta^{\cA}(v^0)+v^1$ in for $v^1$ in \Cref{eq:pdend}, and we conclude that $\delta^{\cA}(\Rd(v^1))+v^1$ in fact lies in $\bF_2^{V_{\bar{\ell}}}$, as desired.
\end{proof}

We now generalize the reducer map to higher-dimensional complexes $\cC^*(\epsilon,\bar{\ell},r_X,r_Z)$, by sequentially applying the reducer map for the 1-dimensional case in \Cref{def:pushdown} to each of the $r_X$ factors of $\cC_*(\epsilon,\bar{\ell})$ in the product complex $\cC^*(\epsilon,\bar{\ell},r_X,r_Z)$.

\begin{definition}
  \label{def:reducer}
  Let $\cC^*=\cC^*(\epsilon,\bar{\ell},r_X,r_Z)$. For $0\leq i\leq r=r_X+r_Z$, we define the linear \emph{reducer map} $\Rd:\cC^i\rightarrow\cC^{i-1}$ inductively as follows ($\cC$ and $i$ will always be made clear from context when writing $\Rd$). Let $c\in\cC^i$. For the base case, if $r_X=0$ or $i=0$, we let $\Rd(c)=0$.

  For the inductive step, if $r_X\geq 1$ and $i\geq 1$, we decompose $\cC^*=\cA^*\otimes\cB^*$ for $\cA^*=\cC^*(\epsilon,\bar{\ell},1,0)\cong\cC_*(\epsilon,\bar{\ell})$ and $\cB^*=\cC^*(\epsilon,\bar{\ell},r_X-1,r_Z)$, so that $C^i=A^0\times B^i\sqcup A^1\times B^{i-1}$. Letting
  \begin{align*}
    f^0 &:= \Rd\otimes I_{B^{i-1}}(c|_{A^1\times B^{i-1}}) \in \cA^0\otimes\cB^{i-1} \subseteq \cC^{i-1} \\
    c' &:= (c+\delta^{\cC}(f^0))|_{V_{\bar{\ell}}\times B^{i-1}} = c+(\delta^{\cA}\otimes I_B)(f^0)|_{V_{\bar{\ell}}\times B^{i-1}} \in \bF_2^{V_{\bar{\ell}}\times B^{i-1}} \\
    f^1 &:= I_{V_{\bar{\ell}}}\otimes\Rd(c') \in \bF_2^{V_{\bar{\ell}}\times B^{i-2}} \subseteq \cA^1\otimes\cB^{i-2} \subseteq \cC^{i-1},
  \end{align*}
  we then define
  \begin{align*}
    \Rd(c) = (f^0,f^1) \in \cA^0\otimes\cB^{i-1}\oplus\cA^1\otimes\cB^{i-2} = \cC^{i-1}.
  \end{align*}
\end{definition}

We make a few remarks regarding \Cref{def:reducer}. When $r_X=1$ and $r_Z=0$, then $\cB^*=\cC^*(\epsilon,\bar{\ell},0,0)$ is the $0$-dimensional complex with $|B^0|=1$, so $\Rd$ in \Cref{def:reducer} equals $\Rd$ in \Cref{def:pushdown}.

\begin{lemma}
  \label{lem:redprops}
  Define all variables as in \Cref{def:reducer}. Then the following hold:
  \begin{enumerate}
  \item\label{it:rpsupp}
    The vector $c+\delta^{\cC}(\Rd(c))\in\cC^i=\cA^1\otimes\cB^{i-1}\oplus\cA^0\otimes\cB^i$ is supported on components $V_{\bar{\ell}}\times B^{i-1}\sqcup A^0\times B^i$; that is, $(c+\delta^{\cC}(\Rd(c)))|_{(V_0\sqcup\cdots\sqcup V_{\bar{\ell}})\times B^{i-1}}=0$. % technical condition, helps prove the next item
    % The vector $c+\delta^{\cC}(f^0)\in\cA^1\otimes\cB^{i-1}\oplus\cA^0\otimes\cB^i$ is supported on components $V_{\bar{\ell}}\times B^{i-1}\sqcup A^0\times B^i$; that is, $(c+\delta^{\cC}(f^0))|_{(V_0\sqcup\cdots\sqcup V_{\bar{\ell}})\times B^{i-1}}=0$. % technical condition, helps prove the next item
  \item\label{it:rpwelldef} $\Rd(c+\delta^{\cC}(\Rd(c)))=0$. % ensures reduction is reduced
  \item If $\Rd(c)\neq 0$, then $\delta^{\cC}(\Rd(c))|_{A^1\times B^{i-1}}\neq 0$. % says that if not reduced, then reduction is different; follows from previous item
  \item\label{it:rpredred} $\Rd(\Rd(c))=0$ % Says the reducer is reduced, so when running state prep, the reducer we add in turns into a harmless error one level up (in fact want the slightly stronger statement that it is reduced and has no support inside the logical sector $(V_{\bar{\ell}})^r$?)
  \end{enumerate}
\end{lemma}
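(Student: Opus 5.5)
We argue by induction on $r_X$, following the inductive structure of \Cref{def:reducer}. The base case $r_X=0$ or $i=0$ is immediate because $\Rd\equiv 0$. For the inductive step we write $\cC^*=\cA^*\otimes\cB^*$ with $\cA^*\cong\cC_*(\epsilon,\bar{\ell})$. The induction hypothesis is applied to $\cB^*=\cC^*(\epsilon,\bar{\ell},r_X-1,r_Z)$. We freely use the one-dimensional facts about the reducer of \Cref{def:pushdown}: \Cref{lem:pdkey}, and the observations that $\Rd$ vanishes on $\bF_2^{V_{\bar{\ell}}}\subseteq\cA^1$ and that $\Rd(v^1)\in\bF_2^{V_0\sqcup\cdots\sqcup V_{\bar\ell-1}}$ is built from the pieces $v^0$ at each level.

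\textbf{Item~\ref{it:rpsupp}.} We compute $\delta^{\cC}(\Rd(c))=\delta^{\cC}(f^0)+\delta^{\cC}(f^1)$ and restrict to $A^1\times B^{i-1}$.

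For the $f^0$ term, the restriction of $\delta^{\cC}(f^0)$ to $A^1\times B^{i-1}$ is $(\delta^{\cA}\otimes I)(f^0)$. The contribution $(I\otimes\delta^{\cB})(f^0)$ lands in $A^0\times B^i$. By \Cref{lem:pdkey} applied columnwise, $c|_{A^1\times B^{i-1}}+(\delta^{\cA}\otimes I)(f^0)$ is supported on $V_{\bar\ell}\times B^{i-1}$, and there it equals $c'$.

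For the $f^1$ term, $f^1\in\bF_2^{V_{\bar\ell}}\otimes\cB^{i-2}$. Its coboundary is $\delta^{\cA}\otimes I$, which vanishes on $V_{\bar\ell}\subseteq A^1$ since $A^1$ is the top level of $\cA$. It also contains $I\otimes\delta^{\cB}(f^1)$, which lies in $V_{\bar\ell}\times B^{i-1}$. Hence $\delta^{\cC}(f^1)$ only alters the $V_{\bar\ell}\times B^{i-1}$ block.

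Together these show that everything outside $V_{\bar\ell}\times B^{i-1}$ in $A^1\times B^{i-1}$ cancels, which gives item~\ref{it:rpsupp}. Moreover, on the $V_{\bar\ell}\times B^{i-1}$ block the result is $c'+\delta^{\cB}(\Rd(c'))$, which is \emph{$\cB$-reduced} by the induction hypothesis (item~\ref{it:rpwelldef} for $\cB$).

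\textbf{Item~\ref{it:rpwelldef}.} Let $\tilde c=c+\delta^{\cC}(\Rd(c))$. By item~\ref{it:rpsupp}, $\tilde c|_{A^1\times B^{i-1}}$ is supported on $V_{\bar\ell}\times B^{i-1}$, so $\tilde f^0=(\Rd\otimes I)(\tilde c|_{A^1\times B^{i-1}})=0$ because $\Rd$ kills $V_{\bar\ell}$. Then $\tilde c'=c'+\delta^{\cB}(\Rd(c'))$, and $\tilde f^1=I\otimes\Rd(\tilde c')=0$ by the induction hypothesis. Hence $\Rd(\tilde c)=0$.

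\textbf{Item 3.} Suppose $\delta^{\cC}(\Rd(c))|_{A^1\times B^{i-1}}=0$; we derive $\Rd(c)=0$. The restriction of $\delta^{\cC}(\Rd(c))$ to $(A^1\setminus V_{\bar\ell})\times B^{i-1}$ equals $(\delta^{\cA}\otimes I)(f^0)$ there. The one-dimensional $\Rd$ has the property that $\delta^{\cA}(\Rd(v))$ has nonzero restriction to $V_0\sqcup\cdots\sqcup V_{\bar\ell-1}$ whenever $\Rd(v)\ne 0$. To see this, take the lowest level $\ell$ at which $\Rd(v)$ is supported; the identity component of $\delta^{\cA}$ at level $\ell$ cannot be cancelled. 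This forces $f^0=0$. Then $c'=c|_{V_{\bar\ell}\times B^{i-1}}$, and the $V_{\bar\ell}\times B^{i-1}$ block of $\delta^{\cC}(\Rd(c))$ is $\delta^{\cB}(\Rd(c'))$. The induction hypothesis (item 3 for $\cB$, with $|B^{-1}|$ trivial in the base case) gives $f^1=0$.

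\textbf{Item~\ref{it:rpredred}.} Apply $\Rd$ to $(f^0,f^1)\in\cC^{i-1}$. The $A^1\times B^{i-2}$ component is $f^1$, supported in $V_{\bar\ell}$, so the new $f^0$ is $(\Rd\otimes I)(f^1)=0$. The new $c'$ is $f^1=I\otimes\Rd(c')$, and the new $f^1$ is $I\otimes\Rd(\Rd(c'))=0$ by item~\ref{it:rpredred} for $\cB$.

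The main obstacle is the bookkeeping in item~\ref{it:rpsupp}. The key point is that $\delta^{\cC}(f^1)$ touches only $V_{\bar\ell}\times B^{i-1}$ within $A^1\times B^{i-1}$, together with sign-free $\bF_2$ cancellations, and the argument needs an auxiliary one-dimensional check that $\Rd$ is supported below level $\bar\ell$. Item~3 also requires verifying this lowest-level injectivity claim carefully.
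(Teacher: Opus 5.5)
Your proposal is correct. Items 1, 2 and 4 follow the paper's proof essentially verbatim. Item 1 comes from \Cref{lem:pdkey} applied columnwise, together with the fact that $\delta^{\cC}(f^1)=(I\otimes\delta^{\cB})(f^1)$ is supported in $V_{\bar{\ell}}\times B^{i-1}$. For items 2 and 4, the recomputed $f^0$ vanishes because the relevant $A^1$-component is supported on $V_{\bar{\ell}}$, where the one-dimensional $\Rd$ is zero; the inductive hypothesis for $\cB$ is then applied to the recomputed $c'$.

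For item 3 you take a different route.

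\textbf{Paper's route.} It deduces item 3 in one line from item 2. Since $\Rd(c)$ depends only on $c|_{A^1\times B^{i-1}}$, if $\delta^{\cC}(\Rd(c))$ vanished on $A^1\times B^{i-1}$, then $\Rd(c)=\Rd(c+\delta^{\cC}(\Rd(c)))=0$.

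\textbf{Your route.} You argue directly, in two steps.
\begin{itemize}
\item The map $u\mapsto\delta^{\cA}(u)|_{A^1\setminus V_{\bar{\ell}}}$ on $\cA^0$ is unitriangular with respect to levels. At the lowest level $\ell\leq\bar{\ell}-1$ of $\supp(u)$, the $V_\ell$-component of $\delta^{\cA}(u)$ is just $u_\ell$. This forces $f^0=0$.
\item With $f^0=0$, the $V_{\bar{\ell}}\times B^{i-1}$ block reduces to $(I\otimes\delta^{\cB})(f^1)$, and item 3 for $\cB$ kills $f^1$ row by row.
\end{itemize}
This is sound. The inductive step only needs the weaker statement $\delta^{\cB}(\Rd(b))=0\Rightarrow\Rd(b)=0$. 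That follows from item 3 for $\cB$, because vanishing of the full coboundary implies vanishing of its restriction. In the base case ($r_X=1$ or $i=1$), $\Rd\equiv 0$ on $\cB^{i-1}$, so there is nothing to prove; this is what your parenthetical about $B^{-1}$ should say.

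\textbf{Comparison.} The paper's argument is shorter and reuses item 2 without a second induction or an injectivity check. Yours makes explicit that $\delta^{\cA}$ followed by restriction to $A^1\setminus V_{\bar{\ell}}$ is invertible, with inverse the one-dimensional $\Rd$. The paper records this fact only later, in the proof of \Cref{lem:ejectone}, as $\delta^{\cA}|_1\circ\Rd=I$.
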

\begin{proof}
  \begin{enumerate}
  \item By \Cref{lem:pdkey}, $(c+\delta^{\cC}(f^0))|_{A^1\times B^{i-1}}$ is supported inside $V_{\bar{\ell}}\times B^{i-1}$, and by definition $\delta^{\cC}(f^1)$ is also supported inside $V_{\bar{\ell}}\times B^{i-1}$. Hence $c+\delta^{\cC}(\Rd(c))=c+\delta^{\cC}(f^0)+\delta^{\cC}(f^1)$ is supported inside $V_{\bar{\ell}}\times B^{i-1}\sqcup A^0\times B^i$, as desired.
  \item We show the result by induction. For the base case, if $r_X=0$ or $i=0$, then $\Rd(c)=0$ and the result holds. For the inductive step, if $r_X\geq 1$ and $i\geq 1$, let $\bar{c}=c+\delta^{\cC}(\Rd(c))$, and let $\bar{f}^0,\bar{c}',\bar{f}^1$ be the associated variables for computing $\Rd(\bar{c})$ in \Cref{def:reducer}. By \Cref{it:rpsupp} above, $\bar{c}|_{A^1\times B^{i-1}}$ is supported inside $V_{\bar{\ell}}\times B^{i-1}$, so $\bar{f}^0=0$, and hence $\bar{c}'=\bar{c}|_{V_{\bar{\ell}}\times B^{i-1}}$. Then
    \begin{align*}
      \Rd(\bar{c})
      &= \bar{f}^1 = I_{V_{\bar{\ell}}}\otimes\Rd(\bar{c}') = I_{V_{\bar{\ell}}}\otimes\Rd(c+\delta^{\cC}(\Rd(c))|_{A^1\times B^{i-1}}) \\
      &= I_{V_{\bar{\ell}}}\otimes\Rd(c'+I_{V_{\bar{\ell}}}\otimes\delta^{\cB}(f^1)) = I_{V_{\bar{\ell}}}\otimes\Rd(c'+I_{V_{\bar{\ell}}}\otimes(\delta^{\cB}\circ\Rd)(c'))
      % \bar{c}'+\delta^{\cC}(f^1) = \bar{c}'+I_{V_{\bar{\ell}}}\otimes\delta^{\cB}(\Rd(\bar{c}')).
    \end{align*}
    Applying the inductive hypothesis for $\cB$ to the rows of $c'$ (each of which is a vector in $\cB^{i-1}$), we conclude that the RHS above, and hence LHS above, vanishes.
  \item If $\Rd(c)\neq 0$ but $\delta^{\cC}(\Rd(c))|_{A^1\times B^{i-1}}=0$, then because $\Rd(c+\delta^{\cC}(\Rd(c)))$ only depends on components of $c+\delta^{\cC}(\Rd(c))$ in $A^1\times B^{i-1}\subseteq C^i$, it follows that $\Rd(c+\delta^{\cC}(\Rd(c)))=\Rd(c)\neq 0$, contradicting \Cref{it:rpwelldef} above.
  \item We show the result by induction. For the base case, if $r_X=0$ or $i=0$, then $\Rd(c)=0$ and the result holds. For the inductive step, if $r_X\geq 1$ and $i\geq 1$, let $\bar{c}=\Rd(c)$, and let $\bar{f}^0,\bar{c}',\bar{f}^1$ be the associated variables for computing $\Rd(\bar{c})$ in \Cref{def:reducer}. Then by definition $\bar{c}|_{A^1\times B^{i-2}}=f^1$ is supported inside $V_{\bar{\ell}}\times B^{i-2}$, so $\bar{f}^0=0$, and hence
    \begin{equation*}
      \Rd(\bar{c}) = \bar{f}^1 = I_{V_{\bar{\ell}}}\otimes\Rd(f^1) = I_{V_{\bar{\ell}}}\otimes(\Rd\circ\Rd)(c').
    \end{equation*}
    The inductive hypothesis for $\cB$ implies that $\Rd\circ\Rd=0$, so the RHS above, and hence the LHS above, vanishes.
  \end{enumerate}
\end{proof}

\Cref{lem:redprops} motivates the following definition:

\begin{definition}
  \label{def:reduced}
  For $\cC^*=\cC^*(\epsilon,\bar{\ell},r_X,r_Z)$, we define the linear \emph{reduction map} $\Rdn:\cC^i\rightarrow\cC^i$ by
  \begin{equation*}
    \Rdn(c) = c+\delta^{\cC}(\Rd(c)).
  \end{equation*}
  We call $\Rdn(c)$ the \emph{reduction} of $c$, and we say $c$ is \emph{reduced} if $\Rd(c)=0$, or equivalently (by \Cref{lem:redprops}) if $\Rdn(c)=c$.
\end{definition}

The following lemma then follows immediately from \Cref{def:reducer,def:reduced}:

\begin{lemma}
  \label{lem:redfactor}
  Fix $\epsilon>0$, $\bar{\ell}\in\bN$, and for $r_X,r_Z\in\bZ_{\geq 0}$ let $\Rdn^{r_X,r_Z}:\cC^*(\epsilon,\bar{\ell},r_X,r_Z)\rightarrow\cC^*(\epsilon,\bar{\ell},r_X,r_Z)$ denote the reduction map in \Cref{def:reduced}. Define $V_Z\subseteq C(\epsilon,\bar{\ell},r_X,r_Z)$ by $V_Z:=(V_{\bar{\ell}})^{r_X}\times C(\epsilon,\bar{\ell},0,r_Z)$. Then for $c\in\cC^i(\epsilon,\bar{\ell},r_X,r_Z)$, we have
  \begin{equation}
    \label{eq:redfactor}
    \Rdn^{r_X,r_Z}(c)|_{V_Z} = (\Rdn^{1,0})^{\otimes r_X} \otimes I_{C(\epsilon,\bar{\ell},0,r_Z)}(c)|_{V_Z}.
  \end{equation}
\end{lemma}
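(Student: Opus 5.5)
We argue by induction on $r_X$, following the recursive structure of \Cref{def:reducer}. For $r_X=0$ we have $\Rd=0$, so $\Rdn^{0,r_Z}=I$, and \Cref{eq:redfactor} holds trivially since $V_Z=C(\epsilon,\bar{\ell},0,r_Z)$. For the inductive step with $r_X\geq 1$, we write $\cC^*=\cA^*\otimes\cB^*$ with $\cA^*=\cC^*(\epsilon,\bar{\ell},1,0)$ and $\cB^*=\cC^*(\epsilon,\bar{\ell},r_X-1,r_Z)$. Then $V_Z=V_{\bar{\ell}}\times V_Z^{\cB}$, where $V_Z^{\cB}=(V_{\bar{\ell}})^{r_X-1}\times C(\epsilon,\bar{\ell},0,r_Z)$. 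Since $V_{\bar{\ell}}\subseteq A^1$ and not $A^0$, restricting to $V_Z$ only sees the component $\cA^1\otimes\cB^{i-1}$, and only its rows indexed by $V_{\bar{\ell}}$.

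We now compute $\Rdn(c)|_{V_{\bar{\ell}}\times B^{i-1}}=c+\delta^{\cC}(f^0)+\delta^{\cC}(f^1)$ restricted there, term by term. The first two terms restricted to $V_{\bar{\ell}}\times B^{i-1}$ give exactly $c'$. The $I\otimes\delta^{\cB}$ part of $\delta^{\cC}(f^0)$ lands in $\cA^0\otimes\cB^i$, so only $(\delta^{\cA}\otimes I)(f^0)$ contributes, and this is the definition of $c'$.

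For the last term, $f^1\in\bF_2^{V_{\bar{\ell}}}\otimes\cB^{i-2}$. The map $\delta^{\cA}\otimes I$ sends it into $\cA^2=0$, since $\cA$ is 1-dimensional. Hence $\delta^{\cC}(f^1)=I_{V_{\bar{\ell}}}\otimes\delta^{\cB}(\Rd(c'))$. It follows that
\[
\Rdn(c)|_{V_{\bar{\ell}}\times B^{i-1}}=(I_{V_{\bar{\ell}}}\otimes\Rdn^{\cB})(c').
\]
Restricting further to $V_{\bar{\ell}}\times V_Z^{\cB}$ and applying the inductive hypothesis row by row gives
\[
\Rdn(c)|_{V_Z}=\bigl(I_{V_{\bar{\ell}}}\otimes(\Rdn^{1,0})^{\otimes(r_X-1)}\otimes I\bigr)(c')|_{V_Z}.
\]

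It remains to identify $c'$ with the first-factor reduction. For each fixed $b\in B^{i-1}$, consider the column $c|_{A^1\times\{b\}}\in\cA^1$. Then $c'$ at column $b$ equals $(v+\delta^{\cA}(\Rd(v)))|_{V_{\bar{\ell}}}=\Rdn^{1,0}(v)|_{V_{\bar{\ell}}}$ with $v=c|_{A^1\times\{b\}}$, because $f^0=(\Rd\otimes I)(c|_{A^1\times B^{i-1}})$ acts columnwise. This gives
\[
c'=\bigl((\Rdn^{1,0}\otimes I_{B^{i-1}})(c|_{A^1\times B^{i-1}})\bigr)\big|_{V_{\bar{\ell}}\times B^{i-1}}.
\]

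Substituting, we obtain $(\Rdn^{1,0}\otimes(\Rdn^{1,0})^{\otimes(r_X-1)}\otimes I)(c)|_{V_Z}$. Here the two operators act on different tensor factors and so commute. We also need the restriction to the $V_{\bar{\ell}}$ rows to commute with the second operator, which holds because that operator acts as $I$ on the first factor. Finally, the restriction to $V_{\bar{\ell}}\subseteq A^1$ on the first factor is compatible with applying $\Rdn^{1,0}$ first, since $\Rdn^{1,0}$ maps $\cA^1$ to $\cA^1$.

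The main obstacle is bookkeeping. Two points need care: that $\delta^{\cC}(f^1)$ has no $\delta^{\cA}$ contribution, and that the grading split $C^i=A^0\times B^i\sqcup A^1\times B^{i-1}$ means $\cA^0$ components never meet $V_Z$. We also need to interpret $(\Rdn^{1,0})^{\otimes r_X}\otimes I$ on $\cC^i$ as acting only on the components whose first $r_X$ factors lie in level $1$. Components with some factor at level $0$ are not in $V_Z$ and so are irrelevant after restriction.
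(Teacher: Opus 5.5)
Your proof is correct and follows the same route as the paper. Both argue by induction on $r_X$ through the recursive definition of $\Rd$: identify $c'$ with the first-factor reduction $(\Rdn^{1,0}\otimes I_{B})(c)|_{V_{\bar{\ell}}\times B}$, then apply the inductive hypothesis rowwise to $c'+\delta^{\cC}(f^1)=(I_{V_{\bar{\ell}}}\otimes\Rdn^{r_X-1,r_Z})(c')$. Your extra bookkeeping fills in steps the paper leaves implicit: $\delta^{\cA}\otimes I$ annihilates $f^1$ because $\cA^2=0$, and $\Rdn^{1,0}$ is degree-preserving, so restricting to $V_Z$ commutes with the operators. You also correctly end with the operator applied to $c$ rather than $c'$.
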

\begin{proof}
  We show the result by induction on $r_X$. For the base case, when $r_X=0$ then $\Rdn^{r_X,r_Z}$ is simply the identity map, so \Cref{lem:redfactor} holds. For the inductive step, let $r_X\geq 1$ and assume the result holds for $r_X'<r_X$. Define all variables as in \Cref{def:reducer}. Then by definition $c'=(c+\delta^{\cC}(f^0))|_{V_{\bar{\ell}}\times B}$ equals
  \begin{equation*}
    c'=\Rdn^{1,0}\otimes I_B(c)|_{V_{\bar{\ell}}\times B}.
  \end{equation*}
  By the inductive hypothesis
  \begin{equation*}
    c'+\delta^{\cC}(f^1)|_{V_Z} = I_{V_{\bar{\ell}}}\otimes\Rdn^{r_X-1,r_Z}(c')|_{V_Z} = I_{V_{\bar{\ell}}}\otimes(\Rdn^{1,0})^{\otimes r_X-1}\otimes I_{C(\epsilon,\bar{\ell},0,r_Z)}(c')|_{V_Z},
  \end{equation*}
  so combining the two equations above, we have
  \begin{align*}
    \Rdn^{r_X,r_Z}(c)|_{V_Z}
    &= (c+\delta^{\cC}(f^0)+\delta^{\cC}(f^1))|_{V_Z} \\
    &= c'+\delta^{\cC}(f^1)|_{V_Z} \\
    &= (\Rdn^{1,0})^{\otimes r_X}\otimes I_{C(\epsilon,\bar{\ell},0,r_Z)}(c')|_{V_Z},
  \end{align*}
  as desired.
\end{proof}

\Cref{lem:redsupp} below shows that a cochain is reduced iff its support avoids an appropriate \emph{reducer set}, which we now define in \Cref{def:redsupp}:

\begin{definition}
  \label{def:redsupp}
  Let $\cC^*=\cC^*(\epsilon,\bar{\ell},r_X,r_Z)$. We inductively define the \emph{reducer set} $\cE_{Z,\Rd}^{\cC}\subseteq C$ as follows. For the base case, if $r_X=0$, we let $\cE_{Z,\Rd}^{\cC}=\emptyset$.

  For the inductive step, if $r_X\geq 1$, as in \Cref{def:reducer} we decompose $\cC^*=\cA^*\otimes\cB^*$ for $\cA^*=\cC^*(\epsilon,\bar{\ell},1,0)\cong\cC_*(\epsilon,\bar{\ell})$ and $\cB^*=\cC^*(\epsilon,\bar{\ell},r_X-1,r_Z)$, so that $C=A\times B=(A^0\times B)\sqcup(A^1\times B)$. Recalling that $V_{\bar{\ell}}\subseteq A^1$, we then define $\cE_{Z,\Rd}^{\cC}\subseteq A^1\times B\subseteq C$ by
  \begin{equation*}
    \cE_{Z,\Rd}^{\cC} = ((A^1\setminus V_{\bar{\ell}})\times B) \sqcup (V_{\bar{\ell}}\times\cE_{Z,\Rd}^{\cA}).
  \end{equation*}

  We then define $\cE_{X,\Rd}^{\cC}\subseteq C$ to equal the reducer set $\cE_{Z,\Rd}^{\cC^\vee}$ for the dual complex ${\cC^\vee}^*=\cC_*\cong\cC^*(\epsilon,\bar{\ell},r_Z,r_X)$.
\end{definition}

\begin{claim}
  \label{claim:redcapVZ}
  Let $\cC^*=\cC^*(\epsilon,\bar{\ell},r_X,r_Z)$ and $V_Z=(V_{\bar{\ell}})^{r_X}\times C(\epsilon,\bar{\ell},0,r_Z)$. Then $\cE_{Z,\Rd}^{\cC}\cap V_Z=\emptyset$.
\end{claim}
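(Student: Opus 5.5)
We prove the claim by induction on $r_X$, following the inductive structure of \Cref{def:redsupp}. In the base case $r_X=0$ the reducer set $\cE_{Z,\Rd}^{\cC}$ is empty by definition, so the intersection with $V_Z$ is trivially empty. (Here $V_Z=C(\epsilon,\bar{\ell},0,r_Z)$ is the whole basis set, but this does not matter.)

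For the inductive step with $r_X\geq 1$, we decompose $\cC^*=\cA^*\otimes\cB^*$ exactly as in \Cref{def:redsupp}, with $\cA^*=\cC^*(\epsilon,\bar{\ell},1,0)$ and $\cB^*=\cC^*(\epsilon,\bar{\ell},r_X-1,r_Z)$. The first step is to factor $V_Z$ compatibly with this decomposition:
\begin{equation*}
  V_Z=(V_{\bar{\ell}})^{r_X}\times C(\epsilon,\bar{\ell},0,r_Z)=V_{\bar{\ell}}\times V_Z^{\cB},
\end{equation*}
where $V_Z^{\cB}=(V_{\bar{\ell}})^{r_X-1}\times C(\epsilon,\bar{\ell},0,r_Z)\subseteq B$ is the analogous set for $\cB$. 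Here $V_{\bar{\ell}}\subseteq A^1$.

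We then intersect $V_Z$ with each of the two pieces of $\cE_{Z,\Rd}^{\cC}$ separately.
\begin{itemize}
\item The first piece $(A^1\setminus V_{\bar{\ell}})\times B$ has first coordinate outside $V_{\bar{\ell}}$. Since every element of $V_Z$ has first coordinate in $V_{\bar{\ell}}$, this piece is disjoint from $V_Z$.
\item The second piece is $V_{\bar{\ell}}\times\cE_{Z,\Rd}^{\cB}$. Its intersection with $V_{\bar{\ell}}\times V_Z^{\cB}$ equals $V_{\bar{\ell}}\times(\cE_{Z,\Rd}^{\cB}\cap V_Z^{\cB})$, which is empty by the inductive hypothesis applied to $\cB$.
\end{itemize}

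The only real subtlety is notational. \Cref{def:redsupp} writes the recursive piece as $V_{\bar{\ell}}\times\cE_{Z,\Rd}^{\cA}$, but since it must be a subset of $A^1\times B$, it should be read as $V_{\bar{\ell}}\times\cE_{Z,\Rd}^{\cB}$. This matches the pattern of \Cref{def:reducer}, where the recursion is always on the $\cB$ factor. We adopt this reading, under which the induction closes directly.

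One should also check that the ordering of tensor factors in $V_Z$ agrees with the decomposition $\cA\otimes\cB$: the $\cA$ factor is the first $\cC_*$ factor. With that checked, no further obstacle is expected.
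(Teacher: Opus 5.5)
Your proof is correct and takes the same approach as the paper. The paper states directly that every element of $\cE_{Z,\Rd}^{\cC}$ has one of its first $r_X$ coordinates outside $V_{\bar{\ell}}$; your induction over the recursive definition is exactly what establishes this. Your reading of the recursive piece as $V_{\bar{\ell}}\times\cE_{Z,\Rd}^{\cB}$ rather than $V_{\bar{\ell}}\times\cE_{Z,\Rd}^{\cA}$ is the correct one, and the paper itself uses $\cE_{Z,\Rd}^{\cB}$ in the converse direction of its proof of \Cref{lem:redsupp}.
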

\begin{proof}
  By the inductive definition of $\cE_{Z,\Rd}^{\cC}$ in \Cref{def:redsupp}, every element of $\cE_{Z,\Rd}^{\cC}\subseteq C(\epsilon,\bar{\ell})^{\times r_X}\times C(\epsilon,\bar{\ell})^{\times r_Z}$ has projection onto one of the first $r_X$ coordinates lying outside of $V_{\bar{\ell}}$. Hence no such element can lie inside $V_Z=(V_{\bar{\ell}})^{\times r_X}\times C(\epsilon,\bar{\ell})^{\times r_Z}$.
\end{proof}

\begin{lemma}
  \label{lem:redsupp}
  Let $\cC^*=\cC^*(\epsilon,\bar{\ell},r_X,r_Z)$. A cochain $c\in\cC^i$ is reduced iff $\supp(c)\cap\cE_{Z,\Rd}^{\cC}=\emptyset$.
\end{lemma}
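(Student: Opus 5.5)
Induction on $r_X$, mirroring the inductive structure of \Cref{def:reducer} and \Cref{def:redsupp}. (The second term in \Cref{def:redsupp} is presumably meant to read $V_{\bar{\ell}}\times\cE_{Z,\Rd}^{\cB}$, since it must be a subset of $V_{\bar{\ell}}\times B$; I will use it that way.)

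\emph{Base case.} If $r_X=0$, then $\Rd=0$ by definition, so every cochain is reduced, and $\cE_{Z,\Rd}^{\cC}=\emptyset$; both sides of the equivalence hold trivially. The case $i=0$ with $r_X\geq 1$ is handled separately. Here $\Rd(c)=0$ automatically, and $C^0=A^0\times B^0$ is disjoint from $A^1\times B\supseteq\cE_{Z,\Rd}^{\cC}$, so the statement again holds.

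\emph{Inductive step} ($r_X\geq 1$, $i\geq 1$). Write $c=(c|_{A^1\times B^{i-1}},c|_{A^0\times B^i})$. Note that $\Rd(c)$ depends only on $c|_{A^1\times B^{i-1}}$. I then prove the one-dimensional claim first: for $v\in\cA^1$, we have $\Rd(v)=0$ if and only if $\supp(v)\subseteq V_{\bar{\ell}}$.
\begin{itemize}
\item The ``if'' direction is the base case of \Cref{def:pushdown}.
\item For ``only if'', let $\ell<\bar{\ell}$ be the minimal level on which $v$ is supported. Unrolling the recursion in \Cref{def:pushdown} (extended by linearity), $\Rd(v)=v|_{V_\ell}+(\text{terms supported in }V_{\ell+1}\sqcup\cdots\sqcup V_{\bar{\ell}-1}\subseteq A^0)$. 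Here $v|_{V_\ell}$ denotes the vector moved into $A^0$, and each recursive call only produces $v^0$ at levels $\geq\ell+1$. Hence the $V_\ell$-component of $\Rd(v)$ is nonzero.
\end{itemize}
Applying this row-wise, with $f^0=\Rd\otimes I(c|_{A^1\times B^{i-1}})$, gives $f^0=0$ if and only if $\supp(c)\cap((A^1\setminus V_{\bar{\ell}})\times B)=\emptyset$.

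Now split into cases.
\begin{itemize}
\item Suppose $\supp(c)$ avoids $\cE_{Z,\Rd}^{\cC}$. Then $f^0=0$, so $c'=c|_{V_{\bar{\ell}}\times B^{i-1}}$. Each row $c'_a\in\cB^{i-1}$ for $a\in V_{\bar{\ell}}$ avoids $\cE_{Z,\Rd}^{\cB}$, so by the inductive hypothesis for $\cB$ it is reduced. Hence $f^1=0$ and $\Rd(c)=0$.
\item Suppose $\supp(c)$ meets $\cE_{Z,\Rd}^{\cC}$. If it meets $(A^1\setminus V_{\bar{\ell}})\times B$, then $f^0\neq 0$. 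Since $f^0\in\cA^0\otimes\cB^{i-1}$ and $f^1\in\cA^1\otimes\cB^{i-2}$ are different direct-sum components of $\cC^{i-1}$, this gives $\Rd(c)\neq 0$. Otherwise $f^0=0$, so $c'=c|_{V_{\bar{\ell}}\times B^{i-1}}$, and some row $c'_a$ meets $\cE_{Z,\Rd}^{\cB}$. By induction that row has $\Rd(c'_a)\neq 0$, so $f^1\neq 0$ and $\Rd(c)\neq0$.
\end{itemize}

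The main obstacle I expect is the one-dimensional ``only if'' claim: I must make sure that no cancellation occurs at the lowest level $\ell$. The recursion $\Rd(v^1)=\Rd(\delta^{\cA}(v^0)+v^1)+v^0$ with $\delta^{\cA}(v^0)+v^1\in\bF_2^{V_{\ell+1}}$ guarantees this, since every subsequent contribution lives strictly above level $\ell$. Linearity handles vectors spread over several levels by inducting downward from $\bar{\ell}$. The remaining care is purely bookkeeping: confirming that $\Rd(c)$ ignores $c|_{A^0\times B^i}$, which matches $\cE_{Z,\Rd}^{\cC}\subseteq A^1\times B$.
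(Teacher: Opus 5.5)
Your proof is correct and follows the paper's argument: you induct on $r_X$ along \Cref{def:reducer}, use the one-dimensional fact that $\Rd$ on $\cA^1$ vanishes exactly on $\bF_2^{V_{\bar{\ell}}}$ (the paper's \Cref{claim:rs1d}) to get $f^0=0$ iff $c$ avoids $(A^1\setminus V_{\bar{\ell}})\times B$, and then apply the inductive hypothesis row-wise to $c'$ to handle $f^1$. You go slightly further by justifying the one-dimensional claim via the lowest-level component of $\Rd(v)$, which the paper calls immediate. Your reading of the second term of \Cref{def:redsupp} as $V_{\bar{\ell}}\times\cE_{Z,\Rd}^{\cB}$ matches what the paper's proof actually uses.
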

\begin{proof}
  We show the result by induction, following the inductive definition of $\Rd$ in \Cref{def:reducer}. For the base case, if $r_X=0$, then all cochains are reduced, and indeed $\cE_{Z,\Rd}^{\cC}=\emptyset$.

  For the inductive step, assume $r_X\geq 1$, and that the lemma holds for $r_X-1$. As in \Cref{def:reducer,def:redsupp}, let $\cC^*=\cA^*\otimes\cB^*$ for $\cA^*=\cC^*(\epsilon,\bar{\ell},1,0)\cong\cC_*(\epsilon,\bar{\ell})$ and $\cB^*=\cC^*(\epsilon,\bar{\ell},r_X-1,r_Z)$. We will use the following claim, which follows immediately from \Cref{def:pushdown}:
  
  \begin{claim}
    \label{claim:rs1d}
    The reducer map for the 1-dimensional complex $\cA^*=\cC^*(\epsilon,\bar{\ell},1,0)$ has kernel $\bF_2^{V_{\bar{\ell}}}$, and hence is nonvanishing on only those inputs whose support has nonempty intersection with $A^1\setminus V_{\bar{\ell}}$.
  \end{claim}

  Define the variables $f^0,f^0,c'$ as in \Cref{def:reducer}. First assume that $\supp(c)\cap\cE_{Z,\Rd}^{\cC}=\emptyset$, i.e.~$c|_{\cE_{Z,\Rd}^{\cC}}=0$. Then $c|_{(A^1\setminus V_{\bar{\ell}})\times B}=0$, so \Cref{claim:rs1d} implies that $f^0=0$. Hence $c'=c|_{V_{\bar{\ell}}\times B}$. But we also have that $c|_{V_{\bar{\ell}}\times\cE_{Z,\Rd}^{\cA}}=0$, so by the inductive hypothesis $f^1=0$. Thus $\Rd(c)=(f^0,f^1)=0$, so $c$ is reduced.

  For the opposite implication, now assume that $c$ is reduced, so that $\Rd(c)=(f^0,f^1)=0$. Because $f^0=0$, \Cref{claim:rs1d} implies that $c|_{(A^1\setminus V_{\bar{\ell}})\times B}=0$. Then because $c'=c|_{V_{\bar{\ell}}\times B}$ and $f^1=0$, the inductive hypothesis also implies that $c|_{V_{\bar{\ell}}\times\cE_{Z,\Rd}^{\cB}}=0$. Thus $c|_{\cE_{Z,\Rd}^{\cC}}=0$, as desired.
\end{proof}

% TODO: actually we may want to run decoder on corrupted codeword, rather than on syndrome, in which case we'd be reducing the entire corrupted codeword; should check if that's ok? (or could run decoder on syndrome and then can assume error is corrupted; that might be cleaner, as small-set flip already runs on syndrome, and clearing out the top sector can also be done using syndrome)

We will include the reducer set $\cE_{Z,\Rd}^{\cC}$ in our families of bad sets for our codes, in order to force all errors to be reduced; our decoders will always act on reduced errors. By definition the quantum code at level $i$ of a cochain complex $\cC^*$ is invariant under $\gX$-stabilizers, which are of the form $\gX^{\delta^{\cC}(c)}$ for $c\in\cC^{i-1}$. Therefore an error $\gX^e$ for $e\in\cC^i$ has the same effect on the code state as its reduction $\gX^{\Rd(e)}$.

We begin by defining families of bad sets for $\cC^*(\epsilon,\bar{\ell},0,r_Z)$ with $r_X=0$; in this case, by definition all cochains are reduced. We will then inductively extend the definition to larger $r_X$. Intuitively, bad sets will be given by sets of basis elements in $C$ that either:
\begin{enumerate}
\item Are dense within a large connected subgraph of a \emph{connectivity graph} (defined below), or
\item Have large \emph{weighted} Hamming norm for an appropriate weighting (defined below).
\end{enumerate}
We follow the notation of \cite{golowich_constant-overhead_2025} to present the first condition, i.e.~density within a large connected subgraph. However, this idea of characterizing bad sets to be dense within large connected subgraphs had been previously used, such as in \cite{kovalev_fault_2013,gottesman_fault-tolerant_2014,bombin_single-shot_2015,fawzi_efficient_2018}.

We begin by defining bad sets that are dense within a large connected subgraph of some graph.

% \lnote{TODO: in definition below, could remove requirement that $S\subseteq V(G')$, and instead require that $|S\cap V(G')|/|V(G')|\geq\gamma$, so then then family of bad sets would be upward-closed}

\begin{definition}
  \label{def:clusterfam}
  For a graph $G=(V,E)$ and for $\eta,\gamma>0$, we define $\cE(G,\eta,\gamma)\subseteq 2^V$ to be the family of sets $S\subseteq V$ for which there exists some connected subgraph $G'$ of $G$ such that $S\subseteq V(G')$, $|V(G')|\geq\eta$, and $|S|/|V(G')|\geq\gamma$.
  % If $G=(V,E,W)$ additionally has a vertex-weight function $W:V\rightarrow\bR$, we let $\cE(G,\eta,\gamma,\omega)\subseteq 2^V$ be the family containing every $S\in\cE(G,\eta,\gamma)$, along with every $S\subseteq V$ of total weight $|S|_W:=\sum_{v\in S}W(v)\geq\omega$.
  % If $\cC^*=\cC^*(\epsilon,\bar{\ell},r_X,r_Z)$, we additionally define a weight function $W^{\cC}:C\rightarrow\bR$ such that for every $c\in \prod_{i\in[r]}V_{\ell_i}\subseteq C$, then $W^{\cC}(c)=\beta^{\sum_{i\in[r]}(\bar{\ell}-\ell_i)}$ for $\beta=\beta^{\cC}=8\Delta(\epsilon)$ where $\Delta(\epsilon)$ is defined as in \Cref{def:classcode}.
\end{definition}

We will specifically use the connectivity graph defined as follows.

\begin{definition}
  \label{def:conngraph}
  For an $r$-dimensional cochain complex $\cC^*$, we define a \emph{connectivity graph} $G^{\cC}$ with vertex set $C=\bigsqcup_{i\in[r]}C^i$, and an edge connecting every $c,c'\in C$ for which either there exists some $c^0\in C^0$ with $c^0\preceq c,c'$, or there exists some $c^r\in C^r$ with $c^r\succeq c,c'$.
\end{definition}

We also define the following weighted Hamming norm for $\cC^*(\epsilon,\bar{\ell},r_X,r_Z)$.

\begin{definition}
  \label{def:weightnorm}
  Let $\cC^*=\cC^*(\epsilon,\bar{\ell},r_X,r_Z)$ and let $\beta\geq 1$.
  % If $\cC^*=\cC^*(\epsilon,\bar{\ell},r_X,r_Z)$, we additionally let $\beta=\beta(\epsilon)=8\Delta(\epsilon)$ where $\Delta(\epsilon)$ is defined as in \Cref{def:classcode}.
  For every $c\in C$, recalling that there exist $\ell_1,\dots,\ell_r\in\{0,\dots,\bar{\ell}\}$ such that $c\in\prod_{i\in[r]}V_{\ell_i}$, we define the \emph{level} of $c$ to be $\Lev(c):=\sum_{i\in[r]}(\bar{\ell}-\ell_i)$. Then we define the \emph{$\beta$-weighted Hamming norm} $|\cdot|_\beta:\cC\rightarrow\bR$ by $|c|_\beta=\sum_{c'\in\supp(c)}\beta^{\Lev(c')}$.
  We will also write $|\supp(c)|_\beta=|c|_\beta$, so that we can apply the $\beta$-weighted Hamming norm to subsets $\supp(c)\subseteq C$.
\end{definition}

We use the following notation for families of subsets with a given lower bound on the weighted Hamming norm:

\begin{definition}
  Let $\cC^*=\cC^*(\epsilon,\bar{\ell},r_X,r_Z)$ and let $\beta\geq 1$. For $\omega\in\bR$ and for a set $S\subseteq C$, we let
  \begin{equation*}
    2^S|_\beta^{\geq\omega} = \{E\subseteq S:|E|_\beta\geq\omega\}.
  \end{equation*}
  When $\beta=1$, we write $2^S|_1^{\geq\omega}=2^S|^{\geq\omega}$.
\end{definition}

We are now ready to present our families of bad sets for errors on $\cC^*(\epsilon,\bar{\ell},r_X,r_Z)$. Our families of bad sets will be specified by three additional parameters $\eta,\gamma,\omega$. We will typically set the subgraph size threshold $\eta$ and the weighted norm threshold $\omega$ to grow as small polynomials in $|C|$, or equivalently, in $2^{\bar{\ell}}$. Meanwhile, we will set the subgraph density threshold $\gamma$ to be a small constant.

\begin{definition}
  \label{def:badfams}
  Let $\cC^*=\cC^*(\epsilon,\bar{\ell},r_X,r_Z)$, and let $\eta,\gamma\in\bR$. We inductively define a family of bad sets $\cE_Z^{\cC}(\eta,\gamma)\subseteq 2^C$ as follows. For the base case, if $r_X=0$, we let
  \begin{equation}
    \label{eq:badbase}
    \cE_Z^{\cC}(\eta,\gamma) = \cE(G^{\cC},\eta,\gamma).
  \end{equation}
  For the inductive step, if $r_X\geq 1$, we decompose $\cC^*=\cA^*\otimes\cB^*$ for $\cA^*=\cC^*(\epsilon,\bar{\ell},1,0)\cong\cC_*(\epsilon,\bar{\ell})$ and $\cB^*=\cC^*(\epsilon,\bar{\ell},r_X-1,r_Z)$, and we let $\cE_Z^{\cC}(\eta,\gamma)$ be the disjoint union over all $v\in V_{\bar{\ell}}\subseteq A^1$ of the families $\cE_Z^{\cB}(\eta,\gamma)\subseteq 2^B\cong 2^{\{v\}\times B}$.
  
  Define subsets $V_X,V_Z\subseteq C$ by $V_Z:=(V_{\bar{\ell}})^{r_X}\times C(\epsilon,\bar{\ell},0,r_Z)$ and $V_X:=C(\epsilon,\bar{\ell},r_X,0)\times(V_{\bar{\ell}})^{r_Z}$. Then for $\omega\in\bR$, we define a family of bad sets $\cE_Z^{\cC}(\eta,\gamma,\omega;\beta)\subseteq 2^C$ by
  \begin{equation*}
    \cE_Z^{\cC}(\eta,\gamma,\omega;\beta) = \cE_Z^{\cC}(\eta,\gamma) \cup (2^{V_Z}|_\beta^{\geq\omega}) \cup \cE_{Z,\Rd}^{\cC},
  \end{equation*}
  where here we view the set $\cE_{Z,\Rd}^{\cC}$ as a family of singleton sets $\{\{c\}:c\in\cE_{Z,\Rd}^{\cC}\}$.

  We then define $\cE_X^{\cC}(\eta,\gamma,\omega;\beta)\subseteq 2^C$ to equal the family $\cE_Z^{\cC^\vee}(\eta,\gamma,\omega;\beta)$ for the dual complex ${\cC^\vee}^*=\cC_*\cong\cC^*(\epsilon,\bar{\ell},r_Z,r_X)$.

  Given the base case in \Cref{eq:badbase}, this inductive definition can be equivalently stated as
  \begin{align}
    \label{eq:baddirect}
    \begin{split}
      \cE_Z^{\cC}(\eta,\gamma,\omega;\beta) &= \cE_Z^{\cC(\epsilon,\bar{\ell},0,r_Z)}(\eta,\gamma)^{\sqcup(V_{\bar{\ell}})^{r_X}} \cup (2^{V_Z}|_\beta^{\geq\omega}) \cup \cE_{Z,\Rd}^{\cC} \\ % \subseteq 2^{V_Z} \\
      \cE_X^{\cC}(\eta,\gamma,\omega;\beta) &= \cE_X^{\cC(\epsilon,\bar{\ell},r_X,0)}(\eta,\gamma)^{\sqcup(V_{\bar{\ell}})^{r_Z}} \cup (2^{V_X}|_\beta^{\geq\omega}) \cup \cE_{X,\Rd}^{\cC}. % \subseteq 2^{V_X}.
    \end{split}
  \end{align}
  % \begin{align*}
  %   \cE_Z^{\cC}(\eta,\gamma,\omega) &= \cE(G^{\cC(\epsilon,\bar{\ell},0,r_Z)},\eta,\gamma,\omega)^{\sqcup(V_{\bar{\ell}})^{r_X}} \subseteq 2^{V_Z} \\
  %   \cE_X^{\cC}(\eta,\gamma,\omega) &= \cE(G^{\cC(\epsilon,\bar{\ell},r_X,0)},\eta,\gamma,\omega)^{\sqcup(V_{\bar{\ell}})^{r_Z}} \subseteq 2^{V_X}.
  % \end{align*}

  We also let
  \begin{equation*}
    \cE^{\cC}(\eta,\gamma,\omega;\beta) = (\cE_X^{\cC}(\eta,\gamma,\omega;\beta),\cE_Z^{\cC}(\eta,\gamma,\omega;\beta)).
  \end{equation*}
  When $\beta$ is clear from context we may write $\cE^{\cC}(\eta,\gamma,\omega)=\cE^{\cC}(\eta,\gamma,\omega;\beta)$.
\end{definition}

\begin{remark}
  \label{remark:badred}
  In \Cref{eq:baddirect}, we have
  \begin{align*}
    \cE_Z^{\cC(\epsilon,\bar{\ell},0,r_Z)}(\eta,\gamma)^{\sqcup(V_{\bar{\ell}})^{r_X}} \cup (2^{V_Z}|_\beta^{\geq\omega}) &\subseteq 2^{V_Z} \\
    \cE_X^{\cC(\epsilon,\bar{\ell},r_X,0)}(\eta,\gamma)^{\sqcup(V_{\bar{\ell}})^{r_Z}} \cup (2^{V_X}|_\beta^{\geq\omega}) &\subseteq 2^{V_X}.
  \end{align*}
  Thus if we for instance want to show that a reduced error $e\in\cC^i$ is $\cE_Z^{\cC}(\eta,\gamma,\omega;\beta)$-avoiding, \Cref{lem:redsupp} implies that $\supp(e)\cap\cE_{Z,\Rd}^{\cC}=\emptyset$, so it suffices to show that $e|_{V_Z}$ is $\cE_Z^{\cC(\epsilon,\bar{\ell},0,r_Z)}(\eta,\gamma)^{\sqcup(V_{\bar{\ell}})^{r_X}} \cup (2^{V_Z}|_\beta^{\geq\omega})$-avoiding. (The analogous statement for the dual case of $\cE_X^{\cC}(\eta,\gamma,\omega;\beta)$-avoiding errors also holds.)
\end{remark}

Then for the cochain complex $\cC^*=\cC^*(\epsilon,\bar{\ell},r_X,r_Z)$ with product encoding map $\Enc_{\cC}^*$ as defined in \Cref{sec:qpcode} (see in particular \Cref{remark:prodenccoh}), letting $Q=(Q_X,Q_Z)$ denote the CSS code at level $r_X$ of $\cC^*$, we will consider decorated codes of the form
\begin{equation*}
  D = (Q,\; \Enc_{\cC}^{r_X},\; \cE^{\cC}(\eta,\gamma,\omega)).
\end{equation*}

\section{Decoder}
\label{sec:decoder}
In this section, we present decoders for the cochain complexes $\cC^*(\epsilon,\bar{\ell},r_X,r_Z)$. These decoders will allow us to perform fault-tolerant state preparation and logical Pauli measurements for the code at level $r_X$ of $\cC^*(\epsilon,\bar{\ell},r_X,r_Z)$.

As described briefly in \Cref{sec:chaincomplex}, we in fact construct these decoders for direct sum complexes $\cC^*(\epsilon,\bar{\ell},r_X,r_Z)^{\oplus\Upsilon}$ for arbitrary $\Upsilon$ (see \Cref{def:directsum}). Decoding such direct sum complexes will allow us to argue that when our decoder is given many (i.e.~$\Upsilon$) different errors/syndromes with similar supports, the decoder's outputs all have similar supports. This property will be crucial for ensuring that our state preparation gadget in \Cref{sec:stateprep} below satisfies our notion of fault-tolerance in \Cref{def:faulttol}, which requires the output error support to be determined by the input and fault error supports, rather than their specific values. This use of direct sum complexes is inspired by the related notion of \emph{collective cosystolic expansion} introduced in \cite{kaufman_new_2021}, and the notion of \emph{collective robustness} subsequently studied in \cite{dinur_expansion_2024}.

Throughout this section, we extend the reducer and reduction maps defined in \Cref{def:reducer,def:reduced} on the complexes $\cC^*(\epsilon,\bar{\ell},r_X,r_Z)$ to their direct sums $\cC^*(\epsilon,\bar{\ell},r_X,r_Z)^{\oplus\Upsilon}$ in the natural way, i.e.~by applying the reducer/reduction map to each direct sum term.

Our decoders in this section use some ideas adapted from the decoders in \cite[Section~4]{golowich_constant-overhead_2025}, which applied to high-dimensional products of 1-dimensional complexes constructed from a single lossless expander. However, there is significantly more subtlety in our decoders in this section, which apply to high-dimensional products of our 1-dimensional complexes $\cC^*(\epsilon,\bar{\ell})$ constructed from many lossless expanders of varying sizes pasted together. (For an overview of the main ideas behind the 2-dimensional case of our decoder, see \Cref{sec:ecinf}.)

We now briefly summarize the main steps in our decoder. Similarly as described in \Cref{sec:ecinf}, we first ``push'' all errors into an appropriate subset of the qubits, on which we can then run a small-set flip decoder that is more analogous to the decoder in \cite[Section~4]{golowich_constant-overhead_2025}. For the error ``pushing,'' we may first assume the error is reduced, as every error is equivalent to its reducer up to stabilizers. We then greedily apply flips to eliminate any remaining errors in undesirable regions, similarly as described in \Cref{sec:ecinf}. At this point, we have ``pushed'' all errors into a set of qubits on which we can run a small-set flip decoder.

Yet there is an additional complication in adapting the higher-dimensional small-set flip decoder from \cite[Section~4]{golowich_constant-overhead_2025}. Because we need to bound the error weight under two separate norms (weighted and unweighted; see \Cref{def:weightnorm,def:badfams}), we choose which small sets to flip by analyzing coboundaries in two different (but related) cochain complexes $\cC^*$ and ${\cC'}^*$ (see in particular \Cref{lem:ssflip}). The complete analysis, which is somewhat delicate, is given in the sections below.

% \lnote{TODO: Somewhere should try to emphasize that there really is increased subtlety over \cite[Section~4]{golowich_constant-overhead_2025}; currently might be written to seem too similar/straightforward of an adaptation. Specifically, we first need to assume the error is reduced, and perform the initial flips to get down to the $r_X=0$ case. Then we need to simultaneously reduce the error or syndrome weight under two separate norms (unweighted and weighted), and the solution is to compare coboundaries of two different (but related) cochain complexes $\cC^*$ and ${\cC'}^*$; is a somewhat delicate argument.}

\subsection{Small-Set Flip Decoder}
In this section, we present a small-set flip decoder for cochain complexes of the form $\cC^*(\epsilon,\bar{\ell},r_X=0,r_Z=r)$ (and their associated direct sum complexes). Such a decoder guarantees that for every sufficiently low-weight error, there exists a a constant-sized perturbation of the error that either reduces the error weight while preserving the syndrome, or else reduces the syndrome weight. Repeatedly applying such perturbations allows us to reduce the weight of the syndrome to $0$, and hence correct the error.

To construct our decoder, we will need the following modification of the cochain complex $\cC^*(\epsilon,\bar{\ell},r_X=0,r_Z=r)$.

\begin{definition}
  For $\epsilon>0$ and $\bar{\ell}\in\bN$, we define a cochain complex ${\cC'}^*(\epsilon,\bar{\ell})$ with the same space of cochains ${\cC'}^i(\epsilon,\bar{\ell})=\cC^i=\cC^i(\epsilon,\bar{\ell})$ as in \Cref{def:classcode}, but with coboundary map $\delta^{\cC'}\in\bF_2^{C^1\times C^0}$ defined for $c^0\in C^0$ and $c^1\in C^1$ by
  \begin{equation*}
    % \label{eq:classcodemod}
    \delta^{\cC'}_{c^1,c^0} = \begin{cases}
      1,& (c^0,c^1)\in E^{(\ell)} \text{ for some } \ell\in[\bar{\ell}] \\
      0,&\text{otherwise}.
    \end{cases}
  \end{equation*}
  Then for $r_X,r_Z\in\bZ_{\geq 0}$, we define the $r=r_X+r_Z$ dimensional cochain complex
  \begin{equation*}
    {\cC'}^*(\epsilon,\bar{\ell},r_X,r_Z) := \cC'_*(\epsilon,\bar{\ell})^{\otimes r_X}\otimes{\cC'}^*(\epsilon,\bar{\ell})^{\otimes r_Z}.
  \end{equation*}
\end{definition}

That is, $\delta^{\cC'(\epsilon,\bar{\ell})}$ is defined similarly as $\delta^{\cC(\epsilon,\bar{\ell})}$, but where we have removed the $c^0=c^1$ option from \Cref{eq:classcode}. Letting $\prec$ and $\prec'$ denote the partial orders for $\cC^*(\epsilon,\bar{\ell})$ and ${\cC'}^*(\epsilon,\bar{\ell})$, it follows that $c^i\prec' c^j\implies c^i\prec c^j$, but the converse does not necessarily hold.

Because we have removed at most one out of every $\Delta$ nonzero entries from each column of the coboundary matrix, lossless expansion properties are roughly preserved when moving from $\cC(\epsilon,\bar{\ell})$ to $\cC'(\epsilon,\bar{\ell})$. However, for $c\in\bF_2^{V_\ell}\subseteq\cC^0$, whereas $\delta^{\cC(\epsilon,\bar{\ell})}(c)$ may be supported across $V_\ell$ and $V_{\ell-1}\subseteq C^1$, we always have $\delta^{\cC'(\epsilon,\bar{\ell})}(c)$ supported entirely inside $V_{\ell-1}\subseteq C^1$. This property will be useful for designing decoders that limit the blowup in the weighted Hamming norm of the error.

The following key lemma shows that for $\epsilon>0$ sufficiently small, the cochain complex $\cC^*(\epsilon,\bar{\ell},0,r)$ (and its associated direct sum complexes) supports a \emph{small-set flip decoder}. That is, for a sufficiently low-weight error on the code associated to some level $i$, we can further reduce the syndrome weight by performing a constant number of bit flips.

\begin{lemma}
  \label{lem:ssflip}
  For every $r,\bar{\ell}\in\bN$ and every $\beta>1$, there exists $\epsilon=\epsilon(r)>0$ such that the following holds. For arbitrary $\Upsilon\in\bN$, let $\cC^*=\cC^*(\epsilon,\bar{\ell},0,r)^{\oplus\Upsilon}$ and ${\cC'}^*={\cC'}^*(\epsilon,\bar{\ell},0,r)^{\oplus\Upsilon}$ with partial orders $\prec$ and $\prec'$ respectively. Also let $\mu=\mu(\epsilon)$ be the value from \Cref{lem:lossless}. For every $0\leq i\leq r-1$ and every nonzero $e\in\cC^i$ satisfying $|e|\leq\mu 2^{\bar{\ell}/2}$ and $|e|_\beta\leq\beta^{\bar{\ell}/2}$, there exists a basis element $\bar{c}^0\in C^0$ for which at least one of the following holds:
  \begin{enumerate}
  \item There exists $c^{i-1}\in\cC^{i-1}$ with $\bar{c}^0\preceq' c^{i-1}$ such that:
    \begin{align}
      \label{eq:ssfup}
      \begin{split}
        |e+\delta^{\cC'}(c^{i-1})| &< |e|-4r|c^{i-1}|.
        % |e+\delta^{\cC}(c^{i-1})| &< |e|-3r|c^{i-1}| \\
        % |e+\delta^{\cC}(c^{i-1})|_\beta &< |e|_\beta-3r\beta|c^{i-1}|_\beta
      \end{split}
    \end{align}
  \item There exists $c^i\in\cC^i$ with $\bar{c}^0\preceq' c^i$ such that:
    \begin{align}
      \label{eq:ssfdown}
      \begin{split}
        |\delta^{\cC}(e)+\delta^{\cC'}(c^i)| &< |\delta^{\cC}(e)|-4r|c^i|.
        % |\delta^{\cC}(e+c^i)| &< |\delta^{\cC}(e)|-3r|c^i| \\
        % |\delta^{\cC}(e+c^i)|_\beta &< |\delta^{\cC}(e)|_\beta-3r\beta|c^i|_\beta \\
      \end{split}
    \end{align}
  \end{enumerate}
\end{lemma}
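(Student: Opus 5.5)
The plan is to adapt the local-minimality argument of the small-set flip decoder in \cite[Section~4]{golowich_constant-overhead_2025}, used there for high-dimensional products of a single lossless expander, to the complex $\cC^*(\epsilon,\bar{\ell},0,r)$. Since the direct sum over $\Upsilon$ only replaces $\bF_2$-coordinates by $\bF_2^\Upsilon$-coordinates while keeping the basis and partial order, all counting is done on supports in $C$. Expansion of these supports is unaffected, so it suffices to handle the $1$-based case with vector-valued coordinates.

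First, choose $\bar{c}^0$. Pick a basis element $c\in\supp(e)$ and walk down the partial order $\prec'$ to a minimal $\bar{c}^0\in C^0$ with $\bar{c}^0\preceq' c$. In each factor $C(\epsilon,\bar{\ell})$, the $\prec'$-downward step from $V_\ell\subseteq C^1$ goes through expander edges into $V_{\ell+1}\subseteq C^0$ (or into $V_\ell$ when the coordinate is already in $C^0$). Among all such choices, prefer one for which the \emph{local view} $e|_{\{c:\bar{c}^0\preceq' c\}}$ is ``heavy'' relative to the neighborhood. A counting argument over the whole of $\supp(e)$ shows some $\bar{c}^0$ carries at least an average share.

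Second, prove a local-to-global statement. Fix $\bar{c}^0$ and suppose neither \eqref{eq:ssfup} nor \eqref{eq:ssfdown} holds at $\bar{c}^0$. The local link of $\bar{c}^0$ in ${\cC'}^*$ is, coordinatewise, a product of stars of the lossless expanders $G^{(\ell)}$; each star is a disjoint union of $\Delta$ edges. This link is therefore a product of $r$ ``$\Delta$-point'' complexes, which has trivial local cohomology with linear (constant-factor) local coboundary expansion. Failure of both options then implies a robustness inequality for $e$ on the star of $\bar{c}^0$: most local mass of $e$ cannot be reduced by a $\delta^{\cC'}$ move, and $\delta^{\cC}(e)$ is locally large. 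The essential point is that $\delta^{\cC}$ and $\delta^{\cC'}$ differ only by the identity-type entries ($c^0=c^1$), at most one in $\Delta$ per column. So for $\epsilon\ll 1/r$ and $\Delta\ge 1/\epsilon$, the discrepancy between $\delta^{\cC}(e)$ and $\delta^{\cC'}(e)$ is an $O(1/\Delta)$ fraction, which is absorbed into the $4r$ slack.

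Third, derive a global contradiction via lossless expansion. Sum the local inequalities over all $\bar{c}^0$ and use the hypotheses $|e|\le\mu 2^{\bar{\ell}/2}$ and $|e|_\beta\le\beta^{\bar{\ell}/2}$. These bounds imply that, in every factor and at every scale $\ell$, the projection of $\supp(e)$ into $V_\ell$ has size below $\mu|V_\ell|$ wherever it matters. At low levels (small $\ell$) the weighted bound forces $e$ to be nearly empty there, since $\beta^{\bar{\ell}-\ell}$ is large. At high levels, $|V_\ell|\ge 2^{\bar{\ell}/2}$ makes the unweighted bound below the expansion threshold. In both regimes, one-sided $(\mu,\epsilon)$-lossless expansion of each $G^{(\ell)}$ applies. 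Running the product-expansion argument of \cite{golowich_constant-overhead_2025} one tensor factor at a time then gives $|\delta^{\cC'}(e)|\ge(1-O(r\epsilon))\cdot(\text{expected})$. This contradicts the summed local inequalities, so some $\bar{c}^0$ must admit a flip satisfying \eqref{eq:ssfup} or \eqref{eq:ssfdown}.

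The main obstacle is the third step. The coboundary is taken in $\cC$ on the syndrome side but in $\cC'$ for the flips, and the expanders change size across levels. The threshold condition for lossless expansion must therefore be verified scale by scale using the two norm bounds jointly. The first attempt is to split $\supp(e)$ by the vector of levels $(\ell_1,\dots,\ell_r)$ and apply expansion blockwise. Collisions between neighbors coming from adjacent level blocks are bounded using that $\delta^{\cC'}$ maps $V_\ell$ strictly into $V_{\ell-1}$, so distinct level blocks have disjoint images in each coordinate.
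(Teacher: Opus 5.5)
Your ingredients are the right ones: the norm bounds put every relevant fiber in the lossless regime, and the link of $\bar c^0$ is a product of repetition-code stars to which \Cref{lem:robust} applies. But Step 3 does not go through.

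The claimed bound $|\delta^{\cC'}(e)|\ge(1-O(r\epsilon))\cdot(\text{expected})$ is false. ${\cC'}^*$ is itself a cochain complex, so for $i\ge1$ the cochain $e=\delta^{\cC'}(\1_b)$, with $b\in C^{i-1}$ a single basis element of small level, meets both norm hypotheses yet has $\delta^{\cC'}(e)=0$.

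Lossless expansion applied one tensor factor at a time only controls two elements of $\supp(e)$ hitting a common successor $d$ through the \emph{same} coordinate. It says nothing about $c\triangleleft' d$ through coordinate $j$ together with $c'\triangleleft d$ through coordinate $j'\neq j$, and this cross-coordinate interference is what defeats the averaging. Take $r=2$, $i=1$, $b=(x,y)$. At every star $(x',y)$ with $x'\neq x$ sharing a $G$-neighbor with $x$ (and symmetrically at $(x,y')$), both moves fail. Moreover, the entire local coboundary of $e$ on that star is cancelled by the outside elements $(x,n_2)$, $n_2\in N'(y)$. These stars carry all but a $1/(2\Delta)$ fraction of $\sum_{\bar c^0}|e|_{D(\bar c^0)}|$, summing over full stars $D(\bar c^0)$.

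Failure of both moves at every star yields a summed inequality of the form ``interference on stars $\ge c\,\Delta\sum_{\bar c^0}|e|_{D(\bar c^0)}|$'' with $c<1$. This inequality is already satisfied by this $e$, even though the star $(x,y)$ is good. So no upper bound valid for all $e$ meeting the hypotheses can contradict it: summing discards exactly the information that one particular star works. Also, failure of the down-move does not make $\delta^{\cC}(e)$ locally large; here $\delta^{\cC}(e)$ vanishes on those stars.

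The missing idea is to \emph{construct} a single star whose local view is isolated from the rest of $\supp(e)$. Weight, as in your Step 1, is not the relevant criterion.

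The paper chooses $\bar c^0=(v_1,\dots,v_r)$ coordinate by coordinate. Set $D_i=\prod_{j<i}(\{v_j\}\cup U_j)\times B^{r-i+1}$ and let $F_i$ be the $i$th-coordinate projection of $\supp(e)\cap D_i$. The norm bounds put $F_i\cap B^0$ in levels $\ell\ge\bar\ell/2$ with $|F_i|\le\mu|V_\ell|$. When $F_i\cap B^0\neq\emptyset$, lossless expansion (counting the identity edge as one of $\Delta+1$ edges) gives $v_i\in F_i\cap B^0$ and $U_i\subseteq N^{B'}(v_i)$ with $|U_i|\ge(1-5\epsilon)\Delta$, such that no other element of $F_i\cap B^0$ is adjacent in $\cC$ to $U_i$.

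Let $D=\prod_j(\{v_j\}\cup U_j)$. Any $c'\in\supp(e)$ with $c'\triangleleft d$ for some $d\in D$ differs from $d$ in a single coordinate $j'$, so it automatically lies in $D_{j'}$. Hence $c'_{j'}\in F_{j'}\cap B^0$ is adjacent to $d_{j'}\in U_{j'}$, forcing $c'_{j'}=v_{j'}$ and $c'\in D$. Therefore $\delta^{\cC}(e)|_D=\delta^{\cD}(e|_D)$, with cross-coordinate interference included.

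One dichotomy at this single star then finishes the proof. If $|\delta^{\cD}(e|_D)|\ge16r\epsilon\Delta|e|_D|$, flipping $c^i=e|_D$ gives \eqref{eq:ssfdown}, since its leakage outside $D$ is at most $5\epsilon r\Delta|c^i|$. Otherwise \Cref{lem:robust} applied to $\cD$ yields a small nonzero $c^{i-1}$ satisfying \eqref{eq:ssfup}.
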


Before proving \Cref{lem:ssflip}, we will present some remarks and implications of the lemma.
Note that the factor of $4r$ in the bounds in \Cref{lem:ssflip} is not special; any factor greater than $r$ would suffice for our purposes, and the proof of \Cref{lem:ssflip} in fact implies a factor growing linearly in $\Delta$ for fixed $r$.

The mismatch between $\delta^{\cC}$ and $\delta^{\cC'}$ in \Cref{eq:ssfup} and \Cref{eq:ssfdown} in \Cref{lem:ssflip} is introduced to ensure that we can prove the bound on the weighted Hamming norm in \Cref{lem:ssfimp} below. This weighted Hamming norm bound will ensure that $|e|_\beta$ does not blow up when we repeatedly apply \Cref{lem:ssflip} to add the cochains $\delta^{\cC'}(c^{i-1})$ and $c^i$ in to $e$.

\begin{lemma}
  \label{lem:ssfimp}
  For $\epsilon>0$, $\beta\geq 1$, $\bar{\ell},r\in\bN$, $\Upsilon\in\bN$, let $\cC^*=\cC^*(\epsilon,\bar{\ell},0,r)^{\oplus\Upsilon}$ and ${\cC'}^*={\cC'}^*(\epsilon,\bar{\ell},0,r)^{\oplus\Upsilon}$ with partial orders $\prec$ and $\prec'$ respectively. Let $0\leq i\leq r-1$, let $e\in\cC^i$, and let $\bar{c}^0\in C^0$. Then the following hold:
  \begin{enumerate}
  \item Every $c^{i-1}\in\cC^{i-1}$ with $\bar{c}^0\preceq' c^{i-1}$ that satisfies \Cref{eq:ssfup} must also satisfy
    \begin{align}
      \label{eq:ssfupimp}
      \begin{split}
        |e+\delta^{\cC}(c^{i-1})| &< |e|-3r|c^{i-1}| \\
        |e+\delta^{\cC}(c^{i-1})|_\beta &< |e|_\beta-3r\beta|c^{i-1}|_\beta.
      \end{split}
    \end{align}
  \item Every $c^i\in\cC^i$ with $\bar{c}^0\preceq' c^i$ that satisfies \Cref{eq:ssfdown} must also satisfy
    \begin{align}
      \label{eq:ssfdownimp}
      \begin{split}
        |\delta^{\cC}(e)+\delta^{\cC}(c^i)| &< |\delta^{\cC}(e)|-3r|c^i| \\
        |\delta^{\cC}(e)+\delta^{\cC}(c^i)|_\beta &< |\delta^{\cC}(e)|_\beta-3r\beta|c^i|_\beta.
      \end{split}
    \end{align}
  \end{enumerate}
\end{lemma}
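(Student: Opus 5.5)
The plan is to split the coboundary as $\delta^{\cC}=\delta^{\cC'}+D$, where $D$ collects the ``diagonal'' terms $c^0=c^1$ from \Cref{eq:classcode}. We then exploit two facts. First, $D$ has small weight. Second, $\delta^{\cC'}$ raises the level $\Lev$ by exactly one, while $D$ preserves it.

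\textbf{Structure of $\delta^{\cC'}$ and $D$.} In the 1-dimensional complex $\cC^*(\epsilon,\bar\ell)$, the map $\delta^{\cC}-\delta^{\cC'}$ is the map $J$ sending $\1_v$ for $v\in V_\ell\subseteq C^0$ (with $\ell<\bar\ell$) to $\1_v\in V_\ell\subseteq C^1$, and sending $\bF_2^{V_{\bar\ell}}$ to $0$. On the product $\cC^*(\epsilon,\bar\ell,0,r)$, the coboundary is a sum over the $r$ factors. Hence $\delta^{\cC}=\delta^{\cC'}+D$, where $D$ is the sum over factors of $J$ applied in that factor. The same decomposition holds on the $\Upsilon$-direct sum, componentwise.

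From this I will record the following properties.
\begin{enumerate}
\item For each basis element $c$, $D(\1_c)$ is supported on at most $r$ basis elements, all of the same level as $c$, since $J$ preserves $\ell$. Hence $|D(x)|\le r|x|$ for all $x$, and every element of $\supp(D(x))$ has the level of some element of $\supp(x)$.
\item $\delta^{\cC'}$ in one factor maps $V_\ell\subseteq C^0$ into $V_{\ell-1}\subseteq C^1$, so it raises the level by exactly $1$.
\item Consequently, if $\bar c^0\preceq' x$ with $x\in\cC^j$, every element of $\supp(x)$ has level exactly $\Lev(\bar c^0)+j$. Indeed, each $\triangleleft'$ step is one application of a $\delta^{\cC'}$ factor, raising the level by one.
\end{enumerate}

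\textbf{Item 1.} Let $L=\Lev(\bar c^0)$. Then $\supp(c^{i-1})$ lies at level $L+i-1$, $\supp(\delta^{\cC'}(c^{i-1}))$ at level $L+i$, and $\supp(D(c^{i-1}))$ at level $L+i-1$.

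For the unweighted bound, use the triangle inequality and \Cref{eq:ssfup}:
\[
|e+\delta^{\cC}(c^{i-1})|\le|e+\delta^{\cC'}(c^{i-1})|+|D(c^{i-1})|<|e|-4r|c^{i-1}|+r|c^{i-1}|=|e|-3r|c^{i-1}|.
\]

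For the weighted bound, note that $e$ and $e+\delta^{\cC'}(c^{i-1})$ differ only at level $L+i$. Therefore
\[
|e+\delta^{\cC'}(c^{i-1})|_\beta-|e|_\beta=\beta^{L+i}\big(|e+\delta^{\cC'}(c^{i-1})|-|e|\big)<-4r\beta^{L+i}|c^{i-1}|=-4r\beta|c^{i-1}|_\beta,
\]
where the last equality uses that all of $c^{i-1}$ sits at level $L+i-1$. The remaining correction satisfies
\[
|D(c^{i-1})|_\beta\le r\beta^{L+i-1}|c^{i-1}|=r|c^{i-1}|_\beta\le r\beta|c^{i-1}|_\beta,
\]
using $\beta\ge1$. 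Combining the two displays gives \Cref{eq:ssfupimp}.

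\textbf{Item 2.} This is identical with $\delta^{\cC}(e)$ in place of $e$ and $c^i$ in place of $c^{i-1}$. Now $\supp(c^i)$ lies at level $L+i$. The vectors $\delta^{\cC}(e)$ and $\delta^{\cC}(e)+\delta^{\cC'}(c^i)$ differ only at level $L+i+1$, and $D(c^i)$ sits at level $L+i$. The same two estimates then yield \Cref{eq:ssfdownimp}.

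The only point needing care is the level bookkeeping in property 3: confirming that the $\preceq'$ relation forces all of $\supp(c^{i-1})$ or $\supp(c^i)$ onto a single level. This follows by induction on the length of a $\triangleleft'$ chain, noting that each step applies one factor's $\delta^{\cC'}$, which moves $V_\ell\subseteq C^0$ to $V_{\ell-1}\subseteq C^1$. Once this is in place, everything else is a triangle inequality.
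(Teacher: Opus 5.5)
Your proof is correct and follows essentially the same route as the paper. The paper likewise bounds the discrepancy $\delta^{\cC}(c)-\delta^{\cC'}(c)$ by $r|c|$ at the level of $c$. It also uses $\bar{c}^0\preceq' c$ to place all of $\supp(\delta^{\cC'}(c))$ on a single level, one above that of $c$, so that the unweighted gain from \Cref{eq:ssfup} or \Cref{eq:ssfdown} becomes a $\beta$-weighted gain, and then absorbs the correction using $\beta\geq 1$. Writing the discrepancy explicitly as the operator $D$ is a slightly tidier packaging of the paper's count of those $b'\triangleright b$ with $b'\not\triangleright' b$.
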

\begin{proof}
  For every $b=(b_1,\dots,b_r)\in\supp(c^{i-1})$ and every $b'\triangleright b$ such that $b'\not\triangleright' b$, then there must be some $j\in[r]$ such that $b'=(b_1',\dots,b_r')$ satisfies $b_{j'}'=b_{j'}$ for every $j'\neq j$, and if $b_j\in V_\ell\subseteq C^0(\epsilon,\bar{\ell})$, then $b_j'\in V_\ell\subseteq C^1(\epsilon,\bar{\ell})$ is the copy of $b_j$ in $C^1(\epsilon,\bar{\ell})$; recall here that a copy of $V_\ell$ lies inside each of $C^0(\epsilon,\bar{\ell})$ and $C^1(\epsilon,\bar{\ell})$. Therefore there are at most $r$ distinct $b'\triangleright b$ such that $b'\not\triangleright' b$. Summing over all $b\in\supp(c^{i-1})$ gives
  \begin{equation*}
    |\delta^{\cC}(c^{i-1})-\delta^{\cC'}(c^{i-1})| \leq r|c^{i-1}|,
  \end{equation*}
  so the first inequality in \Cref{eq:ssfupimp} holds. By analogous reasoning we also have that
  \begin{equation*}
    |\delta^{\cC}(c^i)-\delta^{\cC'}(c^i)| \leq r|c^i|,
  \end{equation*}
  so the first inequality in \Cref{eq:ssfdownimp} holds.

  Furthermore, by the definition of $\cC'$ along with \Cref{def:weightnorm}, every $b\in\supp(c^{i-1})\subseteq\cC^{i-1}$ has $b\succeq \bar{c}^0$, and therefore has level $\Lev(b)=\Lev(\bar{c}^0)+i-1$. Similarly, every $b'\triangleright' b$ has level $\Lev(b')=\Lev(b)+1=\Lev(\bar{c}^0)+i$. Meanwhile, every $b'\triangleright b$ with $b'\not\triangleright' b$ has $\Lev(b')=\Lev(b)=\Lev(\bar{c}^0)+i-1$ (by the characterization above of such $b'$). Thus $\supp(\delta^{\cC'}(c^{i-1}))$ lies entirely with level $\Lev(\bar{c}^0)+i$, and by \Cref{eq:ssfup},
  the restriction of $\delta^{\cC'}(c^{i-1})$ to its support
  agrees with $e$ on at least $4r|c^{i-1}|$ more components than it disagrees with $e$. It follows by the definition of the $\beta$-weighted norm that
  \begin{equation*}
    |e+\delta^{\cC'}(c^{i-1})|_\beta < |e|_\beta-4r\beta^{\Lev(\bar{c}^0)+i}|c^{i-1}| = |e|_\beta-4r\beta|c^{i-1}|_\beta.
  \end{equation*}
  Then because as shown above, $\supp(\delta^{\cC}(c^{i-1}))\setminus\supp(\delta^{\cC'}(c^{i-1}))$ contains at most $r|c^{i-1}|$ elements, all of which lie in level $\Lev(\bar{c}^0)+i-1$, we have
  \begin{equation*}
    |\delta^{\cC}(c^{i-1})-\delta^{\cC'}(c^{i-1})|_\beta \leq r|c^{i-1}|_\beta.
  \end{equation*}
  Because $\beta\geq 1$, the two inequalities above then imply the second inequality in \Cref{eq:ssfupimp}. The second inequality in \Cref{eq:ssfdownimp} follows from analogous reasoning, but where we replace the variables $c^{i-1}$ and $e$ with $c^i$ and $\delta^{\cC}(e)$, respectively, and we increase all levels by $1$ accordingly. We omit the details to avoid redundancy.
\end{proof}

The proof of \Cref{lem:ssflip} is similar to the derivation of the small-set flip decoder in \cite[Section~4]{golowich_constant-overhead_2025}, which in turn was a high-dimensional generalization of decoders in \cite{leverrier_quantum_2015,fawzi_efficient_2018,fawzi_constant_2020}. The main difference between our proof of \Cref{lem:ssflip} and the decoder analysis in \cite{golowich_constant-overhead_2025} is that our 1-dimensional complexes $\cC^*(\epsilon,\bar{\ell})$ are constructed by combining lossless expanders of various sizes, including some constant-sized ones $G^{(\ell)}$ for $\ell=O(1)$. In contrast, \cite{golowich_constant-overhead_2025} used 1-dimensional complexes constructed only from large lossless expanders. Hence for errors of weight polynomial in the block length, our small graphs do not expand losslessly. Instead, in \Cref{lem:ssflip} we assume the \emph{weighted} Hamming norm $|e|_\beta$ is bounded, which implies $e$ has no support on these small graphs $G^{(\ell)}$.

Hence we recover lossless expansion, though there is an additional complication: we also need to ensure that our decoder's perturbation reduces the error's weighted norm, in addition to its unweighted norm. Without such a property, we would not be able to repeatedly apply perturbations. \Cref{lem:ssfimp} ensures the desired weighted norm bound; the use of $\delta^{\cC'}$ instead of $\delta^{\cC}$ in \Cref{lem:ssflip} was needed so that we could prove \Cref{lem:ssfimp}. Because the coboundary matrices $\delta^{\cC'}$ and $\delta^{\cC}$ differ only in a small number of entries, we are able to prove \Cref{lem:ssflip} using similar techniques as \cite{golowich_constant-overhead_2025}, though with some additional bookkeeping.

We will first need the following lemma, which follows from \cite{dinur_expansion_2024,kalachev_maximally_2025}, as described in \cite{golowich_constant-overhead_2025}. This lemma is often called the ``(collective) robustness'' or ``(collective) coboundary expansion'' of repetition codes, as it places a lower bound on the weight of coboundaries in (direct sums of) the cochain complex given by a product of 1-dimensional repetition code complexes.

\begin{lemma}[Follows from \cite{dinur_expansion_2024,kalachev_maximally_2025}]
  \label{lem:robust}
  For every $r\in\bN$ and $\alpha>0$, there exists $\kappa(r,\alpha)>0$ such that the following holds for every $n\in\bN$. For $i\in[r]$, let
  \begin{equation*}
    \cA^{(i)} = \left(\bF_2\xrightarrow{\delta^{(i)}}\bF_2^{n_i}\right)
  \end{equation*}
  be the 1-dimensional cochain complex with $\delta^{(i)}(1)=(1,\dots,1)$ equal to the length-$n_i$ all-1s vector for some $n_i\in[\alpha n,n]$, so that $B^1(\cA^{(i)})$ is a length-$n_i$ repetition code. Let
  \begin{equation*}
    \cA^* = \cA^{(1)}\otimes\cdots\otimes\cA^{(r)}.
  \end{equation*}
  Then for every $\Upsilon\in\bN$, every $0\leq i\leq r-1$ and every $a\in(\cA^{\oplus\Upsilon})^i$, there exists $a'\in B^i(\cA^{\oplus\Upsilon})$ such that
  \begin{equation*}
    \kappa(r,\alpha)\cdot n\cdot |a+a'| \leq |\delta(a)|.
  \end{equation*}
\end{lemma}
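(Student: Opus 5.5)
We prove the bound by induction on $r$, splitting off the first tensor factor. The base case $r=1$ forces $i=0$. Then $a\in(\bF_2^{\Upsilon})$ is a $0$-cochain, so $|\delta(a)|=n_1|a|\geq\alpha n|a|$, and we may take $a'=0$ and $\kappa(1,\alpha)=\alpha$. For the inductive step write $\cA^*=\cA^{(1)}\otimes\cA'^*$ with $\cA'^*=\cA^{(2)}\otimes\cdots\otimes\cA^{(r)}$. This gives
\[(\cA^{\oplus\Upsilon})^i=(\cA'^{\oplus\Upsilon})^i\oplus\bigoplus_{j\in[n_1]}(\cA'^{\oplus\Upsilon})^{i-1},\]
and we write $a=(a_0,(a_j)_{j\in[n_1]})$ accordingly. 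By \Cref{def:cctensor}, the coboundary has a ``zeroth'' component $\delta'(a_0)$ and, for each row $j\in[n_1]$, a component $a_0+\delta'(a_j)\in(\cA'^{\oplus\Upsilon})^i$. Hence
\[|\delta(a)|\geq\sum_{j}|a_0+\delta'(a_j)|.\]
Similarly, for $b=(b_0,(b_j)_j)\in(\cA^{\oplus\Upsilon})^{i-1}$ the coboundary $\delta(b)$ has zeroth component $\delta'(b_0)$ and row components $b_0+\delta'(b_j)$.

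\emph{Step 1 (fix the zeroth component by averaging).} Work separately in each direct-sum coordinate $\upsilon\in\Upsilon$; all maps act coordinatewise, so this is harmless. In coordinate $\upsilon$, choose a row $j^*(\upsilon)$ minimizing the weight of $(a_0+\delta'(a_j))_\upsilon$. Let $b_0$ have $\upsilon$-coordinate $(a_{j^*(\upsilon)})_\upsilon$, and add $\delta(b_0,0)$ to $a$. The new zeroth component $a_0'=a_0+\delta'(b_0)$ then satisfies $|a_0'|\leq|\delta(a)|/n_1\leq|\delta(a)|/(\alpha n)$ by averaging over rows. The new rows are $a_j'=a_j+b_0$. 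Since
\[a_0'+\delta'(a_j')=a_0+\delta'(a_j),\]
we obtain $|\delta'(a_j')|\leq|a_0+\delta'(a_j)|+|a_0'|$. Summing over $j$ gives $\sum_j|\delta'(a_j')|\leq 2|\delta(a)|$.

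\emph{Step 2 (fix the rows collectively).} Each $a_j'$ lies in degree $i-1\leq r-2$ of the $(r-1)$-fold product $\cA'^*$, so the inductive hypothesis applies. We apply it once, to the single direct sum complex $\cA'^{\oplus(\Upsilon\times[n_1])}$ whose cochain is $(a_j')_j$; this collective use is exactly why the lemma is stated for arbitrary $\Upsilon$. It yields $(c_j)_j$ with
\[\sum_j|a_j'+\delta'(c_j)|\leq\frac{\sum_j|\delta'(a_j')|}{\kappa(r-1,\alpha)\,n}.\]
Adding $\delta(0,(c_j)_j)$ changes only the rows. So with $a'=\delta(b_0,(c_j)_j)\in B^i(\cA^{\oplus\Upsilon})$ we get
\[|a+a'|\leq|a_0'|+\sum_j|a_j'+\delta'(c_j)|\leq\Big(\frac1\alpha+\frac{2}{\kappa(r-1,\alpha)}\Big)\frac{|\delta(a)|}{n},\]
which proves the claim with $\kappa(r,\alpha)=(1/\alpha+2/\kappa(r-1,\alpha))^{-1}$.

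The main point requiring care is Step 2. The rows must be corrected simultaneously with a bound on the \emph{sum} of their weights, rather than row by row with constants degrading per row. Packaging the $n_1$ rows as one direct sum and invoking the inductive hypothesis with $\Upsilon\times[n_1]$ in place of $\Upsilon$ achieves this, so the constant depends only on $r$ and $\alpha$. A secondary check is the edge case $i=r-1$: the inductive call is then in degree $r-2$ of an $(r-1)$-dimensional complex, which is allowed. The zeroth component $\delta'(a_0)$ is never used, so top-degree issues for $a_0$ do not arise.
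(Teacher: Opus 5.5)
There is a genuine gap. You treat the weight on $\cA^{\oplus\Upsilon}$ as if it were additive over the $\Upsilon$ coordinates. In the paper, however, $|a|$ is the number of basis elements $c$ with $a_c\in\bF_2^\Upsilon$ nonzero, i.e.\ the size of the \emph{union} of the coordinate supports. This collective weight is the whole content of the $\Upsilon$-version: it is what the direct-sum analysis in \Cref{lem:ssflip} and \Cref{cor:decode} needs, and it is why the paper invokes \cite[Proposition~5.14]{dinur_expansion_2024} at the cost of a smaller constant. Under additive weight, the statement would follow trivially from $\Upsilon=1$ coordinate by coordinate. So ``all maps act coordinatewise, so this is harmless'' is false for the weight, and both steps break.

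\emph{Step~1.} Choosing $j^*(\upsilon)$ separately in each coordinate only bounds each coordinate's support of $a_0'$ by $|\delta(a)|/n_1$; their union can be far larger. For example, take $r=3$, $i=2$, $n_2<n_3=n$, and $|\Upsilon|=n_1-1$. Let every coordinate of $a_0$ equal $\1_Y$ for the grid row $Y=\{k_0\}\times[n_3]\subseteq {A'}^2$. Let all $a_j=0$, except that $a_{j_\upsilon}$ has $\upsilon$-coordinate $(e_{k_0},e_{l_\upsilon})\in{\cA'}^1$, for distinct rows $j_\upsilon$ and distinct columns $l_\upsilon$. Since $\delta'(e_{k_0},e_l)=\1_Y+\1_{[n_2]\times\{l\}}$, the unique lightest row value in coordinate $\upsilon$ is the column $[n_2]\times\{l_\upsilon\}$. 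Your $a_0'$ is therefore a union of $n_1-1$ disjoint columns, with $|a_0'|=(n_1-1)n_2=\Theta(n^2)$. Step~2 never touches the zeroth component, so $|a+a'|=\Theta(n^2)$. Meanwhile $|\delta(a)|\le n_1(n_2+n_3)$, so the lemma demands $|a+a'|\le 2n/\kappa$. The construction thus fails for large $n$.

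\emph{Step~2.} The inductive hypothesis on ${\cA'}^{\oplus(\Upsilon\times[n_1])}$ controls $|\bigcup_j\supp(a_j'+\delta'(c_j))|$ in terms of $|\bigcup_j\supp(\delta'(a_j'))|$. It does not control the sum $\sum_j|a_j'+\delta'(c_j)|$, which is what you need because the rows occupy disjoint basis elements of $\cA$. Passing from union to sum can cost a factor $n_1$.

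Both repairs are easy, and with them your induction is correct.
\begin{itemize}
\item In Step~1, choose a single row $j^*$, for all coordinates at once, minimizing the collective weight $|a_0+\delta'(a_j)|$, and set $b_0=a_{j^*}$. Averaging then gives $|a_0'|\le|\delta(a)|/n_1$. The rest of Step~1 survives, because collective weight is subadditive and additive over the disjoint blocks.
\item In Step~2, apply the inductive hypothesis to each row $a_j'\in({\cA'}^{\oplus\Upsilon})^{i-1}$ separately, with the \emph{same} $\Upsilon$, and sum. Since $\kappa(r-1,\alpha)$ is uniform, row-by-row correction loses nothing, contrary to your closing remark.
\end{itemize}
This yields $\kappa(r,\alpha)=\alpha/(2^r-1)$ directly for the collective statement, via a short self-contained argument with an explicit constant. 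The paper gives no proof of its own. It cites \cite{dinur_expansion_2024,kalachev_maximally_2025} (via \cite[Lemma~4.8]{golowich_constant-overhead_2025}) for $\Upsilon=1$, and \cite[Proposition~5.14]{dinur_expansion_2024} for the upgrade to general $\Upsilon$.
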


In particular, the proof of \cite[Lemma~4.8]{golowich_constant-overhead_2025} shows how the $\Upsilon=1$ case of \Cref{lem:robust} follows from results in \cite{dinur_expansion_2024,kalachev_maximally_2025}. Meanwhile, \cite[Proposition~5.14]{dinur_expansion_2024} shows that if \Cref{lem:robust} holds for $\Upsilon=1$, then it holds for every $\Upsilon\in\bN$, albeit with a smaller constant $\kappa(r,\alpha)>0$.

We are now ready to prove \Cref{lem:ssflip}.

\begin{proof}[Proof of \Cref{lem:ssflip}]%; similar to proof of Lemma~4.10 in \cite{golowich_constant-overhead_2025}]
  Define $\kappa=\kappa(r,1/2)>0$ to be the value given by letting $\alpha=1/2$ in \Cref{lem:robust}. Set
  \begin{equation}
    \label{eq:ssfeps}
    \epsilon = \min\left\{\frac{1}{32},\; \frac{\kappa}{64r}\right\}.
  \end{equation}
  
  Our goal is to show that either \Cref{eq:ssfup} holds, or else \Cref{eq:ssfdown} holds.
  For this purpose, let $\cB^*=\cC^*(\epsilon,\bar{\ell})$ and ${\cB'}^*={\cC'}^*(\epsilon,\bar{\ell})$, so that $\cC^*=((\cB^*)^{\otimes r})^{\oplus\Upsilon}$ and ${\cC'}^*=(({\cB'}^*)^{\otimes r})^{\oplus\Upsilon}$. For a set $S\subseteq B^0$, let $N^B(S)\subseteq B^1$ and $N^{B'}(S)\subseteq{B'}^1$ denote the neighborhood of $S$ in the bipartite graph with bipartite adjacency matrix given by $\delta^{\cB}$ and $\delta^{\cB'}$, respectively. For instance, $N^B(S)$ contains every $b^1\in B^1$ such that there exists $b^0\in S$ with $b^0\triangleleft b^1$.
  
  For each $i\in[r]$, we first fix an appropriate element $v_i\in B^0$ and subset $U_i\subseteq B^1$ satisfying $v_i\triangleleft' U_i$ (i.e.~$U_i\subseteq N^{B'}(v_i)$) and $|U_i|\geq(1-5\epsilon)\Delta$. We will then use the fact that the restriction of ${\cC'}^*$ to basis elements in $\prod_{i\in[r]}(\{v_i\}\cup U_i)$ has the structure of the product of repetition codes in \Cref{lem:robust}.

  For $i\in[r]$, we will inductively define $v_i$ and $U_i$ assuming we have defined $v_1,\dots,v_{i-1}$ and $U_1,\dots,U_{i-1}$. Recalling that $C=B^r$, for $j\in[r]$ we let $\Pi^{(j)}:C\rightarrow B$ denote projection onto the $j$th coordinate. Define $D_i\subseteq C$ by
  \begin{equation*}
    D_i = \prod_{j\in[i-1]}(\{v_j\}\times U_j) \times B^{r-i+1}
    % B_i = \{c\in C:\Pi^{(j)}(c)\in(\{v_j\}\cup U_j)\forall j\in[i-1]\},
  \end{equation*}
  and define $F_i\subseteq B$ by
  \begin{equation*}
    F_i = \Pi^{(i)}(\supp(e)\cap D_i).
  \end{equation*}

  We will enforce the inductive hypothesis that $\supp(e)\cap D_i$ (and hence $F_i$) is nonempty, and that for every $c=(c_1,\dots,c_r)\in D_i$, $j\in[i-1]$, and $v\in B^0$ with $v\triangleleft c_j$ and $(c_1,\dots,c_{j-1},v,c_{j+1},\dots,c_r)\in\supp(e)$, then $v=v_j$.

  For the base case, when $i=1$ then $D_1=C$, so $\supp(e)\cap D_1\neq\emptyset$ by the assumption that $e\neq 0$. Now assume the inductive hypothesis holds for some $i\geq 1$. We define $v_i,U_i$ separately in the cases where $F_i\cap B^0$ is empty vs nonempty:
  \begin{enumerate}
  \item First assume $F_i\cap B^0=\emptyset$. Then we must have $F_i\cap B^1\neq\emptyset$, so we set $v_i$ to be any element of $B^0$ such that $v_i\triangleleft' u$ for some $u\in F_i$, and then we let $U_i=\{u\in B^1:v_i\triangleleft u\}=N^{B'}(u)$, so that $F_i\cap U_i\neq\emptyset$ and $|U_i|=\Delta$.
  \item Now assume $F_i\cap B^0\neq\emptyset$. By definition $|F_i|\leq|e|\leq\mu 2^{\bar{\ell}/2}\leq\mu|V_{\bar{\ell}/2}|$. Furthermore, because $|e|_\beta\leq\beta^{\bar{\ell}/2}$, every $f\in F_i$ must have $\Lev(f)\leq\bar{\ell}/2$ (see \Cref{def:weightnorm}). Therefore $F_i\cap B^0$ is entirely supported inside subsets $V_\ell\subseteq B^0$ for $\ell\geq\bar{\ell}/2$. Each graph $G^{(\ell)}=(V_\ell\sqcup V_{\ell-1},E^{(\ell)})$ by definition has $(\mu,\epsilon)$-lossless expansion, so for $\ell\geq\bar{\ell}/2$ we have $|F_i|\leq\mu|V_{\ell}|$, and therefore the neighborhood of $F_i\cap V_\ell$ in $G^{(\ell)}$ has size $\geq(1-\epsilon)|F_i\cap V_\ell|$. Summing over all $\ell$, we have $|N^{B'}(F_i\cap B^0)|\geq(1-\epsilon)\Delta|F_i\cap B^0|$. Because the bipartite graph associated to $B$ is given by the graph associated to $B'$ with at most one additional edge incident to each vertex, we also have $|N^B(F_i\cap B^0)|\geq(1-\epsilon)\Delta|F_i\cap B^0|$. Then because each $v\in F_i\cap B^0$ has $|N^B(v)|\leq\Delta+1$, we have a total of $(\Delta+1)|F_i\cap B^0|$ outgoing edges from $F_i\cap B^0$, leading to at least $(1-\epsilon)\Delta|F_i\cap B^0|$ distinct right vertices. Thus at most $2((\Delta+1)-(1-\epsilon)\Delta)|F_i\cap B^0|=2(\epsilon\Delta+1)|F_i\cap B^0|$ of the outgoing edges can lead to right vertices shared by another outgoing edge. That is, at most $2(\epsilon\Delta+1)/(\Delta+1)$-fraction of the outgoing edges fom $F_i\cap B^0$ lead to right vertices shared by another outgoing edge. Thus there must exist some $v_i\in F_i\cap B^0$ such that the set
    \begin{equation*}
      \tilde{U}_i := N^B(v_i)\setminus N^B((F_i\cap B^0)\setminus\{v_i\})
    \end{equation*}
    of neighbors of $v_i$ that are not neighbors of any other vertices in $F_i\cap B^0$ satisfies
    \begin{equation*}
      |\tilde{U}_i| \geq \left(1-\frac{2(\epsilon\Delta+1)}{\Delta+1}\right)\Delta \geq \left(1-2\epsilon-\frac{2}{\Delta+1}\right)\Delta \geq (1-4\epsilon)\Delta,
    \end{equation*}
    where the final inequality above uses the fact that $\Delta\geq 1/\epsilon$ (see \Cref{lem:lossless}). Then because $N^B(v_i)$ consists of all elements of $N^{B'}(v_i)$ along with at most one additional element, the set
    \begin{equation*}
      U_i := N^{B'}(v_i)\setminus N^B((F_i\cap B^0)\setminus\{v_i\})
    \end{equation*}
    satisfies
    \begin{equation*}
      |U_i| \geq |\tilde{U}_i|-1 \geq (1-5\epsilon)\Delta,
    \end{equation*}
    where again we use the fact that $\Delta\geq 1/\epsilon$.
  \end{enumerate}

  In both cases above, we chose $v_i,U_i$ so that $(\{v_i\}\cup U_i)\cap F_i\neq\emptyset$. Therefore by the definition of $F_i$, there exists some $c\in\supp(e)\cap D_i$ with $\Pi^{(i)}(c)\in(\{v_i\}\cup U_i)\cap F_i$, so $D_{i+1}=\{c\in D_i:\Pi^{(i)}(c)\in(\{v_i\}\cup U_i)$ contains $c$ and thus is nonempty. To complete the proof of the inductive hypothesis, we must show that for every $c=(c_1,\dots,c_r)\in D_{i+1}$, every $v\in B^0$ satisfying $v\triangleleft c_i$ and $(c_1,\dots,c_{i-1},v,c_{i+1},\dots,c_r)\in\supp(e)$ must equal $v=v_i$. But if $v\neq v_i$, then because $v\in F_i\cap B^0$ by definition, we have that $v\in(F_i\cap B^0)\setminus\{v_i\}$, so $c_i\in N^B(v)\subseteq N^B((F_i\cap B^0)\setminus\{v_i\})$, and therefore $c_i\notin U_i$, contradicting the assumption that $c\in D_{i+1}$. Therefore the inductive hypothesis must hold, as desired.

    Having completed our inductive definition of $v_1,\dots,v_r$ and $U_1,\dots,U_r$, we let $D=D_r\subseteq C$, and we define an $r$-dimensional cochain complex $\cD^*$ with $j$-cochain basis $D^j=D\cap C^j$, and with coboundary map $\delta^{\cD}(c)=\delta^{\cC}(c)|_D=\delta^{\cC'}(c)|_D$ for $c\in\cD\subseteq\cC$. By definition $\cD^*$ is the $\Upsilon$-direct-sum of the tensor product of $r$ 1-dimensional complexes obtained from the restriction of $\cB^*$ to $0$-cochains supported in $\{v_i\}$ and $1$-cochains supported in $U_i$ for $i\in[r]$. Because $U_i\subseteq N^{B'}(v_i)$, each of these 1-dimensional complexes is of the form described in \Cref{lem:robust}, so we can apply \Cref{lem:robust} to the $\Upsilon$-direct-sum of the product complex $\cD^*$.

    Also note that the final $(i=r)$ iteration of the inductive hypothesis above implies the following:

    \begin{claim}
      \label{claim:ssDprops}
      \begin{enumerate}
      \item $\supp(e)\cap D\neq\emptyset$, that is, $e|_D\neq 0$, and
      \item For every $c\in\supp(e)$ and $c'\in D$ satisfying $c\triangleleft c'$, then $c\in D$, that is, for some $i\in[r]$ we have
        \begin{equation*}
          c = (c_1',\dots,c_{i-1}',v_i,c_{i+1}',\dots,c_r').
        \end{equation*}
      \end{enumerate}
    \end{claim}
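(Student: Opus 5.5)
The claim is a direct unpacking of the inductive hypothesis maintained during the construction of $v_1,\dots,v_r$ and $U_1,\dots,U_r$, applied at the final stage. Here $D$ should be read as the set obtained after all $r$ steps, namely $D=\prod_{j\in[r]}(\{v_j\}\cup U_j)$ (viewed inside each of the $\Upsilon$ direct-sum copies). Two ingredients are needed. First, the hypothesis at stage $r+1$: $\supp(e)\cap D\neq\emptyset$, together with the ``uniqueness of the $0$-side neighbor'' clause. Second, a description of covering relations $\triangleleft$ in a tensor product complex.

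For item 1, I would simply note that the inductive step was shown to produce a nonempty $D_{i+1}$. Concretely, $(\{v_i\}\cup U_i)\cap F_i\neq\emptyset$ yields some $c\in\supp(e)\cap D_i$ with $\Pi^{(i)}(c)\in\{v_i\}\cup U_i$. Running this through $i=r$ gives $\supp(e)\cap D\neq\emptyset$, i.e.\ $e|_D\neq 0$.

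For item 2, I would first record the structure of $\triangleleft$ in the product. Since $\delta^{\cC}=\sum_j I\otimes\cdots\otimes\delta^{\cB}\otimes\cdots\otimes I$ (and the direct sum does not change the basis poset, by \Cref{def:directsum}), we have $c\triangleleft c'$ iff there is a unique coordinate $j\in[r]$ with $c_k=c'_k$ for all $k\neq j$ and $c_j\triangleleft c'_j$ in $\cB$. In that case $c_j\in B^0$ and $c'_j\in B^1$.

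Now take $c\in\supp(e)$ and $c'\in D$ with $c\triangleleft c'$, and let $j$ be the distinguished coordinate. Then $c'\in D$ and $c=(c'_1,\dots,c'_{j-1},c_j,c'_{j+1},\dots,c'_r)\in\supp(e)$ with $c_j\triangleleft c'_j$. This is exactly the situation covered by the uniqueness clause of the inductive hypothesis, with $v=c_j$, and that clause forces $c_j=v_j$. Hence every coordinate of $c$ lies in the corresponding $\{v_k\}\cup U_k$, since $c'_k$ does for $k\neq j$ and $c_j=v_j$. Therefore $c\in D$, with $c=(c'_1,\dots,c'_{j-1},v_j,c'_{j+1},\dots,c'_r)$ as claimed.

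There is no real obstacle here. The only point needing care is bookkeeping: the uniqueness clause must be available for every $j\in[r]$ at the final stage, not just for $j<r$. That is, the hypothesis should be the one established for $D_{r+1}$ (which the text calls $D$). The inductive step checked precisely the new index $j=i$, while indices $j<i$ are inherited because $D_{i+1}\subseteq D_i$. I would state this explicitly before invoking the clause.
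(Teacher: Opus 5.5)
Your proposal is correct and matches the paper, which states the claim as an immediate consequence of the final ($i=r$) iteration of the inductive hypothesis. Your description of $\triangleleft$ in the product, and your remark that the uniqueness clause for indices $j<i$ carries over because $D_{i+1}\subseteq D_i$, supply the details the paper leaves implicit. You are also right to read $D$ as $D_{r+1}=\prod_{j\in[r]}(\{v_j\}\cup U_j)$: the paper's ``$D=D_r$'' and the $\{v_j\}\times U_j$ in its definition of $D_i$ are notational slips.
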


    Now define $\bar{c}^0=(v_1,\dots,v_r)\in C^0$. Our goal is to show that $\bar{c}^0$ satisfies either \Cref{eq:ssfup} or \Cref{eq:ssfdown}. We consider two cases separately:

    \begin{enumerate}
    \item $|\delta^{\cD}(e|_D)|\geq 16r\epsilon\Delta|e|_D|$: In this case, we show that \Cref{eq:ssfdown} holds. Intuitively, because $\delta^{\cD}(e|_D)$ is large enough, we will show that flipping all bits of $e$ in $\supp(e|_D)$ reduces the weight of the image under $\delta^{\cC}$, as we will zero out all components in $D$, while changing at most a small number of components outside of $D$.

      Specifically, we define $c^i\in\cC^i$ by $c^i=e|_D\in(\bF_2^\Upsilon)^{D^i}\subseteq(\bF_2^\Upsilon)^{C^i}$.
      % Note that if $i=0$, we have defined $c^0$ twice, but both definitions agree, as they set $c^0$ to be the indicator of the unique element $(v_1,\dots,v_r)\in D^0$.
      Now we have
      \begin{equation}
        \label{eq:ssfin}
        \delta^{\cC}(e)|_D = \delta^{\cC}(c^i)|_D = \delta^{\cD}(c^i) = \delta^{\cC'}(c^i)|_D.
      \end{equation}
      Specifically, the second and third equalities above hold by definition, while the first equality holds because if it were violated at some component (i.e.~basis element) $c'\in D^{i+1}$, then there must be some $c\triangleleft c'$ such that $e_c\neq(c^i)_c$, and hence $c\in\supp(e)\setminus D$ so that $e_c\neq 0$ and $(c^i)_c=0$. But the existence of such $c,c'$ violates \Cref{claim:ssDprops}. Furthermore, we have
      \begin{equation}
        \label{eq:ssfout}
        |\delta^{\cC'}(c^i)|_{C\setminus D}| \leq 5\epsilon r\Delta|c^i|,
      \end{equation}
      as for every $c=(c_1,\dots,c_r)\in\supp(c^i)\subseteq D$, then every $c'\triangleright' c$ must be of the form $c'=(c_1,\dots,c_{j-1},c_j',c_{j+1},\dots,c_r)$ for some $j\in[r]$ and some $c_j'\triangleright c_j$. But there $\leq r$ such choices of $j\in[r]$, and we will only have $c'\notin D$ if $c_j'$ is one of the $\leq 5\epsilon\Delta$ elements of $N^{B'}(c_j)\setminus U_j$. Hence there are $\leq 5\epsilon r\Delta$ choices of $c'\triangleright' c$ that lie outside of $D$, so \Cref{eq:ssfout} follows by summing over all $c\in\supp(c^i)$.
      Now it follows that
      \begin{align*}
        |\delta^{\cC}(e)+\delta^{\cC'}(c^i)|
        &= |(\delta^{\cC}(e)+\delta^{\cC'}(c^i))|_D| + |(\delta^{\cC}(e)+\delta^{\cC'}(c^i))|_{C\setminus D}| \\
        &\leq 0 + |\delta^{\cC}(e)|_{C\setminus D}|+|\delta^{\cC'}(c^i)|_{C\setminus D}| \\
        &\leq |\delta^{\cC}(e)|-|\delta^{\cC}(e)|_D|+5\epsilon r\Delta|c^i| \\
        &= |\delta^{\cC}(e)|-|\delta^{\cD}(c^i)|+5\epsilon r\Delta|c^i| \\
        &< |\delta^{\cC}(e)|-10\epsilon r\Delta|c^i|,
      \end{align*}
      where the first inequality above holds by \Cref{eq:ssfin}, the second inequality holds by \Cref{eq:ssfout}, the subsequent equality holds by \Cref{eq:ssfin}, and the final inequality holds by the definition of $c^i=e|_D$ along with the initial assumption that $|\delta^{\cD}(e|_D)|\geq 16r\epsilon\Delta|e|_D|$. Thus \Cref{eq:ssfdown} follows because $\Delta\geq 1/\epsilon$.
    \item $|\delta^{\cD}(e|_D)|<16r\epsilon\Delta|e|_D|$: In this case, we show that \Cref{eq:ssfup} holds. Intuitively, because $|\delta^{\cD}(e|_D)|$ is small enough, we apply \Cref{lem:robust} to $\cD^*$ to show that $e|_D$ is close to a coboundary $\delta^{\cD}(c^{i-1})$. We then argue that we can reduce the weight of $e$ by adding in the coboundary $\delta^{\cC'}(c^{i-1})$.

      First, if $i=0$, then as $D^0=\{(v_1,\dots,v_r)\}$, we have $e|_D\in\bF_2^\Upsilon$ equal to some nonzero vector, with $\supp(e|_D)=(v_1,\dots,v_r)$ the unique basis element in $D^0$. Hence by the definition of $\cD^*$,
      \begin{equation*}
        |\delta^{\cD}(e|_D)| = \sum_{i\in[r]}|U_i| \geq r(1-5\epsilon)\Delta.
      \end{equation*}
      Because $1-5\epsilon>16\epsilon$ as $\epsilon\leq 1/32$ by \Cref{eq:ssfeps}, and $|e|_D|=1$, the RHS above is greater than $16r\epsilon\Delta|e|_D|$, contradicting the assumption that $|\delta^{\cD}(e|_D)|<16r\epsilon\Delta|e|_D|$. Thus we may assume that $i\geq 1$.

      Now because each $|U_i|\geq(1-5\epsilon)\Delta>\Delta/2$ as $\epsilon\leq 1/32$, then recalling the definition of $\kappa=\kappa(r,1/2)>0$ from \Cref{lem:robust}, it follows from \Cref{lem:robust} that there exists some $e'\in B^i(\cD)$ such that
      \begin{equation*}
        \kappa\Delta\cdot|e|_D+e'| \leq |\delta^{\cD}(e|_D)| < 16r\epsilon\Delta|e|_D|,
      \end{equation*}
      that is,
      \begin{equation*}
        |e|_D+e'| \leq \frac{16r\epsilon}{\kappa}\cdot|e|_D|.
      \end{equation*}
      Now applying \Cref{lem:robust} to any element of $\cD^{i-1}$ whose coboundary equals $e'$, we obtain some $c^{i-1}\in\cD^{i-1}\subseteq\cC^{i-1}$ with $\delta^{\cD}(c^{i-1})=e'$ and
      \begin{equation}
        \label{eq:ssfcim1}
        |c^{i-1}| \leq \frac{1}{\kappa\Delta}|e'| \leq \frac{1}{\kappa\Delta}\left(\frac{16r\epsilon}{\kappa}+1\right)|e|_D|.
      \end{equation}
      By definition $|e'|\geq(1-16r\epsilon/\kappa)|e|_D|>0$ because $e|_D\neq 0$ and $\epsilon<\kappa/16r$ by \Cref{eq:ssfeps}, so $c^{i-1}\neq 0$.
      % As a point of notation, if $i=1$ then it appears we have overloaded the variable $c^0=c^{i-1}$ but in this case both $c^0$ and $c^{i-1}$ must equal the unique nonzero element $\1_{(v_1,\dots,v_r)}$ of $\cD^0$.

      Recall that our goal is to show \Cref{eq:ssfup} holds. For this purpose,  we have that
      \begin{equation*}
        |(e+\delta^{\cC'}(c^{i-1}))|_D| = |e|_D+e'| \leq \frac{16r\epsilon}{\kappa}\cdot|e|_D|
      \end{equation*}
      and
      \begin{equation*}
        |(e+\delta^{\cC'}(c^{i-1}))|_{C\setminus D}-e|_{C\setminus D}| = |\delta^{\cC'}(c^{i-1})|_{C\setminus D}| \leq 5\epsilon r\Delta|c^{i-1}|
        %\leq \frac{5\epsilon r}{\kappa}\left(\frac{16r\epsilon}{\kappa}+1\right)|e|_D|
      \end{equation*}
      where the inequality above holds by the definition of $D$, by similar reasoning used to show \Cref{eq:ssfout} above.
      % the second inequality above holds by \Cref{eq:ssfcim1}.
      Combining the above inequalities then gives the desired inequality
      \begin{align*}
        |e+\delta^{\cC'}(c^{i-1})|
        &= |(e+\delta^{\cC'}(c^{i-1}))|_D| + |(e+\delta^{\cC'}(c^{i-1}))|_{C\setminus D}| \\
        &\leq \frac{16r\epsilon}{\kappa}\cdot|e|_D| + |e|_{C\setminus D}| + 5\epsilon r\Delta|c^{i-1}| \\
        &= \frac{16r\epsilon}{\kappa}\cdot|e|_D| + |e| - |e_D| + 5\epsilon r\Delta|c^{i-1}| \\
        &\leq |e| - \left(1-\frac{16r\epsilon}{\kappa}\right)|e|_D| + 5\epsilon r\Delta|c^{i-1}| \\
        &\leq |e| - \frac12|e|_D| + 5\epsilon r\Delta|c^{i-1}| \\
        &< |e| - \frac{\kappa\Delta}{4}|c^{i-1}| + 5\epsilon r\Delta|c^{i-1}| \\
        &\leq |e| - 4r|c^{i-1}|
      \end{align*}
      where the third inequality above holds because $\epsilon\leq\kappa/32r$ by \Cref{eq:ssfeps}, the fourth inequality holds by \Cref{eq:ssfcim1} because $\epsilon\leq\kappa/32r$, and the fifth inequality holds because $\kappa\Delta/4-5\epsilon r\Delta\geq 4r$ as $\Delta\geq 1/\epsilon$ and $\epsilon\leq\kappa/40r$ by \Cref{eq:ssfeps}. \qedhere
    \end{enumerate}
\end{proof}

\subsection{Small-Set Flip Subroutines}
In this section, we present subroutines that repeatedly apply constant-sized perturbations (i.e.~``flips'') to an error to reduce the error weight or its syndrome weight. These algorithms are specifically presented in \Cref{alg:ssflip}. We will subsequently apply \Cref{lem:ssflip} and \Cref{lem:ssfimp} to show that these subroutines indeed successfully decode a large class of errors. These subroutines are adapted from those presented in \cite[Section~4]{golowich_constant-overhead_2025}, which in turn are high-dimensional generalizations of those used in \cite{leverrier_quantum_2015,fawzi_efficient_2018,fawzi_constant_2020}. However, in our setting we require the flip to satisfy either \Cref{eq:ssfup} or \Cref{eq:ssfdown}, so that we will be able to apply \Cref{lem:ssfimp} to argue that both the unweighted and $\beta$-weighted Hamming norms decrease with each flip.

\begin{algorithm}
  \caption{\label{alg:ssflip} (Classical) decoding algorithms that apply a small-set flip decoder in the sense of \Cref{lem:ssflip}. These algorithms are adapted from related ones in \cite{golowich_constant-overhead_2025}. For $\epsilon>0$, $\bar{\ell},r\in\bN$, and $\Upsilon\in\bN$, we let $\cC^*=\cC^*(\epsilon,\bar{\ell},0,r)^{\oplus\Upsilon}$ and ${\cC'}^*={\cC'}^*(\epsilon,\bar{\ell},0,r)^{\oplus\Upsilon}$ with partial orders $\prec$ and $\prec'$ respectively. The decoding algorithms here implicitly depend on the choice of $\epsilon,\bar{\ell},r,\Upsilon$, as well as a choice of $0\leq i\leq r-1$, which will always be made clear from context.}
  \SetKwInOut{Input}{Input}
  \SetKwInOut{Output}{Output}
  \SetKwProg{Fn}{Function}{:}{}

  \SetKwFunction{FnSSFlipSyn}{SSFlipSyn}

  \Input{$(i+1)$-coboundary $s\in B^{i+1}(\cC)$ of the cochain complex $\cC^*$}
  \Output{$a^i\in\cC^i$ with $s=\delta^{\cC}(a^i)$}
  \Fn{\FnSSFlipSyn{$s$}}{
  % \Fn{\FnSSFlipSyn{$s;i,\cC^*$}}{
    Initialize $a^i\gets 0\in\cC^i$ \\
    \While{$\exists \bar{c}^0\in C^0,c^i\in\cC^i$ with $\bar{c}^0\preceq' c^i$ and $|s+\delta^{\cC}(a^i)+\delta^{\cC'}(c^i)|<|s+\delta^{\cC}(a^i)|-4r|c^i|$}{
      $a^i\gets a^i+c^i$
    }
    \If{$s\neq\delta^{\cC}(a^i)$}{
      \Return{FAIL}
    }\Else{
      \Return{$a^i$}
    }
  }
  
  \SetKwFunction{FnSSFlipErr}{SSFlipErr}

  \Input{$i$-cochain $e\in\cC^i$ of the cochain complex $\cC^*$}
  \Output{$a^{i-1}\in\cC^{i-1},a^i\in\cC^i$ with $e=a^i+\delta^{\cC}(a^{i-1})$}

  \Fn{\FnSSFlipErr{$e$}}{
  % \Fn{\FnSSFlipErr{$e;i,\cC^*$}}{
    Initialize $a^{i-1}\gets 0\in\cC^{i-1}$, $a^i\gets 0\in\cC^i$ \\
    \While{$e\neq a^i+\delta^{\cC}(a^{i-1})$}{ \label{li:fewhile}
      \If{$\exists \bar{c}^0\in C^0,c^{i-1}\in\cC^{i-1}$ with $\bar{c}^0\preceq' c^{i-1}$ and $|e+a^i+\delta^{\cC}(a^{i-1})+\delta^{\cC'}(c^{i-1})|<|e+a^i+\delta^{\cC}(a^{i-1})|-4r|c^{i-1}|$}{ \label{li:im1flip}
        $a^{i-1} \gets a^{i-1}+c^{i-1}$
      }\ElseIf{$\exists \bar{c}^0\in C^0,c^i\in\cC^i$ with $\bar{c}^0\preceq' c^i$ and $|\delta^{\cC}(e+a^i)+\delta^{\cC'}(c^i)|<|\delta^{\cC}(e+a^i)|-4r|c^i|$}{ \label{li:iflip}
        $a^i \gets a^i+c^i$
      }\Else{
        \Return{FAIL}
      }
    }
    \Return{$a^{i-1},a^i$}
  }
\end{algorithm}

In \Cref{alg:ssflip}, and particularly in \FnSSFlipSyn, we assume we are given a noiseless syndrome $s=\delta^{\cC}(e)$ of some error $e$. While \cite{golowich_constant-overhead_2025} applied a similar algorithm to noisy syndromes, in this paper for simplicity we will always consider sufficiently high-dimensional complexes $\cC^*$ that we will be able to use ``Knill-type'' error-correction, which never requires decoding from noisy syndromes.

We will now formalize the fact that \FnSSFlipSyn and \FnSSFlipErr are \emph{local algorithms}, meaning that they act independently on disjoint connected components of the connectivity graph $G^{\cC}$ defined in \Cref{def:conngraph}. This locality property of the algorithms was previously used in \cite{fawzi_efficient_2018}, and later in \cite{golowich_constant-overhead_2025}; we generally follow the notation used in \cite{golowich_constant-overhead_2025}. Specifically, we first define the \emph{footprint} of these algorithms to be the set of cochain basis elements that appear in the support of some vector during their execution:

\begin{definition}
  \label{def:footprint}
  Define $\cC^*=\cC^*(\epsilon,\bar{\ell},0,r)^{\oplus\Upsilon}$ and ${\cC'}^*={\cC'}^*(\epsilon,\bar{\ell},0,r)^{\oplus\Upsilon}$ as in \Cref{alg:ssflip}.

  For $e\in\cC^i$, we say the \emph{footprint} at time $t$ of \FnSSFlipSyn{$\delta^{\cC}(e)$} is the subset of $C^i\sqcup C^{i+1}$ given by the union of the values of the sets
  % $\supp(e),\supp(c^i)\subseteq C^i$ and $\supp(\delta^{\cC}(e)),\supp(\delta^{\cC}(c^i))\subseteq C^{i+1}$
  $\supp(e+a^i)\subseteq C^i$ and $\supp(\delta^{\cC}(e+a^i))\subseteq C^{i+1}$
  across the first $t$ iterations of the while loop.

  Similarly, for $e\in\cC^i$, we say the \emph{footprint} at time $t$ of \FnSSFlipErr{$e$} is the subset of $C^{i-1}\sqcup C^i\sqcup C^{i+1}$ given by the union of the values of the sets
  $\supp(a^{i-1})\subseteq C^{i-1}$, $\supp(e+a^i+\delta^{\cC}(a^{i-1}))\subseteq C^i$, and $\supp(\delta^{\cC}(e+a^i))\subseteq C^{i+1}$
  across the first $t$ iterations of the while loop.

  In both cases, by the \emph{footprint} we mean the footprint at time $t=\infty$ after the respective while loop has terminated.
\end{definition}

\begin{remark}
  Every $c^{i-1}$ or $c^i$ that is chosen in some while loop in \Cref{alg:ssflip} equals the sum of two vectors whose supports are included in the footprint in \Cref{def:footprint}; for instance, each $c^i$ in \FnSSFlipSyn is the difference of the value of $e+a^i$ at two successive timesteps. Hence $\supp(c^{i-1})$ and $\supp(c^i)$ lie in the footprint. By similar reasoning $\supp(\delta^{\cC}(c^{i-1}))$ and $\supp(\delta^{\cC}(c^i))$ lie in the footprint, as for instance $\delta^{\cC}(c^i)$ in \FnSSFlipSyn is the difference of the value of $\delta^{\cC}(e+a^i)$ at two successive timesteps. Furthermore, $\supp(\delta^{\cC'}(c^{i-1}))$ and $\supp(\delta^{\cC'}(c^i))$ also lie in the footprint because by construction $\supp(\delta^{\cC'}(c^{i-1}))\subseteq\supp(\delta^{\cC}(c^{i-1}))$ and $\supp(\delta^{\cC'}(c^i))\subseteq\supp(\delta^{\cC}(c^i))$.
\end{remark}

\Cref{lem:sslocal} below shows that our small-set flip algorithms act independently on disjoint connected components of subgraph of the connectivity graph $G^{\cC}$ (see \Cref{def:conngraph}) induced by the footprint. Below, we slightly extend the standard notation of restriction of vectors. Specifically, for $x\in(\bF_2^\Upsilon)^{C^i}$ and $V\subseteq C$, we let $x|_V\in(\bF_2^\Upsilon)^{C^i}$ denote the vector obtained from $x$ by replacing the values of every component in $C^i\setminus V$ with $0$. In other words, we write $x|_V$ as a shorthand for $(x|_{V\cap C^i},0^{C^i\setminus V})$.

\begin{lemma}[Similar to Lemma~4.4 of \cite{golowich_constant-overhead_2025}]
  \label{lem:sslocal}
  Define $\cC^*=\cC^*(\epsilon,\bar{\ell},0,r)^{\oplus\Upsilon}$ and ${\cC'}^*={\cC'}^*(\epsilon,\bar{\ell},0,r)^{\oplus\Upsilon}$ as in \Cref{alg:ssflip}, and let $\beta\geq 1$, $0\leq i\leq r-1$ and $e\in\cC^i$. Then the following hold:
  \begin{enumerate}
  \item Let $S_{\mathrm{syn}}\subseteq C^i\sqcup C^{i+1}$ and $a^i\in\cC^i$ denote the footprint and output, respectively, of \FnSSFlipSyn{$\delta^{\cC}(e)$}. For every connected component $V$ of the subgraph of $G^{\cC}$ induced by $S_{\mathrm{syn}}$, the output of \FnSSFlipSyn{$\delta(e|_V)$} must equal $a^i|_V$.
    Furthermore, we have
    \begin{align}
      \label{eq:sslunif}
      \begin{split}
        |V| &\leq 4r\Delta^2\cdot|e|_V| \\
        |V|_\beta &\leq 4r\beta\Delta^2\cdot|e|_V|_\beta.
      \end{split}
    \end{align}
  \item Let $S_{\mathrm{err}}\subseteq C^{i-1}\sqcup C^i\sqcup C^{i+1}$ and $a^{i-1}\in\cC^{i-1}$, $a^i\in\cC^i$ denote the footprint and output, respectively, of \FnSSFlipErr{$e$}. For every connected component $V$ of the subgraph of $G^{\cC}$ induced by $S_{\mathrm{err}}$, the output of \FnSSFlipErr{$e|_V$} must equal $a^{i-1}|_{V\cap C^{i-1}},\;a^i|_{V\cap C^i}$.
    Furthermore, we have
    \begin{align}
      \label{eq:sslweigh}
      \begin{split}
        |V| &\leq 8r\Delta^2\cdot|e|_V| \\
        |V|_\beta &\leq 8r\beta\Delta^2\cdot|e|_V|_\beta.
      \end{split}
    \end{align}
  \end{enumerate}
\end{lemma}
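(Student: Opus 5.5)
The proof follows the locality argument of \cite[Lemma~4.4]{golowich_constant-overhead_2025}, augmented with a weighted-norm bookkeeping step. We treat \FnSSFlipSyn in detail; \FnSSFlipErr is analogous, with one extra level $C^{i-1}$ in the footprint.

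\textbf{Locality.} We show by induction on the iteration count $t$ that running \FnSSFlipSyn on $\delta^{\cC}(e)$ and on $\delta^{\cC}(e|_V)$ can be coupled. The coupling is that every flip $c^i$ of the full run whose support meets $V$ is supported entirely inside $V$. Moreover, the sequence of such flips is a valid flip sequence for the restricted run, and the flips outside $V$ do not affect it. The key observation is that each chosen $c^i$ satisfies $\bar{c}^0\preceq' c^i$, so all of $\supp(c^i)\cup\supp(\delta^{\cC}(c^i))$ lies above a common $\bar{c}^0$. This set is therefore a clique in $G^{\cC}$ (\Cref{def:conngraph}). By the remark following \Cref{def:footprint}, it lies in the footprint, hence in a single connected component.

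Since the syndrome weight decomposes as a sum over components, and $\delta^{\cC}$ of a vector supported in $V$ has support adjacent to $V$, the improvement condition $|s+\delta^{\cC}(a^i)+\delta^{\cC'}(c^i)|<|s+\delta^{\cC}(a^i)|-4r|c^i|$ is decided component-locally. Hence the restricted run may perform exactly the $V$-flips of the full run, in the same order. One issue remains: the restricted run might find a further flip that the full run did not. This would contradict termination of the full run, since such a flip would also be a valid improving flip there. Finally, we must fix a deterministic tie-breaking rule, or phrase the statement as "some execution", exactly as in \cite{golowich_constant-overhead_2025}. The output then restricts correctly to $a^i|_V$.

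\textbf{Size bounds.} Each iteration strictly decreases the syndrome weight restricted to $V$, by more than $4r|c^i|$. Summing over the flips inside $V$ gives $\sum|c^i|\le|\delta^{\cC}(e|_V)|/(4r)\le (r\Delta+r)|e|_V|/(4r)$, using locality $\le r(\Delta+1)$ of $\delta^{\cC}$. The footprint in $V$ is then contained in $\supp(e|_V)\cup\supp(\delta^{\cC}(e|_V))\cup\bigcup(\supp c^i\cup\supp\delta^{\cC}c^i)$. Each flip contributes at most $(1+r(\Delta+1))|c^i|$ elements. Combining gives $|V|\le O(r\Delta^2)|e|_V|$ with room for the constant $4r\Delta^2$, since $\Delta\ge 1/\epsilon$ is large.

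For the weighted bound we invoke \Cref{lem:ssfimp}. Each flip reduces $|\delta^{\cC}(e+a^i)|_\beta$ by more than $3r\beta|c^i|_\beta$, so $\sum|c^i|_\beta\le|\delta^{\cC}(e|_V)|_\beta/(3r\beta)$. Moreover, every element of $\supp(\delta^{\cC}(x))$ has level at most one more than some element of $\supp(x)$, since neighbours differ in level by $0$ or $1$. This gives $|\delta^{\cC}(x)|_\beta\le r(\Delta+1)\beta|x|_\beta$, and the same bookkeeping yields $|V|_\beta\le 4r\beta\Delta^2|e|_V|_\beta$. For \FnSSFlipErr we alternate between the error-weight potential (line~\ref{li:im1flip}) and the syndrome potential (line~\ref{li:iflip}). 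We bound the total $\sum|c^{i-1}|+\sum|c^i|$ by $(|e|_V|+|\delta^{\cC}(e|_V)|)/(3r)$. Each $i$-flip can raise $|e+a^i+\delta a^{i-1}|$ by at most $|c^i|$, which is absorbed by the syndrome potential. This doubles the constant to $8r\Delta^2$.

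\textbf{Main obstacle.} The delicate step is the coupling in the locality claim for \FnSSFlipErr. There, the footprint must include $\supp(a^{i-1})$, and the error-weight improvement test involves $\delta^{\cC}(a^{i-1})$. We must also check that the two branch priorities (line~\ref{li:im1flip} before line~\ref{li:iflip}) are evaluated identically in the restricted run. I would handle this by showing that the restricted run's branch choice is determined by the component's state alone. The order of branch attempts across different components is irrelevant, because improvements in distinct components commute.
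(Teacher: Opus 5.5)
Your proposal is correct and takes essentially the same route as the paper. Locality comes from the standard argument of \cite[Lemma~4.4]{golowich_constant-overhead_2025}, which you make more explicit by noting that each flip together with its coboundary is a clique in $G^{\cC}$ lying in the footprint. The size bounds come from the syndrome and error-weight potentials, \Cref{lem:ssfimp}, and $|\delta^{\cC}(x)|_\beta\le r(\Delta+1)\beta|x|_\beta$.

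There are two harmless slips. Since the update adds $\delta^{\cC}(c^i)$ rather than $\delta^{\cC'}(c^i)$, the actual syndrome drops by more than $3r|c^i|$ per flip, not $4r|c^i|$. Also, your combined \texttt{SSFlipErr} bound picks up a factor $(1+1/(3r))$ on the syndrome term. Neither affects the stated constants $4r\Delta^2$ and $8r\Delta^2$.
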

\begin{proof}
  The proof of the first claim in each item, i.e.~that the decoders output $a^{i-1}|_V,\;a^i|_V$ on input $e|_V$, is identical to the proof of \cite[Lemma~4.4]{golowich_constant-overhead_2025}. That is, every update $c^i$ or $c^{i-1}$ in any iteration of the while loop in either \FnSSFlipSyn{$\delta^{\cC}(e)$} or \FnSSFlipErr{$e$} must be contained in the neighborhood in $G^{\cC}$ of the footprint from the previous step. Thus different connected components of the subgraph induced by the footprint do not interact at all during the algorithm's execution. It also follows that each $\supp(c^i)$ and each $\supp(c^{i-1})$ is either contained in $V$, or else disjoint from~$V$.

  We now turn to proving \Cref{eq:sslunif,eq:sslweigh}. Note that it suffices to prove the second inequality (regarding $\beta$-weighted norms) in each of \Cref{eq:sslunif,eq:sslweigh}, as then the first inequality (regarding unweighted norms) follows by setting $\beta=1$. We consider \FnSSFlipSyn and \FnSSFlipErr separately. The proof is similar to that of \cite[Lemma~4.5]{golowich_constant-overhead_2025}, though in our setting we must consider weighted as well as unweighted norms.
  \begin{enumerate}
  \item Fix a connected component $V$ of the subgraph of $G^{\cC}$ induced by $S_{\mathrm{syn}}$. Let $F^i$ be the (multi)set of values of $c^i$ in the while loop in \FnSSFlipSyn{$\delta^{\cC}(e)$} for which $\supp(c^i)\subseteq V$. The value $|\delta^{\cC}(e+a^i)|_V|_\beta$ initially equals $|\delta^{\cC}(e)|_V|_\beta$ (as we initialize $a^i=0$), and then by \Cref{eq:ssfdownimp} in \Cref{lem:ssfimp} must decrease by at least $3r\beta|c^i|_\beta$ in the iteration of the while loop at which each $c^i\in F^i$ is chosen. Therefore because $|\delta^{\cC}(e+a^i)|_V|_\beta$ cannot decrease to a value below $0$, we must have
    \begin{equation}
      \label{eq:sslsumci}
      \sum_{c^i\in F^i}3r\beta|c^i|_\beta \leq |\delta^{\cC}(e)|_V|_\beta.
    \end{equation}
    Now by definition every basis element $b\in C$ has at most $r(\Delta)+1$ distinct $b'\triangleright b$, each of which lies at level $\Lev(b')\leq\Lev(b)+1$. Therefore every $c\in\cC$ has
    \begin{equation}
      \label{eq:cobnormbeta}
      |\delta^{\cC}(c)|_\beta\leq r(\Delta+1)\beta|c|.
    \end{equation}
    Then as $V$ is the footprint of \FnSSFlipSyn{$\delta^{\cC}(e|_V)$}, by definition
    \begin{equation*}
      V = \supp(e|_V) \cup \supp(\delta^{\cC}(e)|_V) \cup \bigcup_{c^i\in F^i}(\supp(c^i) \cup \supp(\delta^{\cC}(c^i))),
    \end{equation*}
    so
    \begin{align}
      \label{eq:Vbetasyn}
      \begin{split}
        |V|_\beta
        &\leq |e|_V|_\beta + |\delta^{\cC}(e)|_V|_\beta + \sum_{c^i\in F^i}(|c^i|_\beta + |\delta^{\cC}(c^i)|_\beta) \\
        &\leq (1+r(\Delta+1)\beta)\left(|e|_V|_\beta+\sum_{c^i\in F^i}|c^i|_\beta\right) \\
        &\leq (1+r(\Delta+1)\beta)\left(|e|_V|_\beta+\frac{|\delta^{\cC}(e)|_V|_\beta}{3r\beta}\right) \\
        &\leq \frac{(1+r(\Delta+1)\beta)^2}{3r\beta}|e|_V|_\beta \\
        &\leq 4r\beta\Delta^2\cdot|e|_V|_\beta,
      \end{split}
    \end{align}
    as desired.
  \item Fix a connected component $V$ of the subgraph of $G^{\cC}$ induced by $S_{\mathrm{err}}$. Let $F^{i-1}$ (resp.~$F^i$) be the (multi)set of values of $c^{i-1}$ (resp.~$c^i$) in the while loop in \FnSSFlipErr{$e$} for which $\supp(c^{i-1})\subseteq V$ (resp.~$\supp(c^i)\subseteq V$). As in the analysis of \FnSSFlipSyn{$\delta^{\cC}(e)$} above, the value $|\delta^{\cC}(e+a^i)|_V|_\beta$ initially equals $|\delta^{\cC}(e)|_V|_\beta$ (as we initialize $a^i=0$), and then by \Cref{eq:ssfdownimp} in \Cref{lem:ssfimp} must decrease by at least $3r\beta|c^i|_\beta$ in the iteration of the while loop at which each $c^i\in F$ is chosen. Thus \Cref{eq:sslsumci} also holds here for \FnSSFlipErr{$e$}. Similarly, the value $|e+a^i+\delta^{\cC}(a^{i-1})|_\beta$ initially equals $|e|_\beta$, can increase by at most $|c^i|_\beta$ in the iteration of the while loop at which each $c^i\in F^i$ is chosen, and by \Cref{eq:ssfupimp} in \Cref{lem:ssfimp} must decrease by at least $3r\beta|c^{i-1}|_\beta$ in the iteration of the while loop at which each $c^{i-1}\in F^{i-1}$ is chosen. Therefore because $|e+a^i+\delta^{\cC}(a^{i-1})|_\beta$ cannot decrease to a value below $0$, we must have
    \begin{equation}
      \label{eq:sslsumcim1}
      \sum_{c^{i-1}\in F^{i-1}}3r\beta|c^{i-1}|_\beta \leq |e|_V|_\beta+\sum_{c^i\in F^i}|c^i|_\beta % \leq |e|_\beta+\frac{|\delta^{\cC}(e)|_V|_\beta}{3r\beta}.
    \end{equation}
    Now as $V$ is the footprint of \FnSSFlipErr{$e|_V$}, by definition
    \begin{align*}
      V = \supp(e|_V) \cup \supp(\delta^{\cC}(e)|_V) \cup \bigcup_{c^{i-1}\in F^{i-1}}&(\supp(c^{i-1})\cup\supp(\delta^{\cC}(c^{i-1}))) \\
      \cup \bigcup_{c^i\in F^i}&(\supp(c^i)\cup\supp(\delta^{\cC}(c^i))),
    \end{align*}
    so by \Cref{eq:sslsumcim1,eq:sslsumci,eq:cobnormbeta} we have
    \begin{align*}
      |V|_\beta
      &\leq (1+r(\Delta+1)\beta)\left(|e|_V|_\beta + \sum_{c^{i-1}\in F^{i-1}}|c^{i-1}|_\beta + \sum_{c^i\in F^i}|c^i|_\beta\right) \\
      &\leq 2(1+r(\Delta+1)\beta)\left(|e|_V|_\beta + \sum_{c^i\in F^i}|c^i|_\beta\right) \\
      &\leq 8r\beta\Delta^2\cdot|e|_V|_\beta.
    \end{align*}
    To see that the final inequality above holds, note that the expression on the second line above is $2$ times the expression on the second line of \Cref{eq:Vbetasyn}, and hence the final line above is $2$ times the final line of \Cref{eq:Vbetasyn}, as we showed above that \Cref{eq:sslsumci} holds for both \FnSSFlipSyn{$\delta^{\cC}(e)|_V$} and for \FnSSFlipErr{$e|_V$}. \qedhere
  \end{enumerate}
\end{proof}

In \Cref{lem:basedec} below, we apply the lemmas above to show that \FnSSFlipSyn successfully decodes $\cE_Z^{\cC}(\eta,\gamma,\omega)$-avoiding errors for appropriate choices of $\eta,\gamma,\omega$. Here we consider the cochain complex $\cC^*=\cC^*(\epsilon,\bar{\ell},0,r)$, which by \Cref{def:redsupp} has $\cE_{Z,\Rd}^{\cC}=\emptyset$. Hence all cochains are trivially $\cE_{Z,\Rd}^{\cC}$-avoiding, which simplifies the task of showing that a cochain is $\cE_Z^{\cC}(\eta,\gamma,\omega;\beta)$-avoiding (see \Cref{def:badfams}).

\begin{lemma}
  \label{lem:basedec}
  Let $r,\bar{\ell}\in\bN$, $\beta>1$, and define $\epsilon=\epsilon(r)$ as in \Cref{lem:ssflip}, define $\mu=\mu(\epsilon)$ and $\Delta=\Delta(\epsilon)$ as in \Cref{lem:lossless}.
  % and define $\beta=\beta(\epsilon)$ as in \Cref{def:weightnorm}.
  For $\Upsilon\in\bN$, let $\cC^*=\cC^*(\epsilon,\bar{\ell},0,r)^{\oplus\Upsilon}$, and for arbitrary
  \begin{align}
    \label{eq:bdinerr}
    \eta &\leq \mu 2^{\bar{\ell}/2}, \hspace{1em} \gamma \leq \frac{1}{16r\Delta^2}, \hspace{1em} \omega \leq \frac{\beta^{\bar{\ell}/2}}{8r\beta\Delta^2},
  \end{align}
  let
  \begin{align}
    \label{eq:bdouterr}
    \tilde{\eta} &= \eta, \hspace{1em} \tilde{\gamma} = 8r\Delta^2\cdot\gamma, \hspace{1em} \tilde{\omega} = 8r\beta\Delta^2\cdot\omega.
  \end{align}
  For $0\leq i\leq r-1$, let $e\in\cC^i$ be an error that is $\cE_Z^{\cC}(\eta,\gamma,\omega;\beta)$-avoiding (see \Cref{def:badfams}). Then the output $\tilde{a}^i=\FnSSFlipSyn{$\delta^{\cC}(e)$}$ (see \Cref{alg:ssflip}) is of the form $\tilde{a}^i=e+\delta^{\cC}(\tilde{a}^{i-1})$ for some $\tilde{a}^{i-1}\in\cC^{i-1}$ that is $\cE_Z^{\cC}(\tilde{\eta},\tilde{\gamma},\tilde{\omega};\beta)$-avoiding.
\end{lemma}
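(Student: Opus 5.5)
The plan is to use the locality of \FnSSFlipSyn from \Cref{lem:sslocal}. We reduce to single connected components of the footprint, show that on each component the small-set flip lemma applies at every step, and then assemble the residual $\tilde a^i+e$ component by component.

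Let $S_{\mathrm{syn}}$ be the footprint of \FnSSFlipSyn{$\delta^{\cC}(e)$} and fix a connected component $V$ of the subgraph of $G^{\cC}$ it induces. The first goal is to show that $e|_V$ is small in both norms. I would argue by a bootstrapping or induction over the iterations of the while loop, using the footprint at time $t$. Suppose that at some time the component containing a given piece first becomes large, meaning $|V_t|\ge\eta$ or $|V_t|_\beta\ge\tilde\omega$. Since components grow by at most a constant factor per iteration (each flip $c^i$ lies below a single $\bar c^0$, so it touches only $O_{r,\Delta}(1)$ elements), we could take the first time this happens. At that time the bounds \eqref{eq:sslunif} still hold for the truncated run, which is the same argument as in \Cref{lem:sslocal}, since the potential decrease argument works at any time. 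These bounds give $|V_t|\le 4r\Delta^2|e|_{V_t}|$. Hence $e|_{V_t}$ has density at least $1/(4r\Delta^2)\ge\gamma$ inside the connected set $V_t$ of size at least $\eta$. This would make $\supp(e)\cap V_t$ a bad set in $\cE(G^{\cC},\eta,\gamma)$, a contradiction.

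A similar argument handles the weighted threshold. The inequality $|e|_{V_t}|_\beta\ge |V_t|_\beta/(4r\beta\Delta^2)$ together with $|V_t|_\beta$ large would give $|e|_\beta\ge\omega$ on a subset of $V_Z=C$ (here $r_X=0$). One subtlety is that $|V_t|_\beta$ may jump by more than a constant factor in one step. Because of this, I would instead phrase the threshold directly as $|V_t|_\beta\ge 8r\beta\Delta^2\omega$. The hypothesis \eqref{eq:bdinerr} then leaves slack to keep $|V_t|\le\mu2^{\bar\ell/2}$ and $|V_t|_\beta\le\beta^{\bar\ell/2}$, which is what the small-set flip lemma needs. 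This step, controlling the sizes of the footprint components during the run so that the hypotheses of \Cref{lem:ssflip} are maintained, is where I expect the main obstacle to lie.

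Once every component satisfies $|e|_V|\le\mu2^{\bar\ell/2}$ and $|e|_V|_\beta\le\beta^{\bar\ell/2}$, I would run \FnSSFlipSyn{$\delta(e|_V)$}; by \Cref{lem:sslocal} its output is $\tilde a^i|_V$. The residual $e|_V+\tilde a^i|_V$ along the run stays small in both norms, because by \Cref{lem:ssfimp} the syndrome weight decreases and by \eqref{eq:sslunif} the footprint is bounded. Hence if the syndrome were nonzero at termination, \Cref{lem:ssflip} would supply a further flip: either of type \eqref{eq:ssfdown}, contradicting termination, or of type \eqref{eq:ssfup}. To rule out the type \eqref{eq:ssfup} case, I would apply the lemma to the residual $e'=e|_V+\tilde a^i|_V$. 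If $\delta^{\cC}(e')=0$ we are done. Otherwise the nonzero syndrome combined with minimality forces the \eqref{eq:ssfdown} option, after first replacing $e'$ by a minimal-weight representative modulo coboundaries, which remains small. So the algorithm ends with $\delta^{\cC}(e'+\dots)=0$, i.e.\ $e|_V+\tilde a^i|_V\in Z^i$ with small support inside $V$.

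The cocycle $z_V=e|_V+\tilde a^i|_V$ has weight below the cosystolic distance, which is polynomial by \Cref{lem:prodenc} and the small weight. It is therefore a coboundary $\delta^{\cC}(\tilde a^{i-1}_V)$, and we can choose $\tilde a^{i-1}_V$ supported within $V$; for instance, since $H^i=0$ for $i<r_X$ when $r_X=0$, take a local preimage inside the connected support. Summing over components gives $\tilde a^{i-1}$ with support contained in $S_{\mathrm{syn}}$, where each component carries density and weight bounded by \eqref{eq:sslunif}. Finally, a bad set in $\cE(G^{\cC},\tilde\eta,\tilde\gamma)$ or $2^{C}|_\beta^{\ge\tilde\omega}$ contained in $\supp(\tilde a^{i-1})$ would, by the factors $8r\Delta^2$ and $8r\beta\Delta^2$ in \eqref{eq:bdouterr}, pull back to a bad set of $e$ for $(\eta,\gamma,\omega)$. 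This yields the claimed $\cE_Z^{\cC}(\tilde\eta,\tilde\gamma,\tilde\omega;\beta)$-avoidance.
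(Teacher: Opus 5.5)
Your argument that \texttt{SSFlipSyn} drives the syndrome to zero is a workable alternative to the paper's comparison with \texttt{SSFlipErr}. The construction of $\tilde a^{i-1}$ in your last paragraph, however, has a genuine gap. The lemma's conclusion is about the support of $\tilde a^{i-1}$, and nothing you cite controls it.

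\begin{itemize}
\item Exactness gives $H^i(\cC)=0$ for $i\ge 1$; your ``$i<r_X$'' is vacuous when $r_X=0$. This yields only \emph{some} $b\in\cC^{i-1}$ with $\delta^{\cC}(b)=z_V$, with no bound on $\supp(b)$. There is no local exactness statement for the complex near a footprint component, and truncating $b$ to a neighborhood of $V$ breaks $\delta^{\cC}(b)=z_V$.
\item For $i=0$ you need $z_V=0$ outright. \Cref{lem:prodenc} computes cohomology but gives no lower bound on cosystolic distance.
\item The claim $\supp(\tilde a^{i-1})\subseteq S_{\mathrm{syn}}$ is impossible. By \Cref{def:footprint}, $S_{\mathrm{syn}}\subseteq C^i\sqcup C^{i+1}$, while $\tilde a^{i-1}\in\cC^{i-1}$.
\item Consequently your final pull-back to a bad set of $e$ has no footprint to act on. The constants $8r\Delta^2$ and $8r\beta\Delta^2$ in \eqref{eq:bdouterr} also do not come out of \eqref{eq:sslunif}, which gives $4r\Delta^2$.
\end{itemize}

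The missing idea is to build $\tilde a^{i-1}$ algorithmically, as a sum of \eqref{eq:ssfup}-type flips whose support is tracked by a footprint. The paper does this as follows.
\begin{itemize}
\item \texttt{SSFlipErr}$(e)$ makes the same $c^i$-updates as \texttt{SSFlipSyn}$(\delta^{\cC}(e))$ and accumulates its $c^{i-1}$-updates into $\tilde a^{i-1}$. Hence $\supp(\tilde a^{i-1})\subseteq S_{\mathrm{err}}\subseteq C^{i-1}\sqcup C^i\sqcup C^{i+1}$.
\item \eqref{eq:sslweigh} gives $|V|\le 8r\Delta^2|e|_V|$ and $|V|_\beta\le 8r\beta\Delta^2|e|_V|_\beta$ for each component $V$ of $S_{\mathrm{err}}$.
\item This yields $\cE_Z^{\cC}(\tilde\eta,\tilde\gamma,\tilde\omega;\beta)$-avoidance of $S_{\mathrm{err}}$, after enlarging a dense connected set to a union of components.
\item Every component then has $|V|<\eta$ and $|V|_\beta<\tilde\omega\le\beta^{\bar\ell/2}$, so \Cref{lem:ssflip} shows \texttt{SSFlipErr} cannot return FAIL.
\end{itemize}

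Within your framework you could instead use \Cref{lem:ssflip} directly on the small cocycle $z_V$. Since $\delta^{\cC}(z_V)=0$, option \eqref{eq:ssfdown} is impossible. For $i=0$, option \eqref{eq:ssfup} is also impossible, so $z_V=0$. For $i\ge1$, repeated \eqref{eq:ssfup}-flips, converted via \eqref{eq:ssfupimp}, shrink $z_V$ to $0$. But bounding the support of the accumulated flips is exactly the \texttt{SSFlipErr} footprint analysis.

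Two smaller remarks:
\begin{itemize}
\item Your ``minimal-weight representative'' must respect both norms. A Hamming-minimal representative may have $\beta$-norm above $\beta^{\bar\ell/2}$, violating the hypothesis of \Cref{lem:ssflip}. Instead, minimize Hamming weight among representatives whose Hamming and $\beta$-norms are both at most those of $e'$. With that choice, \eqref{eq:ssfupimp} excludes \eqref{eq:ssfup}-flips, and your syndrome argument goes through.
\item The time-bootstrapping in your second paragraph is unnecessary. \Cref{lem:ssfimp} has no smallness hypothesis, so the bounds of \Cref{lem:sslocal} hold for the final footprint unconditionally. The obstacle you anticipated there is not the real one.
\end{itemize}
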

\begin{proof}
  We begin with the following basic claim relating \FnSSFlipSyn to \FnSSFlipErr.

  \begin{claim}
    \label{claim:bdsyntoerr}
    If \FnSSFlipSyn{$\delta^{\cC}(e)$} outputs $\tilde{a}^i$, then \FnSSFlipErr{$e$} outputs $\tilde{a}^{i-1},\tilde{a}^i$ for some $\tilde{a}^{i-1}\in\cC^{i-1}$.
  \end{claim}
  \begin{proof}
    The claim follows directly from the definition of \Cref{alg:ssflip}, as every update $c^i$ that is added into $a^i$ will be the same in the executions of \FnSSFlipSyn{$\delta^{\cC}(e)$} and \FnSSFlipErr{$e$}.
  \end{proof}

  Defining $\tilde{a}^{i-1},\tilde{a}^i$ as in \Cref{claim:bdsyntoerr}, it suffices to show that $\tilde{a}^i=e+\delta^{\cC}(\tilde{a}^{i-1})$ and that $\tilde{a}^{i-1}$ is $\cE_Z^{\cC}(\tilde{\eta},\tilde{\gamma},\tilde{\omega})$-avoiding. For this purpose, let $S_{\mathrm{err}}\subseteq C^{i-1}\sqcup C^i\sqcup C^{i+1}$ be the footprint of \FnSSFlipErr{$e$} (see \Cref{def:footprint}).

  \begin{claim}
    \label{claim:bdavoid}
    $S_{\mathrm{err}}$ is $\cE_Z^{\cC}(\tilde{\eta},\tilde{\gamma},\tilde{\omega})$-avoiding.
  \end{claim}
  \begin{proof}
    First, applying \Cref{eq:sslweigh} in \Cref{lem:sslocal} and summing over all connected components $V$, we conclude that
    \begin{equation*}
      |S_{\mathrm{err}}|_\beta \leq 8r\beta\Delta^2\cdot|e|_\beta < 8r\beta\Delta^2\cdot\omega = \tilde{\omega}.
    \end{equation*}
    
    Thus it remains to be shown that $S_{\mathrm{err}}$ is $\cE(G^{\cC},\tilde{\eta},\tilde{\gamma})$-avoiding. For this purpose, assume for a contradiction that there exists some subset $V\subseteq C$ of size $|V|\geq\tilde{\eta}$ that induces a connected subgraph of $G^{\cC}$ and has $|V\cap S_{\mathrm{err}}|/|V|\geq\tilde{\gamma}$. We may repeatedly add to $V$ any element $c\in S_{\mathrm{err}}\setminus V$ that is connectected to $V$ by an edge, until no more such elements $c$ exist. This procedure only increases $|V|$ as well as the density $|V\cap S_{\mathrm{err}}|/|V|$, so $V$ is still a connected subgraph of $G^{\cC}$ satisfying $|V|\geq\tilde{\eta}$ and $|V\cap S_{\mathrm{err}}|/|V|\geq\tilde{\gamma}$. Furthermore, now every connected component $V'$ of the subgraph of $G^{\cC}$ induced by $S_{\mathrm{err}}$ lies either entirely inside or entirely outside $V$.
    By \Cref{eq:sslweigh} in \Cref{lem:sslocal}, for each such $V'$, we have $|e|_{V'}|\geq|V'|/8r\Delta^2$. Summing over all such $V'$ inside $V$, we obtain
    \begin{equation}
      \label{eq:bdeV}
      |e|_V| = |e|_{V\cap S_{\mathrm{err}}}| \geq \frac{|V\cap S_{\mathrm{err}}|}{8r\Delta^2} \geq \frac{\tilde{\gamma}|V|}{8r\Delta^2} = \gamma|V|,
    \end{equation}
    where the first equality above holds because $\supp(e)\subseteq S_{\mathrm{err}}$ by the definition of the footprint.

    However, the assumption that $e$ is $\cE_Z^{\cC}(\eta=\tilde{\eta},\;\gamma,\;\omega)$-avoiding, and hence $\cE(G^{\cC},\tilde{\eta},\gamma)$-avoiding, implies that $|e|_V|<\gamma|V|$, which contradicts \Cref{eq:bdeV}. Thus the assumption that $S_{\mathrm{err}}$ is not $\cE(G^{\cC},\tilde{\eta},\tilde{\gamma})$-avoiding was false, as desired.
  \end{proof}

  Beacuse $\supp(\tilde{a}^{i-1})\subseteq S_{\mathrm{err}}$ by the definition of the footprint, \Cref{claim:bdavoid} implies that $\tilde{a}^{i-1}$ is $\cE_Z^{\cC}(\tilde{\eta},\tilde{\gamma},\tilde{\omega})$-avoiding. Therefore the following claim completes the proof of the lemma.

  \begin{claim}
    We have $\tilde{a}^i=e+\delta^{\cC}(\tilde{a}^{i-1})$; that is, \FnSSFlipErr{$e$} in \Cref{alg:ssflip} does not return FAIL.
  \end{claim}
  \begin{proof}
    Assume for a contradiction that \FnSSFlipErr{$e$} does return FAIL, and let $a^{i-1},a^i$ be the variables in the algorithm just prior to returning. Let $V$ be a connected component of the subgraph of $G^{\cC}$ induced by $S_{\mathrm{err}}$. By \Cref{claim:bdavoid}, because $\tilde{\gamma}<1$, we must have $|V|<\eta=\tilde{\eta}$. Then because \FnSSFlipErr acts independently on each connected component $V$ (see \Cref{lem:sslocal}), \FnSSFlipErr{$e|_V$} must run until it obtains the values $a^{i-1}|_V,\;a^i|_V$, and then must output FAIL. Indeed, if upon reaching the values $a^{i-1}|_V,\;a^i|_V$, \FnSSFlipErr{$e|_V$} found a valid update $c^{i-1}$ or $c^i$, then this update would by definition lie inside the associated connected component $V$ of the footprint, and hence the same valid update would apply to \FnSSFlipErr{$e$}, a contradiction.

    However, by \Cref{claim:bdavoid} we have
    \begin{equation*}
      |(e+a^i+\delta^{\cC}(a^{i-1}))|_V|_\beta \leq |V|_\beta \leq |S_{\mathrm{err}}|_\beta < \tilde{\omega} \leq \beta^{\bar{\ell}/2},
    \end{equation*}
    and because $\tilde{\gamma}<1$,
    \begin{equation*}
      |(e+a^i+\delta^{\cC}(a^{i-1}))|_V| \leq |V| < \eta = \mu 2^{\bar{\ell}/2}.
    \end{equation*}
    Therefore by \Cref{lem:ssflip}, there exists $\bar{c}^0\in C^0$ such that either there is some $c^{i-1}\in\cC^{i-1}$ with $\bar{c}^0\preceq' c^{i-1}$ and
    \begin{equation*}
      |(e+a^i+\delta^{\cC}(a^{i-1}))|_V+\delta^{\cC'}(c^{i-1})| < |(e+a^i+\delta^{\cC}(a^{i-1}))|_V|-4r|c^{i-1}|,
    \end{equation*}
    or else there is some $c^i\in\cC^i$ with $\bar{c}^0\preceq' c^i$ and
    \begin{equation*}
      |\delta^{\cC}(e+a^i)|_V+\delta^{\cC'}(c^i)|<|\delta^{\cC}(e+a^i)|_V|-4r|c^i|.
    \end{equation*}
    Thus upon reaching the values $a^{i-1}|_V,\;a^i|_V$, \FnSSFlipErr{$e|_V$} must find a valid update $c^{i-1}$ or $c^i$, contradicting our conclusion above that it cannot find such an update. Thus the assumption that \FnSSFlipErr{$e$} returns FAIL was false, as desired.
  \end{proof}
\end{proof}

\subsection{General Decoding Algorithm}
We now apply our small-set flip decoder for $\cC^*(\epsilon,\bar{\ell},0,r_Z)^{\oplus\Upsilon}$ to construct a decoder for complexes $\cC^*(\epsilon,\bar{\ell},r_X,r_Z)^{\oplus\Upsilon}$ with arbitrary $r_X,r_Z$. Specifically, we present our decoding algorithm in \Cref{alg:decode}, and we analyze it in \Cref{lem:decode}.

\begin{algorithm}
  \caption{\label{alg:decode} (Classical) decoding algorithm for $\cC^*=\cC^*(\epsilon,\bar{\ell},r_X,r_Z)^{\oplus\Upsilon}$. We apply the small-set flip decoder from \Cref{alg:ssflip} as a key subroutine. The decoding algorithm implicitly depends on the choice of $\epsilon,\bar{\ell},r_X,r_Z$, as well as a choice of $0\leq i\leq r-1$, which will always be made clear from context.
    Below, letting $\cA^*=\cC_*(\epsilon,\bar{\ell})=\cC^*(\epsilon,\bar{\ell},1,0)$, then we let $V^0_\ell\subseteq A^0$ and $V^1_\ell\subseteq A^1$ denote the copy of $V_\ell$ in $A^0$ and in $A^1$, respectively. Similarly, for $v\in V_\ell$, we let $v^0\in V^0_\ell$ and $v^1\in V^1_\ell$ denote the copy of $v$ in the respective copy of $V_\ell$.}
  \SetKwInOut{Input}{Input}
  \SetKwInOut{Output}{Output}
  \SetKwProg{Fn}{Function}{:}{}

  \SetKwFunction{FnDecode}{Decode}

  \Input{$(i+1)$-coboundary $s\in B^{i+1}(\cC)$ of the cochain complex $\cC^*=\cC^*(\epsilon,\bar{\ell},r_X,r_Z)^{\oplus\Upsilon}$}
  \Output{$a^i\in\cC^i$ with $s=\delta^{\cC}(a^i)$}
  \Fn{\FnDecode{$s;\cC^*=\cC^*(\epsilon,\bar{\ell},r_X,r_Z)^{\oplus\Upsilon}$}}{
    \If{$r_X=0$}{
      \Return{\FnSSFlipErr{$s$}}
    }\Else{
      Let $\cA^*=\cC_*(\epsilon,\bar{\ell})=\cC^*(\epsilon,\bar{\ell},1,0)$, $\cB^*=\cC^*(\epsilon,\bar{\ell},r_X-1,r_Z)^{\oplus\Upsilon}$ (so $\cC^*=\cA^*\otimes\cB^*$) \\
      Initialize $a^i\gets 0\in\cC^*=(\cA^*\otimes\cB^*)^i=\cA^0\otimes\cB^i\oplus\cA^{1}\otimes\cB^{i-1}$ \\
      \For{$\ell=0,\dots,\bar{\ell}-1$}{ \label{li:deczero}
        $a^i|_{V^0_\ell\times B^i}\gets (s+\delta^{\cC}(a^i)|_{V^1_\ell\times B^i}$ % (where we restrict $a^i$ to $V_\ell\times B^i\subseteq A^0\times B^i\subseteq C^i$, and we restrict $s+\delta^{\cC}(a^i)$ to $V_\ell\times B^i\subseteq A^1\times B^i\subseteq C^{i+1}$)
      }
      \For{$v\in V_{\bar{\ell}}$}{ \label{li:decrec}
        $a^i|_{\{v^1\}\times B^{i-1}}\gets\FnDecode{$(s+\delta^{\cC}(a^i))|_{\{v^1\}\times B^i};\cB^*$}$ \label{li:deccall} % (where we restrict $a^i$ to $\{v\}\times B^{i-1}\subseteq A^1\times B^{i-1}\subseteq C^i$, and we restrict $s+\delta^{\cC}(a^i)$ to $\{v\}\times B^i\subseteq A^1\times B^i\subseteq C^{i+1}$)
      }
      \Return $a^i$
    }
  }
\end{algorithm}

\begin{lemma}
  \label{lem:decode}
  For $r_X,r_Z\in\bZ_{\geq 0}$, let $r=r_X+r_Z$ and define $\epsilon=\epsilon(r)$ as in \Cref{lem:ssflip}. For $\Upsilon\in\bN$, let $\cC^*=\cC^*(\epsilon,\bar{\ell},r_X,r_Z)^{\oplus\Upsilon}$, and for $\eta,\gamma,\omega\geq 0$ satisfying \Cref{eq:bdinerr}, define $\tilde{\eta},\tilde{\gamma},\tilde{\omega}$ as in \Cref{eq:bdouterr} in \Cref{lem:basedec}. Define $V_Z:=(V_{\bar{\ell}})^{r_X}\times C(\epsilon,\bar{\ell},0,r_Z)\subseteq C$ as in \Cref{def:badfams}. Fix some $\beta>1$.
  For some $0\leq i\leq r-1$, let $e\in\cC^i$ be $\cE_Z^{\cC}(\eta,\gamma,\omega;\beta)$-avoiding, so that $e$ is reduced by \Cref{lem:redsupp,def:badfams} (see also \Cref{def:reduced,def:redsupp}). Then the output $\tilde{a}^i=\FnDecode{$\delta^{\cC}(e);\cC^*$}$ of \Cref{alg:decode} is of the form $\tilde{a}^i=e+\delta^{\cC}(\tilde{a}^{i-1})$ for some $\tilde{a}^{i-1}\in\cC^{i-1}$ that is $\cE_Z^{\cC}(\tilde{\eta},\tilde{\gamma},\tilde{\omega};\beta)$-avoiding and has $\supp(\tilde{a}^{i-1})\subseteq V_Z$ (so that in particular, $\tilde{a}^{i-1}$ is reduced).
\end{lemma}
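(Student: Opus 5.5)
I will prove the lemma by induction on $r_X$, following the recursive structure of \Cref{alg:decode}. For the base case $r_X=0$, \FnDecode calls the small-set flip subroutine. Its correctness and the avoidance guarantee on $\tilde{a}^{i-1}$ come from \Cref{lem:basedec}, together with \Cref{claim:bdsyntoerr}, which relates the syndrome and error versions of the flip decoder. Since $r_X=0$, we have $V_Z=C$ and $\cE_{Z,\Rd}^{\cC}=\emptyset$, so the support condition and reducedness are automatic.

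For the inductive step $r_X\geq 1$, write $\cC^*=\cA^*\otimes\cB^*$ and $e=(e_0,e_1)$ with $e_0\in\cA^0\otimes\cB^i$ and $e_1\in\cA^1\otimes\cB^{i-1}$. Because $e$ avoids the singleton bad sets in $\cE_{Z,\Rd}^{\cC}\supseteq(A^1\setminus V_{\bar{\ell}})\times B$, we have $\supp(e_1)\subseteq V_{\bar{\ell}}\times B^{i-1}$. The first step is to show that the loop on \Cref{li:deczero} reconstructs exactly $a^i|_{A^0\times B^i}=e_0$. I would argue by induction on $\ell=0,\dots,\bar{\ell}-1$. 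Recall that $\delta^{\cA}$ maps $v^0\in V^0_\ell$ to $v^1\in V^1_\ell$ together with neighbors in $V^1_{\ell+1}$, i.e.\ the transpose structure of \Cref{eq:classcode}. Hence the component of $\delta^{\cC}(e+a^i)$ on $V^1_\ell\times B^i$ equals $(e_0+a^i)|_{V^0_\ell\times B^i}$ plus contributions from $V^0_{\ell-1}$ (already zeroed by induction) and from $e_1|_{V^1_\ell\times B^{i-1}}$ via $I\otimes\delta^{\cB}$, which vanishes for $\ell<\bar{\ell}$. Thus after step $\ell$, $e_0+a^i$ vanishes on $V^0_\ell\times B^i$, and after the loop $e+a^i$ has nonzero part only $e_1$, which is supported on $V_{\bar{\ell}}\times B^{i-1}$. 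Note $s$ is just $\delta^{\cC}(e)$, so $s+\delta^{\cC}(a^i)=\delta^{\cC}(e+a^i)$ throughout.

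The next step handles the recursive calls on \Cref{li:deccall}. For each $v\in V_{\bar{\ell}}$, since $V^0_{\bar{\ell}}$ does not exist in $A^0$, the restriction of $\delta^{\cC}(e+a^i)$ to $\{v^1\}\times B^i$ equals $\delta^{\cB}(e_v)$, where $e_v:=e_1|_{\{v^1\}\times B^{i-1}}\in\cB^{i-1}$. I then check that $e_v$ satisfies the hypotheses for $\cB$. By \Cref{eq:baddirect} and \Cref{def:badfams}, $\cE_Z^{\cC}$ contains the copy of $\cE_Z^{\cB}(\eta,\gamma)$ over $v$, and $\cE_{Z,\Rd}^{\cC}\supseteq V_{\bar{\ell}}\times\cE_{Z,\Rd}^{\cB}$. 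The weighted condition follows because $\Lev$ on $\{v\}\times B$ agrees with $\Lev$ on $B$ and $\{v\}\times V_Z^{\cB}\subseteq V_Z^{\cC}$. So $e_v$ is $\cE_Z^{\cB}(\eta,\gamma,\omega;\beta)$-avoiding, and the inductive hypothesis gives output $e_v+\delta^{\cB}(\tilde{a}_v^{i-2})$ with $\tilde{a}_v$ avoiding and supported in $V_Z^{\cB}$. Setting $\tilde{a}^{i-1}:=\sum_v \1_{v^1}\otimes\tilde{a}_v^{i-2}\in\cA^1\otimes\cB^{i-2}$, the final output is $a^i=e+(I\otimes\delta^{\cB})(\tilde{a}^{i-1})$. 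This agrees with $e+\delta^{\cC}(\tilde{a}^{i-1})$ because the $\delta^{\cA}\otimes I$ part of $\delta^{\cC}$ kills $V^1_{\bar{\ell}}$, which is a $0$-chain-side of the chain complex $\cA$ (a top level in the cochain orientation). This must be checked against sign and orientation conventions.

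The main obstacle is the bookkeeping for the output family. The support $\supp(\tilde{a}^{i-1})\subseteq V_{\bar{\ell}}\times V_Z^{\cB}=V_Z^{\cC}$ gives reducedness by \Cref{claim:redcapVZ} and \Cref{lem:redsupp}. The cluster part of $\cE_Z^{\cC}(\tilde\eta,\tilde\gamma)$ is a disjoint union over $v$, so it is avoided componentwise. The weighted-norm part, however, is global: $2^{V_Z}|_\beta^{\geq\tilde\omega}$ requires $\sum_v|\tilde{a}_v|_\beta<\tilde\omega$, which does not follow from per-$v$ avoidance. To handle it, I would strengthen the induction hypothesis to the quantitative bound $|\tilde{a}^{i-1}|_\beta\leq 8r\beta\Delta^2|e|_{V_Z}|_\beta$. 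This bound holds in the base case by summing \Cref{eq:sslweigh} over components, as in \Cref{claim:bdavoid}, and it is additive over $v$. Combined with $|e|_{V_Z}|_\beta<\omega$, it yields the global condition.
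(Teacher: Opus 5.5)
Your proof is correct. It follows the paper's structure: induction on $r_X$; the observation that the first loop of \Cref{alg:decode} recovers $e|_{A^0\times B^i}$ exactly, because $e$ is reduced; and per-$v$ recursive calls on $\delta^{\cB}(e|_{\{v^1\}\times B^{i-1}})$. The identity $\delta^{\cC}(\tilde a^{i-1})=(I\otimes\delta^{\cB})(\tilde a^{i-1})$ needs no convention check, since $\delta^{\cA}$ vanishes on the top level $\cA^1$ of the one-dimensional complex $\cA^*$.

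The only real difference is how you handle the global weighted-norm condition you flag. The paper keeps the threshold form of the induction hypothesis and distributes the slack. Writing $\omega_v$ for the $\beta$-weight of $e|_{\{v^1\}\times B^{i-1}}$ on $V_Z$, it applies the hypothesis to each $v$ with budget $\omega_v+(\omega-|e|_{V_Z}|_\beta)/|V_{\bar\ell}|\le\omega$, so the per-$v$ output budgets $\tilde\omega_v$ sum exactly to $\tilde\omega$.

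You instead strengthen the hypothesis to the linear bound $|\tilde a^{i-1}|_\beta\le 8r\beta\Delta^2\,|e|_{V_Z}|_\beta$. This holds in the base case by summing \Cref{eq:sslweigh} over footprint components. It is additive over $v$, because levels on $\{v^1\}\times B$ agree with those on $B$ and $V_Z^{\cC}=V_{\bar\ell}\times V_Z^{\cB}$.

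Both arguments work. Yours avoids the budget bookkeeping and gives a slightly more informative conclusion, namely an explicit linear bound on the residual's weighted norm.
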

\begin{proof}
  We prove the result by induction on $r_X$. For the base case, when $r_X=0$, then the result follows immediately by \Cref{lem:basedec}. For the inductive step, let $r_X\geq 1$, and assume the result holds for every $r_X'<r_X$.
  % By definition $\supp(\tilde{a}^i)$ contains no elements in $(A^1\setminus V_{\bar{\ell}})\times B^{i-1}$, so $\Rd\otimes I_{B^{i-1}}(\tilde{a}^i|_{A^1\times B^{i-1}})=0$. Furthermore, by the inductive hypothesis, for each $v\in V_{\bar{\ell}}$ then $a^i|_{\{v^1\}\times B^{i-1}}$ is reduced, so $I\otimes\Rd(a^i|_{V_{\bar{\ell}}\times B^{i-1}})=0$. Therefore by \Cref{def:reducer} we have $\Rd(\tilde{a}^i)=0$, so $\tilde{a}^i$ is reduced.
  % Define $V_Z:=(V_{\bar{\ell}})^{r_X}\times C(\epsilon,\bar{\ell},0,r_Z)$ as in \Cref{def:badfams}.
  The following claim shows that our decoder exactly determines the errors in $A^0\times B^i$. Below, recall that $C^i=A^0\times B^i\sqcup A^1\times B^{i-1}$.
  
  \begin{claim}
    \label{claim:deczero}
    Let $a'$ denote the value of $a^i$ just prior to beginning \Cref{li:decrec} in \Cref{alg:decode} on input $s=\delta^{\cC}(e)$. Then $a'|_{A^0\times B^i}=e|_{A^0\times B^i}$ and $a'|_{A^1\times B^{i-1}}=0$.
  \end{claim}
  \begin{proof}
    By definition $a'|_{A^1\times B^{i-1}}=0$, so it suffices to show that $a'|_{A^0\times B^i}=e|_{A^0\times B^i}$. For this purpose, we observe that because $e$ is reduced, for every $0\leq\ell\leq\bar{\ell}-1$ we by definition have $e|_{V^1_\ell\times B^{i-1}}=0$, and hence $(e+a^i)|_{V^1_\ell\times B^{i-1}}=0$. Therefore if it holds for every $0\leq\ell'\leq\ell-1$ that $(e+a^i)|_{V^0_\ell\times B^i}=0$, then by the definition of $\delta^{\cC}$ and $\delta^{\cA}$ we have
    \begin{equation*}
      \delta^{\cC}(e+a^i)|_{V^1_\ell\times B^i} = ((\delta^{\cA}\otimes I)(e+a^i) + (I\otimes\delta^{\cB})(e+a^i))|_{V^1_\ell\times B^i} = (e+a^i)|_{V^0_\ell\times B^i}.
    \end{equation*}
    Therefore recalling that $s=\delta^{\cC}(e)$, then after iteration $\ell$ of the for loop in \Cref{li:deczero}, for every $0\leq\ell'\leq\ell$ we must have $(e+a^i)|_{V^0_\ell\times B^i}=0$. Thus the desired claim follows because $A^0=\bigsqcup_{\ell=0}^{\bar{\ell}-1}V^0_\ell$.
  \end{proof}

  \Cref{claim:deczero} implies that each recursive call to \FnDecode{$(s+\delta^{\cC}(a^i))|_{\{v^1\}\times B^i};\cB^*$} in \Cref{li:deccall} of \Cref{alg:decode} receives input
  \begin{equation}
    \label{eq:recinp}
    (s+\delta^{\cC}(a^i))|_{\{v^1\}\times B^i} = \delta^{\cC}(e+a^i)|_{\{v^1\}\times B^i} = \delta^{\cB}(e|_{\{v^1\}\times B^{i-1}}).
  \end{equation}
  For $v\in V_{\bar{\ell}}$, let $\omega_v=|e|_{(\{v^1\}\times B^{i-1})\cap V_Z}|_\beta$, so that $\sum_{v\in V_{\bar{\ell}}}\omega_v=|e|_{V_Z}|_\beta<\omega$ because $e$ is $\cE_Z^{\cC}(\eta,\gamma,\omega)$-avoiding. Let
  \begin{equation*}
    \tilde{\omega}_v = 8r\beta\Delta^2\left(\omega_v + \frac{\omega-|e|_{V_Z}|_\beta}{|V_{\bar{\ell}}|}\right),
  \end{equation*}
  so that $\tilde{\omega}_v\geq 8r\beta\Delta^2\cdot\omega_v$ and
  \begin{equation}
    \label{eq:decsumto}
    \sum_{v\in V_{\bar{\ell}}}\tilde{\omega}_v=\tilde{\omega}.
  \end{equation}
  By the inductive hypothesis, the output of each recursive call to \FnDecode with input given by \Cref{eq:recinp} is of the form $\tilde{a}^i|_{\{v^1\}\times B^{i-1}}=e|_{\{v^1\}\times B^{i-1}}+\delta^{\cB}(\tilde{a}^{i-1}_{v^1})$ for some $\tilde{a}^{i-1}_{v^1}\in(\bF_2^\Upsilon)^{B^{i-2}}$ that is $\cE_Z^{\cC}(\tilde{\eta},\tilde{\gamma},\tilde{\omega}_v)$-avoiding and supported inside $V_{\bar{\ell}}^{r_X-1}\times C(\epsilon,\bar{\ell},0,r_Z)$. Define $\tilde{a}^{i-1}\in(\bF_2^\Upsilon)^{V_{\bar{\ell}}^1\times B^{i-2}}\subseteq(\bF_2^\Upsilon)^{C^{i-1}}=\cC^{i-1}$ by letting $\tilde{a}^{i-1}|_{\{v^1\}\times B^{i-2}}=\tilde{a}^{i-1}_{v^1}$ for each $v\in V_{\bar{\ell}}$, so that $\tilde{a}^{i-1}|_{C^{i-1}\setminus(V_{\bar{\ell}}\times B^{i-2})}=0$. Then $\supp(\tilde{a}^{i-1})\subseteq V_Z$, so that in particular $\tilde{a}^{i-1}$ is reduced by \Cref{def:reducer}. Also by \Cref{def:badfams,eq:decsumto}, $\tilde{a}^{i-1}$ is $\cE_Z^{\cC}(\tilde{\eta},\tilde{\gamma},\tilde{\omega})$-avoiding. Furthermore, by \Cref{claim:deczero} along with the definition of $\tilde{a}^{i-1}$ and the fact that $e$ is reduced so that $e|_{(A^1\setminus V^1_{\bar{\ell}})\times B^{i-1}}=0$, we have
  \begin{equation*}
    \tilde{a}^i = e+I\otimes\delta^{\cB}(\tilde{a}^{i-1}|_{V^1_{\bar{\ell}}\times B^{i-2}}) = e+\delta^{\cC}(\tilde{a}^{i-1}),
  \end{equation*}
  as desired.
\end{proof}

\Cref{cor:decode} below applies \Cref{lem:decode} to show that the 1-based complex $\cC^*=\cC^*(\epsilon,\bar{\ell},r_X,r_Z)$ (as opposed to the direct sum) has the following property: once we fix an $\cE_Z^{\cC}(\eta,\gamma,\omega;\beta)$-avoiding set $E$ and a syndrome $s$, there exists a single $\cE_Z^{\cC}(\tilde{\eta},\tilde{\gamma},\tilde{\omega};\beta)$-avoiding set $\tilde{E}$ such that for every error $e$ supported inside $E$ with syndrome $\delta^{\cC}(e)=s$, our decoder on input $s$ recovers $e$ up to a residual error given by the coboundary of some cochain supported inside $\tilde{E}$. The proof of \Cref{cor:decode} simply invokes \Cref{lem:decode} on the direct sum of all possible such errors $e$. We will use \Cref{cor:decode} in \Cref{sec:stateprep} to ensure that there exists an appropriate set $\tilde{E}$ containing the support of the residual error on the outputs state of our state preparation gadget (\Cref{lem:stateprep}), in order to show fault-tolerance in the sense of \Cref{def:faulttol}.

\begin{corollary}
  \label{cor:decode}
  For $r_X,r_Z\in\bZ_{\geq 0}$, let $r=r_X+r_Z$ and define $\epsilon=\epsilon(r)$ as in \Cref{lem:ssflip}. Let $\cC^*=\cC^*(\epsilon,\bar{\ell},r_X,r_Z)$, and for $\eta,\gamma,\omega\geq 0$ satisfying \Cref{eq:bdinerr}, define $\tilde{\eta},\tilde{\gamma},\tilde{\omega}$ as in \Cref{eq:bdouterr} in \Cref{lem:basedec}. Define $V_Z:=(V_{\bar{\ell}})^{r_X}\times C(\epsilon,\bar{\ell},0,r_Z)\subseteq C$ as in \Cref{def:badfams}. Fix some $\beta>1$.
  For some $0\leq i\leq r-1$, fix some $\cE_Z^{\cC}(\eta,\gamma,\omega;\beta)$-avoiding set $E\subseteq C^i$, and fix some $s\in\cC^{i+1}$.
  Then there exists a $\cE_Z^{\cC}(\tilde{\eta},\tilde{\gamma},\tilde{\omega};\beta)$-avoiding set $\tilde{E}\subseteq C^{i-1}\cap V_Z$ such that for every $e\in\cC^i$ with $\supp(e)\subseteq E$ (so that $e$ is reduced by \Cref{lem:redsupp,def:badfams}; see also \Cref{def:reduced,def:redsupp}) and with $\delta^{\cC}(e)=s$,
  the output $\tilde{a}^i=\FnDecode{$s;\cC^*$}$ of \Cref{alg:decode} can be expressed in the form $\tilde{a}^i=e+\delta^{\cC}(\tilde{a}^{i-1})$ for some $\tilde{a}^{i-1}\in\cC^{i-1}$ with $\supp(\tilde{a}^{i-1})\subseteq\tilde{E}$ (so that in particular, $\tilde{a}^{i-1}$ is reduced because $\tilde{E}\subseteq V_Z$).
\end{corollary}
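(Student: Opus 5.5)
The plan is to apply \Cref{lem:decode} once, to a direct sum complex that holds every relevant error in a separate coordinate, as the text preceding the statement suggests. Let $\mathcal{S}=\{e\in\cC^i:\supp(e)\subseteq E,\ \delta^{\cC}(e)=s\}$. If $\mathcal{S}$ is empty the statement holds vacuously with $\tilde{E}=\emptyset$, so assume it is nonempty. Take $\Upsilon=\mathcal{S}$, a finite index set, and form the direct sum complex $\cC^{\oplus\Upsilon}$ from \Cref{def:directsum}. Define $\mathbf{e}\in(\cC^{\oplus\Upsilon})^i$ so that its $e$-th coordinate is $e$.

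The first check is that $\mathbf{e}$ satisfies the hypothesis of \Cref{lem:decode}. In the direct sum complex, the support of a vector is a subset of the same basis set $C^i$, and a basis element lies in the support when some coordinate is nonzero there. Hence $\supp(\mathbf{e})=\bigcup_{e\in\mathcal{S}}\supp(e)\subseteq E$. The family $\cE_Z^{\cC}(\eta,\gamma,\omega;\beta)$ is defined purely in terms of subsets of $C$, through the connectivity graph, the weighted norm and the reducer set, and the direct sum keeps the same basis and partial order. So $\cE_Z^{\cC^{\oplus\Upsilon}}$ agrees with $\cE_Z^{\cC}$. I also need that a subset of an avoiding set is avoiding. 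This holds because each bad family is upward closed in the relevant sense: a set $S$ avoids the family exactly when no bad set is contained in $S$, so the condition passes to subsets. Therefore $\mathbf{e}$ is $\cE_Z^{\cC}(\eta,\gamma,\omega;\beta)$-avoiding.

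\Cref{lem:decode} now gives $\tilde{\mathbf{a}}^{i-1}$ with $\supp(\tilde{\mathbf{a}}^{i-1})\subseteq V_Z$, which is $\cE_Z^{\cC}(\tilde{\eta},\tilde{\gamma},\tilde{\omega};\beta)$-avoiding and satisfies $\FnDecode{$\delta(\mathbf{e});\cC^{\oplus\Upsilon}$}=\mathbf{e}+\delta(\tilde{\mathbf{a}}^{i-1})$. Set $\tilde{E}=\supp(\tilde{\mathbf{a}}^{i-1})\subseteq C^{i-1}\cap V_Z$. This set depends only on $E$ and $s$, since $\mathcal{S}$ depends only on them.

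The main point needing care is to relate the direct sum decoder to the single-copy decoder, because the input $\delta(\mathbf{e})$ has every coordinate equal to $s$. I would check from \Cref{alg:decode,alg:ssflip} that $\FnDecode$ on $\cC^{\oplus\Upsilon}$, given the input with every coordinate equal to $s$, returns the vector with every coordinate equal to $\FnDecode{$s;\cC^*$}$. The for-loops in \Cref{alg:decode} act linearly and coordinatewise. The recursive calls receive diagonal inputs by induction. In the base case, \FnSSFlipSyn runs on a diagonal syndrome. It must be set up to choose among its candidate updates deterministically, and it can restrict to diagonal updates $c^i$ whose coordinates are all equal: the weight of a diagonal vector equals the single-copy weight, so a diagonal update meets the flip condition exactly when the single-copy update does.

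The delicate step is that \Cref{lem:ssflip} may supply only non-diagonal flips. I would first try to fix the search order to diagonal candidates, or run the analysis with $\Upsilon=1$ in parallel and observe that the diagonal run coincides with it. If that fails, I would instead argue as in \Cref{claim:bdsyntoerr}, through the footprints of the copies.

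Granting this, reading off the coordinate indexed by $e$ gives $\FnDecode{$s;\cC^*$}=e+\delta^{\cC}(\tilde{\mathbf{a}}^{i-1}_e)$. Since $\supp(\tilde{\mathbf{a}}^{i-1}_e)\subseteq\tilde{E}$, this is the claimed form, which completes the plan.
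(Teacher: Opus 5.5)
Your plan is the paper's own proof: index a direct sum by the set of errors supported in $E$ with syndrome $s$, apply \Cref{lem:decode} once, let $\tilde{E}$ be the support of the resulting $(i-1)$-cochain, and read off the coordinate indexed by $e$. The step you leave as ``granting this'' is the one the paper dismisses as holding by definition, namely that the direct-sum decoder on $s^{\oplus\Upsilon}$ runs $|\Upsilon|$ identical copies of the single-copy decoder. Your first remedy does close it, once you add one short observation.

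Fix the rule so that, whenever the current syndrome is diagonal, the flip search picks the diagonal lift of the single-copy choice. Use the same rule for the $c^i$-moves of \texttt{SSFlipErr}, so that \Cref{claim:bdsyntoerr} still applies.

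Now suppose the current syndrome $y$ has every coordinate equal to $y_0$, and let $c=(c_u)_{u\in\Upsilon}$ be any valid flip with $\bar{c}^0\preceq' c$. Weights in the direct sum are sizes of unions of coordinate supports, so for each $u$
\[
  |y_0+\delta^{\cC'}(c_u)| \le |y+\delta^{\cC'}(c)| < |y_0|-4r|c| \le |y_0|-4r|c_u|,
\]
and also $\bar{c}^0\preceq' c_u$. Hence the diagonal lift of $c_u$ is itself a valid flip, so a diagonal $c^i$-flip exists whenever any $c^i$-flip does.

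Consequently the no-FAIL argument of \Cref{lem:basedec}, which is the only place \Cref{lem:ssflip} enters, carries over unchanged to this diagonal run. So do the footprint bounds of \Cref{lem:sslocal}, which hold for any sequence of valid flips. The for-loops and recursive calls of \Cref{alg:decode} preserve diagonality, as you already note. The footprint-based fallback you mention is therefore unnecessary.
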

\begin{proof}
  Let $\Upsilon\subseteq 2^E\subseteq\cC^i$ denote the set of all $e\in\cC^i$ with $\supp(e)\subseteq E$ such that $\delta^{\cC}(e)=s$. Then define $\bar{e}\in(\cC^i)^{\oplus\Upsilon}$ to simply be the concatenation of all $e\in\Upsilon$. That is, recalling that $(\cC^i)^{\oplus\Upsilon}=2^{\Upsilon\times C^i}$, then for every $e\in\Upsilon\subseteq 2^E\subseteq\cC^i$, we let $\bar{e}|_{\{e\}\times C^i}:=e\in\cC^i$. Then because by definition $\delta^{\cC^{\oplus\Upsilon}}(\bar{e})=s^{\oplus\Upsilon}$, \Cref{lem:decode} implies that the output $\bar{\tilde{a}}^i=\FnDecode{$s^{\oplus\Upsilon};\cC^{\oplus\Upsilon}$}$ of \Cref{alg:decode} is of the form $\bar{\tilde{a}}^i=\bar{e}+\delta^{\cC^{\oplus\Upsilon}}(\bar{\tilde{a}}^{i-1})$ for some $\bar{\tilde{a}}^{i-1}\in(\cC^{i-1})^{\oplus\Upsilon}$ that is $\cE_Z^{\cC}(\tilde{\eta},\tilde{\gamma},\tilde{\omega};\beta)$-avoiding and has $\supp(\bar{\tilde{a}}^{i-1})\subseteq V_Z$.

  Let $\tilde{E}=\supp(\bar{\tilde{a}}^{i-1})$, so that $\tilde{E}\subseteq V_Z$ is $\cE_Z^{\cC}(\tilde{\eta},\tilde{\gamma},\tilde{\omega};\beta)$-avoiding. By definition, \FnDecode{$s^{\oplus\Upsilon};\cC^{\oplus\Upsilon}$} simply runs $|\Upsilon|$ identical executions of \FnDecode{$s;\cC$} in parallel. Therefore for every $e\in\cC^i$ with $\supp(e)\subseteq E$ and $\delta^{\cC}(e)=s$, the output $\tilde{a}^i=\FnDecode{$s;\cC^*$}$ is given by
  \begin{equation*}
    \tilde{a}^i = \bar{\tilde{a}}^i|_{\{e\}\times C^i} = \bar{e}|_{\{e\}\times C^i}+\delta^{\cC}(\bar{\tilde{a}}^{i-1}|_{\{e\}\times C^i}) = e+\delta^{\cC}(\bar{\tilde{a}}^{i-1}|_{\{e\}\times C^i}).
  \end{equation*}
  Hence letting $\tilde{a}^{i-1}=\bar{\tilde{a}}^{i-1}|_{\{e\}\times C^i}$ gives the desired conclusion.
\end{proof}

The following basic lemma shows that the reducer of a cochain never contains any bad sets defined in \Cref{def:badfams}.

\begin{lemma}
  \label{lem:redgood}
  Let $\cC^*=\cC^*(\epsilon,\bar{\ell},r_X,r_Z)$, let $r=r_X+r_Z$, and let $V_Z:=(V_{\bar{\ell}})^{r_X}\times C(\epsilon,\bar{\ell},0,r_Z)$ as in \Cref{def:badfams}. For $0\leq i\leq r$ and for $e\in\cC^i$, we have
  \begin{equation*}
    \supp(\Rd(e))\cap(V_Z\cup\cE_{Z,\Rd}^{\cC}) = \emptyset,
  \end{equation*}
  so in particular $\Rd(e)$ is $\cE_Z^{\cC}(\eta,\gamma,\omega;\beta)$-avoiding for every $\eta,\gamma,\omega>0$.
\end{lemma}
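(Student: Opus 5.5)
We argue by induction on $r_X$, following the recursive structure of $\Rd$ in \Cref{def:reducer} and of $\cE_{Z,\Rd}^{\cC}$ in \Cref{def:redsupp}. The second assertion then follows at once. By \Cref{eq:baddirect} (see also \Cref{remark:badred}), every bad set in $\cE_Z^{\cC}(\eta,\gamma,\omega;\beta)$ is either a subset of $V_Z$ or a singleton contained in $\cE_{Z,\Rd}^{\cC}$. A set disjoint from $V_Z\cup\cE_{Z,\Rd}^{\cC}$ contains no element of either kind, provided bad sets are nonempty. Bad sets in $\cE(G^{\cC},\eta,\gamma)$ have size at least $\gamma\eta>0$, and those in $2^{V_Z}|_\beta^{\geq\omega}$ have positive weight since $\omega>0$. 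So the work is in the support claim.

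For the base case, if $r_X=0$ or $i=0$ then $\Rd(e)=0$ and there is nothing to prove. For the inductive step, write $\cC^*=\cA^*\otimes\cB^*$ and $\Rd(e)=(f^0,f^1)$ with $f^0\in\cA^0\otimes\cB^{i-1}$ and $f^1\in\bF_2^{V_{\bar{\ell}}\times B^{i-2}}\subseteq\cA^1\otimes\cB^{i-2}$.

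Consider $f^0$ first. Its support lies in $A^0\times B$, where $A^0=V_0\sqcup\cdots\sqcup V_{\bar{\ell}-1}$ is the $0$-chain basis of $\cA^*\cong\cC_*(\epsilon,\bar\ell)$. Every element of $V_Z=(V_{\bar{\ell}})^{r_X}\times C(\epsilon,\bar{\ell},0,r_Z)$ has its first coordinate in the copy $V_{\bar{\ell}}\subseteq A^1$, and every element of $\cE_{Z,\Rd}^{\cC}$ lies in $A^1\times B$ by \Cref{def:redsupp}. Since $A^0\cap A^1=\emptyset$, $\supp(f^0)$ avoids both sets.

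Now consider $f^1$. Its support lies in $\{v\}\times B$ for $v\in V_{\bar{\ell}}\subseteq A^1$, and on each row $f^1|_{\{v\}\times B}=\Rd^{\cB}(c'|_{\{v\}\times B})$. By the inductive hypothesis for $\cB^*=\cC^*(\epsilon,\bar\ell,r_X-1,r_Z)$, each row's support avoids $V_Z^{\cB}\cup\cE_{Z,\Rd}^{\cB}$. Here $V_Z^{\cB}=(V_{\bar{\ell}})^{r_X-1}\times C(\epsilon,\bar{\ell},0,r_Z)$, and the reducer set of $\cB$ is what is written as $\cE_{Z,\Rd}^{\cA}$ in \Cref{def:redsupp}. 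We then note that $V_Z\cap(\{v\}\times B)=\{v\}\times V_Z^{\cB}$ and $\cE_{Z,\Rd}^{\cC}\cap(\{v\}\times B)=\{v\}\times\cE_{Z,\Rd}^{\cB}$, the latter because $v\in V_{\bar{\ell}}$ excludes the first piece $(A^1\setminus V_{\bar{\ell}})\times B$. Hence $\supp(f^1)$ also avoids $V_Z\cup\cE_{Z,\Rd}^{\cC}$, which completes the induction.

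The only real obstacle is notational bookkeeping. One must check that the copy of $V_{\bar{\ell}}$ in $V_Z$ and in $\cE_{Z,\Rd}^{\cC}$ is the one in $A^1$ and not $A^0$; in $\cA^*\cong\cC_*$ the level $V_{\bar\ell}$ appears only in $A^1$, so $f^0$ automatically avoids it. One must also read the typo $\cE_{Z,\Rd}^{\cA}$ in \Cref{def:redsupp} as the reducer set of $\cB$ so that the two restriction identities hold.
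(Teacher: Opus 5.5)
Your proof is correct, but it is organized differently from the paper's. The paper splits the claim into two separate arguments:
\begin{itemize}
\item For the reducer set, it uses $\Rd\circ\Rd=0$ (\Cref{lem:redprops}, item~4) to conclude that $\Rd(e)$ is reduced, and then invokes \Cref{lem:redsupp}.
\item For $V_Z$, it observes from \Cref{def:reducer} that every element of $\supp(\Rd(e))$ has at least one of its first $r_X$ coordinates in $C^0(\epsilon,\bar{\ell},1,0)$. By contrast, every element of $V_Z$ has all of them in $C^1(\epsilon,\bar{\ell},1,0)$.
\end{itemize}
You instead run a single induction directly on the recursion $\Rd(e)=(f^0,f^1)$. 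This handles both sets at once via the identities $V_Z\cap(\{v\}\times B)=\{v\}\times V_Z^{\cB}$ and $\cE_{Z,\Rd}^{\cC}\cap(\{v\}\times B)=\{v\}\times\cE_{Z,\Rd}^{\cB}$.

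What your induction actually establishes is a sharper structural fact. For each element of $\supp(\Rd(e))$, the first of its leading $r_X$ coordinates that lies outside $V_{\bar{\ell}}$ lies in $A^0$. Unwinding \Cref{def:redsupp}, this is exactly membership in the complement of $V_Z\cup\cE_{Z,\Rd}^{\cC}$.

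The two routes trade off as follows. The paper's is shorter because it reuses earlier lemmas, although its coordinate observation about $V_Z$ is itself an unstated induction of precisely the kind you carry out. Yours is self-contained and needs neither $\Rd\circ\Rd=0$ nor the characterization of reduced cochains.

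Two further points are right. Reading $\cE_{Z,\Rd}^{\cA}$ in \Cref{def:redsupp} as the reducer set of $\cB$ is correct; the proof of \Cref{lem:redsupp} uses it that way. Your check that all bad sets are nonempty when $\eta,\gamma,\omega>0$ makes explicit a point the paper's final sentence uses only implicitly.
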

\begin{proof}
  First, \Cref{lem:redprops} shows that $\Rd(\Rd(e))=0$ so that $\Rd(e)$ is reduced, and hence \Cref{lem:redsupp} implies that $\supp(\Rd(e))\cap\cE_{Z,\Rd}^{\cC}=\emptyset$.

  To show that $\supp(\Rd(e))\cap V_Z=\emptyset$, recall that
  \begin{equation*}
    \cC^*(\epsilon,\bar{\ell},r_X,r_Z) = \cC^*(\epsilon,\bar{\ell},1,0)^{\otimes r_X} \otimes \cC^*(\epsilon,\bar{\ell},0,1)^{\otimes r_Z}.
  \end{equation*}
  By \Cref{def:reducer}, the image of the reducer map $\Rd:\cC^i\rightarrow\cC^{i-1}$ is entirely supported on elements of $C=C(\epsilon,\bar{\ell},1,0)^{\times r_X}\times C(\epsilon,\bar{\ell},0,1)^{\times r_Z}$ whose projection onto at least one of the factors of $C(\epsilon,\bar{\ell},1,0)$ lies inside $C^0(\epsilon,\bar{\ell},1,0)$. However, by definition $V_Z\subseteq C^1(\epsilon,\bar{\ell},1,0)^{\times r_X}\times C(\epsilon,\bar{\ell},0,1)^{\times r_Z}$, so indeed $\supp(\Rd(e))\cap V_Z=\emptyset$.

  It then follows that $\Rd(e)$ is $\cE_Z^{\cC}(\eta,\gamma,\omega;\beta)$-avoiding for every $\eta,\gamma,\omega>0$ because by \Cref{def:badfams,remark:badred}, we have $\cE_Z^{\cC}(\eta,\gamma,\omega;\beta) \subseteq 2^{V_Z\cup\cE_{Z,\Rd}^{\cC}}$.
\end{proof}

% TODO: give decoder for reduced errors: exactly kill all errors in $A^0$ sectors, then run small-set-flip decoder from \cite{golowich_constant-overhead_2025} on (each $\cC^*(\epsilon,\bar{\ell},0,r_Z)$-layer of) the remaining error. Probably want to give two algorithms: one for state prep where we're given a noisy coboundary of an arbitrary vector, and we want to recover an approximation of the original vector (up to adding in an arbitrary cocycle, which is unavoidable); the other for error-correction where we're given a noisy cocyle and we want to recover the cocycle's cohomology class

\section{Fault-Tolerant Gadgets with Non-Uniform Errors}
\label{sec:ftnonu}
In this section, we describe our main gadgets for the codes in \Cref{sec:quantumprod}. These gadgets are designed to have low failure (i.e.~logical error) probability when the physical fault consists of non-uniform locally stochastic errors. That is, each physical qubit in our gadgets is labeled by some basis element of a cochain complex $\cC^*=\cC^*(\epsilon,\bar{\ell},r_X,r_Z)$ as defined in \Cref{sec:codeconstruct}. As described in \Cref{sec:bfls} below, our gadgets are designed for locally stochastic noise in which a physical qubit at level~$\ell$ (see \Cref{def:weightnorm}) is corrupted with probability $\leq p^{2^{\Omega(\ell)}}$, for a small constant $p>0$. In Appendix~\ref{sec:concat}, we show how to concatenate our gadgets with codes of varying sizes in order to obtain fault-tolerance under uniform locally stochastic noise, in which every qubit's error probability is bounded by the same constant. The key idea is that the fraction of qubits at level $\ell$ decays exponentially in $\ell$, so we can afford to encode the few qubits with large $\ell$ into larger codes, while preserving a constant encoding rate of the entire scheme.

Note that our ejection (\Cref{lem:eject}) and injection (\Cref{lem:inject}) gadgets below have some bare physical qubits labeled by $M^{r_X}=V_{\bar{\ell}}^{\times r}$, which we view as level-$0$ qubits (again see \Cref{def:weightnorm}) via the inclusion $M^{r_X}=V_{\bar{\ell}}^{\times r}\subseteq C^{r_X}$.

\subsection{Families of Bad Sets for Faults}
In this section, we define the families of bad sets $\cE_{\mathrm{run}}$ that we use for faults for our gadgets. Letting $\cC^*=\cC^*(\epsilon,\bar{\ell},r_X,r_Z)$, our gadgets will act on sets of qubits labeled by subsets of $C$, or in some cases, subsets of multiple copies of $C$. For each such copy of $C$, we will impose a family of bad sets as defined in \Cref{sec:bfdef} below. \Cref{sec:bfprop} presents various properties of these families of bad sets, and \Cref{sec:bfls} shows that they capture the behavior of appropriate locally stochastic noise with high probability.

\subsubsection{Definition of Families}
\label{sec:bfdef}
We begin by defining the \emph{shadow} of a vertex $v_\ell\in V_\ell$ to be the set of vertices in $V_{\bar{\ell}}$ that can be reached from $v_\ell$ by a path, with one edge in each graph $G_{\ell+1},\dots,G_{\bar{\ell}}$.

\begin{definition}
  Define all variables as in \Cref{def:classcode}. For $0\leq\ell\leq\bar{\ell}$ and for $v_\ell\in V_\ell$, we define the \emph{shadow} $\Shad(v_\ell)\subseteq V_{\bar{\ell}}$ to be the set of all vertices $v_\ell\in V_{\bar{\ell}}$ such that there exist $v_{\ell+1}\in V_{\ell+1},\dots,v_{\bar{\ell}-1}\in V_{\bar{\ell}-1}$ for which $(v_{\ell'},v_{\ell'-1})\in E^{(\ell')}$ for every $\ell<\ell'\leq\bar{\ell}$.

  We extend $\Shad$ to a mapping $\Shad:2^{C(\epsilon,\bar{\ell})}\rightarrow 2^{V_{\bar{\ell}}}$ such that for $S\subseteq C(\epsilon,\bar{\ell})$, then $\Shad(S)=\bigcup_{v\in S}\Shad(v)$, where we use the fact that $C(\epsilon,\bar{\ell})$ is by definition a disjoint union of sets $V_\ell$. Letting $V_Z=(V_{\bar{\ell}})^{\times r_X}\times C(\epsilon,\bar{\ell})^{\times r_Z}$, then because $C(\epsilon,\bar{\ell},r_X,r_Z)\cong C(\epsilon,\bar{\ell})^{\times r_X}\times C(\epsilon,\bar{\ell})^{\times r_Z}$, we have a mapping $\Shad^{\times r_X}:C(\epsilon,\bar{\ell},r_X,r_Z)\rightarrow 2^{V_Z}$ that maps $(u^{(1)},\dots,u^{(r)})$ to the set of all $(v^{(1)},\dots,v^{(r_X)},u^{(r_X+1)},\dots,u^{(r)})\in V_Z$ such that for $1\leq i\leq r_X$ we have $v^{(i)}\in\Shad(u^{(i)})$.
\end{definition}

Below, recall the definition of $\cE_Z^{\cC(\epsilon,\bar{\ell},r_X,r_Z)}(\eta,\gamma)$ from \Cref{def:badfams}.

\begin{definition}
  \label{def:badfault}
  Let $\cC^*=\cC^*(\epsilon,\bar{\ell},r_X,r_Z)$. For $\eta,\gamma$, define
  \begin{equation*}
    \cE_{\mathrm{run},Z}^{\cC}(\eta,\gamma) = \{E\subseteq 2^C:\exists E'\in\cE_Z^{\cC}(\eta,\gamma) \text{ with } E'\subseteq\Shad^{\times r_X}(E)\}.
  \end{equation*}
  Then for $\omega\geq 0$ and $\beta\geq 1$, define
  \begin{equation}
    \label{eq:Erundef}
    \cE_{\mathrm{run},Z}^{\cC}(\eta,\gamma,\omega;\beta) = \cE_{\mathrm{run},Z}^{\cC}(\eta,\gamma) \cup (2^C|_\beta^{\geq\omega}).
  \end{equation}
  Define $\cE_{\mathrm{run},X}^{\cC}(\eta,\gamma,\omega;\beta)\subseteq 2^C$ to equal the family $\cE_{\mathrm{run},Z}^{\cC^\vee}(\eta,\gamma,\omega;\beta)$ for the dual complex ${\cC^\vee}^*=\cC_*\cong\cC^*(\epsilon,\bar{\ell},r_Z,r_X)$. We let
  \begin{equation}
    \label{eq:ErunXZ}
    \cE_{\mathrm{run}}^{\cC}(\eta,\gamma,\omega;\beta) = \cE_{\mathrm{run},X}^{\cC}(\eta,\gamma,\omega;\beta) \cup \cE_{\mathrm{run},Z}^{\cC}(\eta,\gamma,\omega;\beta)).
  \end{equation}
  When $\beta$ is clear from context we may write $\cE_{\mathrm{run}}^{\cC}(\eta,\gamma,\omega)=\cE_{\mathrm{run}}^{\cC}(\eta,\gamma,\omega;\beta)$.
\end{definition}

\subsubsection{Properties of Families}
\label{sec:bfprop}
In this section, we show various properties of the families of bad sets defined in \Cref{sec:bfdef}.

The following basic claim shows how shadows bound reductions for 1-dimensional complexes.

\begin{claim}
  \label{claim:rf1d}
  Let $\cA^*=\cC^*(\epsilon,\bar{\ell},1,0)$ denote the 1-dimensional complex. Then for every $a\in\cA^1$, we have $\supp(\Rdn(a))\subseteq\Shad(\supp(a))$. Furthermore, if $\beta\geq 2\Delta$, then for every $S\subseteq A$ we have $|\Shad(S)|_\beta=|\Shad(S)|\leq|S|_\beta$.
\end{claim}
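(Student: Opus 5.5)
By linearity of $\Rdn$, the first part reduces to the case $a=\1_{v}$ for a single basis element $v\in V_\ell\subseteq A^1$. Suppose we show $\supp(\Rdn(\1_v))\subseteq\Shad(v)$ for each such $v$. Then for general $a=\sum_{v\in\supp(a)}\1_v$ we get
\[
\supp(\Rdn(a))\subseteq\bigcup_{v\in\supp(a)}\supp(\Rdn(\1_v))\subseteq\Shad(\supp(a)).
\]
For a single $v\in V_\ell$ I will induct downward on $\ell$.

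In the base case $\ell=\bar{\ell}$, we have $\Rd(\1_v)=0$, so $\Rdn(\1_v)=\1_v$. Also $\Shad(v)=\{v\}$, since the path in the definition is empty.

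For $\ell<\bar{\ell}$, \Cref{def:pushdown} gives
\[
\Rdn(\1_v)=\1_v+\delta^{\cA}(\Rd(\1_v))=\1_v+\delta^{\cA}(v^0)+\delta^{\cA}(\Rd(w)),
\]
where $w=\delta^{\cA}(v^0)+\1_v$. Therefore $\Rdn(\1_v)=\Rdn(w)$. The remark after \Cref{def:pushdown} says $w$ is supported in $V_{\ell+1}$. Since $\delta^{\cA}=(\delta^{\cC})^\top$, its support is exactly the set of $u\in V_{\ell+1}$ with $(u,v)\in E^{(\ell+1)}$. By induction, each such $u$ has $\supp(\Rdn(\1_u))\subseteq\Shad(u)$. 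Every such $u$ satisfies $\Shad(u)\subseteq\Shad(v)$, because prepending the edge $(u,v)$ to a path witnessing membership in $\Shad(u)$ gives a path witnessing membership in $\Shad(v)$. Linearity then closes the induction.

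For the weighted bound, note that $\Shad(S)\subseteq V_{\bar{\ell}}$, and every element of $V_{\bar{\ell}}$ has level $0$. Hence $|\Shad(S)|_\beta=|\Shad(S)|$.

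Next I bound shadow sizes. In $G^{(\ell')}$ the vertex $v_{\ell'-1}\in V_{\ell'-1}$ is on the right side, so it has degree $2\Delta$ by the $(\Delta,2\Delta)$-biregularity in \Cref{lem:lossless}. Counting paths then gives $|\Shad(v)|\leq(2\Delta)^{\bar{\ell}-\ell}$ for $v\in V_\ell$. The level of $v$ is $\bar{\ell}-\ell$, so $(2\Delta)^{\bar{\ell}-\ell}\le\beta^{\Lev(v)}$ whenever $\beta\geq 2\Delta$. A union bound over $v\in S$ then gives $|\Shad(S)|\leq\sum_{v\in S}\beta^{\Lev(v)}=|S|_\beta$.

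The only delicate step is checking the orientation and degree convention: which side of $G^{(\ell)}$ has degree $2\Delta$, and that the shadow path walks from $V_\ell$ up toward $V_{\bar{\ell}}$ through right-to-left neighbors. I would verify this against the definition of $\delta^\ell\in\bF_2^{V_{\ell-1}\times V_\ell}$ and the requirement $|V_{\ell}|=2|V_{\ell-1}|$. If $S$ is allowed to include copies of vertices in $A^0$, the same bound applies verbatim, since $\Shad$ only depends on the underlying vertex.
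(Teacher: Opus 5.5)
Your proposal is correct and follows the paper's proof. You reduce to singletons by linearity, use the recursive definition of $\Rd$ to see that $\Rdn(\1_v)$ is supported on endpoints of upward paths through $G^{(\ell+1)},\dots,G^{(\bar{\ell})}$, and bound $|\Shad(v)|\le(2\Delta)^{\bar{\ell}-\ell}\le\beta^{\Lev(v)}$ using the right-degree $2\Delta$. Your downward induction via $\Rdn(\1_v)=\Rdn(w)$ is a more explicit version of the paper's assertion that $\Rdn(\1_v)$ equals the sum of $\1_{v_{\bar{\ell}}}$ over such paths; because it only claims containment, it is also safe against cancellation mod~$2$.
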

\begin{proof}
  By linearity and because $\Shad(S)=\bigcup_{b\in S}\Shad(b)$, it suffices to consider $a$ of weight $|a|=1$, and $S$ of size $|S|=1$. If $a=\1_{v_\ell}$ for some $v_\ell\in V_\ell\subseteq A^1$, then the inductive definition of $\Rd$ in \Cref{def:pushdown} implies that $\Rdn(a)=a+\delta^{\cA}(\Rd(a))$ equals the sum of $\1_{v_{\bar{\ell}}}\in\bF_2^{V_{\bar{\ell}}}\subseteq A^1$ over all paths $(v_\ell,\dots,v_{\bar{\ell}})$ such that each $(v_{\ell'},v_{\ell'-1})\in E^{(\ell')}$. Therefore by definition $\supp(\Rdn(a))\subseteq\Shad(a)$.

  Now if $S=\{v_\ell\}$ for some $v_\ell$ in some vertex set $V_\ell\subseteq A$, then because each $G^{(\ell')}$ has right-degree $2\Delta$, the number of paths $(v_\ell,\dots,v_{\bar{\ell}})$ such that each $(v_{\ell'},v_{\ell'-1})\in E^{(\ell')}$ is $|\Shad(v_{\ell})|\leq(2\Delta)^{\bar{\ell}-\ell}=|\{v_\ell\}|_{2\Delta}$, so the claim follows by the assumption that $\beta\geq 2\Delta$.
\end{proof}

\Cref{claim:rf1d} immediately implies the following two claims.

\begin{claim}
  \label{claim:rfsb}
  For every $S\subseteq C(\epsilon,\bar{\ell},r_X,r_Z)$ we have $|\Shad^{\times r_X}(S)|_\beta\leq|S|_\beta$.
\end{claim}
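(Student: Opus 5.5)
The plan is to reduce to single basis elements and then exploit the product structure, so that the claim follows from the second part of \Cref{claim:rf1d} applied coordinatewise. We assume $\beta\geq 2\Delta$, as in \Cref{claim:rf1d}. Since $\Shad^{\times r_X}(S)=\bigcup_{c\in S}\Shad^{\times r_X}(c)$ by definition, and both $|\cdot|_\beta$ on sets is subadditive under unions and additive over disjoint elements, it suffices to prove $|\Shad^{\times r_X}(c)|_\beta\leq|\{c\}|_\beta=\beta^{\Lev(c)}$ for a single $c=(u^{(1)},\dots,u^{(r)})\in C(\epsilon,\bar{\ell},r_X,r_Z)$. The general case then follows from
\begin{equation*}
  |\Shad^{\times r_X}(S)|_\beta\leq\sum_{c\in S}|\Shad^{\times r_X}(c)|_\beta\leq\sum_{c\in S}\beta^{\Lev(c)}=|S|_\beta.
\end{equation*}

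For a single $c$, the shadow is a product set:
\begin{equation*}
  \Shad^{\times r_X}(c)=\Shad(u^{(1)})\times\cdots\times\Shad(u^{(r_X)})\times\{u^{(r_X+1)}\}\times\cdots\times\{u^{(r)}\}.
\end{equation*}
The level of a basis element of the product complex is the sum of the per-coordinate levels $\bar{\ell}-\ell_i$ (\Cref{def:weightnorm}). Hence the $\beta$-weight is multiplicative over products of sets: for $T_i\subseteq C(\epsilon,\bar{\ell})$ we have $|T_1\times\cdots\times T_r|_\beta=\prod_i|T_i|_\beta$, where on each factor $|\cdot|_\beta$ is the one-dimensional weighted norm. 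This gives
\begin{equation*}
  |\Shad^{\times r_X}(c)|_\beta=\prod_{i\leq r_X}|\Shad(u^{(i)})|_\beta\cdot\prod_{i>r_X}\beta^{\bar{\ell}-\ell(u^{(i)})}.
\end{equation*}

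Each first-$r_X$ factor is bounded by \Cref{claim:rf1d}. For those coordinates we have $\Shad(u^{(i)})\subseteq V_{\bar{\ell}}$, which has level $0$, so $|\Shad(u^{(i)})|_\beta=|\Shad(u^{(i)})|\leq|\{u^{(i)}\}|_\beta=\beta^{\bar{\ell}-\ell(u^{(i)})}$. Multiplying these bounds gives exactly $\beta^{\Lev(c)}$, as desired.

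The only point needing care is a bookkeeping check. The copies of $V_\ell$ in both $C^0$ and $C^1$ of the one-dimensional complex should be assigned the same level $\bar{\ell}-\ell$, so that the level of a product element really is additive across coordinates. Also, \Cref{claim:rf1d} is stated for the complex $\cA^*=\cC^*(\epsilon,\bar{\ell},1,0)$, but the same path-counting bound $|\Shad(v_\ell)|\leq(2\Delta)^{\bar{\ell}-\ell}$ holds whichever copy of $V_\ell$ contains $v_\ell$. Both points follow directly from the definitions, so I expect no real obstacle.
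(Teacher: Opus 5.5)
Your proof is correct and follows essentially the same route as the paper, which simply applies the one-dimensional bound of \Cref{claim:rf1d} to each of the first $r_X$ coordinates. Your reduction to singletons and the multiplicativity of $|\cdot|_\beta$ over product sets spell out what the paper's one-line argument leaves implicit, as does your explicit statement of the standing assumption $\beta\geq 2\Delta$.
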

\begin{proof}
  By definition $\Shad^{\times r_X}(S)$ simply applies $\Shad$ to each of the first $r_X$ factors in the $r=r_X+r_Z$ dimensional product defining $C(\epsilon,\bar{\ell},r_X,r_Z)$. By \Cref{claim:rf1d}, each of these applications of $\Shad$ does not increase the $\beta$-weighted norm, so \Cref{claim:rfsb} follows.
\end{proof}

\begin{claim}
  \label{claim:rfsuppred}
  Letting $V_Z:=(V_{\bar{\ell}})^{r_X}\times C(\epsilon,\bar{\ell},0,r_Z)$, then for every $e\in\cC^i(\epsilon,\bar{\ell},r_X,r_Z)$ we have
  \begin{equation*}
    % \label{eq:rfsuppred}
    \supp(\Rdn(e)|_{V_Z})\subseteq\Shad^{\times r_X}(\supp(e)).
  \end{equation*}
\end{claim}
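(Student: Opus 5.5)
By linearity of $\Rdn$, and because $\Shad^{\times r_X}$ of a set is the union of $\Shad^{\times r_X}$ of its elements, it suffices to treat $e=\1_c$ for a single basis element $c=(u^{(1)},\dots,u^{(r)})\in C^i$. The claim then follows if
\begin{equation*}
  \supp(\Rdn(\1_c)|_{V_Z})\subseteq\Shad^{\times r_X}(c).
\end{equation*}
For a general $e$ we have $\Rdn(e)|_{V_Z}=\sum_{c\in\supp(e)}\Rdn(\1_c)|_{V_Z}$, and the support of a sum is contained in the union of the supports.

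To prove the single-element case, I would invoke \Cref{lem:redfactor}. It expresses $\Rdn^{r_X,r_Z}(\1_c)|_{V_Z}$ as $(\Rdn^{1,0})^{\otimes r_X}\otimes I(\1_c)|_{V_Z}$. Since $\1_c=\1_{u^{(1)}}\otimes\cdots\otimes\1_{u^{(r)}}$ is a pure tensor, this restricted vector factors as
\begin{equation*}
  \Bigl(\bigotimes_{j=1}^{r_X}\Rdn^{1,0}(\1_{u^{(j)}})|_{V_{\bar{\ell}}}\Bigr)\otimes\1_{(u^{(r_X+1)},\dots,u^{(r)})}.
\end{equation*}

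For each of the first $r_X$ factors, the 1-dimensional case of \Cref{claim:rf1d} gives $\supp(\Rdn^{1,0}(\1_{u^{(j)}}))\subseteq\Shad(u^{(j)})$. The support of a tensor product is the product of the supports. Hence the support of the displayed vector lies in
\begin{equation*}
  \Shad(u^{(1)})\times\cdots\times\Shad(u^{(r_X)})\times\{(u^{(r_X+1)},\dots,u^{(r)})\},
\end{equation*}
which is exactly $\Shad^{\times r_X}(c)$ by definition.

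Two bookkeeping points need care, and I expect them to be the only real obstacle.

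The first concerns the factors $u^{(j)}$ lying in $A^0$ rather than $A^1$. Here $\Rdn^{1,0}$ acts on degree-$0$ cochains of the 1-dimensional complex, where $\Rd=0$, so $\Rdn^{1,0}(\1_{u^{(j)}})=\1_{u^{(j)}}\in A^0$. Its restriction to $V_{\bar{\ell}}\subseteq A^1$ is zero, so such a term contributes nothing and the inclusion holds trivially.

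The second is checking that the tensor-factor identification in \Cref{lem:redfactor} correctly matches the restriction to $V_Z$ with a restriction in each of the first $r_X$ coordinates. This matching is what justifies factorwise restriction to $V_{\bar{\ell}}$ in the display above.
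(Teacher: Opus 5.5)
Your proposal is correct and matches the paper's argument, which simply says the claim follows immediately from \Cref{lem:redfactor,claim:rf1d}. You have written out the details the paper leaves implicit: the reduction to a single basis element by linearity, the factorization over a pure tensor restricted to the product set $V_Z$, and the observation that factors in $A^0$ vanish after restriction to $V_{\bar{\ell}}$.
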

\begin{proof}
  The claim follows immediately from \Cref{lem:redfactor,claim:rf1d}.
\end{proof}

In \Cref{lem:redfault} below, we show that for $\beta\geq 2\Delta$, then the reduction of a $\cE_{\mathrm{run}}^{\cC}(\eta,\gamma,\omega;\beta)$-avoiding error (see \Cref{def:badfault} above) will be $\cE^{\cC}(\eta,\gamma,\omega;\beta)$-avoiding. The proof applies the claims above.

\begin{lemma}
  \label{lem:redfault}
  Let $\cC^*=\cC^*(\epsilon,\bar{\ell},r_X,r_Z)$, let $\Delta=\Delta(\epsilon)$ be defined as in \Cref{lem:lossless}, and let $V_Z=(V_{\bar{\ell}})^{r_X}\times C(\epsilon,\bar{\ell},0,r_Z)$. Let $\beta \geq 2\Delta$. For every $\cE_{\mathrm{run},Z}^{\cC}(\eta,\gamma,\omega;\beta)$-avoiding set $E\subseteq C^i$, then the set
  \begin{equation*}
    \bar{E} = \Shad^{\times r_X}(E) \cup (C^i\setminus(V_Z\cup\cE_{Z,\Rd}^{\cC}))
  \end{equation*}
  is $\cE_Z^{\cC}(\eta,\gamma,\omega;\beta)$-avoiding. Furthermore, for every $e\in\cC^i$ with $\supp(e)\subseteq E$, then $\supp(\Rdn(e))\subseteq\bar{E}$.
\end{lemma}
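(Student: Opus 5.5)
The proof has two parts: the avoidance claim for $\bar{E}$, and the support inclusion for $\Rdn(e)$. I would prove the support inclusion first, since it is nearly immediate.

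\textbf{Support inclusion.} Let $e\in\cC^i$ with $\supp(e)\subseteq E$. By \Cref{lem:redprops} (\Cref{it:rpwelldef}), $\Rdn(e)$ is reduced. By \Cref{lem:redsupp}, it follows that $\supp(\Rdn(e))\cap\cE_{Z,\Rd}^{\cC}=\emptyset$. Hence $\supp(\Rdn(e))$ splits into two pieces:
\begin{itemize}
\item a part in $C^i\setminus(V_Z\cup\cE_{Z,\Rd}^{\cC})$, which lies in $\bar{E}$ by the definition of $\bar{E}$;
\item the part $\supp(\Rdn(e)|_{V_Z})$, which lies in $\Shad^{\times r_X}(\supp(e))\subseteq\Shad^{\times r_X}(E)$ by \Cref{claim:rfsuppred}.
\end{itemize}
This gives $\supp(\Rdn(e))\subseteq\bar{E}$.

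\textbf{Avoidance claim.} Recall that
\begin{equation*}
  \cE_Z^{\cC}(\eta,\gamma,\omega;\beta)=\cE_Z^{\cC(\epsilon,\bar{\ell},0,r_Z)}(\eta,\gamma)^{\sqcup(V_{\bar{\ell}})^{r_X}}\cup(2^{V_Z}|_\beta^{\geq\omega})\cup\cE_{Z,\Rd}^{\cC}.
\end{equation*}
I would check each of the three families separately.

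First, the singleton sets in $\cE_{Z,\Rd}^{\cC}$. The set $\Shad^{\times r_X}(E)$ lies in $V_Z$ by definition, and $V_Z\cap\cE_{Z,\Rd}^{\cC}=\emptyset$ by \Cref{claim:redcapVZ}. The other piece $C^i\setminus(V_Z\cup\cE_{Z,\Rd}^{\cC})$ excludes $\cE_{Z,\Rd}^{\cC}$ by construction. So $\bar{E}\cap\cE_{Z,\Rd}^{\cC}=\emptyset$, and no such singleton lies in $\bar{E}$.

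Second, the remaining two families consist of subsets of $V_Z$ (\Cref{remark:badred}). Since $\bar{E}\cap V_Z=\Shad^{\times r_X}(E)$, it suffices to show that $\Shad^{\times r_X}(E)$ contains no member of these families.

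For the weighted family $2^{V_Z}|_\beta^{\geq\omega}$: by \Cref{claim:rfsb}, which uses $\beta\geq 2\Delta$, we have $|\Shad^{\times r_X}(E)|_\beta\leq|E|_\beta$. Since $E$ avoids $2^C|_\beta^{\geq\omega}$, we get $|E|_\beta<\omega$. Any subset of $\Shad^{\times r_X}(E)$ therefore also has weighted norm below $\omega$, so none of them belongs to this family.

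For the cluster family: suppose some $E'\in\cE_Z^{\cC}(\eta,\gamma)$ satisfies $E'\subseteq\Shad^{\times r_X}(E)$. Then $E\in\cE_{\mathrm{run},Z}^{\cC}(\eta,\gamma)$ by \Cref{def:badfault}, contradicting the assumption that $E$ is avoiding.

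The one subtle point is notational. \Cref{def:badfault} uses $\cE_Z^{\cC}(\eta,\gamma)$, which should be read as the inductively defined family $\cE_Z^{\cC(\epsilon,\bar{\ell},0,r_Z)}(\eta,\gamma)^{\sqcup(V_{\bar{\ell}})^{r_X}}$ from \Cref{def:badfams}. This is exactly the first term of $\cE_Z^{\cC}(\eta,\gamma,\omega;\beta)$ in \Cref{eq:baddirect}, so the cluster case matches directly.

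Beyond that, the main thing to verify carefully is the set bookkeeping: that $\Shad^{\times r_X}$ maps into $V_Z$ and that $\cE_{Z,\Rd}^{\cC}$ is disjoint from $V_Z$.
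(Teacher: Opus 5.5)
Your proposal is correct and follows essentially the same argument as the paper: bound $|\Shad^{\times r_X}(E)|_\beta\le|E|_\beta<\omega$ via \Cref{claim:rfsb}, handle the cluster family directly from the definition of $\cE_{\mathrm{run},Z}^{\cC}(\eta,\gamma)$, handle the reducer singletons via $\Shad^{\times r_X}(E)\subseteq V_Z$ and \Cref{claim:redcapVZ}, and combine \Cref{claim:rfsuppred} with \Cref{lem:redsupp} for the support inclusion. Your explicit appeal to \Cref{lem:redprops} to see that $\Rdn(e)$ is reduced before invoking \Cref{lem:redsupp} simply spells out a step that the paper leaves implicit.
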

\begin{proof}
  By \Cref{claim:rfsb},
  \begin{equation*}
    % \label{eq:rfbsmall}
    |\Shad^{\times r_X}(E)|_\beta \leq |E|_\beta < \omega,
  \end{equation*}
  where the final inequality above holds by the assumption that $E$ is $\cE_{\mathrm{run},Z}^{\cC}(\eta,\gamma,\omega;\beta)$-avoiding. This assumption that $E$ is $\cE_{\mathrm{run},Z}^{\cC}(\eta,\gamma,\omega;\beta)$-avoiding also implies that $\Shad^{\times r_X}(E)$ does not contain any set $E'\in\cE_Z^{\cC}(\eta,\gamma)$, that is, $\Shad^{\times r_X}(E)$ is $\cE_Z^{\cC}(\eta,\gamma)$-avoiding. Then as $\Shad^{\times r_X}(E)\subseteq V_Z$ by definition and $V_Z\cap\cE_{Z,\Rd}^{\cC}=\emptyset$ by \Cref{claim:redcapVZ}, we have shown that $\Shad^{\times r_X}(E)$ is
  \begin{equation*}
    \cE_Z^{\cC}(\eta,\gamma) \cup (2^{V_Z}|_\beta^{\geq\omega}) \cup \cE_{Z,\Rd}^{\cC}=\cE_Z^{\cC}(\eta,\gamma,\omega;\beta)
  \end{equation*}
  avoiding. Furthermore, because $\cE_Z^{\cC}(\eta,\gamma,\omega;\beta)\subseteq 2^{V_Z\cup\cE_{Z,\Rd}^{\cC}}$ (see \Cref{def:badfams,remark:badred}), it follows that no element of $C^i\setminus(V_Z\cup\cE_{Z,\Rd}^{\cC})$ lies in any bad set in $\cE_Z^{\cC}(\eta,\gamma,\omega;\beta)$, so $\bar{E}$ is also $\cE_Z^{\cC}(\eta,\gamma,\omega;\beta)$-avoiding.

  It remains to be shown that every $e\in\cC^i$ with $\supp(e)\subseteq E$ has $\supp(\Rdn(e))\subseteq\bar{E}$. But this claim holds because \Cref{claim:rfsuppred} implies that every element of $\supp(\Rdn(e))\cap V_Z$ lies in $\Shad^{\times r_X}(E)\subseteq\bar{E}$, and \Cref{lem:redsupp} implies that every element of $\supp(\Rdn(e))\setminus V_Z$ lies in $C^i\setminus(V_Z\cup\cE_{Z,\Rd}^{\cC})\subseteq\bar{E}$.
\end{proof}

To analyze our state preparation gadget in \Cref{sec:stateprep}, we will need \Cref{lem:redpropfault} below, which provides a variant of \Cref{lem:redfault} that is robust to propagation of errors through a syndrome extraction circuit.
% We therefore first define the propagation of an error set:

% \begin{definition}
%   Let $\cC^*=\cC^*(\epsilon,\bar{\ell},r_X,r_Z)$. For a set $E\subseteq C^i$, 
% \end{definition}

\begin{lemma}
  \label{lem:redpropfault}
  Let $\cC^*=\cC^*(\epsilon,\bar{\ell},r_X,r_Z)$, let $r=r_X+r_Z$, and let $\Delta=\Delta(\epsilon)$ be defined as in \Cref{lem:lossless}. Let $\beta \geq 2\Delta$. For every $\eta,\gamma,\omega>0$ and every $\cE_{\mathrm{run},Z}^{\cC}(\eta,\gamma,\omega;\beta)$-avoiding set $E\subseteq C^i$, let
  \begin{equation*}
    E' = \{c'\in C^{i+1}:\exists c\in E\text{ with }c\triangleleft c'\},
  \end{equation*}
  and let
  \begin{equation*}
    \eta' = \eta, \hspace{1em} \gamma' = 16r\Delta^2 \cdot \gamma, \hspace{1em} \omega' = 4r\beta\Delta \cdot \omega.
  \end{equation*}
  Then the set
  \begin{equation*}
    \bar{E}' = \Shad^{\times r_X}(E') \cup (C^{i+1}\setminus(V_Z\cup\cE_{Z,\Rd}^{\cC}))
  \end{equation*}
  is $\cE_Z^{\cC}(\eta',\gamma',\omega';\beta)$-avoiding. Furthermore, for every $e'\in\cC^{i+1}$ with $\supp(e')\subseteq E'$, then $\supp(\Rdn(e'))\subseteq\bar{E}'$.
\end{lemma}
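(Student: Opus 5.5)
The plan is to follow the proof of \Cref{lem:redfault}, applied at level $i+1$ to the propagated set $E'$. The only new ingredient is controlling how $\Shad^{\times r_X}(E')$ compares with $\Shad^{\times r_X}(E)$: we must bound its weighted norm and its density in connected subgraphs. The ``furthermore'' part is routine and identical to before. By \Cref{claim:rfsuppred}, $\supp(\Rdn(e'))\cap V_Z\subseteq\Shad^{\times r_X}(E')$, and by \Cref{lem:redsupp} (since $\Rdn(e')$ is reduced), $\supp(\Rdn(e'))\setminus V_Z\subseteq C^{i+1}\setminus(V_Z\cup\cE_{Z,\Rd}^{\cC})$. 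Likewise, as in \Cref{lem:redfault}, the complement part $C^{i+1}\setminus(V_Z\cup\cE_{Z,\Rd}^{\cC})$ meets no bad set, and $\Shad^{\times r_X}(E')\subseteq V_Z$ avoids $\cE_{Z,\Rd}^{\cC}$ by \Cref{claim:redcapVZ}. So it remains to show that $\Shad^{\times r_X}(E')$ has $\beta$-weight $<\omega'$ and is $\cE_Z^{\cC}(\eta',\gamma')$-avoiding.

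\textbf{Weight.} Each $c\in E$ has at most $r(\Delta+1)$ elements $c'\triangleright c$, each with $\Lev(c')\le\Lev(c)+1$ (the same count as \Cref{eq:cobnormbeta}). Hence $|E'|_\beta\le r(\Delta+1)\beta|E|_\beta\le 2r\Delta\beta|E|_\beta<\omega'$. Applying \Cref{claim:rfsb} then gives $|\Shad^{\times r_X}(E')|_\beta\le|E'|_\beta<\omega'$.

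\textbf{Density.} This is the main obstacle. The plan is to show that $\Shad^{\times r_X}(E')$ lies within a bounded-degree neighbourhood of $\Shad^{\times r_X}(E)$ in a sense compatible with the connectivity graph, namely:
\begin{itemize}
\item each element of $\Shad^{\times r_X}(E')$ is $G^{\cC}$-adjacent to (or equal to) some element of $\Shad^{\times r_X}(E)$;
\item each element of $\Shad^{\times r_X}(E)$ is responsible for at most $O(r\Delta^2)$ such elements.
\end{itemize}

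To get this, write $c'=(u'_1,\dots,u'_r)\triangleright c=(u_1,\dots,u_r)$, where they differ in a single coordinate $j$. If $j>r_X$, the shadows differ only in coordinate $j$, with $u'_j\triangleright u_j$ in $\cC^*(\epsilon,\bar\ell)$, so the two corresponding elements of $V_Z$ share a common lower element and are $G^{\cC}$-adjacent. If $j\le r_X$, then in the chain complex $\cC_*$ we have $u'_j\in A^0$ and $u_j\in A^1$, and the relevant coordinates of $V_Z$ are restricted to $V_{\bar\ell}\subseteq A^1$. In that case I expect $\Shad(u'_j)\subseteq\Shad(u_j)$ up to one extra step in the expander hierarchy, which would give containment outright. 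This step I would check carefully against the definition of $\Shad$ on $A^0$ versus $A^1$.

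Granting the neighbourhood structure, take any connected $G'$ with $|V(G')|\ge\eta$ and suppose $|\Shad^{\times r_X}(E')\cap V(G')|\ge\gamma'|V(G')|$. Enlarge $G'$ by adjoining the adjacent points of $\Shad^{\times r_X}(E)$; this keeps $G'$ connected and at most doubles its size. Each element of $\Shad^{\times r_X}(E)$ accounts for at most $8r\Delta^2$ elements of $\Shad^{\times r_X}(E')$. The enlarged subgraph therefore has density of $\Shad^{\times r_X}(E)$ at least $\gamma'/(16r\Delta^2)=\gamma$, contradicting the assumption that $E$ is $\cE_{\mathrm{run},Z}^{\cC}(\eta,\gamma)$-avoiding.
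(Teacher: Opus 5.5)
Your proposal is correct and is essentially the paper's proof: the weight bound comes from counting upper neighbours and then applying \Cref{claim:rfsb}. The density bound comes from the fact that every element of $\Shad^{\times r_X}(E')$ either already lies in $\Shad^{\times r_X}(E)$ or covers an element of it in the same fiber $\{v\}\times C(\epsilon,\bar{\ell},0,r_Z)$. The first case occurs when the differing coordinate is among the first $r_X$, where $\Shad(u'_j)\subseteq\Shad(u_j)$ is indeed immediate from the definition of $\Shad$. The graph to work in is therefore $G^{\cC(\epsilon,\bar{\ell},0,r_Z)}$ on that fiber, not $G^{\cC}$.

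Your enlargement step is a slightly tighter variant of the paper's, which adjoins all lower neighbours of $V'$. Read it as adjoining one witness per element of $\Shad^{\times r_X}(E')\cap V(G')$, since that is what makes ``at most doubles'' true.

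One small slip: in the $r_X$ factors $\cC_*$ an element can have up to $2\Delta+1$ upper neighbours, because the expanders have right degree $2\Delta$. The count is therefore $r(2\Delta+1)$ rather than $r(\Delta+1)$. Since $r(2\Delta+1)\beta\omega\le 4r\beta\Delta\omega=\omega'$, nothing changes.
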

\begin{proof}
  The fact that every $e'\in\cC^{i+1}$ with $\supp(e')\in E'$ satisfies $\supp(\Rdn(e'))\subseteq\bar{E}'$ follows immediately from \Cref{claim:rfsuppred,lem:redsupp}, analogously as in the proof of \Cref{lem:redfault} above.
  
  Because $\cE_Z^{\cC}(\eta,\gamma,\omega;\beta)\subseteq 2^{V_Z\cup\cE_{Z,\Rd}^{\cC}}$ (see \Cref{def:badfams,remark:badred}), no element of $C^{i+1}\setminus(V_Z\cup\cE_{Z,\Rd}^{\cC})$ lies in any bad set in $\cE_Z^{\cC}(\eta',\gamma',\omega';\beta)$. Thus to show that $\bar{E}'$ is $\cE_Z^{\cC}(\eta',\gamma',\omega';\beta)$-avoiding, it suffices to show that $\Shad^{\times r_X}(E')$ is $\cE_Z^{\cC}(\eta',\gamma',\omega';\beta)$-avoiding.
  
  % Let $V_Z=(V_{\bar{\ell}})^{r_X}\times C(\epsilon,\bar{\ell},0,r_Z)$. If $\supp(e')\subseteq E'$, then \Cref{claim:rfsuppred} implies that $\supp(\Rdn(e')|_{V_Z})\subseteq\Shad^{\times r_X}(E')$. Hence to show that $\supp(\Rdn(e'))$ is $\cE_Z^{\cC}(\eta',\gamma',\omega';\beta)$-avoiding, by \Cref{remark:badred} it suffices to show that $\Shad^{\times r_X}(E')$ is $\cE_Z^{\cC}(\eta',\gamma',\omega';\beta)$-avoiding.

  For this purpose, by definition $\Shad^{\times r_X}(E')\subseteq V_Z$, and by \Cref{claim:redcapVZ} $V_Z\cap\cE_{Z,\Rd}^{\cC}=\emptyset$, so $\Shad^{\times r_X}(E')\cap\cE_{Z,\Rd}^{\cC}=\emptyset$.

  Furthermore, for each $c\in E$ there are at most $r(2\Delta+1)$ distinct $c'\triangleright c$, each of which lies at level $\Lev(c')\leq\Lev(c)+1$. Therefore
  \begin{align*}
    |\Shad^{\times r_X}(E')|_\beta \leq |E'|_\beta \leq r(2\Delta+1)\beta|E|_\beta < r(2\Delta+1)\beta\omega \leq \omega',
  \end{align*}
  where the first inequality above holds by \Cref{claim:rfsb} because $\beta\geq 2\Delta$.

  Thus it remains to be shown that $\Shad^{\times r_X}(E')$ is $\cE_Z^{\cC}(\eta',\gamma')$-avoiding. For some $v=(v_1,\dots,v_{r_X})\in(V_{\bar{\ell}})^{r_X}$, let $V'\subseteq\{v\}\times C(\epsilon,\bar{\ell},0,r_Z)\subseteq C$ be a set of size $|V'|\geq\eta'$ that induces a connected subgraph of $G^{C(\epsilon,\bar{\ell},0,r_Z)}$. Our goal is to show that
  \begin{equation}
    \label{eq:rpfgoal}
    |V'\cap\Shad^{\times r_X}(E')|\geq\gamma'|V'|.
  \end{equation}
  Define the set $V\subseteq\{v\}\times C(\epsilon,\bar{\ell},0,r_Z)$ by
  \begin{equation*}
    V = V' \cup \{c\in\{v\}\times C(\epsilon,\bar{\ell},0,r_Z):\exists c'\in V'\text{ with }c=c'\text{ or }c\triangleleft c'\},
  \end{equation*}
  so that
  \begin{equation}
    \label{eq:rpfVp}
    \eta \leq |V'| \leq |V| \leq r_Z(2\Delta+1)|V'|.
  \end{equation}
  This set $V$ induces a connected subgraph of $G^{C(\epsilon,\bar{\ell},0,r_Z)}$, as $V'$ induces a connected subgraph of $G^{C(\epsilon,\bar{\ell},0,r_Z)}$, and every element of $V$ is by definition equal or adjacent to some element of $V'$. The following key claim will allow us to apply the assumption that $E$ is $\cE_{\mathrm{run},Z}^{\cC}(\eta,\gamma)$-avoiding.

  \begin{claim}
    \label{claim:rpfkey}
    For every $c'\in V'\cap\Shad^{\times r_X}(E')$, there exists $c\in V\cap\Shad^{\times r_X}(E)$ such that either $c=c'$ or with $c\triangleleft c'$.
  \end{claim}
  \begin{proof}
    By definition there exists $a'\in E'$ with $c'\in\Shad^{\times r_X}(a')$, and then there exists $a\in E$ with $a\triangleleft a'$. Writing $a=(a_1,\dots,a_r)$ and $a'=(a'_1,\dots,a'_r)\in C(\epsilon,\bar{\ell})^{\times r_X}\times C(\epsilon,\bar{\ell})^{\times r_Z}$, then there exists a unique $j\in[r]$ such that $a_j\triangleleft a'_j$, and $a_{j'}=a'_{j'}$ for every $j'\in[r]\setminus\{j\}$. If $j'\leq r_X$, then by definition $c'\in\Shad^{\times r_X}(a)$, so we can let $c=c'$. If instead $j'>r_X$, then as we must have $c'=(v_1,\dots,v_{r_X},a'_{r_X+1},\dots,a'_r)\in\Shad^{\times r_X}(a')$, it follows that $c=(v_1,\dots,v_{r_X},a_{r_X+1},\dots,a_r)$ must lie in $\Shad^{\times r_X}(a)$, as desired.
  \end{proof}

  \Cref{claim:rpfkey} implies that
  \begin{equation*}
    |V\cap\Shad^{\times r_X}(E)| \geq \frac{|V'\cap\Shad^{\times r_X}(E')|}{\Delta+2},
  \end{equation*}
  as for every $c\in V\cap\Shad^{\times r_X}(E)$, there are at most $\Delta+2$ distinct $c'\in V'\cap\Shad^{\times r_X}(E')$ such that $c=c'$ or $c\triangleleft c'$. However, because $E$ is $\cE_{\mathrm{run},Z}^{\cC}(\eta,\gamma)$-avoiding so that $\Shad^{\times r_X}(E)$ is $\cE_Z^{\cC}(\eta,\gamma)$-avoiding (see \Cref{def:badfault}), and because $|V|\geq\eta$ by \Cref{eq:rpfVp}, we must have by \Cref{eq:rpfVp} that
  \begin{equation*}
    |V\cap\Shad^{\times r_X}(E)| < \gamma|V| \leq \gamma r_Z(2\Delta+1)|V'|.
  \end{equation*}
  The two inequalities above together imply that
  \begin{equation*}
    |V'\cap\Shad^{\times r_X}(E')| < \gamma r_Z(2\Delta+1)(\Delta+2)|V'| \leq \gamma'|V'|,
  \end{equation*}
  so \Cref{eq:rpfgoal} holds, as desired.
\end{proof}

\subsubsection{Behavior Under Locally Stochastic Noise}
\label{sec:bfls}
In this section, we show that the families of bad sets defined in \Cref{sec:bfdef} reflect the behavior of appropriate locally stochastic noise.
Specifically, \Cref{lem:lsfault} below shows that a locally stochastic fault that corrupts each level-$\ell$ qubit with probability $p^{2^{\Omega(\ell)}}$ will be $\cE_{\mathrm{run}}=\cE_{\mathrm{run}}^{\cC}(\Omega(2^{\bar{\ell}/2}),\Omega(1),\beta^{\Omega(\bar{\ell})};\beta)$-avoiding with high probability $1-2^{-2^{\Omega(\bar{\ell})}}$, where $p>0$ is a small constant. Hence throughout \Cref{sec:ftnonu} we consider such $\cE_{\mathrm{run}}$-avoiding faults, which with high probability capture the behavior of locally stochastic noise with non-uniform error probabilities. In Appendix~\ref{sec:concat}, we show how to concatenate our gadgets from \Cref{sec:ftnonu} with an ``inner'' fault-tolerance scheme, in order to obtain fault-tolerance against locally stochastic noise with uniform error probabilities.

% \begin{definition}
%   We say a sub-family $\cE_{\mathrm{run}}'\subseteq\cE_{\mathrm{run}}$ \emph{generates} $\cE_{\mathrm{run}}$ if for every $E\in\cE_{\mathrm{run}}$, there exists $E'\in\cE_{\mathrm{run}}'$ with $E'\subseteq E$.
% \end{definition}

\begin{lemma}
  \label{lem:lsfault}
  For integers $r_X,r_Z\geq 0$, $\bar{\ell}\geq 1$, let $r=r_X+r_Z$, define $\epsilon=\epsilon(r)$ as in \Cref{lem:ssflip}, and define $\mu=\mu(\epsilon)$ and $\Delta=\Delta(\epsilon)$ as in \Cref{lem:lossless}. Define $\cC^*=\cC^*(\epsilon,\bar{\ell},r_X,r_Z)$ as in \Cref{def:prodcode}. Fix some $\lambda>0$, which may be an arbitrarily small constant. We treat $r_X,r_Z,\lambda$ and therefore also $r,\epsilon,\mu,\Delta$ as fixed constants, while letting $\bar{\ell}\rightarrow\infty$ in the big-$O$ notation below.

  Then there exists a sufficiently large constant $\beta_0=\beta_0(r_X,r_Z,\lambda)\geq 2\Delta$ and a sufficiently small constant $p_0=p_0(r_X,r_Z,\lambda)>0$ such that the following holds for every fixed $\beta\geq\beta_0$ and $0<p\leq p_0$. Let $E_{\mathrm{run}}\subseteq C$ be sampled from some probability distribution with the property that for every $S\subseteq C$,
  \begin{equation*}
    \Pr[S\subseteq E_{\mathrm{run}}] \leq \prod_{a\in S}p^{2^{\lambda\cdot\Lev(a)}}.
  \end{equation*}
  Then for every $\eta,\gamma,\omega>0$ satisfying
  \begin{equation*}
    \eta\geq\lambda\cdot 2^{\bar{\ell}/2}, \hspace{1em} \gamma\geq\lambda, \hspace{1em} \omega\geq\beta^{\lambda\cdot\bar{\ell}},
  \end{equation*}
  letting $\cE_{\mathrm{run}}=\cE_{\mathrm{run}}^{\cC}(\eta,\gamma,\omega;\beta)$ as in \Cref{def:badfault}, we have
  \begin{equation*}
    \Pr[E_{\mathrm{run}}\text{ is not }\cE_{\mathrm{run}}\text{-avoiding}] \leq 2^{-2^{\Omega(\bar{\ell})}}.
  \end{equation*}
\end{lemma}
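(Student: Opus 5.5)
We must bound two failure events separately and take a union bound. By \Cref{eq:ErunXZ,eq:Erundef}, $E_{\mathrm{run}}$ fails to be $\cE_{\mathrm{run}}$-avoiding only if (a) $|E_{\mathrm{run}}|_\beta\geq\omega$, or (b) for $\alpha\in\{X,Z\}$, $\Shad^{\times r_X}(E_{\mathrm{run}})$ (resp.~its dual analogue) contains a set in $\cE_\alpha^{\cC}(\eta,\gamma)$. By duality (\Cref{remark:encdual}) it suffices to treat the $Z$ case of (b).

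\textbf{Weighted norm.} I would write $|E_{\mathrm{run}}|_\beta=\sum_{\ell}\beta^{\ell}|E_{\mathrm{run}}\cap L_\ell|$ with $L_\ell$ the level-$\ell$ elements. Since $|L_\ell|\leq (r+1)^{r}\cdot 2^{r(\bar\ell+\ell_0)}2^{-\ell}$, there are few high-level elements. If $|E_{\mathrm{run}}|_\beta\geq\omega\geq\beta^{\lambda\bar\ell}$, then some level $\ell$ has $|E_{\mathrm{run}}\cap L_\ell|\geq\beta^{\lambda\bar\ell-\ell}/(r\bar\ell+1)$. For levels $\ell\geq\lambda\bar\ell/2$, even one fault has probability $\leq|L_\ell|p^{2^{\lambda\ell}}\leq 2^{O(\bar\ell)}p^{2^{\lambda^2\bar\ell/2}}=2^{-2^{\Omega(\bar\ell)}}$. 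For $\ell<\lambda\bar\ell/2$, we need $m\geq\beta^{\lambda\bar\ell/2}/O(\bar\ell)$ faults among $|L_\ell|\leq 2^{O(\bar\ell)}$ elements; a union bound over $m$-subsets gives $\binom{|L_\ell|}{m}p^{m}\leq(e|L_\ell|p/m)^m$. Choosing $\beta_0$ large enough that $\beta^{\lambda/2}$ exceeds the base $2^{O(1)}$ growth of $|L_\ell|$ makes $|L_\ell|/m\to0$, so this is $2^{-2^{\Omega(\bar\ell)}}$.

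\textbf{Cluster event.} This is the main obstacle, because shadows expand each fault: a level-$\ell$ fault in an $X$-factor yields up to $(2\Delta)^{\ell}$ shadow elements. The approach follows the standard percolation argument (as in \cite{gottesman_fault-tolerant_2014,fawzi_efficient_2018}). Consider a connected $V\subseteq\{v\}\times C(\epsilon,\bar\ell,0,r_Z)$ in $G^{\cC(0,r_Z)}$ with $|V|\geq\eta$ and $|V\cap\Shad^{\times r_X}(E_{\mathrm{run}})|\geq\gamma|V|$. The count of connected sets of size $t$ containing a given vertex is $\leq(e w)^t$ with $w=O(1)$ the locality, so there are $\leq 2^{O(\bar\ell)}(ew)^t$ choices of $V$. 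Each element $c'\in V\cap\Shad$ is witnessed by some $a\in E_{\mathrm{run}}$ with $c'\in\Shad^{\times r_X}(a)$; such an $a$ agrees with $c'$ on the last $r_Z$ coordinates and has level $\ell_a$ accounting for its first $r_X$ coordinates. Choose a minimal witness set $A$; then $\sum_{a\in A}(2\Delta)^{\Lev(a)}\geq\gamma t$. For a fixed $c'$, the number of candidate $a$ at level $\ell$ is $\leq(2\Delta)^{\ell}\cdot O(1)^{\ell}$ (reverse paths through the right-regular graphs, counting coordinate splits). Then I would union bound over $V$, over weight-compatible level profiles, and over witnesses. Each witness at level $\ell$ costs probability $p^{2^{\lambda\ell}}$ and buys at most $(2\Delta)^\ell$ covered elements with $O(1)^\ell$ choices. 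Since $p^{2^{\lambda\ell}}\ll(c\Delta)^{-\ell}$ per covered element once $p\leq p_0$ small, the contribution is at most $(p')^{\gamma t}$ for some small $p'$. Against the $(ew)^t$ entropy of $V$, with $\gamma\geq\lambda$ and $p_0$ small, this gives $2^{-\Omega(t)}\leq 2^{-\Omega(\lambda2^{\bar\ell/2})}$ after summing over $t\geq\eta$.

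The delicate step is making the per-covered-element cost precise: witnesses sharing shadows, and the $2^{\lambda\ell}$ versus $\ell\log\Delta$ comparison at small $\ell$. For the latter I would absorb constants by taking $p_0\leq(C\Delta)^{-C/\lambda}$ for a large $C$. A final union bound over $v\in V_{\bar\ell}^{r_X}$ (polynomially many in $2^{\bar\ell}$) and over both cases completes the proof.
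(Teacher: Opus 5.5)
Your overall structure (split into the weighted-norm event and the cluster event, count connected clusters in a slice, bound the probability via witnesses) matches the paper. The weighted-norm part is fine. For $\beta$ large your low-level case is actually vacuous: $|C|\le\beta^{\lambda\bar{\ell}/2}$ already forces an element of $E_{\mathrm{run}}$ at level $\ge\lambda\bar{\ell}/2$, which is how the paper argues.

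\textbf{The gap is in the cluster event.} You charge each witness $a$ with up to $(2\Delta)^{\Lev(a)}$ covered elements and then claim a per-covered-element cost of order $p'$. That would require $p^{2^{\lambda\ell}}\le(p')^{(2\Delta)^{\ell}}$ for all $\ell$ up to $r\bar{\ell}$. This fails for large $\ell$ whenever $2^{\lambda}<2\Delta$, which is exactly the regime of a small constant $\lambda$. Concretely, your accounting allows a single witness at level $\ell\approx\log_{2\Delta}(\gamma t)$ to cover a whole cluster of size $\gamma t$. Its probability cost is only $p^{(\gamma t)^{c}}$ with $c=\lambda/\log_2(2\Delta)<1$, which cannot beat the $(ew)^t$ count of clusters. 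Taking $p_0\le(C\Delta)^{-C/\lambda}$ only absorbs the $2^{O(\ell)}$ choices for each witness, not this sharing. So the claimed $(p')^{\gamma t}$ does not follow from what you wrote.

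\textbf{The missing observation is that sharing never happens.} The bad clusters live in a single slice $\{v\}\times C(\epsilon,\bar{\ell},0,r_Z)$. The set $\Shad^{\times r_X}(a)$ keeps the last $r_Z$ coordinates of $a$, so it meets that slice in at most one point. The quantity $(2\Delta)^{\Lev(a)}$ bounds the shadow summed over all slices, not within one; you already noted that a witness agrees with $c'$ on its last $r_Z$ coordinates. Hence distinct elements of $V\cap\Shad^{\times r_X}(E_{\mathrm{run}})$ have distinct witnesses. Union-bounding over injective witness maps $S\to C$ and applying the hypothesis to the distinct witness set gives, for every fixed $S\subseteq V$,
\[
\Pr\bigl[S\subseteq\Shad^{\times r_X}(E_{\mathrm{run}})\bigr]\le\prod_{c'\in S}\;\sum_{a:\,c'\in\Shad^{\times r_X}(a)}p^{2^{\lambda\Lev(a)}}\le\Bigl(\sum_{\ell\ge 0}2^{O(\ell)}p^{2^{\lambda\ell}}\Bigr)^{|S|}.
\]
Now union-bound over the $\le 2^t$ sets $S\subseteq V$ with $|S|\ge\gamma t$ and use your cluster count. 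This gives $2^{-\Omega(\eta)}=2^{-2^{\Omega(\bar{\ell})}}$ once $p_0$ is small enough that $\sum_{\ell}2^{O(\ell)}p^{2^{\lambda\ell}}<(4ew)^{-1/\lambda}$. This is exactly the paper's argument. The only difference is that the paper passes to clusters of size $\Theta(\eta)$ instead of summing over $t\ge\eta$, which is inessential.
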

\begin{proof}
  By the definition of $\cE_{\mathrm{run}}^{\cC}(\eta,\gamma,\omega;\beta)$ in \Cref{eq:ErunXZ}, it suffices to show that the probability that $E_{\mathrm{run}}$ is not $\cE_{\mathrm{run},Z}^{\cC}(\eta,\gamma,\omega;\beta)$-avoiding is $2^{-2^{\Omega(\bar{\ell})}}$, as analogous reasoning then implies the same bound for $\cE_{\mathrm{run},X}^{\cC}(\eta,\gamma,\omega;\beta)$. By \Cref{eq:Erundef}, it in turn suffices to show the same probability bound for the event that $E_{\mathrm{run}}$ is not $\cE_{\mathrm{run},Z}^{\cC}(\eta,\gamma)$-avoiding, as well as the event that $E_{\mathrm{run}}$ is not $2^C|_{\beta}^{\geq\omega}$-avoiding.

  For the latter, if $E_{\mathrm{run}}$ is not $2^C|_{\beta}^{\geq\omega}$-avoiding, then $|E_{\mathrm{run}}|_\beta\geq\omega\geq\beta^{\lambda\cdot\bar{\ell}}$. Thus because $|C|=O(2^{\bar{\ell}\cdot r})=2^{O(\bar{\ell})}$, for sufficiently large $\beta$ we have $|C|\leq\beta^{(\lambda/2)\cdot\bar{\ell}}$. Therefore $|E_{\mathrm{run}}|_\beta\geq\beta^{(\lambda/2)\cdot\bar{\ell}}\cdot|C|$, so then $E_{\mathrm{run}}$ must contain some $a\in C$ at level $\Lev(a)\geq(\lambda/2)\cdot\bar{\ell}$. But each such $a$ by definition lies in $E_{\mathrm{run}}$ with probability $\leq p^{2^{(\lambda^2/2)\cdot\bar{\ell}}}$, so
  \begin{align*}
    \Pr[E_{\mathrm{run}}\text{ is not }2^C|_{\beta}^{\geq\omega}\text{-avoiding}]
    &\leq |C|\cdot p^{2^{(\lambda^2/2)\cdot\bar{\ell}}} \leq 2^{-2^{\Omega(\bar{\ell})}},
  \end{align*}
  where the final inequality above holds assuming $p<1$ because $|C|=O(2^{\bar{\ell}\cdot r})$.

  It remains to bound the probability that $E_{\mathrm{run}}$ is not $\cE_{\mathrm{run},Z}^{\cC}(\eta,\gamma)$-avoiding. For this purpose, recall from \Cref{def:badfams} that
  \begin{equation*}
    \cE_{\mathrm{run},Z}^{\cC}(\eta,\gamma) = \cE_{\mathrm{run},Z}^{\cC(\epsilon,\bar{\ell},0,r_Z)}(\eta,\gamma)^{\sqcup V_{\bar{\ell}}^{\times r_X}},
  \end{equation*}
  where
  \begin{equation*}
    \cE_{\mathrm{run},Z}^{\cC(\epsilon,\bar{\ell},0,r_Z)}(\eta,\gamma) = \cE(G^{\cC(\epsilon,\bar{\ell},0,r_Z)},\eta,\gamma),
  \end{equation*}
  with the RHS above defined as in \Cref{def:clusterfam,def:conngraph}. Recall that every set in $\cE(G^{\cC(\epsilon,\bar{\ell},0,r_Z)},\eta,\gamma)$ consists of $\geq\gamma$-fraction of the vertices in some $\geq\eta$-vertex connected subgraph $G'$ of the constant-degree graph $G^{\cC(\epsilon,\bar{\ell},0,r_Z)}$. By passing to smaller subgraphs if necessary, it follows that every set in $\cE(G^{\cC(\epsilon,\bar{\ell},0,r_Z)},\eta,\gamma)$ contains some set $E'\in\cE(G^{\cC(\epsilon,\bar{\ell},0,r_Z)},\eta,\gamma)$ of size $|E'|=\Theta(\eta)=\Theta(2^{\bar{\ell}/2})$. There are a total of $2^{O(2^{\bar{\ell}/2})}$ such sets $E'$, as $G^{\cC(\epsilon,\bar{\ell},0,r_Z)}$ is a constant-degree graph on $O(2^{\bar{\ell}\cdot r_Z})$ vertices, so it has at most $O(2^{\bar{\ell}\cdot r_Z})\cdot 2^{O(2^{\bar{\ell}/2})}=2^{O(2^{\bar{\ell}/2})}$ connected subgraphs of size $\Theta(2^{\bar{\ell}/2})$. For each such set $E'$, we have
  \begin{equation*}
    \Pr[E'\subseteq\Shad^{\times r_X}(E_{\mathrm{run}})] \leq \left(\sum_{\ell=0}^{\bar{\ell}}2^{O(\ell)}\cdot p^{2^{\lambda\cdot\ell}}\right)^{|E'|},
  \end{equation*}
  as for every $a\in E'$, there are at most $2^{O(\ell)}$ elements $a'\in C$ at level $\Lev(a')=\ell$ with $a\in\Shad^{\times r_X}(a')$. Union bounding over all $2^{O(2^{\bar{\ell}/2})}$ choices of $E'$, and using the fact that $|E'|=\Theta(2^{\bar{\ell}/2})$, we conclude that
  \begin{align*}
    \Pr[E_{\mathrm{run}}\text{ is not }\cE_{\mathrm{run},Z}^{\cC}(\eta,\gamma)\text{-avoiding}]
    &\leq 2^{O(2^{\bar{\ell}/2})} \cdot \left(\sum_{\ell=0}^{\bar{\ell}}2^{O(\ell)}\cdot p^{2^{\lambda\cdot\ell}}\right)^{\Omega(2^{\bar{\ell}/2})}.
  \end{align*}
  Assuming $p>0$ is a sufficiently small constant, the RHS above is at most $2^{-\Omega(2^{\bar{\ell}/2})}$, as desired.
\end{proof}

For our ejection and injection gadgets, we will in fact want to take $\omega=\lambda\cdot|C|$ for an arbitrarily small constant $\lambda>0$, so that only a small constant fraction of the ejected/injected qubits are corrupted. In \Cref{lem:lsex} below, we therefore show that a locally stochastic fault that corrupts each level-$\ell$ qubit with probability $p^{2^{\Omega(\ell)}}$ will in fact be $\cE_{\mathrm{run}}=\cE_{\mathrm{run}}^{\cC}(\Omega(2^{\bar{\ell}/2}),\Omega(1),\lambda\cdot|C|;\beta)$-avoiding with probability $\geq 1-p/\lambda$, where we can choose $p>0$ to be an arbitrarily small constant depending on $\lambda$. Thus with all but small constant probability, our injection and ejection gadgets will preserve all but a small constant fraction of qubits. It in particular follows that all but a small constant fraction of qubits have just a small constant fraction marginal probability of corruption during ejection and injection.

\begin{lemma}
  \label{lem:lsex}
  Define $r_X,r_Z,\bar{\ell},r,\epsilon,\mu,\Delta,\cC,\lambda$ as in \Cref{lem:lsfault}. Then there exists a sufficiently large constant take $\beta_0=\beta_0(r_X,r_Z,\lambda)\geq 2\Delta$ such that for every fixed $\beta\geq\beta_0$, there exists a sufficiently small constant $p_0=p_0(r_X,r_Z,\lambda,\beta)>0$ such that the following holds for every $0<p\leq p_0$.
  Define $E_{\mathrm{run}}$ as in \Cref{lem:lsfault}.
  Then for every $\eta,\gamma,\omega>0$ satisfying
  \begin{equation*}
    \eta\geq\lambda\cdot 2^{\bar{\ell}/2}, \hspace{1em} \gamma\geq\lambda, \hspace{1em} \omega\geq\lambda\cdot|C|,
  \end{equation*}
  letting $\cE_{\mathrm{run}}=\cE_{\mathrm{run}}^{\cC}(\eta,\gamma,\omega;\beta)$ as in \Cref{def:badfault}, we have
  \begin{equation*}
    \Pr[E_{\mathrm{run}}\text{ is not }\cE_{\mathrm{run}}\text{-avoiding}] \leq p/\lambda,
  \end{equation*}
  assuming $\bar{\ell}\geq\bar{\ell}_0$ for some sufficiently large constant $\bar{\ell}_0=\bar{\ell}_0(r_X,r_Z,\lambda,p,\beta)$.
\end{lemma}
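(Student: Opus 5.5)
The plan is to split the failure event exactly as in the proof of \Cref{lem:lsfault}. By \Cref{eq:ErunXZ,eq:Erundef}, $E_{\mathrm{run}}$ fails to be $\cE_{\mathrm{run}}$-avoiding only if one of three things happens:
\begin{enumerate}
\item it is not $\cE_{\mathrm{run},Z}^{\cC}(\eta,\gamma)$-avoiding;
\item it is not $\cE_{\mathrm{run},X}^{\cC}(\eta,\gamma)$-avoiding;
\item $|E_{\mathrm{run}}|_\beta\geq\omega\geq\lambda|C|$.
\end{enumerate}
Events 1 and 2 involve only $\eta\geq\lambda 2^{\bar{\ell}/2}$ and $\gamma\geq\lambda$, which are the same hypotheses as in \Cref{lem:lsfault}. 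So the cluster-counting argument there applies verbatim: with $p\leq p_0$ and $\beta\geq\beta_0$, each of these probabilities is at most $2^{-2^{\Omega(\bar{\ell})}}$. Choosing $\bar{\ell}_0(r_X,r_Z,\lambda,p,\beta)$ large enough makes their sum at most $p/(2\lambda)$. The real work is event 3, where $\omega$ is now only linear in $|C|$ and can no longer be handled by forcing a high-level element to appear.

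For event 3, I would use Markov's inequality on the $\beta$-weighted norm. The hypothesis with $S=\{a\}$ gives $\Pr[a\in E_{\mathrm{run}}]\leq p^{2^{\lambda\Lev(a)}}$, so
\begin{equation*}
  \bE|E_{\mathrm{run}}|_\beta \leq \sum_{a\in C}\beta^{\Lev(a)}p^{2^{\lambda\Lev(a)}}.
\end{equation*}
Next I count elements by level. Since $|V_\ell|=2^{\ell+\ell_0}$ and $C$ is a union of products of $r$ such sets (together with the $C^1$ copies), the number of $a$ with $\Lev(a)=L$ is at most $\binom{L+r-1}{r-1}\cdot 2\cdot 2^{-L}|V_{\bar{\ell}}|^r\leq c_r(L+1)^{r-1}2^{-L}|C|$ for a constant $c_r$. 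This gives
\begin{equation*}
  \bE|E_{\mathrm{run}}|_\beta\leq c_r|C|\sum_{L\geq 0}(L+1)^{r-1}(\beta/2)^L p^{2^{\lambda L}}.
\end{equation*}
Because $p^{2^{\lambda L}}$ decays doubly exponentially, the series converges for every fixed $\beta$.

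The one delicate point, and the reason $p_0$ is allowed to depend on $\beta$, is making the series at most $1/(2c_r)$. The $L=0$ term contributes exactly $p$. For $L\geq 1$, use $2^{\lambda L}\geq 1+\lambda L\ln 2$, so that $p^{2^{\lambda L}}\leq p\cdot p^{\lambda L\ln 2}$. Once $p^{\lambda\ln 2}\leq 1/\beta$ (an explicit condition on $p_0(\beta,\lambda)$), the tail is at most $p\sum_L(L+1)^{r-1}2^{-L}=O_r(p)$. Hence $\bE|E_{\mathrm{run}}|_\beta\leq c'_r\,p\,|C|$. I would also take $p_0$ small enough that the bound can be written with the needed absolute constant; if the constant cannot be absorbed cleanly, I would instead replace $p$ by a slightly smaller $p_0$ so that $c'_r p\leq p/2$ effectively, for instance by running the argument with $p'=p^2$-type slack.

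Markov's inequality then gives $\Pr[|E_{\mathrm{run}}|_\beta\geq\lambda|C|]\leq c'_r p/\lambda$. Making this at most $p/(2\lambda)$ requires shrinking $p_0$, which is fine since the stated bound is $p/\lambda$. The clean way is to note that for $p\leq p_0$ small, the decay in the weights gives $c'_r p$ replaced by $p\cdot(c'_r p^{1/2})\leq p/2$: split $p^{2^{\lambda L}}\leq p^{1/2}\cdot p^{2^{\lambda L}-1/2}$ so that every term carries an extra $p^{1/2}$ factor. Combining with events 1 and 2 by a union bound yields total probability at most $p/\lambda$ once $\bar{\ell}\geq\bar{\ell}_0$. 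I expect the only real obstacle to be this constant bookkeeping in the Markov step; the rest is reused from \Cref{lem:lsfault}.
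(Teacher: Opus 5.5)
Your decomposition into the two cluster events and the weight event agrees with the paper, as does your treatment of the cluster events via the argument of \Cref{lem:lsfault}. The gap is in the Markov step, which as written does not reach the stated bound $p/\lambda$.

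Your level-by-level counting produces $\bE|E_{\mathrm{run}}|_\beta\leq c'_r\,p\,|C|$ with a constant $c'_r$ that you never control, and neither proposed repair removes it.
\begin{itemize}
\item Shrinking $p_0$ cannot help. Both $c'_r p/\lambda$ and the target $p/\lambda$ are linear in $p$, and there is no ``$p^2$-type slack'' available, since the hypothesis and the conclusion are stated in the same $p$.
\item The $p^{1/2}$ splitting is false. For a level-$0$ element the hypothesis gives only $\Pr[a\in E_{\mathrm{run}}]\leq p^{2^0}=p$, with no extra decay. Level-$0$ elements make up a constant fraction (about $3^{-r}$) of $C$. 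For example, independent errors of probability exactly $p$ on the level-$0$ elements satisfy the hypothesis and give $\bE|E_{\mathrm{run}}|_\beta=\Theta(p|C|)$, not $O(p^{3/2}|C|)$.
\end{itemize}

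The fix is to bound each element's contribution individually instead of counting by level. You already derive the key inequality; the constant $c'_r$ comes entirely from replacing the exact identity $\sum_L|\{a:\Lev(a)=L\}|=|C|$ by a lossy per-level bound. Since $2^{\lambda L}-1\geq\lambda L\ln 2$, you have $\beta^{L}p^{2^{\lambda L}}\leq p\,(\beta p^{\lambda\ln 2})^{L}\leq p$ for every $L\geq 0$ once $p^{\lambda\ln 2}\leq 1/\beta$. Hence $\bE|E_{\mathrm{run}}|_\beta\leq p|C|$ with constant exactly $1$, and Markov gives $p|C|/\omega\leq p/\lambda$. This is the paper's proof.

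You may also want room in the union bound for the cluster events, which the paper does not budget for explicitly. For this:
\begin{itemize}
\item Note that $|V_{\bar{\ell}}|\leq|C(\epsilon,\bar{\ell})|/2$ for $\bar{\ell}\geq 1$, so at most a $2^{-r}\leq 1/2$ fraction of $C$ lies at level $0$ (for $r\geq 1$).
\item Each element at level $L\geq 1$ contributes at most $p\,\beta p^{\lambda\ln 2}\leq p/4$ once $\beta p^{\lambda\ln 2}\leq 1/4$.
\item Together these give $\bE|E_{\mathrm{run}}|_\beta\leq\frac{3}{4}p|C|$, so the Markov bound becomes $\frac{3p}{4\lambda}$.
\item The remaining $\frac{p}{4\lambda}$ absorbs the $2^{-2^{\Omega(\bar{\ell})}}$ cluster terms once $\bar{\ell}\geq\bar{\ell}_0(r_X,r_Z,\lambda,p,\beta)$.
\end{itemize}
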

\begin{proof}
  As in the proof of \Cref{lem:lsfault}, it suffices to bound the probability that $E_{\mathrm{run}}$ is not $\cE_{\mathrm{run},Z}^{\cC}(\eta,\gamma)$-avoiding, as well as the probability that $E_{\mathrm{run}}$ is not $2^C|_{\beta}^{\geq\omega}$-avoiding. The proof of \Cref{lem:lsfault} already bounds the former assuming $\bar{\ell}$ is at least some sufficiently large constant $\bar{\ell}_0$, so it remains to bound the latter. For this purpose, we will simply apply Markov's inequality. Specifically, by definition we have
  \begin{align*}
    \bE[|E_{\mathrm{run}}|_\beta]
    &= \sum_{a\in C}Pr[a\in E_{\mathrm{run}}]\cdot\beta^{\Lev(a)} \leq \sum_{a\in C} p^{2^{\lambda\cdot\Lev(a)}}\cdot\beta^{\Lev(a)}.
  \end{align*}
  Then assuming $p>0$ is sufficiently small relative to $\beta$, we have $p^{2^{\lambda\cdot\Lev(a)}}\cdot\beta^{\Lev(a)}\leq p$, so
  \begin{equation*}
    \bE[|E_{\mathrm{run}}|_\beta] \leq p\cdot|C|.
  \end{equation*}
  Therefore
  \begin{align*}
    \Pr[|E_{\mathrm{run}}|_\beta\geq\omega]
    &\leq \frac{p\cdot|C|}{\omega} \leq \frac{p}{\lambda},
  \end{align*}
  as desired.
\end{proof}

\subsection{Core Gadgets}
\label{sec:coregad}
In this section, we present our fault-tolerant state preparation, logical measurement, and ejection gadgets for our codes from \Cref{sec:quantumprod}. We view these gadgets as our ``core'' gadgets because we compose them, along with the basic gadgets in \Cref{sec:basicgad} below, in a black-box manner to construct our error-correction and injection gadgets.

Specifically, by running our logical $\ket{0}$ and $\ket{+}$ state preparation gadgets followed by transversal $\gCNOT$ gates, we can construct logical Bell pairs across two code blocks. Then given as input an arbitrary code state with an appropriately bounded corruption, we can perform transversal $\gCNOT$ gates followed by our logical measurement gadget on one half of the Bell pairs, in order to teleport the input state into the other half of the Bell pairs. This teleportation into a fresh code block naturally performs error correction, as we completely discard the input code block.

Meanwhile, to inject a given set of bare physical qubits into our code, we again begin by preparing logical Bell pairs across two code blocks. However, we now eject one half of the Bell pairs out of the code block into bare physical form. Using these bare physical qubits, we can then teleport the input qubits into the remaining code block containing the other half of the Bell pairs.

\subsubsection{State Preparation}
\label{sec:stateprep}
In this section, we present our state-preparation gadget for the codes in \Cref{sec:quantumprod}. This gadget shares similarities with the state preparation gadget in \cite{golowich_constant-overhead_2025}, though here we use our specialized decoder from \Cref{sec:decoder}.

\begin{lemma}
  \label{lem:stateprep}
  For $r_X,r_Z,\bar{\ell}\in\bN$ with $r_Z\geq 2$, let $r=r_X+r_Z$, define $\epsilon=\epsilon(r)$ as in \Cref{lem:ssflip}, and define $\mu=\mu(\epsilon)$ and $\Delta=\Delta(\epsilon)$ as in \Cref{lem:lossless}. Define $\cC^*=\cC^*(\epsilon,\bar{\ell},r_X,r_Z)$, $\cM^*=\cM^*(\epsilon,\bar{\ell},r_X,r_Z)$, and $\Enc_{\cC}:\cM^*\rightarrow\cC^*$ as in \Cref{def:prodcode}. Let $Q=(Q_X,Q_Z)$ be the $[[n,k]]$ code at level $r_X$ of $\cC^*$.

  Let $\beta\geq 2\Delta$, and for arbitrary positive
  \begin{equation}
    \label{eq:sprunbounds}
    \eta_{\mathrm{run}} \leq \mu 2^{\bar{\ell}/2}, \hspace{1em} \gamma_{\mathrm{run}} \leq \frac{1}{2^{12}r^3\Delta^5}, \hspace{1em} \omega_{\mathrm{run}} \leq \frac{\beta^{\bar{\ell}/2}}{2^9r^3\beta^2\Delta^4},
  \end{equation}
  let
  \begin{equation}
    \label{eq:spoutbounds}
    \eta_{\mathrm{out}} = \eta_{\mathrm{run}}, \hspace{1em} \gamma_{\mathrm{out}} = 2^{14}r^3\Delta^5\cdot\gamma_{\mathrm{run}}, \hspace{1em} \omega_{\mathrm{out}} = 2^{12}r^3\beta^2\Delta^4\cdot\omega_{\mathrm{run}},
  \end{equation}
  and let
  \begin{align*}
    \cE_{\mathrm{run}} &= \cE_{\mathrm{run}}^{\cC}(\eta_{\mathrm{run}},\gamma_{\mathrm{run}},\omega_{\mathrm{run}};\beta), \\
    \cE_{\mathrm{out}} &= \cE^{\cC}(\eta_{\mathrm{out}},\gamma_{\mathrm{out}},\omega_{\mathrm{out}};\beta)
  \end{align*}
  be the families of bad sets defined in \Cref{def:badfault}, \Cref{def:badfams}, respectively. Define the decorated code
  \begin{equation*}
    D_{\mathrm{out}} = (Q,\; \Enc_{\cC}^{r_X},\; \cE_{\mathrm{out}}).
  \end{equation*}

  Then there exists a fault-tolerant gadget $\bfG_X=(\cR_X,\cE_{\mathrm{run}}^{\sqcup T},\emptyset,D_{\mathrm{out}},\ps)$ for the channel $\bar{O}_X:\bC\rightarrow\bC^{2^k\times 2^k}$ given by $\bar{O}_X(1)=\ket{+^k}\bra{+^k}$, where $\cR_X$ is a quantum circuit using quantum space $\Qu{N}=C^{r_X}\sqcup C^{r_X+1}$ (so $|\Qu{N}|\leq 4r\Delta n$), time $T=r(2\Delta+2)+6\leq 8r\Delta$, and gate set $\cG_X=\{\gInitX,\gInitZ,\gTerm,\gCNOT,\gMZ,\gCt{\gX},\gCF_*\}$.

  Similarly, there exists a fault-tolerant gadget $\bfG_Z=(\cR_Z,\cE_{\mathrm{run}}^{\sqcup T},\emptyset,D_{\mathrm{out}},\ps)$ for the channel $\bar{O}_Z:\bC\rightarrow\bC^{2^k\times 2^k}$ given by $\bar{O}_Z(1)=\ket{0^k}\bra{0^k}$, where $\cR_Z$ is a quantum circuit using quantum space $\Qu{N}=C_{r_X}\sqcup C_{r_X-1}$ (so $|\Qu{N}|\leq 4r\Delta n$), time $T=r(2\Delta+2)+6\leq 8r\Delta$, and gate set $\cG_Z=\{\gInitX,\gInitZ,\gTerm,\gCNOT,\gMX,\gCt{\gZ},\gCF_*\}$.
\end{lemma}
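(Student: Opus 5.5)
We construct only $\bfG_X$; the $\gZ$-basis gadget $\bfG_Z$ then follows by \Cref{remark:encdual}, which says that conjugating by transversal Hadamards exchanges $\cC^*(\epsilon,\bar{\ell},r_X,r_Z)$ with $\cC^*(\epsilon,\bar{\ell},r_Z,r_X)$ together with their encoding maps. (The hypothesis on $r_Z$ is then used in the dual form; we note this below.)

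\textbf{Circuit $\cR_X$.} The circuit proceeds in four stages.
\begin{enumerate}
\item Initialize every qubit of $C^{r_X}$ in $\ket{+}$ with $\gInitX$. The resulting state is stabilized by all $\gX$-operators, so it equals $\sum_{x}\ket{x}$. Projecting it onto the $\gZ$-stabilizer eigenspaces with syndrome $s$ yields, after a Pauli $\gX^{a}$ correction with $\delta^{\cC}(a)=s$, the uniform superposition over $Z^{r_X}(\cC)$. This is exactly the logical $\ket{+^k}$, by \Cref{lem:prodenc}.
\item Initialize the ancillas $C^{r_X+1}$ in $\ket{0}$. Measure each $\gZ$-stabilizer $\gZ^{\delta^{\cC}(\1_c)}$ by $\gCNOT$s into its ancilla. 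Ordering the $\gCNOT$s by a proper edge-colouring of the bipartite incidence graph gives depth $\le r(2\Delta+2)$; adding a constant number of layers for initialization, measurement with $\gMZ$, and classical processing gives the stated $T$.
\item Compute $\tilde a=\FnDecode{$s;\cC^*$}$ with a $\gCF$ gate (\Cref{alg:decode}).
\item Apply $\gCt{\gX}^{\tilde a}$ and terminate the ancillas.
\end{enumerate}
Let $\ps$ be the set of all syndrome measurement gates.

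\textbf{Error bookkeeping.} By \Cref{lem:paulift} it suffices to treat Pauli faults with support in a $\cE_{\mathrm{run}}$-avoiding set $E_{\mathrm{run}}$, together with a fixed postselection $\xi$. We treat the two error types separately.

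\emph{$\gZ$ faults on data qubits.} These are harmless up to propagation, since the data starts in $\ket{+}$. $\gZ$ faults on the ancillas can flip measured bits, which produces syndrome errors. Hence a faulty syndrome can be written as $s=\delta^{\cC}(e_0)+e_1$. Here $e_0$ is the accumulated $\gX$-error on data qubits up to the moment of measurement. The term $e_1\in\cC^{r_X+1}$ collects ancilla flips and data $\gX$ faults that propagate into ancillas, so $\supp(e_1)$ lies in the set $E'$ of \Cref{lem:redpropfault}.

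\emph{The meta-check step.} Because $r_Z\ge 2$, the level $r_X+1<r$, so $\delta^{\cC}$ on $\cC^{r_X+1}$ acts as a meta-check. We replace $e_1$ by its reduction $\Rdn(e_1)$. Since $\Rdn(e_1)=e_1+\delta^{\cC}(\Rd(e_1))$, this changes $s$ only by absorbing $\delta^{\cC}(\Rd(e_1))$ into $e_0$. By \Cref{lem:redgood}, $\Rd(e_1)$ is supported away from $V_Z$, so absorbing it does not affect bad-set avoidance. \Cref{lem:redpropfault} then shows that $\supp(\Rdn(e_1))$ lies in a $\cE_Z^{\cC}(\eta',\gamma',\omega')$-avoiding set that depends only on $E_{\mathrm{run}}$.

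\textbf{Decoding in a virtual code state.} We now view $s$ as the exact syndrome of a virtual code state one level up. Since $s\in B^{r_X+1}(\cC)$ by construction, we have $s=\delta^{\cC}(\tilde e)$ with $\tilde e=e_0'+g$, where $e_0'$ is the reduced data error (controlled by \Cref{lem:redfault}) and $g$ satisfies $\delta^{\cC}(g)=\Rdn(e_1)$. Obtaining $g$ with controlled support is the main obstacle. Our plan is:
\begin{enumerate}
\item Apply \Cref{cor:decode} at level $r_X+1$ to the meta-syndrome $\delta^{\cC}(\Rdn(e_1))$. The decoded output differs from $\Rdn(e_1)$ by the coboundary of a bad-set-avoiding $(r_X)$-cochain $h$. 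This cochain $h$ supplies $g$, which acts as a harmless stabilizer-equivalent shift.
\item Apply \Cref{cor:decode} a second time at level $r_X$ to $s$, using the set $\tilde E$ formed from $E_{\mathrm{run}}$ and $\xi$. This gives $\tilde a=\tilde e+\delta^{\cC}(\tilde a^{r_X-1})$ with $\supp(\tilde a^{r_X-1})$ inside a fixed avoiding set.
\end{enumerate}

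\textbf{Residual error.} The final residual $\gX$ error is
\[
e_0+\tilde a+(\text{faults after measurement}) \equiv g+\delta^{\cC}(\cdots)+(\text{late faults}).
\]
Its reduction lies in a union of avoiding sets, and unioning these multiplies the density and weight thresholds by the constants in \Cref{eq:spoutbounds}. The residual $\gZ$ error on the data consists only of faults on data qubits, each supported within $E_{\mathrm{run}}$. Its reduction is controlled by the dual of \Cref{lem:redfault}.

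All output sets are determined by $E_{\mathrm{run}}$ and $\xi$ alone. This holds because \Cref{cor:decode} is applied to sets rather than to individual errors, which gives the fault-tolerant (indeed refreshing) property. The constant chasing reduces to checking that the output thresholds of \Cref{eq:spoutbounds} are satisfied under the constraints \Cref{eq:sprunbounds}.
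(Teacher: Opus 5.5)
There is a genuine gap in the correction step. Your circuit runs the decoder of \Cref{alg:decode} at level $r_X$ directly on the measured string $s$ and then applies $\gX^{\tilde a}$. Your analysis then asserts that $s\in B^{r_X+1}(\cC)$ ``by construction,'' so that some $g$ with $\delta^{\cC}(g)=\Rdn(e_1)$ exists. Neither holds.

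The measured string is $s=\delta^{\cC}(y)+e_1$, with $e_1$ the measurement error. Since $H^{r_X+1}(\cC)=0$ by \Cref{lem:prodenc}, $s$ (equivalently $e_1$, or $\Rdn(e_1)$) is a coboundary iff $\delta^{\cC}(e_1)=0$. A generic measurement error violates exactly this, which is why the meta-check carries information at all. So in general neither a $\tilde a$ with $\delta^{\cC}(\tilde a)=s$ nor such a $g$ exists.

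Even when $s$ is a coboundary, its preimages form a coset $y_0+Z^{r_X}(\cC)$. Here $y_0$ comes from the $\ket{+}^{\otimes C^{r_X}}$ initialization and an arbitrary postselection, so it is generically of high weight. There is no low-weight ``accumulated data error $e_0$,'' since $\gX$ faults on the $\ket{+}$ data are simply absorbed. Thus the hypothesis of \Cref{lem:decode} and \Cref{cor:decode} at level $r_X$ (error supported in an $\cE_Z^{\cC}$-avoiding set) fails. The small-set-flip decoder may then return FAIL, and your second application of \Cref{cor:decode} has no valid input.

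The missing idea is that only level $r_X+1$ needs decoding. Run the decoder on the meta-syndrome $\delta^{\cC}(s)=\delta^{\cC}(\Rdn(e_1))$. By \Cref{lem:redpropfault} and \Cref{cor:decode}, its output is $\tilde a^{r_X+1}=\Rdn(e_1)+\delta^{\cC}(\tilde b)$, with $\tilde b$ supported in an avoiding set $\tilde E_X\subseteq V_Z$ determined by $E_{\mathrm{run}}$ and $s$. Then $s+\tilde a^{r_X+1}\in B^{r_X+1}(\cC)$, and \emph{any} $x$ with $\delta^{\cC}(x)=s+\tilde a^{r_X+1}$ works, found by Gaussian elimination rather than decoding. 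This is because the data is a uniform superposition over a full coset of $Z^{r_X}(\cC)$, which is the observation you yourself make in stage 1. The residual $\gX$ error is then $\tilde b+\Rd(e_1)$ plus late faults, and $\Rd(e_1)$ is harmless by \Cref{lem:redgood}.

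Your step 1 is this meta-syndrome decoding, but your circuit never performs it. You also try to use its output as a preimage $g$ instead of as a correction to $s$.

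Separately, your error bookkeeping swaps $\gX$ and $\gZ$. The $\gMZ$ outcomes on ancillas are flipped by $\gX$ faults, either on the ancillas or on data propagated forward through the $\gCNOT$s. By contrast, $\gZ$ faults on ancillas propagate backward through the $\gCNOT$s onto the data. Hence the residual $\gZ$ error is not ``only faults on data qubits.'' It must be controlled with \Cref{lem:redpropfault} applied to the dual complex, not with \Cref{lem:redfault}.
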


To prove \Cref{lem:stateprep}, we follow prior works (including \cite{golowich_constant-overhead_2025}) in using the following well-known result providing edge-coloring on bounded-degree graphs, in order to obtain a bounded-depth circuit that implements syndrome extraction for our codes.

\begin{lemma}[Vizing's theorem (e.g.~\cite{misra_constructive_1992})]
  \label{lem:vizing}
  For every graph $G=(V,E)$ of maximum degree $\Delta$, there exists a $O(|V|\cdot|E|)$-time algorithm that computes a coloring of the edges with $\Delta+1$ colors, such that no two edges sharing a vertex have the same color.
\end{lemma}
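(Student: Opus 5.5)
The plan is to follow the constructive Misra--Gries argument. We maintain a partial proper edge coloring with the palette $[\Delta+1]$, and we extend it one uncolored edge at a time. Each extension will cost $O(|V|)$ time, so the total running time is $O(|V|\cdot|E|)$. Since every vertex has degree at most $\Delta$ and there are $\Delta+1$ colors, every vertex always has at least one \emph{free} color, i.e.~a color not used on its incident colored edges. We keep, for each vertex, an array recording which colors are present at it. This lets us find a free color at any vertex in $O(\Delta)\le O(|V|)$ time and update the array in $O(1)$ time per recoloring.

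To color a new edge $(x,f_0)$, we build a \emph{maximal fan} at $x$. This is a sequence of distinct neighbors $f_0,f_1,\dots,f_k$ of $x$ such that $(x,f_0)$ is uncolored and, for each $i\ge1$, the edge $(x,f_i)$ is colored with a color that is free at $f_{i-1}$. We extend the sequence until no further neighbor qualifies. Let $c$ be a color free at $x$ and $d$ a color free at $f_k$. Consider the maximal path $P$ starting at $x$ whose edges alternate between $d$ and $c$ (the $cd$-path). We invert $P$, swapping $c$ and $d$ on it; this keeps the coloring proper and makes $d$ free at $x$.

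The next step is to find the right place to stop rotating the fan. Choose the first $w=f_j$ in the fan such that $d$ is free at $w$ and the prefix $f_0,\dots,f_j$ is still a valid fan after the inversion. We then \emph{rotate} the fan prefix: give $(x,f_i)$ the old color of $(x,f_{i+1})$ for $i<j$, and give $(x,f_j)$ the color $d$. This produces a proper coloring with one more colored edge.

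The fan has at most $\Delta$ elements, and the path has at most $|V|$ vertices. So each step is $O(|V|)$ given the per-vertex color-presence arrays, which yields the claimed $O(|V|\cdot|E|)$ bound.

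The main obstacle is the correctness of choosing $w$ after the $cd$-inversion. One must verify that such a $j$ exists: either $f_k$ still works, or the inversion affected the fan at some $f_i$, and then an earlier index works. For this we use a case analysis. If $d$ was already free at $x$ (so $P$ is trivial), take $j=k$. Otherwise, if some fan edge $(x,f_{i+1})$ has color $d$, the inversion alters the fan there. In that case the path $P$ either ends at $f_i$, where $d$ is free, so $w=f_i$ is valid with the prefix intact, or it does not, in which case $f_k$ remains a valid endpoint. The rest of the argument is bookkeeping.
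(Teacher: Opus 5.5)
Your route is the Misra--Gries argument that the paper cites, and everything up to the choice of $w$ is fine. The case analysis meant to show that a valid $w$ exists, however, has its two subcases reversed, and both conclusions as you state them are false.

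\textbf{The existence of $w$.} Suppose $d$ is not free at $x$. By maximality of the fan, the unique $d$-edge at $x$ is $(x,f_{i+1})$ with $0\le i\le k-2$; you need to invoke this explicitly, since a $d$-edge to a vertex outside the fan would extend it, as $d$ is free at $f_k$. In particular $d$ is free at $f_i$. Since $d$ is free at both $f_i$ and $f_k$, each can lie on $P$ only as its far endpoint, so at most one of them does.

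\emph{Case $P$ ends at $f_i$.} The last edge of $P$ at $f_i$ had color $c$ and is recolored $d$. So $d$ is \emph{not} free at $f_i$ after the inversion, and $w=f_i$ fails; take for instance $P=x,f_{i+1},f_i$. In this case, though, $c$ has become free at $f_i$, so $(x,f_{i+1})$, now colored $c$, still satisfies the fan condition. Every other fan edge has a color outside $\{c,d\}$, because $c$ was free at $x$ and $(x,f_{i+1})$ was its only $d$-edge, so those conditions are unaffected. Also $f_k\notin P$, so $d$ is still free at $f_k$. Hence $w=f_k$ works.

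\emph{Case $P$ does not end at $f_i$.} Here $f_k$ can fail in two ways. If $P=x,f_{i+1},f_k$, then $d$ becomes used at $f_k$. If $f_i$ already has a $c$-edge, then $(x,f_{i+1})$, now colored $c$, breaks the fan condition at $f_i$, and rotating the whole fan puts two $c$-edges at $f_i$. Instead, $f_i\notin P$, so $d$ is still free at $f_i$. The prefix $f_0,\dots,f_i$ only involves colors of $(x,f_{l+1})$ for $l<i$, which lie outside $\{c,d\}$. So $w=f_i$ works.

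Swapping your two conclusions, with these justifications, repairs the proof.

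\textbf{Running time.} Building a fan that is maximal in your sense (extend until no neighbor qualifies) costs $O(\Delta^2)$ per edge if done naively, which is not $O(|V|)$ in general. Boolean presence arrays also do not let you walk $P$ in $O(1)$ per step. The argument above only uses two facts: $d$ is free at $f_k$, and the $d$-edge at $x$, if any, lands in the fan. So you can grow the fan greedily instead: take a free color $d_i$ at $f_i$, and stop when $d_i$ is free at $x$ or the $d_i$-edge at $x$ leads to a vertex already in the fan, setting $d=d_k$. With per-vertex free-color lists and color-to-neighbor tables this costs $O(\Delta)$, and inverting $P$ costs $O(|V|)$. For the paper's application, where $\Delta$ is constant, any polynomial bound would suffice anyway.
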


% TODO: should there also be a state prep statement for the case where we're going to eject, so that $\omega$ is some small constant times the number of qubits? or perhaps that will follow from current version of the statement?

% We apply Vizing's theorem in \Cref{alg:stateprep}, which gives our state preparation circuit $\cR$ in \Cref{lem:stateprep}.

\begin{algorithm}
  \caption{\label{alg:stateprep} State preparation circuit $\cR_X$ in \Cref{lem:stateprep} for the code at level $r_X$ of $\cC^*=\cC^*(\epsilon,\bar{\ell},r_X,r_Z)$. By definition, the graph $G$ we define in \Cref{li:spgraph} has bipartite adjacency matrix given by the coboundary map $\delta^{\cC}_{r_X}$. The classical computations in \Cref{li:spgraph,li:spdecode,li:spge} are performed using classical function gates (see \Cref{def:gates}). The inner for loop in \Cref{li:spforin} can be run in a single timestep by the definition of the edge coloring given by \Cref{lem:vizing}.}
  \SetKwInOut{Input}{Input}
  \SetKwInOut{Output}{Output}
  \SetKwProg{Fn}{Function}{:}{}

  \SetKwFunction{FnStatePrep}{StatePrep}

  \Input{None}
  \Output{$\Enc_{\cC}^{r_X}(\ket{+^k}\bra{+^k})$}
  \Fn{\FnStatePrep{$\cC^*$}}{
    Run $\gInitX^{\otimes C^{r_X}}$ to initialize a block of $n=|C^{r_X}|$ data qubits to $\ket{+^{C^{r_X}}}\bra{+^{C^{r_X}}}$ \\
    Run $\gInitZ^{\otimes C^{r_X+1}}$ to initialize a block of $|C^{r_X+1}|$ ancilla qubits to $\ket{0^{C^{r_X+1}}}\bra{0^{C^{r_X+1}}}$ \\
    Letting $G=(C^{r_X}\sqcup C^{r_X+1},E(G))$ be the bipartite graph with an edge between every $c\in C^{r_X},c'\in C^{r_X+1}$ satisfying $c\triangleleft c'$, compute an edge coloring $\chi:E(G)\rightarrow[r(2\Delta+2)]$ using \Cref{lem:vizing} \\ \label{li:spgraph}
    \For{$i=1,\dots,r(2\Delta+2)$}{ \label{li:spforout}
      \For{$(c,c')\in E(G)$ with $\chi(c,c')=i$}{ \label{li:spforin}
        Apply $\gCNOT$ to data qubit $c$ and ancilla qubit $c'$ \label{li:spcnot}
      }
    }
    Run measurements $\gMZ^{\otimes C^{r_X+1}}$ on the ancilla block, and let $s\in\bF_2^{C^{r_X+1}}$ be the outcome \\ \label{li:spmeasure}
    Let $\tilde{a}^{r_X+1}\gets\FnDecode{$\delta^{\cC}(s);\cC^*$}$ be the output from running \Cref{alg:decode} on $\delta^{\cC}(s)$ \\ \label{li:spdecode}
    Run Gaussian elimination to find some $x\in\cC^{r_X}$ with $\delta^{\cC}(x)=s+\tilde{a}^{r_X+1}$ \\ \label{li:spge}
    Run classically-controlled gates $\gCt{\gX}^{\otimes C^{r_X}}$ to apply $\gX^x$ to the data qubits \\ \label{li:spcorr}
    \Return{the data qubits} \label{li:spreturn}
  }
\end{algorithm}

\begin{proof}[Proof of \Cref{lem:stateprep}]
  We will present and analyze $\bfG_X$. The construction and analysis of $\bfG_Z$ is exactly analogous, except that the argument is applied to the dual complex ${\cC^{\vee}}^*=\cC_*\cong\cC^*(\epsilon,\bar{\ell},r_Z,r_X)$. Hence in this dualized argument, we use qubits labeled by $C_{r_X}\sqcup C_{r_X-1}$ instead of $C^{r_X}\sqcup C^{r_X+1}$, and all quantum gates are conjugated by Hadamards, so all instances of $\gX$ and $\gZ$ are swapped, and all $\gCNOT$ gates are applied with control and target qubits reversed. We omit the details to avoid redundancy.
  
  Our desired state-preparation circuit $\cR_X$ is given in \Cref{alg:stateprep}. This circuit by definition uses quantum space $\Qu{N}=C^{r_X}\sqcup C^{r_X+1}$, so that $|\Qu{N}|\leq n+r(2\Delta+1)n\leq 4r\Delta n$, and uses time $T=r(2\Delta+2)+6\leq 8r\Delta$ and gate set $\cG_X$. We let the set $\ps$ of postselectable gates contain all measurement gates in $\cR$. These measurement gates are all performed in \Cref{li:spmeasure} of \Cref{alg:stateprep}, and are labeled by the set $\ps=C^{r_X+1}$.

  It remains to show that $\bfG_X$ forms a fault-tolerant gadget for the channel $\bar{O}_X:1\mapsto\ket{+^k}\bra{+^k}$. Fix a finite set $\Rf{K}$, a $\cE_{\mathrm{run}}^{\sqcup T}$-avoiding set $E_{\mathrm{run}}$, and a postselection\footnote{Here we denote our postselection $s$ because it represents the syndrome measurement.} $s\in\bF_2^{\ps}=\bF_2^{C^{r_X+1}}$. By \Cref{lem:paulift}, it suffices to show that there exists a $\cE_{\mathrm{out}}$-avoiding pair of sets $E_{\mathrm{out}}$ such that for every $\comp{E_{\mathrm{run}}}$-avoiding Pauli fault $\cF$ and reweighting $\zeta$ for $\cR_X$ then $\cR_X[\cF,\zeta,s](1)$ is a $\comp{E_{\mathrm{out}}}$-deviation of $\Enc_{\cC}^{r_X}(\ket{+^k}\bra{+^k})$. Because $\cR_X$ has no input qubits and $\cF$ is a Pauli fault, the reference system $\Rf{K}$ never becomes entangled with the physical qubits in $\Qu{N}$, so $\Rf{K}$ plays no role and can be ignored for the remainder of the proof.

  In the absence of a fault, then just prior to \Cref{li:spforout} in \Cref{alg:stateprep}, the qubits are in the state $(\ket{+}\bra{+})^{\otimes C^{r_X}}\otimes(\ket{0}\bra{0})^{\otimes C^{r_X+1}}$. Without a fault, the for loop in \Cref{li:spforout} then applies a $\gCNOT$ gate to every $c\in C^{r_X},c'\in C^{r_X+1}$ for which $c\triangleleft c'$, or equivalently, for which $(\delta^{\cC}_{r_X})_{c',c}=1$. Thus this for loop simply applies the map $\ket{y,z}\mapsto\ket{y,z+\delta^{\cC}_{r_X}(y)}$, so without a fault, the state just prior to the measurements in \Cref{li:spmeasure} of \Cref{alg:stateprep} is
  \begin{equation*}
    \sum_{y,y'\in C^{r_X}}\ket{y}\bra{y'}\otimes\ket{\delta^{\cC}_{r_X}(y)}\bra{\delta^{\cC}_{r_X}(y)}.
  \end{equation*}

  However, in the presence of our $\cE_{\mathrm{run}}^{\sqcup T}$-avoiding fault $\cF$, then in each of the $\leq T\leq 8r\Delta$ timesteps prior to the measurements in \Cref{li:spmeasure}, our qubits in $C^{r_X}\sqcup C^{r_X+1}$ can experience an arbitrary $\comp{E_{\mathrm{run}}}$-avoiding Pauli error. Furthermore, for every pair of qubits $c\in C^{r_X},c'\in C^{r_X+1}$ with $c\triangleleft c'$, the associated $\gCNOT$ gate in \Cref{li:spcnot} will propagate a Pauli $\gX$ error from $c$ to $c'$, and will propagate a Pauli $\gZ$ error from $c'$ to $c$.
  Thus define $E\subseteq C^{r_X}\sqcup C^{r_X+1}$ to be the set
  \begin{equation*}
    E = \bigcup_{t\in[T]}(E_{\mathrm{run}}\cap(N\times\{t\}))
  \end{equation*}
  of all qubits that may lie in the support of the fault across all timesteps, so that $E$ is $\cE_{\mathrm{run}}^{\cC}(\eta_{\mathrm{run}},T\gamma_{\mathrm{run}},T\omega_{\mathrm{run}};\beta)$-avoiding, and let
  \begin{align*}
    E'_X &= E \cup \{c'\in C^{r_X+1}:\exists c\in E\cap C^{r_X}\text{ with }c\triangleleft c'\} \\
    E'_Z &= E \cup \{c\in C^{r_X}:\exists c'\in E\cap C^{r_X+1}\text{ with }c\triangleleft c'\}.
  \end{align*}
  Then there exist $(e_X,f_X),(e_X',f_X')\in\cC^{r_X}\times\cC^{r_X+1}$ supported inside $E'_X$ and $(e_Z,f_Z),(e_Z',f_Z')\in\cC^{r_X}\times\cC^{r_X+1}$ supported inside $E'_Z$ such that the state of the qubits in $C^{r_X}\sqcup C^{r_X+1}$ just prior to the measurements in \Cref{li:spmeasure} of \Cref{alg:stateprep} is
  \begin{equation}
    \label{eq:sppremeas}
    \sum_{y,y'\in C^{r_X}}\gZ^{e_Z}\gX^{e_X}\ket{y}\bra{y'}\gX^{e_X'}\gZ^{e_Z'}\otimes\gZ^{f_Z}\gX^{f_X}\ket{\delta^{\cC}_{r_X}(y)}\bra{\delta^{\cC}_{r_X}(y')}\gX^{f_X'}\gZ^{f_Z'}.
  \end{equation}

  Furthermore, by definition $E'_X\cap C^{r_X}=E\cap C^{r_X}$ and $E'_Z\cap C^{r_X+1}=E\cap C^{r_X+1}$ are $\cE_{\mathrm{run}}^{\cC}(\eta_{\mathrm{run}},T\gamma_{\mathrm{run}},T\omega_{\mathrm{run}};\beta)$-avoiding, and hence are $\cE_{\mathrm{run}}^{\cC}(\eta_{\mathrm{run}},\; 8r\Delta\cdot\gamma_{\mathrm{run}},\; 8r\Delta\cdot\omega_{\mathrm{run}};\; \beta)$-avoiding as $T\leq 8r\Delta$.
  Therefore by \Cref{lem:redfault}, the set
  \begin{equation*}
    \bar{E}_X = \Shad^{\times r_X}(E\cap C^{r_X}) \cup (C^{r_X}\setminus(V_Z\cup\cE_{Z,\Rd}^{\cC}))
  \end{equation*}
  is $\cE_Z^{\cC}(\eta_{\mathrm{run}},\; 8r\Delta\cdot\gamma_{\mathrm{run}},\; 8r\Delta\cdot\omega_{\mathrm{run}};\; \beta)$-avoiding. Similarly, applying \Cref{lem:redfault} to the dual complex ${\cC^{\vee}}^*=\cC_*\cong\cC^*(\epsilon,\bar{\ell},r_Z,r_X)$, the set
  \begin{equation*}
    \bar{E}_Z = \Shad^{\times r_Z}(E\cap C^{r_X+1}) \cup (C^{r_X+1}\setminus(V_X\cup\cE_{X,\Rd}^{\cC}))
  \end{equation*}
  is $\cE_X^{\cC}(\eta_{\mathrm{run}},\; 8r\Delta\cdot\gamma_{\mathrm{run}},\; 8r\Delta\cdot\omega_{\mathrm{run}};\; \beta)$-avoiding.
  As a point of notation, here $\Shad^{\times r_X}$ applies $\Shad$ to the first $r_X$ components of the $r$-dimensional product space $C(\epsilon,\bar{\ell},r_X,r_Z)\cong C(\epsilon,\bar{\ell})^{\times r_X}\times C(\epsilon,\bar{\ell})^{\times r_Z}$, while $\Shad^{\times r_Z}$ applies $\Shad$ to the latter $r_Z$ components.
  Meanwhile, by \Cref{lem:redpropfault}, letting
  \begin{align}
    \label{eq:spprimed}
    \begin{split}
      \gamma' &= 16r\Delta^2\cdot T\gamma_{\mathrm{run}}+T\gamma_{\mathrm{run}} \leq 2^8r^2\Delta^3\cdot\gamma_{\mathrm{run}} \\
      \omega' &= 4r\beta\Delta\cdot T\omega_{\mathrm{run}}+T\omega_{\mathrm{run}} \leq 2^6r^2\beta\Delta^2\cdot\omega_{\mathrm{run}},
    \end{split}
  \end{align}
  then the set
  \begin{equation*}
    \bar{E}_X' = \Shad^{\times r_X}(E'_X\cap C^{r_X+1}) \cup (C^{r_X+1}\setminus(V_Z\cup\cE_{Z,\Rd}^{\cC}))
  \end{equation*}
  is $\cE_Z^{\cC}(\eta_{\mathrm{run}},\gamma',\omega';\beta)$-avoiding. Similarly, applying \Cref{lem:redpropfault} to the dual complex ${\cC^{\vee}}^*$, we also have that the set
  \begin{equation*}
    \bar{E}_Z' = \Shad^{\times r_Z}(E'_Z\cap C^{r_X}) \cup (C^{r_X}\setminus(V_X\cup\cE_{X,\Rd}^{\cC}))
  \end{equation*}
  is $\cE_X^{\cC}(\eta_{\mathrm{run}},\gamma',\omega';\beta)$-avoiding.

  Now applying the $s$-postselected measurements in \Cref{li:spmeasure} of \Cref{alg:stateprep} collapses the state in \Cref{eq:sppremeas} to
  \begin{equation}
    \label{eq:sppostmeas}
    \sum_{y,y'} \gZ^{e_Z}\gX^{e_X}\ket{y}\bra{y'}\gX^{e_X'}\gZ^{e_Z'}\otimes\gZ^{f_Z}\ket{s}\bra{s}\gZ^{f_Z'},
    % \sum_{s\in\cC^{r_X+1}} \sum_{y,y'} \gZ^{e_Z}\gX^{e_X}\ket{y}\bra{y'}\gX^{e_X'}\gZ^{e_Z'}\otimes\gZ^{f_Z}\ket{s}\bra{s}\gZ^{f_Z'},
  \end{equation}
  where the second sum above is over all $y,y'\in\cC^{r_X}$ for which $\delta^{\cC}_{r_X}(y)=s+f_X$ and $\delta^{\cC}_{r_X}(y')=s+f_X'$. If no such $y,y'$ exist, then the state collapses to $0$, and hence the algorithm outputs $0$, which is trivially a $\comp{E_{\mathrm{out}}}$-deviation of $\Enc_{\cC}^{r_X}(\ket{+^k}\bra{+^k})$ for every choice of $\comp{E_{\mathrm{out}}}$, as desired. Therefore assume that such $y,y'$ do exist. Equivalently, fixing some $y_0,y_0'\in\cC^{r_X}$ such that $\delta^{\cC}_{r_X}(y_0)=s+f_X$ and $\delta^{\cC}_{r_X}(y_0')=s+f_X'$, then the sum in \Cref{eq:sppostmeas} is over all $y\in y_0+Z^{r_X}(\cC)$ and $y'\in y_0'+Z^{r_X}(\cC)$.
  % Restricting attention to the term in the outer sum in \Cref{eq:sppostmeas} associated to a single $s\in\cC^{r_X+1}$,
  \Cref{li:spdecode} of \Cref{alg:stateprep} then computes $\tilde{a}^{r_X+1}\gets\FnDecode{$\delta^{\cC}_{r_X+1}(s);\cC^*$}$. Because $s=\delta^{\cC}_{r_X}(y_0)+f_X=\delta^{\cC}_{r_X}(y_0')+f_X'$, and because the reduction of a cochain differs from that cochain by a coboundary, we have
  \begin{equation*}
    \delta^{\cC}_{r_X+1}(s) = \delta^{\cC}_{r_X+1}(f_X) = \delta^{\cC}_{r_X+1}(\Rdn(f_X)) = \delta^{\cC}_{r_X+1}(f_X') = \delta^{\cC}_{r_X+1}(\Rdn(f_X')).
  \end{equation*}

  Now by \Cref{lem:redpropfault}, for every $f_X''\in\cC^{r_X+1}$ with $\supp(f_X'')\subseteq E_X'$, the reduction $\Rdn(f_X'')$ must be supported inside the $\cE_Z^{\cC}(\eta_{\mathrm{run}},\gamma',\omega';\beta)$-avoiding set $\bar{E}_X'$.
  % Now define $E_X''\subseteq C^{r_X+1}$ by
  % \begin{equation*}
  %   E_X'' = \bigcup_{f_X''\in\cC^{r_X+1}:\supp(f_X'')\subseteq E_X'}\supp(\Rdn(f_X'')).
  % \end{equation*}
  % Because every $\Rdn(f_X'')$ is reduced, \Cref{lem:redsupp} implies that $E_X''\cap\cE_{Z,\Rd}^{\cC}=\emptyset$. Furthermore, letting $V_Z=(V_{\bar{\ell}})^{r_X}\times C(\epsilon,\bar{\ell},0,r_Z)$, then \Cref{claim:rfsuppred} implies that $E_X''\cap V_Z\subseteq\Shad^{\times r_X}(E_X'\cap C^{r_X+1})$. As we showed above that $\Shad^{\times r_X}(E_X'\cap C^{r_X+1})$ is $\cE_Z^{\cC}(\eta_{\mathrm{run}},\gamma',\omega';\beta)$-avoiding, it follows by \Cref{def:badfams,remark:badred} that $E_X''$ is $\cE_Z^{\cC}(\eta_{\mathrm{run}},\gamma',\omega';\beta)$-avoiding.
  % % Then as $\supp(f_X),\supp(f_X')\subseteq E'_X\cap C^{r_X+1}$ and $\Shad^{\times r_X}(E'_X\cap C^{r_X+1})$ is $\cE_Z^{\cC}(\eta_{\mathrm{run}},\gamma',\omega';\beta)$-avoiding, the supports of the reductions $\Rdn(f_X),\Rdn(f_X')$ are $\cE_Z^{\cC}(\eta_{\mathrm{run}},\gamma',\omega';\beta)$-avoiding (see \Cref{claim:rfsuppred}).
  Therefore because \Cref{eq:sprunbounds,eq:spprimed} together imply
  \begin{align*}
    \eta_{\mathrm{run}} &\leq \mu 2^{\bar{\ell}/2} \\
    \gamma' &\leq 2^8r^2\Delta^3\cdot\gamma_{\mathrm{run}} \leq \frac{1}{16r\Delta^2} \\
    \omega' &\leq 2^6r^2\beta\Delta^2\cdot\omega_{\mathrm{run}} \leq \frac{\beta^{\bar{\ell}/2}}{8r\beta\Delta^2}
  \end{align*}
  so that $\eta_{\mathrm{run}},\gamma',\omega'$ satisfy the constraints in \Cref{eq:bdinerr}, we define
  \begin{align}
    \label{eq:sptilded}
    \begin{split}
      \tilde{\gamma} &= 8r\Delta^2\cdot\gamma' \leq 2^{11}r^3\Delta^5\cdot\gamma_{\mathrm{run}} \\
      \tilde{\omega} &= 8r\beta\Delta^2\cdot\omega' \leq 2^9r^3\beta^2\Delta^4\cdot\omega_{\mathrm{run}}.
    \end{split}
  \end{align}
  Then \Cref{cor:decode} implies that there exists a $\cE_Z^{\cC}(\eta_{\mathrm{run}},\tilde{\gamma},\tilde{\omega};\beta)$-avoiding set $\tilde{E}_X\subseteq C^{r_X}\cap V_Z$ depending only on $\bar{E}_X'$ (which in turn is determined by $E_{\mathrm{run}}$ via the definitions of $E$ and $E_X'$) and on $s$ such that the decoder's output $\tilde{a}^{r_X+1}\in\cC^{r_X+1}$ is of the form
  \begin{equation*}
    \tilde{a}^{r_X+1} = \Rdn(f_X)+\delta^{\cC}_{r_X}(\tilde{b}) = \Rdn(f_X')+\delta^{\cC}_{r_X}(\tilde{b}')
  \end{equation*}
  for some $\tilde{b},\tilde{b}'\in\cC^{r_X}$ that have $\supp(\tilde{b}),\supp(\tilde{b}')\subseteq\tilde{E}_X$ (so that in particular $\tilde{b},\tilde{b}'$ are reduced because they are supported inside $V_Z$).

  It follows that $y_0+\tilde{b}+\Rd(f_X)$ and $y_0'+\tilde{b}'+\Rd(f_X')$ are both valid choices for the value $x\in\cC^{r_X}$ computed in \Cref{li:spge} in \Cref{alg:stateprep}, so $y_0+\tilde{b}+\Rd(f_X)+Z^{r_X}(\cC)$ and $y_0'+\tilde{b}'+\Rd(f_X')+Z^{r_X}(\cC)$ must be the same coset, and whichever value of $x$ the algorithm chooses must lie in this coset. Hence after applying the classically-controlled gates in \Cref{li:spcorr} in \Cref{alg:stateprep} in order to apply $\gX^x$ to the state given by the term of \Cref{eq:sppostmeas} associated to measurement outcome $s$, \Cref{alg:stateprep} returns a state of the form (up to a global phase depending on $s$)
  \begin{align}
    \label{eq:spoutput}
    \begin{split}
      \hspace{1em}&\hspace{-1em} \sum_{y,y'\in Z^{r_X}(\cC)}\gZ^{\tilde{e}_Z}\gX^{\tilde{e}_X+\tilde{b}+\Rd(f_X)}\ket{y}\bra{y'}\gX^{\tilde{e}_X'+\tilde{b}'+\Rd(f_X')}\gZ^{\tilde{e}_Z'} \\
      &= \gZ^{\tilde{e}_Z}\gX^{\tilde{e}_X+\tilde{b}+\Rd(f_X)}\Enc_{\cC}^{r_X}\left(\ket{+^k}\bra{+^k}\right)\gX^{\tilde{e}_X'+\tilde{b}'+\Rd(f_X')}\gZ^{\tilde{e}_Z'}
    \end{split}
  \end{align}
  for some $\tilde{e}_X,\tilde{e}_X'\in\cC^{r_X}$ supported inside $E\cap C^{r_X}=E'_X\cap C^{r_X}$ and some $\tilde{e}_Z,\tilde{e}_Z'\in\cC^{r_X}$ supported inside $E'_Z\cap C^{r_X}$. Specifically, $\tilde{e}_X,\tilde{e}_X',\tilde{e}_Z,\tilde{e}_Z'$ differ from $e_X,e_X',e_Z,e_Z'$ only by the Pauli errors arising from the fault $\cF$ in \Cref{li:spdecode}--\Cref{li:spreturn} in \Cref{alg:stateprep}. The only gates applied during these timesteps are classical function gates (which do not touch qubits) and $\gCt{X}$ gates. Each of these $\gCt{X}$ gates only touches a single qubit, so it cannot propagate Pauli errors between qubits; the reweighting $\zeta$ on these $\gCt{X}$ gates simply introduces a global phase that depends on $x$ (and hence on $s$). Hence during these timesteps, the Pauli errors added on to $e_X,e_X',e_Z,e_Z'$ are simply those occurring in the fault $\cF$ (with no additional propagation), and hence they by definition lie inside $E\cap C^{r_X}\subseteq E'\cap C^{r_X}$.

  By \Cref{lem:redgood}, every $f_X''\in C^{r_X+1}$ has $\supp(\Rd(f_X''))\subseteq C^{r_X}\setminus(V_Z\cup\cE_{Z,\Rd}^{\cC})$. Recall that by \Cref{def:badfams,remark:badred}, every set in $\cE_Z^{\cC}(\eta_{\mathrm{run}},\tilde{\gamma},\tilde{\omega};\beta)$ is a subset of $V_Z\cup\cE_{Z,\Rd}^{\cC}$. Hence because we showed above that $\tilde{E}_X$ is $\cE_Z^{\cC}(\eta_{\mathrm{run}},\tilde{\gamma},\tilde{\omega};\beta)$-avoiding, it follows that $\tilde{E}_X\cup(C^{r_X}\setminus(V_Z\cup\cE_{Z,\Rd}^{\cC}))$ is also $\cE_Z^{\cC}(\eta_{\mathrm{run}},\tilde{\gamma},\tilde{\omega};\beta)$-avoiding.
  We showed above that $\tilde{b},\tilde{b}'$ are supported inside $\tilde{E}_X$, so it follows that $\tilde{b}+\Rd(f_X),\tilde{b}'+\Rd(f_X')$ are supported inside $\tilde{E}_X\cup(C^{r_X}\setminus(V_Z\cup\cE_{Z,\Rd}^{\cC}))$.
  % Hence $\supp(\tilde{b}+\Rd(f_X))\cup\supp(\tilde{b}'+\Rd(f_X'))$ is $\cE_Z^{\cC}(\eta_{\mathrm{run}},2\tilde{\gamma},2\tilde{\omega};\beta)$-avoiding.
  
  Furthermore, because $\tilde{e}_X,\tilde{e}_X'$ are supported inside $E\cap C^{r_X}$, \Cref{lem:redfault} implies that $\Rdn(\tilde{e}_X),\Rdn(\tilde{e}_X')$ are supported inside $\bar{E}_X$. Meanwhile, because $\tilde{e}_Z,\tilde{e}_Z'$ are supported inside $E'_Z\cap C^{r_X}$, applying \Cref{lem:redpropfault} to the dual complex ${\cC^\vee}^*=\cC_*\cong\cC^*(\epsilon,\bar{\ell},r_Z,r_X)$ implies that $\Rdn^\vee(\tilde{e}_Z),\Rdn^\vee(\tilde{e}_Z')$ are supported inside $\bar{E}_Z'$, where $\Rdn^\vee$ denotes the reduction map on this dual complex $\cC^*(\epsilon,\bar{\ell},r_Z,r_X)$.
  % That is, $\Rdn^\vee$ applies the isomorphism ${\cC^\vee}^*\xrightarrow{\sim}\cC^*(\epsilon,\bar{\ell},r_Z,r_X)$, then applies the reduction map on $\cC^*(\epsilon,\bar{\ell},r_Z,r_X)$, and subsequently applies the inverse isomorphism.
  
  Now because the reduction map $\Rdn$ (resp.~$\Rdn^\vee$) simply adds a coboundary (resp.~boundary) of $\cC^*$ to its input, Pauli errors acting on the codespace have the same effect as their reductions, so we can reduce the Pauli error on the output state of \Cref{alg:stateprep} in \Cref{eq:spoutput}:
  \begin{align*}
    \hspace{1em}&\hspace{-1em} \gZ^{\tilde{e}_Z}\gX^{\tilde{e}_X+\tilde{b}+\Rd(f_X)}\Enc_{\cC}^{r_X}\left(\ket{+^k}\bra{+^k}\right)\gX^{\tilde{e}_X'+\tilde{b}'+\Rd(f_X')}\gZ^{\tilde{e}_Z'} \\
    &\propto \gZ^{\Rdn^\vee(\tilde{e}_Z)}\gX^{\Rdn(\tilde{e}_X)+\tilde{b}+\Rd(f_X)}\Enc_{\cC}^{r_X}\left(\ket{+^k}\bra{+^k}\right)\gX^{\Rdn(\tilde{e}_X')+\tilde{b}'+\Rd(f_X')}\gZ^{\Rdn^\vee(\tilde{e}_Z')},
  \end{align*}
  where we recall that $\tilde{b}+\Rd(f_X),\tilde{b}'+\Rd(f_X')$ are already reduced. Thus the RHS above is a $\comp{E_{\mathrm{out}}}$-deviation of $\Enc_{\cC}^{r_X}\left(\ket{+^k}\bra{+^k}\right)$, where
  \begin{equation*}
    E_{\mathrm{out}} = (E_{\mathrm{out},X}=\bar{E}_Z',\; E_{\mathrm{out},Z}=(\tilde{E}_X\cup\bar{E}_X)).
  \end{equation*}
  Recall here that we do not need to explictly include the set $C^{r_X}\setminus(V_Z\cup\cE_{Z,\Rd}^{\cC})$ in the definition of $E_{\mathrm{out},Z}$, as it is already included through the definition of $\bar{E}_X$.
  Because we showed above that $\bar{E}_Z$ is $\cE_X^{\cC}(\eta_{\mathrm{run}},\gamma',\omega';\beta)$-avoiding, that $\tilde{E}_X$ is $\cE_Z^{\cC}(\eta_{\mathrm{run}},\tilde{\gamma},\tilde{\omega};\beta)$-avoiding, and that $\bar{E}_X$ is $\cE_Z^{\cC}(\eta_{\mathrm{run}},\; 8r\Delta\cdot\gamma_{\mathrm{run}},\; 8r\Delta\cdot\omega_{\mathrm{run}};\; \beta)$-avoiding, it follows that $E_{\mathrm{out}}$ is
  \begin{equation*}
    (\cE_X^{\cC}(\eta_{\mathrm{run}},\gamma',\omega';\beta),\; \cE_Z^{\cC}(\eta_{\mathrm{run}},\; \tilde{\gamma}+8r\Delta\gamma_{\mathrm{run}},\; \tilde{\omega}+8r\Delta\omega_{\mathrm{run}};\; \beta))
  \end{equation*}
  avoiding. Thus because \Cref{eq:spoutbounds,eq:spprimed,eq:sptilded} imply that
  \begin{align*}
    \max\{\gamma',\; \tilde{\gamma}+8r\Delta\cdot\gamma_{\mathrm{run}}\} &\leq 2^{14}r^3\Delta^5\cdot\gamma_{\mathrm{run}} = \gamma_{\mathrm{out}} \\
    \max\{\omega',\; \tilde{\omega}+8r\Delta\cdot\omega_{\mathrm{run}}\} &\leq 2^{12}r^3\beta^2\Delta^4\cdot\omega_{\mathrm{run}} = \omega_{\mathrm{out}},
  \end{align*}
  it follows that \Cref{alg:stateprep} outputs a $\comp{E_{\mathrm{out}}}$-deviation of $\Enc_{\cC}^{r_X}\left(\ket{+^k}\bra{+^k}\right)$, where $E_{\mathrm{out}}$ is a $\cE_{\mathrm{out}}=\cE^{\cC}(\eta_{\mathrm{out}},\gamma_{\mathrm{out}},\omega_{\mathrm{out}};\beta)$-avoiding set that depends only on $E_{\mathrm{run}}$ and on $s$, as desired.
\end{proof}

\subsubsection{Logical Measurement}
In this section, we present our logical-measurement gadget for the codes in \Cref{sec:quantumprod}. We use the standard approach of simply measuring all the physical qubits in the desired basis, and then decoding the measurement outcomes. However, like our state-preparation gadget in \Cref{sec:stateprep}, here we use our specialized decoder from \Cref{sec:decoder}.

Our logical measurement gadget takes as input a quantum code state but outputs a classical bit string giving the measurement outcome. Hence there is no output code. Our gadget simply measures all the qubits in the first timestep, and then applies a classical function gate to decode the measurement outcome. Hence there are no active qubits by the time the fault acts following the first timestep (see \Cref{def:fault}), so the fault has no effect on the output. Rather, we only need to deal with errors on the input code state.

\begin{lemma}
  \label{lem:logmeas}
   For $r_X,r_Z,\bar{\ell}\in\bN$, let $r=r_X+r_Z$, define $\epsilon=\epsilon(r)$ as in \Cref{lem:ssflip}, and define $\mu=\mu(\epsilon)$ and $\Delta=\Delta(\epsilon)$ as in \Cref{lem:lossless}. Define $\cC^*=\cC^*(\epsilon,\bar{\ell},r_X,r_Z)$, $\cM^*=\cM^*(\epsilon,\bar{\ell},r_X,r_Z)$, and $\Enc_{\cC}:\cM^*\rightarrow\cC^*$ as in \Cref{def:prodcode}. Let $Q=(Q_X,Q_Z)$ be the $[[n,k]]$ code at level $r_X$ of $\cC^*$.

  Let $\beta\geq 2\Delta$, and for arbitrary positive
  \begin{equation}
    \label{eq:lmrunbounds}
    \eta_{\mathrm{in}} \leq \mu 2^{\bar{\ell}/2}, \hspace{1em} \gamma_{\mathrm{in}} \leq \frac{1}{16r\Delta^2}, \hspace{1em} \omega_{\mathrm{in}} \leq \frac{\beta^{\bar{\ell}/2}}{8r\beta\Delta^2},
  \end{equation}
  let
  \begin{equation*}
    \cE_{\mathrm{in}} = \cE^{\cC}(\eta_{\mathrm{in}},\gamma_{\mathrm{in}},\omega_{\mathrm{in}};\beta)
  \end{equation*}
  be the family of bad sets defined in \Cref{def:badfams}. Define the decorated code
  \begin{equation*}
    D_{\mathrm{in}} = (Q,\; \Enc_{\cC}^{r_X},\; \cE_{\mathrm{in}}).
  \end{equation*}

  Let $\bar{O}_X=\gMX^{\otimes k}:\bC^{2^k\times 2^k}\rightarrow\bC^{2^k\times 2^k}$ be the $\gX$-measurement channel, which has $k$ input qubits (with no input bits) and $k$ output bits (with no output qubits).
  Then there exists a mending fault-tolerant gadget $\bfG_X=(\cR_X,\emptyset,D_{\mathrm{in}},\emptyset,\emptyset)$ for $\bar{O}_X$, where $\cR_X$ is a quantum circuit using quantum space $\Qu{N}=C^{r_X}$ (so $|\Qu{N}|=n$), time $T=2$, and gate set $\cG_Z=\{\gMX,\gCF_*\}$.
  
  Similarly, let $\bar{O}_Z=\gMZ^{\otimes k}:\bC^{2^k\times 2^k}\rightarrow\bC^{2^k\times 2^k}$ be the $\gZ$-measurement channel, which has $k$ input qubits (with no input bits) and $k$ output bits (with no output qubits).
  Then there exists a mending fault-tolerant gadget $\bfG_Z=(\cR_Z,\emptyset,D_{\mathrm{in}},\emptyset)$ for $\bar{O}_Z$, where $\cR_Z$ is a quantum circuit using quantum space $\Qu{N}=C^{r_X}$ (so $|\Qu{N}|=n$), time $T=2$, and gate set $\cG_Z=\{\gMZ,\gCF_*\}$.
\end{lemma}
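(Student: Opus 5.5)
I would construct $\bfG_Z$ explicitly and deduce $\bfG_X$ by duality via \Cref{remark:encdual}. The dual argument applies the same reasoning to ${\cC^\vee}^*\cong\cC^*(\epsilon,\bar{\ell},r_Z,r_X)$, with $\gMX$ in place of $\gMZ$ and the roles of $\cE_X,\cE_Z$ swapped.

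The circuit $\cR_Z$ has two timesteps. In the first, it applies $\gMZ^{\otimes C^{r_X}}$ to every physical qubit, producing an outcome $y\in\cC^{r_X}$. In the second, it applies a single classical function gate. That gate computes $\tilde{a}=\FnDecode{$\delta^{\cC}(y);\cC^*$}$ from \Cref{alg:decode}, forms $y+\tilde{a}$, which should be a cocycle, and outputs the logical string $m\in\bF_2^{(V_{\bar{\ell}})^r}$ given by $\Enc_{\cC}^{-1}$ of the class $y+\tilde{a}+B^{r_X}(\cC)$. This inverse is well defined because \Cref{lem:prodenc} says $\Enc_{\cC}$ is an isomorphism on cohomology. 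If the decoder fails, the gate outputs an arbitrary fixed string. There is no postselection set and $\cE_{\mathrm{run}}=\emptyset$. Since no qubits remain active after timestep $1$, the fault acts only on classical bits, which faults never touch, so it is trivial.

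For fault tolerance, \Cref{lem:paulift} reduces the claim to Pauli deviations. I would write $\sigma=\gX^{f_X}\gZ^{f_Z}\Enc(\rho)\gZ^{f_Z'}\gX^{f_X'}$, where $f_X,f_X'$ are supported in an $\cE_{Z}$-avoiding set $E_{\mathrm{in},Z}$. The $\gZ$ factors are diagonal in the measurement basis, so after $\gMZ$ they can only contribute signs, or make off-diagonal terms vanish. I would check that the measurement keeps only diagonal terms, so the $\gZ$ parts cancel up to a sign. In the remaining terms, the measured string is $y=x+f_X$ with $x$ ranging over the coset $\Enc(m)\in Z^{r_X}/B^{r_X}$. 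The sticking point is that $f_X\neq f_X'$ gives off-diagonal terms; I would handle these by noting that $\gMZ$ outputs $\bra{y}\cdot\ket{y}$, which forces $x+f_X=x'+f_X'$. Hence $f_X+f_X'\in Z^{r_X}$ is supported in $E_{\mathrm{in},Z}$, so it must be a coboundary, because $E_{\mathrm{in},Z}$ avoids the weight-$\omega$ sets; I would justify this using the decoder itself, as below.

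Next I would replace $f_X$ by its reduction $\Rdn(f_X)$. This is equivalent modulo stabilizers, and by \Cref{lem:redsupp} and \Cref{def:badfams} the reduction is already reduced whenever $E_{\mathrm{in},Z}$ avoids $\cE_{Z,\Rd}^{\cC}$. Then $\delta^{\cC}(y)=\delta^{\cC}(f_X)$, and \Cref{cor:decode} applies with $i=r_X$. The hypotheses \Cref{eq:lmrunbounds} are exactly \Cref{eq:bdinerr}, and it yields $\tilde{a}=f_X+\delta^{\cC}(b)$. Therefore $y+\tilde{a}=x+\delta^{\cC}(b)$ lies in the same cohomology class as $x$, and the decoded output equals $m$. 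This is the same for $f_X'$, which settles the consistency issue above. So the output is $\gMZ^{\otimes k}(\rho)$ up to a scalar coming from the $\gZ$ terms and diagonal collapse. This is a trivial ($\emptyset$-output-code) deviation, and the reference system $\Rf{K}$ passes through untouched. The one technical care needed is the case $r_X=0$ versus $i\le r-1$, which I would check is within \Cref{cor:decode}'s range, with separate trivial handling when $r_X=r$.

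For the mending property, the input is an arbitrary Pauli deviation $P\Enc(\rho)P'$. Measurement followed by a deterministic classical function is a fixed superoperator on the qubits. Decomposing $P\Enc(\rho)P'$, each term equals $\Enc(L(\rho))$ for some logical superoperator $L$. I would obtain this by writing the Pauli error as a logical Pauli times a stabilizer times a "syndrome-carrying" part: the decoder maps the latter to some deterministic logical correction, which is itself absorbed into $L$. Hence the output is $\sum_i\bar{O}_Z\circ L_i(\rho)$ with trivial output code, as required. I expect the main obstacle to be the off-diagonal bookkeeping in the fault-tolerance step, making sure the $\gZ$ errors and the $f_X\ne f_X'$ cross terms collapse correctly. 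The decoding itself is routine given \Cref{cor:decode}.
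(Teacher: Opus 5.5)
Your circuit and your handling of the $\gX$-part are fine. Applying \Cref{cor:decode} to both $f_X$ and $f_X'$ works because they share the syndrome $s=\delta^{\cC}(f_X)=\delta^{\cC}(f_X')$ and the decoder returns a single value on $s$. This gives $f_X+f_X'\in B^{r_X}(\cC)$ together with the correct label, and it is a harmless variant of the paper's application of \Cref{lem:decode} to $e_X+e_X'$ with zero syndrome.

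The gap is in the $\gZ$-part. You claim the factors $\gZ^{f_Z},\gZ^{f_Z'}$ contribute only a global scalar after $\gMZ$. But for a Pauli superoperator with $f_Z\neq f_Z'$, which the framework allows, the surviving terms are $(-1)^{(f_Z+f_Z')\cdot y}\ket{y}\bra{y}$, with a sign depending on the measured string $y$. Sum over the coset $y\in\Enc_{\cC}^{r_X}(m)+f_X$.
\begin{itemize}
\item If $f_Z+f_Z'\notin Z_{r_X}(\cC)=B^{r_X}(\cC)^\perp$, the sum vanishes, which is harmless.
\item If $f_Z+f_Z'\in Z_{r_X}(\cC)\setminus B_{r_X}(\cC)$, i.e.~it represents a nontrivial logical $\gZ$, the sum equals $\pm|B^{r_X}(\cC)|\cdot(-1)^{m\cdot\bar{z}}$ for some nonzero $\bar{z}$. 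The output is then $\propto\sum_m(-1)^{m\cdot\bar{z}}\ket{m}\bra{m}\otimes\rho_{m,m}$, which is not proportional to $\gMZ^{\otimes k}(\rho)$.
\end{itemize}
Your argument never uses that $f_Z,f_Z'$ are supported in the $\cE_X^{\cC}$-avoiding set $E_{\mathrm{in},X}$. It would go through verbatim with $\cE_X=\emptyset$, where the statement is false: take $\sigma=\gZ^{f}\Enc_{\cC}^{r_X}(\rho)$ with $f$ a logical $\gZ$ representative.

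The missing step is to show that $f_Z+f_Z'\in Z_{r_X}(\cC)$ forces $f_Z+f_Z'\in B_{r_X}(\cC)$. Apply \Cref{lem:decode} to the dual complex $\cC_*\cong\cC^*(\epsilon,\bar{\ell},r_Z,r_X)$, using that $\supp(f_Z+f_Z')\subseteq E_{\mathrm{in},X}$. On syndrome $\partial^{\cC}(f_Z+f_Z')=0$ the decoder returns $0$, which must therefore lie in $f_Z+f_Z'+B_{r_X}(\cC)$. Then every coset sum equals $|B^{r_X}(\cC)|$, independent of $m$; this is exactly how the paper concludes.

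Your mending sketch has a related imprecision. The claim that each term of $P\Enc(\rho)P'$ equals $\Enc(L(\rho))$ is wrong for syndrome-carrying errors, which leave the code space. What holds is a statement after measurement. Either the output is zero (if $e_X+e_X'\notin Z^{r_X}(\cC)$), or the following holds.
\begin{itemize}
\item The decoder output depends only on $\delta^{\cC}(e_X)$, so the returned label is $m+\bar{x}'$ for a fixed $\bar{x}'$.
\item The $\gZ$-part gives either zero or the linear sign $(-1)^{m\cdot\bar{z}}$.
\item The cross terms with $e_X+e_X'\in\Enc_{\cC}^{r_X}(\bar{x})$ leave $\rho_{m,m+\bar{x}}$.
\end{itemize}
Together these yield $\gMZ^{\otimes k}(\gX^{\bar{x}'}\gZ^{\bar{z}}\rho\gX^{\bar{x}+\bar{x}'})$.

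Finally, no separate case $r_X=r$ arises. The hypothesis $r_X,r_Z\in\bN$ gives $1\le r_X\le r-1$, which both decoder applications require. The excluded case $r_Z=0$ would not be trivial, since it has no $\gZ$-checks at all.
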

\begin{proof}
  We will present and analyze $\bfG_Z$. Similarly as in \Cref{lem:stateprep}, the construction and analysis of $\bfG_X$ are analogous, except the argument is applied to the dual complex ${\cC^\vee}^*=\cC_*\cong\cC^*(\epsilon,\bar{\ell},r_Z,r_X)$, so that the roles of $\gX$ and $\gZ$ are swapped.

  Our desired circuit $\cR_Z$ uses $T=2$ timesteps. In the first timestep, we apply $\gMZ^{\otimes C^{r_X}}$ to measure all $n=|C^{r_X}|$ input qubits. Hence there are no remaining active qubits after this first timestep, so the total quantum space usage is $\Qu{N}=C^{r_X}$. Letting $z\in C^{r_X}$ denote the measurement outcome, then in the second timestep we apply a classical function gate to decode $z$. Specifically, we run \FnDecode{$\delta^{\cC}_{r_X}(z);\cC^*$} from \Cref{alg:decode}, and let $\tilde{a}^{r_X}\in\cC^{r_X}$ be the output. If the decoder returns FAIL, we instead let $\tilde{a}^{r_X}\in\cC^{r_X}$ be an arbitrary chosen element satisfying $\delta^{\cC}_{r_X}(\tilde{a}^{r_X})=\delta^{\cC}_{r_X}(z)$, so that $\tilde{a}^{r_X}$ is the same for all $z$ in a given coset of $Z^{r_X}(\cC)$. We then return the unique $\tilde{x}\in\bF_2^k=\cM^{r_X}=H^{r_X}(\cM)$ such that $\Enc_{\cC}^{r_X}(\tilde{x})=z+\tilde{a}^{r_X}+B^{r_X}(\cC)\in H^{r_X}(\cC)$. Assuming such an $\tilde{x}$ exists (which we show below is the case), we find it simply by running Gaussian elimination.

  We now show that $\bfG_Z$ is a fault-tolerant gadget for the channel $\bar{O}_Z=\gMZ^{\otimes k}$. As remarked above, we do not need to consider a fault $\cF$, as $\cR_Z$ has no active qubits for a fault to act on after the first timestep. Similarly, we do not need to consider a reweighting $\zeta$, as $\cR_Z$ uses no classically-controlled gates. We also do not need to consider a postselection $\xi$, as the set of postselectable gates of $\bfG_Z$ is $\ps=\emptyset$. Therefore fix a $\cE_{\mathrm{in}}$-avoiding pair of sets $E_{\mathrm{in}}=(E_{\mathrm{in},X}\subseteq C^{r_X},\;E_{\mathrm{in},Z}\subseteq C^{r_X})$, a finite set $\Rf{K}$, an operator $\rho\in\bC^{2^{M^{r_X}\sqcup\Rf{K}}\times 2^{M^{r_X}\sqcup\Rf{K}}}$, and a Pauli $\comp{E_{\mathrm{in}}}$-deviation $\sigma$ of $\Enc_{\cC}^{r_X}(\rho)$. By \Cref{lem:paulift}, our goal is then to show that $\cR_Z(\sigma)\propto\gMZ^{\otimes k}(\rho)$.

  For this purpose, we first write
  \begin{align}
    \label{eq:lmrho}
    \rho &= \sum_{x,x'\in\bF_2^k}\ket{x}\bra{x'}\otimes\rho_{x,x'},
  \end{align}
  where $\rho_{x,x'}\in\bC^{2^{\Rf{K}}\times 2^{\Rf{K}}}$. Therefore there exist $e_X,e_X',e_Z,e_Z'\in\cC^{r_X}$ such that $(\supp(e_X)\cup\supp(e_X'),\supp(e_Z)\cup\supp(e_Z'))$ is $\cE_{\mathrm{in}}$-avoiding, and such that
  \begin{align}
    \label{eq:lmpremeas}
    \begin{split}
      \sigma
      &= \sum_{x,x'\in\bF_2^k} \gZ^{e_Z}\gX^{e_X} \ket{\Enc_{\cC}^{r_X}(x)}\bra{\Enc_{\cC}^{r_X}(x')} \gX^{e_X'}\gZ^{e_Z'} \otimes \rho_{x,x'} \\
      &= \sum_{x,x'\in\bF_2^k} \gZ^{e_Z} \ket{\Enc_{\cC}^{r_X}(x)+e_X}\bra{\Enc_{\cC}^{r_X}(x')+e_X'} \gZ^{e_Z'} \otimes \rho_{x,x'}
    \end{split}
  \end{align}
  If $e_X+e_X'\notin Z^{r_X}(\cC)$, then for every $x,x'\in\bF_2^k$ we have $(\Enc_{\cC}^{r_X}(x)+e_X)\cap(\Enc_{\cC}^{r_X}(x')+e_X')=\emptyset$, so the measurements $\gMZ^{\otimes C^{r_X}}$ collapse the state $\sigma$ to $0$, and $\cR_Z(\sigma)=0$ is trivially proportional to $\gMZ^{\otimes k}(\rho)$. Therefore assume that $e_X+e_X'\in Z^{r_X}(\cC)$, so that $\delta^{\cC}_{r_X}(e_X+e_X')=0$.

  Now because $\supp(e_X+e_X')$ is $\cE_{\mathrm{in},Z}=\cE_Z^{\cC}(\eta_{\mathrm{in}},\gamma_{\mathrm{in}},\omega_{\mathrm{in}};\beta)$-avoiding, and by \Cref{eq:lmrunbounds} the constraints in \Cref{eq:bdinerr} are satisfied by $\eta_{\mathrm{in}},\gamma_{\mathrm{in}},\omega_{\mathrm{in}}$, it follows by \Cref{lem:decode} (with $\Upsilon=1$) that \FnDecode{$\delta^{\cC}_{r_X}(e_X+e_X')=0;\cC^*$} from \Cref{alg:decode} returns some element in $e_X+e_X'+B^{r_X}(\cC)$. But by definition \FnDecode{$0;\cC^*$} always returns $0$. Thus $e_X+e_X'\in B^{r_X}(\cC)$, so \Cref{eq:lmpremeas} simplifies to
  \begin{align}
    \label{eq:lmsig}
    \sigma
    &= \sum_{x,x'\in\bF_2^k} \gZ^{e_Z} \ket{\Enc_{\cC}^{r_X}(x)+e_X}\bra{\Enc_{\cC}^{r_X}(x')+e_X} \gZ^{e_Z'} \otimes \rho_{x,x'}.
  \end{align}
  Therefore because for $x\neq x'$ we have $\Enc_{\cC}^{r_X}(x)\cap\Enc_{\cC}^{r_X}(x')=\emptyset$, applying the measurements $\gMZ^{\otimes C^{r_X}}$ collapses $\sigma$ to
  \begin{align*}
    \label{eq:lmpostmeas}
    \gMZ^{\otimes C^{r_X}}(\sigma)
    &\propto \sum_{x\in\bF_2^k}\sum_{z\in\Enc_{\cC}^{r_X}(x)+e_X} \gZ^{e_Z} \ket{z}\bra{z} \gZ^{e_Z'} \otimes \rho_{x,x'} \\
    &= \sum_{x\in\bF_2^k}\sum_{z\in\Enc_{\cC}^{r_X}(x)+e_X} (-1)^{(e_Z+e_Z')\cdot z} \ket{z}\bra{z} \otimes \rho_{x,x'}
  \end{align*}
  Then because $e_X$ is $\cE_{\mathrm{in},Z}=\cE_Z^{\cC}(\eta_{\mathrm{in}},\gamma_{\mathrm{in}},\omega_{\mathrm{in}};\beta)$-avoiding, \Cref{lem:decode} implies that the output $\tilde{a}^{r_X}$ of $\FnDecode{$\delta^{\cC}_{r_X}(z)=\delta^{\cC}_{r_X}(e_X);\cC^*$}$ satisfies $\tilde{a}^{r_X}\in e_X+B^{r_X}(\cC)$. Therefore $z+\tilde{a}^{r_X}+B^{r_X}(\cC)=\Enc_{\cC}^{r_X}(x)$. Hence for each measurement outcome $z\in\Enc_{\cC}^{r_X}(x)+e_X$, our circuit $\cR_Z$ returns $\tilde{x}=x$, meaning that the final output state is
  \begin{align*}
    \cR_Z(\sigma)
    &\propto \sum_{x\in\bF_2^k} \left(\sum_{z\in\Enc_{\cC}^{r_X}(x)+e_X}(-1)^{(e_Z+e_Z')\cdot z}\right) \ket{x}\bra{x} \otimes \rho_{x,x} \\
    &\propto \sum_{x\in\bF_2^k} \left(\sum_{z\in\Enc_{\cC}^{r_X}(x)}(-1)^{(e_Z+e_Z')\cdot z}\right) \ket{x}\bra{x} \otimes \rho_{x,x}.
  \end{align*}

  We now simplify the expression in parentheses on the RHS above. If $e_Z+e_Z'\notin B^{r_X}(\cC)^\perp=Z_{r_X}(\cC)$, then because $\Enc_{\cC}^{r_X}$ is a coset of $B^{r_X}(\cC)$, it follows that $(e_Z+e_Z')\cdot z$ equals $0$ for half of all $z\in\Enc_{\cC}^{r_X}(x)$, and equals $1$ for the other half of all $z\in\Enc_{\cC}^{r_X}(x)$. Hence
  \begin{equation*}
    \sum_{z\in\Enc_{\cC}^{r_X}(x)}(-1)^{(e_Z+e_Z')\cdot z} = 0,
  \end{equation*}
  so that $\cR_Z(\sigma)=0$. Therefore assume that $e_Z+e_Z'\in Z_{r_X}(\cC)$. Then because $\supp(e_Z+e_Z')$ is $\cE_{\mathrm{in},X}=\cE_X^{\cC}(\eta_{\mathrm{in}},\gamma_{\mathrm{in}},\omega_{\mathrm{in}};\beta)$-avoiding, it follows by applying \Cref{lem:decode} to the dual complex ${\cC^\vee}^*=\cC_*\cong\cC^*(\epsilon,\bar{\ell},r_Z,r_X)$ that \FnDecode{$\partial^{\cC}_{r_X}(e_Z+e_Z')=0;\cC_*$} from \Cref{alg:decode} returns some element in $e_Z+e_Z'+B_{r_X}(\cC)$. But by definition \FnDecode{$0;\cC_*$} always returns $0$. Thus $e_Z+e_Z'\in B_{r_X}(\cC)=Z^{r_X}(\cC)^\perp$, so for every $z\in\Enc_{\cC}^{r_X}(x)\subseteq Z^{r_X}(\cC)$ we have $(e_Z+e_Z')\cdot z=0$. Hence
  \begin{equation*}
    \sum_{z\in\Enc_{\cC}^{r_X}(x)}(-1)^{(e_Z+e_Z')\cdot z} = |\Enc_{\cC}^{r_X}(x)| = |B^{r_X}(\cC)|
  \end{equation*}
  does not depend on the value of $x\in\bF_2^k$. Thus we conclude that
  \begin{align*}
    \cR_Z(\sigma)
    &\propto \sum_{x\in\bF_2^k} \ket{x}\bra{x} \otimes \rho_{x,x}.
  \end{align*}
  The RHS above is precisely the state $\gMZ^{\otimes k}(\rho)$ obtained by measuring $\rho$ in the $Z$-basis, as desired. Thus we have shown that $\bfG_Z$ is a fault-tolerant gadget for $\bar{O}_Z=\gMZ^{\otimes k}$.

  We now show that $\bfG_Z$ is mending. We fix a $\cE_{\mathrm{in}}$-avoiding pair of sets $E_{\mathrm{in}}=(E_{\mathrm{in},X}\subseteq C^{r_X},\;E_{\mathrm{in},Z}\subseteq C^{r_X})$, a finite set $\Rf{K}$, and an operator $\rho\in\bC^{2^{M^{r_X}\sqcup\Rf{K}}\times 2^{M^{r_X}\sqcup\Rf{K}}}$ as in the proof of fault-tolerance above. However, now we let $\sigma$ be a Pauli $\Rf{K}$-deviation of $\Enc_{\cC}^{r_X}(\rho)$. By \Cref{lem:paulift}, our goal is then to show that $\cR_Z(\sigma)$ is a linear combination of states of the form $\gMZ^{\otimes k}\circ L(\rho)$ for $k$-qubit superoperators $L$.
  % By definition we can decompse
  % \begin{equation*}
  %   \cR_Z(\sigma) = \sum_{\xi\in\bF_2^{C^{r_X}}}\cR_Z[\xi](\sigma),
  % \end{equation*}
  % and then analyze each postselected term $\cR_Z[\xi](\sigma)$ independently, where $\xi\in\bF_2^{C^{r_X}}$ specifies the postselection for the $\gMZ^{\otimes C^{r_X}}$ measurements in $\cR_Z$.

  Using the decomposition of $\rho$ in \Cref{eq:lmrho}, there must exist $e_X,e_X',e_Z,e_Z'\in\cC^{r_X}$ such that $\sigma$ is of the form in \Cref{eq:lmsig}. As in the fault-tolerance proof above, if $e_X+e_X'\notin Z^{r_X}(\cC)$, then $\cR_Z(\sigma)=0$, which is trivially of the desired form. Hence assume that $e_X+e_X'\in Z^{r_X}(\cC)$, so that $\delta^{\cC}_{r_X}(e_X+e_X')=0$. Let $\bar{x}\in\bF_2^k$ be the unique bitstring for which $e_X+e_X'\in\Enc_{\cC}^{r_X}(\bar{x})$. Then because for $x,x'\in\bF_2^k$, we have $(\Enc_{\cC}^{r_X}(x)+e_X)\cap(\Enc_{\cC}^{r_X}(x')+e_X')\neq\emptyset$ only when $x'=x+\bar{x}$, it follows that applying the measurements $\gMZ^{\otimes C^{r_X}}$ collapses the state $\sigma$ in \Cref{eq:lmsig} to
  \begin{align*}
    \gMZ^{\otimes C^{r_X}}(\sigma)
    &\propto \sum_{x\in\bF_2^k}\sum_{z\in\Enc_{\cC}^{r_X}(x)+e_X} \gZ^{e_Z} \ket{z}\bra{z} \gZ^{e_Z'} \otimes \rho_{x,x+\bar{x}} \\
    &= \sum_{x\in\bF_2^k}\sum_{z\in\Enc_{\cC}^{r_X}(x)+e_X} (-1)^{(e_Z+e_Z')\cdot z} \ket{z}\bra{z} \otimes \rho_{x,x+\bar{x}}.
  \end{align*}
  Now by definition, the value $\tilde{a}^{r_X}$ computed in $\cR_Z$ by running $\FnDecode{$\delta^{\cC}_{r_X}(z)=\delta^{\cC}_{r_X}(e_X)=\delta^{\cC}_{r_X}(e_X');\cC^*$}$ depends only on $e_X,e_X'$, and satisfies $z+\tilde{a}^{r_X}\in Z^{r_X}(\cC)$ for every $z\in\Enc_{\cC}^{r_X}(x)+e_X$. Recalling that our circuit $\cR_Z$ returns the unique $\tilde{x}\in\bF_2^k$ for which $z+\tilde{a}^{r_X}\in\Enc_{\cC}^{r_X}(\tilde{x})$, we define $\bar{x}'=x+\tilde{x}$. Because $z\in\Enc_{\cC}^{r_X}(x)+e_X$, we have $e_X+\tilde{a}^{r_X}\in\Enc_{\cC}^{r_X}(\bar{x}')$, so $\bar{x}'$ is well-defined as a function of $e_X,e_X'$, and in particular does not depend on $x$. Then our circuit's final output state is
  \begin{align*}
    \cR_Z(\sigma)
    &\propto \sum_{x\in\bF_2^k} \left(\sum_{z\in\Enc_{\cC}^{r_X}(x)+e_X}(-1)^{(e_Z+e_Z')\cdot z}\right) \ket{x+\bar{x}'}\bra{x+\bar{x}'} \otimes \rho_{x,x+\bar{x}} \\
    &\propto \sum_{x\in\bF_2^k} \left(\sum_{z\in\Enc_{\cC}^{r_X}(x)}(-1)^{(e_Z+e_Z')\cdot z}\right) \ket{x+\bar{x}'}\bra{x+\bar{x}'} \otimes \rho_{x,x+\bar{x}}.
  \end{align*}
  By the linearity of $\Enc_{\cC}^{r_X}:\bF_2^k\rightarrow H^{r_X}(\cC)$, there exists a linear map $z':\bF_2^k\rightarrow Z^{r_X}(\cC)$ such that for every $x\in\bF_2^k$, then $z'(x)\in\Enc_{\cC}^{r_X}$, and hence $\Enc_{\cC}^{r_X}(x)=z'(x)+B^{r_X}(\cC)$. Then because $x\mapsto(e_Z+e_Z')\cdot z'(x)$ is a linear map from $\bF_2^k\rightarrow\bF_2$, there exists $\bar{z}\in\bF_2^k$ such that $x\cdot\bar{z}=(e_Z+e_Z')\cdot z'(x)$. Thus
  \begin{align*}
    \cR_Z(\sigma)
    &\propto \sum_{x\in\bF_2^k} (-1)^{x\cdot\bar{z}} \left(\sum_{z\in B^{r_X}(\cC)}(-1)^{(e_Z+e_Z')\cdot z}\right) \ket{x+\bar{x}'}\bra{x+\bar{x}'} \otimes \rho_{x,x+\bar{x}} \\
    &\propto \sum_{x\in\bF_2^k} (-1)^{x\cdot\bar{z}} \ket{x+\bar{x}'}\bra{x+\bar{x}'} \otimes \rho_{x,x+\bar{x}} \\
    &\propto \gMZ^{\otimes k}(\gX^{\bar{x}'}\gZ^{\bar{z}}\rho\gX^{\bar{x}+\bar{x}'}).
  \end{align*}
  Thus indeed $\cR_Z(\sigma)$ is of the form $\gMZ^{\otimes k}\circ L(\rho)$ for the $k$-qubit superoperator $L(\rho)=\gX^{\bar{x}'}\gZ^{\bar{z}}\rho\gX^{\bar{x}+\bar{x}'}$, which completes the proof that $\bfG_Z$ is mending.
\end{proof}

\subsubsection{Ejection}
\label{sec:eject}
In \Cref{lem:eject} below, we present our ejection gadget.

\begin{lemma}
  \label{lem:eject}
  For $r_X,r_Z,\bar{\ell}\in\bN$ and $\epsilon>0$, define $\Delta=\Delta(\epsilon)$ as in \Cref{lem:lossless}. Define $\cC^*=\cC^*(\epsilon,\bar{\ell},r_X,r_Z)$, $\cM^*=\cM^*(\epsilon,\bar{\ell},r_X,r_Z)$, and $\Enc_{\cC}:\cM^*\rightarrow\cC^*$ as in \Cref{def:prodcode}. Let $Q=(Q_X,Q_Z)$ be the $[[n,k]]$ code at level $r_X$ of $\cC^*$.

  Let $\beta\geq 2\Delta$. For arbitrary $\eta_{\mathrm{in}},\gamma_{\mathrm{in}},\omega_{\mathrm{in}},\omega_{\mathrm{run}}>0$, let
  \begin{equation*}
    \omega_{\mathrm{out}} = 2\omega_{\mathrm{in}}+5\omega_{\mathrm{run}},
  \end{equation*}
  and (recalling that $k=|M^{r_X}|=\dim(H^{r_X}(\cM))$) let
  \begin{align*}
    \cE_{\mathrm{run}} &= 2^{M^{r_X}}|^{\geq\omega_{\mathrm{run}}}.
  \end{align*}
  We define the decorated input code
  \begin{align*}
    D_{\mathrm{in}} &= (Q,\; \Enc_{\cC}^{r_X},\; \cE_{\mathrm{in}} = \cE^{\cC}(\eta_{\mathrm{in}},\gamma_{\mathrm{in}},\omega_{\mathrm{in}};\beta))
  \end{align*}
  as given by \Cref{def:badfams}, and the decorated output code
  \begin{align*}
    D_{\mathrm{out}} &= ((\bF_2^{M^{r_X}},\bF_2^{M^{r_X}}),\; \id,\; \cE_{\mathrm{out}}=2^{M^{r_X}}|^{\geq\omega_{\mathrm{out}}}),
  \end{align*}
  whose encoding map is the $k=|M^{r_X}|$-qubit identity map $\id:\bF_2^{M^{r_X}}\rightarrow\bF_2^{M^{r_X}}$.

  Then there exists a fault-tolerant gadget $\bfG=(\cR,\cE_{\mathrm{run}}^{\sqcup T},D_{\mathrm{in}},D_{\mathrm{out}},\emptyset)$ for the $k$-qubit identity channel $\bar{O}=\id:\bC^{2^k\times 2^k}\rightarrow\bC^{2^k\times 2^k}$, where $\cR$ is a quantum circuit using quantum space $\Qu{N}=C^{r_X}$ (so $|\Qu{N}|=n$), time $T=5$, and gate set $\cG=\{\gTerm,\gMX,\gMZ,\gCF_*,\gCt{X},\gCt{Z}\}$.
\end{lemma}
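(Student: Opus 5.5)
The plan is to implement the ejection sketched in \Cref{sec:tech} directly on the level-$r_X$ code of $\cC^*=\cC_*(\epsilon,\bar{\ell})^{\otimes r_X}\otimes\cC^*(\epsilon,\bar{\ell})^{\otimes r_Z}$, keeping the qubits labeled by $M^{r_X}=V_{\bar{\ell}}^{\times r}\subseteq C^{r_X}$ as the output. First we fix logical operators supported near $M^{r_X}$. For each $m=(m_1,\dots,m_r)\in M^{r_X}$ we take the $\gZ$-type logical representative $z_m:=\Enc_{\cC}(\1_m)$, which factors as $\Enc_0(\1_{m_1})\otimes\cdots\otimes\Enc_0(\1_{m_{r_X}})\otimes\Enc^0(\1_{m_{r_X+1}})\otimes\cdots\otimes\Enc^0(\1_{m_r})$. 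Since $\Enc_0(\1_v)=\1_v$ is supported on $V_{\bar{\ell}}$, $z_m$ is supported in $\{(m_1,\dots,m_{r_X})\}\times\prod_{j>r_X}\supp(\Enc^0(\1_{m_j}))$. Dually, $x_m:=\Enc_{\cC^\vee}(\1_m)$ is supported in $\prod_{j\le r_X}\supp(\Enc^0(\1_{m_j}))\times\{(m_{r_X+1},\dots,m_r)\}$. By \Cref{remark:encdual}, $x_m\cap z_m=\{m\}$, and the families $\{x_m\}$ and $\{z_m\}$ are dual logical bases.

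Let $S_Z=\bigcup_m\supp(z_m)\setminus M^{r_X}$ and $S_X=\bigcup_m\supp(x_m)\setminus M^{r_X}$. These sets are disjoint: an element of $S_Z$ has its first $r_X$ coordinates in $V_{\bar{\ell}}$ and some later coordinate outside $V_{\bar{\ell}}$, while an element of $S_X$ has the reverse property. The circuit has five steps.
\begin{enumerate}
\item Measure $\gZ$ on $S_Z$ and $\gX$ on $S_X$.
\item Terminate every other qubit outside $M^{r_X}$.
\item Using a classical function gate, compute for each $m$ the parities $\pi^Z_m$ of the outcomes on $\supp(z_m)\setminus\{m\}$ and $\pi^X_m$ of the outcomes on $\supp(x_m)\setminus\{m\}$.
\item Apply the classically controlled gates $\gCt{X}$ with control $\pi^Z_m$ and $\gCt{Z}$ with control $\pi^X_m$ on qubit $m$.
\item Return $M^{r_X}$.
\end{enumerate}

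Correctness with no error follows a stabilizer-tracking argument. The $\gZ$ measurements on $S_Z$ commute with every $\gZ^{z_m}$, and the $\gX$ measurements on $S_X$ commute with every $\gX^{x_m}$. After measurement, $\gZ^{z_m}$ acts as $(-1)^{\pi^Z_m}\gZ_m$ and $\gX^{x_m}$ acts as $(-1)^{\pi^X_m}\gX_m$, so the corrections map the logical Paulis $\bar{\gX}_m,\bar{\gZ}_m$ to the physical $\gX_m,\gZ_m$. To make this rigorous in the density-matrix formalism of \Cref{def:faulttol}, I would mirror \Cref{lem:logmeas}. Expand $\rho=\sum\ket{x}\bra{x'}\otimes\rho_{x,x'}$ and write codewords as coset sums over $Z^{r_X}(\cC)$. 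Measuring $\gZ$ on $S_Z$ projects each coset onto strings with fixed values on $S_Z$. On $\{(m_1,\dots,m_{r_X})\}\times C(\epsilon,\bar{\ell},0,r_Z)$, a cocycle's restriction is determined modulo coboundaries by its $M$-values, because the $\cC^*(\epsilon,\bar{\ell})$ factors have $H^1=0$ and cocycles are determined by their top-level entries. This lets us read off $x_m+\pi^Z_m$. The $\gX$ side is dual via \Cref{lem:encdual}. I will also check that terminating the remaining qubits only produces a constant factor, which is allowed since $\sigma\propto$ holds.

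For errors, by \Cref{lem:paulift} it suffices to take a Pauli $\comp{E_{\mathrm{in}}}$-deviation with a reduced input error, and by \Cref{remark:badred} we may replace the error by its reduction. Its $\gX$ part $e_X$ is supported on a set $E_Z\subseteq V_Z\cup(C\setminus(V_Z\cup\cE^{\cC}_{Z,\Rd}))$ with $|E_Z\cap V_Z|_\beta<\omega_{\mathrm{in}}$. The key observation is that the $\gZ$ measurements and parities $\pi^Z$ only read qubits in $V_Z$. So an $\gX$ error at $a\in V_Z$ flips $\pi^Z_m$ for those $m$ with $a\in\supp(z_m)$, and the number of such $m$ is at most $|\Shad(a)|\le|\{a\}|_\beta$ by \Cref{claim:rf1d}, using $\beta\ge2\Delta$. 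The output $\gX$-error support is therefore contained in a set $E_{\mathrm{out},Z}$ determined by $\supp$ alone, namely the union of those $m$ together with $E_Z\cap M^{r_X}$, of size at most $|E_Z\cap V_Z|_\beta<\omega_{\mathrm{in}}$. The $\gZ$ side is symmetric and contributes another $\omega_{\mathrm{in}}$.

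Faults matter only on $M^{r_X}$, since all other qubits are measured or terminated in step one. The family $\cE_{\mathrm{run}}^{\sqcup T}$ bounds each timestep's fault on $M^{r_X}$ by fewer than $\omega_{\mathrm{run}}$ qubits, and summing over $T=5$ steps gives $5\omega_{\mathrm{run}}$. A fault on a single qubit $m$ stays on $m$ because the gates are single-qubit or classically controlled; reweightings just contribute global signs. Altogether $|E_{\mathrm{out}}|<2\omega_{\mathrm{in}}+5\omega_{\mathrm{run}}=\omega_{\mathrm{out}}$.

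I expect the main obstacle to be the bookkeeping that the measured parity exactly equals the logical value plus flips coming only from the error support on $V_Z$. This needs $z_m$ and $x_m$ to be valid logical representatives whose measured portions are disjoint, and the $\gX$ and $\gZ$ measurements not to disturb each other's readout. I would handle it with an explicit coset computation rather than stabilizer heuristics.
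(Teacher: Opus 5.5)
There is a genuine error: your circuit measures the wrong Pauli on each region. In this paper the codewords are $\ket{\Enc_{\cC}^{r_X}(u)}$, i.e.\ computational-basis superpositions over a coset of $B^{r_X}(\cC)$ inside $Z^{r_X}(\cC)$.

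\begin{itemize}
\item The cocycle $z_m=\Enc_{\cC}(\1_m)$ supports the logical $\bar{\gX}_m=\gX^{z_m}$.
\item The logical $\bar{\gZ}_m$ is $\gZ^{x_m}$ with $x_m=\Enc_{\cC^\vee}(\1_m)$. This is what the pairing $\Enc_{\cC}(u)\cdot\Enc_{\cC^\vee}(\1_m)=u_m$ of \Cref{remark:encdual} says.
\item The operator $\gZ^{z_m}$ you track is not logical at all. In the first ($\cC_*$) factor, $\partial(\1_{m_1})=\delta^{\bar{\ell}}(\1_{m_1})\neq 0$, so $\gZ^{z_m}$ anticommutes with some $\gX$-stabilizer.
\end{itemize}

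Worse, $S_Z$ supports nontrivial logical $\gZ$-operators. Take $r_X=r_Z=1$ and write $\cC^*=\cA^*\otimes\cB^*$ as in \Cref{def:reducer}. Pick $0\neq v\in\ker\delta^{\bar{\ell}}\subseteq\bF_2^{V_{\bar{\ell}}}$, which exists since $|V_{\bar{\ell}}|=2|V_{\bar{\ell}-1}|$, pick $a\in V_{\bar{\ell}-1}$, and set $y=v\otimes\1_a\in\cA^1\otimes\cB^0$. Then:
\begin{itemize}
\item $\partial y=\delta^{\bar{\ell}}(v)\otimes\1_a=0$;
\item $\supp(y)\subseteq V_{\bar{\ell}}\times V_{\bar{\ell}-1}\subseteq S_Z$;
\item $y\cdot z_m=v_{m_1}(\delta^{\bar{\ell}}\1_{m_2})_a=1$ for $m_1\in\supp(v)$ and $m_2$ a neighbor of $a$.
\end{itemize}
So $\gZ^y$ is a logical operator anticommuting with $\bar{\gX}_m$; an analogous tensor construction works for all $r_X,r_Z\geq 1$. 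Your $\gZ$-measurements on $S_Z$ therefore measure $\gZ^y$ and dephase the encoded state. The lemma allows $\omega_{\mathrm{in}},\omega_{\mathrm{run}}$ small enough that $\omega_{\mathrm{out}}\leq 1$, which forces $E_{\mathrm{out}}=\emptyset$. Then even the noiseless output would have to be $\propto\rho$, and it is not. Your coset step ``this lets us read off $x_m+\pi^Z_m$'' is exactly the symptom: the outcomes on $S_Z$ carry information about the message.

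The repair is to swap the bases, as in the overview of \Cref{sec:tech}. Measure $\gX$ on $S_Z$ and $\gZ$ on $S_X$. Apply $\gZ_m$ controlled by the $\gX$-parity on $\supp(z_m)\setminus\{m\}$, and $\gX_m$ controlled by the $\gZ$-parity on $\supp(x_m)\setminus\{m\}$. To stay within $T=5$, merge the terminations into the first timestep, put the two corrections in separate timesteps, and terminate the classical bits at the end.

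Your error analysis then changes and becomes simpler. For every element of $S_X$, the first of its first $r_X$ coordinates lying outside $V_{\bar{\ell}}$ lies in $A^1\setminus V_{\bar{\ell}}$. Hence $S_X\subseteq\cE_{Z,\Rd}^{\cC}$, and dually $S_Z\subseteq\cE_{X,\Rd}^{\cC}$. Since $\cE_{\mathrm{in}}$ contains these reducer sets as singletons, input $\gX$-errors never touch the $\gZ$-measured qubits and vice versa. So no parity is ever flipped, and the shadow bound is not needed. The output error set is just $(E_{\mathrm{in},X}\cup E_{\mathrm{in},Z})\cap M^{r_X}$ together with the fault locations on $M^{r_X}$, which is exactly the paper's $E_{\mathrm{out}}$.

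The paper realizes the same idea by iterating \Cref{lem:ejectone} over the $r$ factors. Each of the $r_X$ rounds for the $\cC_*$ factors measures all of $(A^0\times B^{r_X})\sqcup((A^1\setminus V_{\bar{\ell}})\times B^{r_X-1})$ in the $\gZ$ basis and computes the $\gX$-correction via $\Rd$; the $r_Z$ dual rounds use the $\gX$ basis. It then compresses all rounds into a single measurement-and-correction step.
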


In \Cref{lem:eject}, letting $r=r_X+r_Z$, then the first timestep $R_1$ of the circuit $\cR$ will measure or terminate all qubits in $C^{r_X}\setminus V_{\bar{\ell}}^{\times r}$. Hence the fault and output error only act on qubits labeled by $V_{\bar{\ell}}^{\times r}\cong M^{r_X}$.

To ensure that only a small constant fraction of qubits are corrupted during injection, we want to choose $\omega_{\mathrm{in}},\omega_{\mathrm{run}}=\lambda\cdot|M^{r_X}|$ for a small constant $\lambda>0$. Recall from \Cref{sec:bfls} that our gadgets in this section are intended to be resilient to locally stochastic noise that corrupts each level-$\ell$ qubit with probability $p^{2^{\Omega(\ell)}}$ for small constant $p>0$. When $\omega_{\mathrm{run}}=\lambda\cdot|M^{r_X}|$, then for a choice of the constant $p>0$ that is sufficiently small relative to $\lambda$, the probability that a locally stochastic fault is not $\cE_{\mathrm{run}}=2^{M^{r_X}}|^{\geq\omega_{\mathrm{run}}}$-avoiding is exponentially small in $|M^{r_X}|$.

Meanwhile, to ensure that $\omega_{\mathrm{in}}=\lambda\cdot|M^{r_X}|$, we can simply apply \Cref{lem:seqcomp} to precompose our ejection gadget in \Cref{lem:eject} with our error-correction gadget in \Cref{lem:errcorr} below. We can instantiate this error-correction gadget with its $\omega_{\mathrm{out}}$ value (which equals our ejection gadget's $\omega_{\mathrm{in}}$ value) equal to $\lambda\cdot|M^{r_X}|$ by setting the error-correction gadget's $\omega_{\mathrm{run}}$ value equal to $\lambda'\cdot|M^{r_X}|$ for a sufficiently small constant $\lambda'>0$. By \Cref{lem:lsex}, a locally stochastic fault on our error-correction gadget will only fail to satisfy this choice of $\omega_{\mathrm{run}}=\lambda'\cdot|M^{r_X}|$ with small constant probability. Hence with just a small constant failure probability, our gadget in \Cref{lem:eject} will successfully perform ejection with errors on just $\lambda$-fraction of the output qubits. Note that once we fix the fault's probability distribution, this set of $\lambda$-fraction corrupted qubits has a well-defined probability distribution, so we can identify a large constant fraction of our qubits with low marginal probability of corruption during ejection.

To prove \Cref{lem:eject}, we will repeatedly apply \Cref{lem:ejectone}, which shows how to eject out of one of the $r$ tensor factors defining $\cC$. As we will ultimately combine all $r$ applications of this subroutine into a single step, we do not present \Cref{lem:ejectone} using our general language of fault-tolerant gadgets, but rather describe the precise conditions that our subroutine satisfies.

\begin{lemma}
  \label{lem:ejectone}
  For integers $r_X\geq 1$, $r_Z\geq 0$, $\bar{\ell}\geq 1$, and a real number $\epsilon>0$, let $r=r_X+r_Z$, and define
  \begin{align*}
    \cA^* &= \cC^*(\epsilon,\bar{\ell},1,0), \hspace{1em} \cM_{\cA}=\cM^*(\epsilon,\bar{\ell},1,0), \hspace{1em} \Enc_{\cA}:\cM_{\cA}^*\rightarrow\cA^* \\
    \cB^* &= \cC^*(\epsilon,\bar{\ell},r_X-1,r_Z), \hspace{1em} \cM_{\cB}=\cM^*(\epsilon,\bar{\ell},r_X-1,r_Z), \hspace{1em} \Enc_{\cB}:\cM_{\cB}^*\rightarrow\cB^* \\
    \cC^* &= \cC^*(\epsilon,\bar{\ell},r_X,r_Z), \hspace{1em} \cM_{\cC}=\cM^*(\epsilon,\bar{\ell},r_X,r_Z), \hspace{1em} \Enc_{\cC}:\cM_{\cC}^*\rightarrow\cC^*,
  \end{align*}
  as given by \Cref{def:prodcode}, so that
  \begin{align*}
    \cC^* &= \cA^*\otimes\cB^*, \hspace{1em} \cM_{\cC}^* = \cM_{\cA}^*\otimes\cM_{\cB}^*, \hspace{1em} \Enc_{\cC} = \Enc_{\cA}\otimes\Enc_{\cB}.
  \end{align*}
  Let $e_X,e_X',e_Z,e_Z'\in\cC^{r_X}$ be such that $\supp(e_X)\cup\supp(e_X')$ is $\cE_{Z,\Rd}^{\cC}$-avoiding, and $\supp(e_Z)\cup\supp(e_Z')$ is $\cE_{X,\Rd}^{\cC}$-avoiding.

  Then there exists a circuit $\cR'$ using quantum space $C^{r_X}=(A^1\times B^{r_X-1})\sqcup(A^0\times B^{r_X})$ and time $T=4$, where the four timesteps apply:
  \begin{enumerate}
  \item $\gMZ$ gates to all qubits in the set $(A^0\times B^{r_X})\sqcup((A^1\setminus V_{\bar{\ell}})\times B^{r_X-1})$.
  \item An appropriate classical function gate, which takes as input the measurement outcomes, and outputs a set of bits labeled by $V_{\bar{\ell}}\times B^{r_X-1}$.
  \item Classically-controlled-$\gX$ gates $\gCt{\gX}^{\otimes V_{\bar{\ell}}\times B^{r_X-1}}$, with classical control bits given by the output of the classical function gate, and target qubits given by the remaining active qubits.
  \item Termination gates on all remaining classical bits.
  \end{enumerate}
  For every finite set $\Rf{K}$, every operator $\rho\in\bC^{2^{V_{\bar{\ell}}^r\sqcup\Rf{K}}\times 2^{V_{\bar{\ell}}^r\sqcup\Rf{K}}}$, every reweighting $\zeta\in\bF_2^{V_{\bar{\ell}}\times B^{r_X-1}}$, and every postselection $b\in\bF_2^{(A^0\times B^{r_X})\sqcup((A^1\setminus V_{\bar{\ell}})\times B^{r_X-1})}$, this circuit has the property that
  \begin{align}
    \label{eq:eogoal}
    \cR'[\zeta,b]\left(\gX^{e_X}\gZ^{e_Z}\Enc_{\cC}^{r_X}(\rho)\gZ^{e_Z'}\gX^{e_X'}\right)
    &\propto \gX^{e_{X,2}}\gZ^{e_{Z,2}}(I_{V_{\bar{\ell}}}\otimes\Enc_{\cB}^{r_X-1})(\rho)\gZ^{e_{Z,2}'}\gX^{e_{X,2}'},
  \end{align}
  where above for $e\in\cC^{r_X}$ we let $e_2=e|_{V_{\bar{\ell}}\times B^{r_X-1}}$, and where $I_{V_{\bar{\ell}}}\otimes\Enc_{\cB}^{r_X-1}$ denotes the channel associated to the linear encoding map
  \begin{equation*}
    I_{V_{\bar{\ell}}}\otimes\Enc_{\cB}^{r_X-1} : \bF_2^{V_{\bar{\ell}}^r}=\bF_2^{V_{\bar{\ell}}}\otimes\bF_2^{V_{\bar{\ell}}^{r-1}}\rightarrow\bF_2^{V_{\bar{\ell}}}\otimes H^{r_X-1}(\cB),
  \end{equation*}
  i.e.~$I_{V_{\bar{\ell}}}\otimes\Enc_{\cB}^{r_X-1}$ simply applies $\Enc_{\cB}^{r_X-1}$ to $|V_{\bar{\ell}}|$ disjoint blocks of qubits.
\end{lemma}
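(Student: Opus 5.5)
We reduce to the two-dimensional picture of the technical overview, with the first tensor factor $\cA^*\cong\cC_*(\epsilon,\bar{\ell})$ playing the role of the ``classical'' factor and $\cB^*$ carried along as a passive spectator. Write the code state as $\Enc_{\cC}^{r_X}(\ket{x_1,x_2}) $ for $x_1\in\bF_2^{V_{\bar{\ell}}}$, $x_2\in\bF_2^{V_{\bar{\ell}}^{r-1}}$. Since $\Enc_{\cA}=\Enc_*$ maps $c_{\bar{\ell}}\mapsto(0,\dots,0,c_{\bar{\ell}})$, every codeword representative is $(\Enc_*(x_1)\otimes z)+\delta^{\cC}(w)$ with $z\in\Enc_{\cB}^{r_X-1}(x_2)$ and $w\in\cC^{r_X-1}=\cA^0\otimes\cB^{r_X-1}\oplus\cA^1\otimes\cB^{r_X-2}$. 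The first step is to compute the restriction of such a representative to the measured set $S=(A^0\times B^{r_X})\sqcup((A^1\setminus V_{\bar{\ell}})\times B^{r_X-1})$: the $\cA^1\otimes\cB^{r_X-2}$ part of $w$ contributes only through $I\otimes\delta^{\cB}$ and hence only on $A^1\times B^{r_X-1}$, while the $\cA^0\otimes\cB^{r_X-1}$ part $w_0$ contributes $(\delta^{\cA}\otimes I)(w_0)$ on $A^1\times B^{r_X-1}$ and $(I\otimes\delta^{\cB})(w_0)$ on $A^0\times B^{r_X}$. The key algebraic fact (dual to \Cref{lem:pdkey}, via the reducer map $\Rd$ of \Cref{def:pushdown}) is that the $\gMZ$ outcome $b$ on $S$ determines a unique correction: from $b|_{(A^1\setminus V_{\bar{\ell}})\times B^{r_X-1}}$ we recover $w_0$ level by level (since $\delta^{\cA}$ restricted to $V_\ell^0\to V_\ell^1$ is the identity), and then the value on $V_{\bar{\ell}}\times B^{r_X-1}$ equals $x_1\otimes z$ plus the known quantity $(\delta^{\cA}\otimes I)(w_0)|_{V_{\bar{\ell}}\times B^{r_X-1}}$ plus an element of $V_{\bar{\ell}}\otimes B^{r_X-1}(\cB)$. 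So the classical function gate in timestep 2 outputs $g(b):=(\delta^{\cA}\otimes I)(\Rd\otimes I(b|_{A^1\times B^{r_X-1}}))|_{V_{\bar{\ell}}\times B^{r_X-1}}$, which is linear in $b$, and timestep 3 applies $\gX^{g(b)}$.

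Second, I verify \Cref{eq:eogoal} in the noiseless case by expanding $\Enc_{\cC}^{r_X}(\rho)$ as $\sum\ket{\Enc(x)}\bra{\Enc(x')}\otimes\rho_{x,x'}$ and collapsing under the postselection $b$. The $A^0\times B^{r_X}$ outcomes are checked to be automatically consistent (they are $(I\otimes\delta^{\cB})(w_0)$, determined by $w_0$, with the $\cB$-coboundary freedom absorbed); what remains on $V_{\bar{\ell}}\times B^{r_X-1}$ after the correction is a uniform superposition over $x_1\otimes(\Enc_{\cB}(x_2)+B^{r_X-1}(\cB))$, i.e.\ $(I_{V_{\bar{\ell}}}\otimes\Enc_{\cB}^{r_X-1})(\rho)$ with the cross terms $x\ne x'$ surviving because $b$ does not depend on $x$ (only on $w_0$), so no coherence is destroyed. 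The $\gZ$-type phases: $\gZ$ measurement of qubits in $S$ kills off-diagonal $\cA^0\otimes\cB^{r_X-1}$ coboundary freedom but each logical $\bar{\gZ}$ of $\cC$ restricts correctly because $\Enc_*$ is supported in $V_{\bar{\ell}}$; the reweighting $\zeta$ only multiplies by $(-1)^{\zeta\cdot g(b)}$, a global scalar for fixed $b$, absorbed by $\propto$.

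Third, errors. For the $\gX$ part, $\gX^{e_X}$ shifts $b$ by $e_X|_S$; since $e_X$ is $\cE_{Z,\Rd}^{\cC}$-avoiding, by \Cref{def:redsupp} $e_X|_{(A^1\setminus V_{\bar{\ell}})\times B^{r_X-1}}=0$, so the $\Rd$ input is unchanged by the error except through its $A^0\times B^{r_X}$ part, which does not enter $g$; hence the correction is unaffected and the residual is exactly $\gX^{e_{X,2}}$. For the $\gZ$ part, $\gZ$ errors on measured qubits disappear under $\gMZ$ (they act diagonally and become phases that cancel between bra and ket or give the factor $(-1)^{(e_Z+e_Z')|_S\cdot b}$, a scalar), and on $V_{\bar{\ell}}\times B^{r_X-1}$ they commute through $\gX^{g(b)}$ up to sign, leaving $\gZ^{e_{Z,2}}$; the $\cE_{X,\Rd}^{\cC}$-avoiding hypothesis is needed to ensure the $\gZ$ error has no component on $A^0\times B^{r_X}$ that would, before measurement, be inequivalent—I expect to use it to say $e_Z|_{A^0\times B^{r_X}}$ lies where the dual reducer acts trivially so that the phase is $x$-independent.

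The main obstacle is this last point: showing the phase factor arising from $\gZ$ errors on $S$ does not depend on the logical labels $x,x'$ (which would otherwise be a logical $\gZ$ error). I would handle it by showing $(e_Z+e_Z')|_S\cdot(\Enc_*(x_1)\otimes z+\delta^{\cC}(w))|_S$ depends only on $w_0$ (fixed by $b$), since $\Enc_*(x_1)$ vanishes off $V_{\bar{\ell}}$ and the $\cE_{X,\Rd}$-avoidance means $e_Z$ on $A^0\times B^{r_X}$ pairs trivially with the $\cA^1\otimes\cB^{r_X-2}$ coboundaries.
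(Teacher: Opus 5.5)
Your proposal is correct and follows the paper's proof. It uses the same classical function $((\delta^{\cA}|_2\circ\Rd)\otimes I)(b_1)$, built from $\Rd$ inverting $\delta^{\cA}|_1$; note this recovers $w_0$ only up to $(\Rd\otimes\delta^{\cB})(w_1)$, a harmless discrepancy lying in $\bF_2^{V_{\bar{\ell}}}\otimes B^{r_X-1}(\cB)$. It also uses $\cE_{Z,\Rd}^{\cC}$-avoidance in the same way to get $e_X|_{(A^1\setminus V_{\bar{\ell}})\times B^{r_X-1}}=0$.

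The ``main obstacle'' in your last paragraph is not real, and your proposed fix for it is inaccurate as well as unnecessary: $b$ does not fix $w_0$, and no $\cE_{X,\Rd}^{\cC}$-avoidance is needed. After postselecting on $b$, every surviving basis string has its restriction to $S$ pinned to $b+e_X|_S$ (resp.\ $b+e_X'|_S$ on the bra side). So the $\gZ$-phase is a scalar independent of $x$, exactly as you already observed in your third step. Indeed, the paper never uses the hypothesis on $e_Z,e_Z'$ in this lemma.
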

\begin{proof}
  To begin, we decompose
  \begin{align*}
    \rho
    &= \sum_{x,x'\in\bF_2^{V_{\bar{\ell}}^r}}\ket{x}\bra{x'}\otimes\rho_{x,x'},
  \end{align*}
  where each $\rho_{x,x'}\in\bF_2^{2^{\Rf{K}}\times 2^{\Rf{K}}}$. Recall that $\Enc_{\cC}=\Enc_{\cA}\otimes\Enc_{\cB}$, where by \Cref{def:classenc}, $\Enc_{\cA}$ simply applies the identity map in $V_{\bar{\ell}}$. Therefore
  \begin{align}
    \label{eq:eolog}
    \Enc_{\cC}(\rho)
    &= \sum_{x,x'\in\bF_2^{V_{\bar{\ell}}^r}} \ket{(I_{V_{\bar{\ell}}}\otimes\Enc_{\cB}^{r_X-1})(x)+B^{r_X}(\cC)}\bra{(I_{V_{\bar{\ell}}}\otimes\Enc_{\cB}^{r_X-1})(x')+B^{r_X}(\cC)}.
  \end{align}
  
  For $i\in\{r_X-1,r_X\}$, define the disjoint subsets $C^i_0,C^i_1,C^i_2\subseteq C^i$ by
  \begin{align*}
    C^i_0 &= A^0\times B^{i},\; \hspace{1em} C^i_1 = (A^1\setminus V_{\bar{\ell}})\times B^{i-1},\; \hspace{1em} C^i_2 = V_{\bar{\ell}}\times B^{i-1}.
  \end{align*}
  For $S\subseteq\{0,1,2\}$, let $C^i_S=\bigsqcup_{j\in S}C^i_j$, so that $C^i=C^i_{012}$. For $c\in\cC^i_S$ and $S'\subseteq S$, we also denote the restriction of $c$ to components in $C^i_{S'}$ by $c_{S'}=c|_{C^i_{S'}}$.

  With this notation,
  \begin{align*}
    B^{r_X}(\cC)
    &= \left\{(\delta^{\cA}\otimes I_B)(a_0)+(I_{A^1\setminus V_{\bar{\ell}}}\otimes\delta^{\cB})(a_1)+(I_{V_{\bar{\ell}}}\otimes\delta^{\cB})(a_2):a\in C^{r_X-1}_{012}\right\}.
  \end{align*}
  Because elements in the image of $\Enc_{\cB}^{r_X-1}$ in \Cref{eq:eolog} are cohomology classes in $H^{r_X-1}(\cB)$, the $(I_{V_{\bar{\ell}}}\otimes\delta^{\cB})(a_2)$ term above can be absorbed into the expression $(I_{V_{\bar{\ell}}}\otimes\Enc_{\cB}^{r_X-1})(x)$ in \Cref{eq:eolog}, so \Cref{eq:eolog} becomes
  \begin{align}
    \label{eq:eolog2}
    \begin{split}
      \Enc_{\cC}(\rho)
      &\propto \sum_{x,x',a,a'} \ket{(I_{A^0}\otimes\delta^{\cB})(a_0),\; (I_{V_{\bar{\ell}}}\otimes\Enc_{\cB}^{r_X-1})(x)+(\delta^{\cA}\otimes I_B)(a_0)+(I_{A^1\setminus V_{\bar{\ell}}}\otimes\delta^{\cB})(a_1)} \\
      &\hspace{4em} \bra{(I_{A^0}\otimes\delta^{\cB})(a_0'),\; (I_{V_{\bar{\ell}}}\otimes\Enc_{\cB}^{r_X-1})(x')+(\delta^{\cA}\otimes I_B)(a_0')+(I_{A^1\setminus V_{\bar{\ell}}}\otimes\delta^{\cB})(a_1')} \otimes \rho_{x,x'},
    \end{split}
  \end{align}
  where the sum above is over all $x,x'\in\bF_2^{V_{\bar{\ell}}^r}$ and all $a,a'\in C^{r_X-1}_{01}$.

  Now after running the $b$-postselected $\gMZ$ measurements in the first timestep of the execution of $\cR'[\zeta,b]\left(\gX^{e_X}\gZ^{e_Z}\Enc_{\cC}^{r_X}(\rho)\gZ^{e_Z'}\gX^{e_X'}\right)$, the resulting quantum state supported on qubits $C^{r_X}_2$ is
  \begin{align}
    \label{eq:eopm}
    \bra{b}\gX^{e_X}\gZ^{e_Z}\Enc_{\cC}^{r_X}(\rho)\gZ^{e_Z'}\gX^{e_X'}\ket{b},
  \end{align}
  where here we recall that $b\in\cC^{r_X}_{01}$ and $e_X,e_X',e_Z,e_Z'\in\cC^{r_X}_{012}$. By the assumption that $e_X,e_X'$ are $\cE_{Z,\Rd}^{\cC}$-avoiding (see \Cref{def:redsupp}), we have $e_{X,1},e_{X',1}=0$. Meanwhile, the Pauli errors $\gZ^{e_{Z,01}}$ and $\gZ^{e_{Z,012}'}$ can be absorbed into the bra $\bra{b}$ and ket $\ket{b}$ respectively, up to a global phase depending only on $b,b',e_X,e_X',e_Z,e_Z'$. Hence the state in \Cref{eq:eopm} is proportional to
  \begin{align*}
    \gX^{e_{X,2}}\gZ^{e_{Z,2}}\bra{b_0+e_{X,0},\;b_1} \Enc_{\cC}^{r_X}(\rho) \ket{b_0'+e_{X,0}',\;b_1'}\gZ^{e_{Z,2}'}\gX^{e_{X,2}'}.
  \end{align*}
  By \Cref{eq:eolog2}, this expression is in turn proportional to
  \begin{align}
    \label{eq:eopm2}
    \begin{split}
      \bra{b}\gX^{e_X}\gZ^{e_Z}\Enc_{\cC}^{r_X}(\rho)\gZ^{e_Z'}\gX^{e_X'}\ket{b}
      &\propto \sum_{x,x',a,a'} \gX^{e_{X,2}}\gZ^{e_{Z,2}} \ket{(I_{V_{\bar{\ell}}}\otimes\Enc_{\cB}^{r_X-1})(x)+(\delta^{\cA}\otimes I_B)(a_0)_2} \\
      &\hspace{3em} \bra{(I_{V_{\bar{\ell}}}\otimes\Enc_{\cB}^{r_X-1})(x')+(\delta^{\cA}\otimes I_B)(a_0')_2} \gZ^{e_{Z,2}'}\gX^{e_{X,2}'} \otimes \rho_{x,x'},
    \end{split}
  \end{align}
  where the sum above is over all $x,x'\in\bF_2^{V_{\bar{\ell}}^r}$ and all $a,a'\in C^{r_X-1}_{01}$ that satisfy
  \begin{align}
    \label{eq:eosumcond}
    \begin{split}
      (I_{A^0}\otimes\delta^{\cB})(a_0) = b_0+e_{X,0}, &\hspace{1em} (I_{A^0}\otimes\delta^{\cB})(a_0') = b_0+e_{X,0}' \\
      (\delta^{\cA}\otimes I_B)(a_0)_1+(I_{A^1\setminus V_{\bar{\ell}}}\otimes\delta^{\cB})(a_1) = b_1, &\hspace{1em} (\delta^{\cA}\otimes I_B)(a_0')_1+(I_{A^1\setminus V_{\bar{\ell}}}\otimes\delta^{\cB})(a_1') = b_1.
    \end{split}
  \end{align}
  If this sum in \Cref{eq:eopm2} vanishes, our circuit outputs $0$, which is trivially proportional to the desird output state. Therefore assume that the sum does not vanish.

  To obtain the desired output state in \Cref{eq:eogoal}, we want to apply classically-controlled-$\gX$ gates to cancel the $(\delta^{\cA}\otimes I_B)(a_0)_2,(\delta^{\cA}\otimes I_B)(a_0')_2$ term in \Cref{eq:eopm2}. We now show how to compute this $\gX$-correction from the measurement outcome $b$.

  Let $\Rd:\bF_2^{A^1\setminus V_{\bar{\ell}}}\rightarrow\bF_2^{A_0}$ denote the reducer map defined in \Cref{def:pushdown}, where we are able to replace the domain $\cA^1=\bF_2^{A^1}$ with $\bF_2^{A^1\setminus V_{\bar{\ell}}}$ because by definition $\Rd$ maps all elements in $\bF_2^{V_{\bar{\ell}}}$ to $0$. Also let $\delta^{\cA}|_1:\bF_2^{A^0}\rightarrow\bF_2^{A^1\setminus V_{\bar{\ell}}}$ (resp.~$\delta^{\cA}|_2:\bF_2^{A^0}\rightarrow\bF_2^{V_{\bar{\ell}}}$) be the map $\delta^{\cA}|_1(c)=\delta^{\cA}(c)_{A^1\setminus V_{\bar{\ell}}}$ (resp.~$\delta^{\cA}|_2(c)=\delta^{\cA}(c)_{V_{\bar{\ell}}}$) that simply applies $\delta^{\cA}$, and then restricts the output to components in $A^1\setminus V_{\bar{\ell}}$ (resp.~$V_{\bar{\ell}}$). Note that by definition $|A^0|=|A^1\setminus V_{\bar{\ell}}|$. In fact, \Cref{lem:pdkey} implies that $\delta^{\cA}|_1\circ\Rd=I$, that is, $\delta^{\cA}|_1$ and $\Rd$ are inverses of each other.

  It then follows by \Cref{eq:eosumcond} that
  \begin{align}
    \label{eq:eocorr}
    \begin{split}
      ((\delta^{\cA}|_2\circ\Rd)\otimes I_B)(b_1)
      &= (\delta^{\cA}|_2\otimes I_B)(a_0)+((\delta^{\cA}|_2\circ\Rd)\otimes\delta^{\cB})(a_1) \\
      &\in (\delta^{\cA}\otimes I_B)(a_0)_2+\bF_2^{V_{\bar{\ell}}}\otimes B^{r_X-1}(\cB),
    \end{split}
  \end{align}
  and also that
  \begin{align}
    \label{eq:eocorrp}
    \begin{split}
      ((\delta^{\cA}|_2\circ\Rd)\otimes I_B)(b_1)
      &= (\delta^{\cA}|_2\otimes I_B)(a_0')+((\delta^{\cA}|_2\circ\Rd)\otimes\delta^{\cB})(a_1') \\
      &\in (\delta^{\cA}\otimes I_B)(a_0')_2+\bF_2^{V_{\bar{\ell}}}\otimes B^{r_X-1}(\cB).
    \end{split}
  \end{align}
  Therefore we define the classical function gate in $\cR'$ to take as input $b=b_{01}\in\bF_2^{C^{r_X}_{01}}$, and to output $((\delta^{\cA}|_2\circ\Rd)\otimes I_B)(b_1)=((\delta^{\cA}\circ\Rd)\otimes I_B)(b_1)_2\in\bF_2^{C^{r_X}_2}$. The subsequent $\zeta$-reweighted classically-controlled-$\gX$ correction $\gCt{\gX}^{\otimes C^{r_X}_2}[\zeta]$ therefore applies $\gX^{((\delta^{\cA}\circ\Rd)\otimes I_B)(b_1)_2}$ to the state in \Cref{eq:eopm2}, while also multiplying by the phase $(-1)^{\zeta\cdot((\delta^{\cA}\circ\Rd)\otimes I_B)(b_1)_2}$ that depends only on $b$ and on $\zeta$. Hence the final output state of $\cR'$ is
  \begin{align*}
    \hspace{1em}&\hspace{-1em} \cR'[\zeta,b]\left(\gX^{e_X}\gZ^{e_Z}\Enc_{\cC}^{r_X}(\rho)\gZ^{e_Z'}\gX^{e_X'}\right) \\
                &\propto \sum_{x,x'\in\bF_2^{V_{\bar{\ell}}^r}} \gX^{e_{X,2}}\gZ^{e_{Z,2}} \ket{(I_{V_{\bar{\ell}}}\otimes\Enc_{\cB}^{r_X-1})(x)}\bra{(I_{V_{\bar{\ell}}}\otimes\Enc_{\cB}^{r_X-1})(x')} \gZ^{e_{Z,2}'}\gX^{e_{X,2}'} \otimes \rho_{x,x'} \\
                &= \gX^{e_{X,2}}\gZ^{e_{Z,2}} (I_{V_{\bar{\ell}}}\otimes\Enc_{\cB}^{r_X-1})(\rho) \gZ^{e_{Z,2}'}\gX^{e_{X,2}'},
  \end{align*}
  where the first proportionality above holds because the $\bF_2^{V_{\bar{\ell}}}\otimes B^{r_X-1}(\cB)$ terms in \Cref{eq:eocorr} and \Cref{eq:eocorrp} are absorbed into the $(I_{V_{\bar{\ell}}}\otimes\Enc_{\cB}^{r_X-1})(x)$ and $(I_{V_{\bar{\ell}}}\otimes\Enc_{\cB}^{r_X-1})(x')$ terms respectively. Thus \Cref{eq:eogoal} holds, as desired.
\end{proof}

We now apply \Cref{lem:ejectone} to prove \Cref{lem:eject}.

\begin{proof}[Proof of \Cref{lem:eject}]
  The desired circuit $\cR$ uses $T=5$ timesteps, and will perform a computation equivalent to $r$ iterations of the subroutine $\cR'$ from \Cref{lem:ejectone}, though with all $r$ iterations compressed into a single round of measurements followed by corrections.

  To begin, we define a circuit $\cR_0$ that first performs $r_X$ successive iterations of the subroutine $\cR'$ from \Cref{lem:ejectone}, where the $i$th iteration specifically performs $|V_{\bar{\ell}}^{i-1}|$ disjoint applications of \Cref{lem:ejectone} on the complex $\cA\otimes\cB=\cC^*(\epsilon,\bar{\ell},r_X-(i-1),r_Z)$. Then, $\cR_0$ performs $r_Z$ successive iterations of the dualized version of the subroutine $\cR'$. That is, the $i$th such iteration performs $|V_{\bar{\ell}}^{r_X+i-1}|$ disjoint applications of \Cref{lem:ejectone} on the dualized complex $\cA\otimes\cB={\cC^\vee}^*(\epsilon,\bar{\ell},0,r_Z-(i-1))\cong\cC^*(\epsilon,\bar{\ell},r_Z-(i-1),0)$.

  For every finite set $\Rf{K}$, every operator $\rho\in\bC^{2^{V_{\bar{\ell}}^r\sqcup\Rf{K}}\times 2^{V_{\bar{\ell}}^r\sqcup\Rf{K}}}$, every reweighting $\zeta$ and postselection $\xi$ for $\cR_0$, and every $e_X,e_X',e_Z,e_Z'\in\cC^{r_X}$ such that $\supp(e_X)\cup\supp(e_X')$ is $\cE_{Z,\Rd}^{\cC}$-avoiding and $\supp(e_Z)\cup\supp(e_Z')$ is $\cE_{X,\Rd}^{\cC}$-avoiding, then \Cref{lem:ejectone} implies that
  \begin{align}
    \label{eq:enofault}
    \cR_0[\zeta,\xi]\left(\gX^{e_X}\gZ^{e_Z}\Enc_{\cC}^{r_X}(\rho)\gZ^{e_Z'}\gX^{e_X'}\right)
    &\propto \gX^{\tilde{e}_X}\gZ^{\tilde{e}_Z}\rho\gZ^{\tilde{e}_Z'}\gX^{\tilde{e}_X'},
  \end{align}
  where above for $e\in\cC^{r_X}$ we let $\tilde{e}=e|_{V_{\bar{\ell}}^r}$.
  % Letting $\ps_0$ be the set of all measurement gates in $\cR_0$, then by definition $\cR_0[\zeta](\cdot)=\sum_{\xi\in\bF_2^{\ps_0}}\cR_0[\zeta,\xi](\cdot)$. Hence summing the above equation over all postselections $\xi\in\bF_2^{\ps_0}$ gives that
  % \begin{align}
  %   \label{eq:enofault}
  %   \cR_0[\zeta]\left(\gX^{e_X}\gZ^{e_Z}\Enc_{\cC}^{r_X}(\rho)\gZ^{e_Z'}\gX^{e_X'}\right)
  %   &\propto \gX^{\tilde{e}_X}\gZ^{\tilde{e}_Z}\rho\gZ^{\tilde{e}_Z'}\gX^{\tilde{e}_X'}.
  % \end{align}

  The circuit $\cR_0$ uses quantum space $\cC^{r_X}$ and time $T_0=4r$. These $T$ timesteps are partitioned into $r$ rounds of $4$ timesteps. Each such round consists of multiple independent applications of a circuit $\cR'$ given by \Cref{lem:ejectone}, meaning it applies $\gMX$ or $\gMZ$ measurements to some qubits, then applies a classical function gate to compute an appropriate Pauli $\gX$ or $\gZ$ correction, which is applied to the remaining qubits via a classically-controlled-$\gX$ or $\gZ$ gate. All remaining classical bits are then terminated.

  However, this circuit $\cR_0$ can be compressed into a circuit $\cR$ that still uses quantum space $\cC^{r_X}$, but now uses time $T=5$, and yet still satisfies \Cref{eq:enofault}, meaning that
  \begin{align}
    \label{eq:enofaultcomp}
    \cR[\zeta,\xi]\left(\gX^{e_X}\gZ^{e_Z}\Enc_{\cC}^{r_X}(\rho)\gZ^{e_Z'}\gX^{e_X'}\right)
    &\propto \gX^{\tilde{e}_X}\gZ^{\tilde{e}_Z}\rho\gZ^{\tilde{e}_Z'}\gX^{\tilde{e}_X'}.
  \end{align}
  Specifically, the first timestep of $\cR$ applies all $\gMX$ and $\gMZ$ measurements applied across all timesteps of $\cR_0$, so that all qubits in $C^{r_X}\setminus V_{\bar{\ell}}^r$ are measured. The second timestep of $\cR$ applies a classical function gate that takes as input these measurement outcomes, and outputs classical bitstrings $c_X,c_Z\in\bF_2^{V_{\bar{\ell}}^r}$, which specify the Pauli $\gX$ and $\gZ$ corrections to be applid in the third and fourth timesteps via the classically-controlled Pauli gates $\gCt{\gX}^{\otimes V_{\bar{\ell}}^r}$ and $\gCt{\gZ}^{\otimes V_{\bar{\ell}}^r}$, respectively. The fifth timestep of $\cR$ terminates all remaining classical bits.

  To ensure that \Cref{eq:enofaultcomp} holds, we must choose the classical function gate applied in the second timestep of $\cR$ to ensure that the resulting Pauli corrections $\gX^{c_X}$ and $\gZ^{c_Z}$ have the same affect on qubits $V_{\bar{\ell}}^r$ as the accumulated affect of all Pauli corrections applied during the execution of $\cR_0$. But such a classical function gate must exist, because any circuit consisting entirely of $\gMX$ and $\gMZ$ measurements and classically-controlled $\gX$ and $\gZ$ corrections can always be compressed into a a single round of measurements, followed by a single round of corrections. Indeed, we simply compute the desired Pauli corrections for $\cR$ in the order that they occur in $\cR_0$. If a $\gZ$ (resp.~$\gX$) correction is ever applied prior to a $\gMX$ (resp.~$\gMZ$) measurement in $\cR_0$, we flip the measurement outcome in the classical memory of $\cR$ before computing any subsequent corrections. No such adjustment is needed for $\gZ$ (resp.~$\gX$) corrections applied prior to $\gMZ$ (resp.~$\gMX$) measurements, as $\gZ$ (resp.~$\gX$) commutes with $\gMZ$ (resp.~$\gMX$).

  We are now ready to prove that $\bfG=(\cR,\cE_{\mathrm{run}}^{\sqcup T},D_{\mathrm{in}},D_{\mathrm{out}},\emptyset)$ is a fault-tolerant gadget for the $|V_{\bar{\ell}}^r|$-qubit identity channel, as stated in \Cref{lem:eject}. Fix a $\cE_{\mathrm{in}}$-avoiding pair of sets $E_{\mathrm{in}}=(E_{\mathrm{in},X}\subseteq C^{r_X},\;E_{\mathrm{in},Z}\subseteq C^{r_X})$, and a $\cE_{\mathrm{run}}^{\sqcup T}$-avoiding set $E_{\mathrm{run}}\subseteq C^{r_X}\times[T]$. Define the set $E_{\mathrm{out}}\subseteq V_{\bar{\ell}}^r$ by
  \begin{equation*}
    E_{\mathrm{out}} = (E_{\mathrm{in},X}\cap V_{\bar{\ell}}^r) \cup (E_{\mathrm{in},Z}\cap V_{\bar{\ell}}^r) \cup \bigcup_{t\in[T]}(E_{\mathrm{run}}\cap(V_{\bar{\ell}}^r\times\{t\})).
  \end{equation*}
  That is, $E_{\mathrm{out}}$ contains all qubits in $V_{\bar{\ell}}^r$ that lie in the support of the input or fault error sets across all timesteps. By the definition of $\cE_{\mathrm{in}},\cE_{\mathrm{run}}\cE_{\mathrm{out}}$ in the statement of \Cref{lem:eject}, we have
  \begin{equation*}
    |E_{\mathrm{out}}| = |E_{\mathrm{out}}|_\beta < 2\omega_{\mathrm{in}}+5\omega_{\mathrm{run}} = \omega_{\mathrm{out}},
  \end{equation*}
  so $E_{\mathrm{out}}$ is $\cE_{\mathrm{out}}$-avoiding.

  By \Cref{lem:paulift}, our goal is to show that for every finite set $\Rf{K}$, every operator $\rho\in\bC^{2^{V_{\bar{\ell}}^r\sqcup\Rf{K}}\times 2^{V_{\bar{\ell}}^r\sqcup\Rf{K}}}$, every Pauli $\comp{E_{\mathrm{in}}}$-deviation $\sigma$ of $\Enc_{\mathrm{in}}(\rho)$, and every $\comp{E_{\mathrm{run}}}$-avoiding Pauli fault $\cF$ and reweighting $\zeta$ for $\cR$, then $\cR[\cF,\zeta](\sigma)$ is a $\comp{E_{\mathrm{out}}}$-deviation of $\rho$. By definition there exist $e_X,e_X'$ supported in $E_{\mathrm{in},Z}$ and $e_Z,e_Z'$ supported in $E_{\mathrm{in},X}$ such that
  \begin{equation*}
    \sigma \propto \gX^{e_X}\gZ^{e_Z}\Enc_{\cC}^{r_X}(\rho)\gZ^{e_Z'}\gX^{e_X'}.
  \end{equation*}
  Because the input family of bad sets $\cE_{\mathrm{in}}$ contains $(\cE_{X,\Rd}^{\cC},\cE_{Z,\Rd}^{\cC})$ by \Cref{def:badfams}, $(\supp(e_Z)\cup\supp(e_Z'),\supp(e_X)\cup\supp(e_X'))$ is $(\cE_{X,\Rd}^{\cC},\cE_{Z,\Rd}^{\cC})$-avoiding. Therefore letting $\ps=C^{r_X}\setminus V_{\bar{\ell}}^r$ denote the set of all measurement gates in $\cR$, we may apply \Cref{eq:enofaultcomp} to conclude that for every postselection $\xi\in\bF_2^{\ps}$, we have
  \begin{align*}
    \cR[\zeta,\xi](\sigma) &\propto \gX^{\tilde{e}_X}\gZ^{\tilde{e}_Z}\rho\gZ^{\tilde{e}_Z'}\gX^{\tilde{e}_X'},
  \end{align*}
  where recall here that for $e\in\cC^{r_X}$ we let $\tilde{e}=e|_{V_{\bar{\ell}}^r}$.

  Now we consider the execution $\cR[\cF,\zeta,\xi](\sigma)$ that additionally includes the Pauli fault $\cF$. Recall that $\cF$ acts on qubits $V_{\bar{\ell}}^r$, and the only gates experienced by these qubits are classically-controlled-Pauli gates, where the values of the classical control bits are uniquely determined by the value of the postselection $\xi$. Hence we may propagate all the Pauli fault errors through to the end of the circuit, while simply introducing a phase depending only on $\cF$ and on $\xi$, so that
  \begin{align*}
    \cR[\cF,\zeta,\xi](\sigma) &\propto F_T\circ\cdots\circ F_1(\gX^{\tilde{e}_X}\gZ^{\tilde{e}_Z}\rho\gZ^{\tilde{e}_Z'}\gX^{\tilde{e}_X'}).
  \end{align*}
  By definition $\cR[\cF,\zeta](\cdot)=\sum_{\xi\in\bF_2^{\ps}}\cR_0[\cF,\zeta,\xi](\cdot)$. Hence summing the above equation over all postselections $\xi\in\bF_2^{\ps}$ gives that
  \begin{align*}
    \cR[\cF,\zeta](\sigma) &\propto F_T\circ\cdots\circ F_1(\gX^{\tilde{e}_X}\gZ^{\tilde{e}_Z}\rho\gZ^{\tilde{e}_Z'}\gX^{\tilde{e}_X'}).
  \end{align*}
  Here $F_T\circ\cdots\circ F_1$ is by definition a Pauli superoperator supported inside $\bigcup_{t\in[T]}(E_{\mathrm{run}}\cap(V_{\bar{\ell}}^r\times\{t\}))\subseteq E_{\mathrm{out}}$. Also recall that $\tilde{e}_X,\tilde{e}_X',\tilde{e}_Z,\tilde{e}_Z'$ are all supported inside $(E_{\mathrm{in},X}\cap V_{\bar{\ell}}^r) \cup (E_{\mathrm{in},Z}\cap V_{\bar{\ell}}^r)\subseteq E_{\mathrm{out}}$. Thus $\cR[\cF,\zeta](\sigma)$ is a $\comp{E_{\mathrm{out}}}$-deviation of $\rho$, as desired.
\end{proof}

\subsection{Basic Gadgets}
\label{sec:basicgad}
In this section, we present some basic gadgets that are generally well-known in the literature. However, the we present the details for our specific codes within our fault-tolerance framework.

\subsubsection{Idle (Without Error-Correction)}
In this section, we present the basic gadget that simply idles, i.e.~performs only identity gates, for some number $T$ of timesteps, while accumulating errors from a fault.

\begin{lemma}
  \label{lem:idle}
  Let $\cC^*=\cC^*(\epsilon,\bar{\ell},r_X,r_Z)$, $\cM^*=\cM^*(\epsilon,\bar{\ell},r_X,r_Z)$, and $\Enc_{\cC}:\cM^*\rightarrow\cC^*$ as in \Cref{def:prodcode}. Let $Q=(Q_X,Q_Z)$ be the $[[n,k]]$ code at level $r_X$ of $\cC^*$.

  For arbitrary $\eta,\gamma_{\mathrm{in}},\omega_{\mathrm{in}},\gamma_{\mathrm{run}},\omega_{\mathrm{run}}>0$ and $T\in\bN$, let
  \begin{equation*}
    \gamma_{\mathrm{out}} = \gamma_{\mathrm{in}}+T\gamma_{\mathrm{run}}, \hspace{1em} \omega_{\mathrm{out}} = \omega_{\mathrm{in}}+T\omega_{\mathrm{run}}
  \end{equation*}
  and
  \begin{align*}
    \cE_{\mathrm{run}} &= \cE_{\mathrm{run}}^{\cC}(\eta,\gamma_{\mathrm{run}},\omega_{\mathrm{run}};\beta).
  \end{align*}
  (see \Cref{def:badfault}). For $\alpha\in\{\mathrm{in},\mathrm{out}\}$, define the decorated code
  \begin{align*}
    D_\alpha = (Q,\; \Enc_{\cC}^{r_X},\; \cE_\alpha = \cE^{\cC}(\eta,\gamma_\alpha,\omega_\alpha;\beta)).
  \end{align*}
  (see \Cref{def:badfams}).

  Then there exists a fault-tolerant gadget $\bfG=(\cR,\cE_{\mathrm{run}}^{\sqcup T},D_{\mathrm{in}},D_{\mathrm{out}},\emptyset)$ for the channel $\bar{O}=I_{M^{r_X}}$, where $\cR$ is the quantum circuit using space $C^{r_X}$ and time $T$ that simply idles, i.e.~applies identity gates, in every timestep.
\end{lemma}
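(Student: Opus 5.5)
The plan is to verify \Cref{def:faulttol} directly via \Cref{lem:paulift}, reducing to Pauli input deviations and Pauli faults. Since $\cR$ applies only identity gates, there are no classically-controlled gates and no measurements. Hence reweightings and postselections are vacuous, and $\cR[\cF](\sigma)=F_T\circ\cdots\circ F_1(\sigma)$.

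Fix an $\cE_{\mathrm{in}}$-avoiding pair $E_{\mathrm{in}}=(E_{\mathrm{in},X},E_{\mathrm{in},Z})$ and a $\cE_{\mathrm{run}}^{\sqcup T}$-avoiding $E_{\mathrm{run}}=\bigsqcup_t E_t$, so each $E_t\subseteq C^{r_X}$ is $\cE_{\mathrm{run}}^{\cC}(\eta,\gamma_{\mathrm{run}},\omega_{\mathrm{run}};\beta)$-avoiding. The obvious choice $E_{\mathrm{in}}\cup\bigcup_t E_t$ for the output set fails, because the raw fault support need not avoid the reducer singletons $\cE_{Z,\Rd}^{\cC}$. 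Following the proof of \Cref{lem:stateprep}, I will instead reduce the accumulated errors. Define
\[
E_{\mathrm{out},Z}=E_{\mathrm{in},Z}\cup\bigcup_{t}\Bigl(\Shad^{\times r_X}(E_t)\cup\bigl(C^{r_X}\setminus(V_Z\cup\cE_{Z,\Rd}^{\cC})\bigr)\Bigr),
\]
and define $E_{\mathrm{out},X}$ dually using $\Shad^{\times r_Z}$, $V_X$ and $\cE_{X,\Rd}^{\cC}$.

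Given a Pauli deviation $\gX^{e_X}\gZ^{e_Z}\Enc(\rho)\gZ^{e_Z'}\gX^{e_X'}$ and the Pauli fault, the output is proportional to $\gX^{e_X+f_X}\gZ^{e_Z+f_Z}\Enc(\rho)\cdots$, where $f_X=\sum_t f_{X,t}$ and each $\supp(f_{X,t})\subseteq E_t$. Because $\Enc(\rho)$ is invariant under stabilizers, I will replace each $f_{X,t}$ by its reduction $\Rdn(f_{X,t})$, which differs by a coboundary. By \Cref{lem:redfault}, $\supp(\Rdn(f_{X,t}))\subseteq\bar E_t:=\Shad^{\times r_X}(E_t)\cup(C^{r_X}\setminus(V_Z\cup\cE_{Z,\Rd}^{\cC}))$. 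The input term $e_X$ is already supported in $E_{\mathrm{in},Z}$, so the output is a $\comp{E_{\mathrm{out}}}$-deviation. The $\gZ$ part is handled dually. Note that $E_{\mathrm{out}}$ depends only on $E_{\mathrm{in}}$ and $E_{\mathrm{run}}$, as required.

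It remains to show that $E_{\mathrm{out},Z}$ is $\cE_Z^{\cC}(\eta,\gamma_{\mathrm{out}},\omega_{\mathrm{out}};\beta)$-avoiding; the $X$ side is the same argument on the dual complex. By \Cref{remark:badred}, every bad set lies in $V_Z\cup\cE_{Z,\Rd}^{\cC}$, so the complement pieces are harmless. The part $E_{\mathrm{in},Z}$ avoids $\cE_{Z,\Rd}^{\cC}$ because $\cE_{\mathrm{in}}$ contains those singletons, and each $\Shad^{\times r_X}(E_t)\subseteq V_Z$ avoids them by \Cref{claim:redcapVZ}. For the weight bound, $|E_{\mathrm{in},Z}\cap V_Z|_\beta<\omega_{\mathrm{in}}$, and each $|\Shad^{\times r_X}(E_t)|_\beta\le|E_t|_\beta<\omega_{\mathrm{run}}$ by \Cref{claim:rfsb}. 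These sum to less than $\omega_{\mathrm{out}}$.

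The cluster condition is the main obstacle. Fix a connected subgraph $V'$ in one layer $\{v\}\times C(\epsilon,\bar\ell,0,r_Z)$ with $|V'|\ge\eta$. The in-piece meets $V'$ in fewer than $\gamma_{\mathrm{in}}|V'|$ elements. Each $\Shad^{\times r_X}(E_t)$ is $\cE_Z^{\cC}(\eta,\gamma_{\mathrm{run}})$-avoiding by \Cref{def:badfault}, so it meets $V'$ in fewer than $\gamma_{\mathrm{run}}|V'|$ elements. A union bound over the $T+1$ pieces then gives density below $\gamma_{\mathrm{in}}+T\gamma_{\mathrm{run}}=\gamma_{\mathrm{out}}$.

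The step I expect to be delicate is the translation of the avoidance notion in \Cref{def:clusterfam}: a set avoids $\cE(G,\eta,\gamma)$ exactly when every subset $S$ and large connected $G'$ with $S\subseteq V(G')$ has density below $\gamma$. This is a density bound on the intersection, so the union bound is legitimate. I also need to check that $\beta\ge 2\Delta$ is implicitly assumed, as in \Cref{lem:redfault}; otherwise I would add that hypothesis.
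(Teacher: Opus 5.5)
Your proposal is correct and matches the paper's proof: the paper likewise replaces the accumulated fault Pauli by its reduction, bounds its support via \Cref{lem:redfault}, and sets $E_{\mathrm{out},Z}=E_{\mathrm{in},Z}\cup\bar E_X$ (and dually for $X$). The only difference is bookkeeping. The paper first forms the union $E$ of all timestep layers and asserts without comment that it is $\cE_{\mathrm{run}}^{\cC}(\eta,T\gamma_{\mathrm{run}},T\omega_{\mathrm{run}};\beta)$-avoiding. You instead take shadows layer by layer and do the density and weight union bounds explicitly, which gives the identical set since $\Shad^{\times r_X}$ distributes over unions and $\Rdn$ is linear. Your remark that $\beta\ge 2\Delta$ is needed (for \Cref{lem:redfault} and \Cref{claim:rfsb}) is right; the paper's proof also uses this hypothesis implicitly.
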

\begin{proof}
  Let $E_{\mathrm{in}}=(E_{\mathrm{in},X}\subseteq C^{r_X},\;E_{\mathrm{in},Z}\subseteq C^{r_X})$ be a $\cE_{\mathrm{in}}$-avoiding pair of sets, and let $E_{\mathrm{run}}\subseteq C^{r_X}\times T$ be a  $\cE_{\mathrm{run}}$-avoiding set. Let
  \begin{equation*}
    E = \bigcup_{t\in[T]}(E_{\mathrm{run}}\cap(C^{r_X}\times\{t\}))
  \end{equation*}
  be the union of all qubits in $E_{\mathrm{run}}$ across all timesteps, so that $E$ is $\cE_{\mathrm{run}}^{\cC}(\eta,T\gamma_{\mathrm{run}},T\omega_{\mathrm{run}};\beta)$-avoiding. Let
  \begin{align*}
    \bar{E}_X &= \Shad^{\times r_X}(E)\cup(C^{r_X}\setminus(V_Z\cup\cE_{Z,\Rd}^{\cC})) \\
    \bar{E}_Z &= \Shad^{\times r_Z}(E)\cup(C^{r_X}\setminus(V_X\cup\cE_{X,\Rd}^{\cC})),
  \end{align*}
  as in \Cref{lem:redfault},
  so that $\Shad^{\times r_X}$ (resp.~$\Shad^{\times r_Z}$) acts on the first $r_X$ (resp.~last $r_Z$) factors, and so that $(\bar{E}_{Z,i},\bar{E}_{X,i})$ is $\cE^{\cC}(\eta,T\gamma_{\mathrm{run}},T\omega_{\mathrm{run}};\beta)$-avoiding by \Cref{lem:redfault}. Then define
  \begin{align}
    \label{eq:idout}
    \begin{split}
      E_{\mathrm{out},X} &= E_{\mathrm{in},X} \cup \bar{E}_Z \\
      E_{\mathrm{out},Z} &= E_{\mathrm{in},Z} \cup \bar{E}_X ,
    \end{split}
  \end{align}
  so that $E_{\mathrm{out}}=(E_{\mathrm{out},X},E_{\mathrm{out},Z})$ is
  \begin{equation*}
    \cE^{\cC}(\eta,\; \gamma_{\mathrm{in}}+T\gamma_{\mathrm{run}},\; \omega_{\mathrm{in}}+T\omega_{\mathrm{run}};\; \beta) = \cE^{\cC}(\eta,\gamma_{\mathrm{out}},\omega_{\mathrm{out}};\beta)
  \end{equation*}
  avoiding.

  To show that $\bfG$ is a fault-tolerant gadget for $\bar{O}=I^{\otimes M^{r_X}}$, by \Cref{lem:paulift} it suffices to show that for every finite set $\Rf{K}$, every operator $\rho\in\bC^{2^{M^{r_X}\sqcup\Rf{K}}\times 2^{M^{r_X}\sqcup\Rf{K}}}$, every Pauli $\comp{E_{\mathrm{in}}}$-deviation $\sigma$ of $\Enc_{\cC}^{r_X}(\rho)$, and every $\comp{E_{\mathrm{run}}}$-avoiding Pauli fault $\cF$ for $\cR$, then $\cR[\cF](\sigma)$ is a $\comp{E_{\mathrm{out}}}$-deviation of $\Enc_{\cC}^{r_X}(\rho)$.

  For this purpose, let $F_{\mathrm{run}}=F_T\circ\cdots\circ F_1$ denote the accumulated Pauli error from the entire fault, so that $F$ is $\comp{E_{\mathrm{run}}}$-avoiding. Also let $F_{\mathrm{in}}$ denote the $\comp{E_{\mathrm{in}}}$-avoiding Pauli error, so that $\sigma=F_{\mathrm{in}}(\Enc_{\cC}^{r_X}(\rho))$. Then because the circuit $\cR$ simply applies identity gates in every timestep, we have
  \begin{align*}
    \cR[\cF](\sigma) &= F_{\mathrm{run}}\circ F_{\mathrm{in}}(\Enc_{\cC}^{r_X}(\rho)) \propto F_{\mathrm{in}}\circ F_{\mathrm{run}}(\Enc_{\cC}^{r_X}(\rho)),
  \end{align*}
  where the proportionality phase above depends only on the Paulis $F_{\mathrm{in}},F_{\mathrm{run}}$. Letting $\bar{F}_{\mathrm{run}}$ denote the Pauli error given by the reductions of the $X$ and $Z$ errors comprising $F_{\mathrm{run}}$, then $\bar{F}_{\mathrm{run}}(\Enc_{\cC}^{r_X}(\rho))=F_{\mathrm{run}}(\Enc_{\cC}^{r_X}(\rho))$, as the reduction of a Pauli error differs from the original Pauli error by a stabilizer. Furthermore, by \Cref{lem:redfault}, the $X$ and $Z$ errors comprising $\bar{F}_{\mathrm{run}}$ are supported inside $\bar{E}_X$ and $\bar{E}_Z$ respectively. Hence by \Cref{eq:idout}, the state
  \begin{equation*}
    \cR[\cF](\sigma) \propto F_{\mathrm{in}}\circ\bar{F}_{\mathrm{run}}(\Enc_{\cC}^{r_X}(\rho))
  \end{equation*}
  is a $\comp{E_{\mathrm{out}}}$-deviation of $\Enc_{\cC}^{r_X}(\rho)$, as desired.
\end{proof}

\subsubsection{CNOT}
In this section, we present a standard transversal $\gCNOT$ gadget for our codes. Below, recall from \Cref{def:deccode} that for a $[[n,k]]$ decorated code $D$, then $D^{\sqcup 2}$ denotes the $[[2n,2k]]$ decorated code given by taking two disjoint copies of $D$.

\begin{lemma}
  \label{lem:cnot}
  Let $\cC^*=\cC^*(\epsilon,\bar{\ell},r_X,r_Z)$, $\cM^*=\cM^*(\epsilon,\bar{\ell},r_X,r_Z)$, and $\Enc_{\cC}:\cM^*\rightarrow\cC^*$ as in \Cref{def:prodcode}. Let $Q=(Q_X,Q_Z)$ be the $[[n,k]]$ code at level $r_X$ of $\cC^*$.

  For arbitrary $\eta,\gamma_{\mathrm{in}},\omega_{\mathrm{in}},\gamma_{\mathrm{run}},\omega_{\mathrm{run}}>0$, let
  \begin{equation*}
    \gamma_{\mathrm{out}} = 2\gamma_{\mathrm{in}}+\gamma_{\mathrm{run}}, \hspace{1em} \omega_{\mathrm{out}} = 2\omega_{\mathrm{in}}+\omega_{\mathrm{run}}
  \end{equation*}
  and
  \begin{align*}
    \cE_{\mathrm{run}} &= \cE_{\mathrm{run}}^{\cC}(\eta,\gamma_{\mathrm{run}},\omega_{\mathrm{run}};\beta).
  \end{align*}
  (see \Cref{def:badfault}). For $\alpha\in\{\mathrm{in},\mathrm{out}\}$, define the decorated code
  \begin{align*}
    D_\alpha = (Q,\; \Enc_{\cC}^{r_X},\; \cE_\alpha = \cE^{\cC}(\eta,\gamma_\alpha,\omega_\alpha;\beta)).
  \end{align*}
  (see \Cref{def:badfams}).

  Then there exists a fault-tolerant gadget $\bfG=(\cR,\cE_{\mathrm{run}}^{\sqcup 2},D_{\mathrm{in}}^{\sqcup 2},D_{\mathrm{out}}^{\sqcup 2},\emptyset)$ for the channel $\bar{O}=\gCNOT^{\otimes k}$, where $\cR$ is the quantum circuit using space $(C^{r_X})^{\sqcup 2}$ and time $T=1$ that simply applies $R_1=\gCNOT^{\otimes C^{r_X}}$ in its single timestep.
\end{lemma}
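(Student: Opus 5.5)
The plan mirrors the proof of \Cref{lem:idle}. The additional ingredient is the standard propagation of Pauli errors through a transversal $\gCNOT$. Throughout, write the first block as the control and the second block as the target, and let $E_{\mathrm{in}}^{(j)}=(E^{(j)}_{\mathrm{in},X},E^{(j)}_{\mathrm{in},Z})$ for $j\in\{1,2\}$ be the $\cE_{\mathrm{in}}$-avoiding input pairs.

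\textbf{Fault sets.} Since $T=1$, the $\cE_{\mathrm{run}}^{\sqcup 2}$-avoiding set $E_{\mathrm{run}}$ splits as $E^{(1)}\sqcup E^{(2)}$, where each $E^{(j)}\subseteq C^{r_X}$ is $\cE_{\mathrm{run}}^{\cC}(\eta,\gamma_{\mathrm{run}},\omega_{\mathrm{run}};\beta)$-avoiding. For each $j$, define $\bar{E}^{(j)}_X$ and $\bar{E}^{(j)}_Z$ as in \Cref{lem:redfault}, using $\Shad^{\times r_X}$ for the $X$ part and $\Shad^{\times r_Z}$ on the dual complex for the $Z$ part. By \Cref{lem:redfault}, the pair $(\bar{E}^{(j)}_Z,\bar{E}^{(j)}_X)$ is $\cE^{\cC}(\eta,\gamma_{\mathrm{run}},\omega_{\mathrm{run}};\beta)$-avoiding, and it contains the supports of the reductions of any fault error supported in $E^{(j)}$.

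\textbf{Output sets.} A $\gCNOT$ copies $\gX$ errors from control to target and $\gZ$ errors from target to control. Accordingly, define
\begin{align*}
E^{(1)}_{\mathrm{out},X} &= E^{(1)}_{\mathrm{in},X}\cup E^{(2)}_{\mathrm{in},X}\cup\bar{E}^{(1)}_Z, & E^{(1)}_{\mathrm{out},Z} &= E^{(1)}_{\mathrm{in},Z}\cup\bar{E}^{(1)}_X,\\
E^{(2)}_{\mathrm{out},X} &= E^{(2)}_{\mathrm{in},X}\cup\bar{E}^{(2)}_Z, & E^{(2)}_{\mathrm{out},Z} &= E^{(1)}_{\mathrm{in},Z}\cup E^{(2)}_{\mathrm{in},Z}\cup\bar{E}^{(2)}_X.
\end{align*}
Recall the convention of \Cref{def:deccode}: the $X$-component set controls $\gZ$ errors and is tested against $\cE_X$. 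These sets depend only on $E_{\mathrm{in}}$ and $E_{\mathrm{run}}$, as required.

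\textbf{Avoidance (main obstacle).} The crux is showing that a union of two avoiding sets is avoiding with the parameters added, i.e.\ that $\cE_Z^{\cC}(\eta,\gamma,\omega;\beta)$-avoiding and $\cE_Z^{\cC}(\eta,\gamma',\omega';\beta)$-avoiding sets have an $\cE_Z^{\cC}(\eta,\gamma+\gamma',\omega+\omega';\beta)$-avoiding union. I would verify this componentwise from \Cref{eq:baddirect}.
\begin{itemize}
\item[(a)] Weighted norms are subadditive.
\item[(b)] For a connected $G'$ with $|V(G')|\ge\eta$, the density of the union is at most the sum of the densities, each strictly below its threshold. The definition of $\cE(G,\eta,\gamma)$ requires $S\subseteq V(G')$, so I would restrict to $S\cap V(G')$ (which the paper's usage implicitly allows, as in \Cref{claim:bdavoid}). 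Each slice indexed by $v\in V_{\bar{\ell}}^{r_X}$ is handled separately.
\item[(c)] Each set avoids the singletons of $\cE_{Z,\Rd}^{\cC}$, hence so does the union.
\end{itemize}
This gives the parameters $2\gamma_{\mathrm{in}}+\gamma_{\mathrm{run}}$ and $2\omega_{\mathrm{in}}+\omega_{\mathrm{run}}$. The same additivity is presumably what underlies \Cref{lem:idle}.

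\textbf{Deviation property.} By \Cref{lem:paulift} it suffices to take Pauli input deviations $F_{\mathrm{in}}^{(1)}\otimes F_{\mathrm{in}}^{(2)}$ and a Pauli fault $F_1$. Conjugating $F_{\mathrm{in}}$ through $\gCNOT^{\otimes C^{r_X}}$ gives a Pauli whose supports lie in the propagated input sets above. Transversal $\gCNOT$ realizes the logical $\gCNOT^{\otimes k}$ with respect to $\Enc_{\cC}^{r_X}\otimes\Enc_{\cC}^{r_X}$: it maps $\ket{y_1,y_2}\mapsto\ket{y_1,y_1+y_2}$, and the codeword cosets add linearly. Hence
\[
\gCNOT^{\otimes C^{r_X}}\,F_{\mathrm{in}}\,(\Enc\otimes\Enc)(\rho)\;\propto\;F_{\mathrm{in}}'\,(\Enc\otimes\Enc)\bigl(\gCNOT^{\otimes k}(\rho)\bigr),
\]
where $F_{\mathrm{in}}'$ is the propagated Pauli. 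Finally, I would replace $F_1$ by its reduction, which acts identically on the codespace, and apply \Cref{lem:redfault} to place its support inside $\bar{E}^{(j)}$. This exhibits the output as an $\comp{E_{\mathrm{out}}}$-deviation of the required state. The reference system is untouched throughout, and there are no reweightings or postselections to handle.
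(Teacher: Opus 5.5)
Your proposal is correct and follows essentially the same route as the paper's proof: propagate the input Paulis through the transversal $\gCNOT$, replace the fault by its reduction supported in the sets from \Cref{lem:redfault}, and take unions of supports with additive parameters. The paper differs only in minor ways: it uses the coarser symmetric output sets $E_{\mathrm{in},X,1}\cup E_{\mathrm{in},X,2}\cup\bar{E}_{Z,i}$ and $E_{\mathrm{in},Z,1}\cup E_{\mathrm{in},Z,2}\cup\bar{E}_{X,i}$ for both blocks $i$ instead of your direction-aware ones, and it asserts the additivity of avoidance under unions without the explicit componentwise check you give.
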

\begin{proof}
  Let $E_{\mathrm{in}}=(E_{\mathrm{in},X}\subseteq(C^{r_X})^{\sqcup 2},\; E_{\mathrm{in},Z}\subseteq(C^{r_X})^{\sqcup 2})$ be a $\cE_{\mathrm{in}}^{\sqcup 2}$-avoiding pair of sets, and let $E_{\mathrm{run}}\subseteq(C^{r_X})^{\sqcup 2}$ be a $\cE_{\mathrm{run}}^{\sqcup 2}$-avoiding set. Here for a a set $E\subseteq(C^{r_X})^{\sqcup 2}$, we let $E=E_1\sqcup E_2$ denote the decomposition into the subsets $E_1,E_2\subseteq C^{r_X}$ in the two respective disjoint copies of $C^{r_X}$. Similarly, for $e\in\bF_2^{(C^{r_X})^{\sqcup 2}}$, we let $e=(e_1,e_2)\in(\cC^{r_X})^{\oplus 2}$. Then for $i\in\{1,2\}$ we define
  \begin{align*}
    \bar{E}_{X,i} &= \Shad^{\times r_X}(E_{\mathrm{run},i})\cup(C^{r_X}\setminus(V_Z\cup\cE_{Z,\Rd}^{\cC})) \\
    \bar{E}_{Z,i} &= \Shad^{\times r_Z}(E_{\mathrm{run},i})\cup(C^{r_X}\setminus(V_X\cup\cE_{X,\Rd}^{\cC})) \\
  \end{align*}
  as in \Cref{lem:redfault},
  so that $\Shad^{\times r_X}$ (resp.~$\Shad^{\times r_Z}$) acts on the first $r_X$ (resp.~last $r_Z$) factors, and so that $(\bar{E}_{Z,i},\bar{E}_{X,i})$ is $\cE^{\cC}(\eta,\gamma_{\mathrm{run}},\omega_{\mathrm{run}};\beta)$-avoiding by \Cref{lem:redfault}. For $i\in\{1,2\}$, we then define
  \begin{align*}
    E_{\mathrm{out},X,i} &= E_{\mathrm{in},X,1}\cup E_{\mathrm{in},X,2}\cup \bar{E}_{Z,i} \\
    E_{\mathrm{out},Z,i} &= E_{\mathrm{in},Z,1}\cup E_{\mathrm{in},Z,2}\cup \bar{E}_{X,i},
  \end{align*}
  so that
  \begin{equation*}
    E_{\mathrm{out}} = (E_{\mathrm{out},X,1}\sqcup E_{\mathrm{out},X,2}, E_{\mathrm{out},Z,1}\sqcup E_{\mathrm{out},Z,1})
  \end{equation*}
  is
  \begin{equation*}
    \cE^{\cC}(\eta,2\gamma_{\mathrm{in}}+\gamma_{\mathrm{run}},2\omega_{\mathrm{in}}+\omega_{\mathrm{run}};\beta)^{\sqcup 2} = \cE^{\cC}(\eta,\gamma_{\mathrm{out}},\omega_{\mathrm{out}};\beta)
  \end{equation*}
  avoiding.
  
  To show that $\bfG$ is a fault-tolerant gadget for $\bar{O}=\gCNOT^{\otimes M^{r_X}}$, by \Cref{lem:paulift} it suffices to show that for every finite set $\Rf{K}$, every operator $\rho\in\bC^{2^{M^{r_X}\sqcup\Rf{K}}\times 2^{M^{r_X}\sqcup\Rf{K}}}$, every Pauli $\comp{E_{\mathrm{in}}}$-deviation $\sigma$ of $\Enc_{\cC}^{r_X}(\rho)$, and every $\comp{E_{\mathrm{run}}}$-avoiding Pauli fault $\cF$ for $\cR=\gCNOT^{\otimes C^{r_X}}$, then $\cR[\cF](\sigma)$ is a $\comp{E_{\mathrm{out}}}$-deviation of $\Enc_{\cC}^{r_X}\circ\gCNOT^{\otimes M^{r_X}}(\rho)$.

  For this purpose, write
  \begin{equation*}
    \rho = \sum_{x,x'\in(\cC^{r_X})^{\oplus 2}}\ket{x_1,x_2}\bra{x'_1,x'_2}\otimes\rho_{x,x'},
  \end{equation*}
  where each $\rho_{x,x'}\in\bC^{2^{\Rf{K}}\times 2^{\Rf{K}}}$.
  Also write
  \begin{equation}
    \label{eq:cnotinit}
    \sigma = \gX^{e_X}\gZ^{e_Z}\Enc_{\cC}^{r_X}(\rho)\gZ^{e_Z'}\gX^{e_X'},
  \end{equation}
  where $(\supp(e_X)\cup\supp(e_X'),\supp(e_Z)\cup\supp(e_Z'))$ is $\comp{E_{\mathrm{in}}}$-avoiding. We define $\tilde{e}_X,\tilde{e}_X',\tilde{e}_Z,\tilde{e}_Z'\in(\cC^{r_X})^{\oplus 2}$ by
  \begin{align*}
    \tilde{e}_X = (e_{X,1},e_{X,1}+e_{X,2}), &\hspace{1em} \tilde{e}_X' = (e_{X,1}',e_{X,1}'+e_{X,2}') \\
    \tilde{e}_Z = (e_{Z,1}+e_{Z,2},e_{Z,2}), &\hspace{1em} \tilde{e}_Z' = (e_{Z,1}'+e_{Z,2}',e_{X,2}')
  \end{align*}
  to specify the conjugation of the Pauli error in \Cref{eq:cnotinit} by $\gCNOT^{\otimes C^{r_X}}$, so that without a fault, the output of $\cR(\sigma)$ is
  \begin{align*}
    \cR(\sigma)
    &= \sum_{x,x'\in(\cC^{r_X2})^{\oplus 2}} \gCNOT^{\otimes C^{r_X}} \gX^{e_X}\gZ^{e_Z} \ket{\Enc_{\cC}^{r_X}(x_1),\Enc_{\cC}^{r_X}(x_2)} \\
    &\hspace{7em} \bra{\Enc_{\cC}^{r_X}(x'_1),\Enc_{\cC}^{r_X}(x'_2)} \gZ^{e_Z'}\gX^{e_X'} \gCNOT^{\otimes C^{r_X}} \otimes \rho_{x,x'} \\
    &= \sum_{x,x'\in(\cC^{r_X2})^{\oplus 2}} \gX^{\tilde{e}_X}\gZ^{\tilde{e}_Z} \gCNOT^{\otimes C^{r_X}} \ket{\Enc_{\cC}^{r_X}(x_1),\Enc_{\cC}^{r_X}(x_2)} \\
    &\hspace{7em} \bra{\Enc_{\cC}^{r_X}(x'_1),\Enc_{\cC}^{r_X}(x'_2)} \gCNOT^{\otimes C^{r_X}} \gZ^{\tilde{e}_Z'}\gX^{\tilde{e}_X'} \otimes \rho_{x,x'} \\
    &= \sum_{x,x'\in(\cC^{r_X2})^{\oplus 2}} \gX^{\tilde{e}_X}\gZ^{\tilde{e}_Z} \ket{\Enc_{\cC}^{r_X}(x_1),\Enc_{\cC}^{r_X}(x_1+x_2)} \\
    &\hspace{7em} \bra{\Enc_{\cC}^{r_X}(x'_1),\Enc_{\cC}^{r_X}(x'_1+x'_2)} \gZ^{\tilde{e}_Z'}\gX^{\tilde{e}_X'} \otimes \rho_{x,x'} \\
    &= \gX^{\tilde{e}_X}\gZ^{\tilde{e}_Z} (\Enc_{\cC}^{r_X}\circ\gCNOT^{\otimes M^{r_X}}(\rho)) \gX^{\tilde{e}_X'}\gZ^{\tilde{e}_Z'}.
  \end{align*}
  Therefore because $\cR$ only has a single timestep, the fault $\cF$ simply applies a single Pauli error $F_1$ to the output of $\cR$, that is
  \begin{align}
    \label{eq:cnotout}
    \begin{split}
      \cR[\cF](\sigma)
      &= F_1(\gX^{\tilde{e}_X}\gZ^{\tilde{e}_Z} \Enc_{\cC}^{r_X}\circ\gCNOT^{\otimes M^{r_X}}(\rho) \gX^{\tilde{e}_X'}\gZ^{\tilde{e}_Z'}) \\
      &\propto \gX^{\tilde{e}_X}\gZ^{\tilde{e}_Z} F_1(\Enc_{\cC}^{r_X}\circ\gCNOT^{\otimes M^{r_X}}(\rho)) \gX^{\tilde{e}_X'}\gZ^{\tilde{e}_Z'}
    \end{split}
  \end{align}
  Now by definition $\tilde{e}_X,\tilde{e}_x'$ (resp.~$\tilde{e}_Z,\tilde{e}_Z'$) are supported inside $E_{\mathrm{in},Z,1}\cup E_{\mathrm{in},Z,2}$ (resp.~$E_{\mathrm{in},X,1}\cup E_{\mathrm{in},X,2}$). Furthermore, the Pauli error $\bar{F}_1$ given by the reductions of the $X$ and $Z$ errors comprising $F_1$ by definition satisfies $F_1(\Enc_{\cC}^{r_X}(\rho))=\bar{F}_1(\Enc_{\cC}^{r_X}(\rho))$, as the reduction of a Pauli error differs from the original Pauli error by a stabilizer. Furthemore, by \Cref{lem:redfault}, the $X$ and $Z$ errors comprising $\bar{F}_1$ are supported inside $\bar{E}_X$ and $\bar{E}_Z$ respectively. Hence the entire $\gX$ and $\gZ$ Pauli errors on the RHS of \Cref{eq:cnotout} are supported inside $E_{\mathrm{out},Z}$ and $E_{\mathrm{out},X}$ respectively, so $\cR[\cF](\sigma)$ is a $\comp{E_{\mathrm{out}}}$-deviation of $\Enc_{\cC}^{r_X}\circ\gCNOT^{\otimes M^{r_X}}(\rho)$, as desired.
\end{proof}

\subsubsection{Classically-Controlled Paulis}
In this section, we present our gadget for performing logical classically-controlled Pauli gates, by performing physical classically-controlled Pauli gates for the appropriate logical operators.

% \lnote{TODO: may want to discuss how we need reweightings in the logical channel here because we have to commute the classically-controlled Pauli through the input/fault error, even if the physical circuit did not have reweightings?}

\begin{lemma}
  \label{lem:ccp}
  Let $\cC^*=\cC^*(\epsilon,\bar{\ell},r_X,r_Z)$, $\cM^*=\cM^*(\epsilon,\bar{\ell},r_X,r_Z)$, and $\Enc_{\cC}:\cM^*\rightarrow\cC^*$ as in \Cref{def:prodcode}. Let $Q=(Q_X,Q_Z)$ be the $[[n,k]]$ code at level $r_X$ of $\cC^*$.

  For arbitrary $\eta,\gamma_{\mathrm{in}},\omega_{\mathrm{in}},\gamma_{\mathrm{run}},\omega_{\mathrm{run}}>0$, let
  \begin{equation}
    \label{eq:ccpparams}
    \gamma_{\mathrm{out}} = \gamma_{\mathrm{in}}+3\gamma_{\mathrm{run}}, \hspace{1em} \omega_{\mathrm{out}} = \omega_{\mathrm{in}}+3\omega_{\mathrm{run}}
  \end{equation}
  and
  \begin{align*}
    \cE_{\mathrm{run}} &= \cE_{\mathrm{run}}^{\cC}(\eta,\gamma_{\mathrm{run}},\omega_{\mathrm{run}};\beta).
  \end{align*}
  (see \Cref{def:badfault}). For $\alpha\in\{\mathrm{in},\mathrm{out}\}$, define the decorated code
  \begin{align*}
    D_\alpha = (Q,\; \Enc_{\cC}^{r_X},\; \cE_\alpha = \cE^{\cC}(\eta,\gamma_\alpha,\omega_\alpha;\beta)).
  \end{align*}
  (see \Cref{def:badfams}).

  For $P\in\{\gX,\gZ\}$, let $\bar{\cO}_P=\gCt{P}^{\otimes k}[*]$ denote the set of all reweighted versions of the channel that applies $k$ independent copies of the classically-controlled Pauli $\gCt{P}$. Therefore each $\bar{O}=\gCt{P}^{\otimes k}[\zeta]\in\bar{\cO}_P$ has $k$ input and output bits, along with $k$ input and output qubits. Then there exists a fault-tolerant gadget $\bfG_P=(\cR_P,\cE_{\mathrm{run}}^{\sqcup T},D_{\mathrm{in}},D_{\mathrm{out}},\emptyset)$ for $\bar{\cO}_P$, where $\cR_P$ is a quantum circuit using quantum space $\Qu{N}=C^{r_X}$ (so $|\Qu{N}|=n$), time $T=3$, and gate set $\cG_P=\{\gCt{P},\gCF_*\}$.
  % Note that third timestep is for terminating extra classical bits, which can be done with a classical function gate
\end{lemma}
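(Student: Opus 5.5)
We build $\cR_P$ by implementing each logical classically-controlled Pauli through a fixed physical representative of the corresponding logical operator, and we argue exactly as in the proofs of \Cref{lem:idle} and \Cref{lem:cnot}. Take $P=\gX$; the case $P=\gZ$ follows by applying the same argument to the dual complex, as in \Cref{remark:encdual}. For each $m\in M^{r_X}$, fix the representative $z_m=\Enc_{\cC}^{r_X}(\1_m)\in Z^{r_X}(\cC)$ given by the chain map of \Cref{def:prodcode}. Then for every $u\in\bF_2^k$ we have $\gX^{\sum_m u_mz_m}\Enc_{\cC}^{r_X}(\ket{x})=\Enc_{\cC}^{r_X}(\ket{x+u})$. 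The circuit has three timesteps:
\begin{enumerate}
\item A classical function gate reads the $k$ input bits $u$ and outputs the bitstring $w=\sum_mu_mz_m\in\bF_2^{C^{r_X}}$, while keeping $u$.
\item The gates $\gCt{\gX}^{\otimes C^{r_X}}$ are applied with control bits $w$.
\item A classical function gate terminates the auxiliary bits $w$ and returns $u$.
\end{enumerate}

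For correctness, fix $E_{\mathrm{in}}$ and $E_{\mathrm{run}}$. Let $E$ be the union of $E_{\mathrm{run}}$ over the three timesteps; it is $\cE_{\mathrm{run}}^{\cC}(\eta,3\gamma_{\mathrm{run}},3\omega_{\mathrm{run}};\beta)$-avoiding. Define $\bar{E}_X,\bar{E}_Z$ from $E$ as in \Cref{lem:idle}, and set $E_{\mathrm{out}}=(E_{\mathrm{in},X}\cup\bar{E}_Z,\;E_{\mathrm{in},Z}\cup\bar{E}_X)$. By \Cref{lem:redfault} this pair is $\cE^{\cC}(\eta,\gamma_{\mathrm{out}},\omega_{\mathrm{out}};\beta)$-avoiding, using the parameters in \Cref{eq:ccpparams}. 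By \Cref{lem:paulift} it suffices to treat a Pauli input deviation $\gX^{e_X}\gZ^{e_Z}\Enc(\rho)\gZ^{e_Z'}\gX^{e_X'}$, a Pauli fault, and a physical reweighting $\zeta$. Write $\rho=\sum_u\ket{u}\bra{u}\otimes\rho_u$, which we may do since $\rho$ is classical on the control bits. In the branch with control value $u$, the physical reweighted gate applies $\gX^{w}$ together with the phase $(-1)^{\zeta\cdot w}$.

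The main obstacle is commuting $\gX^w$ past the input and fault Paulis. Doing so produces the phase $(-1)^{(e_Z+e_Z')\cdot w}$ from the input, and similar phases from any $\gZ$ faults in timestep 1. These phases are linear in $u$, so together with $(-1)^{\zeta\cdot w}$ they combine into $(-1)^{u\cdot\bar{\zeta}}$ for some logical reweighting $\bar{\zeta}\in\bF_2^k$. To see this, note that $w$ is linear in $u$, so $u\mapsto(\zeta+e_Z+e_Z'+\cdots)\cdot w(u)$ is a linear functional on $\bF_2^k$ and is therefore given by some $\bar{\zeta}$. This is exactly why $\bar{\cO}_P$ is the set of all reweighted channels $\gCt{P}^{\otimes k}[*]$. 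The Pauli fault errors on $C^{r_X}$ in the three timesteps commute through the single-qubit $\gCt{\gX}$ gates up to these phases, and they accumulate to a Pauli supported inside $E$. Replacing that Pauli by its reductions changes it only by stabilizers, and by \Cref{lem:redfault} the reduced $X$ and $Z$ parts are supported inside $\bar{E}_X$ and $\bar{E}_Z$. Putting this together, the output is
\[
\gX^{e_X}\gZ^{e_Z}\bar{F}\bigl(\Enc(\gCt{\gX}^{\otimes k}[\bar{\zeta}](\rho))\bigr)\gZ^{e_Z'}\gX^{e_X'}
\]
up to proportionality, which is a $\comp{E_{\mathrm{out}}}$-deviation of $\Enc_{\cC}^{r_X}\circ\bar{\cO}_P(\rho)$, as required.
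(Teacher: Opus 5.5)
Your proposal is correct and follows essentially the same route as the paper's proof. In both, a classical function gate computes a linear lift $w(u)$ of the control bits to cocycle representatives, and physical $\gCt{\gX}$ gates then apply $\gX^{w}$. The output sets are $E_{\mathrm{out}}=(E_{\mathrm{in},X}\cup\bar{E}_Z,\;E_{\mathrm{in},Z}\cup\bar{E}_X)$, built via \Cref{lem:redfault}. The phases from the physical reweighting and from commuting $\gX^{w}$ past the input and timestep-1 $\gZ$ errors are absorbed into a logical reweighting $\bar{\zeta}$ using linearity of $u\mapsto w(u)$.
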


Note that the set $\bar{\cO}_P=\gCt{P}^{\otimes k}[*]$ of logical superoperators implemented by the gadget in \Cref{lem:ccp} allows for arbitrary reweightings of the logical classically-controlled Pauli gates. As described in \Cref{sec:faulttol}, such logical reweightings may occur even in the absence of physical reweightings, under appropriate non-physical (i.e.~non-CPTP) faults.

\begin{proof}[Proof of \Cref{lem:ccp}]
  We will present the proof for the $P=\gX$ case; the $P=\gZ$ case is analogous, except the argument uses the dual complex  ${\cC^\vee}^*=\cC_*\cong\cC^*(\epsilon,\bar{\ell},r_Z,r_X)$.

  Our desired circuit $\cR_X$ uses $T=3$ timesteps. Recalling that $k=|M^{r_X}|=\dim(H^{r_X}(\cM)$ and $n=|C^{r_X}|$, consider an input state
  $\sigma = \sum_{x\in\bF_2^k}\ket{x}\bra{x}\otimes\sigma_x$,
  where $x$ denotes the classical input bits, and each $\sigma_x\in\bC^{2^n\times 2^n}$ represents the associated state of the $n$ input qubits. In the first timestep, we apply a classical function gate to compute some $z=z(x)\in\Enc_{\cC}^{r_X}(x)\subseteq C^{r_X}$, where we can assume the mapping $x\mapsto z(x)$ is linear (over $\bF_2$) because $\Enc_{\cC}^{r_X}$ is linear. In the second timestep, we apply the classically-controlled gates $\gCt{X}^{\otimes C^{r_X}}$ with classical control bits $x$ and target qubits $\sigma_x$, which therefore applies $\gX^x$ to the qubits. In the final timestep, we terminate the classical bits $z$ (which can be implemented with a classical function gate that maps $z\in\bF_2^n$ to the empty bit string $\emptyset\in\bF_2^0$).

  We now show that $\bfG_X$ is a fault-tolerant gadget for $\bar{\cO}_X=\gCt{X}^{\otimes k}[*]$. Fix a $\cE_{\mathrm{in}}$-avoiding pair $E_{\mathrm{in}}=(E_{\mathrm{in},X}\subseteq C^{r_X},\;E_{\mathrm{in},Z}\subseteq C^{r_X})$, and a $\cE_{\mathrm{run}}^{\sqcup T}$-avoiding set $E_{\mathrm{run}}$. By \Cref{lem:paulift}, it suffices to show that there exists a $\cE_{\mathrm{out}}$-avoiding pair $E_{\mathrm{out}}=(E_{\mathrm{out},X}\subseteq C^{r_X},\;E_{\mathrm{out},Z}\subseteq C^{r_X})$ such that for every finite set $\Rf{K}$, every operator $\rho\in\bC^{2^{M^{r_X}\sqcup M^{r_X}\sqcup\Rf{K}}\times 2^{M^{r_X}\sqcup M^{r_X}\sqcup\Rf{K}}}$ that is classical on bits in the first copy of $M^{r_X}$, every Pauli $\comp{E_{\mathrm{in}}}$-deviation $\sigma$ of $\Enc_{\cC}^{r_X}(\rho)$, every $\comp{E_{\mathrm{run}}}$-avoiding Pauli fault $\cF$, and every reweighting $\zeta$ for $\cR_X$, then there exists a reweighting $\bar{\zeta}$ for $\gCt{X}^{\otimes k}$ such that $\cR_X[\cF,\zeta](\sigma)$ is a $\comp{E_{\mathrm{out}}}$-deviation of $\Enc_{\cC}^{r_X}\circ\gCt{X}^{\otimes k}[\bar{\zeta}](\rho)$.
  % Note that in our specification of the input and output deviations, $M^{r_X}$ refers to the set of $k$ classical bits.

  To define $E_{\mathrm{out}}$, we first let $E\subseteq C^{r_X}$ be the set
  \begin{equation*}
    E = \bigcup_{t\in[T]}(E_{\mathrm{run}}\cap(N\times\{t\}))
  \end{equation*}
  of all qubits that may lie in the support of the fault across all timesteps, so that $E$ is $\cE_{\mathrm{run}}^{\cC}(\eta,T\gamma_{\mathrm{run}},T\omega_{\mathrm{run}};\beta)$-avoiding. Therefore by \Cref{lem:redfault}, the set
  \begin{equation*}
    \bar{E}_X = \Shad^{\times r_X}(E)\cup(C^{r_X}\setminus(V_Z\cup\cE_{Z,\Rd}^{\cC}))
  \end{equation*}
  is $\cE_Z^{\cC}(\eta,T\gamma_{\mathrm{run}},T\omega_{\mathrm{run}};\beta)$-avoiding. Similarly, applying \Cref{lem:redfault} to the dual complex ${\cC^{\vee}}^*=\cC_*\cong\cC^*(\epsilon,\bar{\ell},r_Z,r_X)$, we have that the set
  \begin{equation*}
    \bar{E}_Z = \Shad^{\times r_Z}(E)\cup(C^{r_X}\setminus(V_X\cup\cE_{X,\Rd}^{\cC}))
  \end{equation*}
  is $\cE_X^{\cC}(\eta,T\gamma_{\mathrm{run}},T\omega_{\mathrm{run}};\beta)$-avoiding. Note that here $\Shad^{\times r_X}$ applies $\Shad$ to the first $r_X$ components of the $(r_X+r_Z)$-dimensional product set $C$, and $\Shad^{\times r_Z}$ applies $\Shad$ to the latter $r_Z$ components. We then define
  \begin{align*}
    E_{\mathrm{out},X} &= E_{\mathrm{in},X} \cup \bar{E}_Z, \hspace{1em} E_{\mathrm{out},Z} = E_{\mathrm{in},Z} \cup \bar{E}_X,
  \end{align*}
  so that $E_{\mathrm{out}}$ is
  \begin{equation*}
    \cE^{\cC}(\eta,\; \gamma_{\mathrm{in}}+T\gamma_{\mathrm{run}},\; \omega_{\mathrm{in}}+T\omega_{\mathrm{run}};\; \beta) = \cE^{\cC}(\eta,\gamma_{\mathrm{out}},\omega_{\mathrm{out}};\beta)
  \end{equation*}
  avoiding, where the equality above holds by \Cref{eq:ccpparams} because $T=3$.

  Our goal is now to prove that there exists a reweighting $\bar{\zeta}$ for $\gCt{X}^{\otimes k}$ such that $\cR_X[\cF,\zeta](\sigma)$ is a $\comp{E_{\mathrm{out}}}$-deviation of $\Enc_{\cC}^{r_X}\circ\gCt{X}^{\otimes k}[\bar{\zeta}](\rho)$.
  To begin, we write
  \begin{align*}
    \rho
    &= \sum_{x,y,y'\in\bF_2^k} \ket{x}\bra{x} \otimes \ket{y}\bra{y'} \otimes \rho_{x,y,y'},
  \end{align*}
  where $\rho_{x,y,y'}\in\bC^{2^{\Rf{K}}\times 2^{\Rf{K}}}$. Then there exist $e_X,e_X',e_Z,e_Z'\in\cC^{r_X}$ such that $(\supp(e_X)\cup\supp(e_X'),\supp(e_Z)\cup\supp(e_Z'))$ is $\comp{E_{\mathrm{in}}}$-avoiding, and such that
  \begin{align*}
      \sigma
      &= \sum_{x,y,y'\in\bF_2^k} \ket{x}\bra{x} \otimes \gZ^{e_Z}\gX^{e_X} \ket{\Enc_{\cC}^{r_X}(y)}\bra{\Enc_{\cC}^{r_X}(y')} \gX^{e_X'}\gZ^{e_Z'} \otimes \rho_{x,y,y'}.
  \end{align*}
  We let $f_X,f_X',f_Z,f_Z'\in\cC^{r_X}$ specify the Pauli error $F_1$ from the fault following the first timestep, so that $\supp(f_X)\cup\supp(f_X')\cup\supp(f_Z)\cup\supp(f_Z')$ is $\comp{E_{\mathrm{run}}}$-avoiding. Then after the first timestep of $\cR_X[\cF,\zeta](\sigma)$, the state becomes
  \begin{align*}
    &\sum_{x,y,y'\in\bF_2^k} \ket{x}\bra{x} \otimes \ket{z(x)}\bra{z(x)} \otimes \gZ^{e_Z+f_Z}\gX^{e_X+f_X} \ket{\Enc_{\cC}^{r_X}(y)}\bra{\Enc_{\cC}^{r_X}(y')} \gX^{e_X'+f_X'}\gZ^{e_Z'+f_Z'} \otimes \rho_{x,y,y'}.
  \end{align*}
  Recalling that $z(x)\in\Enc_{\cC}^{r_X}(x)$, it follows that the subsequent application of $\gCt{X}^{\otimes r_X}[\zeta]$ in the second timestep maps the state above to
  \begin{align*}
    & \sum_{x,y,y'\in\bF_2^k} (-1)^{\zeta\cdot z(x)} \cdot \ket{x}\bra{x} \otimes \ket{z(x)}\bra{z(x)} \\
    &\hspace{4em} \otimes \gX^{z(x)}\gZ^{e_Z+f_Z}\gX^{e_X+f_X} \ket{\Enc_{\cC}^{r_X}(y)}\bra{\Enc_{\cC}^{r_X}(y')} \gX^{e_X'+f_X'}\gZ^{e_Z'+f_Z'}\gX^{z(x)} \otimes \rho_{x,y,y'} \\
    &= \sum_{x,y,y'\in\bF_2^k} (-1)^{(\zeta+e_Z+f_Z+e_Z'+f_Z')\cdot z(x)} \cdot \ket{x}\bra{x} \otimes \ket{z(x)}\bra{z(x)} \\
    &\hspace{4em} \otimes \gZ^{e_Z+f_Z}\gX^{e_X+f_X} \ket{\Enc_{\cC}^{r_X}(x+y)}\bra{\Enc_{\cC}^{r_X}(x+y')} \gX^{e_X'+f_X'}\gZ^{e_Z'+f_Z'} \otimes \rho_{x,y,y'}.
  \end{align*}
  Then $\cR_X[\cF,\zeta](\sigma)$ terminates the $z(x)$ register and returns the resulting state, which also experiences the $\comp{E_{\mathrm{run}}}$-avoiding Pauli errors $F_2$ and $F_3$. Let $\tilde{f}_X,\tilde{f}_X',\tilde{f}_Z,\tilde{f}_Z'\in\cC^{r_X}$ specify the accumulated Pauli fault error $F_1\circ F_2\circ F_3$, so that $\supp(\tilde{f}_X)\cup\supp(\tilde{f}_X')\cup\supp(\tilde{f}_Z)\cup\supp(\tilde{f}_Z')\subseteq E$. Then $\cR_X[\cF,\zeta](\sigma)$ returns the state
  \begin{align}
    \label{eq:ccpret1}
    \begin{split}
      & \sum_{x,y,y'\in\bF_2^k} (-1)^{(\zeta+e_Z+f_Z+e_Z'+f_Z')\cdot z(x)} \cdot \ket{x}\bra{x} \\
      &\hspace{4em} \otimes \gZ^{e_Z+\tilde{f}_Z}\gX^{e_X+\tilde{f}_X} \ket{\Enc_{\cC}^{r_X}(x+y)}\bra{\Enc_{\cC}^{r_X}(x+y')} \gX^{e_X'+\tilde{f}_X'}\gZ^{e_Z'+\tilde{f}_Z'} \otimes \rho_{x,y,y'}.
    \end{split}
  \end{align}

  We now simplify this returned state. Let $\Rdn^\vee$ denote the reduction map for the dual complex ${\cC^\vee}^*=\cC_*\cong\cC(\epsilon,\bar{\ell},r_Z,r_X)$. By definition, for every $b\in\cC^{r_X}$ the Paulis $\gX^{\delta^{\cC}(\Rd(b))}$ and $\gX^{\partial^{\cC}(\Rd^\vee(b))}$ are stabilizers of the codespace. Therefore we can replace each Pauli error in \Cref{eq:ccpret1} by its reduction, to obtain the equivalent expression
  \begin{align}
    \label{eq:ccpret2}
    \begin{split}
      & \sum_{x,y,y'\in\bF_2^k} (-1)^{(\zeta+e_Z+f_Z+e_Z'+f_Z')\cdot z(x)} \cdot \ket{x}\bra{x} \\
      &\hspace{4em} \otimes \gZ^{e_Z+\Rdn(\tilde{f}_Z)}\gX^{e_X+\Rdn(\tilde{f}_X)} \ket{\Enc_{\cC}^{r_X}(x+y)}\bra{\Enc_{\cC}^{r_X}(x+y')} \gX^{e_X'+\Rdn(\tilde{f}_X')}\gZ^{e_Z'+\Rdn(\tilde{f}_Z')} \otimes \rho_{x,y,y'}.
    \end{split}
  \end{align}
  Here recall that $e_X,e_X',e_Z,e_Z'$ are already reduced by the definition of $\cE_{\mathrm{in}}$ (see \Cref{def:badfams}).
  By definition the mapping $x\mapsto(\zeta+e_Z+f_Z+e_Z'+f_Z')\cdot z(x)$ is linear, so there exists $\bar{\zeta}\in\bF_2^k$ such that $(\zeta+e_Z+f_Z+e_Z'+f_Z')\cdot z(x)=\bar{\zeta}\cdot x$. Furthermore, \Cref{lem:redfault} implies that $\Rdn(\tilde{f}_X),\Rdn(\tilde{f}_X')$ are supported inside $\bar{E}_X$, and $\Rdn^\vee(\tilde{f}_Z),\Rdn^\vee(\tilde{f}_Z')$ are supported inside $\bar{E}_Z$. Let
  \begin{equation*}
    \tilde{e}_X = e_X+\Rdn(\tilde{f}_X), \hspace{1em} \tilde{e}_X' = e_X'+\Rdn(\tilde{f}_X'), \hspace{1em} \tilde{e}_Z = e_Z+\Rdn(\tilde{f}_Z), \hspace{1em} \tilde{e}_Z' = e_Z'+\Rdn(\tilde{f}_Z')
  \end{equation*}
  specify the Pauli errors on the output expression in \Cref{eq:ccpret2}.
  It follows that $\supp(\tilde{e}_X)\cup\supp(\tilde{e}_X')\subseteq E_{\mathrm{in},Z}\cup\bar{E}_X$ and $\supp(\tilde{e}_Z)\cup\supp(\tilde{e}_Z')\subseteq E_{\mathrm{in},X}\cup\bar{E}_Z$, so that $(\supp(\tilde{e}_Z)\cup\supp(\tilde{e}_Z'),\supp(\tilde{e}_X)\cup\supp(\tilde{e}_X'))$ is $\comp{E_{\mathrm{out}}}$-avoiding.
  % Recalling that $\cE_{\mathrm{in}} = \cE^{\cC}(\eta,\gamma_{\mathrm{in}},\omega_{\mathrm{in}};\beta))$ and that $\gamma_{\mathrm{out}}=\gamma_{\mathrm{in}}+6\gamma_{\mathrm{run}}$, $\omega_{\mathrm{out}}=\omega_{\mathrm{in}}+6\omega_{\mathrm{run}}$, it follows that $(\supp(\tilde{e}_X)\cup\supp(\tilde{e}_X'),\supp(\tilde{e}_Z)\cup\supp(\tilde{e}_Z'))$ is $\cE_{\mathrm{out}}=\cE^{\cC}(\eta,\gamma_{\mathrm{out}},\omega_{\mathrm{out}};\beta)$-avoiding.
  Therefore we can rewrite the output state of $\cR_X[\cF,\zeta](\sigma)$ in \Cref{eq:ccpret2} as
  \begin{align}
    \label{eq:ccpret3}
    \begin{split}
      \hspace{1em}&\hspace{-1em} \sum_{x,y,y'\in\bF_2^k} (-1)^{\bar{\zeta}\cdot x} \cdot \ket{x}\bra{x} \otimes \gZ^{\tilde{e}_Z}\gX^{\tilde{e}_X} \ket{\Enc_{\cC}^{r_X}(x+y)}\bra{\Enc_{\cC}^{r_X}(x+y')} \gX^{\tilde{e}_X'}\gZ^{\tilde{e}_Z'} \otimes \rho_{x,y,y'} \\
                  &= \gZ^{\tilde{e}_Z}\gX^{\tilde{e}_X}\Enc_{\cC}^{r_X}\left(\sum_{x,y,y'\in\bF_2^k} (-1)^{\bar{\zeta}\cdot x} \cdot \ket{x}\bra{x} \otimes \ket{x+y}\bra{x+y'} \otimes \rho_{x,y,y'}\right) \gX^{\tilde{e}_X'}\gZ^{\tilde{e}_Z'} \\
      &= \gZ^{\tilde{e}_Z}\gX^{\tilde{e}_X}\Enc_{\cC}^{r_X}\left(\gCt{X}^{\otimes k}[\bar{\zeta}](\rho)\right) \gX^{\tilde{e}_X'}\gZ^{\tilde{e}_Z'}.
    \end{split}
  \end{align}
  We have shown that the RHS of \Cref{eq:ccpret3} above is a $\comp{E_{\mathrm{out}}}$-deviation of $\Enc_{\cC}^{r_X}\left(\gCt{X}^{\otimes k}[\bar{\zeta}](\rho)\right)$, as desired. Thus $\bfG_X$ is a fault-tolerant gadget for $\bar{\cO}_X=\gCt{X}^{\otimes k}[*]$, as desired.
\end{proof}

\subsubsection{Termination}
In this section, we present our termination gadget, which simply terminates all qubits in a code block. Because we terminate all qubits in the first timestep of the gadget, there is no fault or output code to consider. The analysis follows from that of the logical measurement gadget in \Cref{lem:logmeas}.

\begin{lemma}
  \label{lem:term}
   For $r_X,r_Z,\bar{\ell}\in\bN$, let $r=r_X+r_Z$, define $\epsilon=\epsilon(r)$ as in \Cref{lem:ssflip}, and define $\mu=\mu(\epsilon)$ and $\Delta=\Delta(\epsilon)$ as in \Cref{lem:lossless}. Define $\cC^*=\cC^*(\epsilon,\bar{\ell},r_X,r_Z)$, $\cM^*=\cM^*(\epsilon,\bar{\ell},r_X,r_Z)$, and $\Enc_{\cC}:\cM^*\rightarrow\cC^*$ as in \Cref{def:prodcode}. Let $Q=(Q_X,Q_Z)$ be the $[[n,k]]$ code at level $r_X$ of $\cC^*$.

  Let $\beta\geq 2\Delta$, and for arbitrary positive
  \begin{equation*}
    \eta_{\mathrm{in}} \leq \mu 2^{\bar{\ell}/2}, \hspace{1em} \gamma_{\mathrm{in}} \leq \frac{1}{16r\Delta^2}, \hspace{1em} \omega_{\mathrm{in}} \leq \frac{\beta^{\bar{\ell}/2}}{8r\beta\Delta^2},
  \end{equation*}
  let
  \begin{equation*}
    \cE_{\mathrm{in}} = \cE^{\cC}(\eta_{\mathrm{in}},\gamma_{\mathrm{in}},\omega_{\mathrm{in}};\beta)
  \end{equation*}
  be the family of bad sets defined in \Cref{def:badfams}. Define the decorated code
  \begin{equation*}
    D_{\mathrm{in}} = (Q,\; \Enc_{\cC}^{r_X},\; \cE_{\mathrm{in}}).
  \end{equation*}

  Let $\bar{O}=\gTerm^{\otimes k}:\bC^{2^k\times 2^k}\rightarrow\bC$ be the termination channel, which has $k$ input qubits and zero output qubits.
  Then there exists a fault-tolerant gadget $\bfG=(\cR,\emptyset,D_{\mathrm{in}},\emptyset,\emptyset)$ for $\bar{O}$, where $\cR$ is the quantum circuit using quantum space $\Qu{N}=C^{r_X}$ (so $|\Qu{N}|=n$), time $T=1$, and gate set $\cG_Z=\{\gTerm\}$ that simply applies $\gTerm$ to all qubits in its only timestep, i.e.~$R_1=\gTerm^{\otimes n}$.
\end{lemma}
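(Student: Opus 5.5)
The plan is to reduce \Cref{lem:term} to the $\gZ$-measurement gadget of \Cref{lem:logmeas}, exactly as the remark preceding the statement suggests. The circuit $\cR$ applies $\gTerm^{\otimes n}$ in its single timestep, so after it there are no active qubits. Hence the fault, reweighting and postselection play no role: $\cR$ has no classically-controlled gates, and $\ps=\emptyset$. There is also no output code, so $E_{\mathrm{out}}$ is the trivial empty pair, which is vacuously $\emptyset$-avoiding. By \Cref{lem:paulift}, it therefore suffices to fix an $\cE_{\mathrm{in}}$-avoiding pair $E_{\mathrm{in}}$, a reference system $\Rf{K}$, an operator $\rho$, and a Pauli $\comp{E_{\mathrm{in}}}$-deviation $\sigma$ of $\Enc_{\cC}^{r_X}(\rho)$. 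The goal is then to show $\cR(\sigma)\propto\gTerm^{\otimes k}(\rho)$, i.e.\ that the partial trace of $\sigma$ over the code qubits is proportional to $\tr_{M^{r_X}}(\rho)$.

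The key step is the factorization $\gTerm^{\otimes n}=\gTerm^{\otimes n}\circ\gMZ^{\otimes n}$: tracing out bits after a $\gZ$-measurement is the same as tracing out the qubits directly. Likewise $\gTerm^{\otimes k}=\gTerm^{\otimes k}\circ\gMZ^{\otimes k}$ on the logical side. We take the computation in the proof of \Cref{lem:logmeas} for the $\gZ$-measurement gadget $\cR_Z$, with its hypotheses on $(\eta_{\mathrm{in}},\gamma_{\mathrm{in}},\omega_{\mathrm{in}})$ identical to ours. It gives $\gMZ^{\otimes C^{r_X}}(\sigma)\propto\sum_x\sum_{z\in\Enc_{\cC}^{r_X}(x)+e_X}(-1)^{(e_Z+e_Z')\cdot z}\ket{z}\bra{z}\otimes\rho_{x,x}$. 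This uses the steps where \Cref{lem:decode} forces $e_X+e_X'\in B^{r_X}(\cC)$; if instead $e_X+e_X'\notin Z^{r_X}(\cC)$, the state is $0$, which is trivially fine. It also forces $e_Z+e_Z'\in B_{r_X}(\cC)$, so the phase sums are independent of $x$ or else vanish. Tracing out the $z$ register then yields $\propto\sum_x\tr(\ldots)\rho_{x,x}$, with a coefficient $|B^{r_X}(\cC)|$ or $0$ independent of $x$. Since $\sum_x\rho_{x,x}=\tr_{M^{r_X}}(\rho)=\gTerm^{\otimes k}(\rho)$, the result follows.

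Equivalently, and more compactly, we can write $\cR=\gTerm^{\otimes k}\circ\cR_Z$ as superoperators: $\cR_Z$'s classical decoding is a bijection on outcomes that outputs bits, and terminating those bits equals terminating the raw measurement bits. Then the fault-tolerance of $\bfG_Z$ from \Cref{lem:logmeas} gives $\cR_Z(\sigma)\propto\gMZ^{\otimes k}(\rho)$, and applying $\gTerm^{\otimes k}$ finishes.

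The only mild obstacle is justifying that terminating the decoded bits equals terminating the measurement bits. The trace of a classical state is invariant under a classical function gate $\gCF_f$, since $\tr\ket{f(z)}\bra{f(z)}=1$. With that, the rest is a direct citation of \Cref{lem:logmeas}.
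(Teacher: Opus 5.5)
Your proposal is correct and follows the paper's own argument. Like the paper, you factor $\gTerm=\gTerm\circ\gMZ$, view $\cR$ as the $\gZ$-measurement gadget of \Cref{lem:logmeas} followed by termination of its classical outputs, and note that terminating bits after a classical function gate is the same as terminating the raw outcomes. One small nit: the decoding map is not a bijection on outcomes, since it compresses $n$ bits to $k$ bits, but you never need that. The trace invariance under $\gCF_f$ that you state at the end is all the argument requires.
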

\begin{proof}
  By definition
  \begin{equation}
    \label{eq:termdecomp}
    \gTerm = \gTerm\circ\gMZ,
  \end{equation}
  that is, a qubit can be terminated by first measuring it, and then terminating the classical measurement outcome bit. Hence the circuit $\cR_0$ that applies the logical measurement gadget in \Cref{lem:logmeas} followed by termination gates on all classical bits provides a termination gadget satisfying all desired properties of the statement of \Cref{lem:term}, except that the time usage is $T_0=3$ instead of the desired $T=1$. However, by \Cref{eq:termdecomp}, the circuit $\cR$ that simply applies $\gTerm$ to all qubits in the first timestep has the exact same effect as $\cR_0$, as $\cR_0$ applies $\gMZ$ to all qubits in the first timestep, and then terminates the resulting classical bits (after applying a classical function gate). Hence $\cR$ also satisfies the desired properties in the statement of \Cref{lem:term}.
\end{proof}

\subsection{Composed Gadgets}
In this section, we present our error-correction and injection gadgets, which follow by simply composing our gadgets in \Cref{sec:coregad,sec:basicgad} above.

\subsubsection{Error Correction}
In \Cref{lem:errcorr} below, we present our error-correction gadget. We use ``Knill-type'' error correction, in which we repeatedly prepare fresh logical Bell pairs that we use to teleport the input state to a new code block. This procedures ensures that no errors can propagate from the input block to the output block, which immediately implies the refreshing property in \Cref{def:faulttol}.

\begin{lemma}
  \label{lem:errcorr}
  For integers $r_X,r_Z\geq 2$, $\bar{\ell}\geq 1$, let $r=r_X+r_Z$, define $\epsilon=\epsilon(r)$ as in \Cref{lem:ssflip}, and define $\mu=\mu(\epsilon)$ and $\Delta=\Delta(\epsilon)$ as in \Cref{lem:lossless}. Define $\cC^*=\cC^*(\epsilon,\bar{\ell},r_X,r_Z)$, $\cM^*=\cM^*(\epsilon,\bar{\ell},r_X,r_Z)$, and $\Enc_{\cC}:\cM^*\rightarrow\cC^*$ as in \Cref{def:prodcode}. Let $Q=(Q_X,Q_Z)$ be the $[[n,k]]$ code at level $r_X$ of $\cC^*$.

  Let $\beta\geq 2\Delta$, and for arbitrary positive
  \begin{align*}
    & \eta_{\mathrm{in}} \leq \mu 2^{\bar{\ell}/2}, \hspace{1em} \gamma_{\mathrm{in}} \leq \frac{1}{2^{10}r\Delta^2}, \hspace{1em} \omega_{\mathrm{in}} \leq \frac{\beta^{\bar{\ell}/2}}{2^{10}r\beta\Delta^2}, \\
    & \eta_{\mathrm{run}} \leq \mu 2^{\bar{\ell}/2}, \hspace{1em} \gamma_{\mathrm{run}} \leq \frac{1}{2^{30}r^5\Delta^8}, \hspace{1em} \omega_{\mathrm{run}} \leq \frac{\beta^{\bar{\ell}/2}}{2^{30}r^5\beta^3\Delta^7},
  \end{align*}
  let
  \begin{align*}
    & \eta_{\mathrm{out}} = \eta_{\mathrm{run}}, \hspace{1em} \gamma_{\mathrm{out}} = 2^{20}r^3\Delta^5\cdot\gamma_{\mathrm{run}}, \hspace{1em} \omega_{\mathrm{out}} = 2^{20}r^3\beta^2\Delta^4\cdot\omega_{\mathrm{run}}.
  \end{align*}
  Then let
  \begin{align*}
    \cE_{\mathrm{run}} &= \cE_{\mathrm{run}}^{\cC}(\eta_{\mathrm{run}},\gamma_{\mathrm{run}},\omega_{\mathrm{run}};\beta),
  \end{align*}
  be the family of bad sets from \Cref{def:badfault}. For $\alpha\in\{\mathrm{in},\mathrm{out}\}$ let
  \begin{align*}
    \cE_\alpha &= \cE^{\cC}(\eta_\alpha,\gamma_\alpha,\omega_\alpha;\beta)
  \end{align*}
  be the family of bad sets from \Cref{def:badfams}, and define the decorated code
  \begin{align*}
    D_\alpha &= (Q,\; \Enc_{\cC}^{r_X},\; \cE_\alpha).
  \end{align*}

  Then there exists a mending refreshing fault-tolerant gadget $\bfG=(\cR,(\cE_{\mathrm{run}}^{\sqcup 3})^{\sqcup T},D_{\mathrm{in}},D_{\mathrm{out}},\ps)$ for the $k$-qubit identity channel $\bar{O}=I_k:\bC^{2^k\times 2^k}\rightarrow\bC^{2^k\times 2^k}$, where $\cR$ is a quantum circuit using quantum space
  \begin{equation*}
    \Qu{N} = C^{r_X} \sqcup (C^{r_X}\sqcup C^{r_X+1}) \sqcup (C^{r_X}\sqcup C^{r_X-1}),
  \end{equation*}
  (so $|\Qu{N}|\leq 12r\Delta n$), time $T=r(2\Delta+2)+17\leq 16r\Delta$, and gate set
  \begin{equation*}
    \cG = \{\gInitX,\gInitZ,\gTerm,\gCNOT,\gMX,\gMZ,\gCt{\gX},\gCt{\gZ},\gCF_*\}.
  \end{equation*}
\end{lemma}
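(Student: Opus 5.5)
We will build $\bfG$ as a Knill-type teleportation circuit obtained by composing earlier gadgets with \Cref{lem:seqcomp,lem:parcomp}. The circuit acts on three blocks: the input block $C^{r_X}$, a block $C^{r_X}\sqcup C^{r_X+1}$ for the $\gX$-state preparation gadget $\bfG_X$ of \Cref{lem:stateprep}, and a block $C^{r_X}\sqcup C^{r_X-1}$ for the $\gZ$-state preparation gadget $\bfG_Z$. We assume $r_X,r_Z\ge 2$ so that both are available; the $\gZ$ version follows by duality.

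\textbf{Phase 1 (Bell pairs).} Run $\bfG_X$ and $\bfG_Z$ in parallel to obtain $\Enc(\ket{+^k})$ and $\Enc(\ket{0^k})$. In parallel, the input block idles via \Cref{lem:idle} for the same number of timesteps. Then apply the transversal $\gCNOT$ gadget of \Cref{lem:cnot} from the $\ket{+}$ block to the $\ket{0}$ block, which yields logical Bell pairs.

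\textbf{Phase 2 (teleportation).} Apply a transversal $\gCNOT$ from the input block to the first Bell half. Then measure the input block with $\bfG_X$ of \Cref{lem:logmeas} and the first Bell half with $\bfG_Z$ of \Cref{lem:logmeas}. The remaining block idles through these steps.

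\textbf{Phase 3 (correction).} Apply the logical classically-controlled Pauli corrections $\gCt{\gX}$ and $\gCt{\gZ}$ to the surviving block using \Cref{lem:ccp}.

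At the logical level this composite implements teleportation up to the reweightings allowed in $\bar{\cO}$. Since the circuit contains no classical control of logical bits feeding into quantum outputs, the reweighting phases depend only on outcomes; after tracing out the classical bits they produce only a global factor, which is absorbed into the $\propto$ in the definition of a deviation. The ledger of time steps is $T_{\mathrm{prep}}=r(2\Delta+2)+6$, plus $1+1$ for the two $\gCNOT$ layers, plus $2$ for measurement, plus $3+3$ for the corrections. The total is within a constant of $r(2\Delta+2)+17$. The idle and measurement timings should be overlapped wherever needed to match this count exactly.

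\textbf{Parameter tracking.} This is routine but must be done carefully.
\begin{itemize}
\item The output bad-set parameters of \Cref{lem:stateprep} are $\gamma=2^{14}r^3\Delta^5\gamma_{\mathrm{run}}$ and $\omega=2^{12}r^3\beta^2\Delta^4\omega_{\mathrm{run}}$.
\item The two $\gCNOT$ gadgets double these and add $\gamma_{\mathrm{run}}$, which gives roughly $2^{16}$ to $2^{17}$ times the run parameters on the Bell halves.
\item The measurement gadget on the Bell half requires the hypothesis \Cref{eq:lmrunbounds}. This is where the upper bounds $\gamma_{\mathrm{run}}\le 2^{-30}r^{-5}\Delta^{-8}$ and $\omega_{\mathrm{run}}\le\beta^{\bar\ell/2}/(2^{30}r^5\beta^3\Delta^7)$ are used.
\item The input-block measurement additionally needs $2\gamma_{\mathrm{in}}+\dots\le 1/(16r\Delta^2)$, which is supplied by the stated bounds on $\gamma_{\mathrm{in}}$ and $\omega_{\mathrm{in}}$.
\item The surviving block accumulates idle and $\gCt{}$ faults, giving $\gamma_{\mathrm{out}}\le 2^{20}r^3\Delta^5\gamma_{\mathrm{run}}$ and likewise for $\omega_{\mathrm{out}}$.
\end{itemize}
The fault family on each of the three blocks is an instance of $\cE_{\mathrm{run}}$, which explains $(\cE_{\mathrm{run}}^{\sqcup 3})^{\sqcup T}$. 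Here we pad each gadget with idle qubits, whose fault sets are harmless since the qubits are not yet initialized or already terminated.

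\textbf{Refreshing and mending.} \emph{Refreshing} follows because the output set $E_{\mathrm{out}}$ on the surviving block is built only from $E_{\mathrm{run}}$ and the postselections of the preparation gadgets. The only contact with the input is the $\gCNOT$ into the first Bell half, which is then measured.

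The main obstacle is that the $\gCNOT$ gadget of \Cref{lem:cnot} propagates input errors onto both outputs, including the Bell half that survives in the $\gZ$-type propagation direction. We will handle this by orienting the gates: input-to-Bell $\gCNOT$ spreads the input's $X$ errors onto the measured half and its $Z$ errors onto the input, which is measured. The surviving half is untouched by the input. We will then state $E_{\mathrm{out}}$ explicitly as the surviving block's share of the Phase 1 output sets together with its own idle and $\gCt{}$ faults, rather than invoking \Cref{lem:seqcomp} blindly.

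\emph{Mending} follows since the measurement gadgets of \Cref{lem:logmeas} are mending. An arbitrary input deviation becomes, after measurement, a logical superoperator $L$ applied before teleportation, with the surviving block's error set unchanged. This uses that mending composes under \Cref{lem:seqcomp,lem:parcomp} when combined with the fault-tolerant (non-mending) Bell-preparation branch.
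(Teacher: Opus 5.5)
Your circuit, parameter bookkeeping, and refreshing argument match the paper's proof. The paper also appends a step applying $\gTerm^{\otimes 2k}$ to the logical outcome bits $x_1,x_2$ while block~3 idles. You need this anyway, since $\bar O=I_k$ has no classical outputs, and it is the timestep missing from your count.

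The step that fails as written is mending. You derive it from ``mending composes under \Cref{lem:seqcomp,lem:parcomp}''. But \Cref{lem:seqcomp} only transfers mending from the \emph{first} gadget of a sequence. The gadgets that first act on the input block are the idle gadget of \Cref{lem:idle} and the transversal $\gCNOT$ gadget of \Cref{lem:cnot}, and these are only fault-tolerant, not mending. Indeed, an arbitrary Pauli on the input block (e.g.\ a high-weight $\gX^e$ with large syndrome) generally cannot be absorbed into a fixed $\cE_{\mathrm{out}}$-avoiding output set. The Bell-preparation branch is not the obstruction: having no input, it is trivially mending.

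What is needed instead is a direct propagation argument, as in the paper. The idle and transversal $\gCNOT$ circuits map Paulis to Paulis and act logically as $I$ and $\gCNOT$. So an arbitrary Pauli input deviation becomes, just before the logical measurements, a Pauli deviation of the encoded post-$\gCNOT$ logical state. This deviation has arbitrary support on blocks~1 and~2, while block~3 still carries only its Bell-preparation error. Only at this point do you invoke the mending property of the two gadgets of \Cref{lem:logmeas}, with block~3 placed in the reference system.

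That yields logical Pauli superoperators sitting just before the logical $\gMX$ and $\gMZ$ of \Cref{fig:teleport}. This is not yet a superoperator $L$ applied to the teleported state, as you assert. Closing the gap requires the teleportation-specific fact recorded in \Cref{claim:telfault}. A Pauli fault just before those measurements, possibly non-physical such as $\rho\mapsto\rho\gX$, either only changes the reweighting of the subsequent $\gCt{\gZ}$ or $\gCt{\gX}$, or commutes back through the $\gCNOT$s into a Pauli on the input and output wires. Combined with \Cref{claim:telpost}, which says every reweighted teleportation circuit is $\propto I$, this gives the form required by the mending definition.

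Your stated reason for the reweightings being harmless is also off. The logical circuit \emph{does} classically control the quantum output by measured logical bits. Reweightings are harmless because teleportation succeeds in every postselected outcome branch, so the outcome-dependent phases merely sum to a global scalar.
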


\begin{figure}
  \centering
  \begin{quantikz}[row sep={1.3cm,between origins}, column sep=0.6cm]
    \lstick{$\ket{\psi}$} & \qw      & \qw       & \ctrl{1} & \meter{X} & \cw & \cw & \cwbend{2} \\
    \lstick{$\ket{+}$}    & \qw      & \ctrl{1}  & \targ{}  & \meter{Z} & \cw & \cwbend{1} \\
    \lstick{$\ket{0}$}    & \qw      & \targ{}   & \qw      & \qw         & \qw & \gate{X}     & \gate{Z} & \qw \rstick{$\ket{\psi}$}
  \end{quantikz}
  \caption{\label{fig:teleport} Circuit $\cR_{\mathrm{tel}}$ to teleport a qubit $\ket{\psi}$.}
\end{figure}

To prove \Cref{lem:errcorr}, we will construct a fault-tolerant gadget to apply the logical teleportation circuit $\cR_{\mathrm{tel}}$ in \Cref{fig:teleport} on all $k$ logical qubits simultaneously. We will use \Cref{claim:telpost} below, which shows that $\cR_{\mathrm{tel}}^{\otimes k}[*]$ only contains the desired logical identity channel $\bar{O}=I_k$. That is, we show that $\cR_{\mathrm{tel}}$ still implements the teleportation (i.e.~identity) channel, up to a global phase/scalar, even if arbitrary reweightings $\zeta$ are applied to the classicall-controlled Paulis.

\begin{claim}
  \label{claim:telpost}
  For every reweighting $\zeta$ for the circuit $\cR_{\mathrm{tel}}$ shown in \Cref{fig:teleport}, we have $\cR_{\mathrm{tel}}[\zeta](\cdot)\propto\cR_{\mathrm{tel}}(\cdot)=I$.
\end{claim}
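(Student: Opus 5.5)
The plan is to write the reweighted circuit explicitly and compare it with the unreweighted one branch by branch. In $\cR_{\mathrm{tel}}$ there are exactly two classically-controlled gates: $\gCt{\gX}$ controlled by the $\gZ$-measurement outcome $m_2$ of the second qubit, and $\gCt{\gZ}$ controlled by the $\gX$-measurement outcome $m_1$ of the first qubit. A reweighting is therefore $\zeta=(\zeta_1,\zeta_2)\in\bF_2^2$. By \Cref{def:gates}, on a branch where the control bits take values $(m_1,m_2)$, the only effect of $\zeta$ is to multiply that branch by the scalar $(-1)^{\zeta_1 m_2+\zeta_2 m_1}$. I would decompose both measurements into postselected branches, $\gMX=\gMX[0]+\gMX[1]$ and $\gMZ=\gMZ[0]+\gMZ[1]$, giving $\cR_{\mathrm{tel}}[\zeta](\rho)=\sum_{m_1,m_2}(-1)^{\zeta_1 m_2+\zeta_2 m_1}\,\cR_{\mathrm{tel}}[(m_1,m_2)](\rho)$. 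Here $\cR_{\mathrm{tel}}[(m_1,m_2)]$ is the postselected circuit followed by termination of the classical bits.

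The key step is the standard teleportation computation for each branch. Prepare the Bell pair, apply the $\gCNOT$ from the input into the first Bell qubit, postselect the $\gX$-outcome $m_1$ and the $\gZ$-outcome $m_2$, and apply $\gZ^{m_1}\gX^{m_2}$. The output is then exactly $\tfrac14\rho$ for every input operator $\rho$, including operators on the input qubit tensored with an arbitrary reference system. I would verify this by linearity on the matrix units $\ket{a}\bra{a'}$. The two measurements project onto $\bra{m_1}\gH\otimes\bra{m_2}$ on the respective qubits. After correction, each outcome gives the Kraus operator $\tfrac12 I$, because teleportation has four equiprobable outcomes and each is corrected to the identity. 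So $\cR_{\mathrm{tel}}[(m_1,m_2)](\rho)=\tfrac14\rho$, independently of $(m_1,m_2)$.

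Summing over branches then gives $\cR_{\mathrm{tel}}[\zeta](\rho)=\tfrac14\big(\sum_{m_1,m_2}(-1)^{\zeta_1m_2+\zeta_2m_1}\big)\rho$. Setting $\zeta=0$ recovers $\cR_{\mathrm{tel}}=I$, and for every $\zeta$ the output is a scalar multiple of $\rho$. That scalar is $1$ if $\zeta=0$ and $0$ otherwise, which is still proportional in the sense used throughout the paper (including zero, as in the earlier gadget proofs that treat collapse to $0$ as trivially acceptable).

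The only delicate point is making the per-branch identity $\cR_{\mathrm{tel}}[(m_1,m_2)]=\tfrac14 I$ hold for arbitrary, possibly non-Hermitian, operators, not only density matrices. Writing each branch as a single Kraus operator $K_{m_1m_2}\,\cdot\,K_{m_1m_2}^\dagger$ with $K_{m_1m_2}=\tfrac12 I$ handles this directly.
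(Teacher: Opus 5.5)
Your proposal is correct and follows essentially the same route as the paper: decompose $\cR_{\mathrm{tel}}$ into its four postselected measurement branches, observe that each branch implements the identity up to a scalar, note that a reweighting only multiplies each branch by a sign fixed by that branch's control bits, and sum. Your explicit per-branch Kraus operator $\tfrac12 I$ and the resulting overall scalar ($1$ for $\zeta=0$, $0$ otherwise) make concrete the proportionality constants that the paper leaves implicit, and you also make explicit that the classical bits are terminated at the end, which the paper's conclusion $\cR_{\mathrm{tel}}=I$ requires.
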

\begin{proof}
  First decompose
  \begin{equation*}
    \cR_{\mathrm{tel}} = \sum_{\text{postselections }\xi}\cR_{\mathrm{tel}}[\xi](\cdot).
  \end{equation*}
  By definition for every postselection $\xi$ for $\cR_{\mathrm{tel}}$, then $\cR_{\mathrm{tel}}[\xi](\cdot)\propto\cR_{\mathrm{tel}}(\cdot)$, as teleportation succeeds even under postselected measurements. Now for each fixed postselected circuit $\cR_{\mathrm{tel}}[\xi]$, the classical control bit values are fixed, so a reweighting $\zeta$ then only introduces an additional global phase. That is, $\cR_{\mathrm{tel}}[\zeta,\xi](\cdot)\propto\cR_{\mathrm{tel}}[\xi](\cdot)\propto\cR_{\mathrm{tel}}(\cdot)$. Hence
  \begin{equation*}
    \cR_{\mathrm{tel}}[\zeta](\cdot) = \sum_{\text{postselections }\xi}\cR_{\mathrm{tel}}[\zeta,\xi](\cdot) \propto \cR_{\mathrm{tel}}(\cdot),
  \end{equation*}
  as desired.
\end{proof}

We will also use \Cref{claim:telfault} below, which shows that Pauli errors right before the measurements in $\cR_{\mathrm{tel}}$ are equivalent to Pauli errors on the input/output state, even under arbitrary reweightings. Here by a (single-qubit) ``Pauli error'' we mean a Pauli superoperator. Similarly and by a (single-qubit) ``$\gX$ (resp.~$\gZ$) Pauli error'' we mean a Pauli superoperator of the form $\rho\mapsto\gX^a\rho\gX^{a'}$ (resp.~$\rho\mapsto\gZ^a\rho\gZ^{a'}$) for some $a,a'\in\{0,1\}$. Thus every Pauli error can be expressed as the composition of a $\gX$ error and a $\gZ$ error.

\begin{claim}
  \label{claim:telfault}
  Let $\cF$ be a Pauli fault for the circuit $\cR_{\mathrm{tel}}$ shown in \Cref{fig:teleport} that is supported in the space-time locations just prior to the Pauli measurements. Let $\zeta$ be a reweighting of $\cR_{\mathrm{tel}}$. Then there exists a Pauli superoperator $\bar{F}:\bC^{2\times 2}\rightarrow\bC^{2\times 2}$ depending only on $\cF$ for which $\cR_{\mathrm{tel}}[\cF,\zeta](\cdot)\propto\bar{F}$.
\end{claim}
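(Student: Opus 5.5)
The plan is to reduce to a single-qubit Pauli calculation, and handle the fault by propagating it to the measured qubits and then absorbing it into the measurement outcomes. The circuit $\cR_{\mathrm{tel}}$ has three qubits: input $q_1$, and Bell-pair halves $q_2,q_3$. After the Bell-pair preparation and the $\gCNOT$ from $q_1$ to $q_2$, we measure $q_1$ in the $\gX$ basis and $q_2$ in the $\gZ$ basis. The fault $\cF$ acts only just before these measurements, so it is a Pauli superoperator $\rho\mapsto P\rho P'$ on $q_1q_2$. I split each of $P,P'$ into an $\gX$ part and a $\gZ$ part on each qubit and treat the four types separately.

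First, consider the parts that commute with the measurement. A $\gX$ error on $q_1$ before $\gMX$, and a $\gZ$ error on $q_2$ before $\gMZ$, only contribute a sign. Concretely, $\bra{\xi}\gH\,\gX^a\sigma\gX^{a'}\gH\ket{\xi}=(-1)^{(a+a')\xi}\bra{\xi}\gH\sigma\gH\ket{\xi}$, and the $\gZ$-basis case is analogous. So under a fixed postselection $\xi$, these pieces contribute a global scalar.

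Second, consider the parts that anticommute with the measurement: a $\gZ$ error on $q_1$ before $\gMX$, or a $\gX$ error on $q_2$ before $\gMZ$. If the error appears on both sides of $\sigma$ ($a=a'=1$), it flips the effective outcome. Writing $\gMZ[\xi](\gX\sigma\gX)=\bra{\xi+1}\sigma\ket{\xi+1}\ket{\xi}\bra{\xi}$, the branch with classical record $\xi$ equals the faultless branch with outcome $\xi+1$, except that the record carries $\xi$. Since the downstream classically-controlled correction uses $\xi$ rather than $\xi+1$, the output of that branch is the correct teleported state times an extra Pauli $\gX$ or $\gZ$ on $q_3$. This follows from the standard teleportation identity $\cR_{\mathrm{tel}}[\xi](\rho)\propto \rho$ up to the known correction, applied as in \Cref{claim:telpost}.

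The one-sided case ($a\ne a'$) is the main obstacle. Here the superoperator is non-physical: $\bra{\xi}\gX\sigma\ket{\xi}=\bra{\xi+1}\sigma\ket{\xi}$ is an off-diagonal term between branches. My plan is to handle this directly at the level of the three-qubit state. Just before the measurements, the joint state is $(I\otimes B)(\rho)$ with $B$ the Bell encoding, and every Pauli on $q_1q_2$ acting on this state equals, up to phase, a Pauli acting on $\rho$ and/or $q_3$. For instance, $\gX_{q_2}\gX_{q_3}$ and $\gZ_{q_1}\gZ_{q_3}$ stabilize the ideal pre-measurement state up to the logical frame, so $\gX_{q_2}$ is equivalent to $\gX_{q_3}$ and $\gZ_{q_1}$ is equivalent to $\gZ_{q_3}$. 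This lets me move each one-sided error from $q_1,q_2$ to the unmeasured qubit $q_3$. I will verify that these equivalences hold as operator identities on the state $(\gCNOT_{12})(I\otimes B)\rho$, so that they hold separately on the left and right sides of $\sigma$.

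After moving everything, the fault becomes $\rho\mapsto \bar{P}\rho\bar{P}'$ on $q_3$ before any correction. Since the faulty piece no longer touches measured qubits, the rest of the circuit acts exactly as in $\cR_{\mathrm{tel}}[\zeta]$, and \Cref{claim:telpost} gives $\cR_{\mathrm{tel}}[\cF,\zeta]\propto \bar{F}$ once the Paulis are commuted through the classically-controlled $\gX$ and $\gZ$ (which only produces signs). Here $\bar{F}$ is determined by $\cF$ alone.

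The step to check most carefully is that the phases produced, which depend on the classical record $\xi$, combine into a single global scalar after summing over $\xi$. My approach is to note that each such phase has the form $(-1)^{c\cdot\xi}$ for a vector $c$ determined by $\cF$. When the error is moved to $q_3$ before measurement, these become honest conjugations with no $\xi$-dependence, so the sum over $\xi$ reproduces $\cR_{\mathrm{tel}}[\zeta]$ conjugated by $\bar{F}$.
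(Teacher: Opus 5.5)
Your overall strategy (move every fault component off the measured qubits onto the input $\rho$ or the unmeasured qubit $q_3$, then invoke \Cref{claim:telpost}) is the paper's. However, there is a genuine gap at the step you flag as most delicate: your resolution of it is incorrect. The $\xi$-dependent signs do not disappear when errors are moved to $q_3$. A one-sided $\gX$ on $q_3$ passes $\gCt{\gX}$ freely, but it anticommutes with the $\gZ$ correction on one side only, so $\gCt{\gZ}$ produces exactly a factor $(-1)^{\xi_1}$. Similarly, the ``commuting'' errors give $(-1)^{(a+a')\xi}$, which is a global scalar only for fixed $\xi$.

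Concretely, take a one-sided $\gX$ on $q_2$ and $\zeta=0$. The branch with outcome $\xi$ ends as $\tfrac14(-1)^{\xi_1}\gX\rho$, so the output is $0$, not $\gX\rho$. This contradicts your claim that the sum over $\xi$ reproduces $\cR_{\mathrm{tel}}[\zeta]$ conjugated by $\bar F$.

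The missing idea is this. The control bits of $\gCt{\gX}$ and $\gCt{\gZ}$ are exactly the outcomes $\xi_2,\xi_1$, so a factor $(-1)^{c\cdot\xi}$ is precisely a change of reweighting $\zeta\mapsto\zeta+c$. Moreover, \Cref{claim:telpost} holds for \emph{every} reweighting, with proportionality constant possibly $0$. This is how the paper closes the argument: it absorbs these signs into new reweightings $\zeta'$ and $\zeta''$, then applies \Cref{claim:telpost} to $\cR_{\mathrm{tel}}[\zeta'']$, obtaining $\bar F_{\gX}\circ\cR_{\mathrm{tel}}[\zeta'']\circ\bar F_{\gZ}\propto\bar F_{\gX}\circ\bar F_{\gZ}$.

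Second, your identity ``$\gZ_{q_1}$ is equivalent to $\gZ_{q_3}$'' is false. The pre-measurement state is stabilized by $\gX_{q_2}\gX_{q_3}$ and $\gZ_{q_1}\gZ_{q_2}\gZ_{q_3}$, not by $\gZ_{q_1}\gZ_{q_3}$. In fact $\gZ_{q_1}$ is the logical $\gZ$: it commutes back through $\gCNOT_{12}$ to the input. So that component lands on $\rho$, not on $q_3$. This does not affect the conclusion, since a Pauli on the input is still a Pauli after teleportation. But the fault becomes a Pauli on the input and on $q_3$, not on $q_3$ alone.

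With both fixes, your isometry identities (which do hold separately on each side of $\sigma$) treat one- and two-sided errors uniformly, so the separate outcome-flip case analysis becomes unnecessary. The result is essentially the paper's proof. It back-propagates $\gZ_{q_1}$ to the input and $\gX_{q_2}$ to $\gX_{q_2}\gX_{q_3}$ at the first timestep. It then discards $\gX_{q_2}$ using $\gX\ket{+}=\ket{+}$ and pushes $\gX_{q_3}$ forward through the corrections.
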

\begin{proof}
  By definition $\cF$ only induces Pauli errors on the first two qubits, just prior to the measurements in $\cR_{\mathrm{tel}}$. A Pauli $\gX$ (resp.~$\gZ$) error just prior to the $\gMX$ (resp.~$\gMZ$) measurement has no effect on the circuit's execution except to change the reweighting on the subsequent classically-controlled Pauli $\gZ$ (resp.~$\gX$) correction. (Such a changed reweighting is for instance possible if a non-physical error such as $\rho\mapsto\rho\gX$ is applied just prior to the $\gMX$ measurement.) Thus we have
  \begin{equation*}
    \cR_{\mathrm{tel}}[\cF,\zeta] = \cR_{\mathrm{tel}}[\cF',\zeta'],
  \end{equation*}
  where $\zeta'$ is our new reweighting that absorbs the $\gX$ (resp.~$\gZ$) error just prior to the $\gMX$ (resp.~$\gMZ$) measurement, and $\cF'$ is the fault obtained by removing these errors from $\cF$. That is, $\cF'$ only applies a Pauli $\gZ$ (resp.~$\gX$) error just prior to the $\gMX$ (resp.~$\gMZ$) measurement on the first (resp.~second) qubit. Now conjugating the errors in $\cF'$ by the preceding $\gCNOT$ gates, we then have that
  \begin{equation*}
    \cR_{\mathrm{tel}}[\cF',\zeta'] = \cR_{\mathrm{tel}}[\cF'',\zeta'],
  \end{equation*}
  where $\cF''$ is a fault supported entirely in the first timestep of $\cR_{\mathrm{tel}}$, which applies a Pauli $\gZ$ error to the first qubit, and applies Pauli $\gX$ errors to the second and third qubits. However, because the second qubit is initialized to $\ket{+}\bra{+}$, which is invariant to $\gX$ errors, we in fact have
  \begin{equation*}
    \cR_{\mathrm{tel}}[\cF'',\zeta'] = \cR_{\mathrm{tel}}[\cF''',\zeta'],
  \end{equation*}
  where $\cF'''$ is again a fault supported entirely in the first timestep, which applies a $\gZ$ error to the first qubit, no error to the second qubit, and a $\gX$ error to the third qubit. We may now conjugate this $\gX$ error on the third qubit by the $\gCNOT$, $\gCt{\gX}$, and $\gCt{\gZ}$ gates applied to the third qubit, in order to push the $\gX$ error to the end of the circuit. Note that conjugating the $\gX$ error by $\gCt{\gZ}$ may introduce a $\gZ$ error on the classical control bit, or equivalently, may again change the reweighting. Thus
  \begin{equation*}
    \cR_{\mathrm{tel}}[\cF''',\zeta'] = \bar{F}_{\gX}\circ\cR_{\mathrm{tel}}[\zeta'']\circ\bar{F}_{\gZ},
  \end{equation*}
  where $\zeta''$ is a reweighting for $\cR_{\mathrm{tel}}$, and $\bar{F}_{\gX}$ (resp.~$\bar{F}_{\gZ}$) is a single-qubit $\gX$ (resp.~$\gZ$) error. But by \Cref{claim:telpost} we have $\cR_{\mathrm{tel}}[\zeta'']\propto I$, so combining the above equalities gives that
  \begin{equation*}
    \cR_{\mathrm{tel}}[\cF,\zeta] \propto \bar{F}_{\gX}\circ\bar{F}_{\gZ},
  \end{equation*}
  as desired.
\end{proof}

\begin{proof}[Proof of \Cref{lem:errcorr}]
  At a high level, we define $\cR$ to prepare two fresh code states containing logical Bell pairs, which we then use to teleport the input state into one of the two fresh code states. Specifically, we implement the teleportation circuit in \Cref{fig:teleport} on all logical qubits in the code blocks simultaneously. No errors can propagate from the first (input) block to the third (output) block, as after the initial Bell state preparation (which does not involve the input block), the only gates applied to the output block are classically-controlled Paulis (which do not propagate errors). Hence the gadget will be fault-tolerant and refreshing. The mending property follows because if the input has a bad corruption, we simply apply the wrong classically-controlled Pauli correction, resulting in a logical Pauli error on an otherwise valid code state.

  We now present the details. Let
  \begin{equation*}
    \Qu{N}_1 = C^{r_X}, \hspace{1em} \Qu{N}_2 = C^{r_X}\sqcup C^{r_X+1}, \hspace{1em} \Qu{N}_3 = C^{r_X}\sqcup C^{r_X-1},
  \end{equation*}
  so that $\Qu{N}=\Qu{N}_1\sqcup \Qu{N}_2\sqcup\Qu{N}_3$. We for instance write $\Qu{N}_2|_{C^{r_X}}$ to denote the copy of $C^{r_X}$ inside $\Qu{N}_2$. Our circuit $\cR$ takes as input qubits $N_{\mathrm{in}}=\Qu{N}_1$, and outputs qubits $N_{\mathrm{out}}=\Qu{N}_3|_{C^{r_X}}$. We define $\cR$ to perform the following operations in order to implement the logical circuit in \Cref{fig:teleport} on all $k$ logical qubits:
  \begin{enumerate}
  \item ($r(2\Delta+2)+6\leq 8r\Delta$ timesteps) Apply the $\ket{+}^{\otimes k}$-state preparation gadget in \Cref{lem:stateprep} on qubits $\Qu{N}_2$, and apply the $\ket{0}^{\otimes k}$-state preparation gadget in \Cref{lem:stateprep} on qubits $\Qu{N}_3$, while applying the idling gadget in \Cref{lem:idle} on qubits $\Qu{N}_1$.
  \item ($1$ timestep) Apply the $\gCNOT^{\otimes k}$ gadget in \Cref{lem:cnot} on qubits $\Qu{N}_2|_{C^{r_X}},\Qu{N}_3|_{C^{r_X}}$, while applying the idling gadget in \Cref{lem:idle} on qubits $\Qu{N}_1$.
  \item ($1$ timestep) Apply the $\gCNOT^{\otimes k}$ gadget in \Cref{lem:cnot} on qubits $\Qu{N}_1,\Qu{N}_2|_{C^{r_X}}$, while applying the idling gadget in \Cref{lem:idle} on qubits $\Qu{N}_3|_{C^{r_X}}$.
  \item ($2$ timesteps) Apply the $\gMX^{\otimes k}$ gadget in \Cref{lem:logmeas} on qubits $\Qu{N}_1$, and apply the $\gMZ^{\otimes k}$ gadget in \Cref{lem:logmeas} on qubits $\Qu{N}_2|_{C^{r_X}}$, while applying the idling gadget in \Cref{lem:idle} on qubits $\Qu{N}_3|_{C^{r_X}}$. Let $x_1,x_2\in\bF_2^k$ denote the classical output strings of the respective measurement gadgets.
  % \item Run a classical function gate to compute $z_1=z_1(x_1),\;z_2=z_2(x_2)\in\bF_2^n$ satisfying $z_1\in\Enc_{\cC}(x_1)$ and $z_2\in\Enc_{\cC^\vee}(x_2)$, where $\Enc_{\cC^\vee}$ is the dual encoding map to $\Enc_{\cC}$ (see \Cref{def:encdual,remark:encdual}). Meanwhile, apply the idling gadget in \Cref{lem:idle} on qubits $\Qu{N}_3|_{C^{r_X}}$.
  \item ($3$ timesteps) Apply the $\gCt{\gX}^{\otimes k}$ gadget in \Cref{lem:ccp} on control bits $x_2$ and target qubits~$\Qu{N}_3$.
  \item ($3$ timesteps) Apply the $\gCt{\gZ}^{\otimes k}$ gadget in \Cref{lem:ccp} on control bits $x_1$ and target qubits~$\Qu{N}_3$.
  \item ($1$ timestep) Apply $\gTerm^{\otimes 2k}$ to the classical bits $(x_1,x_2)$, while applying the idling gadget in \Cref{lem:idle} on qubits $\Qu{N}_3|_{C^{r_X}}$.
  \end{enumerate}
  By definition, $\cR$ uses quantum space $\Qu{N}$ and time $T=r(2\Delta+2)+17\leq 16r\Delta$.

  We may apply \Cref{lem:parcomp} across the gadgets applied in parallel in each of the steps defining $\cR$ above, and then apply \Cref{lem:seqcomp} across these parallel-compsed gadgets applied in sequence. Letting $\cR_{\mathrm{tel}}$ denote the logical teleportation circuit in \Cref{fig:teleport}, then it follows from the gadget lemmas listed above that the resulting composition $\bfG=(\cR,(\cE_{\mathrm{run}}^{\sqcup 3})^{\sqcup T},D_{\mathrm{in}},D_{\mathrm{out}},\ps)$ provides a fault-tolerant gadget for $\cR_{\mathrm{tel}}^{\otimes k}[*]$.

  It now only remains to show the mending and refreshing properties. For the refreshing property, we simply observe that when composing the gadgets above and applying \Cref{lem:seqcomp,lem:parcomp} to construct the $\cE_{\mathrm{out}}$-avoiding pair $E_{\mathrm{out}}$ given the $\cE_{\mathrm{in}}$-avoiding pair $E_{\mathrm{in}}$, the $(\cE_{\mathrm{run}}^{\sqcup 3})^{\sqcup T}$-avoiding set $E_{\mathrm{run}}$, and the postselection $\xi$, then $E_{\mathrm{out}}$ will depend on $E_{\mathrm{run}}$ and on $\xi$, but not on $E_{\mathrm{in}}$. Indeed, $E_{\mathrm{out}}$ is constructed by repeatedly applying \Cref{lem:seqcomp,lem:parcomp} to compose all gadgets described above into the circuit $\cR$. The state preparation and $\gCNOT$ gadgets acting on $\Qu{N}_2\sqcup\Qu{N}_3$ at the start of $\cR$ do not touch qubits $\Qu{N}_1$, so the resulting error sets on $\Qu{N}_2\sqcup\Qu{N}_3$ cannot depend on $E_{\mathrm{in}}\subseteq\Qu{N}_1$. The only subsequent gadgets applied on qubits $\Qu{N}_3$ are idle and classically-controlled Pauli gadgets, both of which do not touch any qubits outside of $\Qu{N}_3$, and hence cannot introduce a dependence of the output error set on $E_{\mathrm{in}}$. Thus the output error set $E_{\mathrm{out}}$ does not depend on $E_{\mathrm{in}}$, as desired.

  To show the mending property, consider a $(\cE_{\mathrm{run}}^{\sqcup 3})^{\sqcup T}$-avoiding set $E_{\mathrm{run}}$ and a postselection $\xi\in\bF_2^{\ps}$ for $\cR$. Let $E_{\mathrm{out}}$ be the associated $\cE_{\mathrm{out}}$-avoiding pair given by the definition of refreshing fault-tolerance. Here the refreshing property ensures that $E_{\mathrm{out}}$ is well-defined as a function of $E_{\mathrm{run}}$ and $\xi$, without needing a choice of $E_{\mathrm{in}}$. By \Cref{lem:paulift}, our goal is to show that for every finite set $\Rf{K}$, every operator $\rho\in\bC^{2^{M^{r_X}\sqcup\Rf{K}}\times 2^{M^{r_X}\sqcup\Rf{K}}}$, every Pauli $\Rf{K}$-deviation $\sigma$ of $\Enc_{\cC}^{r_X}(\rho)$, and every $\comp{E_{\mathrm{run}}}$-avoiding Pauli fault $\cF$ and reweighting $\zeta$ for $\cR$, then $\cR[\cF,\zeta,\xi](\sigma)$ is a linear combination of $\comp{E_{\mathrm{out}}}$-deviations of $\Enc_{\cC}^{r_X}\circ L(\rho)$ for superoperators $L$ acting on the $k$ output logical qubits.

  For this purpose, let $F_{\mathrm{in}}$ be the Pauli error superoperator acting on the circuit's input state, so that $\sigma=F_{\mathrm{in}}\circ\Enc_{\cC}^{r_X}(\rho)$. By definition, $F_{\mathrm{in}}$ simply propagates through the $\gCNOT$ gadget applied to qubits $\Qu{N}_1,\Qu{N}_2$ (along with any additional Pauli errors accumulated during idling timesteps), before being measured out in the logical measurement gadgets. Because these logical measurement gadgets from \Cref{lem:logmeas} are mending, it follows that $\cR[\cF,\zeta,\xi](\sigma)$ outputs a linear combination of $\comp{E_{\mathrm{out}}}$-deviations of states of the form $\Enc_{\cC}^{r_X}\circ\cR_{\mathrm{tel}}^{\otimes k}[\bar{\cF},\bar{\zeta}](\rho)$ for logical reweightings $\bar{\zeta}$, and for logical Pauli faults $\bar{\cF}$ supported only at the locations just prior to the measurements in $\cR_{\mathrm{tel}}^{\otimes k}$.
  Then by \Cref{claim:telfault}, for each such $\bar{\cF}$ there exists a $k$-qubit Pauli superoperator $L$ for which
  \begin{equation*}
    \Enc_{\cC}^{r_X}\circ\cR_{\mathrm{tel}}^{\otimes k}[\bar{\cF},\bar{\zeta}](\rho) \propto \Enc_{\cC}^{r_X}\circ L(\rho).
  \end{equation*}
  Thus $\cR[\cF,\zeta,\xi](\sigma)$ outputs a linear combination of $\comp{E_{\mathrm{out}}}$-deviations of states of the form $\Enc_{\cC}^{r_X}\circ L(\rho)$ for $k$-qubit superoperators $L$, as desired, so $\bfG$ is indeed mending.
\end{proof}

\subsubsection{Injection}
In \Cref{lem:inject} below, we present our injection gadget. Like our error-correction gadget in \Cref{lem:errcorr}, our injection gadget begins by preparing two fresh code blocks containing logical Bell pairs. However, we now apply our ejection gadget from \Cref{lem:eject} to one half of the Bell pairs, which we then use the teleport the input state into the other (encoded) half of the Bell pairs. We again perform teleportation using the circuit in \Cref{fig:teleport}.

\begin{lemma}
  \label{lem:inject}
  For integers $r_X,r_Z\geq 2$, $\bar{\ell}\geq 1$, let $r=r_X+r_Z$, define $\epsilon=\epsilon(r)$ as in \Cref{lem:ssflip}, and define $\mu=\mu(\epsilon)$ and $\Delta=\Delta(\epsilon)$ as in \Cref{lem:lossless}. Define $\cC^*=\cC^*(\epsilon,\bar{\ell},r_X,r_Z)$, $\cM^*=\cM^*(\epsilon,\bar{\ell},r_X,r_Z)$, and $\Enc_{\cC}:\cM^*\rightarrow\cC^*$ as in \Cref{def:prodcode}. Let $Q=(Q_X,Q_Z)$ be the $[[n,k]]$ code at level $r_X$ of $\cC^*$.

  Let $\beta\geq 2\Delta$, and for $\omega_{\mathrm{in}},\eta_{\mathrm{run}},\gamma_{\mathrm{run}},\omega_{\mathrm{run}}>0$ satisfying
  \begin{equation*}
    \eta_{\mathrm{run}} \leq \mu 2^{\bar{\ell}/2}, \hspace{1em} \gamma_{\mathrm{run}} \leq \frac{1}{2^{12}r^3\Delta^5}, \hspace{1em} \omega_{\mathrm{run}} \leq \frac{\beta^{\bar{\ell}/2}}{2^9r^3\beta^2\Delta^4}
  \end{equation*}
  (as in \Cref{eq:sprunbounds}),
  let
  \begin{align*}
    \bar{\omega} &= \omega_{\mathrm{in}}+2^{30}r^3\beta^2\Delta^4\cdot\omega_{\mathrm{run}}, \hspace{1em} \eta_{\mathrm{out}} = \eta_{\mathrm{run}}, \hspace{1em} \gamma_{\mathrm{out}} = 2^{30}r^3\Delta^5\cdot\gamma_{\mathrm{run}}, \hspace{1em} \omega_{\mathrm{out}} = 2^{30}r^3\beta^2\Delta^4\cdot\omega_{\mathrm{run}}.
  \end{align*}
  Then let
  \begin{align*}
    \cE_{\mathrm{run},1} &= 2^{M^{r_X}}|^{\geq\omega_{\mathrm{run},1}} \\
    \cE_{\mathrm{run},2} = \cE_{\mathrm{run},3} &= \cE_{\mathrm{run}}^{\cC}(\eta_{\mathrm{run}},\gamma_{\mathrm{run}},\omega_{\mathrm{run}};\beta) \\
    \cE_{\mathrm{run}} &= \cE_{\mathrm{run},1}\sqcup\cE_{\mathrm{run},2}\sqcup\cE_{\mathrm{run},3},
  \end{align*}
  where $\cE_{\mathrm{run}}^{\cC}(\cdot)$ is given by \Cref{def:badfault}. Define the decorated codes
  \begin{align*}
    D_{\mathrm{in}} &= ((\bF_2^{M^{r_X}},\bF_2^{M^{r_X}}),\; \id,\; \cE_{\mathrm{in}}=2^{M^{r_X}}|^{\geq\omega_{\mathrm{in}}}) \\
    D_{\mathrm{in}}(E_{\mathrm{in}}) &= ((\bF_2^{M^{r_X}},\bF_2^{M^{r_X}}),\; \id,\; \cE_{\mathrm{in}}(E_{\mathrm{in}})=\comp{E_{\mathrm{in}}}) \hspace{1em} \forall E_{\mathrm{in}}\subseteq M^{r_X} \\
    D_{\mathrm{out}} &= (Q,\; \Enc_{\cC}^{r_X},\; \cE_{\mathrm{out}} = \cE^{\cC}(\eta_{\mathrm{out}},\gamma_{\mathrm{out}},\omega_{\mathrm{out}};\beta)),
  \end{align*}
  where the encoding map for $D_{\mathrm{in}}$ is the $k=|M^{r_X}|$-qubit identity map $\id:\bF_2^{M^{r_X}}\rightarrow\bF_2^{M^{r_X}}$, and $\cE^{\cC}(\cdot)$ is given by \Cref{def:badfams}.

  Let $\bar{\cE}=2^{M^{r_X}}|^{\geq\bar{\omega}}$. Then there exists a quantum circuit $\cR$ using quantum space
  \begin{equation*}
    \Qu{N} = M^{r_X} \sqcup (C^{r_X}\sqcup C^{r_X+1}) \sqcup (C^{r_X}\sqcup C^{r_X-1})
  \end{equation*}
  (so $|\Qu{N}|\leq 12r\Delta n$), time $T=r(2\Delta+2)+21\leq 16r\Delta$, and gate set
  \begin{equation*}
    \cG = \{\gInitX,\gInitZ,\gTerm,\gCNOT,\gMX,\gMZ,\gCt{\gX},\gCt{\gZ},\gCF_*\}
  \end{equation*}
  with an associated set $\ps$ of posetselectable gates for which the following hold:
  \begin{enumerate}
  \item\label{it:ig} (General fault-tolerance) There exists a mending refreshing fault-tolerant gadget
    \begin{equation*}
      \bfG = (\cR,\; \cE_{\mathrm{run}}^{\sqcup T},\; D_{\mathrm{in}},\; D_{\mathrm{out}},\; \ps)
    \end{equation*}
    for the set
    \begin{equation*}
      \bar{\cO} = \left\{\bar{O}:\bC^{2^{M^{r_X}}\times 2^{M^{r_X}}}\rightarrow\bC^{2^{M^{r_X}}\times 2^{M^{r_X}}} : |\supp(\bar{O})| < \bar{\omega}\right\}
    \end{equation*}
    of superoperators acting on $k=|M^{r_X}|$ qubits whose support is $\bar{\cE}$-avoiding.
  \item\label{it:ifg} (Fine-grained logical error bound) For every $\cE_{\mathrm{in}}$-avoiding set $E_{\mathrm{in}}\subseteq M^{r_X}$, every $\cE_{\mathrm{run}}$-avoiding set $E_{\mathrm{run}}\subseteq\Qu{N}\times[T]$, and every postselection $\xi\in\bF_2^{\ps}$, there exists a $\bar{\cE}$-avoiding set $\bar{E}\subseteq M^{r_X}$ such that
    \begin{equation*}
      \bfG(E_{\mathrm{in}},E_{\mathrm{run}}) = (\cR,\; (\comp{E_{\mathrm{run}}})^{\sqcup T},\; D_{\mathrm{in}}(E_{\mathrm{in}}),\; D_{\mathrm{out}},\; \ps)
    \end{equation*}
    is a mending refreshing fault-tolerant gadget for the set
    \begin{equation*}
      \bar{\cO}(\bar{E}) = \left\{\bar{O}:\bC^{2^{M^{r_X}}\times 2^{M^{r_X}}}\rightarrow\bC^{2^{M^{r_X}}\times 2^{M^{r_X}}} : \supp(\bar{O}) \subseteq \bar{E}\right\}
    \end{equation*}
    of superoperators whose support is $\comp{\bar{E}}$-avoiding.
  \end{enumerate}
  Furthemore, no postselectable gate in $\ps$ lies in the forward lightcone of an input qubit of $\cR$, meaning that we only consider postselections on measurements that do not depend on the input state.
\end{lemma}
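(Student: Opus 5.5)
We build $\cR$ by composing earlier gadgets, in the same way as in the proof of \Cref{lem:errcorr}. The only change is that the input block is a set of bare qubits $M^{r_X}$, and one half of the logical Bell pairs is ejected before teleportation. Concretely, $\cR$ runs the following steps in order.
\begin{enumerate}
\item Run the $\ket{+}^{\otimes k}$ and $\ket{0}^{\otimes k}$ preparation gadgets of \Cref{lem:stateprep} on $\Qu{N}_2=C^{r_X}\sqcup C^{r_X+1}$ and $\Qu{N}_3=C^{r_X}\sqcup C^{r_X-1}$. Meanwhile the input qubits $M^{r_X}$ idle as bare physical qubits.
\item Apply the transversal $\gCNOT$ gadget of \Cref{lem:cnot} between $\Qu{N}_2|_{C^{r_X}}$ and $\Qu{N}_3|_{C^{r_X}}$. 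This creates logical Bell pairs.
\item Run the ejection gadget of \Cref{lem:eject} on block $2$, taking $\omega_{\mathrm{in}}$ for ejection to be the $\omega_{\mathrm{out}}$ of the preceding composite. This leaves bare qubits on $V_{\bar{\ell}}^{r}\cong M^{r_X}$ that are Bell-paired with the logical qubits of block $3$.
\item Apply physical $\gCNOT$ gates between the input qubits $M^{r_X}$ and the ejected qubits, followed by physical $\gMX$ and $\gMZ$ measurements on these two bare registers.
\item Apply the classically controlled Pauli gadgets of \Cref{lem:ccp} on block $3$, then terminate the classical bits.
\end{enumerate}
Blocks $2,3$ idle via \Cref{lem:idle} whenever nothing else acts on them. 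Counting timesteps gives $T=r(2\Delta+2)+21$, and the space and gate-set claims follow directly from the constituent gadgets. We take $\ps$ to be the postselectable measurements inside the two state-preparation gadgets only. These measurements lie outside the forward lightcone of the input qubits, which gives the final sentence of the statement.

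For the fault-tolerance claims, we apply \Cref{lem:parcomp,lem:seqcomp} to the encoded portion of the circuit, namely state preparation, $\gCNOT$, ejection and the controlled Paulis. We treat the bare-qubit part (\Cref{fig:teleport} acting on physical qubits) by a direct Pauli-propagation argument, following the proofs of \Cref{claim:telpost,claim:telfault}. Take a Pauli input error supported in $E_{\mathrm{in}}$, faults supported in $E_{\mathrm{run}}$, and the ejection residual error supported in an $\omega$-bounded set $E_{\mathrm{ej}}\subseteq M^{r_X}$ (from \Cref{lem:eject}). These errors propagate through the bare $\gCNOT$s and measurements only onto qubits with the same label in $M^{r_X}$.

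By \Cref{claim:telfault}, applied coordinatewise with reweightings absorbed as in \Cref{claim:telpost}, the logical output equals $\Enc_{\cC}^{r_X}\circ\bar O(\rho)$ up to a $\comp{E_{\mathrm{out}}}$-deviation. Here $\bar O$ is a Pauli superoperator supported in
\[
\bar E=E_{\mathrm{in}}\cup E_{\mathrm{ej}}\cup \textstyle\bigcup_t (E_{\mathrm{run},1}\cap (M^{r_X}\times\{t\})),
\]
and $E_{\mathrm{out}}$ comes from the state-preparation, $\gCNOT$ and controlled-Pauli gadgets on block $3$. Bounding $|\bar E|$ by $\omega_{\mathrm{in}}$ plus a constant multiple of $\omega_{\mathrm{run}}$ shows that $\bar E$ is $\bar{\cE}$-avoiding. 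This proves item~\ref{it:ifg}, and item~\ref{it:ig} follows by taking the union over all admissible $E_{\mathrm{in}}$.

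The refreshing property holds for the same reason as in \Cref{lem:errcorr}. Block $3$ receives only classically controlled Paulis after the Bell-pair preparation, so $E_{\mathrm{out}}$ depends only on $E_{\mathrm{run}}$ and $\xi$. Mending holds because an arbitrary input error only changes the logical Pauli frame; this is the same argument as the mending part of \Cref{lem:errcorr}, with the \Cref{lem:logmeas} step replaced by bare measurements, which act trivially on the code space.

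The main obstacle is the bookkeeping of the parameters $\gamma,\omega$ through the chain of \Cref{lem:stateprep}, \Cref{lem:cnot}, \Cref{lem:idle}, \Cref{lem:ccp} and \Cref{lem:eject}. The prefactor $2^{30}r^3\beta^2\Delta^4$ has to dominate the accumulated constants. A related subtlety is that the ejection gadget's $\cE_{\mathrm{in}}$ must include the $\cE_{\mathrm{Rd}}$ families, so the Bell-pair block has to be reduced-avoiding before ejection. This holds because the gadget outputs are $\cE^{\cC}$-avoiding, and $\cE^{\cC}$ contains $\cE_{X,\Rd}^{\cC},\cE_{Z,\Rd}^{\cC}$ by \Cref{def:badfams}.
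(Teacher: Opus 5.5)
Your proposal is correct and follows essentially the same route as the paper's proof. It uses the same state-preparation/Bell-pair/ejection/bare-teleportation circuit with $\ps$ restricted to the state-preparation measurements, composition via \Cref{lem:seqcomp,lem:parcomp} on the encoded blocks, \Cref{claim:telfault} to turn the bare-register errors into a logical Pauli supported on $\bar E$, and refreshing/mending argued as in \Cref{lem:errcorr}.

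There is one small bookkeeping fix. Your explicit $\bar E$ must also contain the fault locations on the ejected register $\Qu{N}_2|_{M^{r_X}}$ in the bare-$\gCNOT$ timestep, as the paper includes them. These lie in $E_{\mathrm{run},2}$, not $E_{\mathrm{run},1}$, and are not covered by the ejection gadget's output set. Adding them only adds another $\omega_{\mathrm{run}}$ to the bound on $|\bar E|$.
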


\begin{remark}
  \label{remark:istatements}
  We have included the general fault-tolerance statement (\Cref{it:ig}) in \Cref{lem:inject} as it may be easier for the reader to parse than the fine-grained logical error bound (\Cref{it:ifg}). However, the fine-grained bound is a strictly stronger result than the general statement. Indeed, the general statement allows the gadget's logical output to experience arbitrary linear combinations of errors $\bar{O}$ supported on different sets of size $<\bar{\omega}$. However, the fine-grained statement shows that once we fix the supports $E_{\mathrm{in}}$ and $E_{\mathrm{run}}$ of the input and fault errors respectively, along with the postselection $\xi$, then any resulting logical error $\bar{O}$ must have support inside a fixed set $\bar{E}$ of size $<\bar{\omega}$.

  Thus if we for instance assume that the input and fault errors are sampled from some locally stochastic distribution as described in \Cref{sec:bfls}, so that $E_{\mathrm{in}},E_{\mathrm{run}}$ are sampled from some probability distribution, and so that $\xi$ is also sampled from some well-defined distribution, then we have a well-defined probability distribution over the set $\bar{E}$ of size $<\bar{\omega}$ that is guaranteed to contain the logical error's support. In contrast, the general fault-tolerance statement in \Cref{it:ig} does not rule out more severe forms of logical errors, such as those arising from an adversary who measures some qubits, and then adaptively chooses which other qubits to corrupt.

  Note that by the final paragraph in the statement of \Cref{lem:inject}, as long as the fault does not perform entangling gates between the forward lightcone of the input qubits and the qubits that are measured in $\ps$, then the probability distribution of measurement outcomes $\xi\in\bF_2^{\ps}$ will be independent of the choice of input state $\rho$. In this case it follows that the probability distribution of the output error set $\bar{E}$ does not depend on the input state $\rho$. We emphasize here that $\rho$ denotes the uncorrupted input state, as $\bar{E}$ necessarily depends on the input error set $E_{\mathrm{in}}$.
\end{remark}

\Cref{lem:inject} is intended to be instantiated with $\omega_{\mathrm{in}},\omega_{\mathrm{run}},\omega_{\mathrm{out}},\bar{\omega}$ all equal to (arbitrarily) small constants times $k=M^{r_X}$. Similarly as described in the discussion around the ejection gadget in \Cref{sec:eject}, a locally stochastic fault of the form described in \Cref{sec:bfls} will only fail to support such choices of $\omega_{\mathrm{in}},\omega_{\mathrm{run}},\omega_{\mathrm{out}},\bar{\omega}$ with small constant probability by \Cref{lem:lsex}. Hence our gadget in \Cref{lem:inject} will only fail to inject the input qubits with small constant probability, and after injection only a small constant fraction of the input qubits will be corrupted. Similarly as described in \Cref{sec:eject}, once we fix the fault's probability distribution, we can therefore identify a large constant fraction of our input qubits with low marginal probability of corruption during injection.

\begin{proof}[Proof of \Cref{lem:inject}]
  As mentioned in \Cref{remark:istatements}, it suffices to show the fine-grained result in \Cref{it:ifg} in \Cref{lem:inject}. Indeed, to prove the general statement in \Cref{it:ig}, by the definition of fault-tolerance in \Cref{def:faulttol} we must first fix a $\cE_{\mathrm{in}}$-avoiding set $E_{\mathrm{in}}\subseteq M^{r_X}$, a $\cE_{\mathrm{run}}$-avoiding set $E_{\mathrm{run}}\subseteq\Qu{N}\times[T]$, and a postselection $\xi\in\bF_2^{\ps}$. But then the fine-grained bound in \Cref{it:ifg} implies the desired fault-tolerance statement in \Cref{it:ig}. Thus for the remainder of this proof we only consider \Cref{it:ifg}.

  At a high level, we define $\cR$ to prepare two fresh code states containing logical Bell pairs. We then apply \Cref{lem:eject} to eject out of one of these two code states, to create Bell pairs between bare physical qubits and the remaining code state. We then implement the teleportation circuit $\cR_{\mathrm{tel}}$ in \Cref{fig:teleport} on all qubits simultaneously, to teleport the input state (which also consists of bare physical qubits) into the remaining code state. This approach is similar to that of the error-correction gadget in \Cref{lem:errcorr}; the only difference is that our input now consists of bare physical qubits, so we need to eject out of one of the prepared code states in order to teleport.

  We now present the details. Let
  \begin{equation*}
    \Qu{N}_1 = M^{r_X}, \hspace{1em} \Qu{N}_2 = C^{r_X}\sqcup C^{r_X+1}, \hspace{1em} \Qu{N}_3 = C^{r_X}\sqcup C^{r_X-1},
  \end{equation*}
  so that $\Qu{N}=\Qu{N}_1\sqcup \Qu{N}_2\sqcup\Qu{N}_3$. We for instance write $\Qu{N}_2|_{C^{r_X}}$ to denote the copy of $C^{r_X}$ inside $\Qu{N}_2$. Also recall that we can view $M^{r_X}=V_{\bar{\ell}}^{\times r}$ as a subset of $C^{r_X}$, so that $\Qu{N}_2|_{M^{r_X}}$ is a well-defined set, which will contain the qubits ejected out of the code state we prepare in $\Qu{N}_2|_{C^{r_X}}$. Our circuit $\cR$ takes as input qubits $N_{\mathrm{in}}=\Qu{N}_1$, and outputs qubits $N_{\mathrm{out}}=\Qu{N}_3|_{C^{r_X}}$. We define $\cR$ to perform the following operations in order to implement the logical circuit in \Cref{fig:teleport} on all $k$ input qubits:
  \begin{enumerate}
  \item\label{it:isp} ($r(2\Delta+2)+6\leq 8r\Delta$ timesteps) Apply the $\ket{+}^{\otimes k}$-state preparation gadget in \Cref{lem:stateprep} on qubits $\Qu{N}_2$, and apply the $\ket{0}^{\otimes k}$-state preparation gadget in \Cref{lem:stateprep} on qubits $\Qu{N}_3$, while applying identity gates on qubits $\Qu{N}_1$.
  \item\label{it:ibp} ($1$ timestep) Apply the $\gCNOT^{\otimes k}$ gadget in \Cref{lem:cnot} on qubits $\Qu{N}_2|_{C^{r_X}},\Qu{N}_3|_{C^{r_X}}$, while applying identity gates on qubits $\Qu{N}_1$.
  \item\label{it:ie} ($5$ timesteps) Apply the ejection gadget in \Cref{lem:eject} on qubits $\Qu{N}_2|_{C^{r_X}}$, while applying identity gates on qubits $\Qu{N}_1$, and the idling gadget in \Cref{lem:idle} no qubits $\Qu{N}_3|_{C^{r_X}}$.
  \item\label{it:icnotin} ($1$ timestep) Apply $\gCNOT^{\otimes k}$ on qubits $\Qu{N}_1,\Qu{N}_2|_{M^{r_X}}$, while applying the idling gadget in \Cref{lem:idle} on qubits $\Qu{N}_3|_{C^{r_X}}$.
  \item\label{it:imeasure} ($1$ timestep) Apply $\gMX^{\otimes k}$ on qubits $\Qu{N}_1$, and apply $\gMZ^{\otimes k}$ on qubits $\Qu{N}_2|_{C^{r_X}}$, while applying the idling gadget in \Cref{lem:idle} on qubits $\Qu{N}_3|_{C^{r_X}}$. Let $x_1,x_2\in\bF_2^k$ denote the respective classical measurement outcomes.
  \item\label{it:ictx} ($3$ timesteps) Apply the $\gCt{\gX}^{\otimes k}$ gadget in \Cref{lem:ccp} on control bits $x_2$ and target qubits~$\Qu{N}_3$.
  \item\label{it:ictz} ($3$ timesteps) Apply the $\gCt{\gZ}^{\otimes k}$ gadget in \Cref{lem:ccp} on control bits $x_1$ and target qubits~$\Qu{N}_3$.
  \item\label{it:iterm} ($1$ timestep) Apply $\gTerm^{\otimes 2k}$ to the classical bits $(x_1,x_2)$, while applying the idling gadget in \Cref{lem:idle} on qubits $\Qu{N}_3|_{C^{r_X}}$.
  \end{enumerate}
  By definition, $\cR$ uses quantum space $\Qu{N}$ and time $T=r(2\Delta+2)+21\leq 16r\Delta$. Furthermore, we define $\ps$ to contain only those measurements in \Cref{it:isp} from the state preparation gadget in \Cref{lem:stateprep}, so that indeed no gates in $\ps$ lie in the forward lightcone of any of the input qubits of $\cR$.

  We will now prove the fault-tolerance statement in \Cref{it:ifg} in \Cref{lem:inject}. Fix a $\cE_{\mathrm{in}}$-avoiding set $E_{\mathrm{in}}\subseteq M^{r_X}$, a $\cE_{\mathrm{run}}$-avoiding set $E_{\mathrm{run}}\subseteq\Qu{N}\times[T]$, and a postselection $\xi\in\bF_2^{\ps}$. Let $E_{\mathrm{Bell},23}=E_{\mathrm{Bell},2}\sqcup E_{\mathrm{Bell},3}\subseteq\Qu{N}_2|_{M^{r_X}}\sqcup\Qu{N}_3|_{C^{r_X}}$ be the output error set of the gadget given by the subcircuit $\cR_{\mathrm{Bell}}$ of $\cR$ that prepares Bell pairs between the $k$ bare physical qubits $\Qu{N}_2|_{M^{r_X}}$ and the $k$ logical qubits encoded in the code block $\Qu{N}_3|_{C^{r_X}}$. Formally, this subcircuit $\cR_{\mathrm{Bell}}$ is given by the restriction of \Cref{it:isp,it:ibp,it:ie} in the definition of $\cR$ above to qubits $\Qu{N}_2\sqcup\Qu{N}_3$, where we apply \Cref{lem:seqcomp,lem:parcomp} to obtain the associated composed gadget from the primitive gadgets given by \Cref{lem:stateprep,lem:cnot,lem:eject,lem:idle}. Letting
  \begin{equation*}
    \gamma_{\mathrm{Bell},3} = 2^{20}r^3\Delta^5\cdot\gamma_{\mathrm{run}}, \hspace{1em} \omega_{\mathrm{Bell},2} = \omega_{\mathrm{Bell},3} = 2^{20}r^3\beta^2\Delta^4\cdot\omega_{\mathrm{run}},
  \end{equation*}
  it follows that $E_{\mathrm{Bell},2}$ is $2^{M^{r_X}}|^{\geq\omega_{\mathrm{Bell},2}}$-avoiding, and that $E_{\mathrm{Bell},3}$ is $\cE^{\cC}(\eta_{\mathrm{run}},\gamma_{\mathrm{Bell},3},\omega_{\mathrm{mid,3}};\beta)$-avoiding.
  Now let $E_{\mathrm{out}}\subseteq\Qu{N}_3|_{C^{r_X}}$ be the output error set of the gadget acting on physical qubits $\Qu{N}_3$ that is given by the subcircuit of $\cR$ that simply idles and applies classically-controlled Pauli gates to $\Qu{N}_3$, as specified by \Cref{it:icnotin,it:imeasure,it:ictx,it:ictz,it:iterm} in the definition of $\cR$ above. The input error set to this gadget is $E_{\mathrm{Bell},3}$ (which uniquely determined as a function of $E_{\mathrm{run}},\xi$), and the fault error set is given by $E_{\mathrm{run}}$ (restricted to qubits $\Qu{N}_3$ across the appropriate timesteps), so that output error set $E_{\mathrm{out}}$ is uniquely determined as a function of $E_{\mathrm{run}},\xi$, and is
  \begin{equation*}
    \cE^{\cC}(\eta_{\mathrm{run}},\; 2^{30}r^3\Delta^5\cdot\gamma_{\mathrm{run}},\; 2^{30}r^3\beta^2\Delta^4\cdot\omega_{\mathrm{run}};\; \beta) = \cE^{\cC}(\eta_{\mathrm{out}},\gamma_{\mathrm{out}},\omega_{\mathrm{out}};\beta)
  \end{equation*}
  avoiding. We also define $\bar{E}\subseteq M^{r_X}$ to be the set
  \begin{equation*}
    \bar{E} = E_{\mathrm{in}} \cup \left(\bigcup_{t=1}^{r(2\Delta+2)+13}(E_{\mathrm{run}}\cap(\Qu{N}_1\times\{t\}))\right) \cup E_{\mathrm{Bell},2} \cup (E_{\mathrm{run}}\cap(\Qu{N}_2|_{M^{r_X}}\times\{r(2\Delta+2)+13\})),
  \end{equation*}
  given by the union of $E_{\mathrm{in}}$, $E_{\mathrm{Bell},2}$, and all locations in $E_{\mathrm{run}}$ that act on bare physical qubits in $\Qu{N}_1$ or $\Qu{N}_2|_{M^{r_X}}$. By construction
  \begin{align*}
    |\bar{E}|
    &< \omega_{\mathrm{in}} + (r(2\Delta+2)+14)\cdot\omega_{\mathrm{run}} + \omega_{\mathrm{Bell},2} \leq \bar{\omega},
  \end{align*}
  so that $\bar{E}$ is $\bar{\cE}=2^{M^{r_X}}|^{\geq\bar{\omega}}$-avoiding.
  Therefore given input and fault errors supported inside $E_{\mathrm{in}},E_{\mathrm{run}}$ and the postselection $\xi$, then the resulting error on qubits $\Qu{N}_1\sqcup\Qu{N}_2|_{M^{r_X}}$ just prior to the measurements in \Cref{it:imeasure} above is supported inside $\bar{E}\sqcup\bar{E}\subseteq\Qu{N}_1\sqcup\Qu{N}_2|_{M^{r_X}}$. Here we use the fact that such errors just prior to the measurements can only arise from the input error supported in $E_{\mathrm{in}}$, from the post-ejection error supported in $E_{\mathrm{Bell},2}$, and from the fault error on qubits $\Qu{N}_1$ and $\Qu{N}_2|_{C^{r_X}}$ in the appropriate timesteps; these errors can also propagate between $\Qu{N}_1$ and $\Qu{N}_2|_{M^{r_X}}$ via the application of $\gCNOT^{\otimes k}$ in \Cref{it:icnotin} above.

  By \Cref{lem:paulift}, to prove fault-tolerance, our goal is to show that for every finite set $\Rf{K}$, every operator $\rho\in\bC^{2^{M^{r_X}}\times 2^{M^{r_X}}}$, every Pauli $\comp{E_{\mathrm{in}}}$-deviation $\sigma$ of $\rho$, and every $\comp{E_{\mathrm{run}}}$-avoiding Pauli fault $\cF$ and reweighting $\zeta$ for $\cR$, then $\cR[\cF,\zeta,\xi]$ is a $\comp{E_{\mathrm{out}}}$-deviation of $\Enc_{\cC}^{r_X}\circ(\bar{\cO}(\bar{E}))(\rho)$, meaning that $\cR[\cF,\zeta,\xi](\sigma)$ is a linear combination of $\comp{E_{\mathrm{out}}}$-deviations of states of the form $\Enc_{\cC}^{r_X}\circ\bar{O}(\rho)$ for $k$-qubit superoperators $\bar{O}$ supported inside $\bar{E}$.

  For this purpose, first observe that by applying \Cref{lem:seqcomp,lem:parcomp} to compose the gadgets comprising $\cR_{\mathrm{Bell}}$ given by \Cref{lem:stateprep,lem:cnot,lem:eject,lem:idle}, the output $\cR_{\mathrm{Bell}}[\cF,\zeta,\xi](1)$ is by definition a $\comp{E_{\mathrm{Bell},23}}$-deviation of $I_{M^{r_X}}\otimes\Enc_{\cC}^{r_X}(\ket{00}+\ket{11})^{\otimes M^{r_X}}$. Here we implicitly restrict $\cF,\zeta,\xi$ to space-time locations in the subcircuit $\cR_{\mathrm{Bell}}$ of $\cR$. Because $\cR_{\mathrm{Bell}}$ has no input (qu)bits, its associated superoperator simply takes the trivial input $1\in\bC^{1\times 1}$. We now apply \Cref{lem:seqcomp,lem:parcomp} to compose the remaining gadgets comprising $\cR$, which are simply idling (\Cref{lem:idle}) and classically-controlled Pauli (\Cref{lem:ccp}) gadgets on qubits $\Qu{N}_3|_{C^{r_X}}$, and bare physical identity, $\gCNOT$, and measurement gates on qubits $\Qu{N}_1\sqcup\Qu{N}_2|_{M^{r_X}}$. These bare physical gates can trivially can be viewed as gadgets whose decorated codes are specified by the identity encoding map.

  It then follows by \Cref{lem:seqcomp,lem:parcomp} that the output $\cR[\cF,\zeta,\xi](\sigma)$ is a linear combination of $\comp{E_{\mathrm{out}}}$-deviations of states of the form $\Enc_{\cC}^{r_X}\circ\cR_{\mathrm{tel}}^{\otimes k}[\bar{\cF},\bar{\zeta}](\rho)$, where each $\bar{\cF}$ in this linear combination is a logical Pauli fault for $\cR_{\mathrm{tel}}^{\otimes k}$ supported entirely inside qubits $\bar{E}\sqcup\bar{E}\subseteq\Qu{N}_1\sqcup\Qu{N}_2|_{M^{r_X}}$ in the timestep immediately proceeding the $\gMX$ and $\gMZ$ measurements, and each $\bar{\zeta}$ is some reweighting on the classically-controlled Pauli corrections. In more detail, by definition $\cR$ implements the circuit $\cR_{\mathrm{tel}}^{\otimes k}$, where the bare physical qubits $\Qu{N}_1$ and $\Qu{N}_2|_{M^{r_X}}$ provide the $k$ copies of each of the first two wires in $\cR_{\mathrm{tel}}$, and the $k$ logical qubits encoded in the code block $\Qu{N}_3|_{C^{r_X}}$ provides the $k$ copies of the third wire in $\cR_{\mathrm{tel}}$. We constructed the set $\bar{E}$ to contain the support of all Pauli errors on qubits $\Qu{N}_1\sqcup\Qu{N}_2|_{M^{r_X}}$ immediately prior to the measurements in \Cref{it:imeasure} above, which correspond to the Pauli measurements in $\cR_{\mathrm{tel}}^{\otimes k}$. Meanwhile, \Cref{lem:seqcomp} along with the definition of $E_{\mathrm{mid},3}$ and $E_{\mathrm{out}}$ then implies that the output $\cR[\cF,\zeta,\xi](\sigma)$ is a linear combination of $\comp{E_{\mathrm{out}}}$-deviations of the state $\Enc_{\cC}^{r_X}\circ\cR_{\mathrm{tel}}^{\otimes k}[\bar{\cF},\bar{\zeta}](\rho)$, for $\bar{\cF}$ as described above. Here we can assume that each $\bar{\cF}$ is a Pauli fault because an arbitrary fault can be decomposed into a linear combination of Pauli faults.

  Now by \Cref{claim:telfault}, there exists a $k$-qubit Pauli superoperator $\bar{O}$ depending only on $\bar{\cF}$, and supported inside the same set $\bar{E}\subseteq M^{r_X}$ as $\cF$, such that
  \begin{align*}
    \Enc_{\cC}^{r_X}\circ\cR_{\mathrm{tel}}^{\otimes k}[\bar{\cF},\bar{\zeta}](\rho)
    &\propto \Enc_{\cC}^{r_X}\circ\bar{O}(\rho).
  \end{align*}
  Thus we have shown that $\cR[\cF,\zeta,\xi](\sigma)$ is a $\comp{E_{\mathrm{out}}}$-deviation of $\Enc_{\cC}^{r_X}\circ(\bar{\cO}(\bar{E}))(\rho)$, as desired, so the fault-tolerance of the gadget $\bfG(E_{\mathrm{in}},E_{\mathrm{run}})$ in \Cref{it:ifg} in \Cref{lem:inject} holds.

  Because we defined the set $E_{\mathrm{out}}$ to be uniquely determined as a function of $E_{\mathrm{in}},\xi$, it immediately follows that $\bfG(E_{\mathrm{in}},E_{\mathrm{run}})$ is refreshing.

  To prove that this gadget is also mending, by \Cref{lem:paulift} it suffices to simply repeat the proof of fault-tolerance above, except that we now assume that $\sigma$ is a Pauli $\emptyset$-deviation of $\rho$. That is, whereas for fault-tolerance we let $\sigma$ differ from $\rho$ by a Pauli error supported inside $E_{\mathrm{in}}$, for mending we let $\sigma$ differ from $\rho$ by an arbitrary Pauli error. The same proof as used above to show fault-tolerance again goes through, except that the final logical error may now have arbitrary support in $M^{r_X}$, instead of being restricted to $\bar{E}\subseteq M^{r_X}$. Thus the output state $\cR[\cF,\zeta,\xi](\sigma)$ is now a linear combination of $\comp{E_{\mathrm{out}}}$-deviations of states of the form $\Enc_{\cC}^{r_X}\circ L(\rho)$ for arbitrary $k$-qubit Pauli superoperators $L$, so $\bfG(E_{\mathrm{in}},E_{\mathrm{run}})$ is indeed mending by \Cref{lem:paulift}.
\end{proof}

\section{Conclusion and Future Directions}
\label{sec:conclusion}
We introduce the first known family of quantum codes for which injection (i.e.~fault-tolerant encoding) and ejection (i.e.~fault-tolerant unencoding) can be performed with constant space-time overhead under locally stochastic noise. As stated in \Cref{thm:maininf} (and described in more detail throughout \Cref{sec:intro}), our codes also support various other constant-overhead fault-tolerant operations, including logical $\ket{0}$ and $\ket{+}$ preparation, error correction, and certain Clifford gates.

Our results open various possibilities for future work. In particular, while we provide a rigorous analysis of injection, ejection, and certain Clifford gates, we leave the development of a full universal fault-tolerance scheme based on these ideas for future work. Such a scheme may benefit heavily from efficient code switching and resource-state preparation procedures based on the ideas described in \Cref{sec:motivation,sec:apps}.

Beyond asymptotic performance, it would be interesting to see if the ideas we develop in this paper translate to improvements for finite-sized fault-tolerance schemes that may be implementable on hardware. Various schemes have been proposed for injecting states, and in particular magic states, into quantum LDPC codes with a focus on practical finite-sized paramters (e.g.~\cite{dennis_topological_2002,mazurek_long-distance_2014,lodyga_simple_2015,zhang_constant-overhead_2025,bhardwaj_high-rate_2026,liu_-situ_2026}). However, our construction uses various ingredients that were not previously used in such injection schemes, including inspiration from the linear-time-encodable codes of \cite{spielman_linear-time_1996}, and concatenation with inner codes of various sizes to realize non-uniform error rates. We highlight further development of such ideas, especially in practical regimes, as a focus for future work.

At a more technical level, recall that our injection and ejection gadgets guarantee that the data qubits have low marginal error probabilities (see \Cref{sec:maininf}). As we described in \Cref{sec:apps}, such low marginal error probabilities are sufficient in many applications of injection/ejection. However, we could still hope for a stronger guarantee such as locally stochastic data qubit errors. Our scheme may even provide such a guarantee, but our current analysis does not rule out correlated data errors.

% TODO: discuss transversal $\gCNOT$, code switching, uniform vs non-uniform locally stochastic noise, choice of ``good'' qubits as a few might have high error rates in injection/ejection, say error correction is single shot, lack of consideration of non-Clifford gates

\section{Acknowledgments}
We thank Christopher A.~Pattison, Katie Chang, Guanyu Zhu, and Zhiyang (Sunny) He for helpful discussions, some of which helped motivate this project.

\textbf{AI Use Statement:} All proofs and writing are the work of the human authors. ChatGPT 6 Astra was used to find references, make figures, and check over the final draft.

\bibliographystyle{alpha}
\bibliography{library}

\newcommand{\etalchar}[1]{$^{#1}$}
\begin{thebibliography}{LXJO{\etalchar{+}}26}

\bibitem[ABO97]{aharonov_fault-tolerant_1997}
D.~Aharonov and M.~Ben-Or.
\newblock Fault-tolerant quantum computation with constant error.
\newblock In {\em Proceedings of the twenty-ninth annual {ACM} symposium on
  {Theory} of computing}, {STOC} '97, pages 176--188, New York, NY, USA, May
  1997. Association for Computing Machinery.

\bibitem[ACQ22]{aharonov_quantum_2022}
Dorit Aharonov, Jordan Cotler, and Xiao-Liang Qi.
\newblock Quantum algorithmic measurement.
\newblock {\em Nature Communications}, 13(1):887, February 2022.

\bibitem[AMC{\etalchar{+}}25]{allen_quantum_2025}
Richard~R. Allen, Francisco Machado, Isaac~L. Chuang, Hsin-Yuan Huang, and
  Soonwon Choi.
\newblock Quantum {Computing} {Enhanced} {Sensing}, January 2025.
\newblock arXiv:2501.07625 [quant-ph].

\bibitem[BGV26]{breuckmann_fault-tolerant_2026}
Nikolas Breuckmann, Louis Golowich, and Umesh Vazirani.
\newblock Fault-{Tolerant} {Quantum} {Computation} with {Adversarial} {Errors}.
\newblock In {\em 2026 {IEEE} 67th {Annual} {Symposium} on {Foundations} of
  {Computer} {Science} ({FOCS})}, 2026.
\newblock To appear.

\bibitem[BH12]{bravyi_magic-state_2012}
Sergey Bravyi and Jeongwan Haah.
\newblock Magic-state distillation with low overhead.
\newblock {\em Physical Review A}, 86(5):052329, November 2012.

\bibitem[BK05]{bravyi_universal_2005}
Sergei Bravyi and Alexei Kitaev.
\newblock Universal {Quantum} {Computation} with ideal {Clifford} gates and
  noisy ancillas.
\newblock {\em Physical Review A}, 71(2):022316, February 2005.
\newblock arXiv:quant-ph/0403025.

\bibitem[BMM{\etalchar{+}}26]{bhardwaj_high-rate_2026}
Aditya Bhardwaj, Muzhou Ma, Nadine Meister, Robbie King, Dolev Bluvstein, John
  Preskill, Madelyn Cain, Qian Xu, and Hsin-Yuan Huang.
\newblock High-rate {qLDPC} processors, July 2026.
\newblock arXiv:2607.28795 [quant-ph].

\bibitem[Bom15]{bombin_single-shot_2015}
Héctor Bombín.
\newblock Single-{Shot} {Fault}-{Tolerant} {Quantum} {Error} {Correction}.
\newblock {\em Physical Review X}, 5(3):031043, September 2015.

\bibitem[Bom16]{bombin_dimensional_2016}
H.~Bombin.
\newblock Dimensional {Jump} in {Quantum} {Error} {Correction}, May 2016.
\newblock arXiv:1412.5079 [quant-ph].

\bibitem[CCHL22]{chen_exponential_2022}
Sitan Chen, Jordan Cotler, Hsin-Yuan Huang, and Jerry Li.
\newblock Exponential {Separations} {Between} {Learning} {With} and {Without}
  {Quantum} {Memory}.
\newblock In {\em 2021 {IEEE} 62nd {Annual} {Symposium} on {Foundations} of
  {Computer} {Science} ({FOCS})}, pages 574--585, February 2022.
\newblock ISSN: 2575-8454.

\bibitem[CFG26]{christandl_fault-tolerant_2026-1}
Matthias Christandl, Omar Fawzi, and Ashutosh Goswami.
\newblock Fault-{Tolerant} {Quantum} {Input}/{Output}.
\newblock {\em IEEE Transactions on Information Theory}, 72(5):3098--3128, May
  2026.

\bibitem[CRTS23]{cohen_hdx_2023}
Itay Cohen, Roy Roth, and Amnon Ta-Shma.
\newblock {HDX} {Condensers}.
\newblock In {\em 2023 {IEEE} 64th {Annual} {Symposium} on {Foundations} of
  {Computer} {Science} ({FOCS})}, pages 1649--1664, November 2023.

\bibitem[CRVW02]{capalbo_randomness_2002}
Michael Capalbo, Omer Reingold, Salil Vadhan, and Avi Wigderson.
\newblock Randomness conductors and constant-degree lossless expanders.
\newblock In {\em Proceedings of the thiry-fourth annual {ACM} symposium on
  {Theory} of computing}, {STOC} '02, pages 659--668, New York, NY, USA, May
  2002. Association for Computing Machinery.

\bibitem[DKLP02]{dennis_topological_2002}
Eric Dennis, Alexei Kitaev, Andrew Landahl, and John Preskill.
\newblock Topological quantum memory.
\newblock {\em Journal of Mathematical Physics}, 43(9):4452--4505, September
  2002.

\bibitem[DLV24]{dinur_expansion_2024}
Irit Dinur, Ting-Chun Lin, and Thomas Vidick.
\newblock Expansion of {High}-{Dimensional} {Cubical} {Complexes}: with
  {Application} to {Quantum} {Locally} {Testable} {Codes}.
\newblock In {\em 2024 {IEEE} 65th {Annual} {Symposium} on {Foundations} of
  {Computer} {Science} ({FOCS})}, pages 379--385, October 2024.
\newblock ISSN: 2575-8454.

\bibitem[FGL18]{fawzi_efficient_2018}
Omar Fawzi, Antoine Grospellier, and Anthony Leverrier.
\newblock Efficient decoding of random errors for quantum expander codes.
\newblock In {\em Proceedings of the 50th {Annual} {ACM} {SIGACT} {Symposium}
  on {Theory} of {Computing}}, {STOC} 2018, pages 521--534, New York, NY, USA,
  June 2018. Association for Computing Machinery.

\bibitem[FGL20]{fawzi_constant_2020}
Omar Fawzi, Antoine Grospellier, and Anthony Leverrier.
\newblock Constant overhead quantum fault tolerance with quantum expander
  codes.
\newblock {\em Communications of the ACM}, 64(1):106--114, December 2020.

\bibitem[GCZ25]{golowich_constant-overhead_2025}
Louis Golowich, Kathleen Chang, and Guanyu Zhu.
\newblock Constant-{Overhead} {Addressable} {Gates} via {Single}-{Shot} {Code}
  {Switching}, October 2025.
\newblock arXiv:2510.06760 [quant-ph].

\bibitem[Gol24]{golowich_new_2024}
Louis Golowich.
\newblock New {Explicit} {Constant}-{Degree} {Lossless} {Expanders}.
\newblock In {\em Proceedings of the 2024 {Annual} {ACM}-{SIAM} {Symposium} on
  {Discrete} {Algorithms} ({SODA})}, Proceedings, pages 4963--4971. Society for
  Industrial and Applied Mathematics, January 2024.

\bibitem[Got14]{gottesman_fault-tolerant_2014}
Daniel Gottesman.
\newblock Fault-tolerant quantum computation with constant overhead.
\newblock {\em Quantum Information \& Computation}, 14(15-16):1338--1372,
  November 2014.

\bibitem[HBC{\etalchar{+}}22]{huang_quantum_2022}
Hsin-Yuan Huang, Michael Broughton, Jordan Cotler, Sitan Chen, Jerry Li, Masoud
  Mohseni, Hartmut Neven, Ryan Babbush, Richard Kueng, John Preskill, and
  Jarrod~R. McClean.
\newblock Quantum advantage in learning from experiments.
\newblock {\em Science}, 376(6598):1182--1186, June 2022.

\bibitem[HLM{\etalchar{+}}25]{hsieh_explicit_2025}
Jun-Ting Hsieh, Alexander Lubotzky, Sidhanth Mohanty, Assaf Reiner, and
  Rachel~Yun Zhang.
\newblock Explicit {Lossless} {Vertex} {Expanders}.
\newblock In {\em 2025 {IEEE} 66th {Annual} {Symposium} on {Foundations} of
  {Computer} {Science} ({FOCS})}, pages 894--911, December 2025.
\newblock ISSN: 2575-8454.

\bibitem[HLW06]{hoory_expander_2006}
Shlomo Hoory, Nathan Linial, and Avi Wigderson.
\newblock Expander graphs and their applications.
\newblock {\em Bulletin of the American Mathematical Society}, 43(4):439--561,
  2006.

\bibitem[HNP25]{he_composable_2025}
Zhiyang He, Quynh~T. Nguyen, and Christopher~A. Pattison.
\newblock Composable {Quantum} {Fault}-{Tolerance}, August 2025.
\newblock arXiv:2508.08246 [quant-ph].

\bibitem[KP13]{kovalev_fault_2013}
Alexey~A. Kovalev and Leonid~P. Pryadko.
\newblock Fault tolerance of quantum low-density parity check codes with
  sublinear distance scaling.
\newblock {\em Physical Review A}, 87(2):020304, February 2013.

\bibitem[KP25]{kalachev_maximally_2025}
Gleb Kalachev and Pavel Panteleev.
\newblock Maximally {Extendable} {Product} {Codes} are {Good} {Coboundary}
  {Expanders}.
\newblock In {\em 2025 {IEEE} 66th {Annual} {Symposium} on {Foundations} of
  {Computer} {Science} ({FOCS})}, pages 1512--1524, December 2025.
\newblock ISSN: 2575-8454.

\bibitem[KPC26]{kannan_fault-tolerant_2026}
Ishaan Kannan, Harald Putterman, and Jordan Cotler.
\newblock Fault-tolerant quantum processing of physical experiments, September
  2026.
\newblock arXiv:2605.02057 [quant-ph] version: 2.

\bibitem[KT21]{kaufman_new_2021}
Tali Kaufman and Ran~J. Tessler.
\newblock New cosystolic expanders from tensors imply explicit {Quantum} {LDPC}
  codes with {Omega}(sqrt\{n\} log{\textasciicircum}k n) distance.
\newblock In {\em Proceedings of the 53rd {Annual} {ACM} {SIGACT} {Symposium}
  on {Theory} of {Computing}}, {STOC} 2021, pages 1317--1329, New York, NY,
  USA, June 2021. Association for Computing Machinery.

\bibitem[Li15]{li_magic_2015}
Ying Li.
\newblock A magic state’s fidelity can be superior to the operations that
  created it.
\newblock {\em New Journal of Physics}, 17(2):023037, February 2015.

\bibitem[LMGH15]{lodyga_simple_2015}
Justyna Lodyga, Pawel Mazurek, Andrzej Grudka, and Michal Horodecki.
\newblock Simple scheme for encoding and decoding a qubit in unknown state for
  various topological codes.
\newblock {\em Scientific Reports}, 5(1):8975, March 2015.
\newblock arXiv:1404.2495 [quant-ph].

\bibitem[LTZ15]{leverrier_quantum_2015}
Anthony Leverrier, Jean-Pierre Tillich, and Gilles Zémor.
\newblock Quantum {Expander} {Codes}.
\newblock In {\em 2015 {IEEE} 56th {Annual} {Symposium} on {Foundations} of
  {Computer} {Science}}, pages 810--824, October 2015.
\newblock arXiv:1504.00822 [quant-ph].

\bibitem[LXJO{\etalchar{+}}26]{liu_-situ_2026}
Kun Liu, Shifan Xu, Tomas Jochym-O'Connor, Zhiyang He, Shraddha Singh, and
  Yongshan Ding.
\newblock In-{Situ} {Simultaneous} {Magic} {State} {Injection} on {Arbitrary}
  {CSS} {qLDPC} {Codes}, April 2026.
\newblock arXiv:2604.05126 [quant-ph].

\bibitem[MG92]{misra_constructive_1992}
J.~Misra and David Gries.
\newblock A constructive proof of {Vizing}'s theorem.
\newblock {\em Information Processing Letters}, 41(3):131--133, March 1992.

\bibitem[MGH{\etalchar{+}}14]{mazurek_long-distance_2014}
Pawel Mazurek, Andrzej Grudka, Michal Horodecki, Pawel Horodecki, Justyna
  Lodyga, Lukasz Pankowski, and Anna Przysiezna.
\newblock Long-distance quantum communication over noisy networks without
  long-time quantum memory.
\newblock {\em Physical Review A}, 90(6):062311, December 2014.
\newblock arXiv:1202.1016 [quant-ph].

\bibitem[NP25]{nguyen_quantum_2025}
Quynh~T. Nguyen and Christopher~A. Pattison.
\newblock Quantum {Fault} {Tolerance} with {Constant}-{Space} and
  {Logarithmic}-{Time} {Overheads}.
\newblock In {\em Proceedings of the 57th {Annual} {ACM} {Symposium} on
  {Theory} of {Computing}}, {STOC} '25, pages 730--737, New York, NY, USA, June
  2025. Association for Computing Machinery.

\bibitem[Spi96]{spielman_linear-time_1996}
D.A. Spielman.
\newblock Linear-time encodable and decodable error-correcting codes.
\newblock {\em IEEE Transactions on Information Theory}, 42(6):1723--1731,
  November 1996.
\newblock Conference Name: IEEE Transactions on Information Theory.

\bibitem[SS96]{sipser_expander_1996}
M.~Sipser and D.A. Spielman.
\newblock Expander codes.
\newblock {\em IEEE Transactions on Information Theory}, 42(6):1710--1722,
  November 1996.

\bibitem[THL{\etalchar{+}}25]{tan_single-shot_2025}
Shi Jie~Samuel Tan, Yifan Hong, Ting-Chun Lin, Michael~J. Gullans, and Min-Hsiu
  Hsieh.
\newblock Single-{Shot} {Universality} in {Quantum} {LDPC} {Codes} via
  {Code}-{Switching}, October 2025.
\newblock arXiv:2510.08552 [quant-ph].

\bibitem[TZ14]{tillich_quantum_2014}
Jean-Pierre Tillich and Gilles Zemor.
\newblock Quantum {LDPC} codes with positive rate and minimum distance
  proportional to n{\textasciicircum}\{1/2\}.
\newblock {\em IEEE Transactions on Information Theory}, 60(2):1193--1202,
  February 2014.
\newblock arXiv: 0903.0566.

\bibitem[WHY24]{wills_constant-overhead_2024}
Adam Wills, Min-Hsiu Hsieh, and Hayata Yamasaki.
\newblock Constant-{Overhead} {Magic} {State} {Distillation}, August 2024.
\newblock arXiv:2408.07764 [quant-ph].

\bibitem[WLZH26]{wills_linear-time_2026}
Adam Wills, Ting-Chun Lin, Rachel~Yun Zhang, and Min-Hsiu Hsieh.
\newblock Linear-{Time} {Encodable} and {Decodable} {Quantum}
  {Error}-{Correcting} {Codes}, June 2026.
\newblock arXiv:2603.04543 [quant-ph].

\bibitem[XZB{\etalchar{+}}25]{xu_batched_2025}
Qian Xu, Hengyun Zhou, Dolev Bluvstein, Madelyn Cain, Marcin Kalinowski, John
  Preskill, Mikhail~D. Lukin, and Nishad Maskara.
\newblock Batched high-rate logical operations for quantum {LDPC} codes,
  October 2025.
\newblock arXiv:2510.06159 [quant-ph].

\bibitem[ZZYL25]{zhang_constant-overhead_2025}
Guo Zhang, Yuanye Zhu, Xiao Yuan, and Ying Li.
\newblock Constant-{Overhead} {Magic} {State} {Injection} into {qLDPC} {Codes}
  with {Error} {Independence} {Guarantees}, May 2025.
\newblock arXiv:2505.06981 [quant-ph].

\end{thebibliography}

\appendix

\section{Uniformizing the Error Model via Concatenation}
\label{sec:concat}
While the gadgets in \Cref{sec:ftnonu} perform our desired logical operations, they are designed to work under non-uniform locally stochastic noise, in which a small fraction of qubits must have subconstant error probabilities. In this section, we show how to concatenate these ``outer'' gadgets in \Cref{sec:ftnonu} with appropriate smaller ``inner'' fault-tolerance schemes of various sizes in order to obtain fault-tolerance against uniform locally stochastic errors. The main idea is to encode each physical qubit in our outer gadgets into an appropriately-sized code block from our inner scheme, and to replace each physical gate in our outer gadgets with a fault-tolerant gadget for that gate from our inner scheme.

In more detail, recall that every physical qubit $a$ in our gadgets from \Cref{sec:ftnonu} has a ``level'' $\Lev(a)\in\bZ_{\geq 0}$. We will encode each such qubit at level $\ell=\Lev(a)$ into a code of length $\bfn(\ell)=2^{\Theta(\ell)}$. Because the fraction of qubits at level $\ell$ is decays exponentially in $\ell$, we ensure this concatenation only increases the total number of physical qubits by a constant factor. The key point is that the length-$\bfn(\ell)$ encoding ensures an error rate under locally stochastic noise that decays doube-exponentially in $\ell$ (i.e.~as $2^{-2^{\Theta(\ell)}}$). Thus the concatenation allows us to simulate low error probabilities on high-level qubits of the un-concatenated ``outer'' scheme using uniform error probabilities on the concatenated scheme.

Our concatenation result formalizing this idea is given in \Cref{lem:lcft}. We remark that such concatenation of fault-tolerance schemes is notoriously delicate and often notation-intensive, see e.g.~\cite{aharonov_fault-tolerant_1997,he_composable_2025}. We attempt to simplify the presentation as much as possible by using slightly modified versions of our scheme in the main body, where the modifications are standard. As an example, below we allow the outer fault-tolerance scheme to have faults at timestep $0$. All of our results are easily adapted to this setting, even though in \Cref{def:fault} we previously defined faults to only first act after timestep $1$.

% Such concatenation is a well-established technique for constructing fault-tolerance schemes, dating back to the earliest proofs of the threshold theorem \cite{aharonov_fault-tolerant_1997}. A comprehensive treatment in a formalism similar to ours is given in \cite{he_composable_2025}.

% However, for completeness we still provide a rigorous proof of fault-tolerance for concatenated schemes in our setting. For simplicity in this section, we assume the outer gadget has no input and output qubits, but rather only has classical input and output bits. This assumption is not fundamental, but rather is made for expositional purposes to avoid further complicating the (already cumbersome) notation. Because many (but not all) applications of fault-tolerant quantum computers in the real world ultimately aim to solve problems with classical descriptions, the assumption that our final fault-tolerant circuit has classical input/output is not unreasonable. This circuit will typically include additional gadgets beyond the scope of the present paper, such as for implementing non-Clifford gates.

\subsection{Inner Fault-Tolerance Scheme}
In this section, we describe the inner fault-tolerance scheme with which we will concatenate our gadgets from \Cref{sec:ftnonu}. This scheme in particular follows from the work of \cite{golowich_constant-overhead_2025}, so we simply state its properties using our notation, which differs slightly from that of \cite{golowich_constant-overhead_2025}. Other related schemes, e.g.~\cite{bombin_dimensional_2016,xu_batched_2025,tan_single-shot_2025} provide similar properties that could also be used in place of \cite{golowich_constant-overhead_2025}, though with some caveats as discussed below.

The codes of \cite{golowich_constant-overhead_2025} have constant rate. However we want each code block of our inner scheme to only encode a single qubit, corresponding to a single physical qubit of our outer scheme. We therefore define the following notion of a decorated subsystem code, which allows us to ignore the states of some ``gauge'' qubits encoded in a code.

\begin{definition}
  Let $Q=(Q_X,Q_Z)$ be a $[[n,k+k',d]]$ quantum code with encoding isometry $\Enc:\bC^{2^k}\otimes\bC^{2^{k'}}\rightarrow\bC^{2^n}$. We call $Q$ a \emph{$[[n,k,d]]$ quantum CSS subsystem code} (or simply \emph{subsystem code}), where we call the first $k$ inputs to $\Enc$ \emph{logical qubits} and the latter $k'$ inputs \emph{gauge qubits}.

  Let $\cE_X,\cE_Z\subseteq 2^{[n]}$ be families of bad sets, and let $\cE=(\cE_X,\cE_Z)$. We say the data $D=(Q,\Enc,\cE)$ forms a \emph{$[[n,k,d]]$ decorated quantum CSS subsystem code} (which we shorten to \emph{decorated subsystem code}, or simply \emph{decorated code} when clear from context).

  As in \Cref{def:deccode}, if $\cE_X=\cE_Z$ we may write $D=(Q,\Enc,\cE_Z)$. Again we also let $\emptyset$ denote the trivial code, and let $D_1\sqcup D_2$ denote the disjoint union of decorated subsystem codes $D_1,D_2$.

  For an operator $\rho\in\bC^{2^k\times 2^k}$, we say an operator $\sigma\in\bC^{2^n\times 2^n}$ is a \emph{Pauli $\cE$-deviation of $\Enc(\rho)$} if $\sigma$ is a Pauli $\cE$-deviation of
  \begin{equation*}
    \Enc(\rho\otimes\bC^{2^{k'}\times 2^{k'}}) = \{\Enc(\rho\otimes\rho'):\rho'\in\bC^{2^{k'}\times 2^{k'}}\}
  \end{equation*}
  in the sense of \Cref{def:deccode}. Then we say $\sigma$ is a \emph{$\cE$-deviation of $\Enc(\rho)$} if $\sigma$ is a linear combination of Pauli $\cE$-deviations of $\Enc(\rho)$, and we extend this notation to sets $S\subseteq\bC^{2^k\times 2^k}$ of operators $\rho$ analogously as done in \Cref{def:deccode}.
\end{definition}

Our fault-tolerance formalism, including \Cref{def:faulttol,lem:paulift}, applies as before when we use decorated subsystem codes in place of ordinary decorated codes. Indeed, a similar formalism using subsystem codes was used in \cite{breuckmann_fault-tolerant_2026}. However, \cite{breuckmann_fault-tolerant_2026} required the gauge qubits to be in particular states for certain gadgets to function properly, which necessitated more data in their decorated subsystem codes. In contrast, we always allow the gauge qubits to be in arbitrary states, which we simply ignore for the purpose of implementing the desired logical superoperator.

We will also use the fact that the families of bad sets considered by \cite{golowich_constant-overhead_2025} satisfy the following key property.

\begin{definition}
  For $\underline{d},\overline{d}>0$, we say a family $\cE$ of sets is \emph{$(\underline{d},\overline{d})$-bounded} if $\cE$ contains $|\cE|\leq 2^{\overline{d}}$ sets, and each such set $E\in\cE_n$ has size $|E|\geq\underline{d}$.
\end{definition}

As we describe below, $(\underline{d},\overline{d})$-boundedness was implicit in \cite{golowich_constant-overhead_2025}, but was not explicitly named as such.

We are now ready to state the fault-tolerance scheme of \cite{golowich_constant-overhead_2025}, where we translate only the results relevant for our purposes to our notation in \Cref{lem:prior} below.

\begin{lemma}[Follows from \cite{golowich_constant-overhead_2025}]
  \label{lem:prior}
  There exists an infinite set $\bfN\subseteq\bN$ of positive integers that index a family of decorated subsystem codes
  \begin{equation*}
    D_n = (Q_n,\; \Enc_n,\; \cE_n) \hspace{1em} \text{ for } n\in\bfN
  \end{equation*}
  for which the following items hold for some constant $\mu>0$ and some constants $T_0,\nu\in\bN$. Below, we define the gate set
  \begin{equation*}
    \cG = \{\gI,\gX,\gZ,\gH,\gCNOT,\gInitX,\gInitZ,\gTerm,\gMX,\gMZ,\gCt{\gX},\gCt{\gZ},\gCF_*\},
  \end{equation*}
  and we let $k_{\mathrm{in},G}$ (resp.~$k_{\mathrm{out},G}$) denote the number of input (resp.~output) qubits of $G$, so that $k_{\mathrm{in},G},k_{\mathrm{out},G}\in\{0,1,2\}$.
  We take all asymptotic notation (e.g.~$\Omega(1)$) with respect to $n\rightarrow\infty$.
  \begin{enumerate}
  \item\label{it:prdensity} For every $n_0\in\bN$, there exists $n\in\bfN$ with\footnote{The constant $2$ here, along with the constant $8$ in \Cref{it:prcnot}, is somewhat arbitrary. These choices of constants are sufficient for our purposes, but \Cref{lem:prior} holds with these constants replaced by arbitrary constants $>1$.} $n_0\leq n<2n_0$.
  \item For every $n\in\bfN$, the subsystem code $Q_n$ has parameters $[[n,\; 1,\; n^{\Omega(1)}]]$.
  \item For every $n\in\bfN$, the family $\cE_n\subseteq 2^{[n]}$ is $(\Omega(n^\mu),O(n^\mu))$-bounded.
  \item\label{it:prgads} For every $n\in\bfN$ and every gate $G\in\cG\setminus\gCF_*$, there exists a mending refreshing fault-tolerant gadget
    \begin{equation*}
      \bfG_{n,G} := (\cR_{n,G},\; \cE_{\mathrm{run},n,G},\; D_n^{\sqcup k_{\mathrm{in},G}},\; D_n^{\sqcup k_{\mathrm{out},G}},\; \ps_{n,G})
    \end{equation*}
    for\footnote{Recall from \Cref{def:gates} that $G[*]$ denotes the set of all reweightings $G[\zeta]$ of $G$, which is only larger than the singleton set $\{G\}$ when $G$ is one of the classically controlled gates $\gCt{\gX},\gCt{\gZ}\in\cG$.} $G[*]$, where $\cR_{n,G}$ is a quantum circuit using quantum space $\nu n$, time $T_0$, and gate set $\cG$, and the family $\cE_{\mathrm{run},n,G}\subseteq 2^{[\nu n]\times[T_0]}$ is $n^\mu$-bounded.

    For $G\in\{\gMX,\gMZ\}$, the circuit $\cR_{n,G}$ simply measures all qubits in its first timestep, and then applies a classical function gate to the measurement outcome bits. For such $G\in\{\gMX,\gMZ\}$, we have $\ps_{n,G}=\emptyset$, but we let $\ps_{n,G}'$ be the set of all measurement gates in $\cR_{n,G}$. Then for every $\xi'\in\bF_2^{\ps_{n,G}'}$, there exists $\bar{\xi}\in\bF_2$ such that\footnote{Because we impose the postselection $\xi'$ directly on the circuit $\cR_{n,G}[\xi']$ in $\bfG_{n,G}[\xi']$, the remaining set of postselectable gates is empty.}
    \begin{equation*}
      \bfG_{n,G}[\xi'] := (\cR_{n,G}[\xi'],\; \cE_{\mathrm{run},n,G},\; D_n^{\sqcup k_{\mathrm{in},G}},\; D_n^{\sqcup k_{\mathrm{out},G}},\; \emptyset)
    \end{equation*}
    provides a mending refreshing fault-tolerant gadget for $G[\bar{\xi}]$.
  \item\label{it:prcnot} For every $n,n'\in\bfN$ with $n/8\leq n'\leq 8n$, there exists a mending refreshing fault-tolerant gadget
    \begin{equation*}
      \bfG_{n,n',\gCNOT} = (\cR_{n,n',\gCNOT},\; \cE_{\mathrm{run},n,n',\gCNOT},\; D_n\sqcup D_{n'},\; D_n\sqcup D_{n'},\; \ps_{n,n',\gCNOT})
    \end{equation*}
    for $\gCNOT$, where $\cR_{n,n',\gCNOT}$ is a quantum circuit using quantum space $\nu n$, time $T_0$, and gate set $\cG$, and the family $\cE_{\mathrm{run},n,n',\gCNOT}\subseteq 2^{[\nu n]\times[T_0]}$ is $n^\mu$-bounded.
  \end{enumerate}
  Furthermore, all classical function gates used in the quantum circuits above are computable by $\poly(n)$-size classical circuits.
\end{lemma}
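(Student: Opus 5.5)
This lemma is a translation of results from \cite{golowich_constant-overhead_2025} into our formalism rather than a new argument. So the plan is to instantiate their construction and check each item against our definitions.

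\textbf{Codes and the first three items.} Take the codes of \cite{golowich_constant-overhead_2025}: the middle level of a $4$-dimensional tensor product of lossless-expander complexes, with length $N$ and constant-rate dimension $K=\Theta(N)$.
\begin{enumerate}
\item Declare one logical qubit to be the data qubit and the remaining $K-1$ to be gauge qubits. This gives a $[[N,1,N^{\Omega(1)}]]$ subsystem code with $\Enc_n$ induced by their encoding map.
\item Since $\Enc_n$ ignores the gauge state, every logical operation we need acts only on the data qubit tensored with something on the gauge qubits. The deviation definition for subsystem codes already quantifies over arbitrary gauge states.
\item For the density condition (\Cref{it:prdensity}), the family of lossless expanders from \Cref{lem:lossless} exists for every $n\geq\Delta$. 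Taking $n$ to range over suitable values makes the product lengths $\Theta(n^4)$ dense up to factors of $2$; padding with idle ancilla qubits if necessary lands exactly in $[n_0,2n_0)$.
\item The bad sets $\cE_n$ are their clustered families $\cE(G,\eta,\gamma)$ with $\eta=n^{\mu}$. Counting connected subgraphs of a bounded-degree graph, as in the proof of \Cref{lem:lsfault}, gives $|\cE_n|\leq 2^{O(n^\mu)}$. Passing to minimal sets, each has size at least $\gamma\eta=\Omega(n^\mu)$, which is exactly $(\Omega(n^\mu),O(n^\mu))$-boundedness.
\end{enumerate}

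\textbf{Gadgets.} For \Cref{it:prgads} and \Cref{it:prcnot}, I will build each gadget the way \Cref{sec:ftnonu} does for our outer codes.
\begin{itemize}
\item State preparation is single-shot, analogous to \Cref{lem:stateprep}, using their small-set-flip decoder.
\item Measurement is transversal followed by decoding, as in \Cref{lem:logmeas}.
\item Classically controlled Paulis follow \Cref{lem:ccp}, and $\gCNOT$ is transversal, as in \Cref{lem:cnot}.
\item Hadamard comes from their dimensional-jump or code-switching construction.
\item Each gadget is composed with a Knill-type teleportation error-correction step, as in \Cref{lem:errcorr}, to make it mending and refreshing.
\end{itemize}
Fault-tolerance in our sense uses the direct-sum trick of \Cref{cor:decode}, so that the output error support depends only on the supports of the input error and the fault. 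The run families $\cE_{\mathrm{run},n,G}$ are clustered families over constantly many copies of the space-time graph, so they are $n^\mu$-bounded by the same counting argument. The refined measurement statement with $\ps'_{n,G}$ follows the logical measurement proof: for each fixed postselected outcome $\xi'$, the decoded bit is a deterministic function, which yields a reweighting $G[\bar\xi]$. Space $\nu n$ and time $T_0$ are constants inherited from the fixed dimension and degree.

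\textbf{Main obstacle.} I expect \Cref{it:prcnot} to be the main obstacle: a $\gCNOT$ between blocks of different lengths $n\neq n'$, with $n'$ within a factor of $8$ of $n$. Transversal $\gCNOT$ no longer applies. I would first try teleporting, or code-switching, the smaller block into a fresh block of the larger code through a shared logical Bell pair. That Bell pair would be prepared by their code-switching gadget between codes built from expanders of comparable sizes, after which a transversal $\gCNOT$ is applied. Because the sizes differ by only a constant factor, the bad-set parameters degrade by constants, which is what the factor-$8$ window allows. The remaining work is bookkeeping: verifying that composing these steps with \Cref{lem:seqcomp} and \Cref{lem:parcomp} preserves the mending and refreshing properties.
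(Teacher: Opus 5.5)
Your plan tracks the paper's sketch closely: level $2$ of a $4$-fold product from \cite{golowich_constant-overhead_2025}, one designated logical qubit with the rest gauge, clustered bad sets counted via small connected witnesses, the direct-sum decoding trick, teleportation-based error correction for the mending and refreshing properties, and code switching for item~5. However, two steps fail as written.

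First, item~1 must hold for \emph{every} $n_0\in\bN$, and \Cref{def:tn} uses it to get $\bfn(0)=1$. This matters because level-$0$ qubits, such as the injected/ejected qubits in $M^{r_X}$, must remain bare after concatenation. Your expander-based codes exist only above some large constant block length, and padding can only lengthen a code. So for small $n_0$ there is no code with length in $[n_0,2n_0)$; your argument is fine only for large $n_0$. The paper fixes this by adding the trivial identity-encoded code (one bare qubit, padded with gauge qubits to any constant length). Its gadgets are just the corresponding bare physical gates, and fault tolerance holds trivially after adjusting the constants hidden in the $O,\Omega$ notation. Separately, the small-set-flip decoder of \cite{golowich_constant-overhead_2025} needs two-sided lossless expanders, which \Cref{lem:lossless} as stated does not give; random biregular graphs do.

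Second, you compose every gadget with a Knill-type error-correction step, but item~4 requires that for $G\in\{\gMX,\gMZ\}$ the circuit $\cR_{n,G}$ measures all of its qubits in the first timestep. This requirement is used in the proof of \Cref{lem:lcft}. There it guarantees that a measurement subcircuit never lies in $H_{\mathrm{run}}$: the swapped-out extended fault may touch classical bits only through classically controlled gates, so it cannot reproduce a corrupted measurement. It also makes $\tilde{\xi}'$ agree with $\tilde{\xi}$ on those subcircuits. The fix is to use the transversal measurement gadget with no error-correction step. It is mending by the same argument as the mending part of \Cref{lem:logmeas}, and refreshing vacuously since it has no output qubits. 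Postselecting all physical outcomes then fixes the decoded bit and yields the postselected measurement $G[\bar{\xi}]$ (a postselection, not a reweighting).

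Finally, for item~5 your Bell-pair teleportation into the larger code is a legitimate variant of the paper's route. The paper instead switches $D_{n'}$ directly into $D_n$ by downward-then-upward switching of each of the four factor complexes \cite[Sections~5--6]{golowich_constant-overhead_2025}, and both routes rest on the same switching primitive. But the output code is $D_n\sqcup D_{n'}$, so after the transversal $\gCNOT$ you must also switch (or teleport) that block back into $D_{n'}$.
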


We now describe how \Cref{lem:prior} follows from \cite{golowich_constant-overhead_2025}. The codes and gadgets of \cite{golowich_constant-overhead_2025} are similar to our codes and gadgets considered in \Cref{sec:ftnonu}. The principal difference is that in \Cref{sec:ftnonu} we consider tensor products of the 1-dimensional chain complexes in \Cref{sec:classcodes} constructed from many lossless expanders of different sizes strung together, whereas \cite{golowich_constant-overhead_2025} takes tensor products of 1-dimensional chain complexes constructed from a single lossless expander. Our more involved classical code construction in \Cref{sec:classcodes} arises from our need to support fault-tolerant injection and ejection. However, many of the techniques we use to perform fault-tolerant gates in the set $\cG$ in \Cref{lem:prior} are extensions of those used in \cite{golowich_constant-overhead_2025}. Below, we therefore briefly overview how the results of \cite{golowich_constant-overhead_2025} imply \Cref{lem:prior}, while drawing comparisons to some of our gadgets in \Cref{sec:ftnonu} to provide more intuition for the relevant techniques.

\begin{proof}[Proof sketch of \Cref{lem:prior}]
  We take the code instantiation described in \cite[Section~10]{golowich_constant-overhead_2025} with dimension parameter $r=4$. That is, we take each $Q_n$ to be the qLDPC code associated to level $2$ of the tensor product of four 1-dimensional (co)chain complexes, each of which is given by a classical LDPC code specified by a constant-degree lossless expander. As described in \cite{golowich_constant-overhead_2025}, random constant-degree bipartite graphs provide such expanders with any desired number of vertices above some constant lower bound (see e.g.~\cite[Theorem 4.16]{hoory_expander_2006}). An explicit construction with slightly less flexibility in parameters is also given by \cite{hsieh_explicit_2025}. For all constant-sized block lengths that are too small to use the lossless expander construction, we simply use the trivial identity encoding, so that a bare physical qubit encodes the logical qubit. Such a trivial code can still be obtained from the tensor product of four trivial 1-dimensional (co)chain complexes, and can also be padded to have an arbitrary constant block length. Hence we obtain a family of codes whose block lengths satisfy \Cref{it:prdensity} in \Cref{lem:prior}.

  \cite{golowich_constant-overhead_2025} also give a natural encoding map $\Enc_n$ for $Q_n$. Because the resulting 4-dimensional product code as constructed in \cite{golowich_constant-overhead_2025} has parameters $[[n,\Theta(n),\Omega(n^{(1/4)})]]$, we designate a single message qubit in $(Q_n,\Enc_n)$ as a logical qubit, and all other message qubits as gauge qubits. That is, if $Q_n$ has $k'+1$ message qubits, we view $Q_n$ as a $[[n,1,n^{\Omega(1)}]]$ subsystem code with encoding isometry $\Enc_n:\bC^2\otimes\bC^{2^{k'}}\rightarrow\bC^{2^n}$.

  We let $\cE_n$ be a family of bad sets of the following form. We fix an appropriate graph $G_n$ on $\Theta(n)$ vertices of maximum degree $O(1)$ and an appropriate constant $\gamma=\Omega(1)$. Then for every connected subgraph $G'$ of $G_n$ with $|V(G')|=\Theta(n^{1/4})$, we let $\cE_n$ contain all subsets $E\subseteq V(G')$ for which $|E|/|V(G')|\geq\gamma$. By definition $|\cE_n|\leq O(n)\cdot 2^{O(n^{1/4})}\leq 2^{O(n^{1/4})}$, so $\cE_n$ is $(\Omega(n^{1/4}),O(n^{1/4}))$-bounded. Note that \cite{golowich_constant-overhead_2025} in fact use the family of bad sets $\cE(G_n,\Theta(n^{1/4}),\gamma)$ described in \Cref{def:clusterfam}, which is similar to our family $\cE_n$, except that $|V(G')|\geq\Omega(n^{1/4})$ is allowed to be arbitrarily large. However, every set in $\cE(G_n,\Theta(n^{1/4}),\gamma)$ contains a set in $\cE_n$, so all results in \cite{golowich_constant-overhead_2025} hold with $\cE_n$ in place of $\cE(G_n,\Theta(n^{1/4}),\gamma)$.
  
  % We also let $\cE_n$ be the family of bad sets constructed in \cite{golowich_constant-overhead_2025}. Specifically, this family is of the form $\cE(G_n,\eta_n,\gamma)$ described in \Cref{def:clusterfam}, where $G_n$ is a graph on $\Theta(n)$ vertices of maximum degree $O(1)$, and where $\eta_n=\Theta(n^{1/4})$ and $\gamma=\Omega(1)$. By definition every $E\in\cE_n$ then has $|E|\geq\Omega(n^{1/4})$, and it is shown in the proof of \cite[Lemma~3.31]{golowich_constant-overhead_2025} that $|\cE_n|\leq O(n)\cdot 2^{O(n^{1/4})}=2^{O(n^{1/4})}$. Hence $\cE_n$ is $(\Omega(n^{1/4}),O(n^{1/4}))$-bounded.

\cite{golowich_constant-overhead_2025} then provide fault-tolerant gadgets for performing Pauli, Hadamard, $\gCNOT$, $\ket{0}$ and $\ket{+}$ state initialization, and measurement gates on these decorated codes. Note that our choice of $Q_n$ as the code at level $2$ of $r=4$-dimensional tensor product complexes ensures that \cite[Section~7]{golowich_constant-overhead_2025} provides fault-tolerant initialization gadgets for both $\ket{0}$ and $\ket{+}$ states. While the gadgets stated in \cite{golowich_constant-overhead_2025} act on a linear number of logical qubits in a code block, ignoring the logical action on all but one logical qubit immediately provides analogous gadgets for our $[[n,1,\Omega(n^{1/4})]]$ decorated subsystem codes $D_n$. Note also that \cite[Section~10.4]{golowich_constant-overhead_2025} shows how to swap chosen pairs of encoded qubits, within or across code blocks, to perform targeted gates on individual (or pairs of) such qubits. Hence regardless of which encoded qubit we specify as the logical qubit vs the gauge qubits, we still obtain gadgets from \cite{golowich_constant-overhead_2025} that perform the desired logical gates. For sufficiently small constant block sizes $n$ such that $\Enc_n$ is the trivial identity encoding (see above), all gadgets are instead given by applying the appropriate bare physical gates. In this case, fault-tolerance holds trivially, assuming appropriate choices of the constants hidden by the big-$O$'s in \Cref{lem:prior}.

To ensure that these gadgets are mending and refreshing, by \Cref{lem:seqcomp} we can pre-compose them with a mending error-correction gadget. In more detail, while \cite{golowich_constant-overhead_2025} does not explicitly consider the mending property, they provide an analogous state-preparation gadget as our gadget in \Cref{lem:stateprep}. Hence a mending refreshing fault-tolerant error-correction gadget can be constructed for the codes of \cite{golowich_constant-overhead_2025} by fault-tolerantly implementing a teleportation circuit, exactly analogously to our error-correction gadget in \Cref{lem:errcorr}. Note that for $G\in\{\gMX,\gMZ\}$, we use the measurement gadget from \cite{golowich_constant-overhead_2025} directly without precomposing with an error-correction gadget, in order to ensure that all qubits are measured in the first timestep as stated in \Cref{it:prgads} of \Cref{lem:prior}. However, the measurement gadget of \cite{golowich_constant-overhead_2025} is naturally mending, by exactly analogous reasoning as we used to show that our measurement gadget in \Cref{lem:logmeas} is mending. Measurement gadgets are trivially refreshing, as they have no output qubits.

As another technical point, the fault-tolerance framework used in \cite{golowich_constant-overhead_2025} is similar to our framework in this paper, with the principal difference being that in \Cref{def:faulttol} in this paper we require the existence of an appropriate output error set $E_{\mathrm{out}}$ depending only on the input error set $E_{\mathrm{in}}$, the fault error set $E_{\mathrm{run}}$, and the postselection $\xi$. In contrast, \cite{golowich_constant-overhead_2025} considers the slightly weaker setting where $E_{\mathrm{out}}$ is allowed to depend on the precise input and fault errors, rather than simply the sets $E_{\mathrm{in}},E_{\mathrm{run}}$ bounding their supports. Nevertheless, their techniques extend naturally to our stronger notion of fault-tolerance. In particular, applying their decoder to the direct sum complex of their (co)chain complexes, exactly as done in our \Cref{sec:decoder}, we obtain a state preparation gadget for $D_n$ analogous to our state preparation gadget in \Cref{lem:stateprep}, in which $E_{\mathrm{out}}$ is determined from $E_{\mathrm{in}},E_{\mathrm{run}},\xi$. As in \Cref{lem:stateprep}, this state preparation gadget for $D_n$ will have the set $\ps$ of postselectable gates containing all of the gadget's measurement gates.

Hence we obtain a mending refreshing fault-tolerant gadget $\bfG_{n,G}$ for every $G\in\cG\setminus\gCF_*$. All these gadgets from \cite{golowich_constant-overhead_2025} use space $O(n)$ and time $O(1)$. Furthermore, the family $\cE_{\mathrm{run},n,G}$ of bad fault sets used in \cite{golowich_constant-overhead_2025} consists of the disjoint union across all timesteps $t$ of a family of the form $\cE(G_{\mathrm{run},n,G},\;\eta_n,\;\gamma_{\mathrm{run}})$ described in \Cref{def:clusterfam}, where $G_{\mathrm{run},n,G}$ is again a graph on $\Theta(n)$ vertices of maximum degree $O(1)$, and where $\eta_n=\Theta(n^{1/4})$ and $\gamma_{\mathrm{run}}=\Omega(1)$. Hence $\cE_{\mathrm{run},n,G}$, like our decorated code family $\cE_n$, is $(\Omega(n^{1/4},O(n^{1/4})))$-bounded.

Thus we obtain all of the desired gadgets $\bfG_{n,G}$ in \Cref{it:prgads} in \Cref{lem:prior}. The statement regarding the postselected gadget $\bfG_{n,G}[\xi']$ for $G\in\{\gMX,\gMZ\}$ holds because the measurement gadgets in \cite[Section~9.3]{golowich_constant-overhead_2025}, like our measurement gadgets in \Cref{lem:logmeas}, simply measure all physical qubits in the input code block, and then decode the output. Postselections on these physical measurements simply induce postselection on the encoded logical state, as stated in \Cref{it:prgads} in \Cref{lem:prior}.

It only remains to show \Cref{it:prcnot}. For this purpose, we simply apply the code switching gadgets of \cite{golowich_constant-overhead_2025} to switch our logical qubit encoded in $D_{n'}$ into the code $D_n$. We then apply the $\gCNOT$ gadget from \Cref{it:prgads} on our two copies of $D_n$, before switching the appropriate code block back to $D_{n'}$. In more detail, to perform this fault-tolerant code switching, we apply the code switching gadgets from \cite[Sections~5 and~6]{golowich_constant-overhead_2025}, which allow us to fault-tolerantly switch out the four $1$-dimensional factor chain complexes comprising the 4-dimensional tensor product chain complex associated to $Q_n$. That is, we loop through each of the four factor complexes comprising $Q_n$, and for each one we perform the ``downwards switching'' in \cite[Section~5]{golowich_constant-overhead_2025} followed by the ``upwards switching'' in \cite[Section~6]{golowich_constant-overhead_2025} to swap out the factor complex of $Q_n$ for that of $Q_{n'}$. Indeed, a similar technique of swapping out the factor complexes one at a time is used in \cite[Sections~10.2 and~10.3]{golowich_constant-overhead_2025} to perform logical qubit permutations and Hadamard gates. As shown in \cite{golowich_constant-overhead_2025}, this entire procedure uses space $O(n)$ and time $O(1)$. This procedure is again fault-tolerant with respect to a $(\Omega(n^{1/4},O(n^{1/4})))$-bounded family of bad fault sets, which as before consists of the disjoint union of a constant number of families of the form described in \Cref{def:clusterfam}.

Furthermore, all classical function gates used in \cite{golowich_constant-overhead_2025} run $\poly(n)$-sized classical circuits. This classical efficiency arises from the fact that \cite{golowich_constant-overhead_2025} construct their gadgets using an efficient small-set flip decoder that they develop for their codes, which is analogous to our efficient decoder in \Cref{sec:decoder}.
\end{proof}

Other constructions in the literature that are also based on code switching, e.g.~\cite{bombin_dimensional_2016,xu_batched_2025,tan_single-shot_2025}, also imply results that are similar to \Cref{lem:prior}. However, these prior works rely on minimum-weight decoders for their quantum codes, which generically do not have known implementations by polynomial-time classical circuits. Nevertheless, if we drop the requirement that all classical function gates have such efficient implementations, then these alternative constructions may also suffice in place of \cite{golowich_constant-overhead_2025}.

\subsection{Concatenated Fault-Tolerance Scheme}
In this section, we show how to concatenate an ``outer'' gadget with the ``inner'' scheme of \Cref{lem:prior}. Here we think of the outer gadget as being constructed from some parallel and/or sequential compositions of gadgets from \Cref{sec:ftnonu}, so that we have mapping $\Lev:\Qu{N}\rightarrow\bZ_{\geq 0}$ that assigns a nonnegative integer ``level'' to every physical qubit. Our goal is to simulate execution of the outer scheme under non-uniform locally stochastic noise, where higher-level qubits have lower error-probabilities, using an execution of the concatenated scheme under uniform locally stochastic noise. Our concatenation will therefore encode higher-level qubits of the outer gadget into larger code blocks of the inner scheme.

We encode each level-$\ell$ qubit of the outer gadget using the code $D_{\bfn(\ell)}$ from \Cref{lem:prior}, for $\bfn(\ell)\in\bfN$ given by \Cref{def:tn} below.

\begin{definition}
  \label{def:tn}
  Let $\bfN\subseteq\bN$ be the set defined in \Cref{lem:prior}. For every $\ell\in\bZ_{\geq 0}$, we define $\bfn(\ell)\in\bfN$ to be the least element of $\bfN$ that is at least $2^{\ell/8}$. It in particular follows from \Cref{it:prdensity} in \Cref{lem:prior} that $\bfn(0)=1$, and that for every $\ell\in\bZ_{\geq 0}$ we have $2^{\ell/8}\leq\bfn(\ell)<2^{\ell/8+2}$ and $1\leq\bfn(\ell+1)/\bfn(\ell)\leq 8$.
\end{definition}

The specific constant $8$ in \Cref{def:tn} is not important. Rather, we simply need $\bfn(\ell)$ to:
\begin{enumerate}
\item Grow exponentially in $\ell$, to ensure sufficient error suppression, and
\item Grow more slowly than $2^\ell$, to ensure that encoding each level-$\ell$ physical qubit of our codes from \Cref{sec:quantumprod} into a length-$\bfn(\ell)$ code results in a concatenated code that still has constant rate. Recall here that by construction, the fraction of qubits at level $\geq\ell$ grows as $O(1/2^\ell)$.
\end{enumerate}

We now define decorated codes obtained by concatenation with those in \Cref{lem:prior}.

\begin{definition}
  \label{def:lccode}
  Let $D=(Q,\Enc,\cE)$ be a $[[n,k,d]]$ decorated code for which every physical qubit $a\in[n]$ has an assigned level $\Lev(a)\in\bZ_{\geq 0}$. Define $(D_{\bfn(\ell)})_{\ell\in\bN}$ as given by \Cref{lem:prior,def:tn}. We then define the \emph{level-concatenated code} $\tilde{D}=D_{\bfn(\cdot)}\diamond D$ to be the $[[\bfn,k,\tilde{d}]]$ concatenated subsystem code
  \begin{equation*}
    \tilde{D}=(\tilde{Q},\widetilde{\Enc},\tilde{\cE})
  \end{equation*}
  defined as follows. We let
  \begin{align*}
    \widetilde{\Enc} &= \left(\bigotimes_{a\in[n]}\Enc_{\bfn(\Lev(a))}\right)\circ\Enc
  \end{align*}
  simply first apply $\Enc$, and then apply $\Enc_{\bfn(\ell)}$ to each resulting level-$\ell$ qubit. We define $\tilde{Q}$ to be the resulting concatenated CSS (subsystem) code defined by the encoding map $\widetilde{\Enc}$. Therefore $\tilde{Q}$ has length $\bfn=\sum_{a\in[n]}\bfn(\Lev(a))$, dimension $k$, and distance $\tilde{d}\geq d$. We define the pair of families of bad sets $\tilde{\cE}=(\tilde{\cE}_X,\tilde{\cE}_Z)$ such that for $\alpha\in\{X,Z\}$, then $\tilde{\cE}_\alpha\subseteq\bigsqcup_{a\in[n]}[\bfn(\Lev(a))]$ is given by
  \begin{align*}
    \tilde{\cE}_\alpha &= \bigcup_{E\in\cE_\alpha}\left(\prod_{a\in E}\cE_{\bfn(\Lev(a))}\right).
  \end{align*}
  That is, for every $E\in\cE_\alpha$, $\tilde{\cE}_\alpha$ contains all sets that consist of the union over all $a\in E$ of a bad set in $\cE_{\bfn(\Lev(a))}$.
\end{definition}

We next define circuits, along with associated sets of postselectable gates and families of bad fault sets, that arise from concatenation with the gadgets in \Cref{lem:prior}.

\begin{definition}
  \label{def:lccirc}
  Let $\cR$ be a quantum circuit using quantum space $\Qu{N}$ and time $T$, in which every qubit $a\in\Qu{N}$ has an assigned level $\Lev(a)\in\bZ_{\geq 0}$. We assume every $\gCNOT$ gate in $\cR$ acts on a pair of qubits whose levels differ by at most $1$. Let $\ps$ be an associated set of postselectable gates, and let $\cE_{\mathrm{run}}\subseteq\Qu{N}\times[T]$ be a family of bad fault sets.

  Let the gate set $\cG$, all gadgets $\bfG_{n,G}$, all decorated subsystem codes $D_n$, and the constants $\mu,\nu,T_0$ be defined as in \Cref{lem:prior}, and let $\bfn(\cdot)$ be defined as in \Cref{def:tn}.

  We then define the \emph{level-concatenated circuit} $\tilde{\cR}$ as follows. $\tilde{\cR}$ uses the exact set of classical bits $\Cl{\tilde{\cR}}=\Cl{\cR}$ as $\cR$. However, $\tilde{\cR}$ uses quantum space $\Qu{\tilde{N}}=\bigsqcup_{a\in\Qu{N}}[\nu\cdot\bfn(\Lev(a))]$ and time $\tilde{T}=T_0\cdot T$. Therefore each qubit $a\in\Qu{N}$ has a set $\tilde{N}_a\cong[\nu\cdot\bfn(\Lev(a))]$ of associated qubits in $\Qu{\tilde{N}}$, and each timestep $t\in[\tilde{T}]$ has a set $\tilde{T}_a=\{T_0(t-1)+1,\dots,T_0t\}$ of associated timesteps in $[\tilde{T}]$.

  Let $H$ denote the set of all instances of all gates applied in $\cR$. We let $N_h\subseteq\Qu{N}$ and $t_h\in[T]$ denote the set of qubits and timestep on which $h$ is applied in $\cR$. We let $G_h\in\cG$ denote the type of gate that $h$ applies. In particular, it follows that $|N_h|\in\{0,1,2\}$, and $|N_h|=2$ if and only if $G_h=\gCNOT$ gate.

  For every $t\in[T]$ and every gate $h\in H$ applied in $\cR$, we define $\tilde{\cR}$ to apply an associated sub-circuit $\tilde{\cR}_h$ to qubits $\tilde{N}_h=\bigsqcup_{a\in N_h}\tilde{N}_a$ in timesteps $\tilde{T}_h=\{T_0(t-1)+1,\dots,T_0t\}$ as follows:
  \begin{itemize}
  \item If $G_h\in\cG\setminus\{\gCNOT,\gCF_*\}$, then $h$ acts on a single qubit $N_h=\{a\}$. We therefore define $\tilde{\cR}_h:=\cR_{\bfn(\Lev(a)),G_h}$ to be the circuit from the gadget $\tilde{\bfG}_h:=\bfG_{\bfn(\Lev(a)),G_h}$ from \Cref{it:prgads} in \Cref{lem:prior}.
  \item If $G_h=\gCNOT$, then $G_h$ acts on a pair of qubits $N_h=\{a,a'\}$. We therefore define $\tilde{\cR}_h:=\cR_{\bfn(\Lev(a)),\bfn(\Lev(a')),\gCNOT}$ to be the circuit from the gadget $\tilde{\bfG}_h:=\bfG_{\bfn(\Lev(a)),\bfn(\Lev(a')),\gCNOT}$ from \Cref{it:prcnot} in \Cref{lem:prior}.
  \item If $G_h\in\gCF_*$, then $G_h$ acts only on classical bits, and not on any qubits. Recalling that $\Cl{\tilde{\cR}}=\Cl{\cR}$, we define $\tilde{\cR}_h$ to idle (i.e.~apply identity gates) on the input bits to $G_h$ in timesteps $\{T_0(t-1)+1,\dots,T_0t-1\}$, and then to apply $G_h$ in timestep $T_0t$.
  \end{itemize}

  We similarly define the level-concatenated set of postselectable gates $\tilde{\ps}$ to contain the sets of measurement gates $\ps_{n,G}'$ defined in \Cref{it:prgads} in \Cref{lem:prior} for all instances of $\cR_{n,G}$ in $\tilde{\cR}$ associated to a measurement gate $G$ in $\ps$, along with the sets $\ps_{n,G}$ defined in \Cref{it:prgads} in \Cref{lem:prior} for all instances of $\cR_{n,G}$ in $\tilde{\cR}$ associated to a gate $G$ not in $\ps$.

  Furthermore, we define the level-concatenated family of bad fault sets $\tilde{\cE}_{\mathrm{run}}\subseteq 2^{\Qu{\tilde{N}}\times[\tilde{T}]}$ as follows. Intuitively, each family of bad sets in $\tilde{\cE}_{\mathrm{run}}$ consists of the union of families of bad sets for gadgets $\bfG_{n,G}$ at locations within $\tilde{\cR}$ that correspond to elements of some chosen bad set $E\in\cE_{\mathrm{run}}$. Formally, first let
  \begin{equation*}
    \tilde{\cE}_{h,\mathrm{run}} \subseteq \tilde{N}_h\times\tilde{T}_h
  \end{equation*}
  be the family of bad fault sets for the gadget $\tilde{\bfG}_h$, as given by \Cref{lem:prior} as described above. Thus $\tilde{\cE}_{h,\mathrm{run}}$ is $(\Omega(n^\mu),O(n^\mu))$-bounded, where $n=\Theta(\bfn(\Lev(a)))$ for $a\in N_h$. For a set $E\in\cE_{\mathrm{run}}$, we define $\cH_E\subseteq 2^H$ to be the set of all $H'\subseteq H$ satisfying both of the following properties:
  \begin{enumerate}
  \item\label{it:EtoH} For every $(a,t)\in E$, then either $(a,t)$ or $(a,t+1)$ lies in the support $N_h\times\{t_h\}$ of some $h\in H'$.
  \item\label{it:HtoE} For every $h\in H'$, there exists $(a,t)\in N_h\times\{t_h\}$ such that either $(a,t)\in E$ or $(a,t-1)\in E$.
  \end{enumerate}
  For $H'\in\cH_E$, we define $\tilde{\cE}_{\mathrm{run},H'}\subseteq 2^{\Qu{\tilde{N}}\times[\tilde{T}]}$ by
  \begin{equation*}
    \tilde{\cE}_{\mathrm{run},H'} = \prod_{h\in H'}\tilde{\cE}_{h,\mathrm{run}}.
  \end{equation*}
  We then define
  \begin{equation}
    \label{eq:tErE}
    \tilde{\cE}_{\mathrm{run}} = \bigcup_{E\in\cE_{\mathrm{run}}} \tilde{\cE}_{\mathrm{run},E}, \hspace{1em} \text{where each} \hspace{1em} \tilde{\cE}_{\mathrm{run},E} = \bigcup_{H'\in\cH_E}\tilde{\cE}_{\mathrm{run},H'}.
  \end{equation}
\end{definition}

\Cref{claim:lsconvert} below shows that we can convert locally stochastic noise with non-uniform error probabilities on a family of bad fault sets $\cE_{\mathrm{run}}$ to locally stochastic noise with uniform error probabilities on the level-concatenated family $\tilde{\cE}_{\mathrm{run}}$. More precisely, we assume that each qubit at level $\ell\in\bZ_{\geq 0}$ of the non-concatenated circuit $\cR$ experiences an error with probability $p^{2^{(\mu/8)\cdot\ell}}=p^{2^{\Theta(\ell)}}$. Then we show that there exists $\tilde{p}>0$ such that if each qubit of the level-concatenated circuit $\tilde{\cR}$ experiences an error with probability $\tilde{p}$, a union bound on the chance of a bad error in $\tilde{\cE}_{\mathrm{run}}$ (under uniform error probabilities) is less than a union bound on the chance of a bad error in $\cE_{\mathrm{run}}$ (under non-uniform error probabilities). As we typically bound the chance of a bad error from non-uniform locally stochastic noise using union bounds, it follows that our schemes with low error probability under non-uniform locally stochastic noise can be converted to level-concatenated schemes with low error probability under uniform locally stochastic noise.

\begin{claim}
  \label{claim:lsconvert}
  Define the family of bad fault sets $\cE_{\mathrm{run}}$ and its level-concatenated lift $\tilde{\cE}_{\mathrm{run}}$ as in \Cref{def:lccirc}. Let $\mu>0$ be the constant defined in \Cref{lem:prior}. Then for every $p>0$, there exists $\tilde{p}=\tilde{p}(p)>0$ such that
  \begin{equation*}
    \sum_{\tilde{E}\in\tilde{\cE}_{\mathrm{run}}}\tilde{p}^{|\tilde{E}|} \leq \sum_{E\in\cE_{\mathrm{run}}}\prod_{(a,t)\in E}p^{2^{(\mu/8)\Lev(a)}} = \sum_{E\in\cE_{\mathrm{run}}}p^{\sum_{(a,t)\in E}2^{(\mu/8)\Lev(a)}}.
  \end{equation*}
\end{claim}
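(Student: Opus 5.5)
The plan is to bound the left-hand side by regrouping it according to \Cref{eq:tErE}. Every $\tilde{E}\in\tilde{\cE}_{\mathrm{run}}$ lies in some $\tilde{\cE}_{\mathrm{run},H'}$ with $H'\in\cH_E$ for some $E\in\cE_{\mathrm{run}}$. Hence
\begin{equation*}
  \sum_{\tilde{E}\in\tilde{\cE}_{\mathrm{run}}}\tilde{p}^{|\tilde{E}|} \leq \sum_{E\in\cE_{\mathrm{run}}}\sum_{H'\in\cH_E}\prod_{h\in H'}\Big(\sum_{\tilde{E}_h\in\tilde{\cE}_{h,\mathrm{run}}}\tilde{p}^{|\tilde{E}_h|}\Big),
\end{equation*}
where the product structure uses that an element of $\tilde{\cE}_{\mathrm{run},H'}$ is a disjoint union of one bad set per gadget $h\in H'$, on the disjoint space-time blocks $\tilde{N}_h\times\tilde{T}_h$, so $\tilde{p}^{|\tilde E|}$ factorizes. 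It therefore suffices to prove, for each fixed $E$, that
\begin{equation*}
  \sum_{H'\in\cH_E}\prod_{h\in H'}S_h\leq \prod_{(a,t)\in E}p^{2^{(\mu/8)\Lev(a)}},
\end{equation*}
where $S_h$ denotes the inner sum over $\tilde{\cE}_{h,\mathrm{run}}$.

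The first step is a per-gadget bound using $(\Omega(n^\mu),O(n^\mu))$-boundedness. For $h$ acting on a qubit $a$ at level $\ell$, write $n=\bfn(\ell)$, or the larger of the two sizes for a $\gCNOT$, which differ by a factor of at most $8$ since the levels differ by at most $1$. Then $S_h\leq 2^{Cn^\mu}\tilde{p}^{cn^\mu}$ for constants $c,C>0$. Choosing $\tilde{p}$ small enough that $\tilde{p}^{c}2^{C}\leq q$ gives $S_h\leq q^{n^\mu}$. By \Cref{def:tn}, $n\geq 2^{\ell/8}$, so $S_h\leq q^{2^{(\mu/8)\ell}}$. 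For a $\gCNOT$ touching levels $\ell,\ell\pm1$, we take $n$ to be the larger size and lose only constant factors in the exponent, which can be absorbed into $q$.

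The second step, which I expect to be the main obstacle, is the combinatorial bookkeeping matching $H'$ to $E$. By \Cref{it:EtoH}, each $(a,t)\in E$ is covered by some $h\in H'$ acting on $a$ at time $t$ or $t+1$. Assigning each $(a,t)$ to one such $h$, each $h$ receives at most $O(1)$ elements: at most $2$ qubits times $2$ timesteps. Hence
\begin{equation*}
  \prod_{h\in H'}S_h\leq \prod_{(a,t)\in E}q^{2^{(\mu/8)\Lev(a)}/4}
\end{equation*}
up to the constant-level shifts from $\gCNOT$, and this is valid because $S_h\leq 1$, so dropping unassigned gadgets only increases the bound. By \Cref{it:HtoE}, every $h\in H'$ is among the $O(1)$ gates touching some $(a,t)$ or $(a,t+1)$ with $(a,t)\in E$. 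So $|\cH_E|\leq 2^{O(|E|)}$, a factor $K\geq 2$ per element of $E$, and this is at most $\prod_{(a,t)\in E}K^{2^{(\mu/8)\Lev(a)}}$. Taking $q$ small relative to $p$, say $K q^{1/4}\leq p$ with the level shift absorbed by replacing $\mu/8$ slightly, yields the target bound. Since the final $\tilde{p}$ depends only on $p$ and on the constants $\mu,\nu,T_0$ and the implicit big-$O$ constants, the claim follows.
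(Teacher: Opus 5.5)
Your proposal is correct and follows essentially the same argument as the paper: bound $|\cH_E|\le 4^{|E|}$ via \Cref{it:HtoE}, use \Cref{it:EtoH} (with at most four locations charged to each gadget) together with the $(\Omega(n^\mu),O(n^\mu))$-boundedness of the inner bad-fault families and $\bfn(\ell)\ge 2^{\ell/8}$, then choose $\tilde{p}$ small enough that the $2^{O(\cdot)}$ counting factor is dominated by the $\tilde{p}^{\Omega(\cdot)}$ weight, and sum over $E\in\cE_{\mathrm{run}}$. The only difference is organizational: you factor $\tilde{p}^{|\tilde{E}|}$ per gadget and use $S_h\le 1$ to drop uncharged gadgets, whereas the paper multiplies a global count of $\tilde{\cE}_{\mathrm{run},E}$ by $\tilde{p}$ raised to the minimum set size. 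One small correction: the $\gCNOT$ level shift should be absorbed as a constant factor in the exponent of $q$, as you say earlier, rather than by altering $\mu/8$, which is fixed in the target bound.
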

\begin{proof}
  By definition, for each $E\in\cE_{\mathrm{run}}$, then $\tilde{\cE}_{\mathrm{run}}$ contains the associated family $\tilde{\cE}_{\mathrm{run},E}$ from \Cref{eq:tErE}. Now we must have
  \begin{equation*}
    |\cH_E| \leq 4^{|E|},
  \end{equation*}
  as by \Cref{it:HtoE} in \Cref{def:lccirc}, each $H'\in\cH_E$ is entirely supported on gates adjacent to space-time locations in $E$. Then
  \begin{equation*}
    |\tilde{\cE}_{\mathrm{run},E}| \leq \sum_{H'\in\cH_E}\prod_{h\in H'}|\tilde{\cE}_{h,\mathrm{run}}| \leq \prod_{(a,t)\in E}2^{O(\bfn(\Lev(a))^\mu)} = 2^{\sum_{(a,t)\in E}O(\bfn(\Lev(a))^\mu)},
  \end{equation*}
  where the second inequality above uses the fact that $\tilde{\cE}_{h,\mathrm{run}}$ is $(\Omega(n^\mu),O(n^\mu))$-bounded with $n=\Theta(\bfn(\Lev(a)))$ for $a\in N_h$.

  Furthermore, because by \Cref{it:EtoH} in \Cref{def:lccirc}, each $H'\in\cH_E$ must contain some gate adjacent to each space-time location in $E$, it follows that every $\tilde{E}\in\tilde{\cE}_{\mathrm{run},E}$ must satisfy
  \begin{equation*}
    |\tilde{E}| \geq \sum_{(a,t)\in E}\Omega(\bfn(\Lev(a))^\mu),
  \end{equation*}
  where we again apply $(\Omega(n^\mu),O(n^\mu))$-boundedness.

  Thus for $0<\tilde{p}<1$ we have
  \begin{align*}
    \sum_{\tilde{E}\in\tilde{\cE}_{\mathrm{run}}}\tilde{p}^{|\tilde{E}|}
    &\leq 2^{\sum_{(a,t)\in E}O(\bfn(\Lev(a))^\mu)} \cdot \tilde{p}^{\sum_{(a,t)\in E}\Omega(\bfn(\Lev(a))^\mu)}.
  \end{align*}
  Therefore for sufficiently small $\tilde{p}=\tilde{p}(p)>0$ relative to the constants hidden by the big-$O$ and $\Omega$ above, we have
  \begin{align*}
    \sum_{\tilde{E}\in\tilde{\cE}_{\mathrm{run}}}\tilde{p}^{|\tilde{E}|}
    &\leq p^{\sum_{(a,t)\in E}\bfn(\Lev(a))^\mu} \leq p^{\sum_{(a,t)\in E}2^{(\mu/8)\Lev(a)}},
  \end{align*}
  as desired.
\end{proof}

Our goal is to use the above definitions of level-concatenated objects to concatenate outer gadgets such as those given in \Cref{sec:ftnonu} with inner gadgets from \Cref{lem:prior}. Specifically, we will show how to perform such concatenation for outer gadgets that are \emph{concatenation-compatible} in the sense of \Cref{def:coco} below.

\begin{definition}
  \label{def:coco}
  A fault-tolerant gadget
  \begin{equation*}
    \bfG = (\cR,\; \cE_{\mathrm{run}},\; D_{\mathrm{in}}=(Q_{\mathrm{in}},\Enc_{\mathrm{in}},\cE_{\mathrm{in}}),\; D_{\mathrm{out}}=(Q_{\mathrm{out}},\Enc_{\mathrm{out}},\cE_{\mathrm{out}}),\; \ps)
  \end{equation*}
  for $\bar{\cO}$ is \emph{concatenation-compatible} if the following hold:
  \begin{enumerate}
  \item Assume $\cR$ be a quantum circuit using quantum space $\Qu{N}$ and time $T$, and assume we have a fixed mapping $\Lev:\Qu{N}\rightarrow\bZ_{\geq 0}$ assigning a nonnegative integer \emph{level} to each qubit in $\Qu{N}$. We require each $\gCNOT$ gate in $\cR$ to act on a pair of qubits whose levels differ by at most $1$.
  \item We require fault-tolerance to hold even under faults $\cF=(F_0,\dots,F_T)$ with a layer of errors $F_0$ at timestep $0$, so that $\cE_{\mathrm{run}}\subseteq 2^{\Qu{N}\times\{0,\dots,T\}}$ and $\cR[\cF]=(F_0,R_1,F_1,\dots,R_T,F_T)$. (Recall that previously in \Cref{def:fault} we instead considered faults of the form $\cF=(F_1,\dots,F_T)$.)
  \item\label{it:outbad} We assume that for every $\cE_{\mathrm{in}}$-avoiding pair $E_{\mathrm{in}}=(E_{\mathrm{in},X},E_{\mathrm{in},Z})$, every $\cE_{\mathrm{run}}$ avoiding set $E_{\mathrm{run}}\subseteq 2^{\Qu{N}\times\{0,\dots,T\}}$, and every postselection $\xi\in\bF_2^{\ps}$, then the resulting $\cE_{\mathrm{out}}$-avoiding pair $E_{\mathrm{out}}=(E_{\mathrm{out},X},E_{\mathrm{out},Z})$ guaranteed by fault-tolerance satisfies the following: both $E_{\mathrm{out},X},E_{\mathrm{out},Z}$ contain every output qubit $a\in N_{\mathrm{out}}$ for which $(a,T)\in E_{\mathrm{run}}$.
  % \item\label{it:outbad} We assume that for every $\cE_{\mathrm{in}}$-avoiding pair $E_{\mathrm{in}}=(E_{\mathrm{in},X},E_{\mathrm{in},Z})$, every $\cE_{\mathrm{run}}$ avoiding set $E_{\mathrm{run}}\subseteq 2^{\Qu{N}\times\{0,\dots,T\}}$, and every postselection $\xi\in\bF_2^{\ps}$, then the resulting $\cE_{\mathrm{out}}$-avoiding pair $E_{\mathrm{out}}=(E_{\mathrm{out},X},E_{\mathrm{out},Z})$ guaranteed by fault-tolerance satisfies the following: both $E_{\mathrm{out},X},E_{\mathrm{out},Z}$ contain the set $N_h$ for every $h\in H$ for which $t_h=T$ and $(N_h\times\{t_h\})\cap E_{\mathrm{run}}\neq\emptyset$.
  \end{enumerate}
\end{definition}

We briefly explain why the three requirements in \Cref{def:coco} are reasonable, and in particular, are satisfied by slightly modified versions of our gadgets from \Cref{sec:ftnonu}:

\begin{enumerate}
\item All physical circuits $\cR$ we consider in \Cref{sec:ftnonu} only apply $\gCNOT$ gates to pairs of qubits whose levels differ by at most $1$. Furthermore, all circuits $\cR$ we consider also have at most $O(1/2^\ell)$-fraction of all qubits at level $\geq\ell$, so because $\bfn(\ell)<2^{\ell/8+2}$, there is just a constant factor blowup in space usage from $\cR$ to the level-concatenated circuit $\tilde{\cR}$.
\item\label{it:fault0explain} While in \Cref{sec:ftnonu} we considered faults starting at timestep $1$, all of our gadgets extend naturally to allow for faults starting at timestep $0$, up to constant-factor changes in the parameters. Indeed, note that when we sequentially compose two of our gadgets, errors from the fault in the final timestep of the former can be viewed as errors at timestep $0$ of the latter.
\item Following the final timestep $T$ of our circuit $\cR$, the fault applies the error $F_T$ on the output state, which will be a corrupted code state of $D_{\mathrm{out}}$. Because our gadgets are composable (similarly as argued in \Cref{it:fault0explain} above), they can correct this error from $F_T$, which is supported inside the time-$T$ layer of $E_{\mathrm{run}}$. Thus up to constant-factor changes in parameters, we may define $\cE_{\mathrm{out}}$ and the $\cE_{\mathrm{out}}$-avoiding set $E_{\mathrm{out}}$ so that $E_{\mathrm{out}}$ always contains the time-$T$ layer of $E_{\mathrm{run}}$.

  In more detail, in \Cref{sec:ftnonu} we previously defined $E_{\mathrm{out}}$ to simply contain the support of the \emph{reduction} of errors supported in the time-$T$ layer of $E_{\mathrm{out}}$. However, because the reduction of an error has the same effect as the error itself on the code space, all our gadgets still work with $E_{\mathrm{out}}$ containing the non-reduced error support as well. We may similarly modify $\cE_{\mathrm{in}}$ and $E_{\mathrm{in}}$ to account for the support of the non-reduced error, to retain symmetry between the input and output bad error families, and hence retain composability.

  % Note that we may also assume without loss of generality (up to constant factor changes in parameters) that the final timestep of $\cR$ applies identity gates to all qubits, so that the union of all $N_h$ for $h\in H$ for which $t_h=T$ and $(N_h\times\{t_h\})\cap E_{\mathrm{run}}\neq\emptyset$, as defined in \Cref{it:outbad} of \Cref{def:coco}, is precisely the time-$T$ layer of $E_{\mathrm{out}}$.
\end{enumerate}

\Cref{lem:lcft} below provides our main lemma showing fault-tolerance of a level-concatenated gadget. Related concatenation results, though in slightly different settings, have been shown before (see e.g.~\cite{he_composable_2025}). In particular, our proof of fault-tolerance for the concatenated scheme is similar to that of \cite{breuckmann_fault-tolerant_2026}.

\begin{lemma}
  \label{lem:lcft}
  Let
  \begin{align*}
    \bfG &= (\cR,\cE_{\mathrm{run}},D_{\mathrm{in}},D_{\mathrm{out}},\ps)
  \end{align*}
  be a concatenation-compatible fault-tolerant gadget for a set $\bar{\cO}$ of superoperators, where $\cR$ is a quantum circuit using quantum space $\Qu{N}$, time $T$, and gate set
  \begin{equation*}
    \cG = \{\gI,\gX,\gZ,\gH,\gCNOT,\gInitX,\gInitZ,\gTerm,\gMX,\gMZ,\gCt{\gX},\gCt{\gZ},\gCF_*\}.
  \end{equation*}
  % We assume that we have a mapping $\Lev:\Qu{N}\rightarrow\bZ_{\geq 0}$ assigning a nonnegative integer \emph{level} to each qubit in $\Qu{N}$. Also assume that every $\gCNOT$ gate in $\cR$ acts on qubits whose levels differ by at most $1$.

  Then we have a level-concatenated fault-tolerant gadget
  \begin{align*}
    \tilde{\bfG} &= (\tilde{\cR},\tilde{\cE}_{\mathrm{run}},\tilde{D}_{\mathrm{in}},\tilde{D}_{\mathrm{out}},\tilde{\ps})
  \end{align*}
  for $\bar{\cO}$, where $\tilde{\cR}$ is the level-concatenated quantum circuit using quantum space $\Qu{\tilde{N}}$ and time $\tilde{T}$ as defined in \Cref{def:lccirc}, so that $|\Qu{\tilde{N}}|=O(\sum_{a\in\Qu{N}}2^{\Lev(a)/8})$ and $\tilde{T}=O(T)$. Here $\tilde{\cE}_{\mathrm{run}},\tilde{\ps}$ and $\tilde{D}_{\mathrm{in}},\tilde{D}_{\mathrm{out}}$ are the level-concatenated objects defined in \Cref{def:lccirc} and \Cref{def:lccode}, respectively.
\end{lemma}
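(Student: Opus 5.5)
We will prove that $\tilde{\bfG}$ is fault-tolerant by a simulation argument: we convert an execution of the concatenated circuit into an execution of the outer circuit $\cR$ under an $\cE_{\mathrm{run}}$-avoiding outer fault, and then invoke the fault-tolerance of $\bfG$.

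\textbf{Setup.} Fix a $\tilde{\cE}_{\mathrm{in}}$-avoiding pair $\tilde{E}_{\mathrm{in}}$, a $\tilde{\cE}_{\mathrm{run}}$-avoiding set $\tilde{E}_{\mathrm{run}}$, and a postselection $\tilde{\xi}\in\bF_2^{\tilde{\ps}}$. By \Cref{lem:paulift}, it suffices to consider Pauli input deviations and Pauli faults.

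\textbf{Step 1: declare outer locations bad.} For each outer qubit $a$, call $a$ input-bad if $\tilde{E}_{\mathrm{in}}$ restricted to the block $[\bfn(\Lev(a))]$ contains a set of $\cE_{\bfn(\Lev(a))}$. For each gate instance $h\in H$, call $h$ bad if $\tilde{E}_{\mathrm{run}}\cap(\tilde{N}_h\times\tilde{T}_h)$ is not $\tilde{\cE}_{h,\mathrm{run}}$-avoiding. From these we form the outer input pair $E_{\mathrm{in}}$, consisting of the input-bad qubits, and the outer fault set $E_{\mathrm{run}}$, consisting of all $(a,t_h)$ with $a\in N_h$ for bad $h$. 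We place these at timestep $t_h$ or $t_h-1$ as needed to match the convention of \Cref{it:EtoH,it:HtoE} in \Cref{def:lccirc}; this is exactly why \Cref{def:coco} allows faults at timestep $0$. By the product structure of $\tilde{\cE}_{\mathrm{in}}$ in \Cref{def:lccode}, $E_{\mathrm{in}}$ is $\cE_{\mathrm{in}}$-avoiding. By the definition of $\tilde{\cE}_{\mathrm{run}}$ through $\cH_E$ in \Cref{eq:tErE}, $E_{\mathrm{run}}$ is $\cE_{\mathrm{run}}$-avoiding: if it contained some $E\in\cE_{\mathrm{run}}$, the set of bad gates adjacent to $E$ would lie in $\cH_E$, and $\tilde{E}_{\mathrm{run}}$ would contain a member of $\tilde{\cE}_{\mathrm{run},E}$.

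\textbf{Step 2: define the outer postselection and output set.} The postselection $\tilde{\xi}$ induces an outer postselection $\xi$ via the $\bar{\xi}$ of \Cref{it:prgads} in \Cref{lem:prior} for each postselected measurement gadget. Let $E_{\mathrm{out}}$ be the pair that $\bfG$ assigns to $(E_{\mathrm{in}},E_{\mathrm{run}},\xi)$. For each output qubit $a$ we must produce an inner output error set. If $a\in E_{\mathrm{out}}$, we simply allow an arbitrary (bad) error on its block. If $a\notin E_{\mathrm{out}}$, we take the output set $E_{\mathrm{out}}$ produced by the final inner gadget acting on $a$. \Cref{it:outbad} of \Cref{def:coco} ensures that every qubit whose last inner gadget was bad is already contained in $E_{\mathrm{out}}$. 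The resulting concatenated set $\tilde{E}_{\mathrm{out}}$ is $\tilde{\cE}_{\mathrm{out}}$-avoiding and depends only on $(\tilde{E}_{\mathrm{in}},\tilde{E}_{\mathrm{run}},\tilde{\xi})$.

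\textbf{Step 3: the simulation.} We proceed inductively over the outer timesteps $t=1,\dots,T$, maintaining the invariant that the concatenated state equals
\[
\Bigl(\bigotimes_a \Enc_{\bfn(\Lev(a))}\Bigr)(\sigma_t)
\]
up to deviations on good blocks that are avoiding for the inner families, where $\sigma_t$ is the state of $\cR[\cF,\zeta,\xi]$ after step $t$ for some outer Pauli fault $\cF$ supported in $E_{\mathrm{run}}$ and some reweighting $\zeta$. To advance from $t$ to $t+1$, we treat each inner gadget according to its status:
\begin{itemize}
\item A good inner gadget whose input blocks are good is handled by its fault-tolerance, which implements $G[\zeta_h]$ for some reweighting $\zeta_h$.
\item A good inner gadget with a bad input block is handled by the mending property. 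It outputs a linear combination of encodings of $G\circ L$ for some superoperator $L$, and we absorb $L$ (expanded into Paulis) into the outer fault at $(a,t-1)$, which lies in $E_{\mathrm{run}}$ by construction.
\item A bad inner gadget is handled by also using its mending and refreshing properties against the next gadget. Its arbitrary output corruption becomes an outer Pauli fault at $(a,t)$, which is again in $E_{\mathrm{run}}$.
\end{itemize}
Reweightings that appear at the inner logical level become outer reweightings, and outer reweightings are allowed by \Cref{def:faulttol}. Classical function gates pass through unchanged.

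At the end, the fault-tolerance of $\bfG$ shows that the outer output is an $\comp{E_{\mathrm{out}}}$-deviation of $\Enc_{\mathrm{out}}\circ\bar{\cO}(\rho)$, and the inner invariant lifts this to an $\comp{\tilde{E}_{\mathrm{out}}}$-deviation. The space and time bounds $|\Qu{\tilde{N}}|=O(\sum_a 2^{\Lev(a)/8})$ and $\tilde{T}=O(T)$ follow immediately from \Cref{def:tn,def:lccirc}.

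\textbf{Main obstacle.} The hard part will be the bookkeeping in Step 3. The set of bad inner gadgets, and hence $E_{\mathrm{run}}$, must be fixed from the supports alone before we expand into Pauli terms, and the mending operators $L$ must be attributed to the correct outer space-time location, so that every Pauli branch uses the same $E_{\mathrm{run}}$. The first approach we would try is to expand everything into Pauli terms and work branch by branch, relying on the refreshing property so that each inner output set depends only on the supports and postselections.
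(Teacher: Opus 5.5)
Your Steps 1 and 2 agree with the paper's proof: the definitions of $E_{\mathrm{in}}$, of $E_{\mathrm{run}}$ from the bad gates via $\cH_E$, of $\xi$ via \Cref{it:prgads}, and of $\tilde{E}_{\mathrm{out}}$ using \Cref{it:outbad} of \Cref{def:coco} all match. The gap is in the third bullet of Step 3.

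The fault-tolerance, mending and refreshing properties of an inner gadget $\tilde{\bfG}_h$ from \Cref{lem:prior} all assume its fault is $\tilde{\cE}_{h,\mathrm{run}}$-avoiding. Mending tolerates an arbitrary \emph{input}, not an arbitrary \emph{fault}. For a bad $h$, nothing constrains $\tilde{\cR}_h[\tilde{\cF},\tilde{\zeta},\tilde{\xi}]$. So you have no justification that the state after it again has the form $P\circ(\bigotimes_a\Enc_{\bfn(\Lev(a))})(\sigma_t)$, with $\sigma_t$ obtained from $\sigma_{t-1}$ by $G_h$ followed by an outer fault at $(a,t_h)$. Until that is shown, the next gadget's mending property cannot be invoked, and your induction breaks exactly at the bad gadgets. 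Separately, at $t=1$ an input-bad block must be absorbed into the outer input deviation via $E_{\mathrm{in}}$, not into a fault at $(a,0)$, which need not lie in $E_{\mathrm{run}}$.

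The paper supplies the missing idea with an extended fault (\Cref{def:extfault}):
\begin{itemize}
\item For each bad $h$, the adversary swaps the blocks $\tilde{N}_h$ into a private register at timestep $T_0(t_h-1)$, runs the corrupted subcircuit there, and swaps them back at timestep $T_0t_h$. Meanwhile the main register runs $\tilde{\cR}_h$ with no errors.
\item After Pauli-decomposing, the private register decouples and the $\gZ$ components on classical bits become reweightings.
\item Inside every gadget, the main register then sees only an avoiding fault plus whole-block Pauli errors on the gadget's final layer (\Cref{claim:swapsupp}).
\end{itemize}
Now every gadget, bad ones included, is handled by mending refreshing fault-tolerance. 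The swap-in corruption lands on the previous gadget's output and the swap-out corruption on $h$'s output. This is why $E_{\mathrm{run}}$ must contain both $(a,t_h-1)$ and $(a,t_h)$ for $a\in N_h$, rather than one of them ``as needed.'' \Cref{claim:concatind} is then essentially your invariant.

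Alternatively, you could close the gap directly. Pauli translates of the code space span each block, so the output of a bad gadget is a sum of terms $P\,\Enc(L(\cdot))\,P'$, with $L$ acting only on the encoded qubits of $N_h$. You can then rewrite $L$ as $G_h$ followed by an outer fault at $(a,t_h)$, using a linear combination over reweightings when $G_h$ is classically controlled. Either way, some such argument has to be made explicitly.
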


To prove \Cref{lem:lcft}, we will use the following notion of an \emph{extended fault}, which gives the adversary access to a private register of qubits $\Ad{N}$ on which they can perform arbitrary gates, including entangling gates with the original circuit's qubits $\Qu{N}$.

\begin{definition}
  \label{def:extfault}
  Let $\cR=(R_1,\dots,R_T)$ be a quantum circuit using space $N=\Cl{N}\sqcup\Qu{N}$ and time $T$, with each $R_t:\bC^{2^{N_{t-1}}\times 2^{N_{t-1}}}\rightarrow\bC^{2^{N_t}\times 2^{N_t}}$. For a set of qubits $\Ad{N}$ that is disjoint from $N=\Cl{N}\sqcup\Qu{N}$, we define an \emph{extended fault} for $\cR$ to be a sequence $\cF=(F_0,\dots,F_T)$ of superoperators
  \begin{equation*}
    F_t:\bC^{2^{\Cl{N}_t\sqcup\Qu{N}_t\sqcup\Ad{N}}\times 2^{\Cl{N}_t\sqcup\Qu{N}_t\sqcup\Ad{N}}}\rightarrow\bC^{2^{\Cl{N}_t\sqcup\Qu{N}_t\sqcup\Ad{N}}\times 2^{\Cl{N}_t\sqcup\Qu{N}_t\sqcup\Ad{N}}},
  \end{equation*}
  with the restriction that the only gates that $F_t$ applies to the classical bits $\Cl{N}_t$ are classically-controlled gates $\gCt{G}$, for gates $G$ acting on qubits $\Qu{N}_t\sqcup\Ad{N}$. In particular, it follows that the Pauli decomposition of each $F_t$ only applies the $\gI$ and $\gZ$ Paulis to bits in $\Cl{N}$.
  We define $\supp(\cF)\subseteq \Qu{N}\times\{0,\dots,T\}$ to be the support
  \begin{equation*}
    \supp(\cF) = (\supp(F_0)\cap\Qu{N})\sqcup\cdots\sqcup(\supp(F_T)\cap\Qu{N})
  \end{equation*}
  of $\cF$ on the original qubits $\Qu{N}$ of $\cR$. An \emph{extended Pauli fault} $\cF$ is an extended fault in which each $F_t$ is a Pauli superoperator.

  The \emph{$\cF$-corrupted circuit} $\cR[\cF]$ is defined by
  \begin{equation*}
    \cR[\cF] = (\gInitZ^{\otimes\Ad{N}}\otimes I_{N_0},\; F_0,\; R_1,\; F_1,\; R_2,\; F_2,\dots,R_T,\; F_T,\; \gTerm^{\otimes\Ad{N}}\otimes I_{N_t}).
  \end{equation*}
  That is, $\cR[\cF]$ initializes all qubits in $\Ad{N}$, then alternatatively applies the fault and circuit superoperators $F_t$ and $R_t$ respectively, and finally terminates all qubits in $\Ad{N}$.
\end{definition}

\begin{proof}[Proof of \Cref{lem:lcft}]
  Fix a $\tilde{\cE}_{\mathrm{in}}$-avoiding pair of sets $\tilde{E}_{\mathrm{in}}=(\tilde{E}_{\mathrm{in},X}\subseteq\Qu{\tilde{N}}_{\mathrm{in}},\tilde{E}_{\mathrm{in},Z}\subseteq\Qu{\tilde{N}}_{\mathrm{in}})$, a $\tilde{\cE}_{\mathrm{run}}$-avoiding set $\tilde{E}_{\mathrm{run}}\subseteq\Qu{\tilde{N}}\times[\tilde{T}]$, and a postselection $\tilde{\xi}\in\bF_2^{\tilde{\ps}}$ for $\tilde{\bfG}$.

  We define a pair of sets $E_{\mathrm{in}}=(E_{\mathrm{in},X},E_{\mathrm{in},Z})$ as follows: for $\alpha\in\{X,Z\}$, we let $E_{\mathrm{in},\alpha}\subseteq\Qu{N}_{\mathrm{in}}$ contain every $a\in\Qu{N}_{\mathrm{in}}$ for which $\tilde{E}_{\mathrm{in},\alpha}\cap[\bfn(\Lev(a))]$ is not $\cE_{\bfn(\Lev(a))}$-avoiding. Here $\tilde{E}_{\mathrm{in},\alpha}\cap[\bfn(\Lev(a))]$ refers to the restriction of $\tilde{E}_{\mathrm{in},\alpha}$ to qubits in the length-$\bfn(\Lev(a))$ encoding of $a\in N_{\mathrm{in}}$, and $\cE_{\bfn(\Lev(a))}$ is defined as in \Cref{lem:prior}. Then by \Cref{def:lccode}, the pair $E_{\mathrm{in}}$ must be $\cE_{\mathrm{in}}$-avoiding.

  We next define a set $E_{\mathrm{run}}\subseteq\Qu{N}\times\{0,\dots,T\}$ as follows, where below we use the notation $H$, $\tilde{\cE}_{h,\mathrm{run}}$, etc.~from \Cref{def:lccirc}. First, we define $H_{\mathrm{run}}\subseteq H$ to contain every gate $h\in H$ for which the restriction $\tilde{E}_{\mathrm{run}} \cap (\tilde{N}_h\times \tilde{T}_h)$ of $\tilde{E}_{\mathrm{run}}$ to the subcircuit $\tilde{\cR}_h$ is not $\tilde{\cE}_{h,\mathrm{run}}$-avoiding.
  We then let $E_{\mathrm{run}}$ contain every $(a,t)\in\Qu{N}\times\{0,\dots,T\}$ such that either $(a,t)$ or $(a,t+1)$ lies in the support $N_h\times\{t_h\}$ of some gate $h\in H_{\mathrm{run}}$. Then by \Cref{def:lccirc}, the set $E_{\mathrm{run}}$ must be $\cE_{\mathrm{run}}$-avoiding.

  We furthermore define a postselection $\xi\in\bF_2^{\ps}$ as follows. For every $h\in\ps$, to define $\xi_h$, we first let $\tilde{\xi}_h$ denote the restriction of $\tilde{\xi}$ to components corresponding to postselectable gates in $\tilde{N}_h\times\tilde{T}_h$. For every $h\in\ps\subseteq H$, so that $G_h\in\{\gMX,\gMZ\}$, recall that the gadget $\tilde{\bfG}_h$ occupies space-time locations $\tilde{N}_h\times\tilde{T}_h$, so that $\tilde{\xi}_h$ is a postselection for $\tilde{\bfG}_h$. By \Cref{lem:prior}, there exists $\xi_h\in\bF_2$ such that $\tilde{\bfG}_h[\tilde{\xi}_h]$ provides a fault-tolerant gadget for $G_h[\xi_h]$.

  Given $E_{\mathrm{in}},E_{\mathrm{run}},\xi$ defined above, the fault-tolerance of $\bfG$ provides a $\cE_{\mathrm{out}}$-avoiding output error set pair $E_{\mathrm{out}}=(E_{\mathrm{out},X}\subseteq\Qu{N}_{\mathrm{out}},E_{\mathrm{out},Z}\subseteq\Qu{N}_{\mathrm{out}})$. We define the error set pair $\tilde{E}_{\mathrm{out}}=(\tilde{E}_{\mathrm{out},X}\subseteq\Qu{\tilde{N}}_{\mathrm{out}},\tilde{E}_{\mathrm{out},Z}\subseteq\Qu{\tilde{N}}_{\mathrm{out}})$ as follows, for each $\alpha\in\{X,Z\}$:
  \begin{itemize}
  \item For every $a\in E_{\mathrm{out},\alpha}$, we let $\tilde{E}_{\mathrm{out},\alpha}$ contain all qubits in the set $\tilde{N}_a$.
  \item For every $a\in N_{\mathrm{out}}\setminus E_{\mathrm{out},\alpha}$, by \Cref{it:outbad} in \Cref{def:coco} there exists $h\in H$ such that $h_t=T$, $a\in N_t$, and $(N_h\times\{t_h\})\cap E_{\mathrm{run}}=\emptyset$. Then by the definition of $H_{\mathrm{run}},E_{\mathrm{run}}$ above, we must have $h\notin H_{\mathrm{run}}$. Therefore given the restriction of $\tilde{E}_{\mathrm{run}}$ and $\tilde{\xi}$ to the space-time locations $\tilde{N}_h\times\tilde{T}_h$, the mending refreshing fault-tolerance of $\tilde{\bfG}_h$ from \Cref{lem:prior} provides an associated output error set $\tilde{E}_{h,\mathrm{out}}\subseteq\bigsqcup_{a\in N_h}[\bfn(\Lev(a))]$. For each $a\in N_h$, we let $\tilde{E}_{\mathrm{out},\alpha}$ contain all qubits in the set $\tilde{E}_{h,\mathrm{out}}\cap[\bfn(\Lev(a))]$, which is $\cE_{\bfn(\Lev(a))}$-avoiding by \Cref{lem:prior}.
  \end{itemize}
  
  Now to show that $\tilde{\bfG}$ provides a fault-tolerant gadget for $\bar{\cO}$, fix a finite set $\Rf{K}$, an operator $\rho\in\bC^{2^{K_{\mathrm{in}}\sqcup\Rf{K}}\times 2^{K_{\mathrm{in}}\sqcup\Rf{K}}}$ that is classical on bits $\Cl{K}_{\mathrm{in}}$, a Pauli $\comp{\tilde{E}}_{\mathrm{in}}$-deviation $\tilde{\sigma}$ of $\widetilde{\Enc}_{\mathrm{in}}(\rho)$, and a $\comp{\tilde{E}}_{\mathrm{run}}$-avoiding Pauli fault $\tilde{\cF}$ and reweighting $\tilde{\zeta}$ for $\tilde{\cR}$. By \Cref{lem:paulift}, our goal is to show that $\tilde{\cR}[\tilde{\cF},\tilde{\zeta},\tilde{\xi}](\tilde{\sigma})$ is a $\comp{\tilde{E}}_{\mathrm{out}}$-deviation of $\widetilde{\Enc}_{\mathrm{out}}\circ\bar{\cO}(\rho)$.

  For this purpose, we begin by constructing an extended fault $\tilde{\cF}'$ under which $\tilde{\cR}$ has the same output as under $\tilde{\cF}$. We define $\tilde{\cF}'$ to use a private side register $\Ad{\tilde{N}}\cong\Qu{\tilde{N}}$ of qubits that is isomorphic to the main set of qubits used by $\tilde{\cR}$. For each gate $h\in H\setminus H_{\mathrm{run}}$, we let $\tilde{\cF}'$ simply apply the same Pauli errors as $\tilde{\cF}$ within space-time locations $\tilde{N}_h\times\tilde{T}_h\subseteq\Qu{\tilde{N}}$. However, for each $h\in H_{\mathrm{run}}$, within space-time locations $\tilde{N}_h\times\tilde{T}_h$ inside both $\Qu{\tilde{N}}$ and $\Ad{\tilde{N}}$, we let $\tilde{\cF}'$ perform the following:
  \begin{enumerate}
  \item In timestep $T_0(t_h-1)$, swap all active qubits in $\tilde{N}_h\subseteq\Qu{\tilde{N}}$ for their counterparts in $\tilde{N}_h\subseteq\Ad{\tilde{N}}$.
  \item In timesteps $\tilde{T}_h$, let $\tilde{\cF}'$ apply the corrupted sub-circuit $\tilde{\cR}_h[\tilde{\cF},\tilde{\zeta},\tilde{\xi}]$ to the qubits $\tilde{N}_h\subseteq\Ad{\tilde{N}}$, where here we implicitly restrict $\tilde{\cF},\tilde{\zeta},\tilde{\xi}$ to the appropriate space-time locations $\tilde{N}_h\times\tilde{T}_h$. Meanwhile, let $\tilde{\cF}'$ apply no errors (i.e.~only identity gates) to qubits $\tilde{N}_h\subseteq\Qu{\tilde{N}}$ in timesteps $T_0(t_h-1)+1,\dots,T_0t_h-1$.
  \item In timestep $T_0t_h$, again swap all active qubits in $\tilde{N}_h\subseteq\Qu{\tilde{N}}$ for their counterparts in $\tilde{N}_h\subseteq\Ad{\tilde{N}}$.
  \end{enumerate}
  We define an associated reweighting $\tilde{\zeta}'$ and postselection $\tilde{\xi}'$ that simply consist of the restrictions of $\tilde{\zeta}$ and $\tilde{\xi}$ respectively to space-time locations $\bigsqcup_{h\in H\setminus H_{\mathrm{run}}}\tilde{N}_h\times\tilde{T}_h\subseteq\Qu{\tilde{N}}$, as we may apply trivial reweightings and postselections to sub-circuits where we have swapped out the execution to $\Ad{\tilde{N}}$.

  The above definition of $\tilde{\cF}'$ is valid because the only gates in $\cG$ that access classical bits are measurement gates and classically-controlled Pauli gautes. But by \Cref{lem:prior}, for $G_h\in\{\gMX,\gMZ\}$, the circuit $\tilde{\cR}_h$ will immediately measure all input qubits, so there are no active qubits in $\tilde{N}_h\times\tilde{T}_h$, and hence $h\notin H_{\mathrm{run}}$. In particular, it follows that $\tilde{\xi}'$ agrees with $\tilde{\xi}$ on the restriction to subcircuits $\tilde{\cR}_h$ associated to measurement gates $h$. Meanwhile, by \Cref{def:extfault} we allow the extended fault $\tilde{\cF}'$ to apply classically-controlled Pauli gates using classical-control bits in $\Cl{\tilde{N}}$.

  \Cref{claim:swapsupp} below follows immediately from our definition of $\tilde{\cF}'$. The key point is that the restriction of $\supp(\tilde{\cF}')$ to each gadget $\tilde{\bfG}_h$ does not contain any of the gadget's bad fault error sets, so we can apply the gadget's fault-tolerance provided by \Cref{lem:prior}.
  
  \begin{claim}
    \label{claim:swapsupp}
    For every $h\in H$, the restriction of $\supp(\tilde{\cF}')$ to space-time locations $\tilde{N}_h\times\tilde{T}_h$ is of the form $E_{h,\mathrm{run}}\cup E_{h,\mathrm{swap}}$, where $E_{h,\mathrm{run}}\subseteq\tilde{N}_h\times\tilde{T}_h$ is an $\tilde{\cE}_{h,\mathrm{run}}$-avoiding set, and $E_{h,\mathrm{swap}}\subseteq\tilde{N}_h\times\{T_0t_h\}$ is the set $E_{h,\mathrm{swap}}=\bigsqcup_{a\in N_h:(a,t_h)\in E_{\mathrm{run}}}\tilde{N}_a$.
  \end{claim}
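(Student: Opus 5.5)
The claim is essentially a bookkeeping statement about the construction of $\tilde{\cF}'$. I would verify it by splitting into two cases, according to whether the gate $h$ lies in $H_{\mathrm{run}}$.

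\emph{Case $h\notin H_{\mathrm{run}}$.} By construction, $\tilde{\cF}'$ applies exactly the same Pauli errors as $\tilde{\cF}$ inside $\tilde{N}_h\times\tilde{T}_h\subseteq\Qu{\tilde{N}}$. Moreover, the swaps performed for neighboring gates $h'\in H_{\mathrm{run}}$ act only on $\tilde{N}_{h'}$ at timesteps $T_0(t_{h'}-1)$ and $T_0t_{h'}$.

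\begin{itemize}
\item The swap at time $T_0(t_{h'}-1)$ is the final timestep $T_0t$ of the block $\tilde{T}_{h''}$ of whatever gates $h''$ at time $t_{h'}-1$ act on those qubits.
\item The swap at time $T_0t_{h'}$ lies inside $\tilde{T}_{h'}$ itself.
\end{itemize}

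So the only swap support landing in $\tilde{N}_h\times\tilde{T}_h$ is at the last timestep $T_0t_h$, on qubits $\tilde{N}_a$ with $a\in N_h$ such that the gate acting on $a$ at time $t_h+1$ lies in $H_{\mathrm{run}}$. By the definition of $E_{\mathrm{run}}$, membership $(a,t_h+1)\in N_{h'}\times\{t_{h'}\}$ for some $h'\in H_{\mathrm{run}}$ gives $(a,t_h)\in E_{\mathrm{run}}$. Hence this swap support is contained in $E_{h,\mathrm{swap}}$.

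The remaining support is $\supp(\tilde{\cF})\cap(\tilde{N}_h\times\tilde{T}_h)\subseteq\tilde{E}_{\mathrm{run}}\cap(\tilde{N}_h\times\tilde{T}_h)$, which is $\tilde{\cE}_{h,\mathrm{run}}$-avoiding precisely because $h\notin H_{\mathrm{run}}$. Since bad-set avoidance is downward closed, its subset is also avoiding. We take this subset as $E_{h,\mathrm{run}}$.

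\emph{Case $h\in H_{\mathrm{run}}$.} Inside $\Qu{\tilde{N}}$, $\tilde{\cF}'$ applies identities at timesteps $T_0(t_h-1)+1,\dots,T_0t_h-1$. The only nontrivial actions are:

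\begin{itemize}
\item the swap at $T_0(t_h-1)$, which is outside $\tilde{T}_h$;
\item the swap at $T_0t_h$, which acts on all of $\tilde{N}_h$.
\end{itemize}

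Since $h\in H_{\mathrm{run}}$, every $a\in N_h$ has $(a,t_h)\in N_h\times\{t_h\}$ for a gate in $H_{\mathrm{run}}$, so $(a,t_h)\in E_{\mathrm{run}}$. The swap support is therefore exactly $E_{h,\mathrm{swap}}=\tilde{N}_h\times\{T_0t_h\}$, and we can take $E_{h,\mathrm{run}}=\emptyset$. The supports on $\Ad{\tilde{N}}$ are excluded from $\supp(\tilde{\cF}')$ by \Cref{def:extfault}.

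The main point needing care is the timestep indexing of swaps at block boundaries. With faults allowed at time $0$ (as in \Cref{def:coco}), the swap at $T_0(t_h-1)$ belongs to the preceding block, and this is exactly why the definition of $E_{\mathrm{run}}$ includes $(a,t)$ whenever $(a,t+1)$ is touched by a gate in $H_{\mathrm{run}}$. I would write out this index check explicitly. The rest is immediate from the definitions.
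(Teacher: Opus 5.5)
Your case split on $h\in H_{\mathrm{run}}$ versus $h\notin H_{\mathrm{run}}$ is correct. It is exactly the definitional check the paper has in mind when it says the claim follows immediately from the construction of $\tilde{\cF}'$, including the boundary-timestep bookkeeping through the $(a,t+1)$ clause in the definition of $E_{\mathrm{run}}$.

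The only nit: the swaps act on the active qubits, namely the code blocks $\tilde{N}_a'$, so the swap support is contained in $E_{h,\mathrm{swap}}$ rather than exactly equal to it, which is the containment form in which the claim is used anyway.
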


  Now by construction we have
  \begin{equation*}
    \tilde{\cR}[\tilde{\cF},\tilde{\zeta},\tilde{\xi}](\tilde{\sigma}) = \tilde{\cR}[\tilde{\cF}',\tilde{\zeta}',\tilde{\xi}'](\tilde{\sigma}),
  \end{equation*}
  as the RHS above applies the exact same circuit as the LHS, except some parts of the circuit are performed by swapping some qubits from $\Qu{\tilde{N}}$ into $\Ad{\tilde{N}}$, applying some gates on $\Ad{\tilde{N}}$, and then swapping the resulting qubits back into $\Qu{\tilde{N}}$.

  Now we may decompose each $\tilde{F}_t'$ comprising $\tilde{\cF}'=(\tilde{F}'_0,\dots,\tilde{F}'_{\tilde{T}})$ into a linear combination of Pauli superoperators $\tilde{F}_t''$, and then analyze each Pauli term separately. That is, let $\tilde{\cF}''=(\tilde{F}''_0,\dots,\tilde{F}''_T)$ be an extended Pauli fault term in the Pauli decomposition of $\tilde{\cF}'$, so that $\supp(\tilde{\cF}'')\subseteq\supp(\tilde{\cF}')$. Our goal is now to show that $\tilde{\cR}[\tilde{\cF}'',\tilde{\zeta}',\tilde{\xi}'](\tilde{\sigma})$ is a $\comp{\tilde{E}}_{\mathrm{out}}$-deviation of $\widetilde{\Enc}_{\mathrm{out}}\circ\bar{\cO}(\rho)$.

  Because a Pauli superoperator is a tensor product of Paulis operating on individual qubits, $\tilde{\cF}''$ introduces no entanglement between $\Ad{\tilde{N}}$ and $\Cl{\tilde{N}}\sqcup\Qu{\tilde{N}}$, so we may ignore qubits $\Ad{\tilde{N}}$ when considering the execution of $\tilde{\cR}[\tilde{\cF}'',\tilde{\zeta}',\tilde{\xi}'](\tilde{\sigma})$. Furthermore, as described in \Cref{def:extfault}, $\tilde{\cF}''$ will only apply Pauli $\gZ$ (and trivially, $\gI$) errors on bits in $\Cl{\tilde{N}}$ when $\tilde{\cR}$ applies classically-controlled Pauli gates. Such $\gZ$ errors on classical-control bits in $\Cl{\tilde{N}}$ simply have the effect of modifying the reweighting on the associated classically-controlled Pauli gates applied by $\tilde{\cR}$. Hence let $\tilde{\cF}'''$ denote the restriction of $\tilde{\cF}''$ to qubits in $\Qu{\tilde{N}}$, so that $\tilde{\cF}'''$ is an ordinary (i.e.~non-extended) Pauli fault for $\tilde{\cR}$. Then there exists a reweighting $\zeta''$ such that
  \begin{equation*}
    \tilde{\cR}[\tilde{\cF}''',\tilde{\zeta}'',\tilde{\xi}'](\tilde{\sigma}) = \tilde{\cR}[\tilde{\cF}'',\tilde{\zeta}',\tilde{\xi}'](\tilde{\sigma}).
  \end{equation*}
  It therefore now suffices to show that $\tilde{\cR}[\tilde{\cF}''',\tilde{\zeta}'',\tilde{\xi}'](\tilde{\sigma})$ is a $\comp{\tilde{E}}_{\mathrm{out}}$-deviation of $\widetilde{\Enc}_{\mathrm{out}}\circ\bar{\cO}(\rho)$. For this purpose, we will inductively analyze the execution of $\tilde{\cR}[\tilde{\cF}''',\tilde{\zeta}'',\tilde{\xi}'](\tilde{\sigma})$ in \Cref{claim:concatind} below.

  To state and prove \Cref{claim:concatind}, we need the following notation.
  For $t\in\{0,\dots,T\}$, let $\tilde{\cR}_{\leq T_0t}=(\tilde{R}_1,\dots,\tilde{R}_{T_0t})$ denote the restriction of $\tilde{\cR}$ to the first $T_0t$ timesteps. We similarly let $\tilde{\cR}_{\leq T_0t}[\tilde{\cF}''',\tilde{\zeta}'',\tilde{\xi}'](\cdot)$ denote the superoperator associated to $\tilde{\cR}_{\leq T_0t}$ under fault $\tilde{\cF}''$, reweighting $\tilde{\zeta}''$, and postselection $\tilde{\xi}'$, where we implicitly restrict $\tilde{\cF}'',\tilde{\zeta}'',\tilde{\xi}'$ to timesteps $\leq T_0t$. We also define $\cR_{\leq t}[\cdots]$ analogously.

  By definition, $\tilde{\cR}_{\leq T_0t}[\tilde{\cF}''',\tilde{\zeta}'',\tilde{\xi}'](\tilde{\sigma})$ is a state across the active classical bits $\Cl{\tilde{N}}_{T_0t}=\Cl{N}_t$ and qubits $\tilde{N}_{T_0t}=\bigsqcup_{a\in N_t}\tilde{N}_a'$ at timestep $T_0t$ of $\tilde{\cR}$, where here we let $\tilde{N}_a'=[\bfn(\Lev(a))]\subseteq[\nu\cdot\bfn(\Lev(a))]=\tilde{N}_a$ denote the set of qubits containing the encoding of $a\in N_t$ in $D_{\bfn(\Lev(a))}$.

  \begin{claim}
    \label{claim:concatind}
    For every $t\in[T]$, the state $\tilde{\cR}_{\leq T_0t}[\tilde{\cF}''',\tilde{\zeta}'',\tilde{\xi}'](\tilde{\sigma})$ is a linear combination of states of the form
    \begin{equation*}
      P \circ \left(\bigotimes_{a\in N_t}\Enc_{\bfn(\Lev(a))}\right) \circ \cR_{\leq t}[\cF,\zeta,\xi](\sigma),
    \end{equation*}
    where:
    \begin{enumerate}
    \item $\sigma$ is a $\comp{E}_{\mathrm{in}}$-deviation of $\Enc_{\mathrm{in}}(\rho)$.
    \item $\cF=(F_0,\dots,F_T)$ is a $\comp{E}_{\mathrm{run}}$-avoiding fault for $\cR$ supported in timesteps $<t$.
    \item $\zeta$ is the restriction of a reweighting for $\cR$ to timesteps $\leq t$.
    \item $\xi\in\bF_2^{\ps}$ is the postselection defined above.
    \item\label{it:ciP} $P$ is a Pauli superoperator acting on the active qubits $\tilde{N}_{T_0t}$ at timestep $T_0t$ of $\tilde{\cR}$, whose support satisfies the following: for every $a\in N_t$ such that $(a,t)\notin E_{\mathrm{run}}$, the set $\supp(P)\cap\tilde{N}_a'$ is $\cE_{\bfn(\Lev(a))}$-avoiding.

      Furthermore, if $t=T$, then letting $h\in H$ be the unique gate with $t_h=T$ and $a\in N_h$, we additionally have $\supp(P)\cap\tilde{N}_a'\subseteq\tilde{E}_{h,\mathrm{out}}\cap\tilde{N}_a'$. Here $\tilde{E}_{h,\mathrm{out}}$ is the output error set of the gadget $\tilde{G}_h$ from \Cref{lem:prior} under fault error set $\tilde{E}_{\mathrm{run}}$ and postselection $\tilde{\xi}$ restricted to space-time locations $\tilde{N}_h\times\tilde{T}_h$, as defined above.
    \end{enumerate}
  \end{claim}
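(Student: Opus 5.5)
We prove \Cref{claim:concatind} by induction on $t$. The base case is $t=0$, interpreted as the state just after the timestep-$0$ layer. There $\tilde{\sigma}$ is a Pauli $\comp{\tilde{E}}_{\mathrm{in}}$-deviation of $\widetilde{\Enc}_{\mathrm{in}}(\rho)=(\bigotimes_{a}\Enc_{\bfn(\Lev(a))})\circ\Enc_{\mathrm{in}}(\rho)$. For each $a\in N_{\mathrm{in}}$ we look at the restriction of the input Pauli to $\tilde{N}_a'$.
\begin{itemize}
\item If $a\notin E_{\mathrm{in},\alpha}$, this restriction is $\cE_{\bfn(\Lev(a))}$-avoiding, so it stays in $P$.
\item If $a\in E_{\mathrm{in},\alpha}$, we decompose the restricted Pauli (pulled through $\Enc_{\bfn(\Lev(a))}$) as a linear combination of logical operators on $a$, acting on the code state together with a code-space projection. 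This is the standard ``decode the bad blocks'' step.
\end{itemize}
Because the inner codes $D_n$ are subsystem codes, any physical Pauli on a block is a linear combination of terms of the form (logical superoperator on $a$)$\,\otimes\,$(gauge action)$\,\otimes\,$(stabilizer/syndrome part). The logical parts supported on $E_{\mathrm{in}}$ then make $\sigma$ a $\comp{E}_{\mathrm{in}}$-deviation of $\Enc_{\mathrm{in}}(\rho)$. The timestep-$0$ fault layer on bad blocks is absorbed into $F_0$, which is allowed since $(a,0)\in E_{\mathrm{run}}$ by the definition of $E_{\mathrm{run}}$ from $H_{\mathrm{run}}$.

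For the inductive step $t-1\to t$, we apply the gadgets $\tilde{\bfG}_h$ for all $h$ with $t_h=t$ in parallel (\Cref{lem:parcomp}), each on its blocks $\tilde{N}_h$, with input the state guaranteed at time $t-1$. We split into cases according to \Cref{claim:swapsupp}.
\begin{itemize}
\item For $h\notin H_{\mathrm{run}}$, the restricted fault is $\tilde{\cE}_{h,\mathrm{run}}$-avoiding. Suppose every $a\in N_h$ has $(a,t-1)\notin E_{\mathrm{run}}$. Then the incoming $P$ restricted to those blocks is $\cE_{\bfn(\Lev(a))}$-avoiding, and ordinary fault-tolerance of $\tilde{\bfG}_h$ yields an encoded $G_h[\zeta_h]$ (or $G_h[\xi_h]$ for postselected measurements, using the postselected version in \Cref{it:prgads}) followed by a Pauli supported in $\tilde{E}_{h,\mathrm{out}}$. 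If instead some input $a$ has $(a,t-1)\in E_{\mathrm{run}}$, we use the mending property: the output is a linear combination of encodings of $G_h\circ L$ with $L$ a logical superoperator on $N_h$, still followed by a Pauli in $\tilde{E}_{h,\mathrm{out}}$. The superoperator $L$ is absorbed into the outer fault layer $F_{t-1}$, which is allowed because the corresponding $(a,t-1)\in E_{\mathrm{run}}$. Refreshing guarantees that $\tilde{E}_{h,\mathrm{out}}$ does not depend on the bad input, so the avoiding condition in \Cref{it:ciP} is preserved.
\item For $h\in H_{\mathrm{run}}$, the fault $\tilde{\cF}'''$ is identity on $\tilde{N}_h$ except at the swap timestep, where it acts with full support on blocks with $(a,t)\in E_{\mathrm{run}}$. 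Since $\tilde{\cR}_h$ is then applied to qubits that have been swapped out, the net effect on $\Qu{\tilde{N}}$ is an arbitrary Pauli at time $T_0t$ on these blocks, which is in general not an encoding. We handle this the same way as the base case: we decode the bad block into ``logical superoperator $\times$ code state,'' attribute the logical part to $F_t$ (allowed since $(a,t)\in E_{\mathrm{run}}$), and leave the block outside the $\cE$-avoiding requirement of \Cref{it:ciP}, which only constrains $a$ with $(a,t)\notin E_{\mathrm{run}}$.
\end{itemize}
Classical function gates act identically in both circuits. Reweightings propagate to $\zeta$ as in \Cref{lem:ccp}.

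The main obstacle is the bad-block re-encoding step. We must show that the state on a block after an arbitrary swapped-in corruption is a linear combination of $\Enc_{\bfn(\Lev(a))}(L(\cdot))$ up to Paulis, even though the post-swap state need not lie in the code space. We would first try to arrange this by precomposing, in the next inner gadget, with the mending property. That is, we postpone the re-encoding: we allow the time-$t$ state to carry an arbitrary Pauli on bad blocks, and invoke mending at the next layer. The $t=T$ case then uses \Cref{it:outbad} of \Cref{def:coco}, which ensures that any such unmended output block lies in $E_{\mathrm{out}}$ and hence $\tilde{E}_{\mathrm{out}}$ contains all of $\tilde{N}_a$. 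Finally, we apply the fault-tolerance of $\bfG$ to $\cR[\cF,\zeta,\xi](\sigma)$ at $t=T$. This gives the desired $\comp{\tilde{E}}_{\mathrm{out}}$-deviation by the definition of $\tilde{E}_{\mathrm{out}}$.
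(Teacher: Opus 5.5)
Your overall architecture matches the paper's proof. You induct over outer layers and use ordinary fault-tolerance of $\tilde{\bfG}_h$ when the gadget and its input blocks are good. You use mending plus refreshing when an input block is bad, so that $\tilde{E}_{h,\mathrm{out}}$ depends only on the fault and postselection. And you tolerate arbitrary Paulis on blocks with $(a,t)\in E_{\mathrm{run}}$.

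The genuine gap is your base case at $t=0$. For $a\in E_{\mathrm{in},\alpha}$, the input Pauli $Q$ on $\tilde{N}_a'$ is not $\cE_{\bfn(\Lev(a))}$-avoiding. It generically anticommutes with stabilizers of $D_{\bfn(\Lev(a))}$, so $Q\circ\Enc_{\bfn(\Lev(a))}$ lands in a syndrome space orthogonal to the code space. It therefore cannot be rewritten as a combination of terms $\Enc_{\bfn(\Lev(a))}\circ L_i$. Any rewriting leaves a syndrome-carrying physical Pauli that is in general not avoiding, and a code-space projection annihilates the state rather than being a term of a decomposition. Since $E_{\mathrm{run}}$ is built from $\tilde{E}_{\mathrm{run}}$ alone, such an $a$ typically has $(a,0)\notin E_{\mathrm{run}}$. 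So your $t=0$ version of the invariant in \Cref{it:ciP} is false, and no static ``decoding'' can fix it: turning a non-code state into a logical error requires running a physical error-correcting circuit.

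The repair is the postponement you describe at the end, applied already to the input; this is exactly why the paper starts the induction at $t=1$. The layer-$1$ gadgets receive the blocks carrying non-avoiding input Paulis (and any timestep-$0$ swap Paulis), and mending re-encodes them. The resulting superoperator is absorbed either into $L_{\mathrm{in}}$ supported on $E_{\mathrm{in}}$, so that $\sigma=L_{\mathrm{in}}\circ\Enc_{\mathrm{in}}(\rho)$, or into $F_0$.

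The same mechanism, not decoding, handles $h\in H_{\mathrm{run}}$, but your description there is backwards. After Pauli-decomposing the swaps and restricting to $\Qu{\tilde{N}}$, nothing is swapped out. The block carries a Pauli at timestep $T_0(t-1)$, which the time-$(t-1)$ invariant already allows since $(a,t-1)\in E_{\mathrm{run}}$. It then undergoes a fault-free run of $\tilde{\cR}_h$, followed by a Pauli at $T_0t$. Mending of $\tilde{\bfG}_h$ on that fault-free run is what gives the form $P\circ\Enc(\cdots)$, with its logical superoperator absorbed into $F_{t-1}$.

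One caveat you share with the paper concerns a $\gCNOT$ gadget with only one bad input block. Mending as stated in \Cref{lem:prior} then produces $L$ on both logical inputs. Absorbing it into $F_{t-1}$ or $L_{\mathrm{in}}$ therefore needs a per-block form of mending, which holds if per-block error correction is precomposed. Knowing only that the bad block's location lies in $E_{\mathrm{run}}$ is not enough.
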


  In \Cref{it:ciP} of \Cref{claim:concatind}, we emphasize that the restriction of $\tilde{\xi}$ to space-time locations $\tilde{N}_h\times\tilde{T}_h$ equals the restriction of $\tilde{\xi}'$ to $\tilde{N}_h\times\tilde{T}_h$. Indeed, as here we restrict attention to $(a,T)\notin E_{\mathrm{run}}$, it follows that $h\notin H_{\mathrm{run}}$, and hence $\tilde{\cF}'$ did not swap out the execution of $\tilde{\cR}_h$ to qubits in $\Ad{\tilde{N}}$. Thus we must have $\tilde{\xi}_h=\tilde{\xi}'_h$.

  \begin{proof}[Proof of \Cref{claim:concatind}]
    We show the claim by induction on the timestep $t$. Recall that $\supp(\tilde{\cF}''')\subseteq\supp(\tilde{\cF}')$, so that \Cref{claim:swapsupp} also applies with $\supp(\tilde{\cF}''')$ in place of $\supp(\tilde{\cF}')$. Then by the definition of $E_{\mathrm{in}}$ and $E_{\mathrm{run}}$ along with the mending refreshing fault-tolerance from \Cref{lem:prior} of the gadgets $\tilde{\bfG}_h$ for $h\in H$ with $t_h=1$, the $t=1$ case of \Cref{claim:concatind} must hold. In more detail, the mending refreshing fault-tolerance of the gadgets from \Cref{lem:prior} ensures that $\tilde{\cR}_{\leq T_0}[\tilde{\cF}''',\tilde{\zeta}'',\tilde{\xi}'](\tilde{\sigma})$ is a linear combination of states of the form
    \begin{equation*}
      P \circ \left(\bigotimes_{a\in N_1}\Enc_{\bfn(\Lev(a))}\right) \circ \cR_{\leq 1}[\zeta,\xi] \circ F_0 \circ L_{\mathrm{in}} \circ \Enc_{\mathrm{in}}(\rho),
    \end{equation*}
    for Pauli errors $P$, reweightings $\zeta$, and error superoperators $L_{\mathrm{in}}$ and $F_0$ supported inside $E_{\mathrm{in}}$ and inside the restriction of $E_{\mathrm{run}}$ to timestep $0$, respectively. Here $P$ is specifically guaranteed to be $\cE_{\bfn(\Lev(a))}$-avoiding on the restriction to $\tilde{N}_a$ for every $a\in N_1$ that is not in the restriction of $E_{\mathrm{run}}$ to timestep $1$. We thus let $\sigma=L\circ\Enc_{\mathrm{in}}(\rho)$, and let $F_0$ be timestep-$0$ error of our fault $\cF$. Thus the $t=1$ base case of \Cref{claim:concatind} holds, as desired.

    The inductive step follows by similar reasoning as the base case. For some $2\leq t\leq T$, assume that the timestep $t-1$ case of \Cref{claim:concatind} holds. Then the mending refreshing fault-tolerance from \Cref{lem:prior} of the gadgets $\tilde{\bfG}_h$ for $h\in H$ with $t_h=t$ ensures that $\tilde{\cR}_{\leq T_0}[\tilde{\cF}''',\tilde{\zeta}'',\tilde{\xi}'](\tilde{\sigma})$ is a linear combination of states of the form
    \begin{equation*}
      P \circ \left(\bigotimes_{a\in N_t}\Enc_{\bfn(\Lev(a))}\right) \circ \cR_{\leq t}[\cF_{<t-1},\zeta,\xi] \circ F_{t-1} \circ \Enc_{\mathrm{in}}(\rho),
    \end{equation*}
    for Pauli errors $P$, reweightings $\zeta$, $\comp{E}_{\mathrm{run}}$-avoiding faults $\cF_{<t-1}=(F_0,\dots,F_{t-1})$ as defined by previous steps of induction, and error superoperators $F_{t-1}$ supported inside the restriction of $E_{\mathrm{run}}$ to timestep $t-1$. Here $P$ is specifically guaranteed to be $\cE_{\bfn(\Lev(a))}$-avoiding on the restriction to $\tilde{N}_a$ for every $a\in N_t$ that is not in the restriction of $E_{\mathrm{run}}$ to timestep $t$. Furthermore, when $t=T$, then for every such $a\in N_T$ not in the restriction of $E_{\mathrm{run}}$ to timestep $T$, letting $h\in H$ be such that $t_h=T$ and $a\in N_h$, then we defined $\tilde{E}_{h,\mathrm{out}}$ precisely to contain the restriction of $\supp(P)$ to qubits in $\tilde{N}_h$. Thus letting $F_{t-1}$ be the timestep-$(t-1)$ error of our fault $\cF$, we have shown that the timestep $t$ case of \Cref{claim:concatind} holds, completing the inductive step.
  \end{proof}

  We now complete the proof of \Cref{lem:lcft}. By the $t=T$ case of \Cref{claim:concatind}, the state $\tilde{\cR}[\tilde{\cF}''',\tilde{\zeta}'',\tilde{\xi}'](\tilde{\sigma})$ is a linear combination of states of the form
    \begin{equation*}
      P \circ \left(\bigotimes_{a\in N_t}\Enc_{\bfn(\Lev(a))}\right) \circ \cR[\cF,\zeta,\xi](\sigma),
    \end{equation*}
    for $\sigma,\cF,\zeta,\xi,P$ defined as in \Cref{claim:concatind}. Because $\bfG$ is a concatenation-compatible fault-tolerant gadget for $\bar{\cO}$, it follows that each such state is in turn a linear combination of states of the form
    \begin{equation*}
      P \circ \left(\bigotimes_{a\in N_t}\Enc_{\bfn(\Lev(a))}\right) \circ L_{\mathrm{out}} \circ \Enc_{\mathrm{out}} \circ \bar{O}(\rho),
    \end{equation*}
    for $\comp{E}_{\mathrm{out}}$-avoiding Pauli error superoperators $L_{\mathrm{out}}$, and for $\bar{O}\in\bar{\cO}$. We can absorb the error $L_{\mathrm{out}}$ into the Pauli error $P$, to rewrite the above state as
    \begin{equation*}
      P'_X \circ P'_Z \circ \left(\bigotimes_{a\in N_t}\Enc_{\bfn(\Lev(a))}\right) \circ \Enc_{\mathrm{out}} \circ \bar{O}(\rho) = P' \circ \widetilde{\Enc} \circ \bar{O}(\rho),
    \end{equation*}
    where for $\alpha\in\{X,Z\}$, $P'_\alpha$ is a Pauli-$\alpha$ error with following property: for every $a\in N_{\mathrm{out}}$ such that $(a,T)\notin E_{\mathrm{run}}$ and $a\notin E_{\mathrm{out},\alpha}$, letting $h\in H$ be the unique gate with $t_h=T$ and $a\in N_h$, then $\supp(P_\alpha)\cap\tilde{N}'_a\subseteq\tilde{E}_{h,\mathrm{out}}\cap\tilde{N}'_a$. Now by \Cref{it:outbad} in \Cref{def:coco}, if $a\notin E_{\mathrm{out},\alpha}$, then we must have $(a,T)\notin E_{\mathrm{run}}$. Hence by the definition of $\tilde{E}_{\mathrm{out}}$ above, $P'_\alpha$ is supported inside $\tilde{E}_{\mathrm{out},\alpha}$-avoiding. Hence $P'=P'_X\circ P'_Z$ is $\comp{\tilde{E}}_{\mathrm{out}}$-avoiding. Thus we have shown that $\tilde{\cR}[\tilde{\cF}''',\tilde{\zeta}'',\tilde{\xi}'](\tilde{\sigma})$ is a $\comp{\tilde{E}}_{\mathrm{out}}$-deviation of $\widetilde{\Enc}_{\mathrm{out}}\circ\bar{\cO}(\rho)$, as desired.
\end{proof}

\end{document}